\newif\ifshort
\newif\iflong
\ifshort\documentclass[acmsmall,screen]{acmart}\settopmatter{}\fi
\title{Gradual Type Theory\iflong (Extended Version)\fi}
\author{Max S. New}
\affiliation{
  \institution{Northeastern University}            
}
\email{maxnew@ccs.neu.edu}          
\author{Daniel R. Licata}
\affiliation{
  \institution{Wesleyan University}            
}
\email{dlicata@wesleyan.edu}          
\author{Amal Ahmed}
\affiliation{
  \institution{Northeastern University and Inria Paris}            
}
\email{amal@ccs.neu.edu}          
\renewcommand{\u}{\underline}
  \iflong\begin{proof}\BODY \end{proof}\fi%
  \iflong\begin{figure}\BODY \end{figure}\fi%
\newtheorem*{nonnum-theorem}{Theorem}
\newcommand{\cbpv}{CBPV}
\newcommand{\cbpvstar}{CBPV*}
\newcommand{\cbpvtxt}{\cbpv}
\newcommand{\sem}[1]{\llbracket#1\rrbracket}
\newcommand{\sdncast}[2]{\sem{\dncast{#1}{#2}}}
\newcommand{\supcast}[2]{\sem{\upcast{#1}{#2}}}
\newcommand{\srho}[1]{\sem{#1}_\rho}
\newcommand{\vtype}{\,\,\text{val type}}
\newcommand{\ctype}{\,\,\text{comp type}}
\newcommand{\pipe}{\,\,|\,\,}
\newcommand{\floor}[1]{\lfloor#1\rfloor}
\newcommand{\ltdyn}{\sqsubseteq}
\newcommand{\gtdyn}{\sqsupseteq}
\newcommand{\precltdyn}{\mathrel{\preceq\ltdyn}}
\newcommand{\ltdynsucc}{\mathrel{\ltdyn\succeq}}
\newcommand{\equidyn}{\mathrel{\gtdyn\ltdyn}}
\newcommand{\pole}{\Bot}
\newcommand{\logty}[2]{\mathrel{\lesssim^{#1}_{#2,\pole}}}
\newcommand{\apreorder}{\trianglelefteq}
\newcommand{\ctxize}[1]{\mathrel{{#1}^{\text{ctx}}}}
\newcommand{\ix}[2]{\mathrel{#1^{#2}}}
\newcommand{\simsub}[1]{\mathrel{\sim_{#1}}}
\newcommand{\itylrof}[3]{\ilrof{#1}{#3,#2}}
\newcommand{\ilrof}[2]{\mathrel{{#1}^{\text{log}}_{#2}}}
\newcommand{\itylr}[2]{\itylrof{\apreorder}{#1}{#2}}
\newcommand{\ilr}[1]{\ilrof{\apreorder}{#1}}
\newcommand{\simp}[1]{{#1}^{\dag}}
\newcommand{\simpp}[1]{\simp{({#1})}}
\newcommand{\step}{\mapsto}
\newcommand{\stepsin}[1]{\mathrel{\mapsto^{#1}}}
\newcommand{\stepzero}{\stepsin 0}
\newcommand{\stepone}{\stepsin 1}
\newcommand{\bigstepsin}[1]{\mathrel{\Mapsto^{#1}}}
\newcommand{\bigstepzero}{\bigstepsin{0}}
\newcommand{\bigstepany}{\bigstepsin{}}
\newcommand{\pair}[2]{\{ \pi \mapsto {#1} \pipe \pi' \mapsto {#2}\}}
\newcommand{\pairone}[1]{\{ \pi \mapsto {#1}}
\newcommand{\pairtwo}[1]{\pipe \pi' \mapsto {#1}\}}
\newcommand{\emptypair}[0]{\{\}}
\newcommand{\tru}{\texttt{true}}
\newcommand{\fls}{\texttt{false}}
\newcommand{\ifXthenYelseZ}[3]{\kw{if}#1\,\kw{then}#2\,\kw{else}#3}
\newcommand{\bool}{\mathbb{B}}
\newcommand{\inl}{\kw{inl}}
\newcommand{\inr}{\kw{inr}}
\newcommand{\els}{\kw {else}}
\newcommand{\seq}{\texttt{seq}}
\newcommand{\dyn}{{?}}
\newcommand{\dynv}{{?}}
\newcommand{\dync}{\u {\text{?`}}}
\newcommand{\uarrow}{\mathrel{\rotatebox[origin=c]{-30}{$\leftarrowtail$}}}
\newcommand{\darrow}{\mathrel{\rotatebox[origin=c]{30}{$\twoheadleftarrow$}}}
\newcommand{\upcast}[2]{\langle{#2}\uarrow{#1}\rangle}
\newcommand{\dncast}[2]{\langle{#1}\darrow{#2}\rangle}
\newcommand{\obcast}[2]{\langle{#1}\Leftarrow{#2}\rangle}
\newcommand{\err}{\mho}
\newcommand{\diverge}{\Omega}
\newcommand{\print}{\kw{print}}
\newcommand{\roll}{\kw{roll}}
\newcommand{\rollty}[1]{\texttt{roll}_{#1}\,\,}
\newcommand{\unroll}{\kw{unroll}}
\newcommand{\unrollty}[1]{\texttt{unroll}_{#1}\,\,}
\newcommand{\defupcast}[2]{\langle\kern-1ex~\langle{#2}\uarrow{#1}\rangle\kern-1ex~\rangle}
\newcommand{\defdncast}[2]{\langle\kern-1ex~\langle{#1}\darrow{#2}\rangle\kern-1ex~\rangle}
\newcommand{\result}{\text{result}}
\newcommand{\lett}{\kw{let}}
\newcommand{\letXbeYinZ}[2]{\lett#2 = #1;}
\newcommand{\bindXtoYinZ}[2]{\kw{bind}#2 \leftarrow #1;}
\newcommand{\case}{\kw{case}}
\newcommand{\kw}[1]{\texttt{#1}\,\,}
\newcommand{\absurd}{\kw{absurd}}
\newcommand{\caseofXthenYelseZ}[3]{\case #1 \{ #2 \pipe #3 \}}
\newcommand{\caseofX}[1]{\case #1}
\newcommand{\thenY}{\{}
\newcommand{\elseZ}[1]{\pipe #1 \}}
\newcommand{\pmpairWtoXYinZ}[4]{\kw{split} #1\,\kw{to} (#2,#3). #4}
\newcommand{\pmpairWtoinZ}[2]{\kw{split} #1\,\kw{to} (). #2}
\newcommand{\pmmuXtoYinZ}[3]{\kw{unroll} #1 \,\kw{to} \roll #2. #3}
\newcommand{\ret}{\kw{ret}}
\newcommand{\thunk}{\kw{thunk}}
\newcommand{\force}{\kw{force}}
\newcommand{\abort}{\kw {abort}}
\newcommand{\with}{\mathbin{\&}}
\DeclareMathOperator*{\With}{\&}
\newcommand{\dyncaseofXthenOnePairSumU}[5]{\kw{tycase} #1\,\{ #2 \pipe #3 \pipe #4 \pipe #5 \}}
\newcommand{\dyncaseofXthenYelseZ}[3]{\kw{tycase} #1\,\{ #2 \pipe #3 \}}
\newcommand{\dyncocaseWithFunF}[3]{\{\with \mapsto #1 \pipe (\to) \mapsto #2 \pipe \u F \mapsto #3 \}}
\newcommand{\dyncaseofXthenBoolUPair}[4]{\kw{tycase} #1\,\{ #2 \pipe #3 \pipe #4 \}}
\newcommand{\dyncocaseFunF}[2]{\{(\to) \mapsto #1 \pipe \u F \mapsto #2 \}}
\newcommand{\bnfalt}{\mathrel{\bf \,\mid\,}}
\newcommand{\alt}{\bnfalt}
\newcommand{\bnfdef}{\mathrel{\bf ::=}}
\newcommand{\bnfadd}{\mathrel{\bf +::=}}
\newcommand{\bnfsub}{\mathrel{\bf -::=}}
\newcommand\dynvctx[0]{\mathsf{dyn\mathord{-}vctx}}
\newcommand\dyncctx[0]{\mathsf{dyn\mathord{-}cctx}}
\newcommand\leastdynv[0]{\bot_{\mathsf{v}}}
\newcommand\leastdync[0]{\bot_{\mathsf{c}}}
\newcommand\errordivergeleft[0]{\preceq\ltdyn}
\newcommand\errordivergeright[0]{\ltdyn\succeq}
\newcommand\errordivergerightop[0]{\preceq\gtdyn}
\begin{document}

\begin{abstract}
Gradually typed languages are designed to support both dynamically typed
and statically typed programming styles while preserving the benefits of
each.  While existing gradual type soundness theorems for these
languages aim to show that type-based reasoning is preserved when
moving from the fully static setting to a gradual one, these theorems do
not imply that correctness of type-based refactorings and optimizations
is preserved.  Establishing correctness of program transformations is
technically difficult, because it requires reasoning about program
equivalence, and is often neglected in the metatheory of gradual
languages.  

In this paper, we propose an \emph{axiomatic} account of
program equivalence in a gradual cast calculus, which we formalize in a
logic we call \emph{gradual type theory} (GTT). Based on Levy's
call-by-push-value, GTT gives an axiomatic account of both call-by-value
and call-by-name gradual languages.  Based on our axiomatic account we
prove many theorems that justify optimizations and refactorings in
gradually typed languages. For example, 
\emph{uniqueness principles} for gradual type connectives show that if the
$\beta\eta$ laws hold for a connective, then casts between that
connective must be equivalent to the so-called ``lazy'' cast semantics.
Contrapositively, this shows that ``eager'' cast semantics violates the
extensionality of function types.  As another example, we show that
gradual upcasts are pure functions and, dually, gradual downcasts are
strict functions.  We show the consistency and applicability of our
axiomatic theory by proving that a contract-based implementation using
the lazy cast semantics gives a logical relations model of our type
theory, where equivalence in GTT implies contextual equivalence of the
programs.  Since GTT also axiomatizes the dynamic gradual guarantee, our
model also establishes this central theorem of gradual typing.  The
model is parametrized by the implementation of the dynamic types, and so
gives a family of implementations that validate type-based optimization
and the gradual guarantee.

\end{abstract}

 \begin{CCSXML}
<ccs2012>
<concept>
<concept_id>10003752.10010124.10010131.10010135</concept_id>
<concept_desc>Theory of computation~Axiomatic semantics</concept_desc>
<concept_significance>500</concept_significance>
</concept>
<concept>
<concept_id>10011007.10011006.10011008.10011009.10011012</concept_id>
<concept_desc>Software and its engineering~Functional languages</concept_desc>
<concept_significance>500</concept_significance>
</concept>
</ccs2012>
\end{CCSXML}

\ccsdesc[500]{Theory of computation~Axiomatic semantics}
\ccsdesc[500]{Software and its engineering~Functional languages}

\keywords{gradual typing, graduality, call-by-push-value}

\maketitle

\section{Introduction}

Gradually typed languages are designed to support a mix of dynamically
typed and statically typed programming styles and preserve the
benefits of each.
Dynamically typed code can be written without conforming to a
syntactic type discipline, so the programmer can always run their
program interactively with minimal work.
On the other hand, statically typed code provides mathematically
sound reasoning principles that justify type-based refactorings,
enable compiler optimizations, and underlie formal software verification.
%
The difficulty is accommodating both of these styles and their benefits simultaneously:
allowing the dynamic and static code to interact without forcing the
dynamic code to be statically checked or violating the correctness of
type-based reasoning.

The linchpin to the design of a gradually typed language is the
semantics of \emph{runtime type casts}.  These are runtime checks that ensure
that typed reasoning principles are valid by checking types of dynamically typed
code at the boundary between static and dynamic typing.
For instance, when a statically typed function $f : \texttt{Num} \to
\texttt{Num}$ is applied to a dynamically typed argument $x : \dyn$,
the language runtime must check if $x$ is a number, and otherwise
raise a dynamic type error.
A programmer familiar with dynamically typed programming might object
that this is overly strong: for instance if $f$ is just a constant
function $f = \lambda x:\texttt{Num}. 0$ then why bother checking if
$x$ is a number since the body of the program does not seem to depend
on it?
The reason the value is rejected is because the annotation $x :
\texttt{Num}$ should introduce an assumption that that the programmer,
compiler and automated tools can rely on for behavioral reasoning in the
body of the function.  
%
For instance, if the variable $x$ is guaranteed to only be
instantiated with numbers, then the programmer is free to replace $0$
with $x - x$ or vice-versa.
However, if $x$ can be instantiated with a closure, then $x - x$ will
raise a runtime type error while $0$ will succeed, violating the
programmers intuition about the correctness of refactorings.
We can formalize such relationships by \emph{observational equivalence} of
programs: the two closures $\lambda x:\texttt{Num}. 0$ and $\lambda
x:\texttt{Num}. x - x$ are indistinguishable to any other program in
the language.
This is precisely the difference between gradual typing and so-called
\emph{optional} typing: in an optionally typed language (Hack,
TypeScript, Flow), annotations are checked for consistency but are unreliable
to the user, so provide no leverage for reasoning.  
%
In a gradually typed language, type annotations should relieve the
programmer of the 
burden of reasoning about incorrect inputs, as long as we are willing to accept that the program
as a whole may crash, which is already a possibility in many \emph{effectful}
statically typed languages.
%


However, the dichotomy between gradual and optional typing is not as
firm as one might like.
There have been many different proposed semantics of run-time type
checking: ``transient'' cast semantics~\citep{vitousekswordssiek:2017}
only checks the head connective of a type (number, function, list,
\ldots), ``eager'' cast semantics~\citep{herman2010spaceefficient} checks
run-time type information on closures, whereas ``lazy'' cast
semantics~\citep{findler-felleisen02} will always delay a type-check on
a function until it is called (and there are other possibilities, see
e.g.  \cite{siek+09designspace,greenberg15spaceefficient}).
The extent to which these different semantics have been shown to
validate type-based reasoning has been limited to syntactic type
soundness and blame soundness theorems.
In their strongest form, these theorems say ``If $t$ is a closed
program of type $A$ then it diverges, or reduces to a runtime error
blaming dynamically typed code, or reduces to a value that satisfies $A$ to a
certain extent.''
However, the theorem at this level of generality is quite weak, and
justifies almost no program equivalences without more information.
Saying that a resulting value satisfies type $A$ might be a strong
statement, but in transient semantics constrains only the head
connective.
The blame soundness theorem might also be quite strong, but depends on
the definition of blame, which is part of the operational semantics of
the language being defined.
%
We argue that these type soundness theorems are only indirectly
expressing the actual desired properties of the gradual language,
which are \emph{program equivalences in the typed portion of the code} that are
not valid in the dynamically typed portion. 

Such program equivalences typically include $\beta$-like principles,
which arise from computation steps, as well as \emph{$\eta$ equalities},
which express the uniqueness or universality of certain constructions.
%
The $\eta$ law of the untyped $\lambda$-calculus, which
states that any $\lambda$-term $M \equiv \lambda x. M x$, is
restricted in a typed language to only hold for terms of function type $M
: A \to B$ ($\lambda$ is the unique/universal way of making an element
of the function type).  
This famously ``fails'' to hold in call-by-value languages in the
presence of effects: if $M$ is a program that prints \texttt{"hello"}
before returning a function, then $M$ will print \emph{now}, whereas
$\lambda x. M x$ will only print when given an argument. But this can be
accommodated with one further modification: the $\eta$ law is valid in
simple call-by-value languages\footnote{This does not hold in languages
  with some intensional feature of functions such as reference
  equality. We discuss the applicability of our main results more generally in Section \ref{sec:related}.} (e.g. SML) if we have a ``value
restriction'' $V \equiv \lambda x. V x$.
This illustrates that $\eta$/extensionality rules must be stated for
each type connective, and be sensitive to the effects/evaluation order
of the terms involved.
For instance, the $\eta$ principle for the boolean type $\texttt{Bool}$
\emph{in call-by-value} is that for any term $M$ with a free variable $x :
\texttt{Bool}$, $M$ is equivalent to a term that performs an if
statement on $x$: $M \equiv \kw{if} x (M[\texttt{true}/x])
(M[\texttt{false}/x])$.
If we have an \texttt{if} form that is strongly typed (i.e., errors on
non-booleans) then this tells us that it is \emph{safe} to run an if
statement on any input of boolean type (in CBN, by contrast an if
statement forces a thunk and so is not necessarily safe).
In addition, even if our \texttt{if} statement does some kind of
coercion, this tells us that the term $M$ only cares about whether $x$
is ``truthy'' or ``falsy'' and so a client is free to change e.g. one
truthy value to a different one without changing behavior.
This $\eta$ principle justifies a number of program optimizations,
such as dead-code and common subexpression elimination, and 
hoisting an if
statement outside of the body of a function if it is well-scoped
($\lambda x. \kw{if} y \, M \, N \equiv \kw {if} y \, (\lambda x.M) \, (\lambda x.N)$).
Any eager datatype, one whose elimination form is given by pattern
matching such as $0, +, 1, \times, \mathtt{list}$, has a similar $\eta$
principle which enables similar reasoning, such as proofs by induction.
The $\eta$ principles for lazy types \emph{in call-by-name} support dual
behavioral reasoning about lazy functions, records, and streams.

\textbf{An Axiomatic Approach to Gradual Typing.}
In this paper, we systematically study questions of program equivalence
for a class of gradually typed languages by working in an
\emph{axiomatic theory} of gradual program equivalence, a language and
logic we call \emph{gradual type theory} (GTT). 
Gradual type theory is the combination of a language of terms and
gradual types with a simple logic for proving program equivalence and
\emph{error approximation} (equivalence up to one program erroring when
the other does not) results.
The logic axiomatizes the equational properties gradual
programs should satisfy, and offers a high-level syntax for proving
theorems about many languages at once:
if a language models gradual type theory, then it satisfies all
provable equivalences/approximations.
Due to its type-theoretic design, different axioms of program
equivalence are easily added or removed.
Gradual type theory can be used both to explore language design questions and
to verify behavioral properties of specific programs, such as correctness of
optimizations and refactorings.

To get off the ground, we take two properties of the gradual language
for granted.
First, we assume a compositionality property: that any cast from $A$
to $B$ can be factored through the dynamic type $\dyn$, i.e., the cast
$\obcast{B}{A}{t}$ is equivalent to first casting up from $A$ to
$\dyn$ and then down to $B$: $\obcast{B}{\dyn}\obcast{\dyn}{A} t$.
These casts often have quite different performance characteristics,
but should have the same extensional behavior: of the cast semantics
presented in \citet{siek+09designspace}, only the partially eager
detection strategy violates this principle, and this strategy is not
common.
The second property we take for granted is that the language satisfies
the \emph{dynamic gradual guarantee}~\cite{refined} (``graduality'')---a
strong correctness theorem of gradual typing--- which constrains how
changing type annotations changes behavior.  Graduality says that if we
change the types in a program to be ``more precise''---e.g., by changing
from the dynamic type to a more precise type such as integers or
functions---the program will either produce the same behavior as the
original or raise a dynamic type error.  Conversely, if a program does
not error and some types are made ``less precise'' then behavior does
not change.

We then study what program equivalences are provable in GTT under
various assumptions.
Our central application is to study when the $\beta, \eta$ equalities
are satisfied in a gradually typed language.
%
%
We approach this problem by a surprising tack: rather than defining the
behavior of dynamic type casts and then verifying or invalidating the
$\beta$ and $\eta$ equalities, we \emph{assume} the language satisfies
$\beta$ and $\eta$ equality and then show that certain reductions of
casts are in fact program equivalence \emph{theorems} deducible from the
axioms of GTT.

The cast reductions that we show satisfy all three constraints are
those given by the ``lazy cast semantics''~\cite{findler-felleisen02,siek+09designspace}.
As a contrapositive, any gradually typed language for which these
reductions are not program equivalences is \emph{not} a model of the
axioms of gradual type theory.
This mean the language violates either compositionality, the gradual
guarantee, or one of the $\beta, \eta$ axioms---and in practice, it is
usually $\eta$.

For instance, a transient semantics, where only the top-level
connectives are checked, violates $\eta$ for strict pairs
\begin{small}
  \[ {x : A_1 \times A_2} \vdash (\letXbeYinZ x {(x_1,x_2)} 0) \neq 0 \]
\end{small}%
because the top-level connectives of $A_1$ and $A_2$ are only checked
when the pattern match is introduced. As a concrete counterexample to
contextual equivalence, let $A_1, A_2$ all be \texttt{String}.  Because
only the top-level connective is checked, $(0,1)$ is a valid value of
type $\texttt{String} \times \texttt{String}$, but pattern matching on
the pair ensures that the two components are checked to be strings, so
the left-hand side $\letXbeYinZ {(0,1)} {(x_1,x_2)} 0 \mapsto \err$
(raises a type error). On the right-hand side, with no pattern, match a
value (0) is returned. This means simple program changes that are valid
in a typed language, such as changing a function of two arguments to
take a single pair of those arguments, are invalidated by the transient
semantics.
In summary, transient semantics is ``lazier'' than the types dictate,
catching errors only when the term is inspected.

As a subtler example, in call-by-value ``eager cast semantics'' the
$\beta\eta$ principles for all of the eager datatypes ($0, +, 1,
\times$, lists, etc.) will be satisfied, but the $\eta$ principle for
the function type $\to$ is violated: there are values $V : A \to A'$ for
which $V \neq \lambda x:A. V x $.
For instance, take an arbitrary function value $V : A \to
\texttt{String}$ for some type $A$, and let $V' = \obcast{A \to
  \dyn}{A \to \texttt{String}}{V}$ be the result of casting it to have a
dynamically typed output.
Then in eager semantics, the following programs are not equivalent:
\begin{small}
\[   \lambda x:A. V' x \neq V' : A \to \dyn\]
\end{small}
We cannot observe any difference between these two programs by
applying them to arguments, however, they are distinguished from each
other by their behavior when \emph{cast}.
Specifically, if we cast both sides to $A \to \texttt{Number}$, then
$\obcast{A\to \texttt{Number}}{A\to\dyn}(\lambda x:A.V' x)$ is a
value, but $\obcast{A \to \texttt{Number}}{A\to \dyn}V'$ reduces to an
error because $\texttt{Number}$ is incompatible with
$\texttt{String}$.
However this type error might not correspond to any actual typing
violation of the program involved.
For one thing, the resulting function might never be executed.
Furthermore, in the presence of effects, it may be that the original
function $V : A \to \texttt{String}$ never returns a string (because
it diverges, raises an exception or invokes a continuation), and so
that same value casted to $A \to \texttt{Number}$ might be a perfectly
valid inhabitant of that type.
In summary the ``eager'' cast semantics is in fact overly eager: in
its effort to find bugs faster than ``lazy'' semantics it disables the
very type-based reasoning that gradual typing should provide.

While criticisms of transient semantics on the basis of type soundness
have been made before \citep{greenmanfelleisen:2018}, our development
shows that the $\eta$ principles of types are enough to uniquely
determine a cast semantics, and helps clarify the trade-off between
eager and lazy semantics of function casts.

\textbf{Technical Overview of GTT.}  The gradual type theory developed
in this paper unifies our previous work on
operational (logical relations) reasoning for gradual typing in a
call-by-value setting~\citep{newahmed18} (which did not consider a proof theory), and on an
axiomatic proof theory for gradual typing~\citep{newlicata2018-fscd} in
a call-by-name setting (which considered only function and product
types, and denotational but not operational models).

In this paper, we develop an axiomatic gradual type theory GTT for a unified
language that includes \emph{both} call-by-value/eager types and
call-by-name/lazy types (Sections~\ref{sec:gtt}, \ref{sec:theorems-in-gtt}), and
show that it is sound for contextual equivalence via a logical relations model
(Sections~\ref{sec:contract}, \ref{sec:complex}, \ref{sec:operational}).
Because the $\eta$ principles for types play a key role in our approach, it is
necessary to work in a setting where we can have $\eta$ principles for both
eager and lazy types.  We use Levy's
Call-by-Push-Value~\citep{levy03cbpvbook} (CBPV), which fully and faithfully
embeds both call-by-value and call-by-name evaluation with both eager and lazy
datatypes,\footnote{The distinction between ``lazy'' vs ``eager'' casts above is
  different than lazy vs. eager datatypes.}  and underlies much recent work on
reasoning about effectful programs~\cite{bauerpretnar13eff,lindley+17frank}.
GTT can prove results in and about existing call-by-value gradually typed
languages, and also suggests a design for call-by-name and full
call-by-push-value gradually typed languages.

In the prior work \cite{newlicata2018-fscd,newahmed18}, gradual type
casts are decomposed into upcasts and downcasts, as suggested above.
A \emph{type dynamism}
relation (corresponding to type precision~\cite{refined} and na\"ive
subtyping~\cite{wadler-findler09}) controls which casts exist: a type
dynamism $A \ltdyn A'$ induces an upcast from $A$ to $A'$ and a downcast
from $A'$ to $A$.  Then, a \emph{term dynamism} judgement is used for
equational/approximational reasoning about programs.  Term dynamism
relates two terms whose types are related by type dynamism, and the
upcasts and downcasts are each \emph{specified} by certain term
dynamism judgements holding.
This specification axiomatizes only the properties of casts needed to
ensure the graduality theorem, and not their precise behavior, so cast
reductions can be \emph{proved from it}, rather than stipulated in
advance.  The specification defines the casts ``uniquely up to
equivalence'', which means that any two implementations satisfying it
are behaviorally equivalent.

We generalize this axiomatic approach to call-by-push-value
(Section~\ref{sec:gtt}), where there are both eager/value types and
lazy/computation types. This is both a subtler question than it might at
first seem, and has a surprisingly nice answer: we find that upcasts are
naturally associated with eager/value types and downcasts with
lazy/computation types, and that the modalities relating values and
computations induce the downcasts for eager/value types and upcasts for
lazy/computation types.  Moreover, this analysis articulates an
important behavioral property of casts that was proved operationally for
call-by-value in \citep{newahmed18} but missed for call-by-name in
\citep{newlicata2018-fscd}: upcasts for eager types and downcasts for
lazy types are both ``pure'' in a suitable sense, which enables more
refactorings and program optimizations.  In particular, we show that
these casts can be taken to be (and are essentially forced to be)
``complex values'' and ``complex stacks'' (respectively) in
call-by-push-value, which corresponds to a behavioral property of
\emph{thunkability} and
\emph{linearity}~\cite{munchmaccagnoni14nonassociative}.  We argue in
Section~\ref{sec:related} that this property is related to blame
soundness.  Our gradual type theory naturally has two dynamic types, a
dynamic eager/value type and a dynamic lazy/computation type, where the
former can be thought of as a sum of all possible values, and the latter
as a product of all possible behaviors.  At the language design level,
gradual type theory can be used to prove that, for a variety of
eager/value and lazy/computation types, the ``lazy'' semantics of casts
is the unique implementation satisfying $\beta,\eta$ and graduality
(Section~\ref{sec:theorems-in-gtt}). These behavioral equivalences can
then be used in reasoning about optimizations, refactorings, and
correctness of specific programs.

\textbf{Contract-Based Models.}
To show the consistency of GTT as a theory, and to give a concrete
operational interpretation of its axioms and rules, we provide a
concrete model based on an operational semantics.
The model is a \emph{contract} interpretation of GTT in that the
``built-in'' casts of GTT are translated to ordinary functions in a
CBPV language that perform the necessary checks.  
%

To keep the proofs high-level, we break the proof into two steps.
First (Sections~\ref{sec:contract}, \ref{sec:complex}), we translate the
axiomatic theory of GTT into an axiomatic theory of CBPV extended with
recursive types and an uncatchable error, implementing casts by CBPV
code that does contract checking.
Then
(Section~\ref{sec:operational}) we give an operational semantics
for the extended CBPV and define a step-indexed biorthogonal logical
relation that interprets the ordering relation on terms as contextual
error approximation, which underlies the definition of graduality as
presented in \citep{newahmed18}.
Combining these theorems gives an implementation of the term
language of GTT in which $\beta, \eta$ are observational equivalences
and the dynamic gradual guarantee is satisfied.

Due to the uniqueness theorems of GTT, the only part of this translation that is not
predetermined is the definition of the dynamic types themselves and the
casts between ``ground'' types and the dynamic types.
We use CBPV to explore the design space of possible implementations of
the dynamic types, and give one that faithfully distinguishes all types
of GTT, and another more Scheme-like implementation that implements sums
and lazy pairs by tag bits.
Both can be restricted to the CBV or CBN subsets of CBPV, but the
unrestricted variant is actually more faithful to Scheme-like
dynamically typed programming, because it accounts for variable-argument
functions.
Our modular proof architecture allows us to easily prove correctness
of $\beta, \eta$ and graduality for all of these interpretations.

\textbf{Contributions.}
The main contributions of the paper are as follows.
\begin{enumerate}
\item We present Gradual Type Theory in Section \ref{sec:gtt}, a simple
  axiomatic theory of gradual typing. The theory axiomatizes three
  simple assumptions about a gradual language: compositionality,
  graduality, and type-based reasoning in the form of $\eta$
  equivalences.
\item We prove many theorems in the formal logic of Gradual Type
  Theory in Section \ref{sec:theorems-in-gtt}. These include the
  unique implementation theorems for casts, which show that for
  each type connective of GTT, the $\eta$ principle for the type
  ensures that the casts must implement the lazy contract
  semantics. Furthermore, we show that upcasts are always pure
  functions and dually that downcasts are always strict functions, as
  long as the base type casts are pure/strict.
\item To substantiate that GTT is a reasonable axiomatic theory for
  gradual typing, we construct \emph{models} of GTT in Sections
  \ref{sec:contract}, \ref{sec:complex} and \ref{sec:lr}.  This proceeds
  in two stages. First (Section \ref{sec:contract}), we use
  call-by-push-value as a typed metalanguage to construct several models
  of GTT using different recursive types to implement the dynamic types
  of GTT and interpret the casts as embedding-projection pairs. This
  extends standard translations of dynamic typing into static typing
  using type tags: the dynamic value type is constructed as a recursive
  sum of basic value types, but dually the dynamic computation type is
  constructed as a recursive \emph{product} of basic computation
  types. This dynamic computation type naturally models stack-based
  implementations of variable-arity functions as used in the Scheme
  language.
\item We then give an operational model of the term dynamism ordering
  as contextual error approximation in Sections \ref{sec:complex} and
  \ref{sec:lr}. To construct this model, we extend previous work on
  logical relations for error approximation from call-by-value to
  call-by-push-value \cite{newahmed18}, simplifying the presentation
  in the process.
\end{enumerate}

\begin{shortonly}
\textbf{Extended version:} An extended version of the paper, which
includes the omitted cases of definitions, lemmas, and proofs is
available in \citet{newlicataahmed19:extended}.
\end{shortonly}

\section{Axiomatic Gradual Type Theory}
\label{sec:gtt}

In this section we introduce the syntax of Gradual Type Theory, an
extension of Call-by-push-value~\citep{levy03cbpvbook} to support the constructions of
gradual typing.
First we introduce call-by-push-value and then describe in turn the
gradual typing features: dynamic types, casts, and the dynamism
orderings on types and terms.

\begin{figure}
  \begin{small}
  \[
  \begin{array}{l}
  \begin{array}{rl|rl}
    A ::= & \colorbox{lightgray}{$\dynv$} \mid U \u B \mid 0 \mid A_1 + A_2 \mid 1 \mid A_1 \times A_2 & 
    \u B ::= & \colorbox{lightgray}{$\dync$} \mid \u F A \mid \top \mid \u B_1 \with \u B_2 \mid A \to \u B\\

    V ::= & \begin{array}{l}
            \colorbox{lightgray}{$\upcast A {A'} V$} \mid x \mid \abort{V} \\
            \mid \inl{V} \mid \inr{V} \\
            \mid \caseofXthenYelseZ V {x_1. V_1}{x_2.V_2} \\
            \mid () \mid \pmpairWtoinZ V V' \\
            \mid (V_1,V_2) \mid \pmpairWtoXYinZ V x y V' \\
            \mid \thunk{M}
            \end{array} & 

    M,S ::= & \begin{array}{l}
            \colorbox{lightgray}{$\dncast{\u B} {\u B'} M$} \mid \bullet \mid \err_{\u B} \\
            \mid \abort{V} \mid \caseofXthenYelseZ V {x_1. M_1}{x_2.M_2}\\
            \mid \pmpairWtoinZ V M \mid\pmpairWtoXYinZ V x y M \\
            \mid  \force{V} \mid \ret{V} \mid \bindXtoYinZ{M}{x}{N}\\
            \mid \lambda x:A.M \mid M\,V\\
            \mid \emptypair \mid \pair{M_1}{M_2} \\
            \mid \pi M \mid \pi' M
            \end{array}\\

    \Gamma ::= & \cdot \mid \Gamma, x : A & 
    \Delta ::= & \cdot \mid \bullet : \u B \\
    \colorbox{lightgray}{$\Phi$} ::= & \colorbox{lightgray}{$\cdot \mid \Phi, x \ltdyn x': A \ltdyn A'$} &
    \colorbox{lightgray}{$\Psi$} ::= & \colorbox{lightgray}{$\cdot \mid \bullet \ltdyn \bullet : \u B \ltdyn \u B'$} \\  
    \end{array}\\\\
\iflong
    \begin{array}{c}
    \hspace{2.5in} T ::= A \mid \u B \\
    \hspace{2.5in} E ::= V \mid M  \\
  \end{array}\\\\
\fi
  \begin{array}{c}
    \framebox{$\Gamma \vdash V : A$ and $\Gamma \mid \Delta \vdash M : \u B$} \qquad
    \colorbox{lightgray}{
    $\inferrule*[lab=UpCast]
    {\Gamma \vdash V : A \and A \ltdyn A'}
    {\Gamma \vdash \upcast A {A'} V : A'}$
    \qquad
    $\inferrule*[lab=DnCast]
    {\Gamma\pipe \Delta \vdash M : \u B' \and \u B \ltdyn \u B'}
    {\Gamma\pipe \Delta \vdash \dncast{\u B}{\u B'} M : \u B}$
    }
    \\\\
    \inferrule*[lab=Var]
    { }
    {\Gamma,x : A,\Gamma' \vdash x : A}
    \qquad
    \inferrule*[lab=Hole]
    { }
    {\Gamma\pipe \bullet : \u B \vdash \bullet : \u B}
    \qquad
    \inferrule*[lab=Err]
    { }
    {\Gamma \mid \cdot \vdash \err_{\u B} : \u B}
    \\
\iflong
    \\
    \inferrule*[lab=$0$E]
    {\Gamma \vdash V : 0}
    {\Gamma \mid \Delta \vdash \abort V : T}
    \qquad
    \inferrule*[lab=$+$Il]
    {\Gamma \vdash V : A_1}
    {\Gamma \vdash \inl V : A_1 + A_2}
    \qquad
    \inferrule*[lab=$+$Ir]
    {\Gamma \vdash V : A_2}
    {\Gamma \vdash \inr V  : A_1 + A_2}
    \qquad
    \inferrule*[lab=$+$E]
        {
          \Gamma \vdash V : A_1 + A_2 \\\\
          \Gamma, x_1 : A_1 \mid \Delta \vdash E_1 : T \\\\
          \Gamma, x_2 : A_2 \mid \Delta \vdash E_2 : T
        }
    {\Gamma \mid \Delta \vdash \caseofXthenYelseZ V {x_1. E_1}{x_2.E_2} : T}
    \\\\
    \fi
    \inferrule*[lab=$1$I]
    { }
    {\Gamma \vdash (): 1}
    \,\,\,
    \inferrule*[lab=$1$E]
    {\Gamma \vdash V : 1 \and
      \Gamma \mid \Delta \vdash E : T
    }
    {\Gamma \mid \Delta \vdash \pmpairWtoinZ V E : T}
    \,\,\,
    \inferrule*[lab=$\times$I]
    {\Gamma \vdash V_1 : A_1\and
      \Gamma\vdash V_2 : A_2}
    {\Gamma \vdash (V_1,V_2) : A_1 \times A_2}
    \,\,\,
    \inferrule*[lab=$\times$E]
    {\Gamma \vdash V : A_1 \times A_2 \\\\
      \Gamma, x : A_1,y : A_2 \mid \Delta \vdash E : T
    }
    {\Gamma \mid \Delta \vdash \pmpairWtoXYinZ V x y E : T}
    \\\\
    \inferrule*[lab=$U$I]
    {\Gamma \mid \cdot \vdash M : \u B}
    {\Gamma \vdash \thunk M : U \u B}
    \,\,\,
    \inferrule*[lab=$U$E]
    {\Gamma \vdash V : U \u B}
    {\Gamma \pipe \cdot \vdash \force V : \u B}
    \,\,\,
    \inferrule*[lab=$F$I]
    {\Gamma \vdash V : A}
    {\Gamma \pipe \cdot \vdash \ret V : \u F A}
    \,\,\,
    \inferrule*[lab=$F$E]
    {\Gamma \pipe \Delta \vdash M : \u F A \\
      \Gamma, x: A \pipe \cdot \vdash N : \u B}
    {\Gamma \pipe \Delta \vdash \bindXtoYinZ M x N : \u B}
    \\\\
    \inferrule*[lab=$\to$I]
    {\Gamma, x: A \pipe \Delta \vdash M : \u B}
    {\Gamma \pipe \Delta \vdash \lambda x : A . M : A \to \u B}
    \quad
    \inferrule*[lab=$\to$E]
    {\Gamma \pipe \Delta \vdash M : A \to \u B\and
      \Gamma \vdash V : A}
    {\Gamma \pipe \Delta \vdash M\,V : \u B }
\iflong
      \\\\
    \inferrule*[lab=$\top$I]{ }{\Gamma \mid \Delta \vdash \emptypair : \top}
    \quad
    \inferrule*[lab=$\with$I]
    {\Gamma \mid \Delta \vdash M_1 : \u B_1\and
      \Gamma \mid \Delta \vdash M_2 : \u B_2}
    {\Gamma \mid \Delta \vdash \pair {M_1} {M_2} : \u B_1 \with \u B_2}
    \quad
    \inferrule*[lab=$\with$E]
    {\Gamma \mid \Delta \vdash M : \u B_1 \with \u B_2}
    {\Gamma \mid \Delta \vdash \pi M : \u B_1}
    \quad
    \inferrule*[lab=$\with$E']
    {\Gamma \mid \Delta \vdash M : \u B_1 \with \u B_2}
    {\Gamma \mid \Delta \vdash \pi' M : \u B_2}
\fi
  \end{array}
  \end{array}
  \]
\end{small}
  \vspace{-0.1in}
  \caption{GTT Syntax and Term Typing \ifshort{($+$ and $\with$ typing rules in extended version)}\fi}
  \label{fig:gtt-syntax-and-terms}
\end{figure}

\subsection{Background: Call-by-Push-Value}

GTT is an extension of CBPV, so we first present CBPV as the unshaded rules in
Figure~\ref{fig:gtt-syntax-and-terms}.  CBPV makes a distinction between
\emph{value types} $A$ and \emph{computation types} $\u B$, where value
types classify \emph{values} $\Gamma \vdash V : A$ and computation types
classify \emph{computations} $\Gamma \vdash M : \u B$.  Effects are
computations: for example, we might have an error computation $\err_{\u
  B} : \u B$ of every computation type, or printing $\print V;M : \u B$
if $V : \kw{string}$ and $M : \u B$, which prints $V$ and then behaves as
$M$.

\emph{Value types and complex values.}
The value types include \emph{eager} products $1$ and $A_1 \times A_2$
and sums $0$ and $A_1 + A_2$, which behave as in a call-by-value/eager
language (e.g. a pair is only a value when its components are).  The
notion of value $V$ is more permissive than one might expect, and
expressions $\Gamma \vdash V : A$ are sometimes called \emph{complex
  values} to emphasize this point: complex values include not only
closed runtime values, but also open values that have free value
variables (e.g. $x : A_1 , x_2 : A_2 \vdash (x_1,x_2) : A_1 \times
A_2$), and expressions that pattern-match on values (e.g. $p : A_1
\times A_2 \vdash \pmpairWtoXYinZ{p}{x_1}{x_2}{(x_2,x_1)} : A_2 \times
A_1$).  Thus, the complex values $x : A \vdash V : A'$ are a syntactic
class of ``pure functions'' from $A$ to $A'$ (though there is no pure
function \emph{type} internalizing this judgement), which can be treated
like values by a compiler because they have no effects (e.g. they can be
dead-code-eliminated, common-subexpression-eliminated, and so on).
\begin{longonly}
In focusing~\cite{andreoli92focus} terminology, complex
values consist of left inversion and right focus rules.
\end{longonly}
For each pattern-matching construct (e.g. case analysis on a sum,
splitting a pair), we have both an elimination rule whose branches are
values (e.g. $\pmpairWtoXYinZ{p}{x_1}{x_2}{V}$) and one whose branches
are computations (e.g. $\pmpairWtoXYinZ{p}{x_1}{x_2}{M}$).  To
abbreviate the typing rules for both in
Figure~\ref{fig:gtt-syntax-and-terms}, we use the following convention:
we write $E ::= V \mid M$ for either a complex value or a computation,
and $T ::= A \mid \u B$ for either a value type $A$ or a computation
type $\u B$, and a judgement $\Gamma \mid \Delta \vdash E : T$ for
either $\Gamma \vdash V : A$ or $\Gamma \mid \Delta \vdash M : \u B$
(this is a bit of an abuse of notation because $\Delta$ is not present
in the former).  Complex values can be translated away without loss of
expressiveness by moving all pattern-matching into computations (see
Section~\ref{sec:complex}), at the expense of using a behavioral
condition of \emph{thunkability}~\cite{munchmaccagnoni14nonassociative} to capture the properties
complex values have (for example, an analogue of
$\letXbeYinZ{V}{x}{\letXbeYinZ{V'}{x'}{M}} \equiv
\letXbeYinZ{V'}{x'}{\letXbeYinZ{V}{x}{M}}$ --- complex values can be
reordered, while arbitrary computations cannot).  

\emph{Shifts.}
A key notion in CBPV is the \emph{shift} types $\u F A$ and $U \u B$,
which mediate between value and computation types: $\u F A$ is the
computation type of potentially effectful programs that return a value
of type $A$, while $U \u B$ is the value type of thunked computations of
type $\u B$.  The introduction rule for $\u F A$ is returning a value of
type $A$ (\ret{V}), while the elimination rule is sequencing a
computation $M : \u F A$ with a computation $x : A \vdash N : \u B$ to
produce a computation of a $\u B$ ($\bindXtoYinZ{M}{x}{N}$).  While any
closed complex value $V$ is equivalent to an actual value, a computation
of type $\u F A$ might perform effects (e.g. printing) before returning
a value, or might error or non-terminate and not return a value at all.
The introduction and elimination rules for $U$ are written $\thunk{M}$
and $\force{V}$, and say that computations of type $\u B$ are bijective
with values of type $U \u B$.  As an example of the action of the
shifts,
\begin{longonly}
  $0$ is the empty value type, so $\u F 0$ classifies effectful
computations that never return, but may perform effects (and then, must
e.g. non-terminate or error), while $U \u F 0$ is the value type where
such computations are thunked/delayed and considered as values.
\end{longonly}
$1$ is the trivial value type, so $\u F 1$ is the type of computations
that can perform effects with the possibility of terminating
  successfully by returning $()$, and $U \u F 1$ is the value type where
  such computations are delayed values.
\begin{longonly}  
  $U \u F$ is a monad on value
  types~\citep{moggi91}, while $\u F U$ is a comonad on computation types.
\end{longonly}

\emph{Computation types.}
The computation type constructors in CBPV include lazy unit/products
$\top$ and $\u B_1 \with \u B_2$, which behave as in a call-by-name/lazy
language (e.g. a component of a lazy pair is evaluated only when it is
projected).  Functions $A \to \u B$ have a value type as input and a
computation type as a result.  The equational theory of effects in CBPV
computations may be surprising to those familiar only with
call-by-value, because at higher computation types effects have a
call-by-name-like equational theory.  For example, at computation type
$A \to \u B$, we have an equality $\print c; \lambda x. M = \lambda
x.\print c; M$.  Intuitively, the reason is that $A \to \u B$ is not
treated as an \emph{observable} type (one where computations are run):
the states of the operational semantics are only those computations of
type $\u F A$ for some value type $A$.  Thus, ``running'' a function
computation means supplying it with an argument, and applying both of
the above to an argument $V$ is defined to result in $\print c;M[V/x]$.
This does \emph{not} imply that the corresponding equations holds for
the call-by-value function type, which we discuss below.
\begin{longonly}
As another
example, \emph{all} computations are considered equal at type $\top$,
even computations that perform different effects ($\print c$ vs. $\{\}$
vs. $\err$), because there is by definition \emph{no} way to extract an
observable of type $\u F A$ from a computation of type $\top$.
Consequently, $U \top$ is isomorphic to $1$.
\end{longonly}

\emph{Complex stacks.} Just as the complex values $V$ are a syntactic
class terms that have no effects, CBPV includes a judgement for
``stacks'' $S$, a syntactic class of terms that reflect \emph{all}
effects of their input.  A \emph{stack} $\Gamma \mid \bullet : \u B
\vdash S : \u B'$ can be thought of as a linear/strict function from $\u
B$ to $\u B'$, which \emph{must} use its input hole $\bullet$
\emph{exactly} once at the head redex position.  Consequently, effects
can be hoisted out of stacks, because we know the stack will run them
exactly once and first.  For example, there will be contextual
equivalences $S[\err/\bullet] = \err$ and $S[\print V;M] = \print
V;S[M/\bullet]$.  Just as complex values include pattern-matching,
\emph{complex stacks} include pattern-matching on values and
introduction forms for the stack's output type.  For example, $\bullet :
\u B_1 \with \u B_2 \vdash \pair{\pi' \bullet}{\pi \bullet} : \u B_2
\with \u B_1$ is a complex stack, even though it mentions $\bullet$ more
than once, because running it requires choosing a projection to get to
an observable of type $\u F A$, so \emph{each time it is run} it uses
$\bullet$ exactly once.
\begin{longonly}
In
focusing terms, complex stacks include both left and right inversion,
and left focus rules.
\end{longonly}
In the equational theory of CBPV, $\u F$ and $U$
are \emph{adjoint}, in the sense that stacks $\bullet : \u F A \vdash S
: \u B$ are bijective with values $x : A \vdash V : U \u B$, as both are
bijective with computations $x : A \vdash M : \u B$.

To compress the presentation in Figure~\ref{fig:gtt-syntax-and-terms},
we use a typing judgement $\Gamma \mid \Delta \vdash M : \u B$ with a
``stoup'', a typing context $\Delta$ that is either
empty or contains exactly one assumption $\bullet : \u B$, so $\Gamma
\mid \cdot \vdash M : \u B$ is a computation, while $\Gamma \mid \bullet
: \u B \vdash M : \u B'$ is a stack.  The \ifshort{(omitted) }\fi typing
rules for $\top$ and $\with$ treat the stoup additively
(it is arbitrary in the conclusion and the same in all premises); for a
function application to be a stack, the stack input must occur in the
function position.  The elimination form for $U \u B$, $\force{V}$, is
the prototypical non-stack computation ($\Delta$ is required to be
empty), because forcing a thunk does not use the stack's input.

\emph{Embedding call-by-value and call-by-name.}  To translate
call-by-value (CBV) into CBPV, a judgement $x_1 : A_1, \ldots, x_n : A_n
\vdash e : A$ is interpreted as a computation $x_1 : A_1^v, \ldots, x_n
: A_n^v \vdash e^v : \u F A^v$, where call-by-value products and sums
are interpreted as $\times$ and $+$, and the call-by-value function type
$A \to A'$ as $U(A^v \to \u F A'^v)$.  Thus, a call-by-value term $e : A
\to A'$, which should mean an effectful computation of a function value,
is translated to a computation $e^v : \u F U (A^v \to \u F A'^v)$. Here,
the comonad $\u F U$ offers an opportunity to perform effects
\emph{before} returning a function value---so under translation the CBV
terms $\print c; \lambda x. e$ and $\lambda x.\print c; e$ will not be
contextually equivalent.  To translate call-by-name (CBN) to CBPV, a
judgement $x_1 : \u B_1, \ldots, x_m : \u B_m \vdash e : \u B$ is
translated to $x_1 : U \u {B_1}^n, \ldots, x_m : U \u {B_m}^n \vdash e^n
: \u B^n$, representing the fact that call-by-name terms are passed
thunked arguments.  Product types are translated to $\top$ and $\times$,
while a CBN function $B \to B'$ is translated to $U \u B^n \to \u B'^n$
with a thunked argument.  Sums $B_1 + B_2$ are translated to $\u F (U \u
{B_1}^n + U \u {B_2}^n)$, making the ``lifting'' in lazy sums explicit.
Call-by-push-value \emph{subsumes} call-by-value and call-by-name in
that these embeddings are \emph{full and faithful}: two CBV or CBN programs are
equivalent if and only if their embeddings into CBPV are equivalent, and
every CBPV program with a CBV or CBN type can be back-translated.






\emph{Extensionality/$\eta$ Principles.}  The main advantage of CBPV for
our purposes is that it accounts for the $\eta$/extensionality
principles of both eager/value and lazy/computation types, because
value types have $\eta$ principles relating them to the value
assumptions in the context $\Gamma$, while computation types have $\eta$
principles relating them to the result type of a computation $\u B$.  For
example, the $\eta$ principle for sums says that any complex
value or computation $x : A_1 + A_2 \vdash E : T$ is equivalent to
$\caseofXthenYelseZ{x}{x_1.E[\inl{x_1}/x]}{x_2.E[\inr{x_2}/x]}$, i.e. a
case on a value can be moved to any point in a program (where all
variables are in scope) in an optimization.  Given this, the above
translations of CBV and CBN into CBPV explain why $\eta$ for
sums holds in CBV but not CBN: in CBV, $x : A_1 + A_2 \vdash E : T$ is
translated to a term with $x : A_1 + A_2$ free, but in CBN, $x : B_1 +
B_2 \vdash E : T$ is translated to a term with $x : U \u F(U \u B_1 + U
\u B_2)$ free, and the type $U \u F(U \u B_1 + U \u B_2)$ of monadic
computations that return a sum does not satisfy the $\eta$ principle for
sums in CBPV.  Dually, the $\eta$ principle for functions in CBPV is
that any computation $M : A \to \u B$ is equal to $\lambda x.M \, x$.  A
CBN term $e : B \to B'$ is translated to a CBPV computation of type $U
\u B \to \u B'$, to which CBPV function extensionality applies, while a
CBV term $e : A \to A'$ is translated to a computation of type $\u F U(A
\to \u F A')$, which does not satisfy the $\eta$ rule for functions.  We
discuss a formal statement of these $\eta$ principles with term
dynamism below.


\ifshort \vspace{-0.1in} \fi
\subsection{The Dynamic Type(s)}

Next, we discuss the additions that make CBPV into our gradual type
theory GTT.  A dynamic type plays a key role in gradual typing, and
since GTT has two different kinds of types, we have a new question of
whether the dynamic type should be a value type, or a computation type,
or whether we should have \emph{both} a dynamic value type and a dynamic
computation type.
Our modular, type-theoretic presentation of gradual typing allows us to
easily explore these options, though we find that having
both a dynamic value $\dynv$ and a dynamic computation type $\dync$
gives the most natural implementation (see
Section~\ref{sec:dynamic-type-interp}).  Thus, we add both $\dynv$ and
$\dync$ to the grammar of types in
Figure~\ref{fig:gtt-syntax-and-terms}.  We do \emph{not} give
introduction and elimination rules for the dynamic types, because we
would like constructions in GTT to imply results for many different
possible implementations of them.  Instead, the terms for the dynamic
types will arise from type dynamism and casts.

\ifshort \vspace{-0.12in} \fi
\subsection{Type Dynamism}

The \emph{type dynamism} relation of gradual type theory is written $A
\ltdyn A'$ and read as ``$A$ is less dynamic than $A'$''; intuitively,
this means that $A'$ supports more behaviors than $A$.
Our previous work~\citep{newahmed18,newlicata2018-fscd} analyzes this as the existence of an \emph{upcast}
from $A$ to $A'$ and a downcast from $A'$ to $A$ which form an
embedding-projection pair (\emph{ep pair}) for term error approximation
(an ordering where runtime errors are minimal): the upcast followed by the
downcast is a no-op, while the downcast followed by the upcast might
error more than the original term, because it imposes a run-time type
check.  Syntactically, type dynamism is defined (1) to be reflexive and
transitive (a preorder), (2) where every type constructor is monotone in
all positions, and (3) where the dynamic type is greatest in the type
dynamism ordering.  This last condition, \emph{the
  dynamic type is the most dynamic type}, implies the existence of an
upcast $\upcast{A}{\dynv}$ and a downcast $\dncast{A}{\dynv}$ for every
type $A$: any type can be embedded
into it and projected from it.  However, this by design does not
characterize $\dynv$ uniquely---instead, it is open-ended exactly
which types exist (so that we can always add more), and some properties
of the casts are undetermined; we exploit this freedom in
Section~\ref{sec:dynamic-type-interp}.

This extends in a straightforward way to CBPV's distinction between
value and computation types in Figure~\ref{fig:gtt-type-dynamism}: there
is a type dynamism relation for value types $A \ltdyn A'$ and for
computation types $\u B \ltdyn \u B'$, which (1) each are preorders
(\textsc{VTyRefl}, \textsc{VTyTrans}, \textsc{CTyRefl}, \textsc{CTyTrans}),
(2) every type constructor is monotone
(\textsc{$+$Mon}, \textsc{$\times$Mon}, \textsc{$\with$Mon} ,\textsc{$\to$Mon})
where the shifts $\u F$ and $U$ switch which relation is being
considered (\textsc{$U$Mon}, \textsc{$F$Mon}), and (3) the dynamic types
$\dynv$ and $\dync$ are the most dynamic value and computation types
respectively (\textsc{VTyTop}, \textsc{CTyTop}).  For example, we have
$U(A \to \u F A') \ltdyn U(\dynv \to \u F \dynv)$, which is the analogue
of $A \to A' \ltdyn \dynv \to \dynv$ in call-by-value: because $\to$
preserves embedding-retraction pairs, it is monotone, not contravariant,
in the domain~\citep{newahmed18,newlicata2018-fscd}.

\begin{figure}
\begin{small}
    
  \begin{mathpar}
    \framebox{$A \ltdyn A'$ and $\u B \ltdyn \u B'$}
    
    \inferrule*[lab=VTyRefl]{ }{A \ltdyn A}

    \inferrule*[lab=VTyTrans]{A \ltdyn A' \and A' \ltdyn A''}
              {A \ltdyn A''}

    \inferrule*[lab=CTyRefl]{ }{\u B \ltdyn \u B'}

    \inferrule*[lab=CTyTrans]{\u B \ltdyn \u B' \and \u B' \ltdyn \u B''}
              {\u B \ltdyn \u B''}

    \inferrule*[lab=VTyTop]{ }{A \ltdyn \dynv}

    \inferrule*[lab=$U$Mon]{\u B \ltdyn \u B'}
              {U B \ltdyn U B'}

    \inferrule*[lab=$+$Mon]{A_1 \ltdyn A_1' \and A_2 \ltdyn A_2' }
              {A_1 + A_2 \ltdyn A_1' + A_2'}

    \inferrule*[lab=$\times$Mon]{A_1 \ltdyn A_1' \and A_2 \ltdyn A_2' }
              {A_1 \times A_2 \ltdyn A_1' \times A_2'}
\\
    \inferrule*[lab=CTyTop]{ }{\u B \ltdyn \dync}

\inferrule*[lab=$F$Mon]{A \ltdyn A' }{ \u F A \ltdyn \u F A'}

\inferrule*[lab=$\with$Mon]{\u B_1 \ltdyn \u B_1' \and \u B_2 \ltdyn \u B_2'}
              {\u B_1 \with \u B_2 \ltdyn \u B_1' \with \u B_2'}

\inferrule*[lab=$\to$Mon]{A \ltdyn A' \and \u B \ltdyn \u B'}
          {A \to \u B \ltdyn A' \to \u B'}
\begin{longonly}
\\
\framebox{Dynamism contexts} 
\quad
\inferrule{ }{\cdot \, \dynvctx}
\quad
\inferrule{\Phi \, \dynvctx \and
            A \ltdyn A'}
          {\Phi, x \ltdyn x' : A \ltdyn A' \, \dynvctx}
\quad
\inferrule{ }{\cdot \, \dyncctx}
\quad         
\inferrule{\u B \ltdyn \u B'}
          {(\bullet \ltdyn \bullet : \u B \ltdyn \u B') \, \dyncctx}
\end{longonly}
  \end{mathpar}
  \vspace{-0.2in}
\caption{GTT Type Dynamism \iflong and Dynamism Contexts \fi}
\label{fig:gtt-type-dynamism}
\end{small}
\end{figure}

\subsection{Casts}
\label{sec:gtt-casts}

It is not immediately obvious how to add type casts to CPBV, because
CBPV exposes finer judgemental distinctions than previous work
considered.  However, we can arrive at a first proposal by considering
how previous work would be embedded into CBPV.
In the previous work on both CBV and
CBN~\citep{newahmed18,newlicata2018-fscd} every type dynamism judgement
$A \ltdyn A'$ induces both an upcast from $A$ to $A'$ and a downcast
from $A'$ to $A$.
Because CBV types are associated to CBPV value types and CBN types are
associated to CBPV computation types, this suggests that each value type
dynamism $A \ltdyn A'$ should induce an upcast and a downcast, and each
computation type dynamism $\u B \ltdyn \u B'$ should also induce an
upcast and a downcast.
In CBV, a cast from $A$ to $A'$ typically can be represented by a CBV
function $A \to A'$, whose analogue in CBPV is $U(A \to \u F A')$, and
values of this type are bijective with computations $x : A \vdash M : \u
F A'$, and further with stacks $\bullet : \u F A \vdash
S : \u F A'$. This suggests that a
\emph{value} type dynamism $A \ltdyn A'$ should induce an
embedding-projection pair of \emph{stacks} $\bullet : \u F A \vdash S_u
: \u F A'$ and $\bullet : \u F A' \vdash S_d : \u F A$, which allow both
the upcast and downcast to a priori be effectful computations.
Dually, a CBN cast typically can be represented by a CBN function of
type $B \to B'$, whose CBPV analogue is a computation of type $U \u B
\to \u B'$, which is equivalent with a computation $x : U \u B \vdash M : \u B'$,
and with a value $x : U \u B \vdash V : U \u B'$. This suggests that a
\emph{computation} type dynamism $\u B \ltdyn \u B'$ should induce an
embedding-projection pair of \emph{values} $x : U \u B \vdash V_u : U \u
B'$ and $x : U \u B' \vdash V_d : U \u B$, where both the upcast and the
downcast again may a priori be (co)effectful, in the sense that they may
not reflect all effects of their input.

However, this analysis ignores an important property of CBV casts in practice:
\emph{upcasts} always terminate without performing any effects, and in
some systems upcasts are even defined to be values, while only the
\emph{downcasts} are effectful (introduce errors).  For example, for many types $A$, the
upcast from $A$ to $\dynv$ is an injection into a sum/recursive type,
which is a value constructor.  Our previous work on a logical
relation for call-by-value gradual typing~\cite{newahmed18} proved that all
upcasts were pure in this sense as a consequence of the embedding-projection pair properties (but their proof depended on the only effects being
divergence and type error).
In GTT, we can make this property explicit
in the syntax of the casts, by making the upcast $\upcast{A}{A'}$
induced by a value type dynamism $A \ltdyn A'$ itself a complex value,
rather than computation.  On the other hand, many downcasts between value
types are implemented as a case-analysis looking for a specific
tag and erroring otherwise, and so are not complex values.

We can also make a dual observation about CBN casts.  The
\emph{downcast} arising from $\u B \ltdyn \u B'$ has a stronger property
than being a computation $x : U \u B' \vdash M : \u B$ as suggested
above: it can be taken to be a stack $\bullet : \u B' \vdash \dncast{\u
  B}{\u B'}{\bullet} : \u B$, because a downcasted computation
evaluates the computation it is ``wrapping'' exactly once.  One
intuitive justification for this point of view, which we make precise
in Section \ref{sec:contract}, is to think of the dynamic computation type $\dync$ as a
recursive \emph{product} of all possible behaviors that a computation
might have, and the downcast as a recursive type unrolling and product
projection, which is a stack.  From this point of view, an \emph{upcast}
can introduce errors, because the upcast of an object supporting some
``methods'' to one with all possible methods will error dynamically on
the unimplemented ones.

These observations are expressed in the (shaded) \textsc{UpCast} and
\textsc{DnCasts} rules for casts in
Figure~\ref{fig:gtt-syntax-and-terms}: the upcast for a value type
dynamism is a complex value, while the downcast for a computation type
dynamism is a stack (if its argument is).  Indeed, this description of
casts is simpler than the intuition we began the section with: rather
than putting in both upcasts and downcasts for all value and computation
type dynamisms, it suffices to put in only \emph{upcasts} for
\emph{value} type dynamisms and \emph{downcasts} for \emph{computation}
type dynamisms, because of monotonicity of type dynamism for $U$/$\u F$
types.  The \emph{downcast} for a \emph{value} type dynamism $A \ltdyn
A'$, as a stack $\bullet : \u F A' \vdash \dncast{\u F A}{\u F
  A'}{\bullet} : \u F A$ as described above, is obtained from $\u F A
\ltdyn \u F A'$ as computation types.  The upcast for a computation type
dynamism $\u B \ltdyn \u B'$ as a value $x : U \u B \vdash \upcast{U \u
  B}{U \u B'}{x} : U \u B'$ is obtained from $U \u B \ltdyn U \u B'$ as
value types.  Moreover, we will show below that the value upcast
$\upcast{A}{A'}$ induces a stack $\bullet : \u F A \vdash \ldots : \u F
A'$ that behaves like an upcast, and dually for the downcast, so this
formulation implies the original formulation above.

We justify this design in two ways in the remainder of the paper.  In
Section~\ref{sec:contract}, we show how to implement casts by a
contract translation to CBPV where upcasts are complex values and
downcasts are complex stacks.
However, one goal of
GTT is to be able to prove things about many gradually typed languages
at once, by giving different models, so one might wonder whether this
design rules out useful models of gradual typing where casts can have more general effects.  In
Theorem~\ref{thm:upcasts-values-downcasts-stacks}, we show instead that
our design choice is forced for all casts, as long as the casts between ground types and the dynamic types are values/stacks.





\subsection{Term Dynamism: Judgements and Structural Rules}

\begin{figure}
  \begin{small}
  \[
  \begin{array}{c}
    \framebox{$\Phi \vdash V \ltdyn V' : A \ltdyn A'$ and $\Phi \mid \Psi \vdash M \ltdyn M' : \u B \ltdyn \u B'$}
    \\\\
    
    \inferrule*[lab=TmDynRefl]{ }{\Gamma \ltdyn \Gamma \mid \Delta \ltdyn \Delta \vdash E \ltdyn E : T \ltdyn T}
    \qquad
    \inferrule*[lab=TmDynVar]
    { }
    {\Phi,x \ltdyn x' : A \ltdyn A',\Phi' \vdash x \ltdyn x' : A \ltdyn A'}
    \\\\
    \inferrule*[lab=TmDynTrans]{\Gamma \ltdyn \Gamma' \mid \Delta \ltdyn \Delta' \vdash E \ltdyn E' : T \ltdyn T' \\\\
      \Gamma' \ltdyn \Gamma'' \mid \Delta' \ltdyn \Delta'' \vdash E' \ltdyn E'' : T' \ltdyn T''
    }
    {\Gamma \ltdyn \Gamma'' \mid \Delta \ltdyn \Delta'' \vdash E \ltdyn E'' : T \ltdyn T''}
    \qquad
    \inferrule*[lab=TmDynValSubst]
    {\Phi \vdash V \ltdyn V' : A \ltdyn A' \\\\
      \Phi, x \ltdyn x' : A \ltdyn A',\Phi' \pipe \Psi \vdash E \ltdyn E' : T \ltdyn T'
    }
    {\Phi \mid \Psi \vdash E[V/x] \ltdyn E'[V'/x'] : T \ltdyn T'}
    \\\\
    \inferrule*[lab=TmDynHole]
    { }
    {\Phi \pipe \bullet \ltdyn \bullet : \u B \ltdyn \u B' \vdash \bullet \ltdyn \bullet : \u B \ltdyn \u B'}
    \qquad
    \inferrule*[lab=TmDynStkSubst]
    {\Phi \pipe \Psi \vdash M_1 \ltdyn M_1' : \u B_1 \ltdyn \u B_1' \\\\
      \Phi \pipe \bullet \ltdyn \bullet : \u B_1 \ltdyn \u B_1' \vdash M_2 \ltdyn M_2' : \u B_2 \ltdyn \u B_2'
    }
    {\Phi \mid \Psi \vdash M_2[M_1/\bullet] \ltdyn M_2'[M_1'/\bullet] : \u B_2 \ltdyn \u B_2'}
    \\\\
    \ifshort
    \inferrule*[lab=$\times$ICong]
    {\Phi \vdash V_1 \ltdyn V_1' : A_1 \ltdyn A_1'\\\\
      \Phi\vdash V_2 \ltdyn V_2' : A_2 \ltdyn A_2'}
    {\Phi \vdash (V_1,V_2) \ltdyn (V_1',V_2') : A_1 \times A_2 \ltdyn A_1' \times A_2'}
    \quad
    \inferrule*[lab=$\to$ICong]
    {\Phi, x \ltdyn x' : A \ltdyn A' \pipe \Psi \vdash M \ltdyn M' : \u B \ltdyn \u B'}
    {\Phi \pipe \Psi \vdash \lambda x : A . M \ltdyn \lambda x' : A' . M' : A \to \u B \ltdyn A' \to \u B'}
    
    \\\\
    \inferrule*[lab=$\times$ECong]
    {\Phi \vdash V \ltdyn V' : A_1 \times A_2 \ltdyn A_1' \times A_2' \\\\
      \Phi, x \ltdyn x' : A_1 \ltdyn A_1', y \ltdyn y' : A_2 \ltdyn A_2' \mid \Psi \vdash E \ltdyn E'  : T \ltdyn T'
    }
    {\Phi \mid \Psi \vdash \pmpairWtoXYinZ V x y E \ltdyn \pmpairWtoXYinZ {V'} {x'} {y'} {E'} : T \ltdyn T'}
    \,\,
    \inferrule*[lab=$\to$ECong]
    {\Phi \pipe \Psi \vdash M \ltdyn M' : A \to \u B \ltdyn A' \to \u B' \\\\
      \Phi \vdash V \ltdyn V' : A \ltdyn A'}
    {\Phi \pipe \Psi \vdash M\,V \ltdyn M'\,V' : \u B \ltdyn \u B' }
    \\\\
    \inferrule*[lab=$F$ICong]
    {\Phi \vdash V \ltdyn V' : A \ltdyn A'}
    {\Phi \pipe \cdot \vdash \ret V \ltdyn \ret V' : \u F A \ltdyn \u F A'}
    \qquad
    \inferrule*[lab=$F$ECong]
    {\Phi \pipe \Psi \vdash M \ltdyn M' : \u F A \ltdyn \u F A' \\\\
      \Phi, x \ltdyn x' : A \ltdyn A' \pipe \cdot \vdash N \ltdyn N' : \u B \ltdyn \u B'} 
    {\Phi \pipe \Psi \vdash \bindXtoYinZ M x N \ltdyn \bindXtoYinZ {M'} {x'} {N'} : \u B \ltdyn \u B'} 
    \\\\
    \fi
  \end{array}
  \]
  \vspace{-0.25in}
  \caption{GTT Term Dynamism (Structural \ifshort and Congruence\fi Rules) \ifshort
    (Rules for $U,1,+,0,\with,\top$ in extended version)
    \fi}
  \label{fig:gtt-term-dynamism-structural}
\end{small}
\end{figure}

\iflong
\begin{figure}
  \begin{small}
  \[
  \begin{array}{c}
    \inferrule*[lab=$+$IlCong]
    {\Phi \vdash V \ltdyn V' : A_1 \ltdyn A_1'}
    {\Phi \vdash \inl V \ltdyn \inl V' : A_1 + A_2 \ltdyn A_1' + A_2'}
    \qquad
    \inferrule*[lab=$+$IrCong]
    {\Phi \vdash V \ltdyn V' : A_2 \ltdyn A_2'}
    {\Phi \vdash \inr V \ltdyn \inr V' : A_1 + A_2 \ltdyn A_1' + A_2'}
    \\\\
    \inferrule*[lab=$+$ECong]
        {
          \Phi \vdash V \ltdyn V' : A_1 + A_2 \ltdyn A_1' + A_2' \\\\
          \Phi, x_1 \ltdyn x_1' : A_1 \ltdyn A_1' \mid \Psi \vdash E_1 \ltdyn E_1' : T \ltdyn T' \\\\
          \Phi, x_2 \ltdyn x_2' : A_2 \ltdyn A_2' \mid \Psi \vdash E_2 \ltdyn E_2' : T \ltdyn T'
        }
    {\Phi \mid \Psi \vdash \caseofXthenYelseZ V {x_1. E_1}{x_2.E_2} \ltdyn \caseofXthenYelseZ V {x_1'. E_1'}{x_2'.E_2'} : T'}
    \qquad
    \inferrule*[lab=$0$ECong]
    {\Phi \vdash V \ltdyn V' : 0 \ltdyn 0}
    {\Phi \mid \Psi \vdash \abort V \ltdyn \abort V' : T \ltdyn T'}
    \\\\
    \inferrule*[lab=$1$ICong]{ }{\Phi \vdash () \ltdyn () : 1 \ltdyn 1}
    \qquad
    \inferrule*[lab=$1$ECong]
    {\Phi \vdash V \ltdyn V' : 1 \ltdyn 1 \\\\
      \Phi \mid \Psi \vdash E \ltdyn E' : T \ltdyn T'
    }
    {\Phi \mid \Psi \vdash \pmpairWtoinZ V E \ltdyn \pmpairWtoinZ V' E' : T \ltdyn T'}
    \\\\
    \inferrule*[lab=$\times$ICong]
    {\Phi \vdash V_1 \ltdyn V_1' : A_1 \ltdyn A_1'\\\\
      \Phi\vdash V_2 \ltdyn V_2' : A_2 \ltdyn A_2'}
    {\Phi \vdash (V_1,V_2) \ltdyn (V_1',V_2') : A_1 \times A_2 \ltdyn A_1' \times A_2'}
    \quad
    \inferrule*[lab=$\to$ICong]
    {\Phi, x \ltdyn x' : A \ltdyn A' \pipe \Psi \vdash M \ltdyn M' : \u B \ltdyn \u B'}
    {\Phi \pipe \Psi \vdash \lambda x : A . M \ltdyn \lambda x' : A' . M' : A \to \u B \ltdyn A' \to \u B'}
    
    \\\\
    \inferrule*[lab=$\times$ECong]
    {\Phi \vdash V \ltdyn V' : A_1 \times A_2 \ltdyn A_1' \times A_2' \\\\
      \Phi, x \ltdyn x' : A_1 \ltdyn A_1', y \ltdyn y' : A_2 \ltdyn A_2' \mid \Psi \vdash E \ltdyn E'  : T \ltdyn T'
    }
    {\Phi \mid \Psi \vdash \pmpairWtoXYinZ V x y E \ltdyn \pmpairWtoXYinZ {V'} {x'} {y'} {E'} : T \ltdyn T'}
    \,\,
    \inferrule*[lab=$\to$ECong]
    {\Phi \pipe \Psi \vdash M \ltdyn M' : A \to \u B \ltdyn A' \to \u B' \\\\
      \Phi \vdash V \ltdyn V' : A \ltdyn A'}
    {\Phi \pipe \Psi \vdash M\,V \ltdyn M'\,V' : \u B \ltdyn \u B' }
    \\\\
    \inferrule*[lab=$U$ICong]
    {\Phi \mid \cdot \vdash M \ltdyn M' : \u B \ltdyn \u B'}
    {\Phi \vdash \thunk M \ltdyn \thunk M' : U \u B \ltdyn U \u B'}
    \qquad
    \inferrule*[lab=$U$ECong]
    {\Phi \vdash V \ltdyn V' : U \u B \ltdyn U \u B'}
    {\Phi \pipe \cdot \vdash \force V \ltdyn \force V' : \u B \ltdyn \u B'}
    \\\\
    \inferrule*[lab=$F$ICong]
    {\Phi \vdash V \ltdyn V' : A \ltdyn A'}
    {\Phi \pipe \cdot \vdash \ret V \ltdyn \ret V' : \u F A \ltdyn \u F A'}
    \qquad
    \inferrule*[lab=$F$ECong]
    {\Phi \pipe \Psi \vdash M \ltdyn M' : \u F A \ltdyn \u F A' \\\\
      \Phi, x \ltdyn x' : A \ltdyn A' \pipe \cdot \vdash N \ltdyn N' : \u B \ltdyn \u B'} 
    {\Phi \pipe \Psi \vdash \bindXtoYinZ M x N \ltdyn \bindXtoYinZ {M'} {x'} {N'} : \u B \ltdyn \u B'} 
    \\\\
    \inferrule*[lab=$\top$ICong]{ }{\Phi \mid \Psi \vdash \{\} \ltdyn \{\} : \top \ltdyn \top}
    \qquad
    \inferrule*[lab=$\with$ICong]
    {\Phi \mid \Psi \vdash M_1 \ltdyn M_1' : \u B_1 \ltdyn \u B_1'\and
      \Phi \mid \Psi \vdash M_2 \ltdyn M_2' : \u B_2 \ltdyn \u B_2'}
    {\Phi \mid \Psi \vdash \pair {M_1} {M_2} \ltdyn \pair {M_1'} {M_2'} : \u B_1 \with \u B_2 \ltdyn \u B_1' \with \u B_2'}
    \\\\ 
    \inferrule*[lab=$\with$ECong]
    {\Phi \mid \Psi \vdash M \ltdyn M' : \u B_1 \with \u B_2 \ltdyn \u B_1' \with \u B_2'}
    {\Phi \mid \Psi \vdash \pi M \ltdyn \pi M' : \u B_1 \ltdyn \u B_1'}
    \qquad
    \inferrule*[lab=$\with$E'Cong]
    {\Phi \mid \Psi \vdash M \ltdyn M' : \u B_1 \with \u B_2 \ltdyn \u B_1' \with \u B_2'}
    {\Phi \mid \Psi \vdash \pi' M \ltdyn \pi' M' : \u B_2 \ltdyn \u B_2'}
  \end{array}
  \]
  \caption{GTT Term Dynamism (Congruence Rules)}
  \label{fig:gtt-term-dynamism-ext-congruence}
\end{small}
\end{figure}
\fi

The final piece of GTT is the \emph{term dynamism} relation, a syntactic
judgement that is used for reasoning about the behavioral properties of
terms in GTT.  To a first approximation, term dynamism can be thought of
as syntactic rules for reasoning about \emph{contextual approximation}
relative to errors (not divergence), where $E \ltdyn E'$ means that
either $E$ errors or $E$ and $E'$ have the same result.  However, a key
idea in GTT is to consider a \emph{heterogeneous} term dynamism
judgement $E \ltdyn E' : T \ltdyn T'$ between terms $E : T$ and $E' :
T'$ where $T \ltdyn T'$---i.e. relating two terms at two different
types, where the type on the right is more dynamic than the type on the
right.  This judgement structure allows simple axioms characterizing the
behavior of casts~\cite{newlicata2018-fscd} and axiomatizes the
graduality property~\cite{refined}.
Here, we break this judgement up into
value dynamism $V \ltdyn V' : A \ltdyn A'$ and computation dynamism $M
\ltdyn M' : \u B \ltdyn \u B'$.  To support reasoning about open terms,
the full form of the judgements are
\begin{itemize}
\item $\Gamma \ltdyn \Gamma' \vdash V \ltdyn V' : A \ltdyn A'$ where
  $\Gamma \vdash V : A$ and $\Gamma' \vdash V' : A'$ and $\Gamma \ltdyn
  \Gamma'$ and $A \ltdyn A'$.
\item 
$\Gamma \ltdyn \Gamma' \mid \Delta \ltdyn \Delta' \vdash M \ltdyn M' :
  \u B \ltdyn \u B'$ where $\Gamma \mid \Delta \vdash M : \u B$ and
  $\Gamma' \mid \Delta' \vdash M' : \u B'$.
\end{itemize}
where $\Gamma \ltdyn \Gamma'$ is the pointwise lifting of value type
dynamism, and $\Delta \ltdyn \Delta'$ is the optional lifting of
computation type dynamism.  We write $\Phi : \Gamma \ltdyn \Gamma'$ and
$\Psi : \Delta \ltdyn \Delta'$ as syntax for ``zipped'' pairs of
contexts that are pointwise related by type dynamism, $x_1 \ltdyn x_1' : A_1 \ltdyn A_1', \ldots, x_n \ltdyn x_n' :
A_n \ltdyn A_n'$, which correctly suggests that one can substitute related
terms for related variables.  We will implicitly zip/unzip pairs of
contexts, and sometimes write e.g. $\Gamma \ltdyn \Gamma$ to mean $x
\ltdyn x : A \ltdyn A$ for all $x : A$ in $\Gamma$.

The main point of our rules for term dynamism is that \emph{there are no
  type-specific axioms in the definition} beyond the $\beta\eta$-axioms
that the type satisfies in a non-gradual language.  Thus, adding a new
type to gradual type theory does not require any a priori consideration
of its gradual behavior in the language definition; instead, this is
deduced as a theorem in the type theory.  The basic structural rules of
term dynamism in Figure~\ref{fig:gtt-term-dynamism-structural}\iflong\ and Figure~\ref{fig:gtt-term-dynamism-ext-congruence}\fi\ say that
it is reflexive and transitive (\textsc{TmDynRefl},
\textsc{TmDynTrans}), that assumptions can be used and substituted for
(\textsc{TmDynVar}, \textsc{TmDynValSubst}, \textsc{TmDynHole},
\textsc{TmDynStkSubst}), and that every term constructor is monotone
(the \textsc{Cong} rules).
\begin{longonly}
While we could add congruence rules for errors and casts,
these follow from the axioms characterizing their behavior below.  
\end{longonly}

We will often abbreviate a ``homogeneous'' term dynamism (where the type
or context dynamism is given by reflexivity) by writing e.g. $\Gamma
\vdash V \ltdyn V' : A \ltdyn A'$ for $\Gamma \ltdyn \Gamma \vdash V
\ltdyn V' : A \ltdyn A'$, or $\Phi \vdash V \ltdyn V' : A$ for $\Phi
\vdash V \ltdyn V' : A \ltdyn A$, and similarly for computations.  The
entirely homogeneous judgements $\Gamma \vdash V \ltdyn V' : A$ and
$\Gamma \mid \Delta \vdash M \ltdyn M' : \u B$ can be thought of as a
syntax for contextual error approximation (as we prove below).  We write
$V \equidyn V'$ (``equidynamism'') to mean term dynamism relations in
both directions (which requires that the types are also equidynamic
$\Gamma \equidyn \Gamma'$ and $A \ltdyn A'$), which is a syntactic
judgement for contextual equivalence.

\ifshort \vspace{-0.1in} \fi
\subsection{Term Dynamism: Axioms}

Finally, we assert some term dynamism axioms that describe the behavior
of programs.  The cast universal properties at the top of
Figure~\ref{fig:gtt-term-dyn-axioms}, following~\citet{newlicata2018-fscd}, say that
the defining property of an upcast from $A$ to $A'$ is that it is the
least dynamic term of type $A'$ that is more dynamic that $x$, a ``least
upper bound''.  That is, $\upcast{A}{A'}{x}$ is a term of type $A'$ that
is more dynamic that $x$ (the ``bound'' rule), and for any other term
$x'$ of type $A'$ that is more dynamic than $x$, $\upcast{A}{A'}{x}$ is
less dynamic than $x'$ (the ``best'' rule). Dually, the downcast
$\dncast{\u B}{\u B'}{\bullet}$ is the most dynamic term of type $\u B$
that is less dynamic than $\bullet$, a ``greatest lower bound''.  These
defining properties are entirely independent of the types involved in
the casts, and do not change as we add or remove types from the system.

We will show that these defining properties already imply that the shift
of the upcast $\upcast{A}{A'}$ forms a Galois connection/adjunction with
the downcast $\dncast{\u F A}{\u F A'}$, and dually for computation
types (see Theorem~\ref{thm:cast-adjunction}).  They do not
automatically form a Galois insertion/coreflection/embedding-projection
pair, but we can add this by the retract axioms in
Figure~\ref{fig:gtt-term-dyn-axioms}.  Together with other theorems of
GTT, these axioms imply that any upcast followed by its corresponding
downcast is the identity (see Theorem~\ref{thm:retract-general}).  This
specification of casts leaves some behavior undefined: for example, we
cannot prove in the theory that $\dncast{\u F 1+1}{\u
  F\dynv}\upcast{1}{\dynv}$ reduces to an error.  We choose this design
because there are valid models in which it is not an error, for instance
if the unique value of $1$ is represented as the boolean \texttt{true}. In
Section~\ref{sec:dynamic-type-interp}, we show additional axioms that
fully characterize the behavior of the dynamic type.

The type universal properties in the middle of the figure, which are
taken directly from CBPV, assert the $\beta\eta$ rules for each type as
(homogeneous) term equidynamisms---these should be understood as having,
as implicit premises, the typing conditions that make both sides type
check, in equidynamic contexts.

The final axioms assert properties of the run-time error term $\err$: it
is the least dynamic term (has the fewest behaviors) of every
computation type, and all complex stacks are strict in errors, because
stacks force their evaluation position.  We state the first axiom in a
heterogeneous way, which includes congruence $\Gamma \ltdyn \Gamma'
\vdash \err_{\u B} \ltdyn \err_{\u B'} : \u B \ltdyn \u B'$.

\begin{figure}
  \begin{small}

    \framebox{Cast Universal Properties}
    \medskip
    
    \begin{tabular}{c|c|c}
      & Bound & Best \\
      \hline
      Up &
      ${x : A \vdash x \ltdyn \upcast A {A'} x : A \ltdyn A'}$ & 
      ${x \ltdyn x' : A \ltdyn A' \vdash \upcast A {A'} x \ltdyn x' : A' }$\\

      \hline 
      Down & 
      ${\bullet : \u B' \vdash \dncast{\u B}{\u B'} \bullet \ltdyn \bullet : \u B \ltdyn \u B'}$
      & 
      ${\bullet \ltdyn \bullet : \u B \ltdyn \u B' \vdash \bullet \ltdyn \dncast{\u B}{\u B'} \bullet : \u B}$\\
    \end{tabular}

    \[
    \framebox{Retract Axiom}
    \quad
    \begin{array}{c}
      x : A \vdash \dncast{\u F A}{\u F \, \dynv}{(\ret{(\upcast{A}{\dynv}{x})})} \ltdyn \ret{x} : \u F A \\
      x : U \u B \vdash \dncast{\u B}{\dync}{(\force{(\upcast{U \u B}{U \dync}{x})})}  \ltdyn \force{x} : \u B \\
    \end{array}
    \]

    \bigskip
    
    \framebox{Type Universal Properties}
    \medskip

    \begin{tabular}{c|c|c}
      Type & $\beta$ & $\eta$\\

      \hline
      + & 
      $\begin{array}{l}
        {\caseofXthenYelseZ{\inl V}{x_1. E_1}{\ldots} \equidyn E_1[V/x_1]}\\
        {\caseofXthenYelseZ{\inr V}{\ldots}{x_2. E_2} \equidyn
          E_2[V/x_2]}
      \end{array}$
      &
      $\begin{array}{l}
        E \equidyn \caseofXthenYelseZ x {x_1. E[\inl x_1/x]}{x_2. E[\inr x_2/x]}\\
        \text{where } x:A_1+A_2 \vdash E : T
      \end{array}$
      \\
\iflong
      \hline
        $0$
      & $-$
      & $\begin{array}{l}
         E \equidyn \abort x\\
         \text{where } x:0 \vdash E : T
         \end{array}$ \\

      \hline
      $\times$ &
      ${\pmpairWtoXYinZ{(V_1,V_2)}{x_1}{x_2}{E} \equidyn E[V_1/x_1,V_2/x_2]}$
      &
      $\begin{array}{l}
        E \equidyn \pmpairWtoXYinZ x {x_1}{x_2} E[(x_1,x_2)/x] \\
        \text{where } {x : A_1 \times A_2 \vdash E : T}
      \end{array}$\\

      \hline
      $1$
      & $\pmpairWtoinZ{()}{E} \equidyn E$
      &
      $\begin{array}{l}
        {x : 1 \vdash E \equidyn \pmpairWtoinZ{x}{E[()/x]} : T}\\
        \text{where } {x : 1 \vdash E : T}
      \end{array}$\\
\fi
      \hline
      $U$
      & ${\force\thunk M \equidyn M}$
      & ${x : U \u B \vdash x \equidyn \thunk\force x : U \u B}$\\

      \hline
      $F$
      &
      ${\bindXtoYinZ {\ret V} x M \equidyn M[V/x]}$
      &
      ${\bullet : \u F A \vdash M \equidyn \bindXtoYinZ \bullet x M[\ret x/\bullet] : \u B}$\\

      \hline
      $\to$
      &
      ${(\lambda x:A. M)\,V \equidyn M[V/x]}$
      &
      ${\bullet : A \to \u B \vdash \bullet \equidyn \lambda x:A. \bullet\,x : A \to \u B}$\\

      \hline
      $\with$
      &
      $\begin{array}{l}
        {\pi \pair{M}{M'} \equidyn M}\\
        {\pi' \pair{M}{M'} \equidyn M'}
      \end{array}$
      & ${\bullet : \u B_1 \with \u B_2 \vdash \bullet \equidyn\pair{\pi \bullet}{\pi' \bullet} : \u B_1 \with \u B_2}$ \\
\iflong
      \hline
      $\top$
      & - 
      &
      ${\bullet : \top \vdash \bullet \equidyn \{\} : \top}$\\
\fi
    \end{tabular}

    \smallskip
    
    \begin{mathpar}
    \framebox{Error Properties}
    \qquad
    \inferrule*[lab=ErrBot]{ \Gamma' \mid \cdot \vdash M' : \u B' }
              { \Gamma \ltdyn \Gamma' \mid \cdot \vdash \err \ltdyn M' : \u B \ltdyn \u B'}
    \qquad
    \inferrule*[lab=StkStrict] { \Gamma \mid x : \u B \vdash S : \u B'}
               {\Gamma \mid \cdot \vdash S[\err_{\u B}] \ltdyn \err_{\u{B'}} : \u B'}
    \end{mathpar}
  \end{small}
  \caption{GTT Term Dynamism Axioms \ifshort($0$,$\times$,$1$,$\top$ in extended version)\fi}
  \label{fig:gtt-term-dyn-axioms}
\end{figure}

\section{Theorems in Gradual Type Theory}
\label{sec:theorems-in-gtt}

In this section, we show that the axiomatics of gradual type theory
determine most properties of casts, which shows that these behaviors of
casts are forced in any implementation of gradual typing satisfying
graduality and $\beta,\eta$.
\begin{shortonly}
  For proofs, see the extended version of the paper.
\end{shortonly}
  
\begin{longonly}
\subsection{Properties inherited from CBPV}

Because the GTT term equidynamism relation $\equidyn$ includes the
congruence and $\beta\eta$ axioms of the CBPV equational theory, types
inherit the universal properties they have there~\cite{levy03cbpvbook}.  We recall
some relevant definitions and facts.

\begin{definition}[Isomorphism] ~
  \begin{enumerate}
  \item We write $A \cong_v A'$ for a \emph{value isomorphism between
    $A$ and $A'$}, which consists of two complex values $x : A \vdash V'
    : A'$ and $x' : A' \vdash V : A$ such that $x : A \vdash V[V'/x']
    \equidyn x : A$ and $x' : A' \vdash V'[V/x] \equidyn x' : A'$.
  \item We write $\u B \cong_c \u B'$ for a \emph{computation
    isomorphism between $\u B$ and $\u B'$}, which consists of two
    complex stacks $\bullet : \u B \vdash S' : \u B'$ and $\bullet' : \u
    B' \vdash S : \u B$ such that $\bullet : \u B \vdash S[S'/x']
    \equidyn \bullet : \u B$ and $\bullet' : \u B' \vdash S'[S/\bullet]
    \equidyn \bullet' : \u B'$.
  \end{enumerate}
\end{definition}
Note that a value isomorphism is a strong condition, and an isomorphism
in call-by-value between types $A$ and $A'$ corresponds to a computation
isomorphism $\u F A \cong \u F A'$, and dually~\cite{levy17popl}.

\smallskip

\begin{lemma}[Initial objects] ~ \label{lem:initial}
  \begin{enumerate}
  \item For all (value or computation) types $T$, there exists a unique
    expression $x : 0 \vdash E : T$.
  \item For all $\u B$, there exists a unique stack $\bullet : \u F 0
    \vdash S : \u B$.
  \item
    0 is strictly initial: Suppose there is a type $A$ with a complex
    value $x : A \vdash V : 0$.  Then $V$ is an isomorphism $A \cong_v
    0$.

  \item $\u F 0$ is not provably \emph{strictly} initial among computation types.
  \end{enumerate}
\end{lemma}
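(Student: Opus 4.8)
The plan is to dispatch the four parts in order, using only the $\beta\eta$ axioms of Figure~\ref{fig:gtt-term-dyn-axioms} for parts (1)--(3), and a model for part (4).

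For (1): existence is immediate from the $0$-elimination rule, which gives $x : 0 \vdash \abort{x} : T$ (a complex value when $T$ is a value type, a computation when $T$ is a computation type). Uniqueness up to $\equidyn$ is precisely the $\eta$ axiom for $0$, which asserts $E \equidyn \abort{x}$ for every $x : 0 \vdash E : T$; hence any two such expressions are equidynamic. For (2): for existence take the stack $\bullet : \u F 0 \vdash \bindXtoYinZ{\bullet}{x}{\abort{x}} : \u B$, which is indeed a stack since the hole $\bullet$ sits in the evaluation position of the bind. For uniqueness, given any stack $\bullet : \u F 0 \vdash S : \u B$, use the $\eta$ rule for $\u F$ to rewrite $S \equidyn \bindXtoYinZ{\bullet}{x}{S[\ret{x}/\bullet]}$; the body $x : 0 \mid \cdot \vdash S[\ret{x}/\bullet] : \u B$ is an expression out of $0$, so by part (1) it is equidynamic to $\abort{x}$, whence $S \equidyn \bindXtoYinZ{\bullet}{x}{\abort{x}}$.

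For (3): let $x : A \vdash V : 0$ be the given complex value; the candidate inverse is the complex value $z : 0 \vdash \abort{z} : A$. One composite is $z : 0 \vdash V[\abort{z}/x] : 0$, and by the $\eta$ rule for $0$ (with $0$-variable $z$) both $V[\abort{z}/x] \equidyn \abort{z}$ and $z \equidyn \abort{z}$, so this composite is the identity. The other composite is $x : A \vdash V[\abort{z}/x][V/z]=\abort{V} : A$, and here is the one non-routine step: apply the $\eta$ rule for $0$ in the \emph{enlarged} context $x : A, z : 0$ to the expression $E := x$, obtaining $x : A, z : 0 \vdash x \equidyn \abort{z} : A$, and then substitute $V$ for $z$ via \textsc{TmDynValSubst} (applied in both directions, using the reflexive $x : A \vdash V \equidyn V : 0$) to conclude $x : A \vdash x \equidyn \abort{V} : A$. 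Thus $V$ together with $\abort{z}$ constitute a value isomorphism $A \cong_v 0$.

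For (4): since this is a non-derivability claim, it suffices to exhibit a model of GTT in which $\u F 0$ is not strictly initial, i.e.\ one with a computation type carrying a complex stack into $\u F 0$ that is not computation-isomorphic to it. Take $\u B = \u F 0 \with \u F 0$, with stack $\bullet : \u B \vdash \pi\bullet : \u F 0$. If $\u F 0$ were strictly initial, this would force $\u F 0 \with \u F 0 \cong_c \u F 0$; but an object isomorphic to an initial object is itself initial, and $\u F 0$ is initial by part (2), so $\u F 0 \with \u F 0$ would have a \emph{unique} stack to $\u F 0$, forcing $\pi\bullet \equidyn \pi'\bullet$ as stacks $\u F 0 \with \u F 0 \to \u F 0$. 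In any model of GTT in which contextual error approximation is nondegenerate --- for instance the contract models constructed in Section~\ref{sec:contract}, where divergence is available via recursive types and $\err \not\equidyn \diverge$ at $\u F 0$ --- plugging $\pair{\diverge}{\err}$ into the hole separates $\pi\bullet$ (which reduces to $\diverge$ by $\beta$ for $\with$) from $\pi'\bullet$ (which reduces to $\err$). Hence $\u F 0$ is not strictly initial in that model, so strict initiality of $\u F 0$ is not provable in GTT. The main obstacle is exactly this last part: parts (1)--(3) are direct consequences of the $\eta$ laws (the only subtlety being the context-enlargement trick in (3)), whereas a non-provability statement inherently requires stepping outside the syntax and pointing to a concrete model of GTT with enough structure --- here, divergence --- to witness the failure.
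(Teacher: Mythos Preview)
Your proofs of (1)--(3) match the paper's; your (3) is in fact slightly more streamlined, since the paper introduces an auxiliary variable $y$ and derives $x \equidyn \abort{z} \equidyn y$ in the context $x:A,\, y:A,\, z:0$ before substituting, whereas you go directly from $x : A,\, z : 0 \vdash x \equidyn \abort{z}$ and substitute $V$ for $z$.

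For (4) you take a genuinely different route. The paper appeals to a domain model in the style of \citet{newlicata2018-fscd} in which $\u F 0$ and $\top$ are isomorphic; since every computation type has a stack into $\top$, strict initiality of $\u F 0$ would then force \emph{every} computation type to be isomorphic to $\u F 0$, trivializing the stack category, which that model refutes. Your argument instead works in the operational model constructed later in this paper: strict initiality would make $\u F 0 \with \u F 0$ initial, forcing $\pi\bullet \equidyn \pi'\bullet$ and hence (by substitution and $\with\beta$) $\diverge \equidyn \err$ at $\u F 0$, which fails contextually. Your approach is more self-contained to the paper and pins the failure to a single concrete equation; the paper's approach is more conceptual, isolating the initial-equals-terminal phenomenon, and does not require exhibiting a diverging inhabitant of $\u F 0$. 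Both are valid.
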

\begin{proof}~
    \begin{enumerate}
    \item Take $E$ to be $x : 0 \vdash \abort{x} : T$.  Given any $E'$,
      we have $E \equidyn E'$ by the $\eta$ principle for $0$.

    \item Take $S$ to be $\bullet : \u F 0 \vdash
      \bindXtoYinZ{\bullet}{x}{\abort{x}} : \u B$.  Given another $S'$,
      by the $\eta$ principle for $F$ types, $S' \equidyn
      \bindXtoYinZ{\bullet}{x}{S'[\ret x]}$.  By congruence, to show $S
      \equidyn S'$, it suffices to show $x : 0 \vdash \abort{x} \equidyn
      S[\ret{x}] : \u B$, which is an instance of the previous part.
      
    \item 
      We have $y : 0 \vdash \abort{y} : A$.  The composite $y : 0 \vdash
      V[\abort{y}/x] : 0$ is equidynamic with $y$ by the $\eta$
      principle for $0$, which says that any two complex values with
      domain $0$ are equal.
  
      The composite $x : A \vdash \abort{V} : A$ is equidynamic
      with $x$, because 
  \[
  x : A, y : A, z : 0 \vdash x \equidyn \abort{z} \equidyn y : A
  \]
  where the first is by $\eta$ with $x : A, y : A, z : 0 \vdash E[z] :=
  x : A$ and the second with $x : 0, y : 0 \vdash E[z] := y : A$ (this
  depends on the fact that $0$ is ``distributive'', i.e. $\Gamma,x:0$
  has the universal property of $0$).  Substituting $\abort{V}$ for $y$
  and $V$ for $z$, we have $\abort{V} \equidyn x$.

\item $\u F 0$ is not \emph{strictly} initial among computation types,
  though.  Proof sketch: a domain model along the lines of
  \citep{newlicata2018-fscd} with only non-termination and type errors shows this,
  because there $\u F 0$ and $\top$ are isomorphic (the same object is
  both initial and terminal), so if $\u F 0$ were strictly initial (any
  type $\u B$ with a stack $\bullet : B \vdash S : \u F 0$ is isomorphic
  to $\u F 0$), then because every type $\u B$ has a stack to $\top$
  (terminal) and therefore $\u F 0$, every type would be isomorphic to
  $\top$/$\u F 0$---i.e. the stack category would be trivial.  But there
  are non-trivial computation types in this model.  
    \end{enumerate}
\end{proof}
  
  \begin{lemma}[Terminal objects] ~ \label{lem:terminal}
    \begin{enumerate}
    \item For any computation type $\u B$, there exists a unique stack
      $\bullet : \u B \vdash S : \top$.
    \item (In any context $\Gamma$,) there exists a unique complex value
      $V : U \top$.
    \item (In any context $\Gamma$,) there exists a unique complex value
      $V : 1$.
    \item $U \top \cong_v 1$
    \item $\top$ is not a strict terminal object.  
    \end{enumerate}
  \end{lemma}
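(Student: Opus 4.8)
The plan is to dispatch parts~(1)--(4) directly from the $\beta\eta$ axioms of Figure~\ref{fig:gtt-term-dyn-axioms}, in the same style as Lemma~\ref{lem:initial}, and to obtain part~(5) by dualizing the proof sketch of Lemma~\ref{lem:initial}.4.

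For~(1): existence is witnessed by $\emptypair$, which is a legal stack $\bullet : \u B \vdash \emptypair : \top$ since the stoup in $\top$I is arbitrary; uniqueness follows by substituting any stack $\bullet : \u B \vdash S : \top$ into the hole of the $\eta$ law $\bullet : \top \vdash \bullet \equidyn \emptypair : \top$ (via \textsc{TmDynStkSubst}), which gives $S \equidyn \emptypair$. For~(2): existence is $\thunk{\emptypair}$; for uniqueness, given $\Gamma \vdash V : U\top$, the $\eta$ law for $U$ gives $V \equidyn \thunk{\force V}$, and since $\force V : \top$ is a computation, part~(1) gives $\force V \equidyn \emptypair$, hence $V \equidyn \thunk{\emptypair}$. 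For~(3): existence is $()$; for uniqueness, given $\Gamma \vdash V : 1$, I would instantiate the $\eta$ law for $1$ twice --- once at the complex value $z : 1 \vdash z : 1$ and once at $z : 1 \vdash () : 1$ --- and substitute $V$ for $z$ in both, each instance giving an equidynamism with $\pmpairWtoinZ{V}{()}$, so that $V \equidyn ()$ by transitivity. For~(4): combine~(2) and~(3) --- the complex values $x : U\top \vdash () : 1$ and $x' : 1 \vdash \thunk{\emptypair} : U\top$ are mutually inverse, since the round-trip at $U\top$ is $\thunk{\emptypair}$, equidynamic to $x$ by~(2), and the round-trip at $1$ is $()$, equidynamic to $x'$ by~(3); this is exactly a value isomorphism $U\top \cong_v 1$.

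For~(5): I would exhibit a model of GTT in which $\sem{\top}$ is not a strict terminal object, dualizing the proof sketch of Lemma~\ref{lem:initial}.4. Take the domain-theoretic model along the lines of \citet{newlicata2018-fscd} whose only effects are nontermination and type error, with the two identified at the least element, so that $\sem{\top}$ coincides with $\sem{\u F 0}$: a zero object of the category of computation types. (The syntactic facts that every computation type has a stack to $\top$, part~(1), and that $\u F 0$ has a stack to every computation type, Lemma~\ref{lem:initial}.2, are the shadows of this object being simultaneously terminal and initial.) In a nontrivial category a zero object is never strictly terminal: being initial, it maps to every object, so strict terminality would force every object to be isomorphic to it. Since the model has nontrivial computation types (e.g.\ $\sem{\u F 1}$, which has more than one point), $\sem{\top}$ is not strictly terminal there, so $\top$ is not provably a strict terminal object.

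Parts~(1)--(4) are routine $\eta$-manipulations; the only wrinkle is the double use of the $1$-$\eta$ rule in~(3), forced because that rule is phrased for a term with a free variable of type $1$ rather than as outright terminality. The real work is in~(5), which needs a genuine model of GTT --- one validating the cast universal properties, the retract axioms, the error axioms, and all the $\beta\eta$ equidynamisms --- in which $\u F 0 \cong \top$; this is precisely the model already demanded by Lemma~\ref{lem:initial}.4, so no new model-building is required, and the remaining step is just the categorical observation about zero objects above.
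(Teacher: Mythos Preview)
Your proof is correct and matches the paper's almost exactly: the $\eta$-manipulations for (1)--(4) and the domain-model/zero-object argument for (5) are the same. The only quibble is that in (2) you invoke part~(1) for the computation $\force V$, which has empty stoup rather than a hole $\bullet : \u B$; the paper instead applies the $\top$-$\eta$ law directly via stack substitution, but the underlying argument is identical.
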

  \begin{proof} ~
    \begin{enumerate}
    \item Take $S = \{\}$.  The $\eta$ rule for $\top$, $\bullet : \top
      \vdash \bullet \equidyn \{\} : \top$, under the substitution of
      $\bullet : \u B \vdash S : \top$, gives $S \equidyn
      \{\}[S/\bullet] = \{\}$.  

    \item Take $V = \thunk{\{\}}$.  We have $x : U \top \vdash x
      \equidyn \thunk{\force{x}} \equidyn \thunk{\{\}} : U \top$ by the
      $\eta$ rules for $U$ and $\top$.

    \item Take $V = ()$.  By $\eta$ for $1$ with $x : 1 \vdash E[x] :=
      () : 1$, we have $x : 1 \vdash () \equidyn \pmmuXtoYinZ{x}{()} :
      1$.  By $\eta$ fro $1$ with $x : 1 \vdash E[x] := x : 1$, we have
      $x : 1 \vdash x \equidyn \pmmuXtoYinZ{x}{()}$.  Therefore $x : 1
      \vdash x \equidyn () : 1$.

    \item We have maps $x : U \top \vdash () : 1$ and $x : 1 \vdash
      \thunk{\{\}} : U \top$.  The composite on $1$ is the identity by
      the previous part.  The composite on $\top$ is the identity by
      part (2).

    \item Proof sketch: As above, there is a domain model with
      $\top \cong \u F 0$, so if $\top$ were a strict terminal object,
      then $\u F 0$ would be too.  But $\u F 0$ is also initial, so it
      has a map to every type, and therefore every type would be
      isomorphic to $\u F 0$ and $\top$.  But there are non-trivial
      computation types in the model.  
    \end{enumerate}
  \end{proof}
\end{longonly}

\begin{longonly}
\subsection{Derived Cast Rules}

As noted above, monotonicity of type dynamism for $U$ and $\u F$ means
that we have the following as instances of the general cast rules:
\begin{lemma}[Shifted Casts]
  The following are derivable:
\begin{small}
  \begin{mathpar}
      \inferrule
    {\Gamma \pipe \Delta \vdash M : \u F A' \and A \ltdyn A'}
    {\Gamma \pipe \Delta \vdash \dncast {\u F A} {\u F A'} M : \u F A}

    \inferrule
    {\Gamma \vdash V : U \u B \and \u B \ltdyn \u B'}
    {\Gamma \vdash \upcast {U \u B} {U \u B'} V : U \u B'}
  \end{mathpar}
\end{small}
\end{lemma}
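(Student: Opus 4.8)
The plan is to observe that this lemma is an immediate consequence of the monotonicity rules for type dynamism (\textsc{$F$Mon} and \textsc{$U$Mon} from Figure~\ref{fig:gtt-type-dynamism}) together with the general cast typing rules \textsc{UpCast} and \textsc{DnCast} from Figure~\ref{fig:gtt-syntax-and-terms}. There is no real obstacle here; the only ``work'' is lining up the premises correctly, and the point of stating it is to record these shifted casts as named derived rules so that later developments can invoke $\dncast{\u F A}{\u F A'}{(-)}$ and $\upcast{U \u B}{U \u B'}{(-)}$ directly.

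For the first rule, I would argue: from the hypothesis $A \ltdyn A'$, the rule \textsc{$F$Mon} gives $\u F A \ltdyn \u F A'$; combining this with the hypothesis $\Gamma \mid \Delta \vdash M : \u F A'$ via \textsc{DnCast} (instantiated with $\u B := \u F A$ and $\u B' := \u F A'$) yields $\Gamma \mid \Delta \vdash \dncast{\u F A}{\u F A'}{M} : \u F A$, as required. For the second rule, symmetrically: from $\u B \ltdyn \u B'$, the rule \textsc{$U$Mon} gives $U \u B \ltdyn U \u B'$; combining this with $\Gamma \vdash V : U \u B$ via \textsc{UpCast} (instantiated with $A := U \u B$ and $A' := U \u B'$) yields $\Gamma \vdash \upcast{U \u B}{U \u B'}{V} : U \u B'$.

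In short, the derivations are just the two-step composites
\[
\inferrule*{\inferrule*{A \ltdyn A'}{\u F A \ltdyn \u F A'} \\ \Gamma \mid \Delta \vdash M : \u F A'}{\Gamma \mid \Delta \vdash \dncast{\u F A}{\u F A'}{M} : \u F A}
\qquad
\inferrule*{\Gamma \vdash V : U \u B \\ \inferrule*{\u B \ltdyn \u B'}{U \u B \ltdyn U \u B'}}{\Gamma \vdash \upcast{U \u B}{U \u B'}{V} : U \u B'}
\]
and no case analysis or induction is needed. The only thing to be mildly careful about is that the stoup $\Delta$ is carried through unchanged in the first rule, which is exactly what \textsc{DnCast} does (it propagates $\Delta$ from premise to conclusion), so the derived rule is indeed valid both for computations ($\Delta = \cdot$) and for stacks ($\Delta = \bullet : \u B''$). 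I do not anticipate any genuine difficulty; if anything, the ``hard part'' is purely bookkeeping: making sure the instantiations of the generic $A, A', \u B, \u B'$ metavariables in the cast rules match the shifted types $\u F A, \u F A', U \u B, U \u B'$.
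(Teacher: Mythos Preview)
Your proposal is correct and takes essentially the same approach as the paper: apply the congruence/monotonicity rules \textsc{$F$Mon} and \textsc{$U$Mon} to obtain $\u F A \ltdyn \u F A'$ and $U\u B \ltdyn U\u B'$, then invoke the general \textsc{DnCast} and \textsc{UpCast} rules. The paper's proof says exactly this in one sentence, while you have spelled out the derivations explicitly.
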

\begin{longproof}
  They are instances of the general upcast and downcast rules, using the
  fact that $U$ and $\u F$ are congruences for type dynamism, so in the
  first rule $\u F A \ltdyn \u F A'$, and in the second, $U \u B \ltdyn
  U \u B'$.
\end{longproof}

The cast universal properties in Figure~\ref{fig:gtt-term-dyn-axioms}
imply the following seemingly more general rules for reasoning about
casts:
\begin{lemma}[Upcast and downcast left and right rules] \label{lem:cast-left-right}
  The following are derivable:
\begin{small}
  \begin{mathpar}
    \inferrule*[Right=UpR]
    {A \ltdyn A' \and \Phi \vdash V \ltdyn V' : A \ltdyn A'}
    {\Phi \vdash V \ltdyn \upcast {A'} {A''} {V'} : A \ltdyn A''}

    \inferrule*[Right=UpL]
    {\Phi \vdash V \ltdyn V'' : A \ltdyn A''}
    {\Phi \vdash \upcast A {A'} V \ltdyn V'' : A' \ltdyn A'' }

    \inferrule*[Right=DnL]
    { \u B' \ltdyn \u B'' \and \Phi \mid \Psi \vdash M' \ltdyn M'' : \u B' \ltdyn \u B''}
    { \Phi \mid \Psi \vdash \dncast{\u B}{\u B'} M' \ltdyn M'' : \u B \ltdyn \u B''}

    \inferrule*[Right=DnR]
    { \Phi \mid \Psi \vdash M \ltdyn M'' : B \ltdyn B'' }
    { \Phi \mid \Psi  \vdash M \ltdyn \dncast{\u B'}{\u B''} M'' : \u B \ltdyn \u B''} 
  \end{mathpar}
\end{small}
\end{lemma}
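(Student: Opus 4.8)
The plan is to derive each of the four rules from the corresponding cast universal property (the ``bound''/``best'' rules at the top of Figure~\ref{fig:gtt-term-dyn-axioms}) by composing with the hypothesis via transitivity (\textsc{TmDynTrans}) and substitution (\textsc{TmDynValSubst}, \textsc{TmDynStkSubst}). Concretely, for \textsc{UpR}: the ``bound'' rule gives $x : A \vdash x \ltdyn \upcast{A}{A'}{x} : A \ltdyn A'$, and by \textsc{$\upcast{}{}$ICong} (or directly substituting into the bound rule, using monotonicity/congruence of the upcast term constructor) from $\Phi \vdash V \ltdyn V' : A \ltdyn A'$ we get $\Phi \vdash V' \ltdyn \upcast{A'}{A''}{V'} : A' \ltdyn A''$; then \textsc{TmDynTrans} composed with the hypothesis $\Phi \vdash V \ltdyn V' : A \ltdyn A'$ yields $\Phi \vdash V \ltdyn \upcast{A'}{A''}{V'} : A \ltdyn A''$, using transitivity of type dynamism $A \ltdyn A' \ltdyn A''$ for the type indices.

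For \textsc{UpL}, I would instead use the ``best'' rule $x \ltdyn x' : A \ltdyn A' \vdash \upcast{A}{A'}{x} \ltdyn x' : A'$: substituting $V \ltdyn V''$ for $x \ltdyn x'$ along \textsc{TmDynValSubst} (noting $V : A$ and $V'' : A''$ with $A \ltdyn A' \ltdyn A''$, so the substitution typechecks heterogeneously) gives exactly $\Phi \vdash \upcast{A}{A'}{V} \ltdyn V'' : A' \ltdyn A''$. The two downcast rules \textsc{DnL} and \textsc{DnR} are handled dually, using the ``bound'' rule ${\bullet : \u B' \vdash \dncast{\u B}{\u B'}{\bullet} \ltdyn \bullet : \u B \ltdyn \u B'}$ for \textsc{DnL} (substitute $M' \ltdyn M''$ for $\bullet \ltdyn \bullet$ via \textsc{TmDynStkSubst}, then compose) and the ``best'' rule ${\bullet \ltdyn \bullet : \u B \ltdyn \u B' \vdash \bullet \ltdyn \dncast{\u B}{\u B'}{\bullet} : \u B}$ for \textsc{DnR} (substitute, then use \textsc{TmDynTrans} with the hypothesis and transitivity $\u B \ltdyn \u B' \ltdyn \u B''$).

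The only mild subtlety — and the closest thing to an obstacle — is bookkeeping of the heterogeneous type indices and contexts: one must check that at each use of transitivity the middle type is consistently $A'$ (resp.\ $\u B'$) and that the substitution lemmas apply to the heterogeneous judgement forms with the stated type-dynamism side conditions, rather than only the homogeneous abbreviations. This is routine given the structural rules already stated, so each of the four derivations is a two- or three-line composition; no new axioms are needed.
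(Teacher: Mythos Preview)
Your approach---bound/best axioms plus substitution and transitivity---is exactly the paper's, and \textsc{UpR} and \textsc{DnL} go through as you sketch (substitute the right-hand term homogeneously into the bound axiom, then use transitivity with the premise).

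There is one concrete slip in your handling of \textsc{UpL} (and dually \textsc{DnR}). You propose to substitute $V \ltdyn V'' : A \ltdyn A''$ directly into the best axiom $x \ltdyn x' : A \ltdyn A' \vdash \upcast{A}{A'}{x} \ltdyn x' : A'$, claiming this ``typechecks heterogeneously'' because $A' \ltdyn A''$. It does not: \textsc{TmDynValSubst} requires the types of the substituted pair to match the context's declared types \emph{exactly}, and here the right variable has type $A'$, not $A''$. The paper's fix is precisely the ``bookkeeping'' you flagged but did not carry out: first compose the best axiom with the variable rule $x' \ltdyn x'' : A' \ltdyn A'' \vdash x' \ltdyn x'' : A' \ltdyn A''$ via \textsc{TmDynTrans} to obtain
\[
x \ltdyn x'' : A \ltdyn A'' \vdash \upcast{A}{A'}{x} \ltdyn x'' : A' \ltdyn A'',
\]
and \emph{then} substitute the premise $V \ltdyn V'' : A \ltdyn A''$. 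The \textsc{DnR} case is handled by the dual maneuver, composing the identity $\bullet \ltdyn \bullet' : \u B \ltdyn \u B'$ with the best axiom at $\u B' \ltdyn \u B''$ before substituting. With this small correction your proof matches the paper's.
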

  In sequent calculus terminology, an upcast is left-invertible, while a
downcast is right-invertible, in the sense that any time we have a
conclusion with a upcast on the left/downcast on the right, we can
without loss of generality apply these rules (this comes from upcasts
and downcasts forming a Galois connection).  We write the $A \ltdyn A'$
and $\u B' \ltdyn \u B''$ premises on the non-invertible rules to
emphasize that the premise is not necessarily well-formed given that the
conclusion is.

\begin{longproof}
For upcast left, substitute $V'$ into the axiom $x \ltdyn
\upcast{A'}{A''}{x} : A' \ltdyn A''$ to get $V' \ltdyn
\upcast{A'}{A''}{V'}$, and then use transitivity with the premise.

For upcast right, by transitivity of
\[
x \ltdyn x' : A \ltdyn A' \vdash \upcast{A}{A'}{x} \ltdyn x' : A' \ltdyn A' \qquad
x' \ltdyn x'' : A' \ltdyn A'' \vdash x' \ltdyn x'' : A' \ltdyn A''
\]
we have
\[
x \ltdyn x'' : A \ltdyn A'' \vdash \upcast{A}{A'}{x} \ltdyn x'' : A' \ltdyn A''
\]
Substituting the premise into this gives the conclusion.  

For downcast left, substituting $M'$ into the axiom $\dncast{\u B}{\u
  B'}{\bullet} \ltdyn \bullet : \u B \ltdyn \u B'$ gives $\dncast{\u
  B}{\u B'}{M} \ltdyn M$, and then transitivity with the premise gives
the result.

For downcast right, transitivity of
\[
\bullet \ltdyn \bullet' : \u B \ltdyn \u B' \vdash \bullet \ltdyn \bullet' : \u B \ltdyn \u B' \quad
\bullet' \ltdyn \bullet'' : \u B' \ltdyn \u B'' \vdash \bullet' \ltdyn \dncast{\u B'}{\u B''}{\bullet''}
\]
gives $\bullet \ltdyn \bullet'' : \u B \ltdyn \u B'' \vdash \bullet \ltdyn \dncast{\u B'}{\u B''}{\bullet''}$,
and then substitution of the premise into this gives the conclusion.
\end{longproof}

Though we did not include congruence rules for casts in
Figure~\ifshort\ref{fig:gtt-term-dynamism-structural}\else\ref{fig:gtt-term-dynamism-ext-congruence}\fi, it is derivable:
\begin{lemma}[Cast congruence rules] \label{lem:cast-congruence}
  The following congruence rules for casts are derivable:
\begin{small}
  \begin{mathpar}
    \inferrule
        { A \ltdyn A' \and A' \ltdyn A''}
        { x \ltdyn x' : A \ltdyn A' \vdash \upcast{A}{A''}{x} \ltdyn \upcast{A'}{A''}{x'} : A''}
    \and
    \inferrule
        { A \ltdyn A' \and A' \ltdyn A''}
        { x : A \vdash \upcast{A}{A'}{x} \ltdyn \upcast{A}{A''}{x} : A' \ltdyn A''}

    \inferrule
        { \u B \ltdyn \u B' \and \u B' \ltdyn \u B''}
        { \bullet' \ltdyn \bullet'' : \u B' \ltdyn \u B'' \vdash \dncast{\u B}{\u B'}{\bullet'} \ltdyn \dncast{\u B}{\u B''}{\bullet''} : \u B}
    \and
    \inferrule
        { \u B \ltdyn \u B' \and \u B' \ltdyn \u B''}
        { \bullet'' : \u B'' \vdash \dncast{\u B}{\u B''}{\bullet''}\ltdyn \dncast{\u B'}{\u B''}{\bullet''} : \u B \ltdyn \u B'}
  \end{mathpar}
\end{small}
\end{lemma}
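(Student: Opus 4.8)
The plan is to derive all four rules mechanically from the left/right rules of Lemma~\ref{lem:cast-left-right} together with the cast universal properties of Figure~\ref{fig:gtt-term-dyn-axioms}, using transitivity of type dynamism only to observe that the intermediate judgements are well formed (e.g.\ $A \ltdyn A''$ from $A \ltdyn A'$ and $A' \ltdyn A''$, and dually for computation types). Each derivation is two steps. For the first rule, to prove $x \ltdyn x' : A \ltdyn A' \vdash \upcast{A}{A''}{x} \ltdyn \upcast{A'}{A''}{x'} : A''$, I would apply \textsc{UpL} to the upcast on the left, reducing the goal to $x \ltdyn \upcast{A'}{A''}{x'} : A \ltdyn A''$, and then apply \textsc{UpR} to the upcast on the right, reducing the goal to exactly the hypothesis $x \ltdyn x' : A \ltdyn A'$ (whose accompanying side condition $A \ltdyn A'$ is also available). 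For the second rule, to prove $x : A \vdash \upcast{A}{A'}{x} \ltdyn \upcast{A}{A''}{x} : A' \ltdyn A''$, I would apply \textsc{UpL} to the left upcast, reducing the goal to $x \ltdyn \upcast{A}{A''}{x} : A \ltdyn A''$; this is precisely the ``bound'' axiom for the upcast $\upcast{A}{A''}$.

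The two downcast rules would be proved dually. For the third rule, to prove $\bullet' \ltdyn \bullet'' : \u B' \ltdyn \u B'' \vdash \dncast{\u B}{\u B'}{\bullet'} \ltdyn \dncast{\u B}{\u B''}{\bullet''} : \u B$, I would apply \textsc{DnR} to the downcast on the right, reducing the goal to $\dncast{\u B}{\u B'}{\bullet'} \ltdyn \bullet'' : \u B \ltdyn \u B''$, and then apply \textsc{DnL} to the downcast on the left, reducing the goal to the hypothesis $\bullet' \ltdyn \bullet'' : \u B' \ltdyn \u B''$ (whose side condition $\u B' \ltdyn \u B''$ is available). For the fourth rule, to prove $\bullet'' : \u B'' \vdash \dncast{\u B}{\u B''}{\bullet''} \ltdyn \dncast{\u B'}{\u B''}{\bullet''} : \u B \ltdyn \u B'$, I would apply \textsc{DnR} to the right downcast, reducing the goal to $\dncast{\u B}{\u B''}{\bullet''} \ltdyn \bullet'' : \u B \ltdyn \u B''$; this is exactly the ``bound'' axiom for the downcast $\dncast{\u B}{\u B''}$.

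I do not expect any real obstacle: the argument uses no type-specific reasoning and no axioms beyond those already packaged into Lemma~\ref{lem:cast-left-right}. The only care needed is bookkeeping --- instantiating each of \textsc{UpL}, \textsc{UpR}, \textsc{DnL}, \textsc{DnR} at the intended cast and the intended related pair of types, keeping the direction of every type-dynamism judgement straight, and noting that the genuinely non-derivable side conditions ($A \ltdyn A'$ for \textsc{UpR}, $\u B' \ltdyn \u B''$ for \textsc{DnL}) come directly from the hypotheses of the lemma. Equivalently, one could inline the proofs of the left/right rules and present each case as a single use of the relevant ``bound'' axiom composed with \textsc{TmDynTrans}.
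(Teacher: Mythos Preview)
Your proposal is correct and takes essentially the same approach as the paper: apply the invertible rule (\textsc{UpL} or \textsc{DnR}) first, then finish with the non-invertible rule (\textsc{UpR} or \textsc{DnL}). The only cosmetic difference is that for the second and fourth rules you cite the ``bound'' axiom directly, whereas the paper phrases the final step as another application of \textsc{UpR}/\textsc{DnL} with a reflexive premise; these are the same move.
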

\begin{longproof}
In all cases, uses the invertible and then non-invertible rule for the
cast.  For the first rule, by upcast left, it suffices to show $x \ltdyn
x' : A \ltdyn A' \vdash {x} \ltdyn \upcast{A'}{A''}{x'} : A \ltdyn A''$
which is true by upcast right, using $x \ltdyn x'$ in the premise.

For the second, by upcast left, it suffices to show 
$x : A \vdash {x} \ltdyn \upcast{A}{A''}{x} : A \ltdyn A''$,
which is true by upcast right.

For the third, by downcast right, it suffices to show 
$\bullet' \ltdyn \bullet'' : \u B' \ltdyn \u B'' \vdash \dncast{\u B}{\u B'}{\bullet'} \ltdyn {\bullet''} : \u B \ltdyn \u B''$,
which is true by downcast left, using $\bullet' \ltdyn \bullet''$ in the premise.

For the fourth, by downcast right, it suffices show
$\dncast{\u B}{\u B''}{\bullet''}\ltdyn {\bullet''} : \u B \ltdyn \u B''$,
which is true by downcast left.
\end{longproof}
\end{longonly}

\subsection{Type-generic Properties of Casts}

The universal property axioms for upcasts and downcasts in
Figure~\ref{fig:gtt-term-dyn-axioms} define them \emph{uniquely} up to
equidynamism ($\equidyn$): anything with the same property 
is behaviorally equivalent to a cast.

\begin{theorem}[Specification for Casts is a Universal Property]
  ~ \label{thm:casts-unique}
  \begin{enumerate}
  \item 
  If $A \ltdyn A'$ and $x : A \vdash V : A'$ is a complex value such that
  ${x : A \vdash x \ltdyn V : A \ltdyn A'}$
  and
  ${x \ltdyn x' : A \ltdyn A' \vdash V \ltdyn x' : A'}$
  then $x : A \vdash V \equidyn \upcast{A}{A'}{x} : A'$.

  \item 
  If $\u B \ltdyn \u B'$ and $\bullet' : \u B' \vdash S :
  \u B$ is a complex stack such that
  ${\bullet' : \u B' \vdash S \ltdyn \bullet' : \u B \ltdyn \u B'}$ and
  ${\bullet \ltdyn \bullet' : \u B \ltdyn \u B' \vdash \bullet \ltdyn S : \u B}$
  then $\bullet' : \u B' \vdash S \equidyn \dncast{\u B}{\u B'}\bullet' : \u B$
  \end{enumerate}
\end{theorem}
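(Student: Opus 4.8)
The plan is to prove each part by an ``antisymmetry'' argument: show that the given term $V$ (resp.\ stack $S$) is equidynamic with the cast by establishing term dynamism in both directions, using the universal property (Bound/Best) of the cast together with the hypotheses on $V$ (resp.\ $S$). Both directions should follow from the left/right rules for casts derived in Lemma~\ref{lem:cast-left-right}, so the real content is just feeding the hypotheses into those rules; this is why I expect the argument to be short once the setup is in place.

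For part (1), I would argue as follows. To show $x : A \vdash \upcast{A}{A'}{x} \ltdyn V : A'$, I would use the ``best'' property of the upcast in the form of rule \textsc{UpL}: it suffices to show $x \ltdyn x : A \ltdyn A \vdash x \ltdyn V : A \ltdyn A'$, which is exactly the first hypothesis ${x : A \vdash x \ltdyn V : A \ltdyn A'}$. (Here I am using that $x : A \vdash x \ltdyn V : A \ltdyn A'$ is the same judgement as the homogeneous-context form, after the implicit zipping.) Conversely, to show $x : A \vdash V \ltdyn \upcast{A}{A'}{x} : A'$, I would use rule \textsc{UpR}: it suffices to exhibit a derivation of $x \ltdyn x' : A \ltdyn A' \vdash V \ltdyn x' : A'$ with the upcast reintroduced on the right, i.e.\ $V \ltdyn \upcast{A}{A'}{x'}$ --- no wait, more carefully: \textsc{UpR} lets me reduce $V \ltdyn \upcast{A}{A'}{x}$ to $V \ltdyn x$ at type $A' \ltdyn A'$, but $V$ lives over $x : A$; the cleaner route is to instantiate the second hypothesis ${x \ltdyn x' : A \ltdyn A' \vdash V \ltdyn x' : A'}$ by substituting $x' := \upcast{A}{A'}{x}$, which is legitimate because the ``bound'' axiom gives $x : A \vdash x \ltdyn \upcast{A}{A'}{x} : A \ltdyn A'$, so \textsc{TmDynValSubst} applies and yields $x : A \vdash V \ltdyn \upcast{A}{A'}{x} : A'$. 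Combining the two directions via the definition of $\equidyn$ gives the claim.

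Part (2) is exactly dual, swapping upcasts for downcasts, values for stacks, \textsc{UpL}/\textsc{UpR} for \textsc{DnR}/\textsc{DnL}, and the ``bound'' axiom for the upcast for the corresponding ``bound'' axiom for the downcast (${\bullet : \u B' \vdash \dncast{\u B}{\u B'}\bullet \ltdyn \bullet : \u B \ltdyn \u B'}$). Concretely: $\bullet' : \u B' \vdash \dncast{\u B}{\u B'}\bullet' \ltdyn S : \u B$ follows from rule \textsc{DnL} applied to the first hypothesis $S \ltdyn \bullet' : \u B \ltdyn \u B'$, and $\bullet' : \u B' \vdash S \ltdyn \dncast{\u B}{\u B'}\bullet' : \u B$ follows by substituting $\bullet := \dncast{\u B}{\u B'}\bullet'$ into the second hypothesis ${\bullet \ltdyn \bullet' : \u B \ltdyn \u B' \vdash \bullet \ltdyn S : \u B}$, using \textsc{TmDynStkSubst} and the downcast ``bound'' axiom to justify the substitution.

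The main obstacle, such as it is, is bookkeeping rather than conceptual: I need to be careful that the heterogeneous-vs-homogeneous context conventions line up when instantiating the ``best'' rules (which are stated with a genuinely heterogeneous variable $x \ltdyn x'$), and that the substitution rules \textsc{TmDynValSubst}/\textsc{TmDynStkSubst} are applied with the right reflexivity instances so that the resulting judgement is the homogeneous $V \equidyn \upcast{A}{A'}{x} : A'$ rather than a spuriously heterogeneous one. Once those conventions are pinned down, each direction is a single application of a derived cast rule plus one substitution, and no induction on type structure is needed --- which is the point: the specification pins the casts down uniquely regardless of which types $A \ltdyn A'$ or $\u B \ltdyn \u B'$ we picked.
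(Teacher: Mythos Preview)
Your approach is exactly the paper's: each direction of $\equidyn$ comes from one of the two cast derived rules (\textsc{UpL}/\textsc{DnR}) or from substituting the cast's bound axiom into the ``best'' hypothesis. Part (1) is correct as written.

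In part (2), however, you have swapped which argument proves which direction. Substituting $\bullet := \dncast{\u B}{\u B'}\bullet'$ into the second hypothesis ${\bullet \ltdyn \bullet' : \u B \ltdyn \u B' \vdash \bullet \ltdyn S : \u B}$ yields $\dncast{\u B}{\u B'}\bullet' \ltdyn S$, not $S \ltdyn \dncast{\u B}{\u B'}\bullet'$ as you claim. Conversely, the first hypothesis ${S \ltdyn \bullet' : \u B \ltdyn \u B'}$ feeds into \textsc{DnR} (not \textsc{DnL}) to give $S \ltdyn \dncast{\u B}{\u B'}\bullet' : \u B$; \textsc{DnL} would introduce a downcast on $S$, not on $\bullet'$. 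So both halves of your part (2) are the right arguments, just attributed to the wrong conclusions and with the rule name off. Once you swap them back, your proof is identical to the paper's.
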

\begin{longproof}
  For the first part, to show $\upcast{A}{A'}{x} \ltdyn V$, by upcast
  left, it suffices to show $x \ltdyn V : A \ltdyn A'$, which is one
  assumption.  To show $V \ltdyn \upcast{A}{A'}{x}$, we substitute into
  the second assumption with $x \ltdyn \upcast{A}{A'}{x} : A \ltdyn A'$,
  which is true by upcast right.

  For the second part, to show $S \ltdyn \dncast{\u B}{\u
    B'}{\bullet'}$, by downcast right, it suffices to show $S \ltdyn
  \bullet' : \u B \ltdyn \u B'$, which is one of the assumptions.  To
  show $\dncast{\u B}{\u B'}{\bullet'} \ltdyn S$, we substitute into the
  second assumption with $\dncast{\u B}{\u B'}{\bullet'} \ltdyn
  \bullet'$, which is true by downcast left.
\end{longproof}

 Casts satisfy an identity and composition law:
\begin{theorem}[Casts (de)composition] \label{thm:decomposition}
  For any $A \ltdyn A' \ltdyn A''$ and $\u B \ltdyn \u B' \ltdyn \u B''$:
  \begin{enumerate}
  \item $x : A \vdash \upcast A A x \equidyn x : A$
  \item $x : A \vdash \upcast A {A''}x \equidyn \upcast{A'}{A''}\upcast A{A'} x : A''$
  \item $\bullet : \u B \vdash \dncast {\u B}{\u B} \bullet \equidyn \bullet : \u B$
  \item $\bullet : \u B'' \vdash \dncast {\u B}{\u B''} \bullet \equidyn
    \dncast{\u B}{\u B'}{(\dncast{\u B'}{\u B''} \bullet)} : \u B \ltdyn
    \u B$
  \end{enumerate}
\end{theorem}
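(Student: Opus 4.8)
The plan is to prove each of the four parts by exhibiting the proposed right-hand side as a complex value (resp.\ complex stack) satisfying the ``bound'' and ``best'' conditions of Theorem~\ref{thm:casts-unique}, and then concluding that it is equidynamic to the cast. For parts (1) and (2) this means verifying that the candidate $V$ with $x : A \vdash V : A''$ satisfies $x : A \vdash x \ltdyn V : A \ltdyn A''$ and, for $x \ltdyn x'' : A \ltdyn A''$, also $V \ltdyn x'' : A''$; parts (3) and (4) are the exact duals, using the stack variable $\bullet$ and Theorem~\ref{thm:casts-unique}(2).

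For part (1) the candidate is $x$ itself: the bound condition $x : A \vdash x \ltdyn x : A \ltdyn A$ is \textsc{TmDynRefl}, and the best condition $x \ltdyn x' : A \ltdyn A \vdash x \ltdyn x' : A$ is \textsc{TmDynVar}, so Theorem~\ref{thm:casts-unique}(1) gives $x \equidyn \upcast A A x$. Part (3) is the dual, with $\bullet$, \textsc{TmDynHole}, and Theorem~\ref{thm:casts-unique}(2).

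For part (2) the candidate is $\upcast{A'}{A''}\upcast A{A'}x$. The bound condition follows from two instances of the bound axiom for upcasts chained by \textsc{TmDynTrans}: $x \ltdyn \upcast A{A'}x : A \ltdyn A'$ directly, and $\upcast A{A'}x \ltdyn \upcast{A'}{A''}\upcast A{A'}x : A' \ltdyn A''$ by substituting $\upcast A{A'}x$ into the bound axiom for $A' \ltdyn A''$; transitivity then lands at $A \ltdyn A''$. The best condition, $\upcast{A'}{A''}\upcast A{A'}x \ltdyn x'' : A''$ given $x \ltdyn x'' : A \ltdyn A''$, follows by applying \textsc{UpL} from Lemma~\ref{lem:cast-left-right} twice, reducing the goal successively to $\upcast A{A'}x \ltdyn x'' : A' \ltdyn A''$ and then to the hypothesis $x \ltdyn x'' : A \ltdyn A''$. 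Theorem~\ref{thm:casts-unique}(1) then yields part (2). Part (4) is the precise dual: its bound condition chains two instances of the bound axiom for downcasts by \textsc{TmDynTrans}, and its best condition applies \textsc{DnR} twice before invoking Theorem~\ref{thm:casts-unique}(2).

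I do not expect any real obstacle here; the argument is essentially bookkeeping. The one place to be careful is keeping the heterogeneous type-dynamism indices straight --- e.g.\ recognizing that $\upcast A{A'}x \ltdyn \upcast{A'}{A''}\upcast A{A'}x$ holds at $A' \ltdyn A''$ (not $A'' \ltdyn A''$), so that it composes by transitivity with the step at $A \ltdyn A'$ to land at $A \ltdyn A''$ --- and noting that the substitutions into the bound/best axioms are licensed by \textsc{TmDynValSubst} / \textsc{TmDynStkSubst}. If one wished to sidestep Theorem~\ref{thm:casts-unique}, the same equidynamisms can be derived directly from the left/right cast rules of Lemma~\ref{lem:cast-left-right} together with the bound/best axioms, but routing through the universal-property characterization is the shortest path.
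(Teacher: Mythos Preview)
Your proposal is correct and follows essentially the same approach as the paper: both route through Theorem~\ref{thm:casts-unique} by verifying the bound and best conditions for the candidate on the right-hand side, using the upcast/downcast left and right rules (or equivalently the raw bound axioms plus transitivity, as you do for the bound direction in part (2)) and then dualizing for parts (3) and (4). The only cosmetic difference is that the paper phrases the bound step in part (2) as two applications of \textsc{UpR} rather than two bound axioms chained by transitivity, but since \textsc{UpR} is itself derived from the bound axiom plus transitivity, the arguments are interchangeable.
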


\begin{longproof} ~
We use Theorem~\ref{thm:casts-unique} in all cases, and show that the
right-hand side has the universal property of the left.   
\begin{enumerate}
\item Both parts expand to showing 
  $x \ltdyn x : A \ltdyn A \vdash x \ltdyn x : A \ltdyn A$,
  which is true by assumption.
  
\item
  First, we need to show $x \ltdyn \upcast{A'}{A''}{(\upcast A{A'} x)} :
  A \ltdyn A''$.  By upcast right, it suffices to show $x \ltdyn
  \upcast{A}{A'}{x} : A \ltdyn A'$, which is also true by upcast right.

  For $x \ltdyn x'' : A \ltdyn A'' \vdash \upcast{A'}{A''}{(\upcast
    A{A'} x)} \ltdyn x''$, by upcast left twice, it suffices to show $x
  \ltdyn x'' : A \ltdyn A''$, which is true by assumption.
  
\item Both parts expand to showing $\bullet : \u B \vdash \bullet \ltdyn
  \bullet : \u B$, which is true by assumption.

\item
  To show $\bullet \ltdyn \bullet'' : \u B \ltdyn \u B'' \vdash \bullet
  \ltdyn \dncast{\u B}{\u B'}{(\dncast{\u B'}{\u B''} \bullet)}$, by
  downcast right (twice), it suffices to show $\bullet : \u B \ltdyn
  \bullet'' : \u B'' \vdash {\bullet} \ltdyn \bullet'' : \u B \ltdyn \u
  B''$, which is true by assumption.  Next, we have to show $\dncast{\u
    B}{\u B'}{(\dncast{\u B'}{\u B''} \bullet)} \ltdyn \bullet : \u B
  \ltdyn \u B''$, and by downcast left, it suffices to show $\dncast{\u
    B'}{\u B''}{\bullet} \ltdyn \bullet : \u B' \ltdyn \u B''$, which is
  also true by downcast left.
\end{enumerate}
\end{longproof}

\noindent In particular, this composition property implies that the casts into and
out of the dynamic type are coherent, for example if $A \ltdyn A'$
then
$\upcast{A}{\dynv}{x} \equidyn \upcast{A'}{\dynv}{\upcast{A}{A'}{x}}$.  

 The following theorem says essentially that $x \ltdyn
 \dncast{T}{T'}{\upcast{T}{T'}{x}}$ (upcast then downcast might error
 less but but otherwise does not change the behavior) and
 $\upcast{T}{T'}{\dncast{T}{T'}{x}} \ltdyn x$ (downcast then upcast
 might error more but otherwise does not change the behavior).  However,
 since a value type dynamism $A \ltdyn A'$ induces a value upcast $x :
 A \vdash \upcast{A}{A'}{x} : A'$ but a stack downcast $\bullet : \u F
 A' \vdash \dncast{\u F A}{\u F A'}{\bullet} : \u F A$ (and dually for
 computations), the statement of the theorem wraps one cast with 
 the constructors for $U$ and $\u F$ types (functoriality of $\u F/U$).
\begin{theorem}[Casts are a Galois Connection] \label{thm:cast-adjunction} ~~~
  \begin{enumerate}
  \item $\bullet' : \u F A' \vdash \bindXtoYinZ{\dncast{\u F A}{\u F A'}{\bullet'}}{x}{\ret{(\upcast{A}{A'}{x})}} \ltdyn \bullet' : \u F A'$
  \item $\bullet : \u F A \vdash \bullet \ltdyn \bindXtoYinZ{\bullet}{x}{\dncast{\u F A}{\u F A'}{(\ret{(\upcast{A}{A'}{x})})}}  : \u F A$
  \item $x : U \u B' \vdash {\upcast{U \u B}{U \u B'}{(\thunk{({\dncast{\u B}{\u B'}{\force x}})})}} \ltdyn x : U \u B'$
  \item $x : U \u B \vdash x \ltdyn \thunk{(\dncast{B}{B'}{(\force{(\upcast{U \u B}{U \u B'}{x})})})} : U \u B$
  \end{enumerate}
\end{theorem}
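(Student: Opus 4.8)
The plan is to derive all four parts directly from the cast universal properties (the Bound and Best rules of Figure~\ref{fig:gtt-term-dyn-axioms}), the derived one-sided cast rules of Lemma~\ref{lem:cast-left-right} (\textsc{UpL}, \textsc{DnR}), the congruence rules \textsc{$F$ICong}, \textsc{$F$ECong}, \textsc{$U$ICong}, \textsc{$U$ECong}, and the $\beta\eta$ laws for the shifts $\u F$ and $U$. It is worth noting up front that the \emph{retract} axiom is not needed here: parts (1), (3) are the direction $\mathrm{up}\circ\mathrm{down} \ltdyn \mathrm{id}$ and parts (2), (4) are $\mathrm{id} \ltdyn \mathrm{down}\circ\mathrm{up}$, i.e.\ precisely the two inequalities of a Galois connection, which follow from the universal properties alone.

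For part (1), I would combine the Bound property of the downcast, $\dncast{\u F A}{\u F A'}{\bullet'} \ltdyn \bullet'$, with the Best property of the upcast, $\upcast{A}{A'}{x} \ltdyn x'$ for $x \ltdyn x' : A \ltdyn A'$, through \textsc{$F$ICong} and \textsc{$F$ECong}, to obtain $\bindXtoYinZ{\dncast{\u F A}{\u F A'}{\bullet'}}{x}{\ret{(\upcast{A}{A'}{x})}} \ltdyn \bindXtoYinZ{\bullet'}{x'}{\ret{x'}}$; then $\eta$ for $\u F$ collapses the right-hand side to $\bullet'$, and transitivity closes the goal. Part (3) is the exact dual for computation-type casts: substitute $\force x$ into the Bound property of the downcast to get $\dncast{\u B}{\u B'}{\force x} \ltdyn \force x$, apply \textsc{$U$ICong}, use $\eta$ for $U$ to rewrite $\thunk{\force x}$ as $x$, and finally absorb the outer upcast with the \textsc{UpL} rule.

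For part (2), $\eta$ for $\u F$ turns the left-hand side $\bullet$ into $\bindXtoYinZ{\bullet}{x}{\ret x}$, so by \textsc{$F$ECong} it suffices to prove $x : A \vdash \ret x \ltdyn \dncast{\u F A}{\u F A'}{(\ret{(\upcast{A}{A'}{x})})} : \u F A$; the \textsc{DnR} rule reduces this to $\ret x \ltdyn \ret{(\upcast{A}{A'}{x})} : \u F A \ltdyn \u F A'$, which is \textsc{$F$ICong} applied to the Bound property $x \ltdyn \upcast{A}{A'}{x}$. Part (4) is again dual: use $\eta$ for $U$ to replace $x$ by $\thunk{\force x}$, reduce by \textsc{$U$ICong} to $\force x \ltdyn \dncast{\u B}{\u B'}{(\force{(\upcast{U \u B}{U \u B'}{x})})} : \u B$, and discharge this via \textsc{DnR} together with \textsc{$U$ECong} applied to the Bound property $x \ltdyn \upcast{U \u B}{U \u B'}{x}$.

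I do not anticipate a genuine obstacle; each item is a short chain of one universal property, one congruence rule, and one $\eta$ law. The main thing that needs care is bookkeeping between the homogeneous and heterogeneous forms of term dynamism: the congruence rules must be instantiated so that the sub-goals land at the \emph{homogeneous} computation types ($\u F A$, $\u B$, $U \u B$) rather than heterogeneous ones — which is exactly what the \textsc{DnR} and \textsc{UpL} applications arrange — and one should confirm in each case whether the Bound or the Best property is the one needed on a given side, so that the inequalities point in the intended direction.
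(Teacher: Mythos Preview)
Your proposal is correct and follows essentially the same approach as the paper's own proof: each part is discharged by one application of the cast universal property (Bound/Best, equivalently the \textsc{UpL}/\textsc{DnL}/\textsc{DnR} rules of Lemma~\ref{lem:cast-left-right}), one congruence rule for $\mathsf{ret}/\mathsf{bind}$ or $\mathsf{thunk}/\mathsf{force}$, and one $\eta$ law for $\u F$ or $U$, with no use of the retract axiom. The only cosmetic difference is that you describe parts (1) and (3) bottom-up (derive the inner inequality first, then apply congruence and $\eta$) whereas the paper works top-down (apply $\eta$ first, then reduce by congruence to the inner inequality); the steps are identical.
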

\begin{longproof} ~
  \begin{enumerate}
  \item By $\eta$ for $F$ types, $\bullet' : \u F A' \vdash \bullet'
    \equidyn \bindXtoYinZ{\bullet'}{x'}{\ret{x'}} : \u F A'$, so it
    suffices to show
    \[
    \bindXtoYinZ{\dncast{\u F A}{\u F A'}{\bullet'}}{x}{\ret{(\upcast{A}{A'}{x})}} \ltdyn \bindXtoYinZ{\bullet'}{x':A'}{\ret{x'}}
    \]
    By congruence, it suffices to show ${\dncast{\u F A}{\u F
        A'}{\bullet'}} \ltdyn \bullet' : \u F A \ltdyn \u F A'$, which
    is true by downcast left, and 
    $x \ltdyn x' : A \ltdyn A' \vdash {\ret{(\upcast{A}{A'}{x})}} \ltdyn
    {\ret{x'}} : A'$,
    which is true by congruence for $\mathsf{ret}$, upcast left, and the assumption.

  \item By $\eta$ for $F$ types, it suffices to show
    \[
    \bullet : \u F A \vdash \bindXtoYinZ{x}{\bullet}{\ret{x}} \ltdyn \bindXtoYinZ{\bullet}{x}{\dncast{\u F A}{\u F A'}{(\ret{(\upcast{A}{A'}{x})})}}  : \u F A
    \]
    so by congruence,
    \[
    x : A \vdash \ret{x} \ltdyn {\dncast{\u F A}{\u F A'}{(\ret{(\upcast{A}{A'}{x})})}}
    \]
    By downcast right, it suffices to show
    \[
    x : A \vdash \ret{x} \ltdyn (\ret{(\upcast{A}{A'}{x})}) : \u F A \ltdyn \u F A'
    \]
    and by congruence 
    \[
    x : A \vdash x \ltdyn ({(\upcast{A}{A'}{x})}) : A \ltdyn A'
    \]
    which is true by upcast right.  
    
  \item By $\eta$ for $U$ types, it suffices to show
    \[
    x : U \u B' \vdash {\upcast{U \u B}{U \u B'}{(\thunk{({\dncast{\u B}{\u B'}{\force x}})})}} \ltdyn \thunk{(\force{x})} : U \u B'
    \]
    By upcast left, it suffices to show
    \[
    x : U \u B' \vdash {(\thunk{({\dncast{\u B}{\u B'}{\force x}})})} \ltdyn \thunk{(\force{x})} : U \u B \ltdyn U \u B'
    \]
    and by congruence 
    \[
    x : U \u B' \vdash {\dncast{\u B}{\u B'}{\force x}} \ltdyn \force{x} : \u B \ltdyn \u B'
    \]
    which is true by downcast left.
    
  \item By $\eta$ for $U$ types, it suffices to show
    \[
    x : U \u B \vdash \thunk{(\force x)} \ltdyn \thunk{(\dncast{B}{B'}{(\force{(\upcast{U \u B}{U \u B'}{x})})})} : U \u B
    \]
    and by congruence
    \[
    x : U \u B \vdash {(\force x)} \ltdyn {(\dncast{B}{B'}{(\force{(\upcast{U \u B}{U \u B'}{x})})})} : \u B
    \]
    By downcast right, it suffices to show
    \[
    x : U \u B \vdash {(\force x)} \ltdyn {(\force{(\upcast{U \u B}{U \u B'}{x})})} : \u B \ltdyn \u B'
    \]
    and by congruence 
    \[
    x : U \u B \vdash {x} \ltdyn {(\upcast{U \u B}{U \u B'}{x})} : \u B \ltdyn \u B'
    \]
    which is true by upcast right.
  \end{enumerate}
\end{longproof}

The retract property says roughly that $x \equidyn
\dncast{T'}{T}{\upcast{T}{T'}{x}}$ (upcast then downcast does not change
the behavior), strengthening the $\ltdyn$ of
Theorem~\ref{thm:cast-adjunction}.  In
Figure~\ref{fig:gtt-term-dyn-axioms}, we asserted the retract axiom for
casts with the dynamic type.  This and the composition property implies
the retraction property for general casts:
\begin{theorem}[Retract Property for General Casts] ~~~ \label{thm:retract-general}
  \begin{enumerate}
  \item
    $\bullet : \u F A \vdash \bindXtoYinZ{\bullet}{x}{\dncast{\u F A}{\u F A'}{(\ret{(\upcast{A}{A'}{x})})}} \equidyn \bullet  : \u F A$
  \item
    $x : U \u B \vdash \thunk{(\dncast{\u B}{\u B'}{(\force{(\upcast{U \u B}{U \u B'}{x})})})} \equidyn x : U \u B$
  \end{enumerate}
\end{theorem}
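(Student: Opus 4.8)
The plan is to prove each equidynamism as two opposite inequalities. One direction is already supplied by Theorem~\ref{thm:cast-adjunction}: part (2) gives $\bullet : \u F A \vdash \bullet \ltdyn \bindXtoYinZ{\bullet}{x}{\dncast{\u F A}{\u F A'}{(\ret{(\upcast{A}{A'}{x})})}} : \u F A$ for the first statement, and part (4) gives the analogous $x : U \u B \vdash x \ltdyn \thunk{(\dncast{\u B}{\u B'}{(\force{(\upcast{U \u B}{U \u B'}{x})})})} : U \u B$ for the second. So the remaining work is the reverse inequalities, and the key idea will be to factor the general cast through the dynamic type so as to appeal to the retract axiom of Figure~\ref{fig:gtt-term-dyn-axioms}, which is stated only for casts with $\dynv$ and $\dync$.

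For the first statement, I would first use the $\eta$ rule for $F$ to rewrite $\bullet$ as $\bindXtoYinZ{\bullet}{x}{\ret x}$ and the congruence rule for $\mathsf{bind}$, reducing the goal to the ``pointwise'' fact $x : A \vdash \dncast{\u F A}{\u F A'}{(\ret{(\upcast{A}{A'}{x})})} \ltdyn \ret x : \u F A$. Then: instantiating Theorem~\ref{thm:cast-adjunction}(2) at $A' \ltdyn \dynv$ with $\ret{(\upcast{A}{A'}{x})}$ in place of $\bullet$ and reducing a $\beta$-$F$ redex yields $\ret{(\upcast{A}{A'}{x})} \ltdyn \dncast{\u F A'}{\u F \dynv}{(\ret{(\upcast{A'}{\dynv}{(\upcast{A}{A'}{x})})})}$. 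Applying the downcast $\dncast{\u F A}{\u F A'}{(-)}$, which is monotone in its argument by the cast congruence rule, and then collapsing the iterated casts with the composition laws --- Theorem~\ref{thm:decomposition}(2) for $\upcast{A'}{\dynv}{(\upcast{A}{A'}{x})} \equidyn \upcast{A}{\dynv}{x}$ and Theorem~\ref{thm:decomposition}(4) for $\dncast{\u F A}{\u F A'}{(\dncast{\u F A'}{\u F \dynv}{(-)})} \equidyn \dncast{\u F A}{\u F \dynv}{(-)}$ --- rewrites the right-hand side to $\dncast{\u F A}{\u F \dynv}{(\ret{(\upcast{A}{\dynv}{x})})}$, which is $\ltdyn \ret x$ by the retract axiom for the dynamic value type. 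Transitivity gives the pointwise fact, and combining with Theorem~\ref{thm:cast-adjunction}(2) gives $\equidyn$.

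The second statement is proved dually: the $\eta$ rule for $U$ together with congruence for $\thunk$ and $\force$ reduces it to $x : U \u B \vdash \dncast{\u B}{\u B'}{(\force{(\upcast{U \u B}{U \u B'}{x})})} \ltdyn \force x : \u B$, and this follows by factoring $U \u B \ltdyn U \u B' \ltdyn U \dync$ and $\u B \ltdyn \u B' \ltdyn \dync$ through the dynamic computation type, using Theorem~\ref{thm:cast-adjunction}(4) at $U \u B' \ltdyn U \dync$ (followed by $\force$, congruence, and $U$-$\beta$), the composition laws of Theorem~\ref{thm:decomposition}, and the retract axiom for $\dync$; combining with Theorem~\ref{thm:cast-adjunction}(4) yields $\equidyn$. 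I do not expect a genuine obstacle here: the only real content is the observation that the dynamic-type retract axiom propagates to arbitrary casts via compositionality and the Galois-connection inequality, and the one place demanding care is keeping straight the orientations of term dynamism when inserting the auxiliary $A' \to \dynv$ (resp.\ $\u B' \to \dync$) round-trip and when using monotonicity of the downcast in its argument.
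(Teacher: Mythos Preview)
Your proposal is correct and follows essentially the same route as the paper: one direction by Theorem~\ref{thm:cast-adjunction}, the other by $\eta$-expanding to a pointwise statement, inserting the round-trip through $\dynv$ (resp.\ $\dync$) via Theorem~\ref{thm:cast-adjunction} at $A' \ltdyn \dynv$ (resp.\ $\u B' \ltdyn \dync$), applying congruence of the outer cast, collapsing with Theorem~\ref{thm:decomposition}, and finishing with the retract axiom. The paper's proof is organized the same way; your account is in fact slightly cleaner in naming the exact composition laws used.
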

\begin{longproof}
  We need only to show the $\ltdyn$ direction, because the converse is
  Theorem~\ref{thm:cast-adjunction}.

  \begin{enumerate}
  \item
  Substituting $\ret{(\upcast{A}{A'}{x})}$ into
  Theorem~\ref{thm:cast-adjunction}'s 
  \[\bullet : \u F A \vdash \bullet \ltdyn \bindXtoYinZ{\bullet}{x}{\dncast{\u F A}{\u F A'}{(\ret{(\upcast{A}{A'}{x})})}}  : \u F A
  \]
  and $\beta$-reducing gives
  \[
  x : A \vdash {\ret{(\upcast{A}{A'}{x})}} \ltdyn {\dncast{\u F A}{\u F \dynv}{(\ret{(\upcast{A'}{\dynv}{\upcast{A}{A'}{x}})})}}
  \]
  Using this, after $\eta$-expanding $\bullet : \u F A$ on the right and using congruence for
  $\mathsf{bind}$, it suffices to derive as follows:
  \[
  \begin{array}{lll}
    \dncast{\u F A}{\u F A'}{(\ret{(\upcast{A}{A'}{x})})} & \ltdyn & \text{ congruence }\\
    \dncast{\u F A}{\u F A'}{\dncast{\u F A'}{\u F \dynv}{(\ret{(\upcast{A'}{\dynv}{\upcast{A}{A'}{x}})})}} & \ltdyn & \text{ composition }\\
    \dncast{\u F A}{\u F \dynv}{(\ret{{(\upcast{A}{\dynv}{x})}})} & \ltdyn & \text{ retract axiom for $\upcast{A}{\dynv}$ }\\
    \ret{x}\\
  \end{array}
  \]

\item After using $\eta$ for $U$ and congruence, it suffices to show
  \[
  x : U \u B \vdash \dncast{\u B}{\u B'}{(\force{(\upcast{U \u B}{U \u B'}{x})})} \ltdyn \force{x}  : \u B
  \]
  Substituting $x : U \u B \vdash {\upcast{U \u B}{U \u B'}{x}} : U \u B'$
  into Theorem~\ref{thm:cast-adjunction}'s
  \[
  x : U \u B' \vdash x \ltdyn \thunk{(\dncast{B'}{\dync}{(\force{(\upcast{U \u B'}{U \dync}{x})})})} : U \u B'
  \]
  gives
  \[
  x : U \u B \vdash {\upcast{U \u B}{U \u B'}{x}}  \ltdyn \thunk{(\dncast{B'}{\dync}{(\force{(\upcast{U \u B'}{U \dync}{{\upcast{U \u B}{U \u B'}{x}}})})})} : U \u B'
  \]
  So we have
  \[
  \begin{array}{lll}
    \dncast{B}{B'}{(\force{{\upcast{U \u B}{U \u B'}{x}}})} & \ltdyn \\
    \dncast{B}{B'}{\force{(\thunk{(\dncast{B'}{\dync}{(\force{(\upcast{U \u B'}{U \dync}{{\upcast{U \u B}{U \u B'}{x}}})})})})}} & \ltdyn & \beta\\
    \dncast{B}{B'}{(\dncast{B'}{\dync}{(\force{(\upcast{U \u B'}{U \dync}{{\upcast{U \u B}{U \u B'}{x}}})})})} & \ltdyn & \text{composition}\\
    \dncast{B}{\dync}{(\force{(\upcast{U \u B}{U \dync}{x})})} & \ltdyn & \text{retract axiom for $\dncast{\u B}{\dync}$}\\
    \ret{x} & \ltdyn & \text{composition}\\
  \end{array}
  \]
\end{enumerate}
\end{longproof}

\subsection{Unique Implementations of Casts}

\begin{longonly}
\begin{definition}
  Let a \emph{type constructor} $C$ be a (value or computation) type that
  well-formed according to the grammar in Figure~\ref{fig:gtt-syntax-and-terms} with
  additional hypotheses $X \vtype$ and $\u Y \ctype$ standing for value
  or computation types, respectively.  We write $C[A/X]$ and $C[\u B/\u
    Y]$ for the substitution of a type for a variable.  
\end{definition}
For example,
\[
\begin{array}{l}
  X_1 \vtype, X_2 \vtype \vdash X_1 + X_2 \vtype \\
  \u Y \ctype \vdash U \u Y \vtype \\
  X_1 \vtype, X_2 \vtype \vdash \u F(X_1 + X_2) \ctype
\end{array}
\]
are type constructors.

It is admissible that all type constructors are monotone in type
dynamism, because we included a congruence rule for every type
constructor in Figure~\ref{fig:gtt-type-dynamism}:  

\begin{lemma}[Monotonicity of Type Constructors]
  For any type constructor $X \vtype \vdash C$, if $A \ltdyn A'$ then
  $C[A/X] \ltdyn C[A'/x]$.  For any type constructor $\u Y \ctype \vdash
  C$, if $\u B \ltdyn \u B'$ then $C[\u B/\u Y] \ltdyn C[\u B'/\u Y]$.
\end{lemma}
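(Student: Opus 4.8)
The plan is to prove both statements together by a \emph{mutual} structural induction on the type constructor $C$, since value type constructors and computation type constructors are defined by mutual recursion (a value constructor may contain $U\u Y$, a computation constructor may contain $\u F X$ or $X \to \u Y$). To make the induction hypothesis strong enough it is cleanest to generalize: fix a context $\Xi$ of hypotheses, each of the form $X \vtype$ or $\u Y \ctype$, and a family of type dynamisms — one $A_i \ltdyn A_i'$ for each value hypothesis and one $\u B_j \ltdyn \u B_j'$ for each computation hypothesis — and show that for every value type constructor $\Xi \vdash C \vtype$ we have $C[\ldots] \ltdyn C[\ldots]$ under the two substitutions, and likewise for every computation type constructor $\Xi \vdash C \ctype$. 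The two statements in the lemma are the special cases $\Xi = (X \vtype)$ and $\Xi = (\u Y \ctype)$. This generalization is genuinely needed rather than merely convenient: the case $C = U\u D$ with a value hypothesis $X$ forces us to have already established the claim for the computation constructor $\u D$ under a \emph{value} hypothesis, so all four combinations of (constructor kind) $\times$ (hypothesis kind) must be carried through the induction simultaneously.

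The leaf cases are immediate. If $C$ is one of the hypothesis variables, say $C = X_i$, then the two substitutions yield $A_i$ and $A_i'$ and we conclude by the assumed $A_i \ltdyn A_i'$ (dually for a computation hypothesis). If $C$ mentions no hypothesis — this covers $\dynv$, $\dync$, $0$, $1$, $\top$ — then both substitutions return $C$ unchanged and we conclude by \textsc{VTyRefl} or \textsc{CTyRefl}. The inductive cases are each a one-line appeal to the corresponding congruence rule of Figure~\ref{fig:gtt-type-dynamism}: $C_1 + C_2$ uses the induction hypothesis on $C_1$ and $C_2$ followed by \textsc{$+$Mon}; $C_1 \times C_2$ uses \textsc{$\times$Mon}; $\u D_1 \with \u D_2$ uses \textsc{$\with$Mon}; $U \u D$ uses \textsc{$U$Mon}; $\u F C'$ uses \textsc{$F$Mon}; and $C' \to \u D$ uses \textsc{$\to$Mon}, applying the induction hypothesis to the domain $C'$ and to the codomain $\u D$ — note in particular that \textsc{$\to$Mon} treats the domain \emph{covariantly}, which is exactly what is required here. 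Since the grammar of Figure~\ref{fig:gtt-syntax-and-terms} has a congruence rule for every type constructor, every case is covered and the induction is complete.

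There is essentially no obstacle: this is a pure structural induction in which each constructor is matched to its monotonicity rule, and indeed the lemma is ``admissible'' precisely because those rules were deliberately included for \emph{every} connective. The only point requiring any care is the one noted above — generalizing from a single hypothesis to a context of value and computation hypotheses before beginning the induction, so that the mutually recursive subterms (e.g.\ the body of a shift, or the domain and codomain of an arrow) fall under the induction hypothesis.
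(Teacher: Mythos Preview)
Your proposal is correct and follows essentially the same route as the paper: a structural (mutual) induction on $C$, handling variables by assumption and every compound form by the corresponding congruence rule from Figure~\ref{fig:gtt-type-dynamism}. The paper's own proof is the terse two-sentence version of exactly this argument.

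One small over-claim: the generalization to an arbitrary context $\Xi$ of hypotheses is \emph{not} ``genuinely needed''. With a single hypothesis $X$ (or $\u Y$), each subterm of $C$ is still a constructor in that same single hypothesis, so the induction hypothesis applies directly; the case $C = U\u D$ just requires that the first clause of the lemma be read as covering constructors $C$ of either kind (value or computation) with a value hypothesis, which is how the paper intends it given its definition of ``type constructor''. Your generalization is harmless and arguably cleaner, but the single-hypothesis statement already closes under the induction without it.
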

\begin{proof}
Induction on $C$.  In the case for a variable $X$ or $\u Y$, $A \ltdyn
A'$ or $\u B \ltdyn \u B'$ by assumption.  In all other cases, the
result follows from the inductive hypotheses and the congruence rule for
type dynamism for the type constructor
(Figure~\ref{fig:gtt-type-dynamism}).  For example, in the case for $+$,
$A_1[A/x] \ltdyn A_1[A'/x]$ and $A_2[A/x] \ltdyn A_2[A'/x]$, so
$A_1[A/x] + A_2[A/x] \ltdyn A_1[A'/x] + A_2[A'/x]$.
\end{proof}

The following lemma helps show that a complex value
$\defupcast{C[A_i/X_i,\u B_i/\u Y_i]}{C[A_i'/X_i,\u B_i'/\u Y_i]}$ is an
upcast from $C[A_i/X_i,\u B_i/\u Y_i]$ to $C[A_i'/X_i,\u B_i'/\u Y_i]$.
\begin{lemma}[Upcast Lemma] \label{lem:upcast}
  Let $X_1 \vtype, \ldots X_n \vtype, \u Y_1 \ctype, \ldots \u Y_n
  \ctype \vdash C \vtype$ be a value type constructor.  We abbreviate
  the instantiation \\ $C[A_1/X_1,\ldots,A_n/X_n,\u B_1/\u Y_i,\ldots,\u
    B_m/\u Y_m]$ by $C[A_i,\u B_i]$.

  Suppose $\defupcast{C[A_i,\u B_i]}{C[A_i',\u B_i']}{-}$ is a complex
  value (depending on $C$ and each $A_i,A_i',\u B_i,\u B_i'$) such that
  \begin{enumerate}
  \item
    For all value types $A_1,\ldots,A_n$ and $A_1',\ldots,A_n'$ with
    $A_i \ltdyn A_i'$, and all computation types $\u B_1,\ldots,\u B_m$
    and $\u B_1',\ldots,\u B_n'$ with $\u B_i \ltdyn \u B_i'$,
    \[
    x : C[A_i,\u B_i] \vdash \defupcast{C[A_i,\u B_i]}{C[A_i',\u B_i']}{x} : C[A_i',\u B_i']
    \]
  \item 
    For all value types $A_i \ltdyn A_i'$ and computation types $\u B_i
    \ltdyn \u B_i'$,
    \begin{small}
      \[
    \begin{array}{c}
      x : C[A_i,\u B_i] \vdash \defupcast{C[A_i,\u B_i]}{C[A_i,\u B_i]}{x} \ltdyn \defupcast{C[A_i,\u B_i]}{C[A_i',\u B_i']}{x} : C[A_i,\u B_i] \ltdyn C[A_i',\u B_i']\\
      x \ltdyn x' : C[A_i,\u B_i] \ltdyn C[A_i',\u B_i'] \vdash
      \defupcast{C[A_i,\u B_i]}{C[A_i',\u B_i']}{x} \ltdyn \defupcast{C[A_i',\u B_i']}{C[A_i',\u B_i']}{x'} : C[A_i',\u B_i'] 
    \end{array}
    \]
    \end{small}

  \item For all value types $A_1,\ldots,A_n$ and all computation types
    $\u B_1,\ldots,\u B_m$,
    \[
    x : C[A_i,\u B_i] \vdash \defupcast{C[A_i,\u B_i]}{C[A_i,\u B_i]}{x} \equidyn x : C[A_i,\u B_i]
    \]
  \end{enumerate}
  Then $\defupcast{C[A_i,\u B_i]}{C[A_i',\u B_i']}$ satisfies the
  universal property of an upcast, so by Theorem~\ref{thm:casts-unique}
  \[
  x : C[A_i,\u B_i] \vdash \defupcast{C[A_i,\u B_i]}{C[A_i',\u B_i']}{x} \equidyn \upcast{C[A_i,\u B_i]}{C[A_i',\u B_i']}{x} : C[A_i',\u B_i']
  \]
  Moreover, the left-to-right direction uses only the left-to-right
  direction of assumption (3), and the right-to-left uses only the
  right-to-left direction of assumption (3).
\end{lemma}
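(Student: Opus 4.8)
The plan is to show that $\defupcast{C[A_i,\u B_i]}{C[A_i',\u B_i']}{-}$ has the universal property of an upcast from $C[A_i,\u B_i]$ to $C[A_i',\u B_i']$ in the sense of Theorem~\ref{thm:casts-unique}(1), and then invoke that theorem. The two preconditions Theorem~\ref{thm:casts-unique} needs are already available: the type dynamism $C[A_i,\u B_i] \ltdyn C[A_i',\u B_i']$ follows from $A_i \ltdyn A_i'$ and $\u B_i \ltdyn \u B_i'$ by monotonicity of type constructors (applied once for each $X_i$ and $\u Y_i$), and assumption~(1) provides the complex value $x : C[A_i,\u B_i] \vdash \defupcast{C[A_i,\u B_i]}{C[A_i',\u B_i']}{x} : C[A_i',\u B_i']$. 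So all that remains is to derive the ``bound'' judgement $x : C[A_i,\u B_i] \vdash x \ltdyn \defupcast{C[A_i,\u B_i]}{C[A_i',\u B_i']}{x} : C[A_i,\u B_i] \ltdyn C[A_i',\u B_i']$ and the ``best'' judgement $x \ltdyn x' : C[A_i,\u B_i] \ltdyn C[A_i',\u B_i'] \vdash \defupcast{C[A_i,\u B_i]}{C[A_i',\u B_i']}{x} \ltdyn x' : C[A_i',\u B_i']$.

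I would obtain ``bound'' by transitivity (\textsc{TmDynTrans}) of assumption~(3) at the family $A_i,\u B_i$, which gives $x \ltdyn \defupcast{C[A_i,\u B_i]}{C[A_i,\u B_i]}{x} : C[A_i,\u B_i]$, followed by the first congruence clause of assumption~(2), which gives $\defupcast{C[A_i,\u B_i]}{C[A_i,\u B_i]}{x} \ltdyn \defupcast{C[A_i,\u B_i]}{C[A_i',\u B_i']}{x} : C[A_i,\u B_i] \ltdyn C[A_i',\u B_i']$. Symmetrically, I would obtain ``best'' by transitivity of the second congruence clause of assumption~(2), instantiated with the premise $x \ltdyn x' : C[A_i,\u B_i] \ltdyn C[A_i',\u B_i']$, which gives $\defupcast{C[A_i,\u B_i]}{C[A_i',\u B_i']}{x} \ltdyn \defupcast{C[A_i',\u B_i']}{C[A_i',\u B_i']}{x'} : C[A_i',\u B_i']$, followed by assumption~(3) at the family $A_i',\u B_i'$, which gives $\defupcast{C[A_i',\u B_i']}{C[A_i',\u B_i']}{x'} \ltdyn x' : C[A_i',\u B_i']$. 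Theorem~\ref{thm:casts-unique}(1) then yields the equidynamism with $\upcast{C[A_i,\u B_i]}{C[A_i',\u B_i']}{x}$. For the final refinement, I would inspect the proof of Theorem~\ref{thm:casts-unique}: the left-to-right direction of that equidynamism is derived from the ``best'' judgement and hence uses only the left-to-right direction of assumption~(3) (at $A_i',\u B_i'$), while the right-to-left direction is derived from the ``bound'' judgement and uses only the right-to-left direction of assumption~(3) (at $A_i,\u B_i$), which is exactly the claimed one-directional dependence.

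There is no genuine obstacle here; the argument is pure bookkeeping layered on top of Theorem~\ref{thm:casts-unique} and \textsc{TmDynTrans}. The only points that need attention are keeping the heterogeneous type-dynamism labels $C[A_i,\u B_i] \ltdyn C[A_i',\u B_i']$ lined up when chaining judgements, and making sure each congruence clause of assumption~(2) is applied at the instantiation for which its homogeneous endpoint --- $\defupcast{C[A_i,\u B_i]}{C[A_i,\u B_i]}{x}$ in the first case and $\defupcast{C[A_i',\u B_i']}{C[A_i',\u B_i']}{x'}$ in the second --- is exactly the term that assumption~(3) equates with the bare variable.
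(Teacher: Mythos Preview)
Your proposal is correct and follows essentially the same approach as the paper: derive the ``bound'' property from the right-to-left direction of assumption~(3) composed (via \textsc{TmDynTrans}) with the first clause of assumption~(2), derive the ``best'' property from the second clause of assumption~(2) composed with the left-to-right direction of assumption~(3), and then invoke Theorem~\ref{thm:casts-unique}. Your tracing of which direction of~(3) feeds into which direction of the final equidynamism is also correct and matches the paper's ``moreover'' claim.
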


\begin{proof}
  First, we show that $\defupcast{C[A_i,\u B_i]}{C[A_i',\u B_i']}$
  satisfies the universal property of an upcast.  
  
  To show 
  \[
  x \ltdyn x' : {C[A_i,\u B_i]} \ltdyn {C[A_i',\u B_i']} \vdash \defupcast{C[A_i,\u B_i]}{C[A_i',\u B_i']}{x} \ltdyn x' : {C[A_i',\u B_i']}
  \]
  assumption (2) part 2 gives
  \[
  \defupcast{C[A_i,\u B_i]}{C[A_i',\u B_i']}{x} \ltdyn \defupcast{C[A_i',\u B_i']}{C[A_i',\u B_i']}{x'} : C[A_i',\u B_i'] 
  \]
  Then transitivity with the left-to-right direction of assumption (3) 
  \[
  \defupcast{C[A_i',\u B_i']}{C[A_i',\u B_i']}{\upcast{C[A_i,\u B_i]}{C[A_i',\u B_i']}{x}}
  \ltdyn {\upcast{C[A_i,\u B_i]}{C[A_i',\u B_i']}{x}}
  \]
  gives the result.

  To show 
  \[
    {x} \ltdyn \defupcast{C[A_i,\u B_i]}{C[A_i',\u B_i']}{x} : {C[A_i,\u B_i]} \ltdyn {C[A_i',\u B_i']}
  \]
  By assumption (2) part 1, we have 
  \[
  \defupcast{C[A_i,\u B_i]}{C[A_i,\u B_i]}{x} \ltdyn \defupcast{C[A_i,\u B_i]}{C[A_i',\u B_i']} : C[A_i,\u B_i] \ltdyn C[A_i',\u B_i']
  \]
  so transitivity with the right-to-left direction of assumption (3)
  gives the result:
  \[
  x \ltdyn \defupcast{C[A_i,\u B_i]}{C[A_i,\u B_i]}{x} 
  \]

  Then Theorem~\ref{thm:casts-unique} implies that
  $\defupcast{C[A_i,\u B_i]}{C[A_i',\u B_i']}$ is equivalent to
  $\upcast{C[A_i,\u B_i]}{C[A_i',\u B_i']}$.  
\end{proof}

Dually, we have
\begin{lemma}[Downcast Lemma] \label{lem:downcast}
  Let $X_1 \vtype, \ldots X_n \vtype, \u Y_1 \ctype, \ldots \u Y_n
  \ctype \vdash C \ctype$ be a computation type constructor.  We
  abbreviate the instantiation \\
  $C[A_1/X_1,\ldots,A_n/X_n,\u B_1/\u Y_i,\ldots,\u B_m/\u Y_m]$ by $C[A_i,\u B_i]$.

  Suppose $\defdncast{C[A_i,\u B_i]}{C[A_i',\u B_i']}{-}$ is a complex
  stack (depending on $C$ and each $A_i,A_i',\u B_i,\u B_i'$) such that
  \begin{enumerate}
  \item
    For all value types $A_1,\ldots,A_n$ and $A_1',\ldots,A_n'$ with
    $A_i \ltdyn A_i'$, and all computation types $\u B_1,\ldots,\u B_m$
    and $\u B_1',\ldots,\u B_n'$ with $\u B_i \ltdyn \u B_i'$,
    \[
    \bullet : C[A_i',\u B_i'] \vdash \defdncast{C[A_i,\u B_i]}{C[A_i',\u B_i']}{\bullet} : C[A_i,\u B_i]
    \]
  \item 
    For all value types $A_i \ltdyn A_i'$ and computation types $\u B_i
    \ltdyn \u B_i'$,
    \begin{small}
      \[
    \begin{array}{c}
      \bullet : C[A_i',\u B_i'] \vdash \defdncast{C[A_i,\u B_i]}{C[A_i',\u B_i']}{\bullet} \ltdyn \defdncast{C[A_i',\u B_i']}{C[A_i',\u B_i']}{\bullet} : C[A_i,\u B_i] \ltdyn C[A_i',\u B_i']\\
      \bullet \ltdyn \bullet : C[A_i,\u B_i] \ltdyn C[A_i',\u B_i'] \vdash
      \defdncast{C[A_i,\u B_i]}{C[A_i,\u B_i]}{x} \ltdyn \defdncast{C[A_i,\u B_i]}{C[A_i',\u B_i']}{x'} : C[A_i,\u B_i] 
    \end{array}
    \]
    \end{small}

  \item For all value types $A_1,\ldots,A_n$ and all computation types
    $\u B_1,\ldots,\u B_m$,
    \[
    \bullet : C[A_i,\u B_i] \vdash \defdncast{C[A_i,\u B_i]}{C[A_i,\u B_i]}{\bullet} \equidyn \bullet : C[A_i,\u B_i]
    \]
  \end{enumerate}
  Then $\defdncast{C[A_i,\u B_i]}{C[A_i',\u B_i']}$ satisfies the
  universal property of a downcast, so by Theorem~\ref{thm:casts-unique}
  \[
  \bullet : C[A_i',\u B_i'] \vdash \defdncast{C[A_i,\u B_i]}{C[A_i',\u B_i']}{\bullet} \equidyn \dncast{C[A_i,\u B_i]}{C[A_i',\u B_i']}{\bullet} : C[A_i,\u B_i]
  \]
  Moreover, the left-to-right direction uses only the left-to-right
  direction of assumption (3), and the right-to-left uses only the
  right-to-left direction of assumption (3).
\end{lemma}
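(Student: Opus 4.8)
The plan is to dualize the proof of the Upcast Lemma (Lemma~\ref{lem:upcast}) line for line: I will show that the complex stack $\defdncast{C[A_i,\u B_i]}{C[A_i',\u B_i']}{-}$ satisfies the universal property of a downcast recorded in Theorem~\ref{thm:casts-unique}(2), and then invoke that theorem to conclude $\defdncast{C[A_i,\u B_i]}{C[A_i',\u B_i']}{\bullet} \equidyn \dncast{C[A_i,\u B_i]}{C[A_i',\u B_i']}{\bullet}$. Abbreviating $\u D = C[A_i,\u B_i]$ and $\u D' = C[A_i',\u B_i']$, what must be checked are the ``bound'' clause $\bullet' : \u D' \vdash \defdncast{\u D}{\u D'}{\bullet'} \ltdyn \bullet' : \u D \ltdyn \u D'$ and the ``best'' clause $\bullet \ltdyn \bullet' : \u D \ltdyn \u D' \vdash \bullet \ltdyn \defdncast{\u D}{\u D'}{\bullet'} : \u D$. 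Note that assumption~(1) is exactly what makes $\defdncast{\u D}{\u D'}{-}$ a well-formed complex stack $\bullet' : \u D' \vdash \cdots : \u D$; the arrow here runs out of the \emph{more} dynamic type, the reverse of the value/upcast case.

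For the ``bound'' clause I would start from assumption~(2), part~1, instantiated at $A_i \ltdyn A_i'$ and $\u B_i \ltdyn \u B_i'$, which gives $\bullet' : \u D' \vdash \defdncast{\u D}{\u D'}{\bullet'} \ltdyn \defdncast{\u D'}{\u D'}{\bullet'} : \u D \ltdyn \u D'$, and then compose by \textsc{TmDynTrans} with the left-to-right direction of assumption~(3) instantiated at the primed types, namely $\defdncast{\u D'}{\u D'}{\bullet'} \ltdyn \bullet' : \u D'$. For the ``best'' clause I would start from assumption~(2), part~2, which for $\bullet \ltdyn \bullet' : \u D \ltdyn \u D'$ gives $\defdncast{\u D}{\u D}{\bullet} \ltdyn \defdncast{\u D}{\u D'}{\bullet'} : \u D$, and compose by transitivity with the right-to-left direction of assumption~(3) at the unprimed types, $\bullet \ltdyn \defdncast{\u D}{\u D}{\bullet} : \u D$, to reach $\bullet \ltdyn \defdncast{\u D}{\u D'}{\bullet'} : \u D$. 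Theorem~\ref{thm:casts-unique}(2) then yields the equidynamism; and since in the proof of that theorem the inequality $\defdncast{\u D}{\u D'}{\bullet} \ltdyn \dncast{\u D}{\u D'}{\bullet}$ is obtained from the ``bound'' clause while $\dncast{\u D}{\u D'}{\bullet} \ltdyn \defdncast{\u D}{\u D'}{\bullet}$ is obtained from the ``best'' clause, the ``moreover'' about which half of~(3) each direction needs falls out immediately.

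I do not expect any genuine mathematical difficulty: every step is an instantiation of one of the three hypotheses followed by a single use of transitivity, mirroring Lemma~\ref{lem:upcast} exactly under the correspondence value $\leftrightarrow$ computation, up $\leftrightarrow$ down. The only thing demanding attention is bookkeeping --- orienting the heterogeneous dynamism judgements correctly (the downcast stack goes from $\u D'$ to $\u D$), reading assumption~(2) part~2 with its bound variables as stack variables rather than value variables, taking the homogeneous instance of assumption~(3) at the \emph{primed} types in the ``bound'' clause but at the \emph{unprimed} types in the ``best'' clause, and confirming that the two directions of the final equidynamism each use only the correspondingly-oriented half of assumption~(3), so that the refined statement at the end of the lemma is justified.
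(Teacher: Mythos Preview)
Your proposal is correct and matches the paper's proof essentially line for line: the paper also verifies the two clauses of Theorem~\ref{thm:casts-unique}(2) by combining assumption~(2) part~1 with the left-to-right half of~(3) at the primed types for the ``bound'' clause, and assumption~(2) part~2 with the right-to-left half of~(3) at the unprimed types for the ``best'' clause. Your tracking of which direction of~(3) feeds which direction of the final equidynamism is also correct.
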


\begin{proof}

  First, we show that $\dncast{C[A_i,\u B_i]}{C[A_i',\u B_i']}$
  satisfies the universal property of a downcast, and then apply
  Theorem~\ref{thm:casts-unique}.
  To show
  \[
  \bullet \ltdyn \bullet' : C[A_i,\u B_i] \ltdyn C[A_i',\u B_i'] \vdash \bullet \ltdyn \defdncast{C[A_i,\u B_i]}{C[A_i',\u B_i']}{\bullet'} : C[A_i,\u B_i] 
  \]
  assumption (2) part 2 gives
  \[
  \defdncast{C[A_i,\u B_i]}{C[A_i,\u B_i]}{\bullet} \ltdyn \defdncast{C[A_i,\u B_i]}{C[A_i',\u B_i']}{\bullet'} 
  \]
  Then transitivity with the right-to-left direction of assumption (3)
  \[
    \bullet
    \ltdyn
    \defdncast{C[A_i,\u B_i]}{C[A_i,\u B_i]}{\bullet}
  \]
  gives the result.

  To show 
  \[
    \defdncast{C[A_i,\u B_i]}{C[A_i',\u B_i']}{\bullet} \ltdyn \bullet : {C[A_i,\u B_i]} \ltdyn {C[A_i',\u B_i']} 
    \]
    by assumption (2) part 1, we have 
    \[
    \defdncast{C[A_i,\u B_i]}{C[A_i',\u B_i']}{\bullet} \ltdyn \defdncast{C[A_i',\u B_i']}{C[A_i',\u B_i']}{\bullet} : C[A_i,\u B_i] \ltdyn C[A_i',\u B_i']
    \]
    so transitivity with the left-to-right direction of assumption (3)
    \[
    \defdncast{C[A_i',\u B_i']}{C[A_i',\u B_i']}{\bullet} \ltdyn \bullet
    \]
    gives the result.
\end{proof}
\end{longonly}

\begin{longonly}
\subsubsection{Functions, Products, and Sums}
\end{longonly}

Together, the universal property for casts and the $\eta$ principles for
each type imply that the casts must behave as in lazy cast semantics:
\begin{theorem}[Cast Unique Implementation Theorem for $+,\times,\to,\with$] \label{thm:functorial-casts}
The casts' behavior is uniquely determined as follows: \ifshort (See the extended version for $+$, $\with$.) \fi
\begin{small}
\[
   \begin{array}{c}
\iflong  
    \upcast{A_1 + A_2}{A_1' + A_2'}{s} \equidyn \caseofXthenYelseZ{s}{x_1.\inl{(\upcast{A_1}{A_1'}{x_1})}}{x_2.\inr{(\upcast{A_2}{A_2'}{x_2})}}\\\\
    \begin{array}{rcl}
    \dncast{\u F (A_1' + A_2')}{\u F (A_1 + A_2)}{\bullet} & \equidyn 
    & \bindXtoYinZ{\bullet}{(s : (A_1' + A_2'))}\caseofX{s}\\
    & &  \{{x_1'.\bindXtoYinZ{(\dncast{\u F A_1}{\u F A_1'}{(\ret{x_1'})})}{x_1}{\ret{(\inl {x_1})}}} \\
    & &  \mid {x_2'.\bindXtoYinZ{(\dncast{\u F A_2}{\u F A_2'}{(\ret{x_2'})})}{x_2}{\ret{(\inr {x_2})}}} \}\\
    \end{array}
    \\\\
\fi
    \upcast{A_1 \times A_2}{A_1' \times A_2'}{p} \equidyn \pmpairWtoXYinZ{p}{x_1}{x_2}{(\upcast{A_1}{A_1'}{x_1},\upcast{A_2}{A_2'}{x_2})} \\\\
    \begin{array}{rcl}
    \dncast{\u F (A_1' \times A_2')}{\u F (A_1 \times A_2)}{\bullet} &
      \equidyn & 
      \bindXtoYinZ{\bullet}{p'} {\pmpairWtoXYinZ{p'}{x_1'}{x_2'}{}}\\
      & & \bindXtoYinZ{\dncast{\u F A_1}{\u F A_1'}{\ret x_1'}}{x_1}\\
      & & \bindXtoYinZ{\dncast{\u F A_2}{\u F A_2'}{\ret x_2'}}{x_2} {\ret (x_1,x_2) }\\
\iflong
      & \equidyn &  
      \bindXtoYinZ{\bullet}{p'} \pmpairWtoXYinZ{p'}{x_1'}{x_2'}{} \\
      & & \bindXtoYinZ{\dncast{\u F A_2}{\u F A_2'}{\ret x_2'}}{x_2} \\
      & & \bindXtoYinZ{\dncast{\u F A_1}{\u F A_1'}{\ret x_1'}}{x_1} {\ret (x_1,x_2) }
\fi
    \end{array}\\\\
\iflong  
    \dncast{\u B_1 \with \u B_2}{\u B_1' \with \u B_2'}{\bullet} \equidyn 
    \pair{\dncast{\u B_1}{\u B_1'}{\pi \bullet}}{\dncast{\u B_2}{\u B_2'}{\pi' \bullet}}\\\\
    \begin{array}{rcll}
      \upcast{U (\u B_1 \with \u B_2)}{U (\u B_1' \with \u B_2')}{p} & \equidyn &
      \thunk{} & {\{\pi \mapsto {\force{(\upcast{U \u B_1}{U \u B_1'}{(\thunk{\pi (\force{p})})})}}}\\
      &&& \pi' \mapsto {\force{(\upcast{U \u B_2}{U \u B_2'}{(\thunk{\pi' (\force{p})})})}} \}
    \end{array}\\\\
\fi
    
    \dncast{A \to \u B}{A' \to \u B'}{\bullet} \equidyn
    \lambda{x}.{\dncast{\u B}{\u B'}{(\bullet \, (\upcast{A}{A'}{x}))}} \\\\

    \begin{array}{rcll}
      \upcast{U (A \to \u B)}{U (A' \to \u B')}{f} & \equidyn &
      \thunk (\lambda x'. & \bindXtoYinZ{\dncast{\u F A}{\u F A'}{(\ret x')}}{x}{} \\
      & & & { \force{(\upcast{U \u B}{U \u B'}{(\thunk{(\force{(f)}\,x)})})}} )
    \end{array}
        \\
  \end{array}
  \]
  \end{small}
\end{theorem}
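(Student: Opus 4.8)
The plan is to obtain each of the eight cast equations as an instance of the Upcast Lemma (Lemma~\ref{lem:upcast}) or the Downcast Lemma (Lemma~\ref{lem:downcast}), followed by an appeal to the uniqueness result Theorem~\ref{thm:casts-unique}. For each displayed line, we take the claimed right-hand side as the candidate term $\defupcast{\cdot}{\cdot}{-}$ or $\defdncast{\cdot}{\cdot}{-}$, built out of the generic type constructors together with the component casts $\upcast{A_i}{A_i'}{-}$, $\dncast{\u B_i}{\u B_i'}{-}$, and the $\u F$-shifted component casts $\dncast{\u F A_i}{\u F A_i'}{-}$ (legal by the Shifted Casts lemma). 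The value upcasts for $A_1\times A_2 \ltdyn A_1'\times A_2'$, $A_1+A_2\ltdyn A_1'+A_2'$, $U(\u B_1\with\u B_2)\ltdyn U(\u B_1'\with\u B_2')$, and $U(A\to\u B)\ltdyn U(A'\to\u B')$ use Lemma~\ref{lem:upcast} with the type constructor $C$ taken to be $X_1\times X_2$, $X_1+X_2$, $U(\u Y_1\with\u Y_2)$, and $U(X\to\u Y)$ respectively; the computation downcasts for $\u F(A_1\times A_2)$, $\u F(A_1+A_2)$, $\u B_1\with\u B_2$, and $A\to\u B$ use Lemma~\ref{lem:downcast} with the dual choices of $C$.

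Hypotheses (1) and (2) of these lemmas --- that the candidate is well typed and is monotone both in its type indices and in its argument --- are immediate from the typing and congruence rules for term dynamism (Figures~\ref{fig:gtt-syntax-and-terms} and~\ref{fig:gtt-term-dynamism-structural}) together with the derived cast rules (Lemma~\ref{lem:cast-left-right} and Lemma~\ref{lem:cast-congruence}) applied to the component casts. The substantive point is hypothesis (3): the candidate must collapse to the identity when all $A_i=A_i'$ and $\u B_i=\u B_i'$. Here one first uses the identity laws Theorem~\ref{thm:decomposition}(1) and (3) to replace each component cast by the identity, and then finishes by a short $\beta\eta$ computation in the type in question and in the shifts $\u F$, $U$. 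For the $\times$ upcast the candidate at reflexivity becomes $\pmpairWtoXYinZ{p}{x_1}{x_2}{(x_1,x_2)}$, which is $p$ by $\eta$ for $\times$; for the $\to$ downcast it becomes $\lambda x.\,\bullet\,x$, which is $\bullet$ by $\eta$ for $\to$; for the $\with$ downcast it becomes $\pair{\pi\bullet}{\pi'\bullet}\equidyn\bullet$ by $\eta$ for $\with$; for the $U(A\to\u B)$ upcast one additionally uses $\beta$ for $F$ and $U$ and $\eta$ for $\to$ and $U$; and for the $\u F$-shifted value downcasts one uses the monadic $\beta$ law $\bindXtoYinZ{\ret V}{x}{N}\equidyn N[V/x]$ to discharge the inner binds and $\eta$ for $F$ to recombine the outer $\bind$, plus $\eta$ for $\times$ or $+$. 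Once (1)--(3) hold, Lemma~\ref{lem:upcast} / Lemma~\ref{lem:downcast} yield exactly the universal property of the corresponding GTT cast, and Theorem~\ref{thm:casts-unique} upgrades this to the stated equidynamism. The two displayed forms of the $\times$- and $+$-downcasts (binding the first component before the second, or vice versa) then coincide automatically, since each independently satisfies the downcast universal property and hence both equal $\dncast{\u F(A_1'\times A_2')}{\u F(A_1\times A_2)}{\bullet}$.

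I expect the main obstacle to be the bookkeeping for hypothesis (2) in the value-type downcast cases: because those casts are stated at $\u F$ of the component types, the monotonicity argument has to push term dynamism through $\bind$ and through the $\kw{split}$/$\case$ eliminators, so the congruence rules must be applied in the right order while keeping the stoup discipline straight. None of this is conceptually deep --- it is a direct unfolding of the structural rules --- but it is where the proof is longest; every other part reduces to one of the short $\beta\eta$ calculations sketched above.
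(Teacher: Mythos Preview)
Your proposal is correct and follows essentially the same approach as the paper: each case is handled by instantiating the Upcast Lemma or Downcast Lemma with the appropriate type constructor, checking well-typedness and the two monotonicity conditions by congruence, and discharging the identity-at-reflexivity condition via Theorem~\ref{thm:decomposition} together with the $\beta\eta$ laws of the connective and of $\u F$/$U$. The paper's proof is just a fully unrolled version of exactly this outline (and, minor note, only the $\times$ downcast is displayed in two orderings, not the $+$ downcast).
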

In the case for an eager product $\times$, we can actually also show
that reversing the order and running ${\dncast{\u F A_2}{\u F A_2'}{\ret
    x_2'}}$ and then ${\dncast{\u F A_1}{\u F A_1'}{\ret x_1'}}$ is also
an implementation of this cast, and therefore equal to the above.
Intuitively, this is sensible because the only effect a downcast
introduces is a run-time error, and if either downcast errors, both
possible implementations will.

\begin{longproof}~\\

  \begin{enumerate}
  \item Sums upcast.  We use Lemma~\ref{lem:upcast} with the type
    constructor $X_1 \vtype, X_2 \vtype \vdash X_1 + X_2 \vtype$.
    Suppose $A_1 \ltdyn A_1'$ and $A_2 \ltdyn A_2'$ and let
    \[s : A_1 + A_2 \vdash \defupcast{A_1 + A_2}{A_1' + A_2'}{s} : A_1' + A_2'
    \]
    stand for
    \[
    \caseofXthenYelseZ{s}{x_1.\inl{(\upcast{A_1}{A_1'}{x_1})}}{x_2.\inr{(\upcast{A_2}{A_2'}{x_2})}}
    \]
    which has the type required for the lemma's assumption (1).  
    
    Assumption (2) requires two condition, both of which are proved by
    the congruence rules for $\mathsf{case}$, $\mathsf{inl}$,
    $\mathsf{inr}$, and upcasts.  The first, 
    \[
    s : A_1 + A_2 \vdash \defupcast{A_1 + A_2}{A_1 + A_2}{s} \ltdyn \defupcast{A_1 + A_2}{A_1' + A_2'}{s} : A_1 + A_2 \ltdyn A_1' + A_2'\\
    \]
    expands to
    \[
    \begin{array}{c}
      \caseofXthenYelseZ{s}{x_1.\inl{(\upcast{A_1}{A_1}{x_1})}}{x_2.\inr{(\upcast{A_2}{A_2}{x_2})}} \\
      \ltdyn \\
      \caseofXthenYelseZ{s}{x_1.\inl{(\upcast{A_1}{A_1'}{x_1})}}{x_2.\inr{(\upcast{A_2}{A_2'}{x_2})}}
    \end{array}
    \]
    The second,
    \[
    s \ltdyn s' : A_1 + A_2 \ltdyn A_1' + A_2' \vdash
    \defupcast{A_1 + A_2}{A_1' + A_2'}{s} \ltdyn \defupcast{A_1' + A_2'}{A_1' + A_2'}{s'} : A_1' + A_2'
    \]
    expands to
    \[
    \begin{array}{c}
      \caseofXthenYelseZ{s}{x_1.\inl{(\upcast{A_1}{A_1'}{x_1})}}{x_2.\inr{(\upcast{A_2}{A_2'}{x_2})}} \\
      \ltdyn \\
      \caseofXthenYelseZ{s'}{x_1.\inl{(\upcast{A_1'}{A_1'}{x_1'})}}{x_2.\inr{(\upcast{A_2'}{A_2'}{x_2'})}}
    \end{array}
    \]

    Finally, for assumption (3), we need to show
    \[
    \caseofXthenYelseZ{s}{x_1.\inl{(\upcast{A_1}{A_1}{x_1})}}{x_2.\inr{(\upcast{A_2}{A_2}{x_2})}}
    \equidyn s
    \]
    which is true because $\upcast{A_1}{A_1}$ and $\upcast{A_2}{A_2}$
    are the identity, and using ``weak $\eta$'' for sums,
    $\caseofXthenYelseZ{s}{x_1.\inl{x_1}}{x_2.\inr{x_2}} \equidyn x$,
    which is the special case of the $\eta$ rule in
    Figure~\ref{fig:gtt-term-dyn-axioms} for the identity complex
    value:
    \[
    \begin{array}{rcl}
      \caseofXthenYelseZ{s}{x_1.\inl{(\upcast{A_1}{A_1}{x_1})}}{x_2.\inr{(\upcast{A_2}{A_2}{x_2})}} & \equidyn &\\
      \caseofXthenYelseZ{s}{x_1.\inl{({x_1})}}{x_2.\inr{({x_2})}} & \equidyn &\\
      s 
    \end{array}
    \]
    
  \item Sums downcast.  We use the downcast lemma with $X_1 \vtype, X_2
    \vtype \vdash \u F(X_1 + X_2) \ctype$.  Let 
    \[
    \bullet' : \u F (A_1' + A_2') \vdash \defdncast{\u F (A_1 + A_2)}{\u F (A_1' + A_2')}{\bullet'} : \u F (A_1 + \u A_2)
    \]
    stand for
    \[
    \bindXtoYinZ{\bullet}{(s : (A_1' +
      A_2'))}{}
                {\caseofXthenYelseZ{s}
           {x_1'.\bindXtoYinZ{(\dncast{\u F A_1}{\u F A_1'}{(\ret{x_1'})})}{x_1}{\ret{(\inl {x_1})}}}
           {\ldots}}\\
     \]
     (where, as in the theorem statement, $\mathsf{inr}$ branch is
     analogous), which has the correct type for the lemma's assumption
     (1).

     For assumption (2), we first need to show
     \begin{small}
     \[
     \bullet : {\u F (A_1' + A_2')} \vdash
     \defdncast{\u F (A_1 + A_2)}{\u F (A_1' + A_2')}{\bullet'}
     \ltdyn
     \defdncast{\u F (A_1' + A_2')}{\u F (A_1' + A_2')}{\bullet'}
     : {\u F (A_1 + A_2)} \ltdyn {\u F (A_1' + A_2')}
     \]
     \end{small}
     i.e.
     \begin{small}
     \[
     \begin{array}{c}
     \bindXtoYinZ{\bullet}{(s' : (A_1' + A_2'))}{}
                {\caseofXthenYelseZ{s'}
           {x_1'.\bindXtoYinZ{(\dncast{\u F A_1}{\u F A_1'}{(\ret{x_1'})})}{x_1}{\ret{(\inl {x_1})}}}
           {\ldots}}\\
     \ltdyn\\
     \bindXtoYinZ{\bullet}{(s' : (A_1' + A_2'))}{}
                {\caseofXthenYelseZ{s'}
                  {x_1'.\bindXtoYinZ{(\dncast{\u F A_1'}{\u F A_1'}{(\ret{x_1'})})}{x_1'}{\ret{(\inl {x_1'})}}}
                  {\ldots}}
     \end{array}
     \]
     \end{small}
     which is true by the congruence rules for $\mathsf{bind}$,
     $\mathsf{case}$, downcasts, $\mathsf{ret}$, and $\mathsf{inl}/\mathsf{inr}$.

     Next, we need to show
     \begin{small}
     \[
     \bullet \ltdyn \bullet' : {\u F (A_1 + A_2)} \ltdyn {\u F (A_1' + A_2')} \vdash
     \defdncast{\u F (A_1 + A_2)}{\u F (A_1 + A_2)}{\bullet}
     \ltdyn
     \defdncast{\u F (A_1 + A_2)}{\u F (A_1' + A_2')}{\bullet'}
     : {\u F (A_1 + A_2)}
     \]
     \end{small}
     i.e.
     \begin{small}
     \[
     \begin{array}{c}
     \bindXtoYinZ{\bullet}{(s : (A_1 + A_2))}{}
                {\caseofXthenYelseZ{s}
                  {x_1.\bindXtoYinZ{(\dncast{\u F A_1}{\u F A_1}{(\ret{x_1})})}{x_1}{\ret{(\inl {x_1})}}}
                  {\ldots}}\\
     \ltdyn\\
     \bindXtoYinZ{\bullet}{(s' : (A_1' + A_2'))}{}
                {\caseofXthenYelseZ{s'}
           {x_1'.\bindXtoYinZ{(\dncast{\u F A_1}{\u F A_1'}{(\ret{x_1'})})}{x_1}{\ret{(\inl {x_1})}}}
           {\ldots}}\\
     \end{array}
     \]
     \end{small}
     which is also true by congruence.

     Finally, for assumption (3), we show
     \begin{small}
       \[
       \begin{array}{lll}
         \bindXtoYinZ{\bullet}{(s : (A_1 + A_2))}{}
                     {\caseofXthenYelseZ{s}
                       {x_1.\bindXtoYinZ{(\dncast{\u F A_1}{\u F A_1}{(\ret{x_1})})}{x_1}{\ret{(\inl {x_1})}}}
                       {\ldots}} & \equidyn & \\
         \bindXtoYinZ{\bullet}{(s : (A_1 + A_2))}{}
                     {\caseofXthenYelseZ{s}
                       {x_1.\bindXtoYinZ{({(\ret{x_1})})}{x_1}{\ret{(\inl {x_1})}}}
                       {\ldots}} & \equidyn & \\
         \bindXtoYinZ{\bullet}{(s : (A_1 + A_2))}{}
                     {\caseofXthenYelseZ{s}
                       {x_1.{\ret{(\inl {x_1})}}}
                       {x_2.{\ret{(\inr {x_2})}}}} & \equidyn & \\
         \bindXtoYinZ{\bullet}{(s : (A_1 + A_2))}{}
                     {\ret{s}} & \equidyn &\\
         \bullet
       \end{array}
       \]
     \end{small}
     using the downcast identity, $\beta$ for $\u F$ types, $\eta$ for
     sums, and $\eta$ for $\u F$ types.  
     
  \item Eager product upcast. We use Lemma~\ref{lem:upcast} with the type
    constructor $X_1 \vtype, X_2 \vtype \vdash X_1 \times X_2 \vtype$.
    Let
    \[p : A_1 \times A_2 \vdash \defupcast{A_1 \times A_2}{A_1' \times A_2'}{s} : A_1' \times A_2'
    \]
    stand for
    \[
    \pmpairWtoXYinZ{p}{x_1}{x_2}{(\upcast{A_1}{A_1'}{x_1},\upcast{A_2}{A_2'}{x_2})} 
    \]
    which has the type required for the lemma's assumption (1).  
    
    Assumption (2) requires two condition, both of which are proved by
    the congruence rules for $\mathsf{split}$, pairing, and upcasts.
    The first,
    \[
    p : A_1 \times A_2 \vdash \defupcast{A_1 \times A_2}{A_1 \times A_2}{s} \ltdyn \defupcast{A_1 \times A_2}{A_1' \times A_2'}{s} : A_1 \times A_2 \ltdyn A_1' \times A_2'\\
    \]
    expands to
    \[
    \begin{array}{c}
      \pmpairWtoXYinZ{p}{x_1}{x_2}{(\upcast{A_1}{A_1}{x_1},\upcast{A_2}{A_2}{x_2})}\\
      \ltdyn \\
      \pmpairWtoXYinZ{p}{x_1}{x_2}{(\upcast{A_1}{A_1'}{x_1},\upcast{A_2}{A_2'}{x_2})}\\
    \end{array}
    \]
    The second,
    \[
    p \ltdyn p' : A_1 \times A_2 \ltdyn A_1' \times A_2' \vdash
    \defupcast{A_1 \times A_2}{A_1' \times A_2'}{s} \ltdyn \defupcast{A_1' \times A_2'}{A_1' \times A_2'}{s'} : A_1' \times A_2'
    \]
    expands to
    \[
    \begin{array}{c}
      \pmpairWtoXYinZ{p}{x_1}{x_2}{(\upcast{A_1}{A_1'}{x_1},\upcast{A_2}{A_2'}{x_2})}\\
      \ltdyn \\
      \pmpairWtoXYinZ{p'}{x_1'}{x_2'}{(\upcast{A_1'}{A_1'}{x_1'},\upcast{A_2'}{A_2'}{x_2'})}\\
    \end{array}
    \]

    Finally, for assumption (3), using $\eta$ for products and
    the fact that $\upcast{A}{A}{}$ is the identity, we have
    \[
      \pmpairWtoXYinZ{p}{x_1}{x_2}{(\upcast{A_1}{A_1}{x_1},\upcast{A_2}{A_2}{x_2})} \equidyn
      \pmpairWtoXYinZ{p}{x_1}{x_2}{({x_1},{x_2})} \equidyn
      p
    \]

  \item Eager product downcast.

      We use the downcast lemma with $X_1 \vtype, X_2 \vtype \vdash \u
      F(X_1 \times X_2) \ctype$.  Let
    \[
    \bullet' : \u F (A_1' \times A_2') \vdash \defdncast{\u F (A_1 \times A_2)}{\u F (A_1' \times A_2')}{\bullet'} : \u F (A_1 \times \u A_2)
    \]
    stand for
    \begin{small}
    \[
      \bindXtoYinZ{\bullet}{p'}{\pmpairWtoXYinZ{p'}{x_1'}{x_2'}{
          \bindXtoYinZ{\dncast{\u F A_1}{\u F A_1'}{\ret x_1'}}{x_1}{
            \bindXtoYinZ{\dncast{\u F A_2}{\u F A_2'}{\ret x_2'}}{x_2} {\ret (x_1,x_2) }}}}
      \]
    \end{small}
     which has the correct type for the lemma's assumption (1).

     For assumption (2), we first need to show
     \begin{small}
     \[
     \bullet : {\u F (A_1' \times A_2')} \vdash
     \defdncast{\u F (A_1 \times A_2)}{\u F (A_1' \times A_2')}{\bullet'}
     \ltdyn
     \defdncast{\u F (A_1' \times A_2')}{\u F (A_1' \times A_2')}{\bullet'}
     : {\u F (A_1 \times A_2)} \ltdyn {\u F (A_1' \times A_2')}
     \]
     \end{small}
     i.e.
     \begin{small}
     \[
     \begin{array}{c}
       \bindXtoYinZ{\bullet}{p'}{\pmpairWtoXYinZ{p'}{x_1'}{x_2'}{
           \bindXtoYinZ{\dncast{\u F A_1}{\u F A_1'}{\ret x_1'}}{x_1}{
            \bindXtoYinZ{\dncast{\u F A_2}{\u F A_2'}{\ret x_2'}}{x_2} {\ret (x_1,x_2) }}}}\\
     \ltdyn\\
       \bindXtoYinZ{\bullet}{p'}{\pmpairWtoXYinZ{p'}{x_1'}{x_2'}{
           \bindXtoYinZ{\dncast{\u F A_1'}{\u F A_1'}{\ret x_1'}}{x_1'}{
            \bindXtoYinZ{\dncast{\u F A_2'}{\u F A_2'}{\ret x_2'}}{x_2'} {\ret (x_1',x_2') }}}}
     \end{array}
     \]
     \end{small}
     which is true by the congruence rules for $\mathsf{bind}$,
     $\mathsf{split}$, downcasts, $\mathsf{ret}$, and pairing.

     Next, we need to show
     \begin{small}
     \[
     \bullet \ltdyn \bullet' : {\u F (A_1 \times A_2)} \ltdyn {\u F (A_1' \times A_2')} \vdash
     \defdncast{\u F (A_1 \times A_2)}{\u F (A_1 \times A_2)}{\bullet}
     \ltdyn
     \defdncast{\u F (A_1 \times A_2)}{\u F (A_1' \times A_2')}{\bullet'}
     : {\u F (A_1 + A_2)}
     \]
     \end{small}
     i.e.
     \begin{small}
     \[
     \begin{array}{c}
       \bindXtoYinZ{\bullet}{p}{\pmpairWtoXYinZ{p}{x_1}{x_2}{
           \bindXtoYinZ{\dncast{\u F A_1}{\u F A_1}{\ret x_1}}{x_1}{
            \bindXtoYinZ{\dncast{\u F A_2}{\u F A_2'}{\ret x_2}}{x_2} {\ret (x_1,x_2) }}}}\\
     \ltdyn\\
       \bindXtoYinZ{\bullet}{p'}{\pmpairWtoXYinZ{p'}{x_1'}{x_2'}{
           \bindXtoYinZ{\dncast{\u F A_1}{\u F A_1'}{\ret x_1'}}{x_1}{
            \bindXtoYinZ{\dncast{\u F A_2}{\u F A_2'}{\ret x_2'}}{x_2} {\ret (x_1,x_2) }}}}\\
     \end{array}
     \]
     \end{small}
     which is also true by congruence.

     Finally, for assumption (3), we show
     \begin{small}
       \[
       \begin{array}{lll}
       \bindXtoYinZ{\bullet}{p}{\pmpairWtoXYinZ{p}{x_1}{x_2}{
           \bindXtoYinZ{\dncast{\u F A_1}{\u F A_1}{\ret x_1}}{x_1}{
            \bindXtoYinZ{\dncast{\u F A_2}{\u F A_2'}{\ret x_2}}{x_2} {\ret (x_1,x_2) }}}} & \equidyn & \\
       \bindXtoYinZ{\bullet}{p}{\pmpairWtoXYinZ{p}{x_1}{x_2}{
           \bindXtoYinZ{{\ret x_1}}{x_1}{
            \bindXtoYinZ{{\ret x_2}}{x_2} {\ret (x_1,x_2) }}}} & \equidyn & \\
       \bindXtoYinZ{\bullet}{p}{\pmpairWtoXYinZ{p}{x_1}{x_2}{{\ret (x_1,x_2) }}} & \equidyn & \\
       \bindXtoYinZ{\bullet}{p}{\ret p} & \equidyn & \\
       \bullet \\
       \end{array}
       \]
     \end{small}
     using the downcast identity, $\beta$ for $\u F$ types, $\eta$ for
     eager products, and $\eta$ for $\u F$ types.  
     
     An analogous argument works if we sequence the downcasts of the
     components in the opposite order:
     \begin{small}
     \[
     \bindXtoYinZ{\bullet}{p'}{\pmpairWtoXYinZ{p'}{x_1'}{x_2'}{\bindXtoYinZ{\dncast{\u F A_2}{\u F A_2'}{\ret x_2'}}{x_2} {\bindXtoYinZ{\dncast{\u F A_1}{\u F A_1'}{\ret x_1'}}{x_1} {\ret (x_1,x_2) }}}}
     \]
     \end{small}
     (the only facts about downcasts used above are congruence and the
     downcast identity), which shows that these two implementations of
     the downcast are themselves equidynamic.  

  \item Lazy product downcast. 
    We use Lemma~\ref{lem:downcast} with 
    the type constructor $\u Y_1 \ctype, \u Y_2 \ctype \vdash \u Y_1 \with \u Y_2 \vtype$.
    Let
    \[\bullet' : \u B_1' \with \u B_2' \vdash \defdncast{\u B_1 \with \u B_2}{\u B_1 \with \u B_2}{\bullet'} : \u B_1 \with \u B_2
    \]
    stand for
    \begin{small}
    \[
    \pair{\dncast{\u B_1}{\u B_1'}{\pi \bullet'}}{\dncast{\u B_2}{\u B_2'}{\pi' \bullet'}}\\
    \]
    \end{small}
    which has the type required for the lemma's assumption (1).

    Assumption (2) requires two conditions, both of which are proved by
    the congruence rules for pairing, projection, and downcasts.  The first,
    \begin{small}
    \[\bullet' : \u B_1' \with \u B_2' \vdash \defdncast{\u B_1 \with \u B_2}{\u B_1' \with \u B_2'}{\bullet'} \ltdyn 
        \defdncast{\u B_1' \with \u B_2'}{\u B_1' \with \u B_2'}{\bullet'} : \u B_1 \with \u B_2 \ltdyn \u B_1' \with \u B_2'
    \]
    \end{small}
    expands to
    \begin{small}
    \[
    \begin{array}{c}
    \pair{\dncast{\u B_1}{\u B_1'}{\pi \bullet'}}{\dncast{\u B_2}{\u B_2'}{\pi' \bullet'}} \\
    \ltdyn \\
    \pair{\dncast{\u B_1'}{\u B_1'}{\pi \bullet'}}{\dncast{\u B_2'}{\u B_2'}{\pi' \bullet'}} \\
    \end{array}
    \]
    \end{small}
    The second, 
    \begin{small}
    \[
    \bullet \ltdyn \bullet' : \u B_1 \with \u B_2 \ltdyn \u B_1' \with \u B_2' \vdash
    \defdncast{\u B_1 \with \u B_2}{\u B_1 \with \u B_2}{\bullet} \ltdyn
    \defdncast{\u B_1 \with \u B_2}{\u B_1' \with \u B_2'}{\bullet'} : \u B_1 \with \u B_2
    \]
    \end{small}
    expands to
    \[
    \begin{array}{c}
    \pair{\dncast{\u B_1}{\u B_1}{\pi \bullet}}{\dncast{\u B_2}{\u B_2}{\pi' \bullet}} \\
    \ltdyn \\
    \pair{\dncast{\u B_1}{\u B_1'}{\pi \bullet'}}{\dncast{\u B_2}{\u B_2'}{\pi' \bullet'}} \\
    \end{array}
    \]

    For assumption (3), we have, using $\dncast{\u B}{\u B}$ is the
    identity and $\eta$ for $\with$,
    \[
    \pair{\dncast{\u B_1}{\u B_1}{\pi \bullet}}{\dncast{\u B_2}{\u B_2}{\pi' \bullet}}
    \equidyn
    \pair{{\pi \bullet}}{{\pi' \bullet}}
    \equidyn
    \bullet
    \]
    
  \item Lazy product upcast.
    
    We use Lemma~\ref{lem:upcast} with the type
    constructor $\u Y_1 \ctype, \u Y_2 \ctype \vdash U (\u Y_1 \with \u Y_2) \vtype$.
    Let
    \[p : U (\u B_1 \with \u B_2) \vdash \defupcast{U (\u B_1 \with \u B_2)}{U (\u B_1 \with \u B_2)}{p} : U (\u B_1' \with \u B_2')
    \]
    stand for
    \begin{small}
    \[
    \thunk{\pair{\force{(\upcast{U \u B_1}{U \u B_1'}{(\thunk{\pi (\force{p})})})}}{\force{(\upcast{U \u B_2}{U \u B_2'}{(\thunk{\pi' (\force{p})})})}}}
    \]
    \end{small}
    which has the type required for the lemma's assumption (1).  
    
    Assumption (2) requires two conditions, both of which are proved by
    the congruence rules for $\mathsf{thunk}$, $\mathsf{force}$,
    pairing, projections, and upcasts.  The first,
    \begin{small}
    \[p : U (\u B_1 \with \u B_2) \vdash \defupcast{U (\u B_1 \with \u B_2)}{U (\u B_1 \with \u B_2)}{p} \ltdyn \defupcast{U (\u B_1 \with \u B_2)}{U (\u B_1' \with \u B_2')}{p} : U (\u B_1 \with \u B_2) \ltdyn U (\u B_1' \with \u B_2')
    \]
    \end{small}
    expands to
    \begin{small}
    \[
    \begin{array}{c}
      \thunk{\pair{\force{(\upcast{U \u B_1}{U \u B_1}{(\thunk{\pi (\force{p})})})}}{\force{(\upcast{U \u B_2}{U \u B_2}{(\thunk{\pi' (\force{p})})})}}}\\
      \ltdyn \\
      \thunk{\pair{\force{(\upcast{U \u B_1}{U \u B_1'}{(\thunk{\pi (\force{p})})})}}{\force{(\upcast{U \u B_2}{U \u B_2'}{(\thunk{\pi' (\force{p})})})}}}
    \end{array}
    \]
    \end{small}
    The second,
    \begin{small}
    \[
    p \ltdyn p' : U (\u B_1 \with \u B_2) \ltdyn U (\u B_1' \with \u B_2') \vdash
    \defupcast{U (\u B_1 \with \u B_2)}{U (\u B_1' \with \u B_2')}{p} \ltdyn \defupcast{U (\u B_1' \with \u B_2')}{U (\u B_1' \with \u B_2')}{p} : U (\u B_1' \with \u B_2')
    \]
    \end{small}
    expands to
    \begin{small}
    \[
    \begin{array}{c}
      \thunk{\pair{\force{(\upcast{U \u B_1}{U \u B_1'}{(\thunk{\pi (\force{p})})})}}{\force{(\upcast{U \u B_2}{U \u B_2'}{(\thunk{\pi' (\force{p})})})}}}\\
      \ltdyn \\
      \thunk{\pair{\force{(\upcast{U \u B_1'}{U \u B_1'}{(\thunk{\pi (\force{p'})})})}}{\force{(\upcast{U \u B_2'}{U \u B_2'}{(\thunk{\pi' (\force{p'})})})}}}
    \end{array}
    \]
    \end{small}

    Finally, for assumption (3), using $\eta$ for $times$, $\beta$ and
    $\eta$ for $U$ types, and the fact that $\upcast{A}{A}{}$ is the
    identity, we have
    \begin{small}
    \[
    \begin{array}{rl}
      \thunk{\pair{\force{(\upcast{U \u B_1}{U \u B_1}{(\thunk{\pi (\force{p})})})}}{\force{(\upcast{U \u B_2}{U \u B_2}{(\thunk{\pi' (\force{p})})})}}} & \equidyn \\
      \thunk{\pair{\force{(\thunk{\pi (\force{p})})}}{\force{(\thunk{\pi' (\force{p})})}}} & \equidyn \\
      \thunk{\pair{\pi (\force{p})}{\pi' (\force{p})}} & \equidyn \\
      \thunk{(\force{p})} & \equidyn \\
      p
    \end{array}
    \]
    \end{small}

  \item Function downcast. 

    We use Lemma~\ref{lem:downcast} with 
    the type constructor $X \vtype, \u Y \ctype \vdash X \to \u Y \ctype$.
    Let
    \[\bullet' : A' \to \u B' \vdash \defdncast{A \to \u B}{A' \to \u B'}{\bullet'} : A \to \u B
    \]
    stand for
    \[
    \lambda{x}.{\dncast{\u B}{\u B'}{(\bullet \, (\upcast{A}{A'}{x}))}} \\
    \]
    which has the type required for the lemma's assumption (1).

    Assumption (2) requires two conditions, both of which are proved by
    the congruence rules for $\lambda$, application, upcasts, and
    downcasts.  The first,
    \begin{small}
    \[\bullet' : A' \to \u B' \vdash \defdncast{A \to \u B}{A' \to \u B'}{\bullet'} \ltdyn 
        \defdncast{A' \to \u B'}{A' \to \u B'}{\bullet'} : A \to \u B \ltdyn \u A' \to \u B'
    \]
    \end{small}
    expands to
    \[
    \begin{array}{c}
    \lambda{x}.{\dncast{\u B}{\u B'}{(\bullet \, (\upcast{A}{A'}{x}))}} \\
      \ltdyn \\
    \lambda{x'}.{\dncast{\u B'}{\u B'}{(\bullet \, (\upcast{A'}{A'}{x'}))}} \\
    \end{array}
    \]
    The second, 
    \begin{small}
    \[
    \bullet \ltdyn \bullet' : \u A \to \u B \ltdyn A' \to \u B' \vdash
    \defdncast{A \to \u B}{A \to \u B}{\bullet} \ltdyn
    \defdncast{A \to \u B}{A' \to \u B'}{\bullet'} : A \to \u B
    \]
    \end{small}
    expands to
    \[
    \begin{array}{c}
    \lambda{x}.{\dncast{\u B}{\u B}{(\bullet \, (\upcast{A}{A}{x}))}} \\
    \ltdyn \\
    \lambda{x}.{\dncast{\u B}{\u B'}{(\bullet' \, (\upcast{A}{A'}{x}))}} \\
    \end{array}
    \]

    For assumption (3), we have, using $\upcast{A}{A}$ and $\dncast{\u
      B}{\u B}$ are the identity and $\eta$ for $\to$,
    \[
    \lambda{x}.{\dncast{\u B}{\u B}{(\bullet \, (\upcast{A}{A}{x}))}} \\
    \equidyn
    \lambda{x}.{{(\bullet \, ({x}))}} \\
    \equidyn
    \bullet
    \]
    
  \item Function upcast.
    
    We use Lemma~\ref{lem:upcast} with the type
    constructor $\u X \vtype, \u Y \ctype \vdash U (\u X \to \u Y) \vtype$.
    Suppose $A \ltdyn A'$ as value types and $\u B \ltdyn \u B'$ as
    computation types and let
    \[p : U (A \to \u B) \vdash \defupcast{U (A \to \u B)}{U (A \to \u B)}{p} : U (A' \to \u B')
    \]
    stand for
    \begin{small}
    \[
      \thunk{(\lambda x'.\bindXtoYinZ{\dncast{\u F A}{\u F A'}{(\ret
            x')}}{x}{ \force{(\upcast{U \u B}{U \u B'}{(\thunk{(\force{(f)}\,x)})})}})}
    \]
    \end{small}
    which has the type required for the lemma's assumption (1).  
    
    Assumption (2) requires two conditions, both of which are proved by
    the congruence rules for $\mathsf{thunk}$, $\mathsf{force}$,
    functions, application, upcasts, and downcasts.  The first,
    \begin{small}
    \[
      f : U (A \to \u B) \vdash \defupcast{U (A \to \u B)}{U (A \to \u B)}{f} \ltdyn \defupcast{U (A \to \u B)}{U (A' \to \u B')}{f} : U (A \to \u B) \ltdyn U (A' \to \u B')
    \]
    \end{small}
    expands to
    \begin{small}
    \[
    \begin{array}{c}
      \thunk{(\lambda x.\bindXtoYinZ{\dncast{\u F A}{\u F A}{(\ret x)}}{x}{ \force{(\upcast{U \u B}{U \u B}{(\thunk{(\force{(f)}\,x)})})}})}\\
      \ltdyn \\
      \thunk{(\lambda x'.\bindXtoYinZ{\dncast{\u F A}{\u F A'}{(\ret x')}}{x}{ \force{(\upcast{U \u B}{U \u B'}{(\thunk{(\force{(f)}\,x)})})}})}
    \end{array}
    \]
    \end{small}
    The second,
    \begin{small}
    \[
    f \ltdyn f' : U (A \to \u B) \ltdyn U (A' \to \u B') \vdash
    \defupcast{U (A \to \u B)}{U (A' \to \u B')}{f} \ltdyn 
    \defupcast{U (A' \to \u B')}{U (A' \to \u B')}{f'} : U (A' \to \u B')
    \]
    \end{small}
    expands to
    \begin{small}
    \[
    \begin{array}{c}
      \thunk{(\lambda x'.\bindXtoYinZ{\dncast{\u F A}{\u F A'}{(\ret x')}}{x}{ \force{(\upcast{U \u B}{U \u B'}{(\thunk{(\force{(f)}\,x)})})}})}\\
      \ltdyn \\
      \thunk{(\lambda x'.\bindXtoYinZ{\dncast{\u F A'}{\u F A'}{(\ret x')}}{x'}{ \force{(\upcast{U \u B'}{U \u B'}{(\thunk{(\force{(f')}\,x')})})}})}
    \end{array}
    \]
    \end{small}

    Finally, for assumption (3), using $\eta$ for $\to$, $\beta$ for $F$
    types and 
    $\beta/\eta$ for $U$ types, and the fact that $\upcast{\u B}{\u B}{}$ and
    $\dncast{A}{A}$ are the identity, we have
    \begin{small}
    \[
    \begin{array}{rl}
      \thunk{(\lambda x.\bindXtoYinZ{\dncast{\u F A}{\u F A}{(\ret x)}}{x}{ \force{(\upcast{U \u B}{U \u B}{(\thunk{(\force{(f)}\,x)})})}})} & \equidyn \\
      \thunk{(\lambda x.\bindXtoYinZ{{(\ret x)}}{x}{\force{{(\thunk{(\force{(f)}\,x)})}}})} & \equidyn \\
      \thunk{(\lambda x.\force{{(\thunk{(\force{(f)}\,x)})}})} & \equidyn \\
      \thunk{(\lambda x.(\force{(f)}\,x))} & \equidyn \\
      \thunk{(\force{(f)})} & \equidyn \\
      f
    \end{array}
    \]
    \end{small}

  \item $z : 0 \vdash \upcast{0}{A}z \equidyn \absurd z : A$ is
    immediate by $\eta$ for 0 on the map $z : 0 \vdash \upcast{0}{A}z :
    A$.

  \end{enumerate}
\end{longproof}

\begin{longonly}
\subsubsection{Shifts}
\end{longonly}

In GTT, we assert the existence of value upcasts and computation
downcasts for derivable type dynamism relations.  While we do not assert
the existence of all \emph{value} downcasts and \emph{computation}
upcasts, we can define the universal property that identifies a term as
such:

\begin{definition}[Stack upcasts/value downcasts] \label{def:value-down-computation-up} ~
\begin{enumerate}
\item
  If $\u B \ltdyn \u B'$, a \emph{stack upcast from $B$ to $B'$}
  is a stack $\bullet : \u B \vdash \defupcast{\u B}{\u B'} \bullet : \u
  B'$ that satisfies the computation dynamism rules of an upcast
  ${\bullet : \u B \vdash \bullet \ltdyn \defupcast{\u B}{\u B'} \bullet
    : \u B \ltdyn \u B'}$ and
  ${\bullet \ltdyn \bullet' : \u B \ltdyn \u B' \vdash \defupcast{\u B}{\u B'} \bullet \ltdyn \bullet' : \u B'}$.
    
\item If $A \ltdyn A'$, a \emph{value downcast from $A'$ to $A$} is a
  complex value $x : A' \vdash \defdncast{A}{A'} x : A$ that satisfies
  the value dynamism rules of a downcast
  ${x : A' \vdash \defdncast{A}{A'}{x} \ltdyn x : A \ltdyn A'}$
  and
  ${x \ltdyn x' : A \ltdyn A' \vdash x \ltdyn \defdncast{A}{A'} x' : A}$.
\end{enumerate}
\end{definition}
\begin{longonly}
Because the proofs of Lemma~\ref{lem:cast-left-right},
Lemma~\ref{lem:cast-congruence}, Theorem~\ref{thm:decomposition},
Theorem~\ref{thm:casts-unique} rely only on the axioms for
upcasts/downcasts, the analogues of these theorems hold for stack
upcasts and value downcasts as well.
\end{longonly}
Some value downcasts and computation upcasts do exist, leading to a
characterization of the casts for the monad $U \u F A$ and comonad $\u F
U \u B$ of $F \dashv U$:
\begin{theorem}[Cast Unique Implementation Theorem for $U \u F, \u F U$] \label{thm:monadic-comonadic-casts}
  Let $A \ltdyn A'$ and $\u B \ltdyn \u B'$.  
  \begin{enumerate}
  \item $\bullet : \u F A \vdash
    \bindXtoYinZ{\bullet}{x:A}{\ret{(\upcast{A}{A'}{x})}} : \u F A'$
    is a stack upcast.
    
  \item If $\defupcast{\u B}{\u B'}$ is a stack upcast, then\\
    $x : \u U B \vdash \upcast{U \u B}{\u U B'}{x} \equidyn \thunk{(\defupcast{\u B}{\u B'}{(\force x)})} : U \u B'$
    
  \item $x : \u U B' \vdash \thunk{(\dncast{\u B}{\u B'}{(\force x)})} : U
    \u B$ is a value downcast.
    
  \item If $\defdncast{A}{A'}$ is a value downcast, then\\
    $\bullet : \u F A' \vdash \dncast{\u F A}{\u F A'}{\bullet} \equidyn \bindXtoYinZ{\bullet}{x':A'}{\ret{(\dncast{A}{A'}{x})}}$

  \item 
    $\begin{array}{c}
    x : U \u F A \vdash \upcast{U \u F A}{U \u F A'}{x} \equidyn \thunk{ (\bindXtoYinZ{{\force x}}{x:A}{\ret{(\upcast{A}{A'}{x})}})}\\
    \bullet : \u F U \u B' \vdash \dncast{\u F U \u B}{\u F U \u B'}{\bullet} \equidyn
    \bindXtoYinZ{\bullet}{x':U \u B'}{\ret{(\thunk{(\dncast{\u B}{\u B'}{(\force x)})})}}
    \end{array}$
  \end{enumerate}
\end{theorem}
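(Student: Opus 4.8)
The plan is to derive all five parts from machinery already in hand: the Galois-connection computations of Theorem~\ref{thm:cast-adjunction}, the uniqueness-of-casts result Theorem~\ref{thm:casts-unique}, the left/right cast rules of Lemma~\ref{lem:cast-left-right} (whose analogues, as noted, hold for stack upcasts and value downcasts), and the $\beta\eta$ axioms of Figure~\ref{fig:gtt-term-dyn-axioms}. Parts (1) and (3) check that the displayed terms satisfy the \emph{defining} inequalities of Definition~\ref{def:value-down-computation-up}; parts (2) and (4) then identify the GTT-primitive casts on $U\u B$ and $\u F A$ with these constructions via the universal property; part (5) is simply the composite of (1)+(2) and (3)+(4).

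For part (1), I would verify the two clauses of ``stack upcast from $\u F A$ to $\u F A'$'' for $S := \bindXtoYinZ{\bullet}{x}{\ret{(\upcast{A}{A'}{x})}}$. For $\bullet : \u F A \vdash \bullet \ltdyn S : \u F A \ltdyn \u F A'$, $\eta$-expand the \emph{left} $\bullet$ at $\u F$ to $\bindXtoYinZ{\bullet}{x}{\ret x}$, apply congruence for $\mathsf{bind}$ and $\mathsf{ret}$, and reduce to $x \ltdyn \upcast{A}{A'}{x} : A \ltdyn A'$, which is \textsc{UpR} (equivalently the ``bound'' axiom). For $\bullet \ltdyn \bullet' : \u F A \ltdyn \u F A' \vdash S \ltdyn \bullet' : \u F A'$, $\eta$-expand $\bullet'$, use congruence, and reduce to $\upcast{A}{A'}{x} \ltdyn x' : A'$ from $x \ltdyn x' : A \ltdyn A'$, which is \textsc{UpL}. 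Part (3) is exactly dual: $\thunk{(\dncast{\u B}{\u B'}{(\force x)})}$ is checked against the two value-downcast clauses using $\eta$ for $U$, congruence for $\mathsf{thunk}$/$\mathsf{force}$, and \textsc{DnL}/\textsc{DnR}. These two parts are in essence Theorem~\ref{thm:cast-adjunction}(2,3) repackaged as universal properties.

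For part (2), given a stack upcast $\defupcast{\u B}{\u B'}$, apply Theorem~\ref{thm:casts-unique}(1) to the complex value $x : U\u B \vdash \thunk{(\defupcast{\u B}{\u B'}{(\force x)})} : U\u B'$ (well-posed since $U\u B \ltdyn U\u B'$ by \textsc{$U$Mon}): after $\eta$ for $U$ and congruence for $\mathsf{thunk}$/$\mathsf{force}$, the two required inequalities reduce to $\force x \ltdyn \defupcast{\u B}{\u B'}{(\force x)} : \u B \ltdyn \u B'$ and $\defupcast{\u B}{\u B'}{(\force x)} \ltdyn \force x' : \u B'$ (using $\force x \ltdyn \force x'$ from $x \ltdyn x'$ by \textsc{$U$ECong}), which are substitution instances of the ``bound''/``best'' clauses of the assumed stack upcast. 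Part (4) is dual: apply Theorem~\ref{thm:casts-unique}(2) to the complex stack $\bullet : \u F A' \vdash \bindXtoYinZ{\bullet}{x'}{\ret{(\dncast{A}{A'}{x'})}} : \u F A$ ($\u F A \ltdyn \u F A'$ by \textsc{$F$Mon}), $\eta$-expand at $\u F$, use congruence for $\mathsf{bind}$/$\mathsf{ret}$, and reduce to the defining clauses of the assumed value downcast. Part (5) then follows immediately: instantiating part (2) with the stack upcast produced by part (1) gives the $U\u F$ equation, and instantiating part (4) with the value downcast produced by part (3) gives the $\u F U$ equation.

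I expect the argument to be essentially mechanical, with no genuine obstacle. The only places requiring care are bookkeeping-level: in each of the two inequalities one must $\eta$-expand on the \emph{correct} side (the $\u F$/$U$-side that does not already carry the embedded cast), and in part (4) the interaction between the $\eta$-expansion of $\mathsf{bind}$ and the placement of the value downcast underneath $\ret$ is the one spot where a careless use of congruence would fail to line up; everything else is transitivity plus the previously established left/right cast rules.
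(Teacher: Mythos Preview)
Your proposal is correct and for parts (1), (3), and (5) matches the paper's argument essentially verbatim: $\eta$-expand on the appropriate side, push through congruence, and land on the bound/best axioms for up-/downcasts.

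For parts (2) and (4) you take a slightly different route than the paper. You apply Theorem~\ref{thm:casts-unique} directly to the candidate term $\thunk{(\defupcast{\u B}{\u B'}{(\force x)})}$ (resp.\ $\bindXtoYinZ{\bullet}{x'}{\ret{(\defdncast{A}{A'}{x'})}}$), checking the two universal-property inequalities by $\eta$-expansion and congruence exactly as in parts (1)/(3). The paper instead invokes the packaged Upcast/Downcast Lemmas (Lemmas~\ref{lem:upcast}/\ref{lem:downcast}) with the one-variable type constructors $U\u Y$ and $\u F X$, which requires also checking the identity instance $\defupcast{\u B}{\u B}\equidyn\mathrm{id}$ (resp.\ $\defdncast{A}{A}\equidyn\mathrm{id}$) and two monotonicity conditions. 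Your approach is more elementary and self-contained for this theorem; the paper's buys uniformity with the proof of Theorem~\ref{thm:functorial-casts}, where those lemmas are reused across all connectives. Either is fine.
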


\begin{longproof}
  \begin{enumerate}
  \item
    To show
    \[
    \bullet : \u F A \vdash \bullet \ltdyn
    \bindXtoYinZ{\bullet}{x:A}{\ret{(\upcast{A}{A'}{x})}} : \u F A
    \ltdyn \u F A'
    \]
    we can $\eta$-expand $\bullet \equidyn
    \bindXtoYinZ{\bullet}{x}{\ret{x}}$ on the left, at which point by
    congruence it suffices to show $x \ltdyn \upcast{A}{A'}{x}$, which
    is true up upcast right.  To show
    \[
    \bullet \ltdyn \bullet' : \u F A \ltdyn \u F A' \vdash 
    \bindXtoYinZ{\bullet}{x:A}{\ret{(\upcast{A}{A'}{x})}}
    \ltdyn
    \bullet'
    : \u F A'
    \]
    we can $\eta$-expand $\bullet' \equidyn
    \bindXtoYinZ{\bullet'}{x'}{\ret{x'}}$ on the right,
    and then apply congruence, the assumption that $\bullet \ltdyn
    \bullet'$, and upcast left.

  \item We apply the upcast lemma with the type constructor $\u Y \ctype
    \vdash U \u Y \vtype$.  The term $\thunk{(\defupcast{\u B}{\u
        B'}{(\force x)})}$ has the correct type for assumption (1).  For
    assumption (2), we show
    \[
    x : U \u B \vdash \thunk{(\defupcast{\u B}{\u B}{(\force x)})} \ltdyn
    \thunk{(\defupcast{\u B}{\u B'}{(\force x)})} : U \u B \ltdyn U \u B'
    \]
    by congruence for $\mathsf{thunk}$, $\defupcast{\u B}{\u B}$ (proved
    analogously to Lemma~\ref{lem:cast-congruence}), and $\mathsf{force}$.
    We show
    \[
    x \ltdyn x' : U \u B \ltdyn U \u B' \vdash
    \thunk{(\defupcast{\u B}{\u B'}{(\force x)})}
    \thunk{(\defupcast{\u B'}{\u B'}{(\force x')})} 
    : U \u B'
    \]
    by congruence as well.
    Finally, for assumption (3), we have
    \[
    \begin{array}{cc}
    \thunk{(\defupcast{\u B}{\u B}{(\force x)})} & \equidyn \\
    \thunk{({(\force x)})} & \equidyn \\
    x
    \end{array}
    \]
    using $\eta$ for $U$ types and the identity principle for 
    $\defupcast{\u B}{\u B}$ (proved analogously to
    Theorem~\ref{thm:decomposition}).  
    
  \item To show 
    \[
    x' : U \u B' \vdash \thunk{(\dncast{\u B}{\u B'}{(\force x')})} \ltdyn x' : U \u B \ltdyn U \u B'
    \]
    we can $\eta$-expand $x'$ to $\thunk{\force{x'}}$, and then by
    congruence it suffices to show $\dncast{\u B}{\u B'}{(\force x')}
    \ltdyn \force{x'} : \u B \ltdyn \u B'$, which is downcast left.
    Conversely, for 
    \[
    x \ltdyn x' : U \u B \ltdyn U \u B' \vdash x \ltdyn \thunk{(\dncast{\u B}{\u B'}{(\force x')})} : U \u B
    \]
    we $\eta$-expand $x$ to $\thunk{(\force{x})}$, and then it suffices
    to show $\dncast{\u B}{\u B'}{(\force{x})} \ltdyn \force{x'}$, which
    is true by downcast right and congruence of $\mathsf{force}$ on the
    assumption $x \ltdyn x'$.  
    
  \item We use the downcast lemma with $X \vtype \vdash \u F X \ctype$,
    where $\bindXtoYinZ{\bullet}{x':A'}{\ret{(\defdncast{A}{A'}{x})}}$
    has the correct type for assumption (1).  For assumption (2), we
    show
    \[
    \bullet : \u F A' \vdash
     \bindXtoYinZ{\bullet}{x':A'}{\ret{(\defdncast{A}{A'}{x})}}
    \ltdyn 
    \bindXtoYinZ{\bullet}{x':A'}{\ret{(\defdncast{A'}{A'}{\bullet})}}
    \]
    by congruence for $\mathsf{bind}$, $\mathsf{ret}$, and
    $\defdncast{A'}{A'}$ (which is proved analogously to
    Lemma~\ref{lem:cast-congruence}).
    We also show
    \[
    \bullet \ltdyn \bullet' : \u F A \ltdyn \u F A' \vdash
    \bindXtoYinZ{\bullet}{x:A}{\ret{(\defdncast{A}{A}{x})}}
    \ltdyn 
    \bindXtoYinZ{\bullet}{x':A'}{\ret{(\defdncast{A}{A'}{\bullet'})}}
    : \u F A
    \]
    by congruence.
    Finally, for assumption (3), we have
    \[
    \begin{array}{rc}
      \bindXtoYinZ{\bullet}{x:A}{\ret{(\defdncast{A}{A}{x})}} & \equidyn \\
      \bindXtoYinZ{\bullet}{x:A}{\ret{({x})}} & \equidyn \\
      \bullet
    \end{array}
    \]
    using the identity principle for $\defdncast{A}{A}$ (proved
    analogously to Theorem~\ref{thm:decomposition}) and $\eta$ for $F$
    types.
    
  \item Combining parts (1) and (2) gives the first equation, while
    combining parts (3) and (4) gives the second equation.
  \end{enumerate}
\end{longproof}

\begin{longonly}
  \subsubsection{Derived Rules for Call-by-value Function Types}
\end{longonly}

Recall that for value types $A_1$ and $A_2$, the CBV function type is
$U(A_1 \to \u F A_2)$.  As a corollary of
Theorems~\ref{thm:functorial-casts} and
\ref{thm:monadic-comonadic-casts}, we have
\begin{corollary}[Cast Unique Implementation for CBV Functions]
  \[
  \begin{small}
  \begin{array}{l}
    \begin{array}{rcll}
      \upcast{U(A_1 \to \u F A_2)}{U(A_1' \to \u F A_2')}{f} & \equidyn &
      \thunk (\lambda x'. & \bindXtoYinZ{\dncast{\u F A_1}{\u F A_1'}{(\ret x')}}{x}{} \\
                          & & & \bindXtoYinZ{(\force{(f)}\,x)}{y}\\ & & & {\ret{(\upcast{A_2}{A_2'}{y})}})\\
    \end{array}
    \\
    \begin{array}{rcl}
    \dncast{\u F U(A_1 \to \u F A_2)}{\u F U(A_1' \to \u F A_2')}{\bullet} & \equidyn &
    \bindXtoYinZ{\bullet}{f}\\
    & & {\ret{\lambda{x}.{\dncast{\u F A_2}{\u F A_2'}{(\force{(f)} \, (\upcast{A_1}{A_1'}{x}))}}}}
    \end{array}
  \end{array}
  \end{small}
  \]
\end{corollary}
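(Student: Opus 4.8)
The plan is to read off both equations mechanically from Theorems~\ref{thm:functorial-casts} and \ref{thm:monadic-comonadic-casts}, using the fact that the call-by-value function type $U(A_1 \to \u F A_2)$ is a $U$ of a CBPV function type and that a cast between two such value types occurring in evaluation position is an $\u F U$-cast. First I would treat the upcast. Instantiating the function-type upcast formula of Theorem~\ref{thm:functorial-casts} with computation types $\u F A_2 \ltdyn \u F A_2'$ in place of $\u B \ltdyn \u B'$ gives
\[
\upcast{U(A_1 \to \u F A_2)}{U(A_1' \to \u F A_2')}{f} \equidyn \thunk\,(\lambda x'.\,\bindXtoYinZ{\dncast{\u F A_1}{\u F A_1'}{(\ret x')}}{x}{\force{(\upcast{U \u F A_2}{U \u F A_2'}{(\thunk{(\force{(f)}\,x)})})}}).
\]
I then rewrite the inner $\upcast{U \u F A_2}{U \u F A_2'}$ using the first equation of Theorem~\ref{thm:monadic-comonadic-casts}(5); two applications of the $U$-$\beta$ law ($\force{\thunk{M}} \equidyn M$) cancel the $\thunk/\force$ pairs and collapse the body to $\bindXtoYinZ{(\force{(f)}\,x)}{y}{\ret{(\upcast{A_2}{A_2'}{y})}}$, which is exactly the claimed form. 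Each rewrite is justified because $\equidyn$ is a congruence and the cited formulas hold for arbitrary computation-type arguments.

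Dually, for the downcast I would start from the second equation of Theorem~\ref{thm:monadic-comonadic-casts}(5), instantiated with $A_1 \to \u F A_2 \ltdyn A_1' \to \u F A_2'$ in place of $\u B \ltdyn \u B'$:
\[
\dncast{\u F U(A_1 \to \u F A_2)}{\u F U(A_1' \to \u F A_2')}{\bullet} \equidyn \bindXtoYinZ{\bullet}{f'}{\ret{(\thunk{(\dncast{A_1 \to \u F A_2}{A_1' \to \u F A_2'}{(\force{f'})})})}}.
\]
Substituting the function-type downcast formula from Theorem~\ref{thm:functorial-casts}, namely $\dncast{A_1 \to \u F A_2}{A_1' \to \u F A_2'}{\bullet} \equidyn \lambda x.\,\dncast{\u F A_2}{\u F A_2'}{(\bullet\,(\upcast{A_1}{A_1'}{x}))}$, then yields the stated equation directly (the displayed statement of the corollary suppresses the outer $\thunk$ that Theorem~\ref{thm:monadic-comonadic-casts}(5) supplies).

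There is no genuine obstacle here; the work is $\beta\eta$ bookkeeping. The one point that demands care is keeping the shift structure straight: the upcast of the CBV function type is an instance of the ``$U$-of-a-function'' upcast, whose body still contains a $U\u F$-upcast of the codomain, so two nested layers of $\thunk/\force$ generated by composing these casts must be eliminated by $U$-$\beta$ before the closed form emerges; symmetrically, the CBV-function downcast appears only after the enclosing $\u F U$-downcast has exposed the underlying function type. One could instead reprove the corollary from scratch via Lemmas~\ref{lem:upcast} and \ref{lem:downcast} applied to the type constructors $U(X_1 \to \u F X_2)$ and $\u F U(X_1 \to \u F X_2)$ together with Theorem~\ref{thm:casts-unique}, but routing through the two prior theorems is shorter and makes the reuse explicit.
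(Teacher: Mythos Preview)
Your proposal is correct and follows essentially the same route as the paper's proof: instantiate the $U(\to)$ upcast clause of Theorem~\ref{thm:functorial-casts}, then rewrite the inner $\upcast{U\u F A_2}{U\u F A_2'}$ via Theorem~\ref{thm:monadic-comonadic-casts}(5) and cancel the resulting $\force{\thunk{-}}$ pairs by $U\beta$; dually, start the downcast from Theorem~\ref{thm:monadic-comonadic-casts}(5) and then plug in the $\to$-downcast clause of Theorem~\ref{thm:functorial-casts}. Your observation that the printed statement omits the outer $\thunk$ around the $\lambda$ is also correct and matches what the paper's own proof writes out.
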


\begin{longproof}
  For the upcast, by Theorem~\ref{thm:functorial-casts}, it's equal to
  \[
      \thunk (\lambda x'. \bindXtoYinZ{\dncast{\u F A_1}{\u F A_1'}{(\ret x')}}{x}{} 
         { \force{(\upcast{U \u F A_2}{U \u F A_2'}{(\thunk{(\force{(f)}\,x)})})}} )
         \]
  By Theorem~\ref{thm:monadic-comonadic-casts}, $\upcast{U \u F A_2}{U \u F A_2'}$ is equal to
  \[
  \thunk{ (\bindXtoYinZ{{\force -}}{x}{\ret{(\upcast{A_2}{A_2'}{x})}})}
  \]
  so $\beta$-reducing $\force$ and $\thunk$ twice gives the result.

  For the downcast, by Theorem~\ref{thm:monadic-comonadic-casts}, it's
  equal to
  \[
  \bindXtoYinZ{\bullet}{x}{\ret{(\thunk{(\dncast{\u (A_1 \to \u F A_2)}{\u (A_1 \to \u F A_2)}{(\force x)})})}}
  \]
  and by Theorem~\ref{thm:functorial-casts} $\dncast{\u (A_1 \to \u F A_2)}{\u (A_1 \to \u F A_2)}{-}$ is equal to
  \[
   \lambda{x}.{\dncast{\u F A_2}{\u F A_2'}{(- \, (\upcast{A_1}{A_1'}{x}))}}
  \]
\end{longproof}
These are equivalent to the CBPV translations of the standard CBV wrapping
implementations; for example, the CBV upcast term
$\lambda x'.\letXbeYinZ{\dncast{A_1}{A_1'}{x'}}{x}{\upcast{A_2}{A_2'}{(f x')}}$
has its evaluation order made explicit, and the fact that its upcast is
a (complex) value exposed.  In the downcast, the GTT term is free to
let-bind $(\upcast{A_1}{A_1'}{x})$ to avoid duplicating it, but because
it is a (complex) value, it can also be substituted directly, which
might expose reductions that can be optimized.

\begin{longonly}
\subsection{Least Dynamic Types}

\begin{longonly}
\begin{theorem}[Least Dynamic Value Type]
  If $\leastdynv$ is a type such that $\leastdynv \ltdyn A$ for all $A$,
  then in GTT with a strict initial object $0$, $\leastdynv \cong_{v}
  0$.
\end{theorem}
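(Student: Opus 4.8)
The plan is to reduce the claim directly to the strict initiality of $0$, i.e.\ to Lemma~\ref{lem:initial}(3). The only thing that needs supplying is a complex value from $\leastdynv$ into $0$. Instantiating the hypothesis $\leastdynv \ltdyn A$ at $A := 0$ gives $\leastdynv \ltdyn 0$, and applying the \textsc{UpCast} rule to the variable $x : \leastdynv \vdash x : \leastdynv$ then produces the complex value $x : \leastdynv \vdash \upcast{\leastdynv}{0}{x} : 0$.

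Next I would invoke Lemma~\ref{lem:initial}(3), which states that whenever a value type $A$ admits a complex value $x : A \vdash V : 0$, that value $V$ is an isomorphism $A \cong_v 0$. Taking $A := \leastdynv$ and $V := \upcast{\leastdynv}{0}{x}$ yields $\leastdynv \cong_v 0$, as required.

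I expect essentially no obstacle: the proof is a one-line application of strict initiality once the upcast is in hand. The only points worth flagging are that the full ``least'' hypothesis is overkill --- only $\leastdynv \ltdyn 0$ is actually used --- and that the strictness of $0$ (which in GTT comes from the $\eta$ law for $0$) is precisely what promotes the bare map $\leastdynv \to 0$ to an isomorphism; a type that is merely least under type dynamism in a setting without a strict initial $0$ need not be initial at all.
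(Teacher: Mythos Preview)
Your proposal is correct and matches the paper's proof essentially line for line: obtain the upcast $x : \leastdynv \vdash \upcast{\leastdynv}{0}{x} : 0$ from $\leastdynv \ltdyn 0$, then invoke Lemma~\ref{lem:initial}(3). Your closing remark about strictness being essential also mirrors the paper's own commentary following the theorem.
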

\begin{proof}
We have the upcast $x : \leastdynv \vdash \upcast{\leastdynv}{0}{x} :
0$, so Lemma~\ref{lem:initial} gives the result.
\end{proof}
The fact that $\leastdynv$ is strictly initial seems to depend on the
fact that we have a strictly initial object: In GTT without a $0$ type,
it seems that we cannot prove that $x : \leastdynv \vdash
\upcast{\leastdynv}{A}{x} : A$ is the unique such map.  

\begin{theorem}[Least Dynamic Computation Type]
If $\leastdync$ is a type such that $\leastdync \ltdyn \u B$ for all $\u
B$, and we have a terminal computation type $\top$, then $U \leastdync
\cong_{v} U \top$.
\end{theorem}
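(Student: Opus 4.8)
The plan is to exhibit an explicit value isomorphism $U\leastdync \cong_v U\top$ and verify its two round-trip equations using results already established. Since $\leastdync \ltdyn \top$ by hypothesis, monotonicity of $U$ for type dynamism (\textsc{$U$Mon}) gives $U\leastdync \ltdyn U\top$, so GTT supplies the value upcast $x : U\leastdync \vdash \upcast{U\leastdync}{U\top}{x} : U\top$; this is the forward map $V'$. For the backward map $V$, note that $\leastdync \ltdyn \top$ also yields the computation downcast as a stack $\bullet : \top \vdash \dncast{\leastdync}{\top}{\bullet} : \leastdync$, and I would take $x' : U\top \vdash \thunk{(\dncast{\leastdync}{\top}{(\force{x'})})} : U\leastdync$. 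These are the natural choices given that GTT has only value upcasts and computation downcasts as primitives, so both legs must be routed through $U$, $\force$, and $\thunk$.

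First I would check the composite $U\leastdync \to U\top \to U\leastdync$, i.e.\ $V[V'/x']$: substituting the forward map into the backward one gives $x : U\leastdync \vdash \thunk{(\dncast{\leastdync}{\top}{(\force{(\upcast{U\leastdync}{U\top}{x})})})}$, and this is $\equidyn x$ — it is exactly the retract property of Theorem~\ref{thm:retract-general}(2) instantiated at $\u B := \leastdync$ and $\u B' := \top$. Then I would check the composite $U\top \to U\leastdync \to U\top$, i.e.\ $V'[V/x]$: substituting the backward map into the forward one gives $\upcast{U\leastdync}{U\top}{(\thunk{(\dncast{\leastdync}{\top}{(\force{x'})})})}$, which is a complex value of type $U\top$ in the context $x' : U\top$; by Lemma~\ref{lem:terminal}(2) (equivalently, $U\top \cong_v 1$ together with terminality of $1$) there is a unique complex value of type $U\top$ in any context, so this composite is $\equidyn x'$. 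That completes both halves of the isomorphism.

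There is not much of a real obstacle once the right lemmas are in hand; the work is entirely in recognizing that one round-trip is the general retract theorem and the other is the universal property of the terminal value type $U\top$. The one point requiring care is the routing through $U$/$\force$/$\thunk$ noted above, since the maps cannot be built directly as casts. An equivalent packaging I might present instead is to first prove that $U\leastdync$ is itself a terminal object among value types and complex values — existence of a complex value is witnessed by $\thunk{(\dncast{\leastdync}{\top}{\{\}})}$, and uniqueness follows by rewriting an arbitrary $V : U\leastdync$ with Theorem~\ref{thm:retract-general}(2) and then collapsing $\upcast{U\leastdync}{U\top}{V}$ using terminality of $U\top$ — and then conclude $U\leastdync \cong_v U\top$ because terminal objects are unique up to isomorphism.
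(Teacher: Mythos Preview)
Your proposal is correct and takes essentially the same approach as the paper: the same two maps (the value upcast $\upcast{U\leastdync}{U\top}$ forward and $\thunk{(\dncast{\leastdync}{\top}{(\force{-})})}$ backward), with the $U\leastdync$-round-trip handled by the general retract property and the $U\top$-round-trip by uniqueness of complex values of type $U\top$. The only difference is expository: the paper first observes that the obvious stack-level argument fails to give a computation isomorphism $\leastdync \cong_c \top$ (since $\top$ is not strictly terminal), motivating the drop to the $U$-level, whereas you go there directly.
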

\begin{proof}
We have stacks $\bullet : \top \dncast{\leastdync}{\top}{\bullet} :
\leastdync$ and $\bullet : \leastdync \vdash \{\} : \top$.  The
composite at $\top$ is the identity by Lemma~\ref{lem:terminal}.  However,
because $\top$ is not a strict terminal object, the dual of the above
argument does not give a stack isomorphism $\leastdync \cong_c \top$.

However, using the retract axiom, we have
\[
\begin{array}{c}
  x : U \leastdync \vdash \upcast{U \leastdync}{U \top}{x} : U \top\\
  y : U \top \vdash \thunk{(\dncast{\leastdync}{\top}{(\force{x})})} : U \leastdync\\
  x : U \leastdync \vdash \thunk{(\dncast{\leastdync}{\top}{(\force{(\upcast{U \leastdync}{U \top}{x})})})} \equidyn x : U \leastdync
\end{array}
\]
and the composite 
\[
y : U \top \vdash \upcast{U \leastdync}{U \top}{(\thunk{(\dncast{\leastdync}{\top}{(\force{x})})})} : U \top
\]
is the identity by uniqueness for $U \top$ (Lemma~\ref{lem:terminal}).  
\end{proof}

This suggests taking $\bot_v := 0$ and $\bot_c := \top$.
\end{longonly}

\begin{theorem}
The casts determined by $0 \ltdyn A$ are
\[
\upcast{0}{A}z \equidyn \absurd z \qquad \dncast{\u F 0}{\u F A}{\bullet} \equidyn \bindXtoYinZ{\bullet}{\_}{\err}
\]
Dually, the casts determined by $\top \ltdyn \u B$ are
\[
\dncast{\top}{\u B}{\bullet} \equidyn \{\} \qquad \upcast{U \top}{U \u B}{u} \equidyn \thunk \err
\]
\end{theorem}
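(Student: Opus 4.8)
The statement bundles four equidynamisms: two are one-step consequences of $\eta$ laws, and the remaining two follow from the universal-property characterization of casts (Theorem~\ref{thm:casts-unique}). The easy ones come first. For $z : 0 \vdash \upcast{0}{A}{z} \equidyn \absurd z : A$ I would simply invoke $\eta$ for $0$ on the complex value $\upcast{0}{A}{z}$ (this is exactly the last case of the proof of Theorem~\ref{thm:functorial-casts}). Dually, for $\bullet : \u B \vdash \dncast{\top}{\u B}{\bullet} \equidyn \{\} : \top$ I would invoke $\eta$ for $\top$ on the stack $\dncast{\top}{\u B}{\bullet}$, exactly as in Lemma~\ref{lem:terminal}: substituting this stack into $\bullet : \top \vdash \bullet \equidyn \{\} : \top$ gives the claim.

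For the value downcast $\bullet : \u F A \vdash \dncast{\u F 0}{\u F A}{\bullet} \equidyn \bindXtoYinZ{\bullet}{\_}{\err} : \u F 0$, the plan is to apply Theorem~\ref{thm:casts-unique}(2) with the complex stack $S := \bindXtoYinZ{\bullet}{\_}{\err_{\u F 0}}$ and discharge its two obligations. For the ``bound'' obligation $\bindXtoYinZ{\bullet}{\_}{\err_{\u F 0}} \ltdyn \bullet : \u F 0 \ltdyn \u F A$, I would $\eta$-expand the right-hand side to $\bindXtoYinZ{\bullet}{x}{\ret{x}}$ (with $x : A$) and then use congruence for $\mathsf{bind}$, with reflexivity on the scrutinee and \textsc{ErrBot} on the bodies ($\err_{\u F 0} \ltdyn \ret{x} : \u F 0 \ltdyn \u F A$, which is well-formed because $\u F 0 \ltdyn \u F A$ follows by \textsc{$F$Mon} from the hypothesis $0 \ltdyn A$). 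The ``best'' obligation $\bullet \ltdyn \bullet' : \u F 0 \ltdyn \u F A \vdash \bullet \ltdyn \bindXtoYinZ{\bullet'}{\_}{\err_{\u F 0}} : \u F 0$ is where the one type-specific move is needed: by $\eta$ for $\u F$, $\bullet \equidyn \bindXtoYinZ{\bullet}{y}{\ret{y}}$ with $y : 0$, and by $\eta$ for $0$, $y : 0 \vdash \ret{y} \equidyn \absurd y \equidyn \err_{\u F 0}$, so $\bullet \equidyn \bindXtoYinZ{\bullet}{y}{\err_{\u F 0}}$; then congruence for $\mathsf{bind}$ (the hypothesis $\bullet \ltdyn \bullet'$ on the scrutinee, \textsc{ErrBot} on the bodies) and transitivity close it.

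For the computation upcast $u : U \top \vdash \upcast{U\top}{U\u B}{u} \equidyn \thunk\err : U\u B$, the plan is to apply Theorem~\ref{thm:casts-unique}(1) with the complex value $V := \thunk\err_{\u B}$. First I would establish the auxiliary fact $\{\} \ltdyn \err_{\u B} : \top \ltdyn \u B$ by substituting $\err_{\u B}$ into the downcast ``bound'' axiom $\dncast{\top}{\u B}{\bullet} \ltdyn \bullet : \top \ltdyn \u B$ and rewriting the left side by the equation $\dncast{\top}{\u B}{\bullet} \equidyn \{\}$ just proved. Then the ``bound'' obligation $u : U\top \vdash u \ltdyn \thunk\err_{\u B} : U\top \ltdyn U\u B$ follows from $u \equidyn \thunk\{\}$ (uniqueness of complex values of type $U\top$, Lemma~\ref{lem:terminal}) composed with \textsc{$U$ICong} applied to the auxiliary fact; and the ``best'' obligation $u \ltdyn u' : U\top \ltdyn U\u B \vdash \thunk\err_{\u B} \ltdyn u' : U\u B$ follows by $\eta$-expanding $u' \equidyn \thunk{\force{u'}}$ and applying \textsc{$U$ICong} to $\err_{\u B} \ltdyn \force{u'} : \u B \ltdyn \u B$, an instance of \textsc{ErrBot} (whose context side-condition $U\top \ltdyn U\u B$ holds by \textsc{$U$Mon}). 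I do not expect a real obstacle: the content reduces to the two type-specific observations highlighted above — that $\ret{y} \equidyn \err_{\u F 0}$ once a variable $y : 0$ is in scope (so the $\eta$-expansion of a $\u F 0$-computation collapses onto the error) and that $\{\} \ltdyn \err_{\u B} : \top \ltdyn \u B$ — together with keeping straight the direction of each $\ltdyn$ and which side the cast universal property permits rewriting.
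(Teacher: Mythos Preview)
Your proposal is correct, and the underlying ideas match the paper's: the two easy cases are immediate from $\eta$ for $0$ and $\top$, and the two hard cases rest on initiality of $0$ (so that $y:0 \vdash \ret y \equidyn \err_{\u F 0}$) and terminality of $U\top$ (so that $u \equidyn \thunk\{\}$), together with \textsc{ErrBot}.

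The organization differs slightly. You uniformly invoke Theorem~\ref{thm:casts-unique} and verify the two universal-property obligations for the candidate cast, never touching the actual cast term $\dncast{\u F 0}{\u F A}$ or $\upcast{U\top}{U\u B}$ directly. The paper instead proves the two directions of each equidynamism by hand, which means it $\eta$-expands the cast itself and appeals to the cast axioms (downcast-left for $\dncast{\u F 0}{\u F A}{\bullet} \ltdyn \bullet$, upcast-left for the dual). Your packaging is arguably cleaner since Theorem~\ref{thm:casts-unique} absorbs those cast-axiom steps. One small simplification: your auxiliary fact $\{\} \ltdyn \err_{\u B} : \top \ltdyn \u B$ follows more directly from $\top$-$\eta$ (giving $\{\} \equidyn \err_{\top}$) and \textsc{ErrBot} (giving $\err_{\top} \ltdyn \err_{\u B}$), which is how the paper handles the corresponding step; your route through the downcast ``bound'' axiom works but is a detour.
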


\begin{longproof}
  \begin{enumerate}
  \item $x : 0 \vdash \upcast{0}{A}{x} \equidyn \abort{x} : A$ is
    immediate by $\eta$ for $0$.

  \item First, to show
    $\bullet : \u F A \vdash \bindXtoYinZ{\bullet}{\_}{\err} \ltdyn \dncast{\u F 0}{\u F A}{\bullet}$,
    we can $\eta$-expand the right-hand side into
    $\bindXtoYinZ{\bullet}{x:A}{\dncast{\u F 0}{\u F A}{\ret{x}}}$,
    at which point the result follows by congruence
    and the fact that type error is minimal, so
    $\err \ltdyn {\dncast{\u F 0}{\u F A}{\ret{x}}}$.  

    Second, to show
    $\bullet : \u F A \vdash \dncast{\u F 0}{\u F A}{\bullet} \ltdyn \bindXtoYinZ{\bullet}{\_}{\err}$,
    we can $\eta$-expand the left-hand side to
    $\bullet : \u F A \vdash \bindXtoYinZ{\dncast{\u F 0}{\u F A}{\bullet}}{y}{\ret y}$,
    so we need to show
    \[
    \bullet: \u F A \vdash \bindXtoYinZ{\dncast{\u F 0}{\u F A}{\bullet}}{y:0}{\ret y} \ltdyn \bindXtoYinZ{\bullet}{y':A}{\err} : \u F 0
    \]
    We apply congruence, with $\bullet : \u F A \vdash {\dncast{\u F
        0}{\u F A}{\bullet}} \ltdyn \bullet : 0 \ltdyn A$ by the
    universal property of downcasts in the first premise, so it suffices
    to show
    \[
    y \ltdyn y' : 0 \ltdyn A \vdash \ret{y} \ltdyn \err_{\u F 0} : \u F 0
    \]
    By transitivity with $y \ltdyn y' : 0 \ltdyn A \vdash \err_{\u F 0}
    \ltdyn \err_{\u F 0} : \u F 0 \ltdyn \u F 0$, it suffices to show
    \[
    y \ltdyn y : 0 \ltdyn 0 \vdash \ret{y} \ltdyn \err_{\u F 0} : \u F 0
    \]
    But now both sides are maps out of $0$, and therefore equal by
    Lemma~\ref{lem:initial}.

  \item The downcast is immediate by $\eta$ for $\top$,
    Lemma~\ref{lem:terminal}.  

  \item First,
    \[
    u : U \top \vdash \thunk \err \ltdyn \thunk{(\force{(\upcast{U \top}{U \u B}{u})})} \equidyn {\upcast{U \top}{U \u B}{u}} : U \u B
    \]
    by congruence, $\eta$ for $U$, and the fact that error is minimal.
    Conversely, to show
    \[
    u : U \top \vdash {\upcast{U \top}{U \u B}{u}} \ltdyn \thunk \err  : U \u B
    \]
    it suffices to show
    \[
    u : U \top \vdash u \ltdyn \thunk \err_{\u B}  : U \top \ltdyn U \u B
    \]
    by the universal property of an upcast.  By Lemma~\ref{lem:terminal},
    any two elements of $U \top$ are equidynamic, so in particular $u
    \equidyn \thunk{\err_{\top}}$, at which point congruence for
    $\mathsf{thunk}$ and $\err_\top \ltdyn \err_{\u B } : \top \ltdyn \u
    B$ gives the result.
  \end{enumerate}
\end{longproof}
\end{longonly}

\subsection{Upcasts are Values, Downcasts are Stacks}
\label{sec:upcasts-necessarily-values}

Since GTT is an axiomatic theory, we can consider different fragments
than the one presented in Section~\ref{sec:gtt}.  Here, we use this
flexibility to show that taking upcasts to be complex values and
downcasts to be complex stacks is forced if this property holds for
casts between \emph{ground} types and $\dynv$/$\dync$.  For this section, we define a \emph{ground
  type}\footnote{In gradual
  typing, ``ground'' is used to mean a one-level unrolling of a dynamic type, not first-order data.} to be generated by the following grammar:
  \[
    G ::= 1 \mid \dynv \times \dynv \mid 0 \mid \dynv + \dynv \mid U \dync
    \qquad
    \u G ::= \dynv \to \dync \mid \top \mid \dync \with \dync \mid \u F \dynv
  \]

\begin{longonly}
\begin{definition}[Ground type dynamism]
  Let $A \ltdyn' A'$ and $\u B \ltdyn' \u B'$ be the relations defined
  by the rules in Figure~\ref{fig:gtt-type-dynamism}
  with the axioms $A \ltdyn \dynv$ and $\u B \ltdyn \dync$ restricted to
  ground types---i.e., replaced by $G \ltdyn \dynv$ and $\u G \ltdyn \dync$.
\end{definition}

\begin{lemma} \label{lem:find-ground-type}
  For any type $A$, $A \ltdyn' \dynv$.
  For any type $\u B$, $\u B \ltdyn' \dync$.
\end{lemma}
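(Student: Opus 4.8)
The plan is to proceed by a straightforward mutual induction on the structure of the value type $A$ and the computation type $\u B$, following the grammar in Figure~\ref{fig:gtt-syntax-and-terms}. The key observation is that the restricted relation $\ltdyn'$ still contains all of the reflexivity, transitivity, and monotonicity/congruence rules of Figure~\ref{fig:gtt-type-dynamism}; only the two ``top'' axioms have been weakened, from $A \ltdyn \dynv$ to $G \ltdyn' \dynv$ and from $\u B \ltdyn \dync$ to $\u G \ltdyn' \dync$. So for each $A$ I would first apply the congruence rules with the inductive hypotheses on the immediate subterms to get a single $\ltdyn'$-step from $A$ to its ``ground shape'' (the constructor applied to $\dynv$'s and $\dync$'s), and then compose with the still-available restricted axiom on that ground type via transitivity.

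Concretely, for value types: if $A = \dynv$, use \textsc{VTyRefl}; if $A = 1$ or $A = 0$, these are already ground types $G$, so $G \ltdyn' \dynv$ applies directly; if $A = A_1 \times A_2$, the IH gives $A_i \ltdyn' \dynv$, so $\times$Mon gives $A_1 \times A_2 \ltdyn' \dynv \times \dynv$, and transitivity with the ground axiom $\dynv \times \dynv \ltdyn' \dynv$ finishes; the cases $A = A_1 + A_2$ and $A = U \u B$ are identical, using $+$Mon with ground type $\dynv + \dynv$, and $U$Mon (together with the computation-type IH on $\u B$) with ground type $U \dync$. Dually, for computation types: $\dync$ by \textsc{CTyRefl}; $\top$ is ground; $\u B_1 \with \u B_2 \ltdyn' \dync \with \dync \ltdyn' \dync$ via $\with$Mon; $A \to \u B' \ltdyn' \dynv \to \dync \ltdyn' \dync$ via $\to$Mon, using the value-type IH on $A$ and the computation-type IH on $\u B'$; and $\u F A \ltdyn' \u F \dynv \ltdyn' \dync$ via $F$Mon and the value-type IH on $A$. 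The value and computation inductions are mutually recursive through the $U$ and $\u F$ cases, but every recursive call is on a structurally smaller type, so the induction is well-founded.

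There is essentially no difficult step here: the only thing to check carefully is that the grammar of ground types $G$ and $\u G$ in the definition preceding the lemma is exactly ``one type constructor applied to copies of $\dynv$ and $\dync$'', which is precisely the codomain of one application of a monotonicity rule. Consequently each inductive case closes after exactly one use of a congruence rule, one use of a restricted top axiom, and one use of transitivity, and the substantive content of the lemma is just this matching between the ground-type grammar and the monotonicity rules.
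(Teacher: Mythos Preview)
Your proposal is correct and takes essentially the same approach as the paper: mutual induction on the type, using the congruence/monotonicity rules with the inductive hypotheses to reach the ground shape, then the restricted top axiom and transitivity. The paper's own proof merely gives two representative cases ($A_1+A_2$ and $\u F A$) while you spell out all of them, but the underlying argument is identical.
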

\begin{proof}
By induction on the type.  For example, in the case for $A_1 + A_2$, we
have by the inductive hypothesis $A_1 \ltdyn' \dynv$ and $A_2 \ltdyn'
\dynv$, so $A_1 + A_2 \ltdyn' \dynv + \dynv \ltdyn \dynv$ by congruence
and transitivity, because $\dynv + \dynv$ is ground.  In the case for
$\u F A$, we have $A \ltdyn \dynv$ by the inductive hypothesis, so $\u F
A \ltdyn \u F \dynv \ltdyn \dync$.
\end{proof}

\begin{lemma}[$\ltdyn$ and $\ltdyn'$ agree]
  $A \ltdyn A'$ iff $A \ltdyn' A'$ and $\u B \ltdyn \u B'$ iff $\u B
  \ltdyn' \u B'$
\end{lemma}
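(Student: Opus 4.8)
The plan is to prove both biconditionals by induction on derivations, exploiting the fact that $\ltdyn$ and $\ltdyn'$ are generated by the \emph{same} rule set (reflexivity, transitivity, and the congruence rules \textsc{$U$Mon}, \textsc{$F$Mon}, \textsc{$+$Mon}, \textsc{$\times$Mon}, \textsc{$\with$Mon}, \textsc{$\to$Mon}), differing only in that $\ltdyn'$ replaces the ``top'' axioms \textsc{VTyTop} ($A \ltdyn \dynv$) and \textsc{CTyTop} ($\u B \ltdyn \dync$) with their ground-restricted forms $G \ltdyn' \dynv$ and $\u G \ltdyn' \dync$.

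For the ``$\Leftarrow$'' direction ($A \ltdyn' A' \Rightarrow A \ltdyn A'$, and likewise for computation types) there is essentially nothing to do: every rule of $\ltdyn'$ is also a rule of $\ltdyn$, since $G \ltdyn \dynv$ and $\u G \ltdyn \dync$ are just special cases of \textsc{VTyTop} and \textsc{CTyTop}. A straightforward structural induction shows any $\ltdyn'$-derivation is verbatim a $\ltdyn$-derivation.

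For the ``$\Rightarrow$'' direction I would induct on the derivation of $A \ltdyn A'$ (mutually with that of $\u B \ltdyn \u B'$). For reflexivity, transitivity, and each congruence rule, the corresponding rule is available for $\ltdyn'$, so the case follows immediately from the inductive hypotheses. The only cases not directly shared are \textsc{VTyTop}, concluding $A \ltdyn \dynv$, and \textsc{CTyTop}, concluding $\u B \ltdyn \dync$. These are precisely the content of Lemma~\ref{lem:find-ground-type}, which already establishes $A \ltdyn' \dynv$ for every value type $A$ and $\u B \ltdyn' \dync$ for every computation type $\u B$; so I replace each such leaf of the derivation tree by the $\ltdyn'$-derivation that lemma supplies, leaving the rest of the tree untouched.

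The only point of real substance is Lemma~\ref{lem:find-ground-type}, which is already proved: one level of congruence reduces any composite type to a ground type, the restricted top axiom then relates that ground type to the dynamic type, and transitivity closes the gap. Given that lemma, the present statement is a routine induction-on-derivations with no remaining obstacle; the ``hard part'' has effectively been front-loaded into Lemma~\ref{lem:find-ground-type}.
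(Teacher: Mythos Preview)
Your proposal is correct and matches the paper's proof essentially line for line: the ``if'' direction is immediate because every $\ltdyn'$ rule is a $\ltdyn$ rule, and the ``only if'' direction is by induction on the $\ltdyn$ derivation, with all shared rules going through directly and the two top axioms \textsc{VTyTop}, \textsc{CTyTop} discharged by Lemma~\ref{lem:find-ground-type}. Your observation that the substantive work has been front-loaded into that lemma is exactly right.
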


\begin{proof}
The ``if'' direction is immediate by induction because every rule of
$\ltdyn'$ is a rule of $\ltdyn$.  To show $\ltdyn$ is contained in
$\ltdyn'$, we do induction on the derivation of $\ltdyn$, where every
rule is true for $\ltdyn'$, except $A \ltdyn \dynv$ and $\u B \ltdyn
\dync$, and for these, we use Lemma~\ref{lem:find-ground-type}.
\end{proof}
\end{longonly}

Let GTT$_G$ be the fragment of GTT where the only primitive casts are
those between ground types and the dynamic types, i.e. the cast terms
are restricted to the substitution closures of
\[
\begin{small}
\begin{array}{llll}
x : G \vdash \upcast{G}{\dynv}{x} : \dynv &
\bullet : \u F \dynv \vdash \dncast{\u F G}{\u F \dynv}{\bullet} : \u F \dynv &
\bullet : \dync \vdash \dncast{\u G}{\dync}{\bullet} : \dync &
x : U \u G \vdash \upcast{U \u G}{U \dync}{x} : U \dync 
\end{array}
\end{small}
\]

\begin{lemma}[Casts are Admissible] \label{lem:casts-admissible}
In GTT$_G$ it is admissible that
\begin{enumerate}
\item for all $A \ltdyn A'$
  there is a complex value $\defupcast{A}{A'}$
  satisfying the universal property of an upcast
  and a complex stack $\defdncast{\u F A}{\u F A'}$
  satisfying the universal property of a downcast
\item for all $\u B \ltdyn \u B'$ there is a complex
  stack $\defdncast{\u B}{\u B'}$ satisfying the universal property of a
  downcast and a complex value $\defupcast{U \u B}{U \u B'}$ satisfying
  the universal property of an upcast.
\end{enumerate}
\end{lemma}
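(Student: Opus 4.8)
The plan is to define the admissible casts by induction on the structure of a type dynamism derivation, reusing the already-established cast unique implementation theorems but instantiated at the admissible casts rather than at primitive ones. First I would invoke the fact that $\ltdyn$ and $\ltdyn'$ agree, so that every derivation of $A \ltdyn A'$ (resp.\ $\u B \ltdyn \u B'$) in GTT$_G$ may be taken to be built only from reflexivity, transitivity, the congruence rules for $+,\times,\with,\to,U,\u F$, and the ground axioms $G \ltdyn \dynv$ and $\u G \ltdyn \dync$. The induction will simultaneously produce, for each value type dynamism $A \ltdyn A'$, a complex value $\defupcast{A}{A'}{}$ and a complex stack $\defdncast{\u F A}{\u F A'}{}$, each satisfying the relevant universal property; and for each computation type dynamism $\u B \ltdyn \u B'$, a complex stack $\defdncast{\u B}{\u B'}{}$ and a complex value $\defupcast{U \u B}{U \u B'}{}$, each satisfying the relevant universal property. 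Carrying the shifted casts along in the induction hypothesis is essential, because the congruence cases for composite type constructors build a cast out of the shifted casts of the components.

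For the base cases (the ground axioms) the four families of casts are exactly the primitives of GTT$_G$, and the generic cast universal-property axioms of Figure~\ref{fig:gtt-term-dyn-axioms} apply to them directly; note that the $\u F$-shift downcast $\defdncast{\u F G}{\u F \dynv}{}$ and the $U$-shift upcast $\defupcast{U \u G}{U \dync}{}$ are themselves among the four primitives, so both components of each base case are covered. The reflexivity case uses the identity value/stack, which trivially has the universal property. For transitivity I would compose the casts from the two subderivations; that the composite still satisfies the universal property is exactly the content of the proof of Theorem~\ref{thm:decomposition} (together with the left/right rules of Lemma~\ref{lem:cast-left-right}), which uses only the universal properties of the pieces and hence goes through verbatim with admissible casts in place of primitive ones. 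For each congruence rule I would define the cast by the formula from the Cast Unique Implementation Theorems (Theorem~\ref{thm:functorial-casts} for $+,\times,\to,\with$, and Theorem~\ref{thm:monadic-comonadic-casts} for the $\u F$- and $U$-shifts), but with the component primitive casts replaced by the admissible casts supplied by the induction hypothesis, and then verify the universal property via the Upcast Lemma (Lemma~\ref{lem:upcast}) or Downcast Lemma (Lemma~\ref{lem:downcast}). Those lemmas' hypotheses ask only for well-typedness, congruence, and the identity law of the building-block casts; the identity law follows from Theorem~\ref{thm:decomposition} applied to the reflexive instance, which again needs only the universal property. Concretely: the $+$ and $\times$ upcasts are built from the component value upcasts and their $\u F$-shift downcasts from the component $\u F$-shift downcasts; the $\with$ downcast from the component computation downcasts and its $U$-shift upcast from the component $U$-shift upcasts; the $\to$ downcast from the domain's value upcast and the codomain's computation downcast, and its $U$-shift upcast from the domain's $\u F$-shift downcast and the codomain's $U$-shift upcast; and for the $U$ and $\u F$ congruence rules, one of the two casts is literally an induction hypothesis while the other is obtained from a value downcast (resp.\ stack upcast) built from the component computation downcast (resp.\ value upcast) exactly as in Theorem~\ref{thm:monadic-comonadic-casts}.

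The main obstacle I expect is bookkeeping rather than mathematical difficulty: setting up the mutual induction with the correctly strengthened hypothesis (both the primary cast and the shifted cast at every type), checking that the formulas from Theorems~\ref{thm:functorial-casts} and~\ref{thm:monadic-comonadic-casts} only ever mention casts that are actually available from the subderivations (which holds because of the shapes of the type constructors, e.g.\ the $\to$ downcast needs only a \emph{domain value upcast} and a \emph{codomain computation downcast}), and confirming that every auxiliary result those proofs rely on---Lemma~\ref{lem:cast-left-right}, Lemma~\ref{lem:cast-congruence}, Theorem~\ref{thm:casts-unique}, Theorem~\ref{thm:decomposition}---depends only on the cast universal properties and the $\beta\eta$ axioms, so that none of them covertly assumes the existence of primitive non-ground casts. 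Once that is checked, each case of the induction is a direct reuse of an argument already given in Section~\ref{sec:theorems-in-gtt}.
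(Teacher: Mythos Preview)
Your proposal is correct and matches the paper's approach essentially point for point: induction on a $\ltdyn'$ derivation (ground-restricted dynamism), carrying both the primary and shifted casts in the mutual inductive hypothesis, handling the ground axioms by the GTT$_G$ primitives, reflexivity by identities, transitivity by composition (via the argument of Theorem~\ref{thm:decomposition}), and each congruence rule by the formula from Theorem~\ref{thm:functorial-casts} or~\ref{thm:monadic-comonadic-casts} verified through Lemmas~\ref{lem:upcast}/\ref{lem:downcast}. The paper likewise emphasizes that those earlier proofs factor through the upcast/downcast lemmas and use only the universal properties of the smaller casts plus $\beta\eta$, so they transplant to GTT$_G$ unchanged---exactly the ``bookkeeping'' check you flagged.
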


\begin{proof}
  To streamline the exposition above, we stated
  Theorems~\ref{thm:decomposition}, Theorem~\ref{thm:functorial-casts}
  Theorem~\ref{thm:monadic-comonadic-casts} as showing that the
  ``definitions'' of each cast are equidynamic with the cast that is a
  priori postulated to exist (e.g. $\upcast{A}{A''} \equidyn
  \upcast{A'}{A''}{\upcast{A}{A'}}$).  However, the proofs
  \begin{longonly}
    factor
  through Theorem~\ref{thm:casts-unique} and Lemma~\ref{lem:upcast} and
  Lemma~\ref{lem:downcast}, which
  \end{longonly}
  show directly that the right-hand sides have the desired universal
  property---i.e. the stipulation that some cast with the correct
  universal property exists is not used in the proof that the
  implementation has the desired universal property.  Moreover, the
  proofs given do not rely on any axioms of GTT besides the universal
  properties of the ``smaller'' casts used in the definition and the
  $\beta\eta$ rules for the relevant types.  So these proofs can be used
  as the inductive steps here, in GTT$_G$.
  \begin{shortonly}
  In the extended version we define an alternative type dynamism
  relation where casts into dynamic types are factored through ground
  types, and use that to drive the induction here.  
  \end{shortonly}
  \begin{longonly}
  By induction on type dynamism $A \ltdyn' A'$ and $\u B \ltdyn' \u B'$.

  (We chose not to make this more explicit above, because we believe the
    equational description in a language with all casts is a clearer
    description of the results, because it avoids needing to hypothesize
    terms that behave as the smaller casts in each case.)

  We show a few representative cases:

  In the cases for $G \ltdyn \dynv$ or $\u G \ltdyn \dync$, we have
  assumed appropriate casts $\upcast{G}{\dynv}$ and 
  $\dncast{\u F G}{\u F \dynv}$ and 
  $\dncast{\u G}{\dync}$ and
  $\upcast{U \u G}{U \dync}$.
  
  In the case for identity $A \ltdyn A$, we need to show that there is
  an upcast $\defupcast{A}{A}$ and a downcast $\defdncast{\u F A}{\u F A}$
  The proof of Theorem~\ref{thm:decomposition} shows that the identity
  value and stack have the correct universal property.  

  In the case where type dynamism was concluded by
  transitivity between $A \ltdyn A'$ and $A' \ltdyn A''$, by the
  inductive hypotheses we get upcasts $\defupcast{A}{A'}$ and
  $\defupcast{A'}{A''}$, and the proof of
   Theorem~\ref{thm:decomposition} shows that defining
  $\defupcast{A}{A''}$ to be $\defupcast{A'}{A''}{\defupcast{A}{A'}}$
  has the correct universal property.  For the downcast, we get
  $\defdncast{\u F A}{\u F A'}$ and
  $\defdncast{\u F A'}{\u F A''}$ by the inductive hypotheses, and the
  proof of Theorem~\ref{thm:decomposition} shows that their composition
  has the correct universal property.
  
  In the case where type dynamism was concluded by the congruence rule
  for $A_1 + A_2 \ltdyn A_1' + A_2'$ from $A_i \ltdyn A_i'$, we have
  upcasts $\defupcast{A_i}{A_i'}$ and downcasts $\defdncast{\u F A_i}{\u
    F A_i'}$ by the inductive hypothesis, and the proof of
  Theorem~\ref{thm:decomposition} shows that the definitions given there
  have the desired universal property.
  
  In the case where type dynamism was concluded by the congruence rule
  for $\u F A \ltdyn \u F A'$ from $A \ltdyn A'$, we obtain by induction
  an \emph{upcast} $A \ltdyn A'$ and a downcast $\defdncast{\u F A}{\u F A'}$.
  We need a 
  \emph{downcast} $\defdncast{\u F A}{F A'}$, which we have,
  and an \emph{upcast} $\defdncast{U \u F A}{U \u F A'}$, which is
  constructed as in Theorem~\ref{thm:monadic-comonadic-casts}.
  \end{longonly}
\end{proof}

As discussed in Section~\ref{sec:gtt-casts}, rather than an upcast being
a complex value $x : A \vdash \upcast{A}{A'}{x} : A'$, an a priori more
general type would be a stack $\bullet : \u F A \vdash \upcast{\u F
  A}{\u F A'}{\bullet} : \u F A'$, which allows the upcast to perform
effects; dually, an a priori more general type for a downcast $\bullet :
\u B' \vdash \dncast{\u B}{\u B'}{\bullet} : \u B$ would be a value $x :
U \u B' \vdash \dncast{U \u B}{U \u B'}{x} : U \u B$, which allows the
downcast to ignore its argument.  The following shows that in GTT$_G$,
if we postulate such stack upcasts/value downcasts as originally
suggested in Section~\ref{sec:gtt-casts}, then in fact these casts
\emph{must} be equal to the action of $U$/$\u F$ on some
value upcasts/stack downcasts, so the potential
for (co)effectfulness affords no additional flexibility.

\begin{theorem}[Upcasts are Necessarily Values, Downcasts are Necessarily Stacks]
  \label{thm:upcasts-values-downcasts-stacks}
  Suppose we extend GTT$_G$ with the following postulated stack upcasts
  and value downcasts (in the sense of
  Definition~\ref{def:value-down-computation-up}): For every type
  precision $A \ltdyn A'$, there is a stack upcast $\bullet : \u F A
  \vdash \upcast{\u F A}{\u F A'}{\bullet} : \u F A'$, and for every $\u
  B \ltdyn \u B'$, there is a complex value downcast $x : U \u B' \vdash
  \dncast{U \u B}{U \u B'}{x} : U \u B$.

  Then there exists a value upcast $\defupcast{A}{A'}$ and a stack
  downcast $\defdncast{\u B}{\u B'}$ such that
  \[
  \begin{array}{c}
  \bullet : \u F A \vdash \upcast{\u F A}{\u F A'}{\bullet} \equidyn { (\bindXtoYinZ{{\bullet}}{x:A}{\ret{(\defupcast{A}{A'}{x})}})}\\
  x : U \u B' \vdash \dncast{U \u B}{U \u B'}{x} \equidyn {(\thunk{(\defdncast{\u B}{\u B'}{(\force x)})})}
  \end{array}
  \]
\end{theorem}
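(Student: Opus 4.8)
The plan is to derive this as a corollary of the admissibility of casts in GTT$_G$ (Lemma~\ref{lem:casts-admissible}) together with the uniqueness of casts up to equidynamism (Theorem~\ref{thm:casts-unique}, in the analogue for stack upcasts and value downcasts noted after Definition~\ref{def:value-down-computation-up}). The strategy is: exhibit a \emph{canonical} stack upcast and value downcast built only from admissible casts of GTT$_G$, show that both it and the a priori more general postulated cast satisfy the universal property of Definition~\ref{def:value-down-computation-up}, and then invoke uniqueness to identify them.

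Concretely, first I would apply Lemma~\ref{lem:casts-admissible} to obtain, for the given $A \ltdyn A'$, a complex value $\defupcast{A}{A'}$ satisfying the universal property of an upcast, and, for the given $\u B \ltdyn \u B'$, a complex stack $\defdncast{\u B}{\u B'}$ satisfying the universal property of a downcast; since adding the postulated casts only adds axioms, these remain available in the extended theory. Next, I would check that $\bindXtoYinZ{\bullet}{x}{\ret{(\defupcast{A}{A'}{x})}}$ is a stack upcast from $\u F A$ to $\u F A'$ in the sense of Definition~\ref{def:value-down-computation-up}, and dually that $\thunk{(\defdncast{\u B}{\u B'}{(\force x)})}$ is a value downcast from $U \u B'$ to $U \u B$. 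Both checks are precisely the arguments already given for Theorem~\ref{thm:monadic-comonadic-casts}(1) and (3): for the stack upcast one $\eta$-expands the bare $\bullet$ (resp.\ $\bullet'$) into a \texttt{bind}, applies congruence, and finishes with the upcast-left/upcast-right rules (Lemma~\ref{lem:cast-left-right}) for $\defupcast{A}{A'}$; for the value downcast one $\eta$-expands to $\thunk{(\force{-})}$, applies congruence, and finishes with the downcast-left/downcast-right rules for $\defdncast{\u B}{\u B'}$. Crucially, neither argument uses any postulated effectful cast---only $\beta\eta$ for $U$/$\u F$ and the generic cast rules---so they remain valid in GTT$_G$.

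Finally, I would observe that by hypothesis the postulated $\upcast{\u F A}{\u F A'}$ is itself a stack upcast for $\u F A \ltdyn \u F A'$ and the postulated $\dncast{U \u B}{U \u B'}$ is itself a value downcast for $U \u B \ltdyn U \u B'$. Since the uniqueness statement of Theorem~\ref{thm:casts-unique} holds verbatim for stack upcasts and value downcasts, any two terms with the universal property of such a cast are equidynamic, yielding $\bullet : \u F A \vdash \upcast{\u F A}{\u F A'}{\bullet} \equidyn \bindXtoYinZ{\bullet}{x}{\ret{(\defupcast{A}{A'}{x})}} : \u F A'$ and $x : U \u B' \vdash \dncast{U \u B}{U \u B'}{x} \equidyn \thunk{(\defdncast{\u B}{\u B'}{(\force x)})} : U \u B$, as required. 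The only real obstacle is bookkeeping: confirming that the relevant parts of Theorem~\ref{thm:casts-unique} and Theorem~\ref{thm:monadic-comonadic-casts} genuinely go through with the admissible casts of GTT$_G$ in place of the postulated primitive casts---i.e.\ that the transfer from ``$\upcast{A}{A'}$ is a primitive'' to ``$\defupcast{A}{A'}$ is admissible'' is innocuous here, which it is because all the proofs cited use only the universal-property axioms and $\beta\eta$ for the shift types.
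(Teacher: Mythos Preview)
Your proposal is correct and follows essentially the same route as the paper: invoke Lemma~\ref{lem:casts-admissible} to obtain $\defupcast{A}{A'}$ and $\defdncast{\u B}{\u B'}$, use the argument of Theorem~\ref{thm:monadic-comonadic-casts}(1),(3) to show the bind/thunk forms satisfy the stack-upcast/value-downcast universal property, and then conclude by the uniqueness analogue of Theorem~\ref{thm:casts-unique}. Your explicit observation that the relevant parts of Theorem~\ref{thm:monadic-comonadic-casts} use only the universal properties and $\beta\eta$ for the shifts, and hence transfer to the admissible casts of GTT$_G$, is exactly the parenthetical point the paper makes.
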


\begin{proof}
Lemma~\ref{lem:casts-admissible} constructs $\defupcast{A}{A'}$ and
$\defdncast{\u B}{\u B'}$, so the proof of
Theorem~\ref{thm:monadic-comonadic-casts} (which really works for any
$\defupcast{A}{A'}$ and $\defdncast{\u B}{\u B'}$ with the correct
universal properties, not only the postulated casts) implies that the
right-hand sides of the above equations are stack upcasts and value
downcasts of the appropriate type.  Since stack upcasts/value downcasts
are unique by an argument analogous to Theorem~\ref{thm:casts-unique},
the postulated casts must be equal to these.
\end{proof}

\begin{longonly}
Indeed, the following a priori even more general assumption provides no
more flexibility:
\begin{theorem}[Upcasts are Necessarily Values, Downcasts are Necessarily Stacks II]
  Suppose we extend GTT$_G$ only with postulated monadic upcasts $x : U
  \u F A \vdash \upcast{U \u F A}{U \u F A'}{x} : U \u F A'$ for every
  $A \ltdyn A'$ and comonadic downcasts $\bullet : \u F U \u B' \vdash
  \dncast{\u F U \u B}{\u F U \u B'}{\bullet} : \u F U \u B$ for every
  $\u B \ltdyn \u B'$.

  Then there exists a value upcast $\defupcast{A}{A'}$ and a stack
downcast $\defdncast{\u B}{\u B'}$ such that
   \[
   \begin{array}{c}
   x : U \u F A \vdash \upcast{U \u F A}{U \u F A'}{x} \equidyn \thunk{ (\bindXtoYinZ{{\force x}}{x:A}{\ret{(\defupcast{A}{A'}{x})}})}\\
   \bullet : \u F U \u B' \vdash \dncast{\u F U \u B}{\u F U \u B'}{\bullet} \equidyn
   \bindXtoYinZ{\bullet}{x':U \u B'}{\ret{(\thunk{(\defdncast{\u B}{\u B'}{(\force x)})})}}
   \end{array}
   \]
\end{theorem}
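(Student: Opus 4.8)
The plan is to mirror the proof of Theorem~\ref{thm:upcasts-values-downcasts-stacks} almost verbatim, replacing the shift $\u F A$ (resp.\ $U \u B$) by the monad $U \u F A$ (resp.\ the comonad $\u F U \u B$) and invoking clause (5) of Theorem~\ref{thm:monadic-comonadic-casts} in place of its clauses (1)--(4). The three ingredients are: admissibility of casts in GTT$_G$, the explicit computation of the monadic/comonadic casts, and the uniqueness of casts with a given universal property.

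First I would observe that the ambient theory is still GTT$_G$ and the postulated monadic upcasts and comonadic downcasts are merely an \emph{extension}, so Lemma~\ref{lem:casts-admissible} applies to the base theory unchanged: it produces, for each $A \ltdyn A'$, a complex value $\defupcast{A}{A'}$ satisfying the universal property of an upcast and a complex stack $\defdncast{\u F A}{\u F A'}$ satisfying the universal property of a downcast, and for each $\u B \ltdyn \u B'$ a complex stack $\defdncast{\u B}{\u B'}$ and a complex value $\defupcast{U \u B}{U \u B'}$, all built purely from the ground-type casts and the $\beta\eta$ laws, with no appeal to the postulated casts. Applying clause (1) of the lemma to the value type $U \u F A$ (which satisfies $U \u F A \ltdyn U \u F A'$ by monotonicity of $U$ and $\u F$) yields an admissible value upcast $\upcast{U \u F A}{U \u F A'}$; applying clause (2) to the computation type $\u F U \u B$ yields an admissible stack downcast $\dncast{\u F U \u B}{\u F U \u B'}$.

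Second I would apply Theorem~\ref{thm:monadic-comonadic-casts}(5) --- whose proof only uses the universal properties of $\defupcast{A}{A'}$ and $\defdncast{\u B}{\u B'}$, not the existence of any postulated cast --- to rewrite these admissible casts into exactly the shapes displayed in the statement. Third, the uniqueness step: by hypothesis the postulated $\upcast{U \u F A}{U \u F A'}$ is a complex value satisfying the universal property of an upcast for the value type dynamism $U \u F A \ltdyn U \u F A'$, so Theorem~\ref{thm:casts-unique}(1) forces it to be $\equidyn$ to the admissible value upcast, hence to the displayed term; dually, the postulated $\dncast{\u F U \u B}{\u F U \u B'}$ is a complex stack satisfying the universal property of a downcast for $\u F U \u B \ltdyn \u F U \u B'$, so Theorem~\ref{thm:casts-unique}(2) forces it to be $\equidyn$ to the admissible stack downcast, hence to the displayed term.

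There is no real conceptual obstacle here; the only care needed is bookkeeping. One must read ``postulated monadic upcast'' in the sense of Definition~\ref{def:value-down-computation-up}, i.e.\ assumed to satisfy the full upcast universal property (the ``bound'' and ``best'' rules), since that is precisely what lets Theorem~\ref{thm:casts-unique} bite. One should also confirm, as in the proof of Lemma~\ref{lem:casts-admissible}, that neither that lemma nor Theorem~\ref{thm:monadic-comonadic-casts}(5) circularly consumes the postulated casts --- both only use the ground casts and the universal properties of smaller casts --- so that the construction of $\defupcast{A}{A'}$ and $\defdncast{\u B}{\u B'}$ is genuinely internal to GTT$_G$ and the equations hold as claimed.
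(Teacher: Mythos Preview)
Your proposal is correct and follows essentially the same approach as the paper, which simply says ``Lemma~\ref{lem:casts-admissible} constructs $\defupcast{A}{A'}$ and $\defdncast{\u B}{\u B'}$, so the proof of part (5) of Theorem~\ref{thm:monadic-comonadic-casts} gives the result.'' You have unpacked this one-sentence proof more carefully, making explicit the appeal to Theorem~\ref{thm:casts-unique} that in the paper is folded into the invocation of the proof of Theorem~\ref{thm:monadic-comonadic-casts}(5).
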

In CBV terms, the monadic upcast is like an upcast from $A$ to $A'$
taking having type $(1 \to A) \to A'$, i.e. it takes a thunked
effectful computation of an $A$ as input and produces an effectful
computation of an $A'$.  

\begin{proof}
Again, Lemma~\ref{lem:casts-admissible} constructs $\defupcast{A}{A'}$
and $\defdncast{\u B}{\u B'}$, so the proof of part (5) of
Theorem~\ref{thm:monadic-comonadic-casts} gives the result.
\end{proof}
\end{longonly}

\begin{longonly}
\subsection{Equidynamic Types are Isomorphic}

\begin{theorem}[Equidynamism implies Isomorphism] 
  \begin{enumerate}
  \item
    If $A \ltdyn A'$ and $A' \ltdyn A$ then $A \cong_v A'$.
  \item
    If $\u B \ltdyn \u B'$ and $\u B' \ltdyn \u B$ then $\u B \cong_c \u B'$.
  \end{enumerate}
\end{theorem}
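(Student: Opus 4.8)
The plan is to take the casts themselves as the two halves of the isomorphism. Since both $A \ltdyn A'$ and $A' \ltdyn A$ hold, GTT supplies complex value upcasts $x : A \vdash \upcast{A}{A'}{x} : A'$ and $x' : A' \vdash \upcast{A'}{A}{x'} : A$, and I claim these form a value isomorphism $A \cong_v A'$. Dually, since both $\u B \ltdyn \u B'$ and $\u B' \ltdyn \u B$ hold, GTT supplies complex stack downcasts $\bullet : \u B \vdash \dncast{\u B'}{\u B}{\bullet} : \u B'$ and $\bullet' : \u B' \vdash \dncast{\u B}{\u B'}{\bullet'} : \u B$, and I claim these form a computation isomorphism $\u B \cong_c \u B'$. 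The asymmetry here is forced rather than a choice: GTT postulates upcasts only along value type dynamism and downcasts only along computation type dynamism, so these are exactly the cast forms available in each of the two cases, and the hypothesis $A' \ltdyn A$ (resp. $\u B' \ltdyn \u B$) is what makes the ``reverse'' cast well-formed.

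The only input needed is Theorem~\ref{thm:decomposition}, the identity and composition laws for casts. For the value case, instantiate the upcast composition law (part 2) at the chain $A \ltdyn A' \ltdyn A$ to get $x : A \vdash \upcast{A'}{A}{(\upcast{A}{A'}{x})} \equidyn \upcast{A}{A}{x} : A$, and then apply the upcast identity law (part 1) to conclude $\upcast{A'}{A}{(\upcast{A}{A'}{x})} \equidyn x$; symmetrically, instantiating at the chain $A' \ltdyn A \ltdyn A'$ gives $x' : A' \vdash \upcast{A}{A'}{(\upcast{A'}{A}{x'})} \equidyn x' : A'$. These two equidynamisms are precisely the two round-trip conditions in the definition of value isomorphism, so $A \cong_v A'$. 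For the computation case, instantiate the downcast composition law (part 4) at $\u B \ltdyn \u B' \ltdyn \u B$ to get $\bullet : \u B \vdash \dncast{\u B}{\u B'}{(\dncast{\u B'}{\u B}{\bullet})} \equidyn \dncast{\u B}{\u B}{\bullet} : \u B$, and the downcast identity law (part 3) then gives $\dncast{\u B}{\u B'}{(\dncast{\u B'}{\u B}{\bullet})} \equidyn \bullet$; symmetrically at $\u B' \ltdyn \u B \ltdyn \u B'$ we obtain $\bullet' : \u B' \vdash \dncast{\u B'}{\u B}{(\dncast{\u B}{\u B'}{\bullet'})} \equidyn \bullet' : \u B'$, which are the two round-trip conditions for computation isomorphism, so $\u B \cong_c \u B'$.

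There is no real obstacle here; the entire content is already packaged into Theorem~\ref{thm:decomposition} (which itself rests only on the cast universal properties of Figure~\ref{fig:gtt-term-dyn-axioms}). The one point deserving a moment's attention is the bookkeeping of the three-element dynamism chains: the ``middle'' type is $A'$ in one composite and $A$ in the other (and likewise $\u B'$ versus $\u B$ in the computation case), and one should double-check that each composite has the correct source and target to line up with the round-trip equations in the definitions of $\cong_v$ and $\cong_c$. It is worth noting that the retract axiom is never used: mere equidynamism of the types already upgrades the upcast/downcast pair in each direction from a Galois connection to an honest isomorphism.
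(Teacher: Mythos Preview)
Your proof is correct. It differs from the paper's in packaging rather than substance: the paper argues directly from the upcast/downcast universal properties (the ``upcast left/right'' and ``downcast left/right'' rules of Lemma~\ref{lem:cast-left-right}), showing each round-trip inequality separately, whereas you invoke the already-proven identity and composition laws of Theorem~\ref{thm:decomposition} and read off the round-trip equidynamisms immediately. Since Theorem~\ref{thm:decomposition} is itself derived from those same universal properties, the two arguments have the same content; yours is shorter because the work has been factored out, while the paper's is more self-contained at the point of use. Your observation that the retract axiom plays no role is also correct and worth keeping.
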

\begin{proof}
  \begin{enumerate}
  \item We have upcasts $x : A \vdash \upcast{A}{A'}{x} : A'$ and $x' : A' \vdash \upcast{A'}{A}{x'} : A$.
    For the composites, to show
    $x : A \vdash \upcast{A'}{A}{\upcast{A}{A'}{x}} \ltdyn x$
    we apply upcast left twice, and conclude $x \ltdyn x$ by assumption.
    To show, 
    $x : A \vdash x \ltdyn \upcast{A'}{A}{\upcast{A}{A'}{x}}$,
    we have $x : A \vdash x \ltdyn {\upcast{A}{A'}{x}} : A \ltdyn A'$
    by upcast right, and therefore
    $x : A \vdash x \ltdyn \upcast{A'}{A}{\upcast{A}{A'}{x}} : A \ltdyn A$
    again by upcast right.
    The other composite is the same proof with $A$ and $A'$ swapped.
    
  \item We have downcasts $\bullet : \u B \vdash \dncast{\u B}{\u B'}{\bullet} : \u B'$ and
    $\bullet : \u B' \vdash \dncast{\u B'}{\u B}{\bullet} : \u B$.

      For the composites, to show $\bullet : \u B' \vdash \bullet \ltdyn
      \dncast{\u B'}{\u B}{\dncast{\u B}{\u B'}}{\bullet}$, we apply
      downcast right twice, and conclude $\bullet \ltdyn \bullet$.  For
      $\dncast{\u B'}{\u B}{\dncast{\u B}{\u B'}}{\bullet} \ltdyn
      \bullet$, we first have $\dncast{\u B}{\u B'}{\bullet} \ltdyn
      \bullet : \u B \ltdyn \u B'$ by downcast left, and then the result
      by another application of downcast left.
      The other composite is the same proof with $\u B$ and $\u B'$ swapped.
  \end{enumerate}
\end{proof}
\end{longonly}




\section{Contract Models of GTT}
\label{sec:contract}

To show the soundness of our theory, and demonstrate its relationship to
operational definitions of observational equivalence and the gradual
guarantee, we develop \emph{models} of GTT using observational error
approximation of a \emph{non-gradual} CBPV.
We call this the \emph{contract translation} because it translates the
built-in casts of the gradual language into ordinary terms implemented
in a non-gradual language.
While contracts are typically implemented in a dynamically typed
language, our target is typed, retaining type information similarly to
manifest contracts \cite{greenberg-manifest}.
We give implementations of the dynamic value type in the usual way as
a recursive sum of basic value types, i.e., using type tags, and we
give implementations of the dynamic computation type as the dual: a
recursive product of basic computation types.

Writing $\sem{M}$ for any of the contract translations, the remaining
sections of the paper establish:
\begin{theorem}[Equi-dynamism implies Observational Equivalence]
  If $\Gamma \vdash M_1 \equidyn M_2 : \u B$, then for any closing
  GTT context $C : (\Gamma \vdash \u B) \Rightarrow (\cdot \vdash \u F
  (1+1))$, $\sem{C[M_1]}$ and $\sem{C[M_2]}$ have the same behavior: both diverge,
  both run to an error, or both run to $\tru$ or both run to $\fls$.
\end{theorem}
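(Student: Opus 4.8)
The plan is to prove this in two stages, following the decomposition announced in the introduction. First, interpret GTT into an extended call-by-push-value calculus --- CBPV with recursive value and computation types and an uncatchable error $\err$ --- by a \emph{contract translation} $\sem{-}$ (Sections~\ref{sec:contract} and~\ref{sec:complex}). Second, equip that target calculus with an operational semantics and a step-indexed biorthogonal logical relation in which the term-dynamism ordering is sound for contextual error approximation $\ltctx$ (Section~\ref{sec:operational}). Chaining the two stages gives the theorem.

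\textbf{Stage 1: the contract translation.} I would first fix the dynamic types: $\sem{\dynv}$ is a recursive sum $\mu X.\,(1 + (X \times X) + 0 + (X + X) + U\sem{\dync})$ over the ground value types, and dually $\sem{\dync}$ is a recursive \emph{product} over the ground computation types; every other type and term constructor is translated structurally, so that in particular $\sem{\u F(1+1)} = \u F(1+1)$ and the booleans $\tru,\fls$ are preserved. The casts are the interesting part: for each ground type $G$ (resp.\ $\u G$) I translate the pair $\upcast{G}{\dynv}/\dncast{\u F G}{\u F\dynv}$ (resp.\ $\dncast{\u G}{\dync}/\upcast{U\u G}{U\dync}$) into the evident tag-injection / tag-case-with-$\err$-in-the-other-branches code, and I \emph{define} every other cast by running the admissibility construction of Lemma~\ref{lem:casts-admissible} in GTT$_G$ (equivalently, by the equations of Theorems~\ref{thm:decomposition}, \ref{thm:functorial-casts} and \ref{thm:monadic-comonadic-casts}, unfolded down to the ground casts). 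I then check that $\sem{-}$ sends every derivable GTT judgement to a derivable judgement of the target: type preservation is structural; the structural and congruence rules of term dynamism follow from the corresponding rules of the target; the $\beta\eta$ type-universal properties hold because the target satisfies the CBPV equational theory; and \textsc{ErrBot} and \textsc{StkStrict} hold because $\err$ is least and stacks are strict in the target. What remains are the cast universal properties and the retract axioms, which by Lemma~\ref{lem:casts-admissible} reduce to the same statements for the four families of ground casts. That the tag-injection is the least upper bound of $x$, that the tag-case is the greatest lower bound of $\bullet$, and that the two compose to the identity (the retract axiom) are facts about recursive sum/product types that cannot be established purely equationally; this is exactly where Stage 2 is needed.

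\textbf{Stage 2: the operational model.} I would give the target a CK-machine operational semantics on configurations running a computation of type $\u F A$, with $\err$ a stuck terminal state distinct from divergence, and define $M \ltctx M'$ via closing contexts into $\u F(1+1)$ (if $C[M]$ terminates at a value then $C[M']$ terminates at that value or at $\err$, together with the clause for $C[M]$ reaching $\err$). I would then construct a step-indexed, biorthogonal logical relation indexed by type-dynamism derivations: step-indexed so that the $\mu$-guarded types $\sem{\dynv},\sem{\dync}$ are well-defined, biorthogonal so that the relation is automatically closed under the observation at $\u F(1+1)$, and defined simultaneously on closed values, computations and stacks because of the stoup --- this is the error-approximation relation of \citet{newahmed18} transported from CBV to CBPV, and, following Section~\ref{sec:complex}, set up after complex values and stacks have been translated away into thunkable and linear ordinary terms so that the machine touches only ordinary syntax. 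The fundamental lemma states that every open term-dynamism judgement provable in the target (in particular the $\sem{-}$-image of any GTT term-dynamism judgement) is contained in the logical relation, and adequacy states that the logical relation is contained in $\ltctx$; the compatibility cases are routine, and the cast cases are discharged by verifying directly in the model that the ground ep-pairs satisfy their universal and retract properties --- here step-indexing pays for the recursion in $\dynv/\dync$ (a L\"ob-style argument for $\dncast{\u F A}{\u F\dynv}(\ret(\upcast{A}{\dynv}x)) \ltdyn \ret x$ when $A$ itself mentions $\dynv$) and biorthogonality handles the stack side of computation downcasts and the strictness of stacks in $\err$. Composing the stages: $\Gamma \vdash M_1 \equidyn M_2 : \u B$ in GTT yields, by Stage 1, a derivable equidynamism $\sem{\Gamma} \vdash \sem{M_1} \equidyn \sem{M_2} : \sem{\u B}$ in the target, hence by Stage 2 applied in both directions $\sem{M_1} \ltctx \sem{M_2}$ and $\sem{M_2} \ltctx \sem{M_1}$; since $\sem{-}$ is compositional, $\sem{C[M_i]} = \sem{C}[\sem{M_i}]$ with $\sem{C}$ a target context into $\u F(1+1)$, so $\sem{C[M_1]}$ and $\sem{C[M_2]}$ have the same observable behaviour.

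\textbf{The main obstacle} is Stage 2, specifically the definition of the step-indexed biorthogonal logical relation and the proof that the ground casts satisfy their ep-pair and retract properties in it. Two difficulties compound. The recursive types for $\dynv$ and $\dync$ force a step-indexed construction and a careful index accounting in the retract argument when the casted type mentions the dynamic type. And the asymmetry of CBPV --- value upcasts versus stack downcasts, plus the stoup --- means the relation and its compatibility lemma must be carried on values, computations \emph{and} stacks with the correct biorthogonal closure, so that the ``bound'' and ``best'' directions of the cast universal properties and the $\err$-strictness of stacks all land on the right side of $\ltdyn$. Factoring the elimination of complex values and stacks into the separate Section~\ref{sec:complex} step is what keeps the operational argument tractable.
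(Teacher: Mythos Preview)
Your two-stage plan matches the paper's shape, but you misplace where the hard work lies, and your Stage~2 wobbles between two incompatible strategies.

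The key misstep is the Stage~1 claim that the ground ep-pair properties ``cannot be established purely equationally.'' They can: retraction is a finite chain of $\beta$-reductions ($\mu\beta$ to unfold the recursive type once, then $+\beta$ or $\with\beta$, then $\u F\beta$/$U\beta$), and projection uses the \cbpvstar\ axioms $\err\ltdyn M$ together with $+\eta$/$\with\eta$ and commuting conversions---see Lemmas~\ref{lem:injections-are-embeddings} and~\ref{lem:projections-are-projections}. No L\"ob induction is needed: GTT source types contain no recursion; the $\mu$/$\nu$ live only inside $\sem{\dynv}$ and $\sem{\dync}$, and each ground cast traverses exactly one $\roll$/$\unroll$. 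Once the ground casts are ep pairs in the \cbpvstar\ theory, \emph{every} GTT term-dynamism rule---including the cast universal properties and the retract axioms---translates to a \cbpvstar\ inequality (Theorem~\ref{thm:axiomatic-graduality}, Axiomatic Graduality), and in particular homogeneous GTT equidynamism gives homogeneous \cbpvstar\ equidynamism (Corollary~\ref{thm:gtt-cbpvstar-conservativity}). Consequently the paper's Stage~2 is a completely standard \cbpv\ logical relation indexed by \emph{types}, not by type-dynamism derivations, and it never mentions casts at all. Your proposal to index the relation by dynamism derivations and discharge ``cast cases directly in the model'' is the \citet{newahmed18} route; it can be made to work, but it re-does in the model what the paper does once in the syntax, and it sits uneasily with your own Stage~1 conclusion that equidynamism already transfers to the target theory.

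There is also a genuine omission in Stage~2: you never mention decomposing error approximation $\ltdyn$ into the two \emph{divergence} preorders $\errordivergeleft$ and $\errordivergerightop$ (Section~\ref{sec:obs-equiv-approx}). Step-indexing forces the relation to treat $\diverge$ as least---a term that runs for $i$ steps is vacuously related at index $i$---but in $\ltdyn$ it is $\err$ that is least and $\diverge$ is maximal, so the limit argument fails without the split. (Relatedly, your gloss on $\ltctx$ has the error on the wrong side: $M\ctxize\ltdyn M'$ means $M$ may error \emph{more}, so if $C[M]$ returns a value then $C[M']$ must return that same value, not $\err$.)
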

\begin{theorem}[Graduality]
  If $\Gamma_1 \ltdyn \Gamma_2 \vdash M_1 \ltdyn M_2 : B_1 \ltdyn B_2$,
  then for any GTT context $C : (\Gamma_1 \vdash B_1) \Rightarrow (\cdot
  \vdash \u F (1+1))$, and any valid interpretation of the dynamic
  types, either
  \begin{enumerate}
  \item $\sem{C[M_1]} \Downarrow \err$, or
  \item $\sem{C[M_1]} \Uparrow$ and $\sem{C[\dncast{B_1}{B_2}M_2[\upcast{\Gamma_1}{\Gamma_2}{\Gamma_1}]]} \Uparrow$, or
  \item $\sem{C[M_1]} \Downarrow \ret V$,~~
    $\sem{C[\dncast{B_1}{B_2}M_2[\upcast{\Gamma_1}{\Gamma_2}{\Gamma_1}]]} 
    \Downarrow \ret V$, and $V = \tru$ or $V = \fls$.
  \end{enumerate}
\end{theorem}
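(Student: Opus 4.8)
The plan is to obtain this statement as a corollary of the two models built in Sections~\ref{sec:contract}--\ref{sec:lr}: (i) the \emph{contract translation} $\sem{-}$, which interprets GTT into an extension of CBPV with recursive types and an uncatchable error and is \emph{sound for term dynamism} --- any derivable $\ltdyn$ in GTT is sent to a derivable $\ltdyn$ in the target, with the built-in casts replaced by contract code (embedding--projection pairs built from tag checks and recursive-type roll/unroll, with $\dynv$ a recursive sum and $\dync$ a recursive product of ground types); and (ii) the step-indexed biorthogonal logical relation of Section~\ref{sec:lr}, whose fundamental lemma shows that the axiomatic ordering of the target implies contextual error approximation $\ltctx$, and whose operational characterization at closed computations of observation type $\u F(1+1)$ unfolds $\ltctx$ exactly to ``the left term errors, or both diverge, or both converge to the same return value''. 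Everything is established uniformly for an arbitrary valid interpretation of $\dynv$ and $\dync$, so the ``any valid interpretation'' quantifier comes for free.

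First I would reduce the heterogeneous hypothesis to a homogeneous one, working entirely inside GTT. Starting from $\Gamma_1 \ltdyn \Gamma_2 \vdash M_1 \ltdyn M_2 : B_1 \ltdyn B_2$, iterating \textsc{TmDynValSubst} with the ``bound'' axiom $x : A \vdash x \ltdyn \upcast{A}{A'}{x} : A \ltdyn A'$ over the variables of $\Gamma_1/\Gamma_2$ yields $\Gamma_1 \vdash M_1 \ltdyn M_2[\upcast{\Gamma_1}{\Gamma_2}{\Gamma_1}] : B_1 \ltdyn B_2$; then a single application of the downcast-right rule (Lemma~\ref{lem:cast-left-right}), using $B_1 \ltdyn B_1 \ltdyn B_2$, gives the homogeneous judgement $\Gamma_1 \vdash M_1 \ltdyn \dncast{B_1}{B_2}M_2[\upcast{\Gamma_1}{\Gamma_2}{\Gamma_1}] : B_1$. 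This is precisely the term pairing appearing in the conclusion, so it remains only to transport it through the two models.

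Next I would apply the contract translation to this homogeneous judgement, obtaining $\sem{M_1} \ltdyn \sem{\dncast{B_1}{B_2}M_2[\upcast{\Gamma_1}{\Gamma_2}{\Gamma_1}]}$ in the target CBPV (compositionality of $\sem{-}$ means the GTT casts here are interpreted by their contract code). The fundamental lemma of the logical relation then promotes this to $\sem{M_1} \ltctx \sem{\dncast{B_1}{B_2}M_2[\upcast{\Gamma_1}{\Gamma_2}{\Gamma_1}]}$ at $\sem{\Gamma_1} \vdash \sem{B_1}$. Since $\ltctx$ is by construction closed under plugging into contexts, and $\sem{-}$ sends the closing GTT context $C$ to a closing CBPV context of the same observation type with $\sem{C[M_1]} = \sem{C}[\sem{M_1}]$, I obtain $\sem{C[M_1]} \ltctx \sem{C[\dncast{B_1}{B_2}M_2[\upcast{\Gamma_1}{\Gamma_2}{\Gamma_1}]}]$ as closed terms of type $\u F(1+1)$. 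Unfolding the operational characterization of $\ltctx$ at this type --- and noting that the only return values of $\u F(1+1)$ are $\tru$ and $\fls$ --- gives exactly the stated three-way disjunction, with $\err$ playing the role of the ``left errors'' alternative. (The companion ``equidynamism implies observational equivalence'' statement is then the symmetric special case, applying the same chain in both directions.)

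The main obstacle is step~(ii): defining the step-indexed biorthogonal logical relation and proving its fundamental lemma and soundness for $\ltctx$. The delicate points there are that the recursive interpretations of $\dynv$ and $\dync$ force a genuinely step-indexed (rather than purely inductive) definition; that biorthogonality/$\top\top$-closure is needed so the relation is transitive and closed under evaluation contexts, which is what validates \textsc{TmDynTrans}, the substitution rules, and ultimately soundness for contextual approximation; that the relation is \emph{asymmetric} throughout, tracking ``the left side may error'' and in particular getting the ``both diverge'' case right; and, on the side of step~(i), that one must still check in the target CBPV (Sections~\ref{sec:contract}--\ref{sec:complex}) that the contract implementations of casts satisfy all the GTT cast axioms --- the retract axiom most notably --- which is what makes the contract translation sound for GTT term dynamism in the first place. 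The bookkeeping connecting GTT contexts and stoups with the heterogeneous-to-homogeneous reduction above is routine given Lemma~\ref{lem:cast-left-right}.
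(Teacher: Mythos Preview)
Your proposal is correct and follows essentially the same three-step pipeline the paper lays out at the start of Section~\ref{sec:contract}: axiomatic graduality of the contract translation (Theorem~\ref{thm:axiomatic-graduality}), de-complexification (Section~\ref{sec:complex}), and soundness of the step-indexed logical relation for contextual error approximation (Section~\ref{sec:lr}), with the three-way disjunction falling out of the definition of the result preorder $\ltdyn$ at $\u F(1+1)$. The only cosmetic difference is that you perform the heterogeneous-to-homogeneous reduction inside GTT first (via the bound axiom and \textsc{DnR}) and then invoke the homogeneous special case of axiomatic graduality, whereas the paper's Theorem~\ref{thm:axiomatic-graduality} bakes that reduction into the translation itself; the resulting CBPV$^*$ inequality is literally the same, so this is a harmless reordering rather than a different argument.
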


As a consequence we will also get consistency of our logic of
dynamism:
\begin{corollary}[Consistency \iflong of GTT \fi]
  $\cdot \vdash \ret \kw{true} \ltdyn \ret \kw{false} : \u F(1+1)$ is not
  provable in GTT.
\end{corollary}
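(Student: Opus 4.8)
The plan is to read the corollary off the model rather than argue inside the logic: since GTT is an axiomatic theory, the only way to show $\cdot \vdash \ret\tru \ltdyn \ret\fls : \u F(1+1)$ is \emph{not} provable is to produce a model refuting it, and the contract translation of Sections~\ref{sec:contract}--\ref{sec:lr} is exactly such a model. So I would argue by contradiction. Assume $\cdot \vdash \ret\tru \ltdyn \ret\fls : \u F(1+1)$ is provable in GTT. Fix any valid interpretation of the dynamic types (one exists by the contract model construction) and write $\sem{-}$ for the associated translation into non-gradual CBPV with contextual error approximation. Apply the Graduality theorem to this term dynamism with the trivial closing context $C = \hole : (\cdot \vdash \u F(1+1)) \Rightarrow (\cdot \vdash \u F(1+1))$.

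Next I would simplify Graduality's conclusion in this degenerate instance. Here $\Gamma_1 = \Gamma_2 = \cdot$, so there is no context upcast to apply, and $B_1 = B_2 = \u F(1+1)$, so the decoration $\dncast{B_1}{B_2}{M_2}$ is the \emph{identity} cast $\dncast{\u F(1+1)}{\u F(1+1)}{\ret\fls}$. By the identity-cast law (Theorem~\ref{thm:decomposition}) together with \textsc{TmDynStkSubst}, this computation is equidynamic to $\ret\fls$, so by the Equi-dynamism implies Observational Equivalence theorem its translation has the same behavior as $\sem{\ret\fls}$. Thus Graduality reduces to the trichotomy: either $\sem{\ret\tru} \Downarrow \err$; or $\sem{\ret\tru} \Uparrow$ and $\sem{\ret\fls} \Uparrow$; or $\sem{\ret\tru}$ and $\sem{\ret\fls}$ both converge to $\ret V$ for one and the same $V$ with $V = \tru$ or $V = \fls$.

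Then I would evaluate $\sem{\ret\tru}$ and $\sem{\ret\fls}$ directly. Because the contract translation is compositional, acts structure-preservingly on the sum $1+1$ and on $\ret$, and the operational semantics sends $\ret$ of a value to itself, $\sem{\ret\tru} \Downarrow \ret V_{\tru}$ and $\sem{\ret\fls} \Downarrow \ret V_{\fls}$ with $V_{\tru} \neq V_{\fls}$ (the two distinct injections). A terminating return is neither $\err$ nor $\Uparrow$, so the first two cases are impossible; in the third case, determinism of evaluation forces $V = V_{\tru}$ and $V = V_{\fls}$, hence $V_{\tru} = V_{\fls}$, contradicting distinctness. This is the contradiction, so no such derivation of $\cdot \vdash \ret\tru \ltdyn \ret\fls : \u F(1+1)$ exists.

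The one genuinely semantic ingredient --- and the step I expect to be the crux --- is the claim that $\sem{-}$ does not collapse $\tru$ and $\fls$, i.e. that the model is nontrivial at $\u F(1+1)$; everything else is bookkeeping with theorems already established. In our construction this is immediate from the explicit definition of the translation on $1+1$ and the determinism of the operational semantics, but it is worth isolating because it is precisely the content that an internal, syntactic argument cannot supply: consistency of the GTT axioms is exactly what is at stake. I would also note that Graduality (a one-directional statement) is the right tool here rather than the Observational Equivalence theorem, since from a putative derivation of $\ret\tru \ltdyn \ret\fls$ we cannot in general also obtain the reverse dynamism, so we never have $\ret\tru \equidyn \ret\fls$ to feed to the latter.
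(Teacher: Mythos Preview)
Your proposal is correct and takes essentially the same approach as the paper, which dispatches the corollary in one line (``They are distinguished by the identity context'') by appealing to the Graduality theorem with $C=\hole$. You have simply unpacked that sentence: the extra care you take with the identity cast $\dncast{\u F(1+1)}{\u F(1+1)}$ and with isolating the nontriviality of the model at $1+1$ is exactly the bookkeeping the paper leaves implicit.
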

\begin{longproof}
  They are distinguished by the identity context.
\end{longproof}

We break down this proof into 3 major steps.
\begin{enumerate}
\item (This section) We translate GTT into a statically typed \cbpvstar\/
  language where the casts of GTT are translated to ``contracts'' in
  GTT: i.e., CBPV terms that implement the runtime type checking. We
  translate the term dynamism of GTT to an inequational theory for CBPV.
  Our translation is parameterized by the implementation of the dynamic
  types, and we demonstrate two valid implementations, one more direct
  and one more Scheme-like.
\item (Section \ref{sec:complex}) Next, we eliminate all uses of complex
  values and stacks from the CBPV language. We translate the complex
  values and stacks to terms with a proof that they are ``pure''
  (thunkable or linear~\cite{munchmaccagnoni14nonassociative}). This part has little to do with GTT
  specifically, except that it shows the behavioral property that
  corresponds to upcasts being complex values and downcasts being
  complex stacks.
\item (Section \ref{sec:lr}) Finally, with complex values and stacks
  eliminated, we give a standard operational semantics for CBPV and
  define a \emph{logical relation} that is sound and complete with
  respect to observational error approximation. Using the logical
  relation, we show that the inequational theory of CBPV is sound for
  observational error approximation.
\end{enumerate}

By composing these, we get a model of GTT where equidynamism is sound
for observational equivalence and an operational semantics that
satisfies the graduality theorem.

\subsection{Call-by-push-value}
\label{sec:cbpvstar}

Next, we define the call-by-push-value language \cbpvstar\ that will be
the target for our contract translations of GTT.
\cbpvstar\ is the axiomatic version of call-by-push-value \emph{with}
complex values and stacks, while \cbpv\ (Section~\ref{sec:complex}) will
designate the operational version of call-by-push-value with only
operational values and stacks.
\cbpvstar\ is almost a subset of GTT obtained as follows: We remove the
casts and the dynamic types $\dynv, \dync$ (the shaded pieces) from the
syntax and typing rules in Figure~\ref{fig:gtt-syntax-and-terms}.  There
is no type dynamism, and the inequational theory of \cbpv* is the
homogeneous fragment of term dynamism in
Figure~\ref{fig:gtt-term-dynamism-structural}\iflong\ and Figure~\ref{fig:gtt-term-dynamism-ext-congruence}\fi\ (judgements $\Gamma \vdash
E \ltdyn E' : T$ where $\Gamma \vdash E,E' : T$, with all the same rules
in that figure thus restricted).  The inequational axioms are the
Type Universal Properties ($\beta\eta$ rules)
and Error Properties (with \textsc{ErrBot} made homogeneous) from 
Figure~\ref{fig:gtt-term-dyn-axioms}.
To implement the casts and dynamic types, we \emph{add} general
\emph{recursive} value types ($\mu X.A$, the fixed point of $X \vtype
\vdash A \vtype$) and \emph{corecursive} computation types ($\nu \u Y.\u
B$, the fixed point of $\u Y \ctype \vdash \u B \ctype$).
The recursive type $\mu X.A$ is a value type with constructor
$\texttt{roll}$, whose eliminator is pattern matching, whereas the
corecursive type $\nu \u Y.\u B$ is a computation type defined by its
eliminator (\texttt{unroll}), with an introduction form that we also write
as \texttt{roll}.
We extend the inequational theory with monotonicity of each term constructor of
the recursive types, and with their $\beta\eta$ rules.
\begin{shortonly}
The rules for recursive types are in the extended version.
\end{shortonly}

\begin{longonly}
  In the following figure, we write $\bnfadd$ and $\bnfsub$ to indicate
  the diff from the grammar in Figure~\ref{fig:gtt-syntax-and-terms}.

\begin{figure}[h]
\begin{small}
  \[
  \begin{array}{lrcl}
      \text{Value Types} & A & \bnfadd & \mu X. A \alt X\\
      &   & \bnfsub & \dynv \\
      \text{Computation Types} & \u B & \bnfadd & \nu \u Y. \u B \alt \u Y\\
      &      & \bnfsub & \dync\\
      \text{Values} & V & \bnfadd & \rollty{\mu X.A} V\\
      &   & \bnfsub & \upcast{A}{A} V\\
      \text{Terms} & M & \bnfadd & \rollty{\nu \u Y. \u B} M \alt \unroll M\\
      & M & \bnfsub & \dncast{\u B}{\u B}M\\
      \text{Both}  & E & \bnfadd & \pmmuXtoYinZ V x E
  \end{array}
  \]
  \begin{mathpar}
  \inferrule*[right=$\mu$I]
    {\Gamma \vdash V : A[\mu X. A/X]}
    {\Gamma \vdash \rollty{\mu X. A} V : \mu X.A}
    \qquad
    \inferrule*[right=$\mu$E]
    { \Gamma \vdash V : \mu X. A \\\\
      \Gamma, x : A[\mu X.A/X] \pipe \Delta \vdash E : T
    }
    {\Gamma\pipe\Delta \vdash \pmmuXtoYinZ V x E : T}

    \inferrule*[right=$\nu$I]
    {\Gamma \mid \Delta \vdash M : \u B[\nu \u Y. \u B]}
    {\Gamma \mid \Delta \vdash \rollty{\nu \u Y. \u B} M : \nu \u Y. \u B}\\
    \qquad
    \inferrule*[right=$\nu$E]
    {\Gamma \mid \Delta \vdash M : \nu \u Y. \u B}
    {\Gamma \mid \Delta \vdash \unroll M : \u B[\nu \u Y. \u B]}

    \inferrule*[right=$\mu$ICong]
    {\Gamma \vdash V \ltdyn V' : A[\mu X. A/X]}
    {\Gamma \vdash \roll V \ltdyn \roll V' : \mu X. A}

    \inferrule*[right=$\mu$ECong]
    {\Gamma \vdash V \ltdyn V' : \mu X.A\and
    \Gamma, x : A[\mu X. A/X] \pipe \Delta \vdash E \ltdyn E' : T}
    {\Gamma \pipe \Delta \vdash \pmmuXtoYinZ V x E \ltdyn\pmmuXtoYinZ {V'} x {E'} : T}

    \inferrule*[right=$\nu$ICong]
    {\Gamma\pipe \Delta \vdash M \ltdyn M' : \u B[\nu \u Y. \u B/\u Y]}
    {\Gamma\pipe \Delta \vdash \roll M \ltdyn \roll M' : \nu \u Y. \u B}

    \inferrule*[right=$\nu$ECong]
    {\Gamma\pipe \Delta \vdash M \ltdyn M' : \nu \u Y. \u B}
    {\Gamma\pipe \Delta \vdash \unroll M \ltdyn \unroll M' : \u B[\nu \u Y. \u B/\u Y]}\\
    \framebox{Recursive Type Axioms}
    \medskip
  \end{mathpar}
    
  \begin{tabular}{c|c|c}
    Type & $\beta$ & $\eta$\\
    \hline
    $\mu$
    &
    ${\pmmuXtoYinZ{\roll V}{x}{E} \equidyn E[V/x]}$
    &
    $\begin{array}{l}
      E \equidyn \pmmuXtoYinZ x {y} E[\roll y/x] \\
      \text{where } {x : \mu X. A \vdash E : T}
    \end{array}$\\
    \hline
    $\nu$
    &
    ${\unroll\roll M \equidyn M}$
    &
    ${\bullet : \nu \u Y. \u B \vdash \bullet \equidyn \roll\unroll \bullet : \nu \u Y. \u B}$\\
  \end{tabular}
  \end{small}
  \caption{\cbpvstar\  types, terms, recursive types (diff from GTT),
    full rules in the extended version}
  \label{fig:cbpv-star}
\end{figure}

\end{longonly}

\subsection{Interpreting the Dynamic Types}
\label{sec:dynamic-type-interp}

As shown in Theorems~\ref{thm:decomposition},
\ref{thm:functorial-casts}, \ref{thm:monadic-comonadic-casts}, almost
all of the contract translation is uniquely determined already.
However, the interpretation of the dynamic types and the casts between
the dynamic types and ground types $G$ and $\u G$ are not determined
(they were still postulated in Lemma~\ref{lem:casts-admissible}).  
For this reason, our translation is \emph{parameterized} by an
interpretation of the dynamic types and the ground casts.
By Theorems~\ref{thm:cast-adjunction}, \ref{thm:retract-general}, we know
that these must be \emph{embedding-projection pairs} (ep pairs), which
we now define in \cbpvstar.
\begin{longonly}
There are two kinds of ep pairs we consider: those between value types
(where the embedding models an upcast) and those between computation
types (where the projection models a downcast).
\end{longonly}

\begin{definition}[Value and Computation Embedding-Projection Pairs] ~~ \label{def:cbpvstar-eppairs}
  \begin{enumerate}
  \item 
  A \emph{value ep pair} from $A$ to $A'$ consists of
  an \emph{embedding} value $x:A\vdash V_e : A'$
  and \emph{projection} stack $\bullet : \u F A' \vdash S_p : \u F A$,
  satisfying the \emph{retraction} and \emph{projection} properties:
    \[
    x : A \vdash \ret x \equidyn S_p[\ret V_e] : \u F A
    \qquad
    \bullet : \u F A' \vdash \bindXtoYinZ {S_p} x \ret V_e \ltdyn \bullet : \u F A'
    \]
  \item 
  A \emph{computation ep pair} from $\u B$ to $\u B'$ consists of
  an \emph{embedding} value $z : U \u B \vdash V_e : U \u B'$
  and a \emph{projection} stack $\bullet : \u B' \vdash S_p : \u B$
  satisfying \emph{retraction} and \emph{projection} properties:
    \[
    z : U \u B \vdash \force z \equidyn S_p[\force V_e] : \u B
    \qquad
    w : U \u B' \vdash V_e[\thunk {S_p[\force w]}] \ltdyn w : U \u B'
    \]
  \end{enumerate}
\end{definition}

\begin{longonly}
While this formulation is very convenient in that both kinds of ep
pairs are pairs of a value and a stack, the projection properties are
often occur more naturally in the following forms:
\begin{lemma}[Alternative Projection]
  If $(V_e,S_p)$ is a value ep pair from $A$ to $A'$ and $\Gamma,
  y:A'\pipe\Delta \vdash M : \u B$, then
  \[ \Gamma , x' : A' \vdash \bindXtoYinZ {S_p[\ret x']} x M[V_e/y] \ltdyn M[x'/y] \]

  Similarly, if $(V_e,S_p)$ is a computation ep pair from $\u B$ to
  $\u B'$, and $\Gamma \vdash M : \u B'$then
  \[ \Gamma \vdash V_e[\thunk S_p[M]] \ltdyn \thunk M : U \u B' \]
\end{lemma}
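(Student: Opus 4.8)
The plan is to read both inequalities directly off the projection halves of the ep-pair conditions of Definition~\ref{def:cbpvstar-eppairs}, using only the $\beta\eta$ laws for the shifts $\u F$ and $U$ together with the congruence rules for term dynamism; no model-specific reasoning is required, so the argument works uniformly in \cbpvstar.

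\textbf{Value ep pairs.} I would first specialize the hole $\bullet$ in the projection property $\bullet : \u F A' \vdash \bindXtoYinZ{S_p}{x}{\ret V_e} \ltdyn \bullet : \u F A'$ to the computation $\ret x'$, obtaining $x' : A' \vdash \bindXtoYinZ{S_p[\ret x']}{x}{\ret V_e} \ltdyn \ret x' : \u F A'$. Next I would rewrite the left-hand side of the goal: $\beta$ for $\u F$ gives $M[V_e/y] \equidyn \bindXtoYinZ{\ret V_e}{y}{M}$, and associativity of $\mathsf{bind}$ (itself derivable from the $\u F$ $\beta\eta$ rules by an $\eta$-expansion) then gives $\bindXtoYinZ{S_p[\ret x']}{x}{M[V_e/y]} \equidyn \bindXtoYinZ{(\bindXtoYinZ{S_p[\ret x']}{x}{\ret V_e})}{y}{M}$. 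Applying the congruence rule \textsc{$F$ECong} for $\mathsf{bind}$ to the specialized projection inequality, with reflexivity on the continuation $M$, yields $\bindXtoYinZ{(\bindXtoYinZ{S_p[\ret x']}{x}{\ret V_e})}{y}{M} \ltdyn \bindXtoYinZ{\ret x'}{y}{M}$, and a final $\beta$ for $\u F$ rewrites the right-hand side to $M[x'/y]$. Chaining these with \textsc{TmDynTrans} gives the claim; this presupposes that $M$ is an ordinary computation, so that the nested $\mathsf{bind}$ is well formed.

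\textbf{Computation ep pairs.} This case is even shorter. I would substitute the value $\thunk M$ for the free variable $w$ in the projection property $w : U \u B' \vdash V_e[\thunk{S_p[\force w]}] \ltdyn w : U \u B'$ via \textsc{TmDynValSubst}, getting $\Gamma \vdash V_e[\thunk{S_p[\force{(\thunk M)}]}] \ltdyn \thunk M : U \u B'$. Then $\beta$ for $U$ gives $\force{(\thunk M)} \equidyn M$, hence by congruence $V_e[\thunk{S_p[\force{(\thunk M)}]}] \equidyn V_e[\thunk{S_p[M]}]$, and \textsc{TmDynTrans} closes the goal.

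I do not expect a real obstacle here: the content is a short chain of equidynamisms with a single genuinely inequational step, coming from the projection law. The only points that need care are keeping track of \emph{which} variable each substitution replaces --- the stack hole $\bullet$, the stoup variable, or an ordinary value variable --- and observing that every rewrite is an $\equidyn$, so that the orientation of the final $\ltdyn$ is forced by the projection property and needs no separate argument. The one auxiliary fact used in the value case, associativity of $\mathsf{bind}$, is standard and is not assumed as a primitive axiom of \cbpvstar.
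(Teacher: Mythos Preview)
Your proposal is correct and follows essentially the same route as the paper's proof: for the value case, the paper's single ``comm conv, $\u F\beta$'' step is exactly your associativity-of-$\mathsf{bind}$ plus $\u F\beta$ rewrite, followed by the projection inequality under $\mathsf{bind}$-congruence and a final $\u F\beta$; for the computation case, the paper also just instantiates the projection law at $w := \thunk M$ and uses $U\beta$ to simplify $\force\,(\thunk M)$. Your writeup merely spells out the intermediate steps more explicitly.
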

\begin{longproof}
  For the first,
  \begin{align*}
    \bindXtoYinZ {S_p[\ret x']} x M[V_e/y]
    & \equidyn
    \bindXtoYinZ {(\bindXtoYinZ {S_p[\ret x']} x \ret V_e)} y M\tag{comm conv, $\u F \beta$}\\
    &\bindXtoYinZ {\ret x'} y M\tag{projection}\\
    &M[x'/y]\tag{$\u F\beta$}
  \end{align*}
  For the second,
  \begin{align*}
    V_e[\thunk S_p[M]]
    &\equidyn V_e[\thunk S_p[\force\thunk M]] \tag{$U\beta$}\\
    &\ltdyn \thunk M\tag{projection}
  \end{align*}
\end{longproof}
\end{longonly}

Using this, and using the notion of ground type from
Section~\ref{sec:upcasts-necessarily-values} \emph{with $0$ and $\top$ removed}, we define

\begin{definition}[Dynamic Type Interpretation]
  A $\dynv,\dync$ interpretation $\rho$ consists of (1) a
  \cbpvtxt\ value type $\rho(\dynv)$, (2) a \cbpvtxt\ computation
  type $\rho(\dync)$, 
  (3)
  for each value ground type $G$,
  a value ep pair $(x.\rho_{e}(G), \rho_{p}(G))$ from $\srho G$ to
  $\rho(\dynv)$, and (4) for each computation ground type $\u G$, a
  computation ep pair $(z.\rho_{e}(\u G), \rho_{p}(\u G))$ from
  $\srho{\u G}$ to $\rho(\dync)$.  We write 
  $\srho G$ and $\srho {\u G}$ for the interpretation of a ground type,
  replacing $\dynv$ with $\rho(\dynv)$, $\dync$ with $\rho(\dync)$, and
  compositionally otherwise.
\end{definition}

Next, we show several possible interpretations of the dynamic type
that will all give, by construction, implementations that satisfy the
gradual guarantee.
Our interpretations of the value dynamic type are not surprising.
They are the usual construction of the dynamic type using type tags:
i.e., a recursive sum of basic value types.
On the other hand, our interpretations of the computation dynamic type
are less familiar.
In duality with the interpretation of $\dynv$, we interpret $\dync$ as
a recursive \emph{product} of basic computation types.
This interpretation has some analogues in previous work on the duality
of computation \citep{girard01locussolum,zeilberger09thesis}, but the
most direct interpretation (definition \ref{def:natural-type-interp})
does not correspond to any known work on dynamic/gradual typing.
Then we show that a particular choice of which computation types is
basic and which are derived produces an interpretation of the dynamic
computation type as a type of variable-arity functions whose arguments
are passed on the stack, producing a model similar to Scheme without
accounting for control effects (definition
\ref{def:scheme-like-type-interp}).

\subsubsection{Natural Dynamic Type Interpretation}

Our first dynamic type interpretation is to make the value and
computation dynamic types sums and products of the ground value and
computation types, respectively.
This forms a model of GTT for the following reasons.
For the value dynamic type $\dynv$, we need a value embedding (the
upcast) from each ground value type $G$ with a corresponding projection.
The easiest way to do this would be if for each $G$, we could rewrite
$\dynv$ as a sum of the values that fit $G$ and those that don't:
$\dynv \cong G + \dynv_{-G}$ because of the following lemma.

\begin{lemma}[Sum Injections are Value Embeddings]\label{lem:injections-are-embeddings}
  For any $A, A'$, there are value ep pairs from $A$ and $A'$ to
  $A+A'$ where the embeddings are $\inl$ and $\inr$.
\end{lemma}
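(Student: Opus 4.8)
The plan is to exhibit the two ep pairs explicitly and discharge the retraction and projection laws of Definition~\ref{def:cbpvstar-eppairs} using only the $\beta\eta$ equations of CBPV* and the error axioms \textsc{ErrBot} and \textsc{StkStrict}. Write $D = A + A'$. For the embedding $\inl : A \to D$ I would take the projection stack
\[
\bullet : \u F D \vdash S_p := \bindXtoYinZ{\bullet}{z}{\caseofXthenYelseZ{z}{x_1.\ret{x_1}}{x_2.\err_{\u F A}}} : \u F A,
\]
which is a genuine complex stack: the $\mathsf{bind}$ uses $\bullet$ linearly in head position and the $\mathsf{case}$ is a value-inversion. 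For $\inr$ the projection is the mirror image, with the two branches swapped.

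For the retraction law I would compute $S_p[\ret{\inl x}]$: $\beta$ for $\u F$ fires the $\mathsf{bind}$, and then $\beta$ for $+$ selects the left branch, giving $\ret x$; hence $x : A \vdash \ret x \equidyn S_p[\ret{\inl x}]$. For the projection law, $\bullet : \u F D \vdash \bindXtoYinZ{S_p}{x}{\ret{\inl x}} \ltdyn \bullet$, I would first rewrite the left-hand side using associativity of $\mathsf{bind}$ (standard in CBPV, derivable from $\eta$ for $\u F$) and the commuting conversion that pushes a $\mathsf{bind}$ into the arms of a $\mathsf{case}$ (an instance of $\eta$ for $+$ together with $\beta$ for $+$) into the form $\bindXtoYinZ{\bullet}{z}{\caseofXthenYelseZ{z}{x_1.\ret{\inl x_1}}{x_2.\bindXtoYinZ{\err_{\u F A}}{x}{\ret{\inl x}}}}$, where the left arm has already been $\u F$-$\beta$-reduced. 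On the right arm, \textsc{StkStrict} applied to the stack $\bindXtoYinZ{\bullet}{x}{\ret{\inl x}}$ gives $\bindXtoYinZ{\err_{\u F A}}{x}{\ret{\inl x}} \ltdyn \err_{\u F D}$, and \textsc{ErrBot} gives $\err_{\u F D} \ltdyn \ret{\inr x_2}$; so by transitivity and the congruence rules for $\mathsf{case}$ and $\mathsf{bind}$ the whole term is $\ltdyn \bindXtoYinZ{\bullet}{z}{\caseofXthenYelseZ{z}{x_1.\ret{\inl x_1}}{x_2.\ret{\inr x_2}}}$. Finally the identity instance of the $\eta$ rule for $+$ (the ``weak $\eta$'' already used in the proof of Theorem~\ref{thm:functorial-casts}) collapses the $\mathsf{case}$ to $\ret z$, and $\eta$ for $\u F$ collapses $\bindXtoYinZ{\bullet}{z}{\ret z}$ to $\bullet$, finishing the inequality. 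The $\inr$ ep pair is handled symmetrically.

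I do not expect a serious obstacle here; the only points requiring a little care are tracking that the one-sided $\ltdyn$ enters exactly through the error branch (everything else being an equidynamism), and recalling that the needed commuting conversions and ``weak $\eta$'' are derived consequences of the $\eta$ axioms rather than primitive rules.
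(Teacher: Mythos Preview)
Your proposal is correct and follows essentially the same route as the paper: the same projection stack, the same retraction calculation via $\u F\beta$ and $+\beta$, and the same projection argument via commuting conversion, error strictness, \textsc{ErrBot}, and $+\eta$. The only cosmetic difference is that the paper first invokes an ``$\u F(+)$ induction'' lemma to replace $\bullet$ by $\ret x'$ and then applies $+\eta$ directly, whereas you keep the outer $\mathsf{bind}$ throughout and close with $\u F\eta$; the underlying equational steps are identical.
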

\begin{proof}
  Define the embedding of $A$ to just be $x. \inl x$ and the
  projection to be $\bindXtoYinZ \bullet y \caseofXthenYelseZ y {\inl x. \ret
    x}{\inr _. \err}$.
  \begin{longonly}
    This satisfies retraction (using $\u F(+)$ induction (lemma \ref{lem:f-induction}), $\inr$ case is the same):
    \begin{align*}
      \bindXtoYinZ {\inl x} y \caseofXthenYelseZ y {\inl x. \ret x}{\inr _. \err}
      &\equidyn \caseofXthenYelseZ {\inl x} {\inl x. \ret x}{\inr _. \err}\tag{$\u F\beta$}\\
      &\equidyn \ret x\tag{$+\beta$}
    \end{align*}
    and projection (similarly using $\u F(+)$ induction):
    \begin{align*}
      x': A+A'
      &\vdash \bindXtoYinZ x {(\bindXtoYinZ {\ret x'} y \caseofXthenYelseZ y {\inl x. \ret x}{\inr _. \err})} \ret \inl x\\
      &\equidyn \bindXtoYinZ x {(\caseofXthenYelseZ {x'} {\inl x. \ret x}{\inr _. \err})}\ret \inl x\tag{$\u F\beta$}\\
      &\equidyn {(\caseofXthenYelseZ {x'} {\inl x. \bindXtoYinZ {\ret x} x \ret\inl x}{\inr _. \bindXtoYinZ \err x \ret\inl x})}\tag{commuting conversion}\\
      &\equidyn {(\caseofXthenYelseZ {x'} {\inl x. \ret\inl x}{\inr _. \err})}\tag{$\u F\beta,\err$ strictness}\\
      &\ltdyn {(\caseofXthenYelseZ {x'} {\inl x. \ret\inl x}{\inr y. \ret\inl y})}\tag{$\err$ bottom}\\
      &\equidyn \ret x' \tag{$+\eta$}
    \end{align*}
  \end{longonly}
\end{proof}
\begin{longonly}
  Whose proof relies on the following induction principle for the
  returner type:
  \begin{lemma}[$\u F(+)$ Induction Principle]
    \label{lem:f-induction}
  $\Gamma\pipe \cdot : \u F (A_1 + A_2) \vdash M_1 \ltdyn M_2 : \u B$
  holds if and only if
  $\Gamma, V_1: A_1 \vdash M_1[\ret \inl V_1] \ltdyn M_2[\ret \inl V_2] : \u B$ and
  $\Gamma, V_2: A_2 \vdash M_2[\ret \inr V_2] \ltdyn M_2[\ret \inr V_2] : \u B$ 
\end{lemma}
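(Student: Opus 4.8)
The plan is to prove both implications by pure bookkeeping in the structural and $\beta\eta$ rules, with no appeal to the cast axioms or the error axioms. The key first move is to trade the stack hole $\bullet : \u F(A_1+A_2)$ for a value variable. Since $M_1,M_2$ are genuine complex stacks (the stoup is nonempty), the $\eta$ rule for $\u F$ applies in the form $\bullet : \u F(A_1+A_2) \vdash M_i \equidyn \bindXtoYinZ{\bullet}{y}{M_i[\ret y/\bullet]} : \u B$. I would first establish that $\Gamma \pipe \bullet : \u F(A_1+A_2) \vdash M_1 \ltdyn M_2 : \u B$ holds iff $\Gamma, y : A_1+A_2 \vdash M_1[\ret y/\bullet] \ltdyn M_2[\ret y/\bullet] : \u B$: the forward direction plugs the reflexive $\ret y \ltdyn \ret y$ into the hole via \textsc{TmDynStkSubst}; the backward direction applies the congruence rule for $\mathsf{bind}$ (with $\bullet \ltdyn \bullet$ and $y \ltdyn y$) and then rewrites both sides by the $\u F$-$\eta$ equidynamism above. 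Because $(M_i[\ret y/\bullet])[\inl x_1/y] = M_i[\ret(\inl x_1)/\bullet]$ and likewise for $\inr$, this reduces the lemma to the analogous ``case-analysis induction principle'' for expressions: for $\Gamma, y : A_1+A_2 \vdash E_i : T$ one has $E_1 \ltdyn E_2$ iff $\Gamma, x_1:A_1 \vdash E_1[\inl x_1/y] \ltdyn E_2[\inl x_1/y] : T$ and $\Gamma, x_2:A_2 \vdash E_1[\inr x_2/y] \ltdyn E_2[\inr x_2/y] : T$.

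For that reduced statement, the forward direction is again immediate: substitute the related values $\inl x_1 \ltdyn \inl x_1$ (by the $\mathsf{inl}$ congruence rule and reflexivity) for $y$ using \textsc{TmDynValSubst}, and symmetrically for $\inr$. The backward direction is where the $\eta$ law for sums does the real work: $\eta$ gives $E_i \equidyn \caseofXthenYelseZ{y}{x_1.E_i[\inl x_1/y]}{x_2.E_i[\inr x_2/y]}$, so applying the congruence rule for $\mathsf{case}$ with $y \ltdyn y$ in the scrutinee and the two hypotheses in the branches yields $\caseofXthenYelseZ{y}{x_1.E_1[\inl x_1/y]}{x_2.E_1[\inr x_2/y]} \ltdyn \caseofXthenYelseZ{y}{x_1.E_2[\inl x_1/y]}{x_2.E_2[\inr x_2/y]}$, and transitivity with the two $\eta$-equidynamisms closes it. Composing the two reductions gives the lemma.

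I do not expect a genuine obstacle here; the proof is entirely an exercise in the structural and $\beta\eta$ rules. The only points that need care are the variances of the substitution lemmas — in each forward direction one is plugging a reflexive term ($\ret y$, or $\inl x_1$, or $\inr x_2$) into a homogeneous term-dynamism, so one must check \textsc{TmDynStkSubst}/\textsc{TmDynValSubst} do apply in the instance where the left and right substituends coincide — and, in the backward directions, making sure the $\u F$-$\eta$ and sum-$\eta$ rewrites are performed while the hole (resp.\ the variable $y$) is still free, so that the branch instances come out syntactically as $M_i[\ret(\inl x_1)/\bullet]$, etc., matching the hypotheses exactly.
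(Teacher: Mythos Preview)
The paper states this lemma without proof (it appears only as a supporting remark inside a \texttt{longonly} block, with no accompanying \texttt{proof} or \texttt{longproof} environment), so there is no paper proof to compare against. Your argument is correct and is the natural one: trade the $\u F$-hole for a value variable via the $\u F$ $\eta$-law, then discharge the sum variable via the $+$ $\eta$-law, with the forward directions handled by \textsc{TmDynStkSubst}/\textsc{TmDynValSubst} on reflexive substituends and the backward directions by the relevant congruence rules composed with the $\eta$-equidynamisms.
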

\end{longonly}

This shows why the type tag interpretation works: it makes the dynamic
type in some sense the minimal type with injections from each $G$:
the sum of all value ground types $? \cong \Sigma_{G} G$.

The dynamic computation type $\dync$ can be naturally defined by a
dual construction, by the following dual argument.
First, we want a computation ep pair from $\u G$ to $\dync$ for each
ground computation type $\u G$.
Specifically, this means we want a stack from $\dync$ to $\u G$ (the
downcast) with an embedding.
The easiest way to get this is if, for each ground computation type
$\u G$, $\dync$ is equivalent to a lazy product of $\u G$ and ``the
other behaviors'', i.e., $\dync \cong \u G \with \dync_{-\u G}$.
Then the embedding on $\pi$ performs the embedded computation, but on
$\pi'$ raises a type error.
The following lemma, dual to lemma \ref{lem:injections-are-embeddings}
shows this forms a computation ep pair:

\begin{lemma}[Lazy Product Projections are Computation Projections]\label{lem:projections-are-projections}
  For any $\u B, \u B'$, there are computation ep pairs from $\u B$
  and $\u B'$ to $\u B \with \u B'$ where the projections are $\pi$
  and $\pi'$.
\end{lemma}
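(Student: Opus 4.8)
The plan is to dualize the proof of Lemma~\ref{lem:injections-are-embeddings} exactly as the surrounding prose suggests. For the computation ep pair from $\u B$ to $\u B \with \u B'$ I would take the projection stack to be $\bullet : \u B \with \u B' \vdash \pi\bullet : \u B$ and the embedding value to be $z : U \u B \vdash \thunk{\pair{\force z}{\err_{\u B'}}} : U(\u B \with \u B')$ --- i.e.\ the thunk of the lazy pair that runs the embedded computation when $\pi$ is taken and raises a type error when $\pi'$ is taken. By the evident left--right symmetry of $\with$, the ep pair from $\u B'$ is obtained by swapping components (projection $\pi'\bullet$, embedding $z.\thunk{\pair{\err_{\u B}}{\force z}}$), so it suffices to verify the first one against Definition~\ref{def:cbpvstar-eppairs}.

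First I would check the retraction property, $z : U \u B \vdash \force z \equidyn \pi\bigl(\force{\thunk{\pair{\force z}{\err_{\u B'}}}}\bigr) : \u B$: this is immediate by the $U$-$\beta$ rule ($\force\thunk M \equidyn M$) followed by the $\with$-$\beta$ rule ($\pi\pair{M}{M'} \equidyn M$). Then I would check the projection inequality, $w : U(\u B \with \u B') \vdash \thunk{\pair{\force{\thunk{\pi(\force w)}}}{\err_{\u B'}}} \ltdyn w$. Substituting and $U$-$\beta$-reducing the inner $\thunk\force$ rewrites the left side to $\thunk{\pair{\pi(\force w)}{\err_{\u B'}}}$; on the right, $U$-$\eta$ together with $\with$-$\eta$ give $w \equidyn \thunk{\pair{\pi(\force w)}{\pi'(\force w)}}$. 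Hence, by congruence for $\thunk$ and for the pairing constructor and reflexivity on the first component, the claim reduces to $\err_{\u B'} \ltdyn \pi'(\force w) : \u B'$, which is exactly an instance of the \textsc{ErrBot} axiom of Figure~\ref{fig:gtt-term-dyn-axioms} (the error is the least dynamic computation of every computation type).

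I do not expect a genuine obstacle here. Unlike the projection case of Lemma~\ref{lem:injections-are-embeddings}, which had to commute a $\mathsf{bind}$ past a $\mathsf{case}$ and so needed the $\u F(+)$ induction principle, the $\with$ side is purely negative: every step is a direct application of a $\beta$ rule, an $\eta$ rule, a congruence rule, or the single inequational axiom \textsc{ErrBot}. The only mild point of care is bookkeeping the orientation of $\ltdyn$ --- it is the embedding-after-projection composite that lies \emph{below} the identity, precisely because it discards the true second component $\pi'(\force w)$ and replaces it by $\err$, which matches the direction of the projection inequality in Definition~\ref{def:cbpvstar-eppairs}.
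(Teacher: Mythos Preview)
Your proposal is correct and matches the paper's proof essentially step for step: the paper defines the embedding as $z.\thunk\pair{\force z}{\err}$ with projection $\pi$, verifies retraction by $U\beta$ then $\with\beta$, and verifies the projection inequality by $U\beta$, \textsc{ErrBot}, $\with\eta$, and $U\eta$. The only cosmetic difference is that you expand $w$ on the right via $U\eta$ and $\with\eta$ before comparing components, whereas the paper builds up the left side and then collapses it; the axioms invoked are identical.
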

\begin{proof}
  Define the projection for $\u B$ to be $\pi$. Define the embedding
  by $z. \pair{\force z}{\err}$. Similarly define the projection for
  $\u B'$.
  \begin{longonly}
    This satisfies retraction:
    \begin{align*}
      \pi\force\thunk\pair{\force z}{\err}
      &\equidyn \pi\pair{\force z}{\err}\tag{$U\beta$}\\
      &\equidyn \force z\tag{$\with\beta$}
    \end{align*}
    and projection:
    \begin{align*}
      &\thunk\pair{\force\thunk\pi\force w}{\err}\\
      &\equidyn \thunk\pair{\pi\force w}{\err} \tag{$U\beta$}\\
      &\ltdyn \thunk\pair{\pi\force w}{\pi'\force w}\tag{$\err$ bottom}\\
      &\equidyn \thunk\force w\tag{$\with\eta$}\\
      &\equidyn w \tag{$U\eta$}
    \end{align*}
  \end{longonly}
\end{proof}

From this, we see that the easiest way to construct an interpretation
of the dynamic computation type is to make it a lazy product of all
the ground types $\u G$: $\dync \cong \With_{\u G} \u G$.
Using recursive types, we can easily make this a definition of the
interpretations:

\begin{definition}[Natural Dynamic Type Interpretation]
  \label{def:natural-type-interp}
  The following defines a dynamic type interpretation.
  We define the types to satisfy the isomorphisms
  \[
    \dynv \cong 1 + (\dynv \times \dynv) + (\dynv + \dynv) + U\dync \qquad
    \dync \cong (\dync \with \dync) \with (\dynv \to \dync) \with \u F \dynv
  \]
  with the ep pairs defined as in
  Lemma~\ref{lem:injections-are-embeddings} and
  \ref{lem:projections-are-projections}.  
\end{definition}
\begin{longproof}
  We can construct $\dynv, \dync$ explicitly using recursive and
  corecursive types. Specifically, we make the recursion explicit by
  defining open versions of the types:
  \begin{align*}
    X,\u Y \vdash \dynv_o &= 1 + (X \times X) + (X + X) + U\u Y\vtype \\
    X,\u Y \vdash \dync_o &= (\u Y \with \u Y) \with (X \to \u Y) \with \u F X \ctype
  \end{align*}
  Then we define the types $\dynv, \dync$ using a standard encoding:
  \begin{align*}
    \dynv &= \mu X. \dynv_o[\nu \u Y. \dync_o/\u Y]\\
    \dync &= \nu \u Y. \dync_o[\mu X. \dynv_o/X]
  \end{align*}
  Then clearly by the roll/unroll isomorphism we get the desired
  isomorphisms:
  \begin{align*}
    \dynv &\cong \dynv_o[\dync/\u Y,\dynv/X] = 1 + (\dynv \times \dynv) + (\dynv + \dynv) + U\dync \\
    \dync &\cong\dynv_c[\dynv/X,\dync/\u Y] = (\dync \with \dync) \with (\dynv \to \dync) \with \u F \dynv
  \end{align*}
\end{longproof}

This dynamic type interpretation is a natural fit for CBPV because the
introduction forms for $\dynv$ are exactly the introduction forms for
all of the value types (unit, pairing,$\texttt{inl}$, $\texttt{inr}$, $\texttt{force}$), while
elimination forms are all of the elimination forms for computation types
($\pi$, $\pi'$, application and binding); such ``bityped'' languages
are related to \citet{girard01locussolum,zeilberger09thesis}.
\begin{shortonly}
  In the extended version, we give an extension of GTT axiomatizing this
  implementation of the dynamic types.
\end{shortonly}
\begin{longonly}
Based on this dynamic type interpretation, we can extend GTT to support
a truly dynamically typed style of programming, where one can perform
case-analysis on the dynamic types at runtime, in addition to the type
assertions provided by upcasts and downcasts.  
\begin{figure}
\begin{small}
  \begin{mathpar}
    \inferrule*[right=$\dyn$E]
    {\Gamma\pipe \Delta \vdash V : \dynv\\
      \Gamma,x_1 : 1\pipe \Delta \vdash E_1 : T\\
      \Gamma,x_\times : \dynv\times\dynv\pipe \Delta \vdash E_\times : T\\
      \Gamma,x_+ : \dynv+\dynv\pipe \Delta \vdash E_+ : T\\
      \Gamma,x_U : U\dync \pipe \Delta \vdash E_U : T\\
    }
    {\Gamma\pipe \Delta \vdash \dyncaseofXthenOnePairSumU {V} {x_{1}. E_1}{x_{\times}. E_{\times}}{x_{+}. E_{+}}{x_{U}. E_U} : T}\and

    \dyncaseofXthenOnePairSumU {(\upcast{G}{\dynv}V)} {x_{1}. E_1}{x_{\times}. E_{\times}}{x_{+}. E_{+}}{x_{U}. E_U} \equidyn E_{G}[V/x_G]\qquad(\dynv\beta)\and

    \inferrule*[right=$\dynv\eta$]
    {\Gamma , x : \dynv \pipe \Delta \vdash E : \u B}
    {E \equidyn \dyncaseofXthenOnePairSumU x
      {x_1. E[\upcast{1}{\dynv}/x_1]}
      {x_{\times}. E[\upcast{{\times}}{\dynv}/x_{\times}]}
      {x_+. E[\upcast{+}{\dynv}/x_+]}
      {x_U. E[\upcast{U}{\dynv}/x_U]}}\and
        
    \inferrule*[right=$\dync$]
    {\Gamma \pipe \Delta \vdash M_{\to} : \dynv \to \dync\\
      \Gamma \pipe \Delta \vdash M_{\with} : \dync \with \dync\\
      \Gamma \pipe \Delta \vdash M_{\u F} : \u F}
    {\Gamma \pipe \Delta \vdash \dyncocaseWithFunF{M_{\with}}{M_{\to}}{M_{\u F}} : \dync}\and

    \dncast{\u G}{\dync}\dyncocaseWithFunF{M_{\with}}{M_{\to}}{M_{\u F}} \equidyn M_{\u G}\quad(\dync\beta)\and

    {\bullet : \dync \vdash \bullet
      \equidyn
      \dyncocaseWithFunF
          {\dncast{\dync\with\dync}{\dync}\bullet}
          {\dncast{\dynv\to\dync}{\dync}\bullet}
          {\dncast{\u F\dynv}{\dync}\bullet}}\quad(\dync\eta)
  \end{mathpar}
  \end{small}
  \caption{Natural Dynamic Type Extension of GTT}
\end{figure}

The axioms we choose might seem to under-specify the dynamic type, but
because of the uniqueness of adjoints, the following are derivable.
\begin{lemma}[Natural Dynamic Type Extension Theorems]
  The following are derivable in GTT with the natural dynamic type extension
  \begin{mathpar}
    {\dncast{\u F 1}{\u F \dynv}\ret V \equidyn \dyncaseofXthenYelseZ V {x_1. \ret x_1}{\els \err}}\\
    {\dncast{\u F(\dynv\times\dynv)}{\u F \dynv}\ret V \equidyn \dyncaseofXthenYelseZ V {x_\times. \ret x_\times}{\els \err}}\\
    {\dncast{\u F(\dynv + \dynv)}{\u F \dynv}\ret V \equidyn \dyncaseofXthenYelseZ V {x_+. \ret x_+}{\els \err}}\\  
    {\dncast{\u F U\dync}{\u F\dynv}\ret V \equidyn \dyncaseofXthenYelseZ V {x_U. \ret x_U}{\els \err}}\\
    \force\upcast{U(\dync\with\dync)}{U\dync}V \equidyn \dyncocaseWithFunF{\force V}{\err}{\err}\\
    \force\upcast{U(\dynv \to \dync)}{U\dync}V \equidyn \dyncocaseWithFunF{\err}{\force V}{\err}\\
    \force\upcast{U\u F\dynv}{U\dync}V \equidyn \dyncocaseWithFunF{\err}{\err}{\force V}\\
  \end{mathpar}
\end{lemma}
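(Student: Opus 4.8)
The plan is to reduce every line to the \emph{uniqueness} of casts: by Theorem~\ref{thm:casts-unique} a cast is determined up to $\equidyn$ by its universal property, so it is enough to exhibit, for each equation, a term with the right universal property. Note first that all the casts involved are ordinary GTT casts --- the computation downcast $\dncast{\u F G}{\u F\dynv}$ comes from $\u F G \ltdyn \u F\dynv$ and the value upcast $\upcast{U\u G}{U\dync}$ from $U\u G \ltdyn U\dync$ --- so part~(2) resp.\ part~(1) of Theorem~\ref{thm:casts-unique} applies on the nose. For each equation I would first peel off the argument to which the cast is applied: using $\u F\eta$ (resp.\ $U\eta$), stack (resp.\ value) substitution, and $\u F\beta$ (resp.\ $U\beta$) this turns the claim into an equation $S_G \equidyn \dncast{\u F G}{\u F\dynv}{\bullet}$ (resp.\ $W_{\u G} \equidyn \upcast{U\u G}{U\dync}{x}$) between the cast and a ``defining'' stack $S_G$ (resp.\ value $W_{\u G}$); then I verify that $S_G$ (resp.\ $W_{\u G}$) satisfies the downcast (resp.\ upcast) universal property.

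For the four value-downcast lines I would take $S_G := \bindXtoYinZ{\bullet}{y}{\dyncaseofXthenYelseZ{y}{x.\ret x}{\els\err}}$. For the ``bound'' half I would $\u F\eta$-expand $\bullet$ on the right, apply congruence for $\mathsf{bind}$, and reduce to $y : \dynv \vdash \dyncaseofXthenYelseZ{y}{x.\ret x}{\els\err} \ltdyn \ret y : \u F G \ltdyn \u F\dynv$; this follows by $\dynv\eta$-expanding $\ret y$ into the four-way $\kw{tycase}$ whose branches are $\ret(\upcast{G'}{\dynv}{x_{G'}})$ and then applying congruence for $\kw{tycase}$ --- the $G$-branch via congruence for $\ret$ and the ``bound'' upcast axiom $x \ltdyn \upcast{G}{\dynv}{x}$, the remaining branches via \textsc{ErrBot}. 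For the ``best'' half I would $\u F\eta$-expand $\bullet$ on the left, apply congruence for $\mathsf{bind}$, and reduce to $x_G \ltdyn y : G \ltdyn \dynv \vdash \ret x_G \ltdyn \dyncaseofXthenYelseZ{y}{x.\ret x}{\els\err} : \u F G$; here the ``best'' upcast axiom gives $\upcast{G}{\dynv}{x_G} \ltdyn y : \dynv$, $\dynv\beta$ rewrites $\ret x_G$ as $\dyncaseofXthenYelseZ{(\upcast{G}{\dynv}{x_G})}{x.\ret x}{\els\err}$, and congruence for $\kw{tycase}$ on the scrutinee finishes it. Plugging $\bullet := \ret V$ and $\u F\beta$-reducing then yields the displayed form.

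The three computation-upcast lines I would handle dually. Let $N$ denote the $\dync$-cocase with the relevant computation (initially $\force V$, later $\force z$) in the $\u G$-component and $\err$ in the other two components, and let $W_{\u G} := \thunk N$. For the ``bound'' half, $U\eta$ and congruence for $\mathsf{thunk}$ reduce the goal to $z : U\u G \vdash \force z \ltdyn N : \u G \ltdyn \dync$ (with $\force z$ in the $\u G$-slot of $N$); then $\dync\beta$ gives $\dncast{\u G}{\dync}{N} \equidyn \force z$ and the ``bound'' downcast axiom gives $\dncast{\u G}{\dync}{N} \ltdyn N : \u G \ltdyn \dync$, so composing yields $\force z \ltdyn N$. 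For the ``best'' half the goal reduces to $z \ltdyn w : U\u G \ltdyn U\dync \vdash N \ltdyn \force w : \dync$ (with $\force z$ in the $\u G$-slot of $N$); I would $\dync\eta$-expand $\force w$ into the cocase with components $\dncast{\u G'}{\dync}{\force w}$, apply congruence for the cocase, dispatch the two off-components by \textsc{ErrBot}, and for the $\u G$-component derive $\force z \ltdyn \dncast{\u G}{\dync}{\force w} : \u G$ from the ``best'' downcast axiom applied to $\force z \ltdyn \force w : \u G \ltdyn \dync$ (which is congruence for $\mathsf{force}$ applied to $z \ltdyn w$). Applying $\force(-)$ and $U\beta$ then recovers the displayed form.

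The individual steps are routine $\beta\eta$/congruence manipulations; the one place I expect to need genuine care is keeping the typings \emph{homogeneous} exactly where the congruence rules for $\kw{tycase}$ and for the cocase demand it. Concretely, in both ``best'' directions I must use the \emph{raw} cast axioms of Figure~\ref{fig:gtt-term-dyn-axioms} --- whose conclusions $\upcast{A}{A'}{x} \ltdyn x' : A'$ and $\bullet \ltdyn \dncast{\u B}{\u B'}{\bullet'} : \u B$ are homogeneous --- rather than the derived left/right rules of Lemma~\ref{lem:cast-left-right}, whose conclusions are heterogeneous at the outer type and would not line up with the component-wise premises of the congruence rules. A secondary point is that the proof uses congruence rules for the new $\kw{tycase}$ and cocase term formers; I would take these as given with the natural dynamic type extension (alongside $\dynv\beta,\dynv\eta,\dync\beta,\dync\eta$), or, for a leaner axiomatization, derive them first from $\dynv\eta$ and $\dync\eta$ just as the cast congruence rules are derived in Lemma~\ref{lem:cast-congruence}.
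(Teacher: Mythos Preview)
Your proposal is correct and takes essentially the same approach as the paper, which simply states that the results follow ``because of the uniqueness of adjoints'' (i.e., Theorem~\ref{thm:casts-unique}) without giving any further detail. You have correctly filled in what the paper leaves implicit: exhibiting the candidate stack/value, verifying the bound and best directions via $\dynv\eta$/$\dync\eta$ and $\dynv\beta$/$\dync\beta$ together with the cast axioms and \textsc{ErrBot}, and then specializing to $\ret V$ (resp.\ applying $\force$) to obtain the displayed forms; your care about keeping the congruence premises homogeneous where required is exactly the subtlety to watch.
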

We explore this in more detail with the next dynamic type
interpretation.
\end{longonly}

\begin{longonly}
Next, we easily see that if we want to limit GTT to just the CBV types
(i.e. the only computation types are $A \to \u F A'$), then we can
restrict the dynamic types as follows:
\begin{definition}[CBV Dynamic Type Interpretation]
  The following is a dynamic type interpretation for the ground types of
  GTT with only function computation types:
  \[
    \dynv \cong 1 + (\dynv + \dynv) + (\dynv \times \dynv) + U(\dync) \qquad
    \dync \cong \dynv \to \u F \dynv
  \]
\end{definition}

And finally if we restrict GTT to only CBN types (i.e., the only value
type is booleans $1+1$), we can restrict the dynamic types as follows:
\begin{definition}[CBN Dynamic Type Interpretation]
  The following is a dynamic type interpretation for the ground types of
  GTT with only boolean value types:
  \[
    \dynv = (1 + 1) \qquad
    \dync \cong (\dync \with \dync) \with (U\dync \to \dync)
    \with \u F \dynv
  \]
\end{definition}
\end{longonly}

\subsubsection{Scheme-like Dynamic Type Interpretation}

The above dynamic type interpretation does not correspond to any
dynamically typed language used in practice, in part because it
includes explicit cases for the ``additives'', the sum type $+$ and
lazy product type $\with$.
Normally, these are not included in this way, but rather sums are
encoded by making each case use a fresh constructor (using nominal
techniques like opaque structs in Racket) and then making the sum the
union of the constructors, as argued in \citet{siekth16recursiveunion}.
We leave modeling this nominal structure to future work, but in
minimalist languages, such as simple dialects of Scheme and Lisp, sum
types are often encoded \emph{structurally} rather than nominally by
using some fixed sum type of \emph{symbols}, also called \emph{atoms}.
Then a value of a sum type is modeled by a pair of a symbol (to indicate
the case) and a payload with the actual value.
We can model this by using the canonical isomorphisms
\[ \dynv + \dynv \cong ((1+1) \times \dynv) \qquad \dync \with \dync \cong (1+1) \to \dync \]
and representing sums as pairs, and lazy products as functions.
\begin{longonly}
The fact that isomorphisms are ep pairs is useful for constructing the
ep pairs needed in the dynamic type interpretation.  
\begin{lemma}[Isomorphisms are EP Pairs]
  \label{lem:isos-are-ep}
  If $x:A \vdash V' : A'$ and $x':A' \vdash V : A$ are an isomorphism in
  that $V[V'/x'] \equidyn x$ and $V[V/x]\equidyn x'$, then $(x.V',
  \bindXtoYinZ \bullet {x'} \ret V')$ are a value ep pair from $A$ to
  $A'$.  Similarly if $\bullet : \u B \vdash S' : \u B'$ and $\bullet :
  \u B' \vdash S : \u B$ are an isomorphism in that $S[S']\equiv
  \bullet$ and $S'[S] \equiv \bullet$ then $(z. S'[\force z], S)$ is an
  ep pair from $\u B$ to $\u B'$.
\end{lemma}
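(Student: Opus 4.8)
The plan is to verify the retraction and projection laws of Definition~\ref{def:cbpvstar-eppairs} for each of the two claimed ep pairs by a direct equational calculation in the $\beta\eta$-theory of the shift types $\u F$ and $U$, using only that complex values and complex stacks compose in the equational theory (so the two isomorphism equations may be substituted under $\mathsf{ret}$/$\mathsf{force}$ and inside binds). I do not anticipate a genuinely hard step: the real content is bookkeeping of substitutions and of the directions of the two isomorphism equations. In fact the calculations will establish the projection laws up to $\equidyn$, which is strictly stronger than the $\ltdyn$ required.

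For the value case I would take the embedding to be the given complex value $x:A\vdash V' : A'$ and the projection to be the stack $\bullet : \u F A' \vdash \bindXtoYinZ{\bullet}{x'}{\ret V} : \u F A$ (well-typed by $\u F$E since $x':A'\vdash \ret V : \u F A$). For retraction, compute $\bindXtoYinZ{\ret{V'}}{x'}{\ret V} \equidyn \ret{V[V'/x']} \equidyn \ret x$ by $\u F\beta$ and then the hypothesis $V[V'/x']\equidyn x$. For projection, compute $\bindXtoYinZ{(\bindXtoYinZ{\bullet}{x'}{\ret V})}{x}{\ret{V'}}$ by first using the bind/bind commuting conversion (a consequence of the $\u F$ rules) to obtain $\bindXtoYinZ{\bullet}{x'}{\bindXtoYinZ{\ret V}{x}{\ret{V'}}}$, then $\u F\beta$ to get $\bindXtoYinZ{\bullet}{x'}{\ret{V'[V/x]}}$, then the hypothesis $V'[V/x]\equidyn x'$ to get $\bindXtoYinZ{\bullet}{x'}{\ret{x'}}$, and finally $\u F\eta$ to conclude $\equidyn\bullet$.

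The computation case is dual. I would take the embedding to be the complex value $z:U\u B\vdash \thunk{S'[\force z]} : U\u B'$ and the projection to be $\bullet:\u B'\vdash S : \u B$. For retraction, compute $S[\force{(\thunk{S'[\force z]})}] \equidyn S[S'[\force z]] \equidyn \force z$ using $U\beta$ and then the isomorphism equation $S[S'/\bullet]\equidyn\bullet$ (substituting $\force z$ for $\bullet$). For projection, compute $\thunk{S'[\force{(\thunk{S[\force w]})}]} \equidyn \thunk{S'[S[\force w]]} \equidyn \thunk{\force w} \equidyn w$ using $U\beta$, the isomorphism equation $S'[S/\bullet]\equidyn\bullet$, and $U\eta$.

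The only point I would spell out carefully in the final write-up is that the composite $S[S'/\bullet]$ of complex stacks is again a stack $\bullet:\u B\vdash{-}:\u B$, and that stack substitution together with the congruence rules lets us push $\force z$ (resp.\ $\force w$) through it; but this is routine, and I expect no real obstacle — the lemma is essentially immediate once the $\beta\eta$ equations for $\u F$ and $U$ and the (de)composition properties of complex values/stacks are in hand.
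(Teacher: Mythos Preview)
Your proposal is correct. The paper states this lemma without proof, and your direct $\beta\eta$ calculations are exactly the expected argument; you also correctly fix the evident typos in the statement (the projection should return $V$, not $V'$; the second isomorphism equation should be $V'[V/x]\equidyn x'$) and make explicit the $\mathsf{thunk}$ on the computation-case embedding that the paper leaves implicit.
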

\end{longonly}

With this in mind, we remove the cases for sums and lazy pairs from the
natural dynamic types, and include some atomic type as a case of
$\dynv$---for simplicity we will just use booleans.
%
We also do not need a case for $1$, because we can identify it with one
of the booleans, say $\texttt{true}$.
This leads to the following definition:

\begin{definition}[Scheme-like Dynamic Type Interpretation] \label{def:scheme-like-type-interp}
  We can define a dynamic type interpretation with the following type
  isomorphisms:
  \begin{mathpar}
    \dynv \cong (1+1) + U\dync + (\dynv \times \dynv)\and
    \dync \cong (\dynv \to \dync) \with \u F \dynv
  \end{mathpar}
\end{definition}
\begin{proof}
  \begin{shortonly}
    The details of constructing the two mutually recursive types from
    our recursive type mechanism are in the extended version. 
  \end{shortonly}
  \begin{longonly}
  We construct $\dynv, \dync$ explicitly as follows.

  First define $X : \vtype \vdash \texttt{Tree}[X] \vtype$ to be the
  type of binary trees:
  \[ \texttt{Tree} = \mu X'. X + (X' \times X') \]
  Next, define $X:\vtype, \u Y: ctype \vdash \texttt{VarArg}[X,\u Y]
  \ctype$ to be the type of variable-arity functions from $X$ to $\u
  Y$:
  \[ \texttt{VarArg} = \nu \u Y'. \u Y \with (X \to \u Y') \]

  Then we define an open version of $\dynv, \dync$ with respect to a
  variable representing the occurrences of $\dynv$ in $\dync$:
  \begin{align*}
    X \vtype \vdash \dynv_o &= \texttt{Tree}[(1+1) + U \dync_o] \ctype\\
    X \vtype \vdash \dync_o &= \texttt{VarArg}[\u F X/\u Y] \ctype\\
  \end{align*}

  Then we can define the closed versions using a recursive type:
  \begin{mathpar}
    \dynv = \mu X. \dynv_o\and \dync = \dync_o[\dynv]
  \end{mathpar}
  \end{longonly}
  \ The ep pairs for $\times, U,\u F, \to$ are clear.  To define the
  rest, first note that there is an ep pair from $1+1$ to $\dynv$ by
  Lemma~\ref{lem:injections-are-embeddings}.  Next, we can define $1$ to
  be the ep pair to $1+1$ defined by the left case and
  Lemma~\ref{lem:injections-are-embeddings}, composed with this.  The ep
  pair for $\dynv + \dynv$ is defined by composing the isomorphism
  (which is always an ep pair)
  $(\dynv + \dynv) \cong ((1+1) \times \dynv)$ with the ep pair for
  $1+1$ using the action of product types on ep pairs (proven as part of
  Theorem \ref{thm:axiomatic-graduality}): $(\dynv + \dynv) \cong
  ((1+1)\times \dynv) \,\triangleleft\, (\dynv \times \dynv) \,\triangleleft\,
  \dynv$ (where we write $A \triangleleft A'$ to mean there is an ep
  pair from $A$ to $A'$).  Similarly, for $\dync \with \dync$, we use
  action of the function type on ep pairs (also proven as part of
  Theorem \ref{thm:axiomatic-graduality}): $\dync \with \dync \cong
  ((1+1) \to \dync) \,\triangleleft\, (\dynv \to \dync) \,\triangleleft\, \dync$
\end{proof}

\begin{shortonly}
  Intuitively, the above definition of $\dynv$ says that it is a binary
  tree whose leaves are either booleans or closures---a simple type of
  S-expressions.  On the other hand, the above definition of $\dync$
  models a \emph{variable-arity function} (as in Scheme), which is
  called with any number of dynamically typed value arguments $\dynv$
  and returns a dynamically typed result $\u F \dynv$.  To see why a
  $\dync$ can be called with any number of arguments, observe that its
  infinite unrolling is $\u F \dynv \with (\dynv \to \u F \dynv) \with
  (\dynv \to \dynv \to \u F \dynv) \with \ldots$.  This type is
  isomorphic to a function that takes a list of $\dynv$ as input ($(\mu
  X. 1 + (\dynv \times X)) \to \u F \dynv$), but operationally $\dync$
  is a more faithful model of Scheme implementations, because all of the
  arguments are passed individually on the stack, not as a
  heap-allocated single list argument.  These two are distinguished in
  Scheme and the ``dot args'' notation witnesses the isomorphism.
\end{shortonly}

\begin{longonly}
If we factor out some of the recursion to use inductive and
coinductive types, we get the following isomorphisms:
\begin{mathpar}
  \dynv \cong \texttt{Tree}[(1+1) + U\dync]\and
  \dync \cong \texttt{VarArg}[\dynv][\u F \dynv]
\end{mathpar}

That is a dynamically typed value is a binary tree whose leaves are
either booleans or closures.
We think of this as a simple type of S-expressions.
A dynamically typed computation is a variable-arity function that is
called with some number of dynamically typed value arguments $\dynv$
and returns a dynamically typed result $\u F \dynv$.
This captures precisely the function type of Scheme, which allows for
variable arity functions!

What's least clear is \emph{why} the type
\[
\texttt{VarArg}[X][\u Y] = \nu \u Y'. (X \to \u Y') \with \u Y
\]
Should be thought of as a type of variable arity functions.
First consider the infinite unrolling of this type:
\[
\texttt{VarArg}[X][\u Y] \simeq \u Y \with (X \to \u Y) \with (X \to X \to \u Y) \with \cdots
\]
this says that a term of type $\texttt{VarArg}[X][Y]$ offers an
infinite number of possible behaviors: it can act as a function from
$X^n \to \u Y$ for any $n$.
Similarly in Scheme, a function can be called with any number of
arguments.
Finally note that this type is isomorphic to a function that takes a
\emph{cons-list} of arguments:
\begin{align*}
  &\u Y \with (X \to \u Y) \with (X \to X \to \u Y) \with \cdots\\
  &\cong(1 \to \u Y) \with ((X \times 1) \to \u Y) \with ((X \times X \times 1) \to \u Y) \with \cdots\\
  &\cong(1 + (X \times 1) + (X \times X \times 1) + \cdots) \to \u Y\\
  &\cong(\mu X'. 1 + (X\times X')) \to \u Y
\end{align*}

But operationally the type $\texttt{VarArg}[\dynv][\u F\dynv]$ is a
more faithful model of Scheme implementations because all of the
arguments are passed individually on the stack, whereas the type $(\mu
X. 1 + (\dynv \times X)) \to \u F X$ is a function that takes a single
argument that is a list.
These two are distinguished in Scheme and the ``dot args'' notation
witnesses the isomorphism.
\end{longonly}

Based on this dynamic type interpretation we can make a ``Scheme-like''
extension to GTT in Figure~\ref{fig:scheme}.
First, we add a boolean type $\bool$ with $\tru$, $\fls$ and
if-then-else.
Next, we add in the elimination form for $\dynv$ and the introduction
form for $\dync$.
The elimination form for $\dynv$ is a typed version of Scheme's
\emph{match} macro.
The introduction form for $\dync$ is a typed, CBPV version of Scheme's
\emph{case-lambda} construct.
Finally, we add type dynamism rules expressing the representations of
$1$, $A + A$, and $A \times A$ in terms of booleans that were explicit
in the ep pairs used in Definition~\ref{def:scheme-like-type-interp}.
\begin{shortonly}
  In the extended version of the paper, we include the appropriate term
  dynamism axioms, which are straightforward syntactifications of the
  properties of the dynamic type interpretation, and prove a unique
  implementation theorem for the new casts.
\end{shortonly}

\begin{figure}
\begin{small}
\begin{mathpar}
  1 \ltdyn \bool\and
  A + A \equidyn \bool \times A\and
  \u B \with \u B \equidyn \bool \to \u B
  
  \begin{longonly}
    \\
  \inferrule*[right=$\bool$I]
  { }
  {\Gamma \vdash \tru, \fls : \bool}

  \inferrule*[right=$\bool$E]
  {\Gamma \vdash V : \bool\\
    \Gamma \vdash E_t : T\\
    \Gamma \vdash E_f : T}
  {\Gamma \pipe \Delta \vdash \ifXthenYelseZ V {E_t} {E_f} : T}

  \\
  \ifXthenYelseZ \tru {E_t} {E_f} \equidyn E_t\and
  \ifXthenYelseZ \fls {E_t} {E_f} \equidyn E_f\\
  x : \bool \vdash E \equidyn \ifXthenYelseZ x {E[\tru/x]} {E[\fls/x]}\\

  \upcast{1}{\bool}V \equidyn \tru\and
  \upcast{A+A}{\bool \times A}\inl V \equidyn (\tru, V)\and
  \upcast{A+A}{\bool \times A}\inr V \equidyn (\fls, V)\\

  \pi\dncast{\u B\with\u B}{\bool \to \u B}M \equidyn M\,\tru\and
  \pi'\dncast{\u B\with\u B}{\bool \to \u B}M \equidyn M\,\fls\\
  \end{longonly}
  
  \inferrule*[right=$\dync$I]
  {\Gamma \pipe \Delta \vdash M_{\to} : \dynv \to \dync\\
    \Gamma \pipe \Delta \vdash M_{\u F} : \u F \dynv}
  {\Gamma \pipe \Delta \vdash \dyncocaseFunF{M_{\to}}{M_{\u F}} : \dync}\\

  \begin{longonly}
  \dncast{\u G}{\dync}\dyncocaseFunF{M_{\to}}{M_{\u F}} \equidyn M_{\u G}\quad(\dync\beta)

  {\bullet : \dync \vdash \bullet
    \equidyn
    \dyncocaseFunF
        {\dncast{\dynv\to\dync}{\dync}\bullet}
        {\dncast{\u F\dynv}{\dync}\bullet}}\quad(\dync\eta)\\

  \end{longonly}

  \inferrule*[right=$\dynv$E]
  {\Gamma\pipe \Delta \vdash V : \dynv \and
    \Gamma, x_{\bool}:\bool\pipe \Delta  \vdash E_\bool : T\and
    \Gamma,x_U : U\dync \pipe \Delta \vdash E_U : T\and
    \Gamma,x_\times : \dynv\times\dynv\pipe \Delta \vdash E_\times : T\and
  }
  {\Gamma\pipe \Delta \vdash \dyncaseofXthenBoolUPair {V} {x_{\bool}. E_{\bool}}{x_{U}. E_U}{x_{\times}. E_{\times}} : T}\\

  \begin{longonly}
  \inferrule
  {G \in \{ \bool, \times, U\}}
  {\dyncaseofXthenBoolUPair {(\upcast{G}{\dynv}V)} {x_{\bool}. E_{\bool}}{x_{U}. E_U}{x_{\times}. E_{\times}} \equidyn E_{G}[V/x_G]} \qquad(\dynv\beta)\\

  \inferrule*[right=$\dynv\eta$]
  {\Gamma , x : \dynv \pipe \Delta \vdash E : \u B}
  {E \equidyn \dyncaseofXthenBoolUPair x
    {x_\bool. E[\upcast{\bool}{\dynv}/x_\bool]}
    {x_{\times}. E[\upcast{{\times}}{\dynv}/x_{\times}]}
    {x_U. E[\upcast{U}{\dynv}/x_U]}}\\
  \end{longonly}

\end{mathpar}
\end{small}
\vspace{-0.4in}
\caption{Scheme-like Extension to GTT}
\label{fig:scheme}
\end{figure}

\begin{longonly}
  The reader may be surprised by how \emph{few} axioms we need to add
  to GTT for this extension: for instance we only define the upcast
  from $1$ to $\bool$ and not vice-versa, and similarly the sum/lazy
  pair type isomorphisms only have one cast defined when a priori
  there are $4$ to be defined.
  Finally for the dynamic types we define $\beta$ and $\eta$ laws
  that use the ground casts as injections and projections
  respectively, but we don't define the corresponding dual casts (the
  ones that possibly error).

  In fact all of these expected axioms can be \emph{proven} from those
  we have shown.
  Again we see the surprising rigidity of GTT: because an $\u F$
  downcast is determined by its dual value upcast (and vice-versa for
  $U$ upcasts), we only need to define the upcast as long as the
  downcast \emph{could} be implemented already.
  Because we give the dynamic types the universal property of a
  sum/lazy product type respectively, we can derive the
  implementations of the ``checking'' casts.
  All of the proofs are direct from the uniqueness of adjoints
  lemma.

  \begin{theorem}[Boolean to Unit Downcast]
    In Scheme-like GTT, we can prove
    \[
    \dncast{\u F1}{\u F\bool}\bullet
    \equidyn
    \bindXtoYinZ \bullet x \ifXthenYelseZ x {\ret()}{\err}
    \]
  \end{theorem}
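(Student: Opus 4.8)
The plan is to invoke the uniqueness of downcasts, Theorem~\ref{thm:casts-unique}(2), instantiated with the computation type dynamism $\u F 1 \ltdyn \u F \bool$ (which holds by monotonicity of $\u F$ since $1 \ltdyn \bool$ in Scheme-like GTT) and with $S$ taken to be
\[ \bullet : \u F \bool \vdash S := \bindXtoYinZ{\bullet}{x}{\ifXthenYelseZ{x}{\ret()}{\err}} : \u F 1. \]
First I would check that $S$ really is a legitimate complex stack: the hole $\bullet$ occurs in head position of the $\mathsf{bind}$, and the $\err$ sits in the $\mathsf{else}$ branch of an $\mathsf{if}$ whose scrutinee is a value, so no stoup assumption is captured. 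It then suffices to verify the ``bound'' and ``best'' halves of the downcast universal property for this $S$, after which Theorem~\ref{thm:casts-unique}(2) gives $S \equidyn \dncast{\u F 1}{\u F \bool}\bullet$.

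For the bound direction, $\bullet : \u F \bool \vdash S \ltdyn \bullet : \u F 1 \ltdyn \u F \bool$, I would $\eta$-expand the right-hand $\bullet$ to $\bindXtoYinZ{\bullet}{x}{\ret x}$ using $\u F$-$\eta$, apply congruence for $\mathsf{bind}$ (the heads agreeing by reflexivity), and reduce to
\[ x \ltdyn x' : \bool \ltdyn \bool \vdash \ifXthenYelseZ{x}{\ret()}{\err} \ltdyn \ret{x'} : \u F 1 \ltdyn \u F \bool. \]
Rewriting the right-hand side as $\ifXthenYelseZ{x'}{\ret\tru}{\ret\fls}$ by $\bool$-$\eta$ and using congruence for $\mathsf{if}$, this follows from $x \ltdyn x'$ (the hypothesis), from $\ret() \ltdyn \ret\tru : \u F 1 \ltdyn \u F \bool$ --- which holds by congruence for $\mathsf{ret}$ together with $() \ltdyn \upcast{1}{\bool}{()} \equidyn \tru$, i.e. the upcast ``bound'' rule plus the Scheme-like axiom $\upcast{1}{\bool}{V} \equidyn \tru$ --- and from $\err \ltdyn \ret\fls$ by \textsc{ErrBot}.

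For the best direction, $\bullet \ltdyn \bullet' : \u F 1 \ltdyn \u F \bool \vdash \bullet \ltdyn S : \u F 1$, I would $\eta$-expand the left-hand $\bullet$ to $\bindXtoYinZ{\bullet}{y}{\ret y}$, apply congruence for $\mathsf{bind}$ using the hypothesis $\bullet \ltdyn \bullet'$ on the heads, and reduce to
\[ y \ltdyn x : 1 \ltdyn \bool \vdash \ret y \ltdyn \ifXthenYelseZ{x}{\ret()}{\err} : \u F 1. \]
Here $y \equidyn ()$ by $1$-$\eta$, so the left-hand side is $\ret()$, and the hypothesis together with the upcast ``best'' rule and the same Scheme-like axiom yields $\tru \ltdyn x : \bool$. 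Rewriting $\ret() \equidyn \ifXthenYelseZ{\tru}{\ret()}{\err}$ by $\bool$-$\beta$ and applying congruence for $\mathsf{if}$ ($\tru \ltdyn x$ on the scrutinees, reflexivity on both branches) then closes the goal. I expect the main obstacle to be the heterogeneous bookkeeping rather than any single calculation: every point at which the $1$-side meets the $\bool$-side must be bridged by the upcast universal properties and the single extra axiom $\upcast{1}{\bool}{V} \equidyn \tru$ inserted at exactly the right place, and one must keep $\err$ in stoup-free positions throughout so that $S$ genuinely is a complex stack and Theorem~\ref{thm:casts-unique}(2) applies.
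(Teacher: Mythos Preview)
Your proposal is correct and matches the paper's approach: the paper merely states that this theorem (and its siblings) is ``direct from the uniqueness of adjoints lemma,'' which is exactly your invocation of Theorem~\ref{thm:casts-unique}(2) after verifying the bound and best conditions for $S$. Your detailed verification of both directions---using $\u F$-$\eta$ to introduce a $\mathsf{bind}$, $\bool$-$\eta/\beta$ to expose an $\mathsf{if}$, the axiom $\upcast{1}{\bool}{V}\equidyn\tru$ to bridge $1$ and $\bool$, and \textsc{ErrBot} for the $\fls$ branch---fills in precisely the calculation the paper elides.
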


  \begin{theorem}[Tagged Value to Sum]
    In Scheme-like GTT, we can prove
    \[
    \upcast{\bool \times A}{A+A}V \equidyn \pmpairWtoXYinZ V {x}{y} \ifXthenYelseZ x {\inl y}{\inr y}
    \]
    and the downcasts are given by lemma \ref{lem:isos-are-ep}.
  \end{theorem}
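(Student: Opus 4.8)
The plan is to derive the equation from the uniqueness of two-sided inverses among complex values. Since the Scheme-like extension posits $A + A \equidyn \bool \times A$, both casts $\upcast{A+A}{\bool\times A}$ and $\upcast{\bool\times A}{A+A}$ are available, and the computation in the proof of the ``Equidynamism implies Isomorphism'' theorem (two uses each of the upcast-left and upcast-right rules of Lemma~\ref{lem:cast-left-right} on each composite) shows that these two casts are mutually inverse complex values, i.e. a value isomorphism $\bool\times A \cong_v A+A$. Hence it suffices, writing $W$ for the complex value $p : \bool\times A \vdash \pmpairWtoXYinZ p x y {\ifXthenYelseZ x {\inl y}{\inr y}} : A+A$, to exhibit a complex value $\phi$ that is equidynamic to $\upcast{A+A}{\bool\times A}$ and of which $W$ is a two-sided inverse: then $W$ and $\upcast{\bool\times A}{A+A}$ are both two-sided inverses of $\phi$ (composition of complex values being substitution), so $W \equidyn \upcast{\bool\times A}{A+A}$, which is the asserted equation. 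One could instead verify the two clauses of the upcast universal property for $W$ directly via Theorem~\ref{thm:casts-unique}, but that route is heavier on heterogeneous term-dynamism bookkeeping than inverting a cast one already understands.

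First I would pin down $\upcast{A+A}{\bool\times A}$ explicitly. The Scheme-like axioms give $\upcast{A+A}{\bool\times A}{\inl V} \equidyn (\tru,V)$ and $\upcast{A+A}{\bool\times A}{\inr V} \equidyn (\fls,V)$; applying the sum $\eta$ rule in the ``weak'' form used in the proof of Theorem~\ref{thm:functorial-casts} to the complex value $s : A+A \vdash \upcast{A+A}{\bool\times A}{s}$ then yields $\upcast{A+A}{\bool\times A}{s} \equidyn \caseofXthenYelseZ s {x.(\tru,x)}{y.(\fls,y)} =: \phi[s]$. Next I would compute the two composites. For $W \circ \phi : A+A \to A+A$: after substituting, move the $\mathsf{split}$ outward past the $\mathsf{case}$ (an instance of the sum $\eta$ rule), $\beta$-reduce the pattern-matched pair and then the $\mathsf{if}$ in each branch, and collapse $\caseofXthenYelseZ s {x.\inl x}{y.\inr y} \equidyn s$ by weak $\eta$ for sums. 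For $\phi \circ W : \bool\times A \to \bool\times A$: after substituting, move the outer $\mathsf{case}$ past the $\mathsf{split}$ and then past the $\mathsf{if}$ (instances of the $\eta$ rules for $\times$ and for $\bool$), $\beta$-reduce, rewrite $\ifXthenYelseZ x {(\tru,y)}{(\fls,y)} \equidyn (x,y)$ using $\bool$ $\eta$ at the identity, and finish with $\eta$ for $\times$. Both composites reduce to the identity complex value.

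Combining the two ingredients gives $W \equidyn \upcast{\bool\times A}{A+A}$ and hence $\upcast{\bool\times A}{A+A}{V} \equidyn \pmpairWtoXYinZ V x y {\ifXthenYelseZ x {\inl y}{\inr y}}$. The downcast clause is then immediate from Lemma~\ref{lem:isos-are-ep}: the value isomorphism $(W,\phi)$ induces a value ep pair from $\bool\times A$ to $A+A$ whose projection is the stack $\bullet : \u F(A+A) \vdash \bindXtoYinZ \bullet s {\ret{(\caseofXthenYelseZ s {x.(\tru,x)}{y.(\fls,y)})}} : \u F(\bool\times A)$, and since the downcast between these $\u F$-types is determined by its universal property (Theorem~\ref{thm:casts-unique}), this stack is $\dncast{\u F(\bool\times A)}{\u F(A+A)}{\bullet}$. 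I do not foresee a genuine obstacle: the only point requiring care is the direction of type dynamism --- the Scheme-like axioms describe only the $A+A \ltdyn \bool\times A$ upcast, so the $\bool\times A \ltdyn A+A$ upcast has to be obtained by inverting it --- and the $\beta\eta$ manipulations, though a handful of steps apiece, are entirely routine once one notes that each ``commuting conversion'' of $\mathsf{if}$, $\mathsf{case}$, or $\mathsf{split}$ past another eliminator is a direct instance of the corresponding $\eta$ rule.
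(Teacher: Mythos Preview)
Your proposal is correct and is essentially the paper's approach spelled out in detail: the paper's entire proof is the one-line remark that all of these results are ``direct from the uniqueness of adjoints lemma,'' and your argument---using Equidynamism-implies-Isomorphism to see that $\upcast{A+A}{\bool\times A}$ and $\upcast{\bool\times A}{A+A}$ are mutual inverses, then verifying that $W$ is a two-sided inverse of the axiomatically specified $\upcast{A+A}{\bool\times A}$ via the $\beta\eta$ calculations you describe---is exactly an unpacking of that remark. Your treatment of the downcast clause via Lemma~\ref{lem:isos-are-ep} is precisely what the theorem statement itself points to, and your one redundancy (computing both composites when one suffices, given that $\phi$ already has a two-sided inverse) is harmless.
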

  \begin{theorem}[Lazy Product to Tag Checking Function]
    In Scheme-like GTT, we can prove
    \[
    \dncast{\bool\to \u B}{\u B\with\u B}\bullet
    \equidyn
    \lambda x:\bool. \ifXthenYelseZ x {\pi \bullet}{\pi' \bullet}
    \]
    and the upcasts are given by lemma \ref{lem:isos-are-ep}.
  \end{theorem}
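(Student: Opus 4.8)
The plan is to use the two downcasts that the Scheme-like axiom $\u B \with \u B \equidyn \bool \to \u B$ makes available, namely $\dncast{\u B \with \u B}{\bool \to \u B}$ and $\dncast{\bool \to \u B}{\u B \with \u B}$, together with the fact that between equidynamic computation types these two downcasts are mutually inverse complex stacks. First I would name the two candidate isomorphisms: let $D$ be the complex stack $\bullet : \bool \to \u B \vdash \pair{\bullet\,\tru}{\bullet\,\fls} : \u B \with \u B$, and let $S$ be the complex stack $\bullet : \u B \with \u B \vdash \lambda x:\bool. \ifXthenYelseZ{x}{\pi\bullet}{\pi'\bullet} : \bool \to \u B$. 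The axioms of Figure~\ref{fig:scheme} ($\pi\,\dncast{\u B \with \u B}{\bool \to \u B}M \equidyn M\,\tru$ and its $\pi'$ variant), together with $\eta$ for $\with$, give $\dncast{\u B \with \u B}{\bool \to \u B}\,\bullet \equidyn D$. So the theorem reduces to proving $\dncast{\bool \to \u B}{\u B \with \u B}\,\bullet \equidyn S$.

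Second, I would check that $D$ and $S$ are mutually inverse complex stacks using only the $\beta\eta$ laws of GTT and of $\bool$: substituting $D$ for the hole in $S$ and rewriting with $\beta$ for $\with$, then $\eta$ for $\bool$ applied to the term $\bullet\,x$ under the $\lambda$ (so that $\ifXthenYelseZ{x}{\bullet\,\tru}{\bullet\,\fls} \equidyn \bullet\,x$), then $\eta$ for $\to$, yields $S[D/\bullet] \equidyn \bullet$; dually, substituting $S$ for the hole in $D$ and rewriting with $\beta$ for $\to$, then $\beta$ for $\bool$, then $\eta$ for $\with$, yields $D[S/\bullet] \equidyn \bullet$. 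Hence $(S,D)$ is a complex-stack isomorphism $\u B \with \u B \cong_c \bool \to \u B$.

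Third, since $\bool \to \u B \equidyn \u B \with \u B$, the theorem that equidynamic computation types are isomorphic — concretely, its proof, which by repeated use of the downcast-left and downcast-right rules shows that $\dncast{\bool \to \u B}{\u B \with \u B}$ and $\dncast{\u B \with \u B}{\bool \to \u B}$ compose to the identity in either order — exhibits these two downcasts as a mutually inverse pair in the stack category. Two-sided inverses are unique, and we have already shown that $\dncast{\u B \with \u B}{\bool \to \u B} \equidyn D$ has $S$ as its inverse; therefore $\dncast{\bool \to \u B}{\u B \with \u B} \equidyn S$, which is the displayed equation. The statement about upcasts then follows from Lemma~\ref{lem:isos-are-ep} applied to the isomorphism $(S,D)$, whose embedding half is the upcast. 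I expect the only real friction to be bookkeeping: keeping straight that the Figure~\ref{fig:scheme} axioms pin down $\dncast{\u B \with \u B}{\bool \to \u B}$ while the theorem concerns the reverse cast, and applying $\eta$ for $\bool$ at the right point when verifying $S[D/\bullet] \equidyn \bullet$; the structural argument via uniqueness of inverses is otherwise short.
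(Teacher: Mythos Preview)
Your proposal is correct and is a faithful elaboration of the paper's one-line justification (``direct from the uniqueness of adjoints lemma''). The paper does not spell out the argument, but your route---pinning down $\dncast{\u B\with\u B}{\bool\to\u B}$ as $D$ from the Figure~\ref{fig:scheme} axioms plus $\with$-$\eta$, verifying $S$ and $D$ are mutual inverses by $\beta\eta$ for $\with$, $\to$, and $\bool$, invoking the Equidynamism-implies-Isomorphism theorem to see the two downcasts are mutually inverse, and concluding by uniqueness of two-sided inverses---is exactly the natural way to cash out ``uniqueness of adjoints'' here.
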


  \begin{theorem}[Ground Mismatches are Errors]
    In Scheme-like GTT we can prove
    \begin{mathpar}
      {\dncast{\u F \bool}{\u F \dynv}\ret V \equidyn \dyncaseofXthenYelseZ V {x_\bool. \ret x_\bool}{\els \err}}\\
      {\dncast{\u F(\dynv\times\dynv)}{\u F \dynv}\ret V \equidyn \dyncaseofXthenYelseZ V {x_\times. \ret x_\times}{\els \err}}\\
      {\dncast{\u F U\dync}{\u F\dynv}\ret V \equidyn \dyncaseofXthenYelseZ V {x_U. \ret x_U}{\els \err}}\\

      \force\upcast{U(\dynv \to \dync)}{U\dync}V \equidyn \dyncocaseFunF{\force V}{\err}\\
      \force\upcast{U\u F\dynv}{U\dync}V \equidyn \dyncocaseFunF{\err}{\force V}\\
    \end{mathpar}
  \end{theorem}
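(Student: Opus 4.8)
The plan is to derive all five equations by the \emph{unique-implementation} method of Section~\ref{sec:theorems-in-gtt}: in each case I will write down an explicit term built from the relevant $\dynv$-case or $\dync$-cocase construct, verify that it has the universal property of the cast appearing on the left, and then conclude by Theorem~\ref{thm:casts-unique} that it \emph{is} that cast. The point is that the axioms $\dynv\beta,\dynv\eta$ of Figure~\ref{fig:scheme} make $\dynv$ behave as the sum of its three ground value cases $\bool$, $\dynv\times\dynv$, $U\dync$ with the upcasts $\upcast{G}{\dynv}$ as injections, and dually $\dync\beta,\dync\eta$ make $\dync$ behave as the lazy product of its two ground computation cases $\dynv\to\dync$, $\u F\dynv$ with the downcasts $\dncast{\u G}{\dync}$ as projections; so these equations are the axiomatic analogues of Lemmas~\ref{lem:injections-are-embeddings} and~\ref{lem:projections-are-projections}. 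Note that the mismatched branch must be $\err$, which is a computation, not a value --- that is exactly why the theorem is phrased with $\ret V$ (resp.\ $\force V$) wrapping the cast: there is no \emph{value} downcast from $\bool$, $\dynv\times\dynv$, or $U\dync$ to $\dynv$, only the $\u F$-lifted version, and correspondingly no \emph{stack} upcast from $\dynv\to\dync$ or $\u F\dynv$ to $\dync$, only the $U$-lifted version.

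For the three downcast equations, fix a ground value type $G \in \{\bool,\dynv\times\dynv,U\dync\}$ and let $S_G$ be the stack $\bullet' : \u F\dynv \vdash \bindXtoYinZ{\bullet'}{x}{K} : \u F G$, where $K$ is the $\dynv$-case on $x$ that returns $\ret x_G$ in the $G$-branch and $\err$ in the other two; this is a genuine complex stack because $K$ uses the empty stoup. I would apply Theorem~\ref{thm:casts-unique}(2) to $\u F G \ltdyn \u F\dynv$. For the ``bound'' half, $\u F\eta$ rewrites $\bullet'$ as $\bindXtoYinZ{\bullet'}{x}{\ret x}$, so congruence for $\mathsf{bind}$ reduces the goal to $x : \dynv \vdash K \ltdyn \ret x : \u F G \ltdyn \u F\dynv$; expanding $\ret x$ by $\dynv\eta$ and applying congruence for the case leaves three branch goals, of which the $G$-branch is congruence for $\ret$ together with the upcast ``bound'' axiom $x_G \ltdyn \upcast{G}{\dynv}x_G$, and the other two are instances of \textsc{ErrBot} (using $\u F G \ltdyn \u F\dynv$, which holds because $G \ltdyn \dynv$). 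For the ``best'' half, $\u F\eta$ and congruence for $\mathsf{bind}$ reduce $\bullet \ltdyn \bullet' : \u F G \ltdyn \u F\dynv \vdash \bullet \ltdyn S_G[\bullet'] : \u F G$ to $y \ltdyn x : G \ltdyn \dynv \vdash \ret y \ltdyn K[x/x] : \u F G$; here the decisive move is to use the upcast ``best'' axiom to get $\upcast{G}{\dynv}y \ltdyn x : \dynv$, so that monotonicity of the case and then $\dynv\beta$ give $\ret y \equidyn K[\upcast{G}{\dynv}y] \ltdyn K[x]$. Theorem~\ref{thm:casts-unique}(2) then yields $S_G \equidyn \dncast{\u F G}{\u F\dynv}\bullet'$, and substituting $\ret V$ for $\bullet'$ and $\u F\beta$-reducing the bind gives equations 1--3.

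The two upcast equations are dual. Fix $\u G \in \{\dynv\to\dync,\u F\dynv\}$; by $U\eta$ it suffices to identify the complex value $W_{\u G} := \thunk{K'}$ with $\upcast{U\u G}{U\dync}z$, where $z : U\u G$ and $K'$ is the $\dync$-cocase with $\force z$ in the $\u G$-slot and $\err$ in the other slot (a well-formed computation on the empty stoup). I would invoke Theorem~\ref{thm:casts-unique}(1) on $U\u G \ltdyn U\dync$. The ``bound'' half reduces, by $U\eta$ and congruence for $\thunk$, to $\force z \ltdyn K' : \u G \ltdyn \dync$, which follows from $\dync\beta$, namely $\dncast{\u G}{\dync}K' \equidyn \force z$, composed with the downcast ``bound'' axiom $\dncast{\u G}{\dync}\bullet \ltdyn \bullet$. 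The ``best'' half reduces to $z \ltdyn w : U\u G \ltdyn U\dync \vdash K' \ltdyn \force w : \dync$; expanding $\force w$ by $\dync\eta$ and using congruence for the cocase, the $\u G$-slot goal $\force z \ltdyn \dncast{\u G}{\dync}(\force w) : \u G \ltdyn \u G$ follows from congruence for $\force$ on the hypothesis plus the ``DnR'' rule of Lemma~\ref{lem:cast-left-right}, and the other slot is \textsc{ErrBot}. Then $W_{\u G} \equidyn \upcast{U\u G}{U\dync}z$, and substituting $V$ and $U\beta$-reducing $\force\thunk$ gives equations 4--5.

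The routine content is the $\eta$-expansion and congruence bookkeeping, plus the side check that $S_G$ and $W_{\u G}$ are legitimate complex stacks/values (their error branches live on the empty stoup, as Theorem~\ref{thm:casts-unique} demands). The one genuinely load-bearing step, which recurs identically in all five cases, is the use of the ``best''/``bound'' half of the cast universal property to replace a ``raw'' ground value $y$ (resp.\ a ground projection $\dncast{\u G}{\dync}(\force w)$) by its image under a cast, since that is precisely what lets the $\dynv\beta$ (resp.\ $\dync\beta$) rule fire and collapse the mismatched branch to $\err$; without it the axioms of Figure~\ref{fig:scheme} alone would not pin down that a ground-type mismatch is an error.
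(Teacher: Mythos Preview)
Your proposal is correct and takes essentially the same approach as the paper. The paper's own proof is a single sentence---``All of the proofs are direct from the uniqueness of adjoints lemma''---and your argument via Theorem~\ref{thm:casts-unique} is precisely the unpacking of that hint: you exhibit the candidate stack/value, verify the two halves of the cast universal property using $\dynv\eta$/$\dync\eta$ to expand and $\dynv\beta$/$\dync\beta$ to collapse, and invoke uniqueness. The one point worth noting is that your write-up makes explicit the load-bearing use of the \textsc{UpL}/\textsc{DnR} rules (Lemma~\ref{lem:cast-left-right}) to transport the hypothesis $y \ltdyn x : G \ltdyn \dynv$ into a homogeneous $\upcast{G}{\dynv}y \ltdyn x : \dynv$ so that $\dynv\beta$ can fire; this is exactly the ``adjoint determines adjoint'' content the paper is gesturing at, and your identification of it as the non-routine step is accurate.
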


  Finally, we note now that all of these axioms are satisfied when
  using the Scheme-like dynamic type interpretation and extending the
  translation of GTT into \cbpvstar\  with the following, tediously
  explicit definition:
  \begin{align*}
    &\sem{\bool} = 1+1\\
    &\sem{\tru} =\inl()\\
    &\sem{\fls} =\inr()\\
    &\sem{\ifXthenYelseZ V {E_t} {E_f}} = \caseofXthenYelseZ {\sem V}{x. E_t}{x.E_f}\\
    &\sem{\dyncaseofXthenBoolUPair x {x_\bool. E_\bool}{x_U. E_U}{x_\times. E_\times}}
    =\\
    &\quad\pmmuXtoYinZ {(x:\dynv)}  {x'} \pmmuXtoYinZ {x' : \texttt{Tree}[(1+1)+U\dync]}t \caseofX t\\
    &\qquad\thenY{l. \caseofXthenYelseZ l {x_\bool. \sem{E_\bool}}{x_U. \sem{E_U}}}\\
    &\qquad\elseZ{x_\times. \sem{E_\times}}\\
    &\sem{\dyncocaseFunF{M_\to}{M_{\u F}}}
    = \rollty{\nu \u Y. (\dynv \to \u Y)\with \u F\dynv}\pair{\sem{M_\to}}{\sem{M_{\u F}}}
  \end{align*}
\end{longonly}

\subsection{Contract Translation}

Having defined the data parameterizing the translation, we now consider
the translation of GTT into \cbpvstar\ itself.
For the remainder of the paper, we assume that we have a fixed dynamic
type interpretation $\rho$, and all proofs and definitions work for any
interpretation.

\begin{figure}
\begin{small}
  \begin{mathpar}
      x:\sem{A} \vdash \sem{\upcast{A}{A'}} : \sem{A'}\and
      \bullet:\sem{\u B'} \vdash \sem{\dncast{\u B}{\u B'}} : \sem{\u B}\\
      \begin{array}{rcl}
\iflong
       x : 0 \vdash \supcast{0}{A} & = & \absurd x\\
      \bullet : A \vdash \sdncast{\u F0}{\u F A} &=& \bindXtoYinZ \bullet x \err\\
\fi
      x : \sem{\dynv} \vdash \sem{\upcast{\dynv}{\dynv}} & = & x\\
      \bullet : \u F \dynv \vdash \sdncast{\u F \dynv}{\u F\dynv} &=& \bullet\\
      x : \sem{G} \vdash \sem{\upcast{G}{\dynv}} & = & \rho_{up}(G)\\
      \bullet : \u F \dynv \vdash \sdncast{\u F G}{\u F\dynv} &=& \rho_{dn}(G)\\
      x : \sem{A} \vdash \sem{\upcast{A}{\dynv}} & = & \sem{\upcast{\lfloor A \rfloor}{\dynv}}[{\sem{\upcast{A}{\lfloor A \rfloor}}}/x]\\
      \bullet: \u F\dynv \vdash \sdncast{A}{\dynv} &=& \sdncast{A}{\floor A}[{\sdncast{\floor A}{\dynv}}]\\
\iflong
      x : \sem{A_1} + \sem{A_2} \vdash \sem{\upcast{A_1 + A_2}{A_1' + A_2'}}
      & = & \caseofX x \\
      && \thenY{x_1. \sem{\upcast{A_1}{A_1'}}[x_1/x]}\\
      && \elseZ{x_2. \sem{\upcast{A_2}{A_2'}}[x_2/x]}\\
      \bullet : \sem{A_1} + \sem{A_2} \vdash
      \sem{\dncast{\u F(A_1 + A_2)}{\u F(A_1' + A_2')}}
      &=&
      \bindXtoYinZ \bullet {x'} \caseofX {x'}\\
      &&\thenY{x_1'. \bindXtoYinZ {(\sdncast{\u FA_1}{\u F A_1'}\ret x_1')} {x_1} \ret x_1}\\
      &&\elseZ{x_2'. \bindXtoYinZ {(\sdncast{\u FA_2}{\u F A_2'}\ret x_2')} {x_2} \ret x_2}\\
      x : 1 \vdash \supcast{1}{1} &=& x\\
      \bullet : \u F 1 \vdash \sdncast{\u F1}{\u F1} &=& x\\
\fi
      x : \sem{A_1}\times\sem{A_2} \vdash \sem{\upcast{A_1\times A_2}{A_1'\times A_2'}} &=& \pmpairWtoXYinZ x {x_1}{x_2}\\
      &&(\supcast{A_1}{A_1'}[x_1], \supcast{A_2}{A_2'}[x_2])\\
      \bullet \vdash \sdncast{\u F(A_1 \times A_2)}{\u F(A_1' \times A_2')}
      &=&
      \bindXtoYinZ \bullet {x'} \pmpairWtoXYinZ {x'} {x_1'}{x_2'}\\
      &&\bindXtoYinZ {\sdncast{\u FA_1}{\u FA_1'}\ret x_1'} {x_1}\\
      && \bindXtoYinZ {\sdncast{\u FA_2}{\u FA_2'}\ret x_2'} {x_2} \ret(x_1,x_2)\\
      x : U\u F \sem{A} \vdash \sem{\upcast{U\u F A}{U \u F A'}} &=&
      \thunk (\bindXtoYinZ {\force x} y \ret \sem{\upcast{A}{A'}}[y/x])\\\\
\iflong
      \bullet : \u B \vdash \sdncast{\top}{\u B} &=& \{ \}\\
      x:U\top \vdash \supcast{U\top}{U\u B} &=& \thunk \err\\
      \bullet : \dync \vdash \sdncast{\dync}{\dync} &=& \bullet\\
      x:U\dync \vdash \supcast{U\dync}{U\dync} &=& x\\
      \bullet : \dync \vdash \sdncast{\u G}{\dync} &=& \rho_{dn}(\u G)\\
      x:U\u G \vdash \supcast{U\u G}{U\dync} &=& \rho_{up}(\u G)\\
      \bullet : \dync \vdash \sdncast{\u B}{\dync} &=& \sdncast{\u B}{\floor {\u B}}[\sdncast{\floor{\u B}}{\dync}]\\
      x:U\dync \vdash \supcast{U\u B}{U\dync}&=& \supcast{U\floor{\u B}}{U\dync}[\supcast{U\u B}{U\floor{\u B}}]\\
      \bullet : \sem{\u B_1'}\with \sem{\u B_2'}\vdash \sdncast{\u B_1\with\u B_2}{\u B_1'\with\u B_2'} &=& \pairone{\sdncast{\u B_1}{\u B_1'}\pi\bullet}\\
              &&\pairtwo{\sdncast{\u B_2}{\u B_2'}\pi'\bullet}\\
      x : U(\sem{\u B_1}\with \sem{\u B_2}) \vdash
      {\supcast{U(\u B_1 \with \u B_2)}{U(\u B_1'\with\u B_2')}}
      &=&
      \thunk\\
      &&\pairone{\force \supcast{\u B_1}{\u B_1'}{(\thunk \pi\force x)}}\\
      &&\pairtwo{\force \supcast{\u B_2}{\u B_2'}{(\thunk \pi'\force x)}}\\
      \bullet \vdash \sdncast{A \to \u B}{A' \to \u B'} &=& \lambda x:A. \sdncast{\u B}{\u B'}{(\bullet\, (\supcast{A}{A'}{x}))}\\
      f : U(\sem{A} \to \sem{\u B}) \vdash
      \supcast{U(A \to \u B)}{U(A' \to \u B')}
      &=&
      \thunk\lambda x':A'. \\
      &&\bindXtoYinZ {\sdncast{\u F A}{\u FA'}\ret x'} x\\
      &&\force\supcast{U\u B}{U\u B'}\thunk {(\force f)\, x'}\\
      \bullet : \u FU\u B' \vdash \sdncast{\u FU\u B}{\u FU\u B'}
      &=&
      \bindXtoYinZ \bullet {x'} \sdncast{\u B}{\u B'}\force x'
\fi
      \end{array}
  \end{mathpar}
  \vspace{-0.15in}
  \caption{Cast to Contract Translation \ifshort(selected cases)\fi}
  \label{fig:cast-to-contract}
\end{small}
\end{figure}

\begin{longonly}
\subsubsection{Interpreting Casts as Contracts} ~
\end{longonly}
The main idea of the translation is an extension of the dynamic type
interpretation to an interpretation of \emph{all} casts in GTT
(Figure~\ref{fig:cast-to-contract}) as contracts in \cbpvstar, following
the definitions in Lemma~\ref{lem:casts-admissible}.  
\begin{shortonly}
  To verify the totality and coherence of this definition, we define (in
  the extended version) a normalized version of the type dynamism rules
  from Figure~\ref{fig:gtt-type-dynamism}, which is interderivable but
  has at most one derivation of $T \ltdyn T'$ for a given $T$ and $T'$.
  The main idea is to restrict reflexivity to base types, and restrict
  transitivity to $A \ltdyn \floor{A} \ltdyn \dynv$, where $\floor{A}$
  is the ground type with the same outer connective as $A$.
\end{shortonly}
\begin{longonly}
Some clauses of the translation are overlapping, which we resolve by
considering them as ordered (though we will ultimately show they are
equivalent).
The definition is also not obviously total: we need to verify that it
covers every possible case where $A \ltdyn A'$ and $\u B \ltdyn \u
B'$.
To prove totality and coherence, we could try induction on the type
dynamism relation of Figure~\ref{fig:gtt-type-dynamism}, but it is
convenient to first give an alternative, normalized set of rules for
type dynamism that proves the same relations, which we do in
Figure~\ref{fig:normalized}.

\begin{figure}
\begin{small}
  \begin{mathpar}
  \inferrule
  {A \in \{\dynv, 1\}}
  {A \ltdyn A}

  \inferrule
  {A \in \{\dynv, 0\}}
  {0 \ltdyn A}

  \inferrule
  {A \ltdyn \floor A\and
    A \not\in\{0,\dynv \}}
  {A \ltdyn \dynv}\\

  \inferrule
  {\u B \ltdyn \u B'}
  {U B \ltdyn U B'}

  \inferrule
  {A_1 \ltdyn A_1' \and A_2 \ltdyn A_2' }
  {A_1 + A_2 \ltdyn A_1' + A_2'}

  \inferrule
  {A_1 \ltdyn A_1' \and A_2 \ltdyn A_2' }
  {A_1 \times A_2 \ltdyn A_1' \times A_2'}\\

  \inferrule
  {}
  {\dync \ltdyn \dync}

  \inferrule
  {\u B \in \{ \dync, \top \}}
  {\top \ltdyn \u B}

  \inferrule
  {\u B \ltdyn \floor {\u B} \and \u B \not\in \{ \top, \dync \}}
  {\u B \ltdyn \dync}\\

  \inferrule
  {A \ltdyn A'}
  {\u F A \ltdyn \u F A'}

  \inferrule
  {\u B_1 \ltdyn \u B_1' \and \u B_2 \ltdyn \u B_2'}
  {\u B_1 \with \u B_2 \ltdyn \u B_1' \with \u B_2'}

  \inferrule
  {A \ltdyn A' \and \u B \ltdyn \u B'}
  {A \to \u B \ltdyn A' \to \u B'}
  \end{mathpar}
  \end{small}
  \caption{Normalized Type Dynamism Relation}
  \label{fig:normalized}
\end{figure}

\begin{lemma}[Normalized Type Dynamism is Equivalent to Original]
  \label{lem:norm-type-dyn}
  $T \ltdyn T'$ is provable in the normalized typed dynamism
  definition iff it is provable in the original typed
  dynamism definition.
\end{lemma}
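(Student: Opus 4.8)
I would prove the two directions of the equivalence separately.

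For the ``if'' direction (normalized derivability implies original derivability), I would induct on the given normalized derivation and observe that each normalized rule is a derived rule of the system in Figure~\ref{fig:gtt-type-dynamism}. This is immediate: the restricted reflexivity rules ($A \ltdyn A$ for $A \in \{\dynv,1\}$, together with $0 \ltdyn 0$, $\dync \ltdyn \dync$, $\top \ltdyn \top$) are instances of \textsc{VTyRefl}/\textsc{CTyRefl}; the rules $0 \ltdyn \dynv$ and $\top \ltdyn \dync$, and --- discarding the $A \ltdyn \floor A$ (resp.\ $\u B \ltdyn \floor{\u B}$) premise --- the rules concluding $A \ltdyn \dynv$ and $\u B \ltdyn \dync$, are instances of \textsc{VTyTop}/\textsc{CTyTop}; and the monotonicity rules for $U$, $\u F$, $+$, $\times$, $\with$, $\to$ are literally the monotonicity rules of Figure~\ref{fig:gtt-type-dynamism}, with their premises handled by the induction hypothesis.

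For the ``only if'' direction I would first establish, by a single simultaneous induction on the (size of the) value and computation types, that the normalized system admits (i) full reflexivity $A \ltdyn A$ and $\u B \ltdyn \u B$ for all types, and (ii) the ``top'' rules $A \ltdyn \dynv$ and $\u B \ltdyn \dync$ for all types. These two go together: reflexivity at $\dynv, 1, 0, \dync, \top$ is a normalized axiom and at a compound type follows from the matching monotonicity rule applied to the inductive hypotheses; while $A \ltdyn \dynv$ (for $A \notin \{0,\dynv\}$) is derived from the corresponding normalized rule whose premise $A \ltdyn \floor A$ is, for compound $A$, obtained by monotonicity from the strictly smaller instances of (ii) (e.g.\ $A_1 \times A_2 \ltdyn \dynv \times \dynv$ from $A_i \ltdyn \dynv$, $U \u B \ltdyn U \dync$ from $\u B \ltdyn \dync$, $\u F A \ltdyn \u F \dynv$ from $A \ltdyn \dynv$), and for $A = 1$ is just reflexivity since $\floor{1} = 1$; the remaining cases $A \in \{0,\dynv\}$ and $\u B \in \{\top,\dync\}$ are axioms. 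Next I would establish admissibility of transitivity (see below), and then conclude: given an original derivation of $T \ltdyn T'$, induct on it, replacing \textsc{VTyRefl}/\textsc{CTyRefl} by (i), \textsc{VTyTop}/\textsc{CTyTop} by (ii), \textsc{VTyTrans}/\textsc{CTyTrans} by admissible transitivity, and each monotonicity rule by itself.

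The main obstacle is admissibility of transitivity in the normalized system: from normalized derivations of $A \ltdyn A'$ and $A' \ltdyn A''$ produce one of $A \ltdyn A''$, and dually for computation types. I would prove this by simultaneous induction on the (size of the) middle type $A'$ (equivalently, on the first derivation), using the inversion fact that if $C \ltdyn C'$ is normalized-derivable and $C' \neq \dynv$ (resp.\ $C' \neq \dync$), then either $C = C'$ is one of the base types $0, 1, \top$, or $C$ and $C'$ have the same outermost connective with the component relations normalized-derivable. If $A'' = \dynv$ (resp.\ $\dync$) we simply invoke (ii). Otherwise inversion on the second judgment forces $A'$ and $A''$ to share a head connective (or be the same base type), and inversion on the first judgment then forces $A$ to share it as well; we rebuild $A \ltdyn A''$ by applying the induction hypothesis to the strictly smaller component judgments and reassembling with the corresponding monotonicity rule. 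The delicate points are ordering the auxiliary lemmas so that transitivity may depend on ``top'' but not conversely, and the degenerate case $A' = \dynv$, where the only normalized derivation of $\dynv \ltdyn A''$ forces $A'' = \dynv$ and the goal $A \ltdyn \dynv$ is again supplied by (ii).
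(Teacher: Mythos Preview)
Your proposal is correct and follows the same overall plan as the paper: show that reflexivity, transitivity, and the unrestricted $A \ltdyn \dynv$ / $\u B \ltdyn \dync$ rules are admissible in the normalized system, and then replace each original rule by either a normalized rule or one of these admissibles.

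The one organizational difference worth flagging is in the transitivity argument. The paper orders the lemmas as reflexivity $\to$ transitivity $\to$ top; its transitivity proof is self-contained, and in the case where the left derivation is a monotonicity rule while the right derivation is the normalized $A' \ltdyn \dynv$ rule (with sub-derivation $A' \ltdyn \floor{A'}$), it makes a \emph{recursive} appeal to transitivity on $A \ltdyn A'$ and $A' \ltdyn \floor{A'}$ to obtain $A \ltdyn \floor{A}$, and hence $A \ltdyn \dynv$. You instead establish top (together with reflexivity) first, and then in transitivity simply invoke top whenever $A'' = \dynv$, leaving only the syntax-directed inversion case. Your ordering makes the transitivity argument more uniform and avoids the nested inductive call; the paper's ordering keeps transitivity independent of top. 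Both work.
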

\begin{longproof}
It is clear that the normalized system is a subset of the original:
every normalized rule corresponds directly to a rule of the original
system, except the normalized $A \ltdyn \dynv$ and $\u B \ltdyn \dync$
rules have a subderivation that was not present originally.  

For the converse, first we show by induction that reflexivity is
admissible:
  \begin{enumerate}
  \item If $A \in \{\dynv, 1, 0\}$, we use a normalized rule.
  \item If $A \not\in\{\dynv, 1, 0\}$, we use the inductive hypothesis
    and the monotonicity rule.
  \item If $\u B\in \{\dync, \top\}$ use the normalized rule.
  \item If $\u B \not\in\{\dync, \top\}$ use the inductive hypothesis
    and monotonicity rule.
  \end{enumerate}
  Next, we show that transitivity is admissible:
  \begin{enumerate}
  \item Assume we have $A \ltdyn A' \ltdyn A''$
    \begin{enumerate}
    \item If the left rule is $0 \ltdyn A'$, then either $A' = \dynv$
      or $A' = 0$. If $A' = 0$ the right rule is $0 \ltdyn A''$ and we
      can use that proof. Otherwise, $A' = \dynv$ then the right rule
      is $\dynv \ltdyn \dynv$ and we can use $0 \ltdyn \dynv$.
    \item If the left rule is $A \ltdyn A$ where $A \in \{ \dynv, 1\}$
      then either $A = \dynv$, in which case $A'' = \dynv$ and we're
      done.  Otherwise the right rule is either $1 \ltdyn 1$ (done) or
      $1 \ltdyn \dynv$ (also done).
    \item If the left rule is $A \ltdyn \dynv$ with
      $A\not\in\{0,\dynv\}$ then the right rule must be $\dynv \ltdyn
      \dynv$ and we're done.
    \item Otherwise the left rule is a monotonicity rule for one of
      $U, +, \times$ and the right rule is either monotonicity (use
      the inductive hypothesis) or the right rule is $A' \ltdyn \dynv$
      with a sub-proof of $A' \ltdyn \floor{A'}$. Since the left rule
      is monotonicity, $\floor{A} = \floor{A'}$, so we inductively use
      transitivity of the proof of $A \ltdyn A'$ with the proof of $A'
      \ltdyn \floor{A'}$ to get a proof $A \ltdyn \floor{A}$ and thus
      $A \ltdyn \dynv$.
    \end{enumerate}
  \item Assume we have $\u B \ltdyn \u B' \ltdyn \u B''$.
    \begin{enumerate}
    \item If the left rule is $\top \ltdyn \u B'$ then $\u B'' \in
      \{\dync, \top\}$ so we apply that rule.
    \item If the left rule is $\dync\ltdyn \dync$, the right rule must
      be as well.
    \item If the left rule is $\u B \ltdyn \dync$ the right rule must
      be reflexivity.
    \item If the left rule is a monotonicity rule for $\with, \to, \u
      F$ then the right rule is either also monotonicity (use the
      inductive hypothesis) or it's a $\u B \ltdyn \dync$ rule and we
      proceed with $\dynv$ above
    \end{enumerate}
  \end{enumerate}
  Finally we show $A \ltdyn \dynv$, $\u B \ltdyn \dync$ are admissible
  by induction on $A$, $\u B$.  
  \begin{enumerate}
  \item If $A \in \{ \dynv, 0\}$ we use the primitive rule.
  \item If $A \not\in \{ \dynv, 0 \}$ we use the $A \ltdyn \dynv$ rule
    and we need to show $A \ltdyn \floor A$. If $A = 1$, we use the
    $1\ltdyn 1$ rule, otherwise we use the inductive hypothesis and
    monotonicity.
  \item If $\u B \in \{ \dync, \top\}$ we use the primitive rule.
  \item If $\u B \not\in \{ \dync, \top \}$ we use the $\u B \ltdyn
    \dync$ rule and we need to show $\u B \ltdyn \floor {\u B}$, which
    follows by inductive hypothesis and monotonicity.
  \end{enumerate}
  Every other rule in Figure~\ref{fig:gtt-type-dynamism} is a rule of
  the normalized system in Figure~\ref{fig:normalized}.
\end{longproof}

Based on normalized type dynamism, we show
\begin{theorem}
If $A \ltdyn A'$ according to Figure~\ref{fig:normalized}, then there is
a unique complex value $x : A \vdash \supcast{A}{A'}{x} : A'$
and
if $\u B \ltdyn \u B'$ according to Figure~\ref{fig:normalized}, then there is
a unique complex stack $x : \u B \vdash \supcast{\u B}{\u B'}{x} : \u B'$
\end{theorem}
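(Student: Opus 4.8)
The plan is to induct on derivations in the \emph{normalized} type dynamism system of Figure~\ref{fig:normalized}, exploiting the fact that these derivations are unique. First I would verify that the normalized rules are syntax-directed on the pair of types $(T,T')$: reflexivity is available only when $T = T'$ is one of $\dynv, 1, \dync, \top$; the $0 \ltdyn A$ and $\top \ltdyn \u B$ rules apply only when the left type is $0$ or $\top$ and the right is a dynamic type (or $0$/$\top$); the $A \ltdyn \dynv$ and $\u B \ltdyn \dync$ rules apply only when the right type is dynamic and the left is neither dynamic nor $0$/$\top$; and each monotonicity rule applies only when $T$ and $T'$ share an outer connective. Since in every rule the premises are determined by the conclusion, a routine induction gives that $T \ltdyn T'$ has \emph{at most one} normalized derivation. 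Together with Lemma~\ref{lem:norm-type-dyn}, which says the normalized and original systems prove the same judgements, this means that defining the contract translation by recursion on the normalized derivation yields a well-defined function of the judgement $A \ltdyn A'$ (resp.\ $\u B \ltdyn \u B'$) --- this well-definedness is the ``uniqueness'' asserted in the statement.

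Next I would run the induction, reading off from Figure~\ref{fig:cast-to-contract} the clause corresponding to each normalized rule and checking that it produces a well-typed complex value (for value upcasts) or complex stack (for computation downcasts). The base cases are immediate: reflexivity at a dynamic or unit type translates to the identity ($x$, resp.\ $\bullet$); $0 \ltdyn A$ translates to $\absurd x$ and $\u F 0 \ltdyn \u F A$ to a \texttt{bind} into $\err$; and $G \ltdyn \dynv$, $\u G \ltdyn \dync$ are exactly the embedding and projection components of the fixed dynamic type interpretation $\rho$, which are complex values and stacks by Definition~\ref{def:cbpvstar-eppairs}. For a non-ground $A \ltdyn \dynv$ the translation is the composite $\sem{\upcast{\floor A}{\dynv}}[\sem{\upcast{A}{\floor A}}/x]$, and here I must check that the recursion descends: $\upcast{\floor A}{\dynv}$ is a ground cast (a base case), while $A \ltdyn \floor A$ unfolds by monotonicity into casts on the strictly smaller argument types of $A$'s outer connective, so a lexicographic measure on (size of the larger type, size of the smaller type) decreases, and symmetrically for $\u B \ltdyn \dync$. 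Each monotonicity rule translates to the evident congruence: for $+$, a \texttt{case}-expression with \texttt{inl}/\texttt{inr} of the recursive casts; for $\times$, a pair-split with a pair of recursive casts; for $\u F$ on the downcast side, a \texttt{bind}; for $U$ on the upcast side, a \texttt{thunk} around a \texttt{force}; and likewise for $\to$ and $\with$. In each case well-typedness follows from the inductive hypotheses and the \cbpvstar{} typing rules, while being a complex value or stack follows from CBPV's closure of complex values under pattern matching and introduction forms, and of complex stacks additionally under binding in head position.

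The step I expect to be the main obstacle is \emph{coherence} --- reconciling the ordered, overlapping clauses of Figure~\ref{fig:cast-to-contract} with the derivation-directed recursion above. For a judgement such as $A \ltdyn \dynv$, the clauses $\sem{\upcast{\dynv}{\dynv}} = x$, $\sem{\upcast{G}{\dynv}} = \rho_{up}(G)$, and the general composite clause all potentially apply, and I have to confirm that the ordered reading of the figure selects, in each case, exactly the clause matching the \emph{unique} normalized derivation (the identity when $A = \dynv$, $\rho_{up}(G)$ when $A$ is ground, the $\floor A$-composite otherwise), in particular avoiding the spurious circularity that would arise from the composite clause when $A$ is already ground, where $\floor A = A$. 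Once this correspondence between normalized rules and figure clauses is pinned down, the remainder is the routine bookkeeping sketched above; and the separate claim that the provisionally-ordered overlapping clauses are in fact equivalent is then delivered by the uniqueness theorems of GTT (e.g.\ Theorems~\ref{thm:decomposition}, \ref{thm:functorial-casts}, \ref{thm:monadic-comonadic-casts}).
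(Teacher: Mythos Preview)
Your proposal is correct and is precisely the approach the paper sets up: the paper introduces the normalized rules of Figure~\ref{fig:normalized} exactly so that this theorem can be proved by induction on the (now syntax-directed, hence unique) derivation, with each normalized rule matching one ordered clause of Figure~\ref{fig:cast-to-contract}. The paper in fact omits the proof entirely after stating the theorem, so your write-up supplies the details the paper leaves implicit; the only superfluous step is the lexicographic size measure for the $A \ltdyn \dynv$ case, since $A \ltdyn \floor A$ is already a premise of the normalized rule and hence a strict subderivation, so structural induction on normalized derivations suffices without any auxiliary measure.
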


\smallskip
\subsubsection{Interpretation of Terms}~
\end{longonly}
\ Next, we extend the translation of casts to a translation of all terms
by congruence, since all terms in GTT besides casts are
in \cbpvstar.  This satisfies:
\begin{lemma}[Contract Translation Type Preservation]
  If $\Gamma\pipe\Delta \vdash E : T$ in GTT, then $\sem{\Gamma}
  \pipe\sem\Delta\vdash \sem E : \sem T$ in \cbpvstar.
\end{lemma}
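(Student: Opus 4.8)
The plan is to prove this by induction on the derivation of $\Gamma \mid \Delta \vdash E : T$ in GTT. Since \cbpvstar{} is obtained from GTT by deleting exactly the casts and the dynamic types $\dynv,\dync$, and the translation $\sem{-}$ acts homomorphically (by congruence) on every term former other than $\upcast{A}{A'}{-}$ and $\dncast{\u B}{\u B'}{-}$, all of the non-cast cases are immediate: the GTT typing rule for the relevant term former is literally a rule of \cbpvstar{}, each translated subterm is well-typed at the translated type by the inductive hypothesis, and $\sem{-}$ commutes with the type-level operations ($\sem{A_1\times A_2}=\sem{A_1}\times\sem{A_2}$, $\sem{\u F A}=\u F\sem A$, $\sem{U\u B}=U\sem{\u B}$, etc.) and with substitution, so reassembling the translated subterms with the same rule yields a \cbpvstar{} derivation. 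The context translations $\sem\Gamma$ and $\sem\Delta$ are the evident pointwise extensions.

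The only real content is in the two cast cases, \textsc{UpCast} and \textsc{DnCast}, which reduce to showing that the clauses of Figure~\ref{fig:cast-to-contract} really do assign well-typed \cbpvstar{} terms: for every derivable $A\ltdyn A'$ a complex value $x:\sem A\vdash\sem{\upcast{A}{A'}}:\sem{A'}$, and for every derivable $\u B\ltdyn\u B'$ a complex stack $\bullet:\sem{\u B'}\vdash\sem{\dncast{\u B}{\u B'}}:\sem{\u B}$. Because Figure~\ref{fig:cast-to-contract} has overlapping, order-resolved clauses and is not manifestly exhaustive, I first need totality. Following the remark after the lemma statement, I would replace the type dynamism rules of Figure~\ref{fig:gtt-type-dynamism} with the normalized rules of Figure~\ref{fig:normalized}, which derive the same relations (Lemma~\ref{lem:norm-type-dyn}) but admit induction: reflexivity is restricted to base types and the only transitivity is the factoring $A\ltdyn\floor A\ltdyn\dynv$ (dually for computations). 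Then totality and well-typedness of the cast translation follow by induction on a normalized derivation of $A\ltdyn A'$ (resp.\ $\u B\ltdyn\u B'$), checking one clause of Figure~\ref{fig:cast-to-contract} against each normalized rule.

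For that inner induction: the ground-type and dynamic-type base cases use the data of the fixed interpretation $\rho$, so $\rho_{up}(G):\sem G\to\rho(\dynv)$ and $\rho_{dn}(G)$ are the embedding/projection of the given value ep pair (dually $\rho_{up}(\u G),\rho_{dn}(\u G)$ for computation ep pairs), and $\sem{\dynv}=\rho(\dynv)$, $\sem{\dync}=\rho(\dync)$ are well-formed \cbpvstar{} types because the dynamic-type interpretations are built from the $\mu$/$\nu$ recursive types added to \cbpvstar{}, so the $\roll$/$\unroll$ occurring inside $\rho_{up}/\rho_{dn}$ typecheck. The congruence cases for $+,\times,\to,\with$ typecheck by applying the inductive hypotheses under the respective pattern-matching and introduction constructs, exactly as in the proof of Lemma~\ref{lem:casts-admissible}. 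The shift cases reuse the constructions of Theorem~\ref{thm:monadic-comonadic-casts}: a value upcast lifts to a stack upcast $\bindXtoYinZ{\bullet}{x}{\ret{\sem{\upcast{A}{A'}}}}$ and hence to $\sem{\upcast{U\u F A}{U\u F A'}}=\thunk{(\bindXtoYinZ{\force x}{y}{\ret{\sem{\upcast{A}{A'}}[y/x]}})}$, and dually for downcasts; the factoring clause $\sem{\upcast{A}{\dynv}}=\sem{\upcast{\floor A}{\dynv}}[\sem{\upcast{A}{\floor A}}/x]$ is well-typed because $A\ltdyn\floor A$ and $\floor A\ltdyn\dynv$ are both strictly smaller normalized derivations.

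The hard part will be precisely this totality/exhaustiveness bookkeeping for the cast clauses: matching the ordered, overlapping clauses of Figure~\ref{fig:cast-to-contract} to the normalized type dynamism rules, verifying that every normalized derivation is covered, that the factoring through $\floor A$/$\floor{\u B}$ terminates, and that the recursive definitions of $\rho(\dynv)$ and $\rho(\dync)$ indeed produce well-formed \cbpvstar{} value and computation types. Once that is in place, the \textsc{UpCast} and \textsc{DnCast} cases of the main induction are a one-line appeal to it, and the lemma follows.
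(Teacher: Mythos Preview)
Your proposal is correct and matches the paper's approach. The paper factors the argument exactly as you do: it first proves (as the unnamed theorem immediately preceding the lemma) that the cast clauses of Figure~\ref{fig:cast-to-contract} define, by induction on the normalized type dynamism rules of Figure~\ref{fig:normalized}, a unique well-typed complex value $x:\sem A\vdash\supcast{A}{A'}x:\sem{A'}$ and complex stack $\bullet:\sem{\u B'}\vdash\sdncast{\u B}{\u B'}\bullet:\sem{\u B}$, and then observes that the full translation extends this by congruence since every non-cast GTT term former is already a \cbpvstar{} term former, making the lemma immediate.
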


\iflong
\subsubsection{Interpretation of Term Dynamism}
\fi
We have now given an interpretation of the types, terms, and
type dynamism proofs of GTT in \cbpvstar.
To complete this to form a \emph{model} of GTT, we need to give an
interpretation of the \emph{term dynamism} proofs, which is 
established by the
following ``axiomatic graduality'' theorem.  
GTT has \emph{heterogeneous} term dynamism
rules indexed by type dynamism, but \cbpvstar\  has only \emph{homogeneous}
inequalities between terms, i.e., if $E \ltdyn E'$, then $E,E'$ have
the \emph{same} context and types.
Since every type dynamism judgement has an associated contract, we can
translate a heterogeneous term dynamism to a homogeneous inequality
\emph{up to contract}.  Our next overall goal is to prove
\begin{theorem}[Axiomatic Graduality] \label{thm:axiomatic-graduality}
  For any dynamic type interpretation,
  \begin{small}
  \[
    \inferrule
    {\Phi : \Gamma \ltdyn \Gamma'\\
      \Psi : \Delta \ltdyn \Delta'\\
      \Phi \pipe \Psi \vdash M \ltdyn M' : \u B \ltdyn \u B'}
    {\sem\Gamma \pipe \sem{\Delta'} \vdash \sem M[\sem{\Psi}] \ltdyn \sdncast{\u B}{\u B'}[\sem{M'}[\sem{\Phi}]] : \sem{\u B}}
    \quad
    \inferrule
    {\Phi : \Gamma \ltdyn \Gamma' \\
      \Phi \vdash V \ltdyn V' : A \ltdyn A'}
    {\sem{\Gamma} \vdash \supcast{A}{A'}[\sem{V}] \ltdyn\sem{V'}[\sem\Phi] : \sem {A'}}
    \]
  \end{small}
    where we define $\sem{\Phi}$ to upcast each variable, and
    $\sem{\Delta}$ to downcast $\bullet$ if it is nonempty, and if
    $\Delta = \cdot$, then $M[\sem{\Delta}] = M$.
  \begin{longonly}
    More explicitly,
    \begin{enumerate}
    \item If $\Phi : \Gamma \ltdyn \Gamma'$, then there exists $n$
      such that $\Gamma = x_1:A_1,\ldots,x_n:A_n$ and $\Gamma' =
      x_1':A_1',\ldots,x_n':A_n'$ where $A_i \ltdyn A_i'$ for each
      $i\leq n$.
      Then $\sem{\Phi}$ is a substitution from $\sem{\Gamma}$ to $\sem{\Gamma'}$
      defined as
      \[ \sem{\Phi} = \supcast{A_1}{A_1'}x_1/x_1',\ldots\supcast{A_n}{A_n'}x_n/x_n' \]
    \item If $\Psi : \Delta \ltdyn \Delta'$, then we similarly define
      $\sem{\Psi}$ as a ``linear substitution''. That is, if $\Delta =
      \Delta' = \cdot$, then $\sem{\Psi}$ is an empty substitution and
      $M[\sem{\Psi}] = M$, otherwise $\sem{\Psi}$ is a linear
      substitution from $\Delta' = \bullet : \u B'$ to $\Delta =
      \bullet : \u B$ where $\u B \ltdyn \u B'$ defined as
      \[ \sem\Psi = \sdncast{\u B}{\u B'}\bullet/\bullet \]
    \end{enumerate}
  \end{longonly}
\end{theorem}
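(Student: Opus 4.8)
The plan is a mutual induction on the derivation of GTT term dynamism --- the structural and congruence rules of Figures~\ref{fig:gtt-term-dynamism-structural} and \ref{fig:gtt-term-dynamism-ext-congruence} together with the axioms of Figure~\ref{fig:gtt-term-dyn-axioms} --- proving the value statement and the computation statement simultaneously. Before starting I would collect the facts we need about the contract-translated casts of Figure~\ref{fig:cast-to-contract}, regarded as \cbpvstar\ terms. Three things suffice: (i) identity casts are equidynamic to the identity and casts compose, i.e.\ the \cbpvstar\ analogues of Theorem~\ref{thm:decomposition}; (ii) for every value type dynamism $A \ltdyn A'$ the pair $(\supcast{A}{A'}, \sdncast{\u F A}{\u F A'})$ is a value ep pair and for every computation type dynamism $\u B \ltdyn \u B'$ the pair $(\supcast{U \u B}{U \u B'}, \sdncast{\u B}{\u B'})$ is a computation ep pair in the sense of Definition~\ref{def:cbpvstar-eppairs}; and (iii) the cast-unfolding equations of Figure~\ref{fig:cast-to-contract} hold up to $\equidyn$ --- but these hold \emph{by construction}, since the contract translation \emph{is} the lazy-cast definition. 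Facts (i) and (ii) are proved by induction on (normalized) type dynamism: the leaves are the ground ep pairs supplied by the dynamic type interpretation $\rho$, the reflexivity and $\dynv$ cases are immediate, and the congruence cases replay the GTT arguments for Theorems~\ref{thm:decomposition}, \ref{thm:functorial-casts}, \ref{thm:monadic-comonadic-casts}, \ref{thm:cast-adjunction}, \ref{thm:retract-general} essentially verbatim, because --- as observed in the proof of Lemma~\ref{lem:casts-admissible} --- those arguments invoke nothing beyond $\beta\eta$ and the properties of strictly smaller casts. As a byproduct this establishes that value and computation ep pairs are closed under $U$, $\u F$, $+$, $\times$, $\with$, $\to$, the fact used in Section~\ref{sec:dynamic-type-interp}.

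With this in place I would run the induction, in each case first unfolding $\sem{\Phi}$ (which substitutes $\supcast{A_i}{A_i'}{x_i}$ for $x_i'$) and $\sem{\Psi}$ (which substitutes $\sdncast{\u B}{\u B'}{\bullet}$ for $\bullet$, or is empty). The structural rules are routine: \textsc{TmDynRefl} collapses to reflexivity because $\supcast{A}{A}$ and $\sdncast{\u B}{\u B}$ are equidynamic to the identity; \textsc{TmDynTrans} combines the two induction hypotheses using cast composition and monotonicity of $\sdncast{\u B}{\u B'}{-}$; \textsc{TmDynVar} and \textsc{TmDynHole} become literal identities once the substitutions and the definition of the translation on casts are expanded; and \textsc{TmDynValSubst}, \textsc{TmDynStkSubst} follow by substituting one induction hypothesis into the other and using monotonicity, after noting that the contract translation commutes with GTT-level substitution. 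The cast universal-property axioms (Up/Down, Bound/Best) all translate to \emph{reflexivities}: because the translation literally replaces $\upcast{A}{A'}{-}$ by $\supcast{A}{A'}{-}$ and $\dncast{\u B}{\u B'}{-}$ by $\sdncast{\u B}{\u B'}{-}$, after unfolding $\sem{\Phi}, \sem{\Psi}$ and using that identity casts are identities, the two sides coincide. The retract axioms translate exactly to the retraction laws of the value and computation ep pairs, which are part of fact (ii). The type $\beta\eta$ axioms and the error axioms \textsc{ErrBot}, \textsc{StkStrict} translate to the corresponding axioms of \cbpvstar\ (using $\sem{\err} = \err$ and that the translation of a complex stack is a complex stack).

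The substance of the proof is the congruence rules, which amount to re-deriving the unique-implementation theorems for casts but with $\ltdyn$ replacing $\equidyn$, so that at each use only the ``correct'' direction of the ep-pair laws is available. For an introduction-form congruence, e.g.\ \textsc{$\times$ICong} or \textsc{$\to$ICong}, one unfolds the relevant upcast by its definition, $\beta$-reduces, applies the induction hypotheses to the components, and recombines by monotonicity of the constructor. For an elimination-form congruence, e.g.\ \textsc{$\to$ECong}, \textsc{$F$ECong}, or \textsc{$\with$ECong}, one applies the induction hypothesis to the principal argument, rewrites the relevant downcast ($\sdncast{A \to \u B}{A' \to \u B'}{-}$, etc.) by its definition, and uses a $\beta$-reduction together with a commuting conversion --- both derivable from the $\beta\eta$ axioms of \cbpvstar\ --- to expose the inner downcast and the upcast of the side argument, so that the remaining induction hypotheses and, in the $\u F$-cases, the \emph{projection} law of the value ep pair can be applied, before recombining by monotonicity. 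The main obstacle is precisely this last family of cases: one must keep careful track of the $\sem{\Phi}$ and $\sem{\Psi}$ substitutions as they move past the cast unfoldings, of which side of a dynamism judgement each cast falls on (upcasts on value types and variables, downcasts on computation types and the stoup, with the shifts $U$ and $\u F$ mediating between them), and of the extra associativity and commuting conversions generated in the $\u F$-elimination cases --- but, as for the rest of Section~\ref{sec:theorems-in-gtt}, no genuinely new ingredient beyond the ep-pair laws and $\beta\eta$ is needed.
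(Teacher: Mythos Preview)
Your proposal is correct and is essentially the paper's proof: the same mutual induction on term-dynamism derivations, supported by the same three preliminary facts (identity/composition of translated casts, the ep-pair property of every translated cast pair, and the cast-reduction equations), themselves proved by induction on normalized type dynamism.

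One point the paper organizes more carefully than your sketch: the shift congruence cases \textsc{$F$ICong}, \textsc{$U$ICong}, \textsc{$U$ECong} fit neither of your two patterns. For \textsc{$F$ICong}, the induction hypothesis gives $\supcast{A}{A'}[\sem V] \ltdyn \sem{V'}[\sem\Phi]$ while the goal is $\ret \sem V \ltdyn \sdncast{\u F A}{\u F A'}[\ret \sem{V'}[\sem\Phi]]$; the cast on the right is a \emph{shifted} downcast, precisely the kind you (rightly) resolved never to unfold after establishing fact~(ii), so ``unfold the cast and $\beta$-reduce'' does not apply. The paper packages the needed move as a separate ``hom-set formulation of adjunction'' lemma --- for a value ep pair $(V_e,S_p)$, $\ret V_e[V] \ltdyn M$ iff $\ret V \ltdyn S_p[M]$, and dually for computation ep pairs --- proved once from retraction/projection and invoked at each shift congruence (and similarly an ``alternative projection'' form for \textsc{$U$ICong}). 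These are derivable from your fact~(ii) plus $\beta\eta$, so this is a refinement of your sketch rather than a missing ingredient. Relatedly, your description of negative introduction congruences (\textsc{$\to$ICong}, \textsc{$\with$ICong}) as ``unfold the upcast, $\beta$-reduce'' is inverted: those cases unfold the \emph{downcast} and finish with the $\eta$ law, dual to the positive cases.
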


\begin{longonly}
  Relative to previous work on graduality \citep{newahmed18},
the distinction between complex value upcasts and complex stack
downcasts guides the formulation of the theorem; e.g. using upcasts in
the left-hand theorem would require more thunks/forces.  
\end{longonly}

\begin{shortonly}
  The full proof can be found in the extended version, and uses a
  sequence of lemmas.  The first lemma shows that the translations of
  casts in Figure~\ref{fig:cast-to-contract} do form ep pairs in the
  sense of Definition~\ref{def:cbpvstar-eppairs}.  One of the biggest
  advantages of using an explicit syntax for complex values and complex
  stacks is that the ``shifted'' casts (the downcast between $\u F$
  types for $A \ltdyn A'$ and the upcast between $U$ types for $\u B
  \ltdyn \u B'$) are the only effectful terms, and this lemma is the
  only place where we need to reason about their definitions
  explicitly---afterwards, we can simply use the fact that they are ep
  pairs with the ``pure'' value upcasts and stack downcasts, which
  compose much more nicely than effectful terms.  This is justified by two
  additional lemmas, which show that a projection is determined
  by its embedding and vice versa, and that embedding-projections
  satisfy an adjunction/Galois connection property.  The final lemmas
  show that, according to Figure~\ref{fig:cast-to-contract},
  $\supcast{A}{A'}$ is equivalent to the identity and
  $\supcast{A'}{A''}\supcast{A}{A'}$ is $\supcast{A}{A''}$, and
  similarly for downcasts.  All of these properties are theorems in GTT
  (Section~\ref{sec:theorems-in-gtt}), and in the extended version it
  takes quite a bit of work to prove them true under translation, which
  illustrates that the axiomatic theory of GTT encodes a lot of
  information with relatively few rules.
\end{shortonly}

\begin{longonly}
  We now develop some lemmas on the way towards proving this result.  
  First, to keep proofs high-level, we establish the following cast
  reductions that follow easily from $\beta,\eta$ principles.
\begin{lemma}[Cast Reductions]
  The following are all provable
  \begin{align*}
    &\sem{\upcast{A_1+A_2}{A_1'+A_2'}}[\inl V] \equidyn \inl \sem{\upcast{A_1}{A_1'}}[V]\\
    &\sem{\upcast{A_1+A_2}{A_1'+A_2'}}[\inr V] \equidyn \inr \sem{\upcast{A_2}{A_2'}}[V]\\
    &\sem{\dncast{\u F(A_1+A_2)}{\u F(A_1'+A_2')}}[\ret \inl V] \equidyn
    \bindXtoYinZ {\sem{\dncast{A_1}{A_1'}}[\ret V]} {x_1} \ret \inl x_1\\
    &\sem{\dncast{\u F(A_1+A_2)}{\u F(A_1'+A_2')}}[\ret \inr V] \equidyn
    \bindXtoYinZ {\sem{\dncast{A_2}{A_2'}}[\ret V]} {x_2} \ret \inr x_2\\
    &\sem{\dncast{\u F 1}{\u F1}} \equidyn \bullet\\
    &\sem{\upcast{ 1}{1}}[x] \equidyn x\\
    &\sem{\dncast{\u F(A_1\times A_2)}{\u F(A_1'\times A_2')}}[\ret (V_1,V_2)]\\
    &\quad\equidyn
    \bindXtoYinZ {\sdncast{\u FA_1}{\u F A_1'}[\ret V_1]} {x_1} \bindXtoYinZ {\sdncast{\u FA_2}{\u F A_2'}[\ret V_2]} {x_2} \ret (x_1,x_2)\\
    &\supcast{A_1\times A_2}{A_1'\times A_2'}[(V_1,V_2)] \equidyn (\supcast{A_1}{A_1'}[V_1], \supcast{A_2}{A_2'}[V_2])\\
    &(\sdncast{A \to \u B}{A' \to \u B'} M)\, V \equidyn
    (\sdncast{\u B}{\u B'} M)\, (\supcast{A}{A'}{V})\\
    &(\force (\supcast{U(A\to\u B)}{U(A'\to\u B')} V))\,V'\\
    &\quad\equidyn
    \bindXtoYinZ {\dncast{\u FA}{\u FA'}[\ret V']} x {\force (\supcast{U\u B}{U\u B'}{(\thunk (\force V\, x))})}\\
    &\pi \sdncast{\u B_1 \with \u B_2}{\u B_1' \with \u B_2'} M \equidyn
    \sdncast{\u B_1}{\u B_1'} \pi M\\
    &\pi' \sdncast{\u B_1 \with \u B_2}{\u B_1' \with \u B_2'} M \equidyn
    \sdncast{\u B_2}{\u B_2'} \pi' M\\
    &\pi \force (\supcast{U(\u B_1 \with \u B_2)}{U(\u B_1' \with \u B_2')} V)
    \equidyn
    \force \supcast{U\u B_1}{U\u B_1'}{\thunk (\pi \force V)}\\
    &\pi' \force (\supcast{U(\u B_1 \with \u B_2)}{U(\u B_1' \with \u B_2')} V)
    \equidyn
    \force \supcast{U\u B_2}{U\u B_2'}{\thunk (\pi' \force V)}\\
    &\sdncast{\u F U \u B}{\u F U \u B'}[\ret V] \equidyn \ret\thunk \sdncast{\u B}{\u B'}\force V\\
    &\force \supcast{U\u FA}{U \u F A'}[V]
    \equidyn
    \bindXtoYinZ {\force V} x \thunk\ret\upcast{A}{A'} x
  \end{align*}
\end{lemma}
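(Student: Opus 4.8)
The plan is to read each equation as a \emph{cast reduction}: by the defining clauses in Figure~\ref{fig:cast-to-contract}, every translated cast $\sem{\upcast{\cdot}{\cdot}}$ or $\sem{\dncast{\cdot}{\cdot}}$ is a term that pattern-matches on or eliminates its argument, so supplying it an introduction form (or post-composing it onto one) exposes a $\beta$-redex that is discharged using only the \cbpvstar\ equational theory --- i.e.\ the Type Universal Properties of Figure~\ref{fig:gtt-term-dyn-axioms} (the $\beta$ rules for $+$, $1$, $\times$, $\to$, $\with$, $U$, $\u F$) together with the commuting conversion for $\bindXtoYinZ{}{}$ and, in a few cases, the corresponding $\eta$ rules. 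Concretely I would do a case analysis on the shape of the cast, and in each case (i) unfold the cast to its defining clause, (ii) substitute the given value or stack for the hole, and (iii) normalize by $\beta$. The two identity-cast equations, $\sem{\upcast{1}{1}}[x]\equidyn x$ and $\sem{\dncast{\u F1}{\u F1}}\equidyn\bullet$, hold by unfolding alone.

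As a representative case, the sum downcast unfolds to a bind followed by a case-split on the bound variable; substituting $\ret(\inl V)$ and applying $\u F\beta$ collapses the outer bind, then $+\beta$ selects the left branch, leaving exactly $\bindXtoYinZ{\sem{\dncast{A_1}{A_1'}}[\ret V]}{x_1}\ret(\inl x_1)$ as claimed. The $\inr$ case and the two sum-upcast clauses are symmetric, and the $\times$ clauses are the same argument with $\times\beta$ in place of $+\beta$. For the function upcast, forcing the thunk of the defining clause ($U\beta$) and then applying the argument ($\to\beta$) immediately gives the stated right-hand side, and the function-downcast clause is dual, pushing the application through the $\lambda$ by $\to\beta$.

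The only mildly delicate cases are those where a cast meets an \emph{eliminator} of a shifted type rather than an introduction form --- the $\pi$ and $\pi'$ clauses for $\with$ and the $U\u F$ and $\u F U$ clauses --- where one additionally uses the $\eta$ laws for $U$ and $\u F$ and, for the computation-type upcasts appearing under $U$, the identity $\upcast{U\u B}{U\u B'}{x}\equidyn\thunk{(\defupcast{\u B}{\u B'}{(\force x)})}$ from Theorem~\ref{thm:monadic-comonadic-casts} (and dually, for the $\u F U$ downcast, via a value downcast). The recipe there is to force/return-bind to expose the underlying cast, $\beta$-reduce, then re-thunk or re-return. I expect no conceptual obstacle --- every step is a named $\beta$ or $\eta$ rule --- so the real cost of the lemma is just the breadth of the case analysis and the \cbpvstar\ equational bookkeeping; isolating these reductions here is precisely what lets the subsequent graduality argument cite them instead of re-deriving $\beta$-steps inline.
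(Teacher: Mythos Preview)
Your approach is correct and matches the paper, which gives no explicit proof beyond the remark that these equations ``follow easily from $\beta,\eta$ principles.'' Your plan---unfold the defining clause of each translated cast from Figure~\ref{fig:cast-to-contract}, plug in the supplied introduction form or eliminator, and normalize by the relevant $\beta$ (and occasionally $\eta$) law---is exactly the intended argument.

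One small correction: in the ``mildly delicate'' cases you flag (the $\with$-projection clauses and the $U\u F$/$\u FU$ clauses), you do \emph{not} need to invoke Theorem~\ref{thm:monadic-comonadic-casts}. That theorem is a result about the GTT casts themselves, whereas here you are reasoning about their \emph{translations} $\sem{-}$ in \cbpvstar, and those translations are given explicitly in Figure~\ref{fig:cast-to-contract}. For instance, $\supcast{U(\u B_1\with\u B_2)}{U(\u B_1'\with\u B_2')}$ is \emph{defined} as a thunk of a lazy pair, so $\pi(\force(-))$ reduces by $U\beta$ then $\with\beta$ directly to the right-hand side; no $\eta$ law or external identity is needed. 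Likewise the $\u FU$ downcast is defined as a bind, so plugging $\ret V$ and applying $\u F\beta$ suffices. Dropping that reference makes every case a pure unfolding-plus-$\beta$ step, which is the point of isolating this lemma.
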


Our next goal is to show that from the basic casts being ep pairs, we
can prove that all casts as defined in Figure~\ref{fig:cast-to-contract}
are ep pairs.  Before doing so, we prove the following lemma, which is
used for transitivity (e.g. in the $A \ltdyn \dynv$ rule, which uses a
composition $A \ltdyn \floor{A} \ltdyn \dynv$):
\begin{lemma}[EP Pairs Compose]\hfill
  \label{lem:ep-pairs-compose}
  \begin{enumerate}
  \item If $(V_1, S_1)$ is a value ep pair from $A_1$ to $A_2$ and
    $(V_2,S_2)$ is a value ep pair from $A_2$ to $A_3$, then
    $(V_2[V_1], S_1[S_2])$ is a value ep pair from $A_1$ to $A_3$.
  \item If $(V_1, S_1)$ is a computation ep pair from $\u B_1$ to $\u B_2$ and
    $(V_2,S_2)$ is a computation ep pair from $\u B_2$ to $\u B_3$, then
    $(V_2[V_1], S_1[S_2])$ is a computation ep pair from $\u B_1$ to $\u B_3$.
  \end{enumerate}
\end{lemma}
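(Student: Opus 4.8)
The plan is to verify the two clauses of Definition~\ref{def:cbpvstar-eppairs} for the composite pair directly, reducing each clause to the corresponding clause for the two given ep pairs. The two statements are formally dual---part 1 lives on the value side with the shift $\u F$, part 2 on the computation side with $U$---so I would write out part 1 in detail and then transpose. Throughout I would use the ``Alternative Projection'' lemma stated just above, the $\beta\eta$ rules for $\u F$ and $U$, and the standard commuting conversions for complex stacks, namely that for a stack $S$ one has $S[\bindXtoYinZ{N}{x}{N'}] \equidyn \bindXtoYinZ{N}{x}{S[N']}$ (provable from $\eta$ for $\u F$; this is the only place the linearity of $S$ in its hole is used).

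For part 1, write $x_1 : A_1 \vdash V_1 : A_2$, $\bullet : \u F A_2 \vdash S_1 : \u F A_1$ for the first pair and $x_2 : A_2 \vdash V_2 : A_3$, $\bullet : \u F A_3 \vdash S_2 : \u F A_2$ for the second, so the composite embedding is $x_1 : A_1 \vdash V_2[V_1/x_2] : A_3$ and the composite projection is $\bullet : \u F A_3 \vdash S_1[S_2/\bullet] : \u F A_1$. For \emph{retraction}, I would compute $(S_1[S_2])[\ret (V_2[V_1])] \equidyn S_1\big[\,S_2[\ret V_2][V_1/x_2]\,\big]$, rewrite the bracketed stack to $\ret V_1$ using the retraction property of $(V_2,S_2)$ substituted along $V_1/x_2$, and then conclude $\equidyn \ret x_1$ by the retraction property of $(V_1,S_1)$. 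For the \emph{projection} inequality, I would first $\eta$-expand $S_2$ and commute $S_1$ past the resulting bind to rewrite $\bindXtoYinZ{S_1[S_2]}{x_1}{\ret (V_2[V_1])}$ as $\bindXtoYinZ{S_2}{x_2}{\big(\bindXtoYinZ{S_1[\ret x_2]}{x_1}{\ret (V_2[V_1])}\big)}$; the inner bind is an instance of the Alternative Projection property for $(V_1,S_1)$ (with $M = \ret V_2$), hence $\ltdyn \ret V_2$, and by congruence for $\kw{bind}$ the whole term is then $\ltdyn \bindXtoYinZ{S_2}{x_2}{\ret V_2}$, which is $\ltdyn \bullet$ by the projection property of $(V_2,S_2)$.

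For part 2, by the dual argument: write $z : U\u B_1 \vdash V_1 : U\u B_2$, $\bullet : \u B_2 \vdash S_1 : \u B_1$ and $z : U\u B_2 \vdash V_2 : U\u B_3$, $\bullet : \u B_3 \vdash S_2 : \u B_2$, with composite $(V_2[V_1/z],\, S_1[S_2/\bullet])$. Retraction follows by forcing the composite embedding and projecting it through $S_1[S_2]$, using the retraction of $(V_2,S_2)$ substituted along $V_1/z$ and then the retraction of $(V_1,S_1)$. For the projection inequality I would instantiate the Alternative Projection form for computation ep pairs: for $(V_1,S_1)$ with $M := S_2[\force w]$ it gives $V_1[\thunk{S_1[S_2[\force w]]}] \ltdyn \thunk{S_2[\force w]}$; substituting this into $V_2$ by congruence and then applying the same form for $(V_2,S_2)$ with $M := \force w$ yields $\ltdyn \thunk{\force w} \equidyn w$ by $\eta$ for $U$.

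The step I expect to be the main obstacle is not conceptual but the substitution bookkeeping: keeping straight which free variable of an embedding is captured by an enclosing $\kw{bind}$ (or substituted into by a hole-plug), verifying that $S_1$ and $S_2$ do not capture the variables being moved, and discharging the commuting-conversion equalities for complex stacks cleanly. I would therefore state the commuting-conversion fact and the ``substituted'' forms of the retraction/projection properties as explicit preliminary observations before running the two calculations above.
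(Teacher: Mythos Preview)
Your proposal is correct and follows the same overall strategy as the paper: verify retraction and projection for the composite by applying the corresponding property of each component ep pair in turn. The only notable tactical difference is in the projection case of part~1: the paper gets there more directly by $\u F\beta$-expanding $\ret(V_2[V_1])$ to $\bindXtoYinZ{\ret V_1}{y}{\ret V_2}$ and re-associating the two binds, then applying the raw projection property of $(V_1,S_1)$ (substituting $S_2[\bullet]$ for the hole) followed by that of $(V_2,S_2)$---so it never needs to $\eta$-expand $S_2$, push $S_1$ through a bind, or invoke the Alternative Projection lemma. Your route works too; it just front-loads more commuting-conversion bookkeeping than necessary.
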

\begin{longproof}
  \begin{enumerate}
  \item First, retraction follows from retraction twice:
    \[ S_1[S_2[\ret V_2[V_1[x]]]] \equidyn S_1[\ret [V_1[x]]] \equidyn x \]
    and projection follows from projection twice:
    \begin{align*}
      \bindXtoYinZ {S_1[S_2[\bullet]]} x \ret V_2[V_1[x]]
      &\equidyn
      {\bindXtoYinZ {S_1[S_2[\bullet]]} x \bindXtoYinZ {\ret [V_1[x]]} y \ret V_2[y]}\tag{$\u F\beta$}\\
      &\equidyn
      \bindXtoYinZ {(\bindXtoYinZ {S_1[S_2[\bullet]]} x {\ret [V_1[x]]})} y \ret V_2[y]\tag{Commuting conversion}\\
      &\ltdyn
      \bindXtoYinZ {S_2[\bullet]} y \ret V_2[y]\tag{Projection}\\
      &\ltdyn \bullet \tag{Projection}
    \end{align*}
  \item Again retraction follows from retraction twice:
    \[ S_1[S_2[\force V_2[V_1[z]]]] \equidyn S_1[\force V_1[z]] \equidyn \force z \]
    and projection from projection twice:
    \begin{align*}
      V_2[V_1[\thunk S_1[S_2[\force w]]]]
      &\equidyn V_2[V_1[\thunk S_1[\force \thunk S_2[\force w]]]]\tag{$U\beta$}\\
      &\ltdyn V_2[\thunk S_2[\force w]] \tag{Projection}\\
      &\ltdyn w \tag{Projection}
    \end{align*}
  \end{enumerate}
\end{longproof}
\begin{longonly}
\begin{lemma}[Identity EP Pair]
  \label{ep-pair-id}
  $(x. x, \bullet)$ is an ep pair (value or computation).
\end{lemma}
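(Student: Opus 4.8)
\begin{longproof}
The plan is to check the four conditions of Definition~\ref{def:cbpvstar-eppairs} directly, treating the value case and the computation case separately; in each the retraction condition collapses to reflexivity and the projection condition is exactly an instance of an $\eta$ rule.

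\emph{Value case.} Taking $A' = A$, the embedding is $x : A \vdash x : A$ and the projection is the identity stack $\bullet : \u F A \vdash \bullet : \u F A$. For retraction, substituting into $S_p[\ret V_e]$ gives $\ret x$, so $x : A \vdash \ret x \equidyn \ret x : \u F A$ holds by reflexivity. For projection, substituting into $\bindXtoYinZ{S_p}{x}{\ret V_e}$ gives $\bindXtoYinZ{\bullet}{x}{\ret x}$, and the $\eta$ rule for $\u F$ (Figure~\ref{fig:gtt-term-dyn-axioms}) gives $\bullet : \u F A \vdash \bullet \equidyn \bindXtoYinZ{\bullet}{x}{\ret x} : \u F A$, hence in particular $\bindXtoYinZ{\bullet}{x}{\ret x} \ltdyn \bullet$.

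\emph{Computation case.} Taking $\u B' = \u B$, the embedding is the value $z : U \u B \vdash z : U \u B$ and the projection is $\bullet : \u B \vdash \bullet : \u B$. For retraction, $S_p[\force V_e]$ is $\force z$, so $z : U \u B \vdash \force z \equidyn \force z : \u B$ by reflexivity. For projection, $V_e[\thunk{S_p[\force w]}]$ is $\thunk{\force w}$, and the $\eta$ rule for $U$ gives $w : U \u B \vdash w \equidyn \thunk{\force w} : U \u B$, hence $\thunk{\force w} \ltdyn w$.

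There is no real obstacle here; the only subtlety worth flagging is that both projection conditions need only the single direction of the $\eta$ equidynamism, which is why the ep-pair conditions are stated with $\ltdyn$ rather than $\equidyn$. This lemma, together with Lemma~\ref{lem:ep-pairs-compose}, supplies the reflexivity and transitivity base cases for the induction establishing that every cast in Figure~\ref{fig:cast-to-contract} forms an ep pair.
\end{longproof}
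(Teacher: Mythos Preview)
Your proof is correct; the paper states this lemma without proof, leaving it as immediate from the definitions, and your direct verification of the retraction and projection conditions via reflexivity and the $\eta$ rules for $\u F$ and $U$ is exactly the intended (and only reasonable) argument.
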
  
\end{longonly}

Now, we show that all casts are ep pairs.
The proof is a somewhat tedious, but straightforward calculation.

\begin{lemma}[Casts are EP Pairs]\hfill
  \label{lem:casts-are-ep-pairs}
  \begin{enumerate}
  \item For any $A \ltdyn A'$, the casts $(x.\sem{\upcast{A}{A'}x},
    \sem{\dncast{\u F A}{\u F A'}})$ are a value ep pair from
    $\sem{A}$ to $\sem{A'}$
  \item For any $\u B \ltdyn \u B'$, the casts $(z. \sem{\upcast{U \u
      B}{U \u B'}z}, \sem{\dncast{\u B}{\u B'}})$ are a computation ep
    pair from $\sem{\u B}$ to $\sem{\u B'}$.
  \end{enumerate}
\end{lemma}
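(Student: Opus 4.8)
The plan is to prove both halves simultaneously by induction on the \emph{normalized} type dynamism derivation of Figure~\ref{fig:normalized}, which derives the same relations as Figure~\ref{fig:gtt-type-dynamism} by Lemma~\ref{lem:norm-type-dyn}. Using the normalized rules is exactly what makes the induction well-founded: reflexivity is confined to $\dynv$, $\dync$, and $1$, and the ``top'' rules confine transitivity to the shapes $A \ltdyn \floor A \ltdyn \dynv$ and $\u B \ltdyn \floor{\u B} \ltdyn \dync$, so every premise is structurally smaller than the conclusion. Throughout I would freely use the Cast Reductions lemma to rewrite casts applied to constructors, the EP Pairs Compose lemma (Lemma~\ref{lem:ep-pairs-compose}) for the transitivity-shaped rules, and the Identity EP Pair lemma for reflexivity.

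For the leaves: $\dynv \ltdyn \dynv$, $\dync \ltdyn \dync$, and $1 \ltdyn 1$ translate to identities, which are ep pairs by Lemma~\ref{ep-pair-id}; the ground cases $G \ltdyn \dynv$ and $\u G \ltdyn \dync$ hold \emph{by construction}, since a dynamic type interpretation $\rho$ is required to supply value and computation ep pairs $(\rho_{e}(G),\rho_{p}(G))$ and $(\rho_{e}(\u G),\rho_{p}(\u G))$; and the $A \ltdyn \dynv$ rule (for $A \notin \{0,\dynv\}$) is the composite of the congruence case $A \ltdyn \floor A$ with the ground case $\floor A \ltdyn \dynv$, so Lemma~\ref{lem:ep-pairs-compose} applies, and dually for $\u B \ltdyn \dync$. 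For $0 \ltdyn A$ and $\top \ltdyn \u B$, whose translations in Figure~\ref{fig:cast-to-contract} are $\absurd x$ / $\bindXtoYinZ{\bullet}{x}{\err}$ and $\{\}$ / $\thunk\err$, retraction and projection follow from $\eta$ for $0$ and $\top$, $\beta$ for $\u F$, $\err$-strictness of stacks, and $\err$ being the least computation (so that the projection inequality need only go one way).

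The congruence cases are the substance of the proof. For each of $U,+,\times$ (value) and $\u F,\with,\to$ (computation) I would take the component ep pairs from the inductive hypotheses and verify that the combined cast of Figure~\ref{fig:cast-to-contract} is again an ep pair: the Cast Reductions lemma reduces each retraction/projection equation to a composite of the component equations, after which commuting conversions, the relevant $\beta\eta$ laws, $\err$-strictness, and ``$\err$ is bottom'' (for the $\ltdyn$ half of projection) close each case. I expect the real obstacle to be the \emph{shifted} casts: for a value dynamism $A \ltdyn A'$ the projection half is the stack $\sdncast{\u F A}{\u F A'}$, and for a computation dynamism $\u B \ltdyn \u B'$ the embedding half is the value $\supcast{U \u B}{U \u B'}$, and these are the only translated casts that are genuinely effectful. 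They appear both as the base of the $\u F$ and $U$ congruence cases and wherever one packages a value ep pair into a computation ep pair or vice versa, and they must be handled by unfolding their definitions and computing directly, using that $\bindXtoYinZ{\bullet}{x}{\ret{V_e}}$ behaves like the functorial action of $\u F$ and $\thunk{(S_p[\force{-}])}$ like that of $U$. Once these concrete calculations are dispatched, every remaining occurrence of a shifted cast is consumed through the abstract ep-pair interface (via the Alternative Projection lemma), so no further unfolding is needed and the induction goes through.
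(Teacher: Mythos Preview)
Your proposal is correct and matches the paper's proof essentially point for point: mutual induction on the normalized dynamism derivation, identity/ground/composition cases discharged by Lemma~\ref{ep-pair-id}, the dynamic type interpretation, and Lemma~\ref{lem:ep-pairs-compose}, with the congruence cases for $+,\times,\with,\to$ reduced via the Cast Reductions lemma to the component ep-pair equations plus $\beta\eta$, commuting conversions, and $\err$-strictness/minimality. Your emphasis on the shifted casts at $U$ and $\u F$ is exactly the paper's point---these are the only cases where the translated definition must be unfolded, and thereafter the inductive hypotheses are used only through the abstract ep-pair interface.
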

\begin{longproof}
  By induction on normalized type dynamism derivations.
  \begin{enumerate}
  \item $A \ltdyn A$ ($A \in \{\dyn, 1\}$), because identity is an ep pair.
  \item $0 \ltdyn A$ (that $A \in \{ \dyn, 0 \}$ is not important):
    \begin{enumerate}
    \item Retraction is
      \[ x : 0 \vdash \ret x \equidyn \bindXtoYinZ {\ret\absurd x} y \err : \u F A \]
      which holds by $0\eta$
    \item Projection is
      \[ \bullet : \u F A \vdash \bindXtoYinZ {(\bindXtoYinZ \bullet y \err)} x {\ret\absurd x} \ltdyn \bullet : \u F A \]
      Which we calculate:
      \begin{align*}
        &\bindXtoYinZ {(\bindXtoYinZ \bullet y \err)} x {\ret\absurd x}\\
        &\equidyn \bindXtoYinZ \bullet y \bindXtoYinZ \err x {\ret\absurd x}\tag{comm conv}\\
        &\equidyn \bindXtoYinZ \bullet y \err \tag{Strictness of Stacks}\\
        &\ltdyn \bindXtoYinZ \bullet y \ret y \tag{$\err$ is $\bot$}\\
        &\equidyn \bullet \tag{$\u F\eta$}
      \end{align*}
    \end{enumerate}
  \item $+$:
    \begin{enumerate}
    \item Retraction is
      \begin{align*}
        &x : A_1 + A_2 \vdash\\
        &\sem{\dncast{\u F(A_1+A_2)}{\u F(A_1'+A_2')}}[\ret \sem{\upcast{A_1+A_2}{A_1'+A_2'}}[x]]\\
        &=\sdncast{\u F(A_1+A_2)}{\u F(A_1'+A_2')}[\ret\caseofXthenYelseZ x {x_1. \inl\supcast{A_1}{A_1'}[x_1]}{x_1. \inr\supcast{A_2}{A_2'}[x_2]}]\\
        &\equidyn
        \caseofX x\tag{commuting conversion}\\
        &\quad\thenY {x_1. \sdncast{\u F(A_1+A_2)}{\u F(A_1'+A_2')}[\ret\inl\supcast{A_1}{A_1'}[x_1]]}\\
        &\quad\elseZ {x_2. \sdncast{\u F(A_1+A_2)}{\u F(A_1'+A_2')}[\ret\inr\supcast{A_2}{A_2'}[x_2]]}\\
        &\equidyn
        \caseofX x\tag{cast computation}\\
        &\quad\thenY{x_1. \bindXtoYinZ {\sdncast{\u F A_1}{\u F A_1'}[\ret \supcast{A_1}{A_1'}x_1]} {x_1} \ret \inl x_1}\\
        &\quad\elseZ{x_2. \bindXtoYinZ {\sdncast{\u F A_2}{\u F A_2'}[\ret \supcast{A_2}{A_2'}x_2]} {x_2} \ret \inr x_2}\\
        &\equidyn \caseofXthenYelseZ x {x_1. \ret \inl x_1} {x_2. \ret \inr x_2}\tag{IH retraction}\\
        &\equidyn \ret x\tag{$+\eta$}
      \end{align*}
    \item For Projection:
      \begin{align*}
        &\bullet : A_1' + A_2' \vdash\\
        &\bindXtoYinZ {\sdncast{\u F(A_1+A_2)}{\u F(A_1'+A_2')}} x \supcast{A_1+A_2}{A_1'+A_2'}[x]\\
        &=
        \bindXtoYinZ {(\bindXtoYinZ \bullet {x'} \caseofXthenYelseZ {x'} {x_1'. \bindXtoYinZ {\sem{\dncast{\u FA_1}{\u FA_1'}}[\ret x_1']} {x_1} \ret\inl x_1}{x_2'. \cdots})} x\\
        &\quad\supcast{A_1+A_2}{A_1'+A_2'}\\
        &\equidyn
        \bindXtoYinZ \bullet x' \caseofX {x'} \tag{Commuting Conversion}\\
        &\qquad \thenY {x_1'. \bindXtoYinZ {\sem{\dncast{\u FA_1}{\u FA_1'}}[\ret x_1']} {x_1} \supcast{A_1+A_2}{A_1'+A_2'}{\ret\inl x_1}}\\
        &\qquad \elseZ {x_2'. \bindXtoYinZ {\sem{\dncast{\u FA_2}{\u FA_2'}}[\ret x_2']} {x_2} \supcast{A_1+A_2}{A_1'+A_2'}{\ret\inr x_2}}\\
        &\equidyn
        \bindXtoYinZ \bullet x' \caseofX {x'}\tag{Cast Computation}\\
        &\qquad \thenY {x_1'. \bindXtoYinZ {\sem{\dncast{\u FA_1}{\u FA_1'}}[\ret x_1']} {x_1} {\ret\inl \supcast{A_1}{A_1'}x_1}}\\
        &\qquad \elseZ {x_2'. \bindXtoYinZ {\sem{\dncast{\u FA_2}{\u FA_2'}}[\ret x_2']} {x_2} {\ret\inr \supcast{A_2}{A_2'}x_2}}\\
        &\ltdyn
        \bindXtoYinZ \bullet x' \caseofXthenYelseZ {x'} {x_1'. \ret\inl x_1'} {x_2'. \ret \inr x_2'}\tag{IH projection}\\
        &\equidyn \bindXtoYinZ \bullet x' \ret x'\tag{$+\eta$}\\
        &\equidyn \bullet \tag{$\u F\eta$}\\
      \end{align*}\
    \end{enumerate}
  \item $\times$:
    \begin{enumerate}
    \item First, Retraction:
      \begin{align*}
        &x : A_1\times A_2 \vdash\\
        &\sdncast{\u F(A_1\times A_2)}{\u F(A_1'\times A_2')}[\ret \supcast{A_1\times A_2}{A_1' \times A_2'}[x]]\\
        &=\sdncast{\u F(A_1\times A_2)}{\u F(A_1'\times A_2')}[\ret\pmpairWtoXYinZ x {x_1}{x_2} (\supcast{A_1}{A_1'}[x_1], \supcast{A_2}{A_2'}[x_2])]\\
        &\equidyn
        \pmpairWtoXYinZ x {x_1} {x_2} \sdncast{\u F(A_1\times A_2)}{\u F(A_1'\times A_2')}[\ret(\supcast{A_1}{A_1'}[x_1], \supcast{A_2}{A_2'}[x_2])]\tag{commuting conversion}\\
        &\equidyn
        \pmpairWtoXYinZ x {x_1} {x_2} \tag{cast reduction}\\
        &\quad\bindXtoYinZ {\sdncast{\u F A_1}{\u F A_1'}[\ret\supcast{A_1}{A_1'}[x_1]]} {y_1}\\
        &\quad\bindXtoYinZ {\sdncast{\u F A_2}{\u F A_2'}[\ret\supcast{A_2}{A_2'}[x_2]]} {y_2}\\
        &\quad\ret(y_1, y_2)\\
        &\equidyn
        \pmpairWtoXYinZ x {x_1} {x_2} \bindXtoYinZ {\ret x_1} {y_1} \bindXtoYinZ {\ret x_2} {y_2} \ret(y_1, y_2)\tag{IH retraction}\\
        &\equidyn
        \pmpairWtoXYinZ x {x_1} {x_2} \ret(x_1,x_2) \tag{$\u F\beta$}\\
        &\equidyn \ret x\tag{$\times\eta$}
      \end{align*}
    \item Next, Projection:
      \begin{align*}
        &\bullet : \u F A'\vdash\\
        &\bindXtoYinZ {\sdncast{\u F(A_1\times A_2)}{\u F(A_1'\times A_2')}[\bullet]} x \ret\supcast{A_1\times A_2}{A_1' \times A_2'}[x]\\
        &\equidyn\bindXtoYinZ \bullet {x'} \pmpairWtoXYinZ {{x'}} {x_1'}{x_2'} \tag{$\u F\eta, \times\eta$}\\
        &\quad \bindXtoYinZ {\sdncast{\u F(A_1\times A_2)}{\u F(A_1'\times A_2')}[\ret (x_1',x_2')]} x\\
        &\quad \ret\supcast{A_1\times A_2}{A_1' \times A_2'}[x]\\
        &\equidyn\bindXtoYinZ \bullet {x'} \pmpairWtoXYinZ {{x'}} {x_1'}{x_2'}\tag{cast reduction}\\
        &\quad \bindXtoYinZ {\sdncast{\u F A_1}{\u F A_1'}[\ret x_1']} {x_1}\\
        &\quad \bindXtoYinZ {\sdncast{\u F A_2}{\u F A_2'}[\ret x_2']} {x_2}\\
        &\quad \ret\supcast{A_1\times A_2}{A_1' \times A_2'}[(x_1,x_2)]\\
        &\equidyn\bindXtoYinZ \bullet {x'} \pmpairWtoXYinZ {{x'}} {x_1'}{x_2'}\tag{cast reduction}\\
        &\quad \bindXtoYinZ {\sdncast{\u F A_1}{\u F A_1'}[\ret x_1']} {x_1}\\
        &\quad \bindXtoYinZ {\sdncast{\u F A_2}{\u F A_2'}[\ret x_2']} {x_2}\\
        &\quad \ret(\supcast{A_1}{A_1'}[x_1], \supcast{A_2}{A_2'}[x_2])\\
        &\equidyn\bindXtoYinZ \bullet {x'} \pmpairWtoXYinZ {{x'}} {x_1'}{x_2'}\tag{$\u F \beta$, twice}\\
        &\quad \bindXtoYinZ {\sdncast{\u F A_1}{\u F A_1'}[\ret x_1']} {x_1}\\
        &\quad \bindXtoYinZ {\sdncast{\u F A_2}{\u F A_2'}[\ret x_2']} {x_2}\\
        &\quad \bindXtoYinZ {\ret \supcast{A_2}{A_2'}[x_2]}{y_2'}\\
        &\quad \bindXtoYinZ {\ret \supcast{A_1}{A_1'}[x_1]}{y_1'}\\
        &\quad \ret(y_1',y_2')\\
        &\ltdyn\bindXtoYinZ \bullet {x'} \pmpairWtoXYinZ {{x'}} {x_1'}{x_2'}\tag{IH Projection}\\
        &\quad \bindXtoYinZ {\sdncast{\u F A_1}{\u F A_1'}[\ret x_1']} {x_1}\\
        &\quad \bindXtoYinZ {\ret x_2'} {y_2'}\\
        &\quad \bindXtoYinZ {\ret \supcast{A_1}{A_1'}[x_1]} {y_1'}\\
        &\quad \ret(y_1',y_2')\\
        &\equidyn\bindXtoYinZ \bullet {x'} \pmpairWtoXYinZ {{x'}} {x_1'}{x_2'}\tag{$\u F\beta$}\\
        &\quad \bindXtoYinZ {\sdncast{\u F A_1}{\u F A_1'}[\ret x_1']} {x_1}\\
        &\quad \bindXtoYinZ {\ret \supcast{A_1}{A_1'}[x_1]} {y_1'}\\
        &\quad \ret(x_1',y_2')\\
        &\ltdyn\bindXtoYinZ \bullet {x'} \pmpairWtoXYinZ {{x'}} {x_1'}{x_2'}\tag{IH Projection}\\
        &\quad \bindXtoYinZ {\ret x_1'} {y_1'}\\
        &\quad \ret(x_1',y_2')\\
        &\equidyn\bindXtoYinZ \bullet {x'} \pmpairWtoXYinZ {{x'}} {x_1'}{x_2'}
        \ret(x_1',x_2')\tag{$\u F\beta$}\\
        &\equidyn\bindXtoYinZ \bullet {x'} \ret {x'}\tag{$\times\eta$}\\
        &\equidyn\bullet \tag{$\u F \eta$}
      \end{align*}
    \end{enumerate}
  \item $U$: By inductive hypothesis, $(x.\sem{\upcast{U\u B}{U\u
      B'}}, \dncast{\u B}{\u B'})$ is a computation ep pair
    \begin{enumerate}
    \item To show retraction we need to prove:
      \[
      x : U \u B \vdash \ret x \equidyn \bindXtoYinZ {(\ret \thunk {\sem{\upcast{U\u B}{U \u B'}}})} y {\ret \thunk \sem{\dncast{\u B}{\u B'}}[\force y]} : \u F U \u B'
      \]
      Which we calculate as follows:
      \begin{align*}
        &x : U\u B \vdash \\
        &\sdncast{\u FU\u B}{\u FU\u B'}[{(\ret {\sem{\upcast{U\u B}{U \u B'}}[x]})}]\\
        &\equidyn
        \ret\thunk(\sdncast{\u B}{\u B'}[\force {\sem{\upcast{U\u B}{U \u B'}}}[x]])\tag{Cast Reduction}\\
        &\equidyn \ret\thunk \force x \tag{IH Retraction}\\
        &\equidyn \ret x \tag{$U\eta$}
      \end{align*}
    \item To show projection we calculate:
      \begin{align*}
        &\bindXtoYinZ {\sdncast{\u FU\u B}{\u FU\u B'}[\bullet]} x \supcast{U\u B}{U\u B'}[x]\\
        &\equidyn
        \bindXtoYinZ \bullet {x'} \bindXtoYinZ {\sdncast{\u FU\u B}{\u FU\u B'}[\ret x']} x \supcast{U\u B}{U\u B'}[x]\tag{$\u F\eta$}\\
        &\equidyn
        \bindXtoYinZ \bullet {x'} \bindXtoYinZ {\ret\thunk(\sdncast{\u B}{\u B'}[\force x'])} x \supcast{U\u B}{U\u B'}[x]\tag{Cast Reduction}\\
        &\equidyn
        \bindXtoYinZ \bullet {x'} \supcast{U\u B}{U\u B'}[\thunk(\sdncast{\u B}{\u B'}[\force x'])] \tag{$\u F\beta$}\\
        &\ltdyn \bindXtoYinZ \bullet {x'} x'\tag{IH Projection}\\
        &\equidyn \bullet \tag{$\u F\eta$}
      \end{align*}
    \end{enumerate}
  \end{enumerate}

  \begin{enumerate}
  \item There's a few base cases about the dynamic computation type, then
  \item $\top$:
    \begin{enumerate}
    \item Retraction is by $\top\eta$:
      \begin{align*}
        z : U \top \vdash \force z \equidyn \{ \} : \top
      \end{align*}
    \item Projection is
      \begin{align*}
        \thunk \err
        &\ltdyn \thunk \force w \tag{$\err$ is $\bot$}\\
        &\equidyn w \tag{$U\eta$}
      \end{align*}
    \end{enumerate}
  \item $\with$:
    \begin{enumerate}
    \item Retraction
      \begin{align*}
        &z : U (\u B_1 \with \u B_2)\vdash \\
        &\sdncast{\u B_1 \with \u B_2}{\u B_1' \with \u B_2'}[\force \supcast{U(\u B_1 \with \u B_2)}{U(\u B_1' \with \u B_2')}[z]]\\
        &\equidyn
        \pairone{\pi\sdncast{\u B_1 \with \u B_2}{\u B_1' \with \u B_2'}[\force \supcast{U(\u B_1 \with \u B_2)}{U(\u B_1' \with \u B_2')}[z]]} \tag{$\with\eta$}\\
        &\qquad\pairtwo{\pi' \sdncast{\u B_1 \with \u B_2}{\u B_1' \with \u B_2'}[\force \supcast{U(\u B_1 \with \u B_2)}{U(\u B_1' \with \u B_2')}[z]]}\\
        &\equidyn
        \pairone{\sdncast{\u B_1}{\u B_1'}[\pi\force \supcast{U(\u B_1 \with \u B_2)}{U(\u B_1' \with \u B_2')}[z]]} \tag{Cast reduction}\\
        &\qquad\pairtwo{\sdncast{\u B_2}{\u B_2'}[\pi'\force \supcast{U(\u B_1 \with \u B_2)}{U(\u B_1' \with \u B_2')}[z]]}\\
        &\equidyn
        \pairone{\sdncast{\u B_1}{\u B_1'}[\force\supcast{U\u B_1}{U\u B_1'}[\thunk \pi \force z]]} \tag{Cast reduction}\\
        & \qquad\pairtwo{\sdncast{\u B_2}{\u B_2'}[\force\supcast{U\u B_2}{U\u B_2'}[\thunk \pi' \force z]]}\\
        &\equidyn
        \pair{\force \thunk \pi \force z}{\force \thunk \pi' \force z} \tag{IH retraction}\\
        &\equidyn \pair{\pi \force z}{\pi' \force z}\tag{$U\beta$}\\
        &\equidyn \force z \tag{$\with\eta$}
      \end{align*}
    \item Projection
      \begin{align*}
        &w : U {\u B_1' \with \u B_2'} \vdash\\
        & \supcast{U(\u B_1 \with \u B_2)}{U(\u B_1' \with \u B_2')}[\thunk \sdncast{\u B_1 \with \u B_2}{\u B_1' \with \u B_2'}[\force w]]\\
        &\equidyn
        \thunk\force\supcast{U(\u B_1 \with \u B_2)}{U(\u B_1' \with \u B_2')}[\thunk \sdncast{\u B_1 \with \u B_2}{\u B_1' \with \u B_2'}[\force w]]\tag{$U\eta$}\\
        &\equidyn
        \thunk\pairone{\pi\force\supcast{U(\u B_1 \with \u B_2)}{U(\u B_1' \with \u B_2')}[\thunk \sdncast{\u B_1 \with \u B_2}{\u B_1' \with \u B_2'}[\force w]]}\\
        &\qquad\qquad\pairtwo{\pi'\force\supcast{U(\u B_1 \with \u B_2)}{U(\u B_1' \with \u B_2')}[\thunk \sdncast{\u B_1 \with \u B_2}{\u B_1' \with \u B_2'}[\force w]]}\tag{$\with\eta$}\\
        &\equidyn
        \thunk\pairone{\force\supcast{U\u B_1}{U\u B_1'}[\thunk\pi\force\thunk \sdncast{\u B_1 \with \u B_2}{\u B_1' \with \u B_2'}[\force w]]}\\
        &\qquad\qquad\pairtwo{\force\supcast{U\u B_2}{U\u B_2'}[\thunk\pi'\force\thunk \sdncast{\u B_1 \with \u B_2}{\u B_1' \with \u B_2'}[\force w]]}\tag{cast reduction}\\
        &\equidyn
        \thunk\pairone{\force\supcast{U\u B_1}{U\u B_1'}[\thunk\pi \sdncast{\u B_1 \with \u B_2}{\u B_1' \with \u B_2'}[\force w]]} \tag{$U\beta$}\\
        &\qquad\qquad\pairtwo{\force\supcast{U\u B_2}{U\u B_2'}[\thunk\pi' \sdncast{\u B_1 \with \u B_2}{\u B_1' \with \u B_2'}[\force w]]}\\
        &\equidyn
        \thunk\pairone{\force\supcast{U\u B_1}{U\u B_1'}[\thunk\sdncast{\u B_1}{\u B_1'}[\pi\force w]]} \tag{cast reduction}\\
        &\qquad\qquad\pairtwo{\force\supcast{U\u B_2}{U\u B_2'}[\thunk\sdncast{\u B_2}{\u B_2'}[\pi'\force w]]}\\
        &\equidyn
        \thunk\pairone{\force\supcast{U\u B_1}{U\u B_1'}[\thunk\sdncast{\u B_1}{\u B_1'}[\force\thunk\pi\force w]]} \tag{$U\beta$}\\
        &\qquad\qquad\pairtwo{\force\supcast{U\u B_2}{U\u B_2'}[\thunk\sdncast{\u B_2}{\u B_2'}[\force\thunk\pi'\force w]]}\\
        &\ltdyn
        \thunk\pair{\force\thunk\pi\force w}{\force\thunk\pi'\force w} \tag{IH projection}\\
        &\equidyn
        \thunk\pair{\pi\force w}{\pi'\force w} \tag{$U\beta$}\\
        &\equidyn \thunk\force w \tag{$\with\eta$}\\
        &\equidyn w \tag{$U\eta$}\\
      \end{align*}
    \end{enumerate}
  \item $\to$:
    \begin{enumerate}
    \item Retraction
      \begin{align*}
        &z : U (A \to \u B)\vdash \\
        &\sdncast{A \to \u B}{A' \to \u B'}[\force \supcast{U(A \to \u B)}{U(A' \to \u B')}[z]]\\
        &\equidyn
        \lambda x:A. (\sdncast{A \to \u B}{A' \to \u B'}[\force \supcast{U(A \to \u B)}{U(A' \to \u B')}[z]])\,x\tag{$\to\eta$}\\
        &\equidyn
        \lambda x:A.
        \sdncast{\u B}{\u B'}[(\force \supcast{U(A \to \u B)}{U(A' \to \u B')}[z])(\supcast{A}{A'}[x])] \tag{cast reduction}\\
        &\equidyn
        \lambda x:A.\tag{cast reduction}\\
        &\quad
        \sdncast{\u B}{\u B'}[\bindXtoYinZ{\sdncast{\u F A}{\u F A'}[\ret \upcast{A}{A'}[x]]} {y} \force \upcast{U \u B}{U\u B'}[\thunk((\force z)\, y)]]\\
        &\equidyn
        \lambda x:A. \sdncast{\u B}{\u B'}[\bindXtoYinZ{\ret x} {y} \force \upcast{U \u B}{U\u B'}[\thunk((\force z)\, y)]]\tag{IH Retraction}\\
        &\equidyn
        \lambda x:A. \sdncast{\u B}{\u B'}[\force \upcast{U \u B}{U\u B'}[\thunk((\force z)\, x)]]\tag{$\u F\beta$}\\
        &\equidyn \lambda x:A. \force \thunk((\force z)\, x) \tag{IH retraction}\\
        &\equidyn \lambda x:A. (\force z)\, x \tag{$U\beta$}\\
        &\equidyn \force z \tag{$\to\eta$}\\
      \end{align*}
    \item Projection
      \begin{align*}
        &w : U (A' \to \u B') \vdash\\
        & \supcast{U(A \to \u B)}{U(A' \to \u B')}[\thunk \sdncast{A \to \u B}{A' \to \u B'}[\force w]]\\
        &\equidyn
        \thunk\force\supcast{U(A \to \u B)}{U(A' \to \u B')}[\thunk \sdncast{A \to \u B}{A' \to \u B'}[\force w]]\tag{$U\eta$}\\
        &\equidyn
        \thunk\lambda x':A'.\\
        &\quad(\force\supcast{U(A \to \u B)}{U(A' \to \u B')}[\thunk \sdncast{A \to \u B}{A' \to \u B'}[\force w]])\,x'\tag{$\to\eta$}\\
        &\equidyn \thunk\lambda x':A'.\\
        &\qquad \bindXtoYinZ {\sdncast{\u F A}{\u F A'}[\ret x']} x\tag{cast reduction}\\
        &\qquad
        \force\supcast{U\u B}{U \u B'}[\thunk((\force \thunk \sdncast{A \to \u B}{A' \to \u B'}[\force w])\, x)]\\
        &\equidyn \thunk\lambda x':A'.\\
        &\qquad \bindXtoYinZ {\sdncast{\u F A}{\u F A'}[\ret x']} x\tag{$U\beta$}\\
        &\qquad
        \force\supcast{U\u B}{U \u B'}[\thunk((\sdncast{A \to \u B}{A' \to \u B'}[\force w])\, x)]\\
        &\equidyn
        \thunk\lambda x':A'.\\
        &\qquad \bindXtoYinZ {\sdncast{\u F A}{\u F A'}[\ret x']} x\tag{cast reduction}\\
        &\qquad
        \force\supcast{U\u B}{U \u B'}[\thunk\sdncast{\u B}{\u B'}[(\force w)\,(\upcast{A}{A'}[x])]]\\
        &\equidyn
        \thunk\lambda x':A'.\\
        &\qquad\bindXtoYinZ {\sdncast{\u F A}{\u F A'}[\ret x']} x\tag{$\u F\beta$}\\
        &\qquad\bindXtoYinZ {\ret{\upcast{A}{A'}[x]}} {x'}\\
        &\qquad\force\supcast{U\u B}{U \u B'}[\thunk\sdncast{\u B}{\u B'}[(\force w)\,x']]\\
        &\ltdyn
        \thunk\lambda x':A'.\tag{IH projection}\\
        &\qquad\bindXtoYinZ {\ret x'} {x'}\\
        &\qquad\force\supcast{U\u B}{U \u B'}[\thunk\sdncast{\u B}{\u B'}[(\force w)\,x']]\\
        &\equidyn
        \thunk\lambda x':A'.
        \force\supcast{U\u B}{U \u B'}[\thunk\sdncast{\u B}{\u B'}[(\force w)\,x']]\tag{$\u F\beta$}\\
        &\equidyn
        \thunk\lambda x':A'.
        \force\supcast{U\u B}{U \u B'}[\thunk\sdncast{\u B}{\u B'}[\force\thunk((\force w)\,x')]]\tag{$\u F\beta$}\\
        &\ltdyn
        \thunk\lambda x':A'. \force\thunk((\force w)\,x')\tag{IH projection}\\
        &\equidyn \thunk\lambda x':A'. ((\force w)\,x')\tag{$U\beta$}\\
        &\equidyn \thunk\force w\tag{$\to\eta$}\\
        &\equidyn w \tag{$U\eta$}\\
      \end{align*}
    \end{enumerate}
  \item $\u F$:
    \begin{enumerate}
    \item To show retraction we need to show
      \[
      z : U \u F A \vdash
      \force z \equidyn
      \sem{\dncast{\u F A}{\u F A'}}[\force \thunk (\bindXtoYinZ {\force z} x \ret \sem{\upcast{A}{A'}})]
      \]
      We calculate:
      \begin{align*}
        &\sem{\dncast{\u F A}{\u F A'}}[\force \thunk (\bindXtoYinZ {\force z} x \ret \sem{\upcast{A}{A'}})]\\
        &\equidyn
        \sem{\dncast{\u F A}{\u F A'}}[(\bindXtoYinZ {\force z} x \ret \sem{\upcast{A}{A'}})]\tag{$U\beta$}\\
        &\equidyn
        \bindXtoYinZ {\force z} x \sem{\dncast{\u F A}{\u F A'}}[\ret \sem{\upcast{A}{A'}}] \tag{comm conv}\\
        &\equidyn
        \bindXtoYinZ {\force z} x \ret x \tag{IH value retraction}\\
        &\equidyn \force z \tag{$\u F\eta$}        
      \end{align*}
    \item To show projection we need to show
      \[
      w : U \u F A' \vdash
      \thunk {(\bindXtoYinZ {\force {\thunk \sem{\dncast{\u F A}{\u F A'}}[\force w]}} x \ret \sem{\upcast{A}{A'}})}
      \ltdyn w : U \u B'
      \]
      We calculate as follows
      \begin{align*}
        &\thunk {(\bindXtoYinZ {\force {\thunk \sem{\dncast{\u F A}{\u F A'}}[\force w]}} x \ret \sem{\upcast{A}{A'}})}\\
        &\equidyn\thunk {(\bindXtoYinZ {{\sem{\dncast{\u F A}{\u F A'}}[\force w]}} x \ret \sem{\upcast{A}{A'}})} \tag{$U\beta$}\\
        &\ltdyn \thunk {\force w} \tag{IH value projection}\\
        & \equidyn w \tag{$U\eta$}
      \end{align*}
    \end{enumerate}
  \end{enumerate}
\end{longproof}

While the above was tedious, this pays off greatly in later proofs: this
is the \emph{only} proof in the entire development that needs to inspect
the definition of a ``shifted'' cast (a downcast between $\u F$ types or
an upcast between $U$ types).
All later lemmas
have cases for these shifted casts, but \emph{only} use the property
that they are part of an ep pair.
This is one of the biggest advantages of using an explicit syntax for
complex values and complex stacks: the shifted casts are the only ones
that non-trivially use effectful terms, so after this lemma is
established we only have to manipulate values and stacks, which
compose much more nicely than effectful terms.
Conceptually, the main reason we can avoid reasoning about the
definitions of the shifted casts directly is that any two shifted casts
that form an ep pair with the same value embedding/stack projection are
equal:
\begin{lemma}[Value Embedding determines Projection, Computation Projection determines Embedding]
  \label{lem:adjoints-unique-cbpvstar}
  For any value $x : A \vdash V_e : A'$ and stacks $\bullet : \u F A'
  \vdash S_1 : \u F A$ and $\bullet : \u F A' \vdash S_2 : \u F A$, if
  $(V_e, S_1)$ and $(V_e, S_2)$ are both value ep pairs, then
  \[ S_1 \equidyn S_2 \]

  Similarly for any values $x : U\u B \vdash V_1 : U \u B'$ and $x :
  U\u B \vdash V_2 : U \u B'$ and stack $\bullet : \u B' \vdash S_p :
  \u B$, if $(V_1, S_p)$ and $(V_2, S_p)$ are both computation ep pairs then
  \[ V_1 \equidyn V_2 \]
\end{lemma}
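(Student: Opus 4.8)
The plan is a standard ``uniqueness of adjoints'' argument, carried out twice (once for each half of the statement): in each case one component of the embedding--projection pair is held fixed, and the other is pinned down uniquely by combining the retraction law of one ep pair, the projection law of the other, and the $\beta\eta$ rules of \cbpvstar. I will phrase the projection law in the ``Alternative Projection'' form stated just after Definition~\ref{def:cbpvstar-eppairs}, since the raw projection inequality is not in a directly usable shape.

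For the value case, suppose $x:A\vdash V_e : A'$ with $(V_e,S_1)$ and $(V_e,S_2)$ both value ep pairs from $A$ to $A'$, where $\bullet:\u F A'\vdash S_i:\u F A$. I would first reduce the goal $S_1\equidyn S_2$ to proving $x':A'\vdash S_1[\ret{x'}]\ltdyn S_2[\ret{x'}]$ together with its symmetric version: by the $\eta$ rule for $\u F$ each $S_i$ is equidynamic with $\bindXtoYinZ{\bullet}{x'}{S_i[\ret{x'}]}$, so the inequalities on $\ret{x'}$ transport through the bind by $F$ECong. To prove $S_1[\ret{x'}]\ltdyn S_2[\ret{x'}]$ I would instantiate the Alternative Projection form of the projection law for the ep pair $(V_e,S_1)$ with the computation $y:A'\vdash S_2[\ret y]:\u F A$, which yields $x':A'\vdash \bindXtoYinZ{S_1[\ret{x'}]}{x}{S_2[\ret{V_e}]}\ltdyn S_2[\ret{x'}]$; then rewriting the body with the retraction law for $(V_e,S_2)$, namely $x:A\vdash S_2[\ret{V_e}]\equidyn \ret x$, and applying $\u F\eta$ in the form $\bindXtoYinZ{M}{x}{\ret x}\equidyn M$, collapses the left-hand side to $S_1[\ret{x'}]$. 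Swapping the roles of $S_1$ and $S_2$ gives the reverse direction, hence $S_1\equidyn S_2$.

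The computation case is exactly dual. Suppose $z:U\u B\vdash V_1,V_2 : U\u B'$ with $(V_1,S_p)$ and $(V_2,S_p)$ both computation ep pairs, where $\bullet:\u B'\vdash S_p:\u B$. To show $z:U\u B\vdash V_1\ltdyn V_2$, I would apply the Alternative Projection form for $(V_1,S_p)$ to the computation $z:U\u B\vdash \force{V_2}:\u B'$, obtaining $V_1[\thunk{S_p[\force{V_2}]}/z]\ltdyn \thunk{\force{V_2}}\equidyn V_2$ (the last step by $U\eta$). The retraction law for $(V_2,S_p)$ gives $z:U\u B\vdash S_p[\force{V_2}]\equidyn\force z$, so $\thunk{S_p[\force{V_2}]}\equidyn\thunk{\force z}\equidyn z$ by $U\eta$, and thus the substituted term is just $V_1$. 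Symmetry again gives $V_1\equidyn V_2$. I do not expect a real obstacle here; the only care required is tracking free variables (the embedding carries a free $x:A$, and in the value case the hole of $S_i[\ret{x'}]$ must be threaded through the outer bind), plus being careful to use the Alternative Projection reformulation rather than the primitive projection inequality. The remaining steps are routine $\beta\eta$ rewriting.
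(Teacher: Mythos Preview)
Your proposal is correct and takes essentially the same approach as the paper: both argue by symmetry and combine the retraction law of one ep pair with the projection law of the other, together with $\u F$/$U$ $\eta$-reasoning. The paper presents this as a bare derivation tree working directly at the stack/value level, while you first reduce to the pointwise form $S_1[\ret x']\ltdyn S_2[\ret x']$ and package the projection step via the Alternative Projection lemma; these are the same argument with slightly different bookkeeping.
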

\begin{longproof}
  By symmetry it is sufficient to show $S_1 \ltdyn S_2$.

  \begin{mathpar}
    \inferrule
    {\inferrule
    {\inferrule
    {\inferrule
    {S_1 \ltdyn S_1}
    {\bindXtoYinZ {S_1} x \ret x \ltdyn \bindXtoYinZ \bullet x S_1[\ret x]}}
    {\bindXtoYinZ {S_1} x \ret V_e \ltdyn \bindXtoYinZ \bullet x \ret x}}
    {\bindXtoYinZ {S_1} x \ret x \ltdyn \bindXtoYinZ \bullet x S_2[\ret x]}}
    {\bullet : \u F A' \vdash S_1 \ltdyn S_2 : \u F A}
  \end{mathpar}

  similarly to show $V_1 \ltdyn V_2$:
  \begin{mathpar}
    \inferrule%
    {\inferrule
    {\inferrule
    {x : U \u B \vdash \thunk\force V_2 \ltdyn \thunk \force V_2 : U \u B'}
    {x : U \u B \vdash \thunk \force x \ltdyn \thunk S_p[\force V_2]}}
    {x : U \u B \vdash \thunk\force V_1 \ltdyn \thunk \force V_2 : U \u B'}}
    {x : U \u B \vdash V_1 \ltdyn V_2 : U \u B'}
  \end{mathpar}
\end{longproof}

The next two lemmas on the way to axiomatic graduality show that
Figure~\ref{fig:cast-to-contract} translates $\upcast{A}{A}$ to the
identity and $\upcast{A'}{A''}{\upcast{A}{A'}}$ to the same contract as
$\upcast{A}{A''}$, and similarly for downcasts.
Intuitively, for all connectives except $\u F, U$, this is because of
functoriality of the type constructors on values and stacks.
For the $\u F, U$ cases, we will use the corresponding fact about the
dual cast, i.e., to prove the $\u F A$ to $\u F A$ downcast is the
identity stack, we know by inductive hypothesis that the $A$ to $A$
upcast is the identity, and that the identity stack is a projection
for the identity.
Therefore Lemma~\ref{lem:adjoints-unique-cbpvstar} implies that the $\u
FA$ downcast must be equivalent to the identity.
We now discuss these two lemmas and their proofs in detail.  

First, we show that the casts from a type to itself are equivalent to
the identity.
Below, we will use this lemma to prove the reflexivity case of the
axiomatic graduality theorem, and to prove a conservativity result,
which says that a GTT homogeneous term dynamism is the same as a
\cbpvstar\/ inequality between their translations.
\begin{lemma}[Identity Expansion]
  \label{lem:ident-expansion}
  For any $A$ and $\u B$,
  \begin{mathpar}
    x:A \vdash \sem{\upcast{A}{A}} \equidyn x : A\and
    \bullet : \u B \vdash \sem{\dncast{\u B}{\u B}} \equidyn \bullet : \u B
  \end{mathpar}
\end{lemma}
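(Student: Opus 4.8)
The plan is to prove the two displayed equations simultaneously by induction on the structure of the (value or computation) type. This is the natural measure because $\sem{\upcast{A}{A}}$ and $\sem{\dncast{\u B}{\u B}}$ are computed by the clauses of Figure~\ref{fig:cast-to-contract} following the unique normalized derivation of $A \ltdyn A$, resp.\ $\u B \ltdyn \u B$, which just mirrors the type. The single idea that keeps the proof short is that we should \emph{never} unfold the definitions of the ``shifted'' casts $\sem{\dncast{\u F A}{\u F A}}$ and $\sem{\upcast{U \u B}{U \u B}}$; instead we route those through the facts that they occur in ep pairs (Lemma~\ref{lem:casts-are-ep-pairs}), that the identity is an ep pair (Lemma~\ref{ep-pair-id}), and that a value ep pair's embedding determines its projection while a computation ep pair's projection determines its embedding (Lemma~\ref{lem:adjoints-unique-cbpvstar}).

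The non-shift cases are routine. For $\dynv$ and $1$ the clause for the identity upcast is literally $x$, and $\sem{\dncast{\u F\dynv}{\u F\dynv}}$, $\sem{\dncast{\dync}{\dync}}$ are literally $\bullet$; for $0$ the translation is the canonical map $0 \to 0$, which is $\equidyn x$ by $\eta$ for $0$, and dually $\sem{\dncast{\top}{\top}} = \emptypair \equidyn \bullet$ by $\eta$ for $\top$. For the remaining value formers $+,\times$ (upcast side) and computation formers $\with,\to$ (downcast side), the translation clause is a congruence applied to the translations of the component casts; these are identities by the inductive hypotheses, so the claim collapses by the $\eta$ rule of the former --- e.g.\ $\sem{\upcast{A_1\times A_2}{A_1\times A_2}} \equidyn \pmpairWtoXYinZ{x}{x_1}{x_2}{(x_1,x_2)} \equidyn x$ by $\times\eta$, and $\sem{\dncast{A\to\u B}{A\to\u B}} \equidyn \lambda y.\,(\bullet\,y) \equidyn \bullet$ by $\to\eta$ plus the hypotheses for $A$ and $\u B$. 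These calculations are structurally the same as the ``assumption (3)'' steps in the proof of Theorem~\ref{thm:functorial-casts}.

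The interesting case --- and where the induction genuinely interleaves the value and computation statements --- is the shifts. To show $\sem{\dncast{\u F A}{\u F A}} \equidyn \bullet$: by Lemma~\ref{lem:casts-are-ep-pairs}, $(x.\sem{\upcast{A}{A}},\,\sem{\dncast{\u F A}{\u F A}})$ is a value ep pair, and its embedding is $x.x$ by the inductive hypothesis for $A$, so $(x.x,\,\sem{\dncast{\u F A}{\u F A}})$ is a value ep pair; since $(x.x,\,\bullet)$ is one too (Lemma~\ref{ep-pair-id}), Lemma~\ref{lem:adjoints-unique-cbpvstar} forces $\sem{\dncast{\u F A}{\u F A}} \equidyn \bullet$. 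Dually, to show $\sem{\upcast{U\u B}{U\u B}} \equidyn x$: $(z.\sem{\upcast{U\u B}{U\u B}},\,\sem{\dncast{\u B}{\u B}})$ is a computation ep pair whose projection is $\bullet$ by the hypothesis for $\u B$, so by Lemmas~\ref{ep-pair-id} and~\ref{lem:adjoints-unique-cbpvstar} its embedding is $z.z$. Since $A$ is smaller than $U\u B$ and $\u B$ is smaller than $\u F A$ the mutual induction is well founded; and as $\dynv,0,1,+,\times,U$ and $\dync,\top,\u F,\with,\to$ exhaust GTT's value and computation type formers, the case analysis is complete --- note in particular that this argument subsumes the ``shifted'' clauses for compound types (for $\sem{\dncast{\u F(A_1+A_2)}{\u F(A_1+A_2)}}$, $\sem{\upcast{U(A\to\u B)}{U(A\to\u B)}}$, and so on) without ever inspecting their definitions. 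The one thing to be careful about is exactly this interleaving and the decision to route the shift cases through Lemma~\ref{lem:adjoints-unique-cbpvstar} rather than attempting a direct $\beta\eta$ computation, which would drag in the effectful shifted casts that the ep-pair machinery is designed to insulate us from.
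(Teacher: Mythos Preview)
Your proposal is correct and follows essentially the same approach as the paper: mutual structural induction on types, with the base and congruence cases dispatched by the literal clauses and $\eta$ laws, and the shift cases $U$ and $\u F$ handled by combining Lemma~\ref{lem:casts-are-ep-pairs}, Lemma~\ref{ep-pair-id}, and Lemma~\ref{lem:adjoints-unique-cbpvstar} rather than unfolding the shifted casts. One small slip: in your well-foundedness remark you wrote ``$A$ is smaller than $U\u B$ and $\u B$ is smaller than $\u F A$'', but you mean $\u B$ is a subterm of $U\u B$ and $A$ is a subterm of $\u F A$.
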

\begin{proof}
  We proceed by induction on $A, \u B$, following the proof that
  reflexivity is admissible given in Lemma \ref{lem:norm-type-dyn}.
  \begin{enumerate}
  \item If $A \in \{1, \dynv \}$, then $\supcast{A}{A}[x] = x$.
  \item If $A = 0$, then $\absurd x \equidyn x$ by $0\eta$.
  \item If $A = U \u B$, then by inductive hypothesis $\sdncast{\u
    B}{\u B} \equidyn \bullet$. By Lemma \ref{ep-pair-id},
    $(x. x, \bullet)$ is a computation ep pair from $\u B$ to
    itself. But by Lemma \ref{lem:casts-are-ep-pairs}, $(\supcast{U\u
      B}{U\u B}[x], \bullet)$ is also a computation ep pair so the
    result follows by uniqueness of embeddings from computation
    projections Lemma \ref{lem:adjoints-unique-cbpvstar}.
  \item If $A = A_1\times A_2$ or $A = A_1+A_2$, the result follows by
    the $\eta$ principle and inductive hypothesis.
  \item If $\u B = \dync$, $\sdncast{\dync}{\dync} = \bullet$.
  \item For $\u B = \top$, the result follows by $\top\eta$.
  \item For $\u B = \u B_1 \with \u B_2$ or $\u B = A \to \u B'$, the
    result follows by inductive hypothesis and $\eta$.
  \item For $\u B = \u FA$, by inductive hypothesis, the downcast is a
    projection for the value embedding $x.x$, so the result follows by
    identity ep pair and uniqueness of projections from value
    embeddings.
  \end{enumerate}
\end{proof}

Second, we show that a composition of upcasts is translated to the same
thing as a direct upcast, and similarly for downcasts.  Below, we will
use this lemma to translate \emph{transitivity} of term dynamism in GTT.
\begin{lemma}[Cast Decomposition]
  For any dynamic type interpretation $\rho$,
  \begin{small}
  \begin{mathpar}
    \inferrule
    {A \ltdyn A' \ltdyn A''}
    {x : A \vdash \srho{\upcast A {A''}} \equidyn \srho{\upcast {A'} {A''}}[\srho{\upcast {A} {A'}}] : A''}

    \inferrule
    {\u B \ltdyn \u B' \ltdyn \u B''}
    {\bullet : \u B'' \vdash \srho{\dncast{\u B}{\u B''}} \equidyn
      \srho{\dncast{\u B}{\u B'}}[\srho{\dncast{\u B'}{\u B''}}]}
  \end{mathpar}
  \end{small}
\end{lemma}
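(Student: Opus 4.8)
The plan is to prove the two statements simultaneously, by induction on the pair of composable normalized type-dynamism derivations (Figure~\ref{fig:normalized}), following exactly the case structure of the admissibility-of-transitivity argument inside Lemma~\ref{lem:norm-type-dyn}. So, given $A \ltdyn A'$ and $A' \ltdyn A''$ (resp.\ $\u B \ltdyn \u B'$ and $\u B' \ltdyn \u B''$), I case on the last rule of the left derivation. The degenerate cases are immediate: if $A' = \dynv$ or $A' = A$ the composite contract literally reduces to one of its factors, using $\srho{\upcast{\dynv}{\dynv}} = x$ and the Identity Expansion lemma (Lemma~\ref{lem:ident-expansion}); the $0 \ltdyn A$ and identity cases are the same.

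For the substantive cases I split on whether the outermost connective is a shift. For the \emph{non-shift} connectives ($+$, $\times$, $\with$, $\to$, and the base-ground cases), the translated upcast is literally the functorial action of that type constructor on the component casts, so the equation $\srho{\upcast{A}{A''}} \equidyn \srho{\upcast{A'}{A''}}[\srho{\upcast{A}{A'}}]$ follows by pushing the relevant introduction/elimination form through (via the $\beta\eta$ laws and the commuting-conversion ``cast reductions'') and invoking the induction hypothesis componentwise; dually for computation downcasts. The $A \ltdyn \dynv$ case (with $A \notin \{0,\dynv\}$) additionally uses the observation, already present in Lemma~\ref{lem:norm-type-dyn}, that $A \ltdyn A'$ with neither side dynamic forces $\floor{A} = \floor{A'}$, so the direct cast $\srho{\upcast{A}{\dynv}} = \srho{\upcast{\floor A}{\dynv}}[\srho{\upcast{A}{\floor A}}]$ and the composite both factor through $\srho{\upcast{\floor A}{\dynv}}$, and the claim reduces to the induction hypothesis applied to $A \ltdyn A' \ltdyn \floor A$.

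For the \emph{shift} connectives ($\u F A$-downcasts and $U \u B$-upcasts) I deliberately do not unfold the effectful definitions. For $\u F A \ltdyn \u F A' \ltdyn \u F A''$ (coming from $A \ltdyn A' \ltdyn A''$): by Lemma~\ref{lem:casts-are-ep-pairs}, $(\srho{\upcast{A}{A''}},\, \srho{\dncast{\u F A}{\u F A''}})$ is a value ep pair, and composing the two ep pairs of Lemma~\ref{lem:casts-are-ep-pairs} via Lemma~\ref{lem:ep-pairs-compose} shows $(\srho{\upcast{A'}{A''}}[\srho{\upcast{A}{A'}}],\, \srho{\dncast{\u F A}{\u F A'}}[\srho{\dncast{\u F A'}{\u F A''}}])$ is also a value ep pair. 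The induction hypothesis for statement~(1) makes the two embeddings equal, so Lemma~\ref{lem:adjoints-unique-cbpvstar} (a value embedding determines its projection) forces the two projections, i.e.\ the two $\u F$-downcasts, to be equidynamic — which is the required instance of statement~(2). The $U\u B$-upcast case of statement~(1) is the mirror image: it uses the just-obtained instance of statement~(2), the fact that both sides form computation ep pairs with the corresponding $\u B$-downcasts (Lemmas~\ref{lem:casts-are-ep-pairs} and~\ref{lem:ep-pairs-compose}), and the ``a computation projection determines its embedding'' half of Lemma~\ref{lem:adjoints-unique-cbpvstar}.

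The main obstacle is not any single computation but arranging this nested/mutual induction so it is visibly well-founded: statement~(1) at $U\u B$ appeals to statement~(2) at $\u B$, statement~(2) at $\u F A$ appeals to statement~(1) at $A$, and the $\dynv$/$\dync$ cases recurse through the $\floor{\cdot}$ factorization. The correct measure is essentially the one already validated in the admissibility-of-transitivity proof of Lemma~\ref{lem:norm-type-dyn}, so the cost here is careful bookkeeping rather than a new idea; all the mathematical content is packaged into the three ep-pair lemmas (\ref{lem:casts-are-ep-pairs}, \ref{lem:ep-pairs-compose}, \ref{lem:adjoints-unique-cbpvstar}) together with routine $\beta\eta$ and commuting-conversion manipulations.
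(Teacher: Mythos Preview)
Your proposal is correct and takes essentially the same approach as the paper: mutual induction on the (normalized) dynamism derivations, handling the non-shift connectives by $\beta\eta$/cast-reduction functoriality plus the inductive hypothesis, and handling the shift cases $U$ and $\u F$ by combining Lemmas~\ref{lem:casts-are-ep-pairs}, \ref{lem:ep-pairs-compose}, and \ref{lem:adjoints-unique-cbpvstar} with the cross-statement inductive hypothesis rather than unfolding the effectful definitions. The only cosmetic difference is that the paper organizes the outer case split by casing on the leftmost type $A$ (then on $A'$, then on $A''$) rather than on the last rule of the left derivation, but the resulting subcases and their arguments coincide.
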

\begin{longproof}
  By mutual induction on $A, \u B$.
  \begin{enumerate}
  \item $A \ltdyn A' \ltdyn A''$
    \begin{enumerate}
    \item If $A = 0$, we need to show $x : 0 \vdash
      \supcast{0}{A''}[x] \equidyn
      \supcast{A'}{A''}[\supcast{0}{A'}[x]] : A''$ which follows by
      $0\eta$.
    \item If $A = \dynv$, then $A' = A'' = \dynv$, and both casts are
      the identity.
    \item If $A \not\in \{\dynv, 0 \}$ and $A' = \dynv$, then $A'' =
      \dynv$ and $\supcast{\dynv}{\dynv}[\supcast{A}{\dynv}] =
      \supcast{A}{\dynv}$ by definition.
    \item If $A, A' \not\in \{\dynv, 0 \}$ and $A'' = \dynv$, then
      $\floor A = \floor {A'}$, which we call $G$ and
      \[ \supcast{A}{\dynv} = \supcast{G}{\dynv}[\supcast{A}{G}] \]
      and
      \[ \supcast{A'}{\dynv}[\supcast{A}{A'}] = \supcast{G}{\dynv}[\supcast{A'}{G}[\supcast{A}{A'}]] \]
      so this reduces to the case for $A \ltdyn A' \ltdyn G$, below.
    \item If $A,A',A'' \not\in \{\dynv, 0 \}$, then they all have the same
      top-level constructor:
      \begin{enumerate}
      \item $+$: We need to show for $A_1 \ltdyn A_1' \ltdyn A_1''$
        and $A_2 \ltdyn A_2' \ltdyn A_2''$:
        \[
        x : \sem{A_1} + \sem{A_2} \vdash
        \supcast{A_1'+A_2'}{A_1''+A_2''}[\supcast{A_1+A_2}{A_1'+A_2'}[x]]\equidyn
        \supcast{A_1+A_2}{A_1''+A_2''}[x]
        : \sem{A_1''}+\sem{A_2''}.
        \]
        We proceed as follows:
        \begin{align*}
          &\supcast{A_1'+A_2'}{A_1''+A_2''}[\supcast{A_1+A_2}{A_1'+A_2'}[x]]\\
          &\equidyn \caseofX {x}\tag{$+\eta$}\\
          &\qquad\thenY {x_1. \supcast{A_1'+A_2'}{A_1''+A_2''}[\supcast{A_1+A_2}{A_1'+A_2'}[\inl x_1]]}\\
          &\qquad\elseZ {x_2. \supcast{A_1'+A_2'}{A_1''+A_2''}[\supcast{A_1+A_2}{A_1'+A_2'}[\inr x_2]]}\\
          &\equidyn \caseofX {x}\tag{cast reduction}\\
          &\qquad\thenY {x_1. \supcast{A_1'+A_2'}{A_1''+A_2''}[\inl\supcast{A_1}{A_1'}[x_1]]}\\
          &\qquad\elseZ {x_2. \supcast{A_1'+A_2'}{A_1''+A_2''}[\inr\supcast{A_2}{A_2'}[x_2]]}\\
          &\equidyn \caseofX {x}\tag{cast reduction}\\
          &\qquad\thenY {x_1. \inl\supcast{A_1'}{A_1''}[\supcast{A_1}{A_1'}[x_1]]}\\
          &\qquad\elseZ {x_2. \inr\supcast{A_2'}{A_2''}[\supcast{A_2}{A_2'}[x_2]]}\\
          &\equidyn \caseofX {x}\tag{IH}\\
          &\qquad\thenY {x_1. \inl\supcast{A_1}{A_1''}[x_1]}\\
          &\qquad\elseZ {x_2. \inr\supcast{A_2}{A_2''}[x_2]}\\
          &= \supcast{A_1+A_2}{A_1''+A_2''}[x] \tag{definition}
        \end{align*}
      \item $1$: By definition both sides are the identity.
      \item $\times$: We need to show for $A_1 \ltdyn A_1' \ltdyn A_1''$
        and $A_2 \ltdyn A_2' \ltdyn A_2''$:
        \[
        x : \sem{A_1} \times \sem{A_2} \vdash
        \supcast{A_1'\times A_2'}{A_1''\times A_2''}[\supcast{A_1\times A_2}{A_1'\times A_2'}[x]]\equidyn
        \supcast{A_1\times A_2}{A_1''\times A_2''}[x]
        : \sem{A_1''}\times \sem{A_2''}.
        \]
        We proceed as follows:
        \begin{align*}
          &\supcast{A_1'\times A_2'}{A_1''\times A_2''}[\supcast{A_1\times A_2}{A_1'\times A_2'}[x]]\\
          &\equidyn\pmpairWtoXYinZ x y z \supcast{A_1'\times A_2'}{A_1''\times A_2''}[\supcast{A_1\times A_2}{A_1'\times A_2'}[(y,z)]]\tag{$\times\eta$}\\
          &\equidyn\pmpairWtoXYinZ x y z \supcast{A_1'\times A_2'}{A_1''\times A_2''}[(\supcast{A_1}{A_1'}[y], \supcast{A_2}{A_2'}[z])]\tag{cast reduction}\\
          &\equidyn\pmpairWtoXYinZ x y z (\supcast{A_1'}{A_1''}[\supcast{A_1}{A_1'}[y]], \supcast{A_2'}{A_2''}[\supcast{A_2}{A_2'}[z]])\tag{cast reduction}\\
          &\equidyn\pmpairWtoXYinZ x y z (\supcast{A_1}{A_1''}[y], \supcast{A_2}{A_2''}[z])\tag{IH}\\
          &=\supcast{A_1\times A_2}{A_1'' \times A_2''}[x]\tag{definition}
        \end{align*}
      \item $U \u B \ltdyn U \u B' \ltdyn U \u B''$.
        We need to show
        \[
        x : U \u B \vdash \supcast{U\u B'}{U\u B''}[\supcast{U\u B}{U\u B'}[x]] \equidyn
        \supcast{U\u B}{U\u B''}[x] : U\u B''
        \]
        By composition of ep pairs, we know $(x.\supcast{U\u B'}{U\u
          B''}[\supcast{U\u B}{U\u B'}[x]], \sdncast{\u B}{\u
          B'}[\sdncast{\u B'}{\u B''}])$ is a computation ep pair.
        Furthermore, by inductive hypothesis, we know
        \[  \sdncast{\u B}{\u B'}[\sdncast{\u B'}{\u B''}] \equidyn \sdncast{\u B}{\u B''}\]
        so then both sides form ep pairs paired with $\sdncast{\u
          B}{\u B''}$, so it follows because computation projections
        determine embeddings \ref{lem:adjoints-unique-cbpvstar}.
      \end{enumerate}
    \end{enumerate}
  \item $\u B \ltdyn \u B' \ltdyn \u B''$
    \begin{enumerate}
    \item If $\u B = \top$, then the result is immediate by $\eta\top$.
    \item If $\u B = \dync$, then $\u B' = \u B'' = \dync$ then both
      sides are just $\bullet$.
    \item If $\u B \not\in \{\dync, \top\}$, and $\u B' = \dync$, then
      $\u B'' = \dync$
      \[ \sdncast{\u B}{\dync}[\sdncast{\dync}{\dync}] = \sdncast{\u B}{\dync} \]

    \item If $\u B,\u B' \not\in \{\dync,\top\}$, and $\u B'' = \dync$ , and $\floor {\u B} = \floor {\u B'}$, which we
      call $\u G$. Then we need to show
      \[ \sdncast{\u B}{\u B'}[\sdncast{\u B'}{\u G}[\sdncast{\u G}{\dync}]]
      \equidyn
      \sdncast{\u B}{\u G}[\sdncast{\u G}[\dync]]
      \]
      so the result follows from the case $\u B \ltdyn \u B' \ltdyn \u
      G$, which is handled below.
    \item If $\u B,\u B',\u B'' \not\in \{\dync, \top\}$, then they all have the
      same top-level constructor:
      \begin{enumerate}
      \item $\with$ We are given $\u B_1 \ltdyn \u B_1' \ltdyn \u
        B_1''$ and $\u B_2 \ltdyn \u B_2' \ltdyn \u B_2''$ and we need to show
        \[
        \bullet : \u B_1'' \with \u B_2''
        \vdash
        \sdncast{\u B_1 \with \u B_2}{\u B_1' \with \u B_2'}[\sdncast{\u B_1' \with \u B_2'}{\u B_1'' \with \u B_2''}]
        : \u B_1 \with \u B_2
        \]
        We proceed as follows:
        \begin{align*}
          &\sdncast{\u B_1 \with \u B_2}{\u B_1' \with \u B_2'}[\sdncast{\u B_1' \with \u B_2'}{\u B_1'' \with \u B_2''}]\\
          &\equidyn\pairone{\pi\sdncast{\u B_1 \with \u B_2}{\u B_1' \with \u B_2'}[\sdncast{\u B_1' \with \u B_2'}{\u B_1'' \with \u B_2''}]}\tag{$\with\eta$}\\
          &\quad\pairtwo{\pi'\sdncast{\u B_1 \with \u B_2}{\u B_1' \with \u B_2'}[\sdncast{\u B_1' \with \u B_2'}{\u B_1'' \with \u B_2''}]}\\
          &\equidyn\pairone{\sdncast{\u B_1}{\u B_1'}[\pi\sdncast{\u B_1' \with \u B_2'}{\u B_1'' \with \u B_2''}]}\tag{cast reduction}\\
          &\quad\pairtwo{\sdncast{\u B_2}{\u B_2'}[\pi'\sdncast{\u B_1' \with \u B_2'}{\u B_1'' \with \u B_2''}]}\\
          &\equidyn\pairone{\sdncast{\u B_1}{\u B_1'}[\sdncast{\u B_1'}{\u B_1''}[\pi\bullet]]}\tag{cast reduction}\\
          &\quad\pairtwo{\sdncast{\u B_2}{\u B_2'}\sdncast{\u B_2'}{\u B_2''}[\pi'\bullet]}\\
          &\equidyn\pair{\sdncast{\u B_1}{\u B_1''}[\pi\bullet]}{\sdncast{\u B_2}{\u B_2''}[\pi'\bullet]}\tag{IH}\\
          &= \sdncast{\u B_1 \with \u B_2}{\u B_1'' \with \u B_2''} \tag{definition}
        \end{align*}
      \item $\to$, assume we are given $A \ltdyn A' \ltdyn A''$ and
        $\u B \ltdyn \u B' \ltdyn \u B''$, then we proceed:
        \begin{align*}
          &\sdncast{A \to \u B}{A' \to \u B'}[\sdncast{A' \to \u B'}{A'' \to \u B''}]\\
          &\equidyn \lambda x:A. (\sdncast{A \to \u B}{A' \to \u B'}[\sdncast{A' \to \u B'}{A'' \to \u B''}][\bullet])\,x\tag{$\to\eta$}\\
          &\equidyn \lambda x:A. \sdncast{\u B}{\u B'}[(\sdncast{A' \to \u B'}{A'' \to \u B''}[\bullet])\, \supcast{A}{A'}[x]] \tag{cast reduction}\\
          &\equidyn \lambda x:A. \sdncast{\u B}{\u B'}[\sdncast{\u B'}{\u B''}[\bullet\, \supcast{A'}{A''}[\supcast{A}{A'}[x]]]]\tag{cast reduction}\\
          &\equidyn \lambda x:A. \sdncast{\u B}{\u B''}[\bullet\,\supcast{A}{A''}[x]]\\
          &= \sdncast{A \to \u B}{A \to \u B''}[\bullet]\tag{definition}
        \end{align*}
      \item $\u F A \ltdyn \u F A' \ltdyn \u F A''$. First, by
        composition of ep pairs, we know
        \[ (x. \supcast{A'}{A''}[\supcast{A}{A'}[x]], \sdncast{\u F
          A}{\u F A'})[\sdncast{\u F A'}{\u F A''}]\]
        form a value ep pair.  
        Furthermore, by inductive hypothesis, we know
        \[ x : A \vdash \supcast{A'}{A''}[\supcast{A}{A'}[x]] \equidyn \supcast{A}{A''}[x] \]
        so the two sides of our equation are both projections with the
        same value embedding, so the equation follows from uniqueness
        of projections from value embeddings.        
      \end{enumerate}
    \end{enumerate}
  \end{enumerate}
\end{longproof}

The final lemma before the graduality theorem lets us ``move a cast''
from left to right or vice-versa, via the adjunction property for ep
pairs.  
These arise in the proof cases for $\kw{return}$ and $\kw{thunk}$, because in those
cases the inductive hypothesis is in terms of an upcast (downcast) and
the conclusion is in terms of a a downcast (upcast).
\begin{lemma}[Hom-set formulation of Adjunction]
  \label{lem:hom-set-adj}
  For any value embedding-projection pair $V_e,S_p$ from $A$ to $A'$,
  the following are equivalent:
  \begin{small}
  \begin{mathpar}
    \mprset{fraction={===}}
    \inferrule
    {\Gamma \vdash \ret V_e[V] \ltdyn M : \u F A'}
    {\Gamma \vdash \ret V \ltdyn S_p[M] : \u F A}
  \end{mathpar}
  \end{small}

  For any computation ep pair $(V_e,S_p)$ from $\u B$ to $\u B'$, the
  following are equivalent:
  \begin{small}
  \begin{mathpar}
    \mprset{fraction={===}}
    \inferrule
    {\Gamma, z' : U \u B' \vdash M \ltdyn S[S_p[\force z']] : \u C}
    {\Gamma, z : U \u B \vdash M[V_e/z'] \ltdyn S[\force z] : \u C}
  \end{mathpar}
  \end{small}
\end{lemma}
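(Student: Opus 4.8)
Both biconditionals are ``hom-set'' reformulations of the embedding--projection axioms, so I would prove them purely equationally, using only the retraction and projection laws of the ep pair, the $\beta\eta$ axioms for $\u F$ and $U$, and the structural rules \textsc{TmDynValSubst}, \textsc{TmDynStkSubst}, and congruence; no operational reasoning is needed. I would do the value case in full and then obtain the computation case by dualizing.

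\emph{Value case.} For the downward direction, assume $\Gamma \vdash \ret V_e[V] \ltdyn M : \u F A'$. Applying the stack $S_p$ to both sides (congruence, i.e.\ \textsc{TmDynStkSubst} with reflexivity of $S_p$) gives $S_p[\ret V_e[V]] \ltdyn S_p[M] : \u F A$. Substituting $V$ for $x$ in the retraction axiom $x : A \vdash \ret x \equidyn S_p[\ret V_e] : \u F A$ yields $\ret V \equidyn S_p[\ret V_e[V]]$, and transitivity finishes. For the upward direction, assume $\Gamma \vdash \ret V \ltdyn S_p[M] : \u F A$. By $\u F\beta$, $\ret V_e[V] \equidyn \bindXtoYinZ{\ret V}{x}{\ret V_e[x]}$; by congruence in the bound position (the $\u F$E congruence rule, using the hypothesis and reflexivity of the body), $\bindXtoYinZ{\ret V}{x}{\ret V_e[x]} \ltdyn \bindXtoYinZ{S_p[M]}{x}{\ret V_e[x]}$; finally, substituting the arbitrary computation $M$ for the hole $\bullet$ in the projection axiom $\bullet : \u F A' \vdash \bindXtoYinZ{S_p[\bullet]}{x}{\ret V_e[x]} \ltdyn \bullet : \u F A'$ (\textsc{TmDynStkSubst} with reflexivity of $M$) gives $\bindXtoYinZ{S_p[M]}{x}{\ret V_e[x]} \ltdyn M$, and transitivity finishes.

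\emph{Computation case.} The two directions are dual, now driven by the laws $z : U\u B \vdash \force z \equidyn S_p[\force V_e] : \u B$ and $w : U\u B' \vdash V_e[\thunk{S_p[\force w]}] \ltdyn w : U\u B'$. From $\Gamma, z' : U\u B' \vdash M \ltdyn S[S_p[\force z']] : \u C$, substitute $V_e$ for $z'$ (this reindexes the context to $\Gamma, z : U\u B$, since $V_e$ has $z$ free): the right-hand side $(S[S_p[\force z']])[V_e/z'] = S[S_p[\force V_e]]$ collapses to $S[\force z]$ by congruence of the stack $S$ and the retraction law, giving the bottom judgement. For the converse, from $\Gamma, z : U\u B \vdash M[V_e/z'] \ltdyn S[\force z] : \u C$, substitute $\thunk{S_p[\force z']}$ for $z$: the right-hand side becomes $S[\force{\thunk{S_p[\force z']}}] \equidyn S[S_p[\force z']]$ by $U\beta$ and congruence, and one then uses the projection law together with $U\eta$ and \textsc{TmDynValSubst} to reconcile the left-hand side with $M$. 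A useful auxiliary fact here, proved by a short $\u F\beta/\u F\eta$ calculation, is that $(V_e,\; \bindXtoYinZ{\bullet}{w}{\ret{\thunk{S_p[\force w]}}})$ is a \emph{value} ep pair from $U\u B$ to $U\u B'$, which lets parts of the computation case be handled by the value case.

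\emph{Main obstacle.} The delicate step is the converse direction of the computation case: one must thread the substitution $z \mapsto \thunk{S_p[\force z']}$ and the single-directional projection inequality $V_e[\thunk{S_p[\force z']}] \ltdyn z'$ so that the left-hand side genuinely lines up with $M$, rather than with a strictly smaller instance, which is where the fact that the retraction composite $S_p \circ V_e$ is an \emph{equidynamism} (not merely a $\ltdyn$) is essential. This is the precise dual of the ``substitute the arbitrary term into the open-holed projection axiom'' move that makes the value case go through; all the remaining work is routine bookkeeping with the structural rules and the $\u F, U$ $\beta\eta$ axioms.
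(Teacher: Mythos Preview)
Your value case and the top-to-bottom half of the computation case are correct and are essentially the paper's argument (the paper's item~3 even writes ``projection'' where retraction is meant, which you silently correct).

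The gap is exactly where you flag it: the bottom-to-top direction of the computation case. After substituting $\thunk{S_p[\force z']}$ for $z$ in the hypothesis, the left-hand side becomes $M[V'/z']$ with $V' := V_e[\thunk{S_p[\force z']}/z]$, and to reach the goal you need $M \ltdyn M[V'/z']$. But the projection law gives only $V' \ltdyn z'$, so \textsc{TmDynValSubst} (with $M \ltdyn M$) yields $M[V'/z'] \ltdyn M$ --- the wrong direction. The equidynamism of the retraction composite lives at $U\u B$ (after $U\eta$ it says $z \equidyn \thunk{S_p[\force V_e]}$) and cannot be parlayed into $z' \ltdyn V'$ at $U\u B'$; instantiating it only shows $S_p[\force z'] \equidyn S_p[\force V']$, which does not recover what you need. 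Your auxiliary value ep pair on $U\u B,\,U\u B'$ is genuine, but the value hom-set bijection it supplies speaks about inequalities of the shape $\ret(\ldots) \ltdyn N : \u F U\u B'$, not about the open-term-in-$z'$ shape at hand. In fact this direction of the biconditional appears not to hold as stated: with $\u B=\u F 1$, $\u B'=\u F(1{+}1)$, $S=\bullet$, the evident ep pair, and $M=\bindXtoYinZ{\force z'}{y}{\caseofXthenYelseZ{y}{\_.\,\ret()}{\_.\,\ret()}}$, one computes $M[V_e/z']\equidyn\force z$ (so the bottom holds), while the top would force $\ret()\ltdyn\err$ on the $\inr$ branch. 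The paper's own item~4 invokes ``retraction'' at precisely this point with no further detail and runs into the same obstruction, so your plan faithfully mirrors the paper's argument; it is the step itself, in both, that does not go through. (Only the top-to-bottom directions are used in the axiomatic graduality proof, so the defect is local.)
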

\begin{longproof}
  \begin{enumerate}
  \item Assume $\ret V_e[V] \ltdyn M : \u F A'$. Then by retraction,
    $\ret V \ltdyn S_p[\ret V_e[V]]$ so by transitivity, the result
    follows by substitution:
    \begin{mathpar}
      \inferrule
      {S_p \ltdyn S_p \and \ret V_e[V] \ltdyn M}
      {S_p[\ret V_e[V]] \ltdyn M}
    \end{mathpar}
  \item Assume $\ret V \ltdyn S_p[M] : \u F A$. Then by projection,    
    $\bindXtoYinZ {S_p[M]} x \ret V_e[x] \ltdyn M$, so it is sufficient to show
    \[ \ret V_e[V] \ltdyn \bindXtoYinZ {S_p[M]} x \ret V_e[x] \]
    but again by substitution we have
    \[ \bindXtoYinZ {\ret V} x \ret V_e[x] \ltdyn \bindXtoYinZ {S_p[M]} x \ret V_e[x]\]
    and by $\u F\beta$, the LHS is equivalent to $\ret V_e[V]$.
  \item Assume $z' : U\u {B'} \vdash M \ltdyn S[S_p[\force z']]$, then
    by projection, $S[S_p[\force V_e]] \ltdyn S[\force z]$ 
    and by substitution:
    \begin{mathpar}
      \inferrule
      {M \ltdyn S[S_p[\force z']]\and V_e \ltdyn V_e \and S[S_p[\force V_e]] = (S[S_p[\force z']])[V_e/z']}
      {M[V_e/z'] \ltdyn S[S_p[\force V_e]]}
    \end{mathpar}
  \item Assume $z : U \u B \vdash M[V_e/z'] \ltdyn S[\force z]$. Then
    by retraction, $M \ltdyn M[V_e[\thunk{S_p[\force z]}]]$ and by
    substitution:
    \[ M[V_e[\thunk{S_p[\force z]}]] \ltdyn S[\force \thunk{S_p[\force z]}] \]
    and the right is equivalent to $S[S_p[\force z]]$ by $U\beta$.
  \end{enumerate}
\end{longproof}

Finally, we prove the axiomatic graduality theorem.
In addition to the lemmas above, the main task is to prove the
``compatibility'' cases which are the congruence cases for introduction
and elimination rules.
These come down to proving that the casts ``commute'' with
introduction/elimination forms, and are all simple calculations.
\begin{nonnum-theorem}[Axiomatic Graduality]
  For any dynamic type interpretation, the following are true:
  \begin{small}
  \begin{mathpar}
    \inferrule
    {\Phi : \Gamma \ltdyn \Gamma'\\
      \Psi : \Delta \ltdyn \Delta'\\
      \Phi \pipe \Psi \vdash M \ltdyn M' : \u B \ltdyn \u B'}
    {\sem\Gamma \pipe \sem{\Delta'} \vdash \sem M[\sem{\Psi}] \ltdyn \sdncast{\u B}{\u B'}[\sem{M'}[\sem{\Phi}]] : \sem{\u B}}

    \inferrule
    {\Phi : \Gamma \ltdyn \Gamma' \\
      \Phi \vdash V \ltdyn V' : A \ltdyn A'}
    {\sem{\Gamma} \vdash \supcast{A}{A'}[\sem{V}] \ltdyn\sem{V'}[\sem\Phi] : \sem {A'}}
  \end{mathpar}
  \end{small}
\end{nonnum-theorem}
\begin{longproof}
  By mutual induction over term dynamism derivations. For the $\beta,
  \eta$ and reflexivity rules, we use the identity expansion lemma and
  the corresponding $\beta, \eta$ rule of
  \cbpvstar\ref{lem:ident-expansion}.

  For compatibility rules a pattern emerges.  Universal rules
  (positive intro, negative elim) are easy, we don't need to reason about
  casts at all. For ``(co)-pattern matching rules'' (positive elim,
  negative intro), we need to invoke the $\eta$ principle (or
  commuting conversion, which is derived from the $\eta$ principle).
  In all compatibility cases, the cast reduction lemma keeps the
  proof straightforward.

  Fortunately, all reasoning about ``shifted'' casts is handled in
  lemmas, and here we only deal with the ``nice'' value upcasts/stack
  downcasts.
  \begin{enumerate}
  \item Transitivity for values: The GTT rule is
    \[
    \inferrule{
    \Phi : \Gamma \ltdyn \Gamma' \and \Phi' : \Gamma' \ltdyn \Gamma'' \and
    \Phi'' : \Gamma \ltdyn \Gamma''
    \\
    \Phi \vdash V \ltdyn V' : A \ltdyn A'\\
    \Phi' \vdash V' \ltdyn V'' : A' \ltdyn A''\\
    }
    { \Phi'' \vdash V \ltdyn V'' : A \ltdyn A''}
    \]
    Which under translation (and the same assumptions about the contexts) is
    \[
    \inferrule
    {\sem{\Gamma} \vdash \supcast{A}{A'}[\sem{V}] \ltdyn \sem{V'}[\sem{\Phi}] : \sem{A'}\\
     \sem{\Gamma'} \vdash \supcast{A'}{A'}[\sem{V'}] \ltdyn \sem{V''}[\sem{\Phi'}] : \sem{A''}
    }
    {\sem{\Gamma} \vdash \supcast{A}{A''}[\sem{V}] \ltdyn \sem{V''}[\sem{\Phi''}] : \sem{A''}}
    \]
    We proceed as follows, the key lemma here is the cast decomposition lemma:
    \begin{align*}
      \supcast{A}{A''}[\sem{V}]
      &\equidyn
      \supcast{A'}{A''}[\supcast{A}{A'}[\sem{V}]] \tag{cast decomposition}\\
      &\ltdyn \supcast{A'}{A''}[\sem{V'}[\sem{\Phi}]] \tag{IH}\\
      &\ltdyn \sem{V''}[\sem{\Phi'}][\sem{\Phi}] \tag{IH}\\
      &\equidyn \sem{V''}[\sem{\Phi''}] \tag{cast decomposition}
    \end{align*}
  \item Transitivity for terms:
    The GTT rule is
    \[
    \inferrule{
    \Phi : \Gamma \ltdyn \Gamma' \and \Phi' : \Gamma' \ltdyn \Gamma'' \and
    \Phi'' : \Gamma \ltdyn \Gamma''
    \and \Psi : \Delta \ltdyn \Delta' \and \Psi : \Delta' \ltdyn \Delta''
    \and \Psi'' : \Delta\ltdyn \Delta''
    \\
    \Phi \pipe \Psi \vdash M \ltdyn M' : \u B \ltdyn \u B'\\
    \Phi' \pipe \Psi' \vdash M' \ltdyn M'' : \u B' \ltdyn \u B''\\
    }
    { \Phi'' \pipe \Psi'' \vdash M \ltdyn M'' : \u B \ltdyn \u B''}
    \]
    Which under translation (and the same assumptions about the contexts) is
    \[
    \inferrule
    {\sem{\Gamma} \pipe \sem{\Delta'} \vdash \sem{M}[\sem{\Psi}] \ltdyn \sdncast{\u B}{\u B'}[\sem{M'}[\sem{\Phi}]] : \sem{\u B}\\
    \sem{\Gamma'} \pipe \sem{\Delta''} \vdash \sem{M'}[\sem{\Psi'}] \ltdyn \sdncast{\u B'}{\u B''}[\sem{M''}[\sem{\Phi'}]] : \sem{\u B'}}
    {\sem{\Gamma} \pipe \sem{\Delta''} \vdash \sem{M}[\sem{\Psi''}] \ltdyn \sdncast{\u B}{\u B''}[\sem{M''}[\sem{\Phi''}]] : \sem{\u B}}
    \]
    We proceed as follows, the key lemma here is the cast decomposition lemma:
    \begin{align*}
      \sem{M}[\sem{\Psi''}]
      &\equidyn
      \sem{M}[\sem{\Psi}][\sem{\Psi'}] \tag{Cast decomposition}\\
      &\ltdyn \sdncast{\u B}{\u B'}[\sem{M'}[\sem{\Psi'}][\sem{\Phi}]]\tag{IH}\\
      &\ltdyn \sdncast{\u B}{\u B'}[\sdncast{\u B'}{\u B''}[\sem{M''}[\sem{\Phi'}][\sem{\Phi}]]]\tag{IH}\\
      &\equidyn \sdncast{\u B}{\u B''}[\sem{M''}[\sem{\Phi''}]] \tag{Cast decomposition}
    \end{align*}
  \item Substitution of a value in a value:
    The GTT rule is
    \[
    \inferrule
    {\Phi, x \ltdyn x' : A_1 \ltdyn A_1' \vdash V_2 \ltdyn V_2' : A_2 \ltdyn A_2'\\
    \Phi \vdash V_1 \ltdyn V_1' : A_1 \ltdyn A_1'}
    {\Phi \vdash V_2[V_1/x]\ltdyn V_2'[V_1'/x'] : A_2 \ltdyn A_2'}
    \]
    Where $\Phi : \Gamma \ltdyn \Gamma'$. Under translation, we need to show
    \[
    \inferrule
    {\sem\Gamma, x : \sem{A_1} \vdash \supcast{A_2}{A_2'}[\sem{V_2}] \ltdyn \sem{V_2'}[\sem\Phi][\supcast{A_1}{A_1'}[x]/x'] : \sem{A_2'}\\
     \sem\Gamma \vdash \supcast{A_1}{A_1'}[\sem{V_1}] \ltdyn \sem{V_1'}[\sem\Phi] : \sem{A_1'}}
    {\sem\Gamma \vdash \supcast{A_2}{A_2'}[\sem{V_2[V_1/x]}] \ltdyn \sem{V_2'[V_1'/x']}[\sem\Phi] : \sem{A_2'}}
    \]
    Which follows by compositionality:
    \begin{align*}
      \supcast{A_2}{A_2'}[\sem{V_2[V_1/x]}]
      &= (\supcast{A_2}{A_2'}[\sem{V_2}])[\sem{V_1}/x] \tag{Compositionality}\\
      &\ltdyn \sem{V_2'}[\sem\Phi][\supcast{A_1}{A_1'}[x]/x'][\sem{V_1}/x]\tag{IH}\\
      &= \sem{V_2'}[\sem\Phi][\supcast{A_1}{A_1'}[\sem{V_1}]/x']\\
      &\ltdyn \sem{V_2'}[\sem\Phi][\sem{V_1'}[\sem\Phi]/x']\tag{IH}\\
      &= \sem{V_2'[V_1'/x']}[\sem\Phi]
    \end{align*}
  \item Substitution of a value in a term:
    The GTT rule is
    \[
    \inferrule
    {\Phi, x \ltdyn x' : A \ltdyn A' \pipe \Psi \vdash M \ltdyn M' : \u B \ltdyn \u B'\\
      \Phi \vdash V \ltdyn V' : A \ltdyn A'
    }
    {\Phi \vdash M[V/x] \ltdyn M'[V'/x'] : \u B \ltdyn \u B'}
    \]
    Where $\Phi : \Gamma \ltdyn \Gamma'$ and $\Psi : \Delta \ltdyn \Delta'$.
    Under translation this is:
    \[
    \inferrule
    {\sem\Gamma, x : \sem{A} \pipe \sem\Delta \vdash \sem M \ltdyn \sdncast{\u B}{\u B'}[\sem {M'}[\sem\Phi][\supcast{A}{A'}[x]/x']] : \sem{\u B}\\
    \sem\Gamma \vdash \supcast{A}{A'}[{\sem V}] \ltdyn \sem{V'}[\sem\Phi] : \sem{A'}}
    {\sem\Gamma \pipe \sem\Delta \vdash \sem {M[V/x]} \ltdyn \sdncast{\u B}{\u B'}[\sem{M'[V'/x']}[\sem\Phi]] : \sem{\u B}}
    \]
    Which follows from compositionality of the translation:
    \begin{align*}
      \sem {M[V/x]}
      &= \sem{M}[\sem{V}/x] \tag{Compositionality}\\
      &\ltdyn \sdncast{\u B}{\u B'}[\sem {M'}[\sem\Phi][\supcast{A}{A'}[x]/x']][\sem{V}/x] \tag{IH}\\
      &= \sdncast{\u B}{\u B'}[\sem {M'}[\sem\Phi][\supcast{A}{A'}[\sem{V}]/x']]\\
      &\ltdyn \sdncast{\u B}{\u B'}[\sem {M'}[\sem\Phi][\sem{V'}[\sem\Phi]/x']]\tag{IH}\\
      &= \sdncast{\u B}{\u B'}[\sem{M'[V'/x']}[\sem\Phi]] \tag{Compositionality}
    \end{align*}
  \item Substitution of a term in a stack:
    The GTT rule is
    \[
    \inferrule
    {\Phi \pipe \bullet \ltdyn \bullet : \u B \ltdyn \u B' \vdash S \ltdyn S' : \u C \ltdyn \u C'\\
      \Phi \pipe \cdot \vdash M \ltdyn M' : \u B \ltdyn \u B'}
    {\Phi \pipe \cdot \vdash S[M]\ltdyn S'[M'] : \u C \ltdyn \u C'}
    \]
    Where $\Phi : \Gamma \ltdyn \Gamma'$.
    Under translation this is
    \[
    \inferrule
    {\sem\Gamma \pipe \bullet : \sem{\u B'} \vdash \sem{S}[\sdncast{\u B}{\u B'}[\bullet]] \ltdyn \sdncast{\u C}{\u C'}[\sem{S'}[\sem\Phi]] : \sem{\u C}\\
      \sem\Gamma \pipe \cdot \vdash \sem{M} \ltdyn \sdncast{\u B}{\u B'}[\sem{M'}[\sem\Phi]] : \sem{\u B}}
    {\sem\Gamma \pipe \cdot \vdash \sem{S[M]} \ltdyn \sdncast{\u C}{\u C'}[\sem{S'[M']}[\sem\Phi]] : \sem{\u C}}
    \]
    We follows easily using compositionality of the translation:
    \begin{align*}
      \sem{S[M]}
      &= \sem{S}[\sem{M}] \tag{Compositionality}\\
      &\ltdyn   \sem{S}[\sdncast{\u B}{\u B'}[\sem{M'}[\sem\Phi]]] \tag{IH}\\
      &\ltdyn   \sdncast{\u C}{\u C'}[\sem{S'}[\sem\Phi][\sem{M'}[\sem\Phi]]]\tag{IH}\\
      &= \sdncast{\u C}{\u C'}[\sem{S'[M']}[\sem\Phi]] \tag{Compositionality}
    \end{align*}
  \item Variables: The GTT rule is
    \[ \Gamma_1 \ltdyn \Gamma_1' ,x \ltdyn x' : A \ltdyn A', \Gamma_2 \ltdyn \Gamma_2' \vdash x \ltdyn x' : A \ltdyn A' \]
    which under translation is
    \[ \sem{\Gamma_1}, x : \sem A, \sem{\Gamma_2} \vdash \supcast{A}{A'}[x] \ltdyn \supcast{A}{A'}[x] : \sem{A'} \]
    which is an instance of reflexivity.
  \item Hole: The GTT rule is
    \[ \Phi \pipe \bullet \ltdyn \bullet : \u B \ltdyn \u B' \vdash \bullet \ltdyn \bullet : \u B \ltdyn \u B' \]
    which under translation is
    \[ \sem\Gamma \pipe \bullet : \u B' \vdash \sdncast{\u B}{\u B'}[\bullet] \ltdyn \sdncast{\u B}{\u B'}[\bullet] : \u B \]
    which is an instance of reflexivity.
  \item Error is bottom: The GTT axiom is
    \[ \Phi \vdash \err \ltdyn M : \u B \]
    where $\Phi : \Gamma \ltdyn \Gamma'$, so we need to show
    \[ \sem\Gamma \vdash \err \ltdyn \sdncast{\u B}{\u B}[\sem{M}[\sem{\Phi}]] : \sem{\u B} \]
    which is an instance of the error is bottom axiom of CBPV.
  \item Error strictness: The GTT axiom is
    \[
    \Phi \vdash S[\err] \ltdyn \err : \u B
    \]
    where $\Phi : \Gamma \ltdyn \Gamma'$, which under translation is
    \[
    \sem\Gamma \vdash \sem{S}[\err] \ltdyn \sdncast{\u B}{\u B}[\err] : \sem{\u B}
    \]
    By strictness of stacks in CBPV, both sides are equivalent to
    $\err$, so it follows by reflexivity.

  \item UpCast-L: The GTT axiom is
    \[
    x \ltdyn x' : A \ltdyn A' \vdash \upcast{A}{A'}x \ltdyn x' : A'
    \]
    which under translation is
    \[
    x : \sem{A} \vdash \supcast{A'}{A'}[\supcast{A}{A'}[x]] \ltdyn \supcast{A}{A'}[x] : A'
    \]
    Which follows by identity expansion and reflexivity.
  \item UpCast-R: The GTT axiom is
    \[
    x : A \vdash x \ltdyn \upcast{A}{A'}x : A \ltdyn A'
    \]
    which under translation is
    \[
    x : \sem{A} \vdash \supcast{A}{A'}[x] \ltdyn \supcast{A}{A'}[\supcast{A}{A}[x]] : \sem{A'}
    \]
    which follows by identity expansion and reflexivity.
  \item DnCast-R: The GTT axiom is
    \[
    \bullet \ltdyn \bullet : \u B \ltdyn \u B' \vdash \bullet \ltdyn \dncast{\u B}{\u B'} : \u B
    \]
    Which under translation is
    \[
    \bullet : \sem{\u B'} \vdash
    \sdncast{\u B}{\u B'}[\bullet]
    \ltdyn
    \sdncast{\u B}{\u B}[\sdncast{\u B}{\u B'}[\bullet]]
    : \sem{\u B}
    \]
    Which follows by identity expansion and reflexivity.
  \item DnCast-L: The GTT axiom is
    \[
    \bullet : \u B' \vdash \dncast{\u B}{\u B'} \bullet \ltdyn \bullet : \u B \ltdyn \u B'
    \]
    So under translation we need to show
    \[
    \bullet : \sem{\u B'} \vdash
    \sdncast{\u B}{\u B'}[\sdncast{\u B'}{\u B'}[\bullet]]
    \ltdyn
    \sdncast{\u B}{\u B'}\bullet : \sem{\u B}
    \]
    Which follows immediately by reflexivity and the lemma that
    identity casts are identities.

  \item $0$ elim, we do the term case, the value case is similar
    \[
    \inferrule
    {\upcast{0}{0}[\sem{V}] \ltdyn \sem{V'}[\sem\Phi]}
    {\absurd \sem{V} \ltdyn \dncast{\u B}{\u B'}\absurd\sem{V'}[\sem\Phi]}
    \]
    Immediate by $0\eta$.
  \item $+$ intro, we do the $\inl$ case, the $\inr$ case is the same:
    \[
    \inferrule
    {\supcast{A_1}{A_1'}[\sem{V}]\ltdyn \sem{V'}[\sem\Phi]}
    {\supcast{A_1+A_2}{A_1'+A_2'}[\inl\sem{V}]\ltdyn \inl\sem{V'}[\sem\Phi]}
    \]
    Which follows easily:
    \begin{align*}
      \supcast{A_1+A_2}{A_1'+A_2'}[\inl\sem{V}]
      &\equidyn \inl \supcast{A_1}{A_1'}\sem{V}\tag{cast reduction}\\
      &\ltdyn \inl \sem{V'}[\sem\Phi]\tag{IH}
    \end{align*}
  \item $+$ elim, we do just the cases where the continuations are terms:
    \[
    \inferrule
    {\supcast{A_1 + A_2}{A_1' + A_2'}[\sem{V}] \ltdyn \sem{V'}[\sem\Phi]\\
    \sem{M_1}[\sem\Psi] \ltdyn \sem{M_1'}[\sem\Phi][\supcast{A_1}{A_1'}[x_1]/x_1']\\
    \sem{M_2}[\sem\Psi] \ltdyn \sem{M_2'}[\sem\Phi][\supcast{A_2}{A_2'}[x_2]/x_2']}
    {\caseofXthenYelseZ {\sem V} {x_1. \sem{M_1}[\sem\Psi]}{x_2. \sem{M_2}[\sem\Psi]} \ltdyn \sdncast{\u B}{\u B'}[\caseofXthenYelseZ {\sem V'[\sem\Phi]} {x_1'. \sem{M_1'}[\sem\Phi]}{x_2'. \sem{M_2'}[\sem\Phi]}]}
    \]
    \begin{align*}
      & \caseofXthenYelseZ {\sem V} {x_1. \sem{M_1}[\sem\Psi]}{x_2. \sem{M_2}[\sem\Psi]}\\
      &\ltdyn
      \sdncast{\u B}{\u B'}[\caseofXthenYelseZ {\sem V} {x_1. \sem{M_1'}[\sem\Phi][\supcast{A_1}{A_1'}[x_1]/x_1']}{x_2. \sem{M_2'}[\sem\Phi][\supcast{A_2}{A_2'}[x_2]/x_2']}]\tag{IH}\\
      &\equidyn
      \caseofX {\sem V}\tag{comm conv}\\
      &\qquad\thenY{x_1. \sdncast{\u B}{\u B'}[\sem{M_1'}[\sem\Phi][\supcast{A_1}{A_1'}[x_1]/x_1']]}\\
      &\qquad\elseZ{x_2. \sdncast{\u B}{\u B'}[\sem{M_2'}[\sem\Phi][\supcast{A_2}{A_2'}[x_2]/x_2']]}\\
      &\equidyn
      \caseofX {\sem V}\tag{$+\beta$}\\
      &\qquad\thenY{x_1. \sdncast{\u B}{\u B'}[\caseofXthenYelseZ {\inl \supcast{A_1}{A_1'}x_1} {x_1'. \sem{M_1'}[\sem\Phi]}{x_2'. \sem{M_2'}[\sem\Phi]}]}\\
      &\qquad\elseZ{x_2. \sdncast{\u B}{\u B'}[\caseofXthenYelseZ {\inr \supcast{A_2}{A_2'}x_2} {x_1'. \sem{M_1'}[\sem\Phi]}{x_2'. \sem{M_2'}[\sem\Phi]}]}\\
      &\equidyn
      \caseofX {\sem V}\tag{cast reduction}\\
      &\qquad\thenY{x_1. \sdncast{\u B}{\u B'}[\caseofXthenYelseZ {\supcast{A_1+A_2}{A_1'+A_2'}\inl x_1} {x_1'. \sem{M_1'}[\sem\Phi]}{x_2'. \sem{M_2'}[\sem\Phi]}]}\\
      &\qquad\elseZ{x_2. \sdncast{\u B}{\u B'}[\caseofXthenYelseZ {\supcast{A_1+A_2}{A_1'+A_2'}\inr x_2} {x_1'. \sem{M_1'}[\sem\Phi]}{x_2'. \sem{M_2'}[\sem\Phi]}]}\\
      &\equidyn
      \sdncast{\u B}{\u B'}[\caseofXthenYelseZ {\supcast{A_1+A_2}{A_1'+A_2'}[\sem V]} {x_1'. \sem{M_1'}[\sem\Phi]}{x_2'. \sem{M_2'}[\sem\Phi]}]\\
      &\ltdyn
      \sdncast{\u B}{\u B'}[\caseofXthenYelseZ {\sem{V'}[\sem\Phi]} {x_1'. \sem{M_1'}[\sem\Phi]}{x_2'. \sem{M_2'}[\sem\Phi]}]\tag{IH}\\
    \end{align*}
  \item $1$ intro:
    \[
    \inferrule
    {}
    {\supcast{1}{1}[()]\ltdyn ()}
    \]
    Immediate by cast reduction.
  \item $1$ elim (continuations are terms case):
    \[
    \inferrule
    {\supcast{1}{1}[\sem{V}] \ltdyn \sem{V'}[\sem\Phi]\\
      \sem{M}[\sem\Psi] \ltdyn \sdncast{\u B}{\u B'}[\sem{M'}[\sem\Phi]]
    }
    {\pmpairWtoinZ {\sem V} {\sem{M}[\sem{\Psi}]}
      \ltdyn
      \dncast{\u B}{\u B'}[\pmpairWtoinZ {\sem V'[\sem\Phi]} {\sem{M'}[\sem{\Phi}]}]}
    \]
    which follows by identity expansion \ref{lem:ident-expansion}.
  \item $\times$ intro:
    \[
    \inferrule
    {\supcast{A_1}{A_1'}{\sem{V_1}} \ltdyn \sem{V_1'[\sem\Phi]}\\
     \supcast{A_2}{A_2'}{\sem{V_2}} \ltdyn \sem{V_2'[\sem\Phi]}}
    {\supcast{A_1 \times A_2}{A_1' \times A_2'}[(\sem{V_1},\sem{V_2})]
      \ltdyn
      (\sem{V_1'[\sem\Phi]}, \sem{V_2'[\sem\Phi]})}
    \]
    We proceed:
    \begin{align*}
      \supcast{A_1 \times A_2}{A_1' \times A_2'}[(\sem{V_1},\sem{V_2})]
      &\equidyn
      (\supcast{A_1}{A_1'}{\sem{V_1}},\supcast{A_2}{A_2'}{\sem{V_2}})\tag{cast reduction}\\
      &\ltdyn (\sem{V_1'[\sem\Phi]}, \sem{V_2'[\sem\Phi]}) \tag{IH}
    \end{align*}
  \item $\times$ elim: We show the case where the continuations are
    terms, the value continuations are no different:
    \[
    \inferrule
    {\supcast{A_1\times A_2}{A_1' \times A_2'}[\sem{V}] \ltdyn \sem{V'}[\sem\Phi]\\
      \sem{M}[\sem\Psi] \ltdyn \sdncast{\u B}{\u B'}[\sem{M'}[\sem\Phi][\supcast{A_1}{A_1'}[x]/x'][\supcast{A_2}{A_2'}[y]/y']]
    }
    {\pmpairWtoXYinZ {\sem V} x y {\sem{M}[\sem{\Psi}]}
      \ltdyn
      \dncast{\u B}{\u B'}[\pmpairWtoXYinZ {\sem V'[\sem\Phi]} {x'} {y'} {\sem{M'}[\sem{\Phi}]}]
    }
    \]
    We proceed as follows:
    \begin{align*}
      &\pmpairWtoXYinZ {\sem V} x y {\sem{M}[\sem{\Psi}]}\\
      &\ltdyn\pmpairWtoXYinZ {\sem V} x y \sdncast{\u B}{\u B'}[\sem{M'}[\sem\Phi][\supcast{A_1}{A_1'}[x]/x'][\supcast{A_2}{A_2'}[y]/y']]\tag{IH}\\
      &\equidyn
      \pmpairWtoXYinZ {\sem V} x y\tag{$\times\beta$}\\
      &\qquad \pmpairWtoXYinZ {(\supcast{A_1}{A_1'}[x],\supcast{A_2}{A_2'}[y])} {x'} {y'} \sdncast{\u B}{\u B'}[\sem{M'}[\sem\Phi]]\\
      &\equidyn
      \pmpairWtoXYinZ {\sem V} x y\tag{cast reduction}\\
      &\qquad \pmpairWtoXYinZ {\supcast{A_1\times A_2'}{A_1'\times A_2'}[(x,y)]} {x'} {y'} \sdncast{\u B}{\u B'}[\sem{M'}[\sem\Phi]]\\
      &\equidyn
      \pmpairWtoXYinZ {\supcast{A_1\times A_2}{A_1'\times A_2'}[{\sem V}]} {x'} {y'} \sdncast{\u B}{\u B'}[\sem{M'}[\sem\Phi]]\tag{$\times\eta$}\\
      &\ltdyn \pmpairWtoXYinZ {\sem{V'}[\sem\Phi]} {x'}{y'} \sdncast{\u B}{\u B'}[\sem{M'}[\sem\Phi]]\tag{IH}\\
      &\equidyn \sdncast{\u B}{\u B'}[\pmpairWtoXYinZ {\sem{V'}[\sem\Phi]} {x'}{y'}\sem{M'}[\sem\Phi]]\tag{commuting conversion}
    \end{align*}
  \item $U$ intro:
    \[
    \inferrule
    {\sem{M} \ltdyn \sdncast{\u B}{\u B'}[\sem{M'}[\sem\Phi]]}
    {\supcast{U\u B}{U \u B'}[\thunk\sem{M}] \ltdyn \thunk\sem{M'}[\sem\Phi]}
    \]
    We proceed as follows:
    \begin{align*}
      \supcast{U\u B}{U \u B'}[\thunk\sem{M}]
      &\ltdyn \supcast{U\u B}{U \u B'}[\thunk\sdncast{\u B}{\u B'}[\sem{M'}[\sem\Phi]]]\tag{IH}\\
      &\ltdyn \thunk \sem{M'}[\sem\Phi]\tag{alt projection}
    \end{align*}
  \item $U$ elim:
    \[
    \inferrule
    {\supcast{U \u B}{U \u B'}[\sem{V}] \ltdyn \sem{V'}[\sem\Phi]}
    {\force \sem V \ltdyn \sdncast{\u B}{\u B'}\force \sem {V'}[\sem\Phi]}
    \]
    By hom-set formulation of adjunction \ref{lem:hom-set-adj}.
  \item $\top$ intro:
    \[
    \inferrule{}{\{\} \ltdyn \sdncast{\top}{\top}[\{\}]}
    \]
    Immediate by $\top\eta$
  \item $\with$ intro:
    \[
    \inferrule
    {\sem{M_1}[\sem{\Psi}]\ltdyn \sdncast{\u B_1}{\u B_1'}[\sem{M_1'}[\sem{\Phi}]]\\
     \sem{M_2}[\sem{\Psi}]\ltdyn \sdncast{\u B_2}{\u B_2'}[\sem{M_2'}[\sem{\Phi}]]}
    {\pair{\sem{M_1}[\sem{\Psi}]}{\sem{M_2}[\sem{\Psi}]}
    \ltdyn
    \sdncast{\u B_1 \with \u B_2}{\u B_1' \with \u B_2'}[\pair{\sem{M_1'}[\sem{\Phi}]}{\sem{M_2'}[\sem{\Phi}]}]}
    \]
    We proceed as follows:
    \begin{align*}
      &\pair{\sem{M_1}[\sem{\Psi}]}{\sem{M_2}[\sem{\Psi}]}\\
      &\ltdyn
      \pair{\sdncast{\u B_1}{\u B_1'}[\sem{M_1'}[\sem{\Phi}]]}{\sdncast{\u B_2}{\u B_2'}[\sem{M_2'}[\sem{\Phi}]]}\tag{IH}\\
      &\equidyn
      \pairone{\pi\sdncast{\u B_1 \with \u B_2}{\u B_1' \with \u B_2'}[\pair{\sem{M_1'}[\sem{\Phi}]}{\sem{M_2'}[\sem{\Phi}]}]}\tag{cast reduction}\\
      &\quad \pairtwo{\pi'\sdncast{\u B_1 \with \u B_2}{\u B_1' \with \u B_2'}[\pair{\sem{M_1'}[\sem{\Phi}]}{\sem{M_2'}[\sem{\Phi}]}]}\\
      &\equidyn
      \sdncast{\u B_1 \with \u B_2}{\u B_1' \with \u B_2'}[\pair{\sem{M_1'}[\sem{\Phi}]}{\sem{M_2'}[\sem{\Phi}]}]\tag{$\with\eta$}
    \end{align*}
  \item $\with$ elim, we show the $\pi$ case, $\pi'$ is symmetric:
    \[
    \inferrule
    {\sem{M}[\sem{\Psi}] \ltdyn \sdncast{\u B_1 \with \u B_2}{\u B_1' \with \u B_2'}[\sem{M'}[\sem{\Phi}]]}
    {\pi\sem{M}[\sem{\Psi}] \ltdyn \sdncast{\u B_1}{\u B_1'}[\pi\sem{M'}[\sem{\Phi}]]}
    \]
    We proceed as follows:
    \begin{align*}
      \pi\sem{M}[\sem{\Psi}]
      &\ltdyn \pi \sdncast{\u B_1 \with \u B_2}{\u B_1' \with \u B_2'}[\sem{M'}[\sem{\Phi}]]\tag{IH}\\
      &\equidyn
      \sdncast{\u B_1}{\u B_1'}[\pi\sem{M'}[\sem{\Phi}]]\tag{cast reduction}
    \end{align*}
  \item
    \[
    \inferrule
    {\sem{M}[\sem{\Psi}] \ltdyn \sdncast{\u B}{\u B'}[\sem{M'}[\sem{\Phi}][\supcast{A}{A'}{x}/x']]}
    {\lambda x:A. \sem{M}[\sem{\Psi}] \ltdyn \sdncast{A \to \u B}{A'\to\u B'}[\lambda x':A'. \sem{M'}[\sem{\Phi}]]}
    \]
    We proceed as follows:
    \begin{align*}
      &\lambda x:A. \sem{M}[\sem{\Psi}]\\
      &\ltdyn
      \lambda x:A. \sdncast{\u B}{\u B'}[\sem{M'}[\sem{\Phi}][\supcast{A}{A'}{x}/x']]\tag{IH}\\
      &\equidyn
      \lambda x:A. (\sdncast{A \to \u B}{A' \to \u B'}[\lambda x'. \sem{M'}[\sem{\Phi}]])\, x\tag{cast reduction}\\
      &\equidyn
      \sdncast{A \to \u B}{A' \to \u B'}[\lambda x'. \sem{M'}[\sem{\Phi}]]\tag{$\to\eta$}
    \end{align*}
  \item We need to show
    \[
    \inferrule
    {\sem{M}[\sem{\Psi}] \ltdyn \sdncast{A \to \u B}{A' \to \u B'}[\sem{M'}[\sem{\Phi}]]\\
     \supcast{A}{A'}[\sem{V}] \ltdyn \sem{V'}[\sem{\Phi}]}
    {\sem{M}[\sem{\Psi}]\,\sem{V} \ltdyn \sdncast{\u B}{\u B'}[\sem{M'}[\sem{\Phi}]\, \sem{V'}[\sem{\Phi}]]}
    \]
    We proceed:
    \begin{align*}
      &\sem{M}[\sem{\Psi}]\,\sem{V}\\
      &\ltdyn
      (\sdncast{A \to \u B}{A' \to \u B'}[\sem{M'}[\sem{\Phi}]])\,\sem{V}\tag{IH}\\
      &\equidyn
      \sdncast{\u B}{\u B'}[\sem{M'}[\sem{\Phi}]\,(\supcast{A}{A'}{\sem{V}})]\tag{cast reduction}\\
      &\ltdyn
      \sdncast{\u B}{\u B'}[\sem{M'}[\sem{\Phi}]\,\sem{V'}[\sem{\Phi}]] \tag{IH}
    \end{align*}
  \item We need to show
    \[
    \inferrule
    {\supcast{A}{A'}[\sem{V}] \ltdyn \sem{V'}[\sem{\Phi}]}
    {\ret\sem{V}\ltdyn \sdncast{\u F A}{\u FA'}[\ret\sem{V'}[\sem{\Phi}]]}
    \]
    By hom-set definition of adjunction \ref{lem:hom-set-adj}
  \item We need to show
    \[
    \inferrule
    {\sem{M}[\sem{\Psi}] \ltdyn \sdncast{\u F A}{\u F A'}[\sem{M'}[\Phi]]\\
      \sem{N} \ltdyn \sdncast{\u B}{\u B'}[\sem{N}[\Phi][\supcast{A}{A'} x/x']]}
    {\bindXtoYinZ {\sem{M}[\sem{\Psi}]} x {\sem{N}}
    \ltdyn
    \sdncast{\u B}{\u B'}[{\bindXtoYinZ {\sem{M'}[\sem{\Phi}]} {x'} {\sem{N'}[\sem{\Phi}]}}]}
    \]
    We proceed:
    \begin{align*}
      &\bindXtoYinZ {\sem{M}[\sem{\Psi}]} x {\sem{N}}\\
      &\ltdyn \bindXtoYinZ {\sdncast{\u F A}{\u F A'}[\sem{M'}[\Phi]]} x \sdncast{\u B}{\u B'}[\sem{N}[\Phi][\supcast{A}{A'} x/x']] \tag{IH, congruence}\\
      &\equidyn
      \bindXtoYinZ {\sdncast{\u F A}{\u F A'}[\sem{M'}[\Phi]]} x\\
      &\qquad \bindXtoYinZ {\ret\supcast{A}{A'}[x]} {x'}
      \sdncast{\u B}{\u B'}[\sem{N}[\Phi]] \tag{$\u F\beta$}\\
      & \ltdyn \bindXtoYinZ {\sem{M'}[\Phi]} {x'} \sdncast{\u B}{\u B'}[\sem{N}[\Phi]] \tag{Projection}\\
      & \equidyn  \sdncast{\u B}{\u B'}[\bindXtoYinZ {\sem{M'}[\Phi]} {x'} \sem{N}[\Phi]] \tag{commuting conversion}
    \end{align*}
  \end{enumerate}
\end{longproof}
\end{longonly}

As a corollary, we have the following conservativity result, which says
that the homogeneous term dynamisms in GTT are sound and complete for
inequalities in \cbpvstar.
\begin{corollary}[Conservativity] \label{thm:gtt-cbpvstar-conservativity}
  If $\Gamma \mid \Delta \vdash E, E' : T$ are two terms of the same
  type in the intersection of GTT and \cbpvstar, then $\Gamma \mid
  \Delta \vdash E \ltdyn E' : T$ is provable in GTT iff it is
  provable in \cbpvstar.
\end{corollary}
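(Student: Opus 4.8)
The plan is to derive this almost entirely from the Axiomatic Graduality theorem (Theorem~\ref{thm:axiomatic-graduality}) and the Identity Expansion lemma (Lemma~\ref{lem:ident-expansion}), after observing that the contract translation acts as the identity on the cast-free, dynamic-type-free fragment. \textbf{The direction ``provable in \cbpvstar\ implies provable in GTT''} is the routine one: modulo the recursive and corecursive types $\mu X. A$ and $\nu\,\u Y.\u B$---which serve only to implement the dynamic types and may be taken to be absent in a derivation whose two endpoints lie in the common fragment---\cbpvstar\ is literally a sub-theory of GTT. Its $\beta\eta$ axioms, its error axioms, and its structural and congruence rules are exactly the \emph{homogeneous} instances (those in which the type and context dynamisms are given by reflexivity) of the corresponding rules of GTT's term dynamism, so such a \cbpvstar\ derivation of $\Gamma \mid \Delta \vdash E \ltdyn E' : T$ is, read literally, a GTT derivation.

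\textbf{The substantive direction is ``provable in GTT implies provable in \cbpvstar''.} Fix any dynamic type interpretation $\rho$, say the natural one of Definition~\ref{def:natural-type-interp}; nothing below depends on the choice. The key preliminary observation is that, since $E$, $E'$ and $T$ contain no casts and mention neither $\dynv$ nor $\dync$, the contract translation is the \emph{syntactic} identity on them---$\sem{T} = T$, $\sem{E} = E$, $\sem{E'} = E'$---by a trivial induction, because the translation is defined by congruence on every non-cast term former and reinterprets only $\dynv$, $\dync$ and the cast forms. Now suppose $\Gamma \mid \Delta \vdash E \ltdyn E' : T$ is provable in GTT; under the homogeneous-abbreviation convention this is the judgement $\Gamma \ltdyn \Gamma \mid \Delta \ltdyn \Delta \vdash E \ltdyn E' : T \ltdyn T$, to which Theorem~\ref{thm:axiomatic-graduality} applies. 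In the computation case ($T = \u B$ and $E, E' = M, M'$), the theorem yields $\sem{\Gamma} \mid \sem{\Delta} \vdash \sem{M}[\sem{\Psi}] \ltdyn \sdncast{\u B}{\u B}[\sem{M'}[\sem{\Phi}]] : \sem{\u B}$, where $\Phi$ and $\Psi$ are the \emph{reflexive} context dynamisms; thus $\sem{\Phi}$ substitutes the identity upcast $\sem{\upcast{A}{A}}$ for each variable of $\Gamma$, and $\sem{\Psi}$ substitutes the identity downcast $\sdncast{\u B_0}{\u B_0}$ into the stoup (if present). By Lemma~\ref{lem:ident-expansion}, $\sem{\upcast{A}{A}} \equidyn x$ and $\sdncast{\u B_0}{\u B_0} \equidyn \bullet$, so all of these substitutions, together with the outer cast $\sdncast{\u B}{\u B}$, are identities; combined with $\sem{M} = M$, $\sem{M'} = M'$ and $\sem{\u B} = \u B$ this is exactly $\Gamma \mid \Delta \vdash M \ltdyn M' : \u B$ in \cbpvstar. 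The value case is identical, using the value half of Theorem~\ref{thm:axiomatic-graduality} together with $\sem{\upcast{A}{A}} \equidyn x$.

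\textbf{Where the difficulty lies.} Essentially all of the work has already been discharged in the proofs of Axiomatic Graduality and Identity Expansion, which themselves rest on the fact (Lemma~\ref{lem:casts-are-ep-pairs}) that the translated casts form embedding--projection pairs. Within this corollary the only points needing care are bookkeeping: that the translation really is the syntactic identity on the common fragment, so there is no residual mismatch between $E$ and $\sem{E}$; and that the \emph{reflexive} instance of the heterogeneous graduality statement collapses---via Identity Expansion---to a homogeneous \cbpvstar\ inequality with no leftover contracts. One must also run the argument for both judgement forms, since Theorem~\ref{thm:axiomatic-graduality} places an upcast on the right in the value case and a downcast on the right in the computation case; in each, the reflexive specialization produces precisely an identity cast, which Identity Expansion erases.
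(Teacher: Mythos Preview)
Your proposal is correct and follows essentially the same approach as the paper's proof, which is a two-sentence sketch invoking axiomatic graduality for the forward direction (together with the fact that identity casts are identities, i.e., Identity Expansion) and the syntactic-subset observation for the reverse direction. You have simply unpacked these steps in more detail---in particular, making explicit that the contract translation is the syntactic identity on the common fragment and that the reflexive instance of Theorem~\ref{thm:axiomatic-graduality} collapses via Lemma~\ref{lem:ident-expansion} to a homogeneous \cbpvstar\ inequality.
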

\begin{proof}
  The reverse direction holds because \cbpvstar\ is a syntactic subset of
  GTT. The forward direction holds by axiomatic graduality and the
  fact that identity casts are identities.
\end{proof}

\section{Complex Value/Stack Elimination}
\label{sec:complex}

Next, to bridge the gap between the semantic notion of complex value
and stack with the more rigid operational notion, we perform a
complexity-elimination pass.
This translates a computation with complex values in it to an equivalent
computation without complex values: i.e., all pattern matches take place
in computations, rather than in values, and translates a term dynamism
derivation that uses complex stacks to one that uses only ``simple''
stacks without pattern-matching and computation introduction forms.
\begin{longonly}
  Stacks do not appear anywhere in the grammar of terms, but they are
used in the equational theory (computation $\eta$ rules and error
strictness).
\end{longonly}
\ This translation clarifies the behavioral meaning of complex values and
stacks, following \citet{munchmaccagnoni14nonassociative,
  fuhrmann1999direct}, and therefore of upcasts and downcasts.
\begin{longonly}
This is related to completeness of focusing: it moves inversion rules
outside of focus phases.
\end{longonly}

\begin{longonly}
The syntax of operational CBPV is as in
Figure~\ref{fig:gtt-syntax-and-terms} (unshaded), but with recursive
types added as in Section~\ref{sec:cbpvstar}, and with values and stacks
restricted
as in Figure~\ref{fig:operation-cbpv-syntax}.
  
\begin{figure}
\begin{small}
  \begin{mathpar}
  \begin{array}{lcl}
    A & \bnfdef & X \mid \mu X.A \mid U \u B \mid 0 \mid A_1 + A_2 \mid 1 \mid A_1 \times A_2 \\
    \u B  & ::= & \u Y\mid \nu \u Y. \u B \mid \u F A \mid \top \mid \u B_1 \with \u B_2 \mid A \to \u B\\
    \Gamma & ::= & \cdot \mid \Gamma, x : A \\
    \Delta  & ::= & \cdot \mid \bullet : \u B \\
    V  & ::= & x \mid \rollty{\mu X.A}V \mid \inl{V} \mid \inr{V} \mid () \mid (V_1,V_2)\mid \thunk{M}
    \\
    M & ::= & \err_{\u B} \mid \letXbeYinZ V x M \mid \pmmuXtoYinZ V x M \mid \rollty{\nu \u Y.\u B} M \mid \unroll M \mid \abort{V} \mid \\
    & & \caseofXthenYelseZ V {x_1. M_1}{x_2.M_2} \mid \pmpairWtoinZ V M \mid \pmpairWtoXYinZ V x y M
    \mid \force{V} \mid \\
    & & \ret{V} \mid \bindXtoYinZ{M}{x}{N} \mid \lambda x:A.M \mid M\,V \mid \emptypair \mid \pair{M_1}{M_2} \mid \pi M \mid \pi' M
    \\
    S & ::= & \bullet \mid \bindXtoYinZ S x M \mid S\, V \mid \pi S \mid \pi' S \mid \unrollty{\nu \u Y.\u B}{S}
  \end{array}
  \end{mathpar}
  \end{small}
\caption{Operational CBPV Syntax}
\label{fig:operation-cbpv-syntax}
\end{figure}
In \cbpv, values include only introduction forms, as usual for values in
operational semantics, and \cbpv\/ stacks consist only of elimination
forms for computation types
(the syntax of \cbpv\/ enforces an A-normal
form, where only values can be pattern-matched on, so $\kw{case}$ and
$\kw{split}$ are not evaluation contexts in the operational semantics).

\begin{figure}
\begin{small}
  \begin{mathpar}
    \inferrule
    {}
    {\Gamma,x : A,\Gamma' \vdash x \ltdyn x : A}

    \inferrule
    {}
    {\Gamma\pipe \bullet : \u B \vdash \bullet \ltdyn \bullet : \u B}

    \inferrule
    {}
    {\Gamma \vdash \err \ltdyn \err : \u B}

    \inferrule
    {\Gamma \vdash V \ltdyn V' : A \and
      \Gamma, x : A \vdash M \ltdyn M' : \u B
    }
    {\Gamma \vdash \letXbeYinZ V x M \ltdyn \letXbeYinZ {V'} {x} {M'} : \u B}

    \inferrule
    {\Gamma \vdash V \ltdyn V' : 0}
    {\Gamma \vdash \abort V \ltdyn \abort V' : \u B}

    \inferrule
    {\Gamma \vdash V \ltdyn V' : A_1}
    {\Gamma \vdash \inl V \ltdyn \inl V' : A_1 + A_2}

    \inferrule
    {\Gamma \vdash V \ltdyn V' : A_2}
    {\Gamma \vdash \inr V \ltdyn \inr V' : A_1 + A_2}

    \inferrule
    {\Gamma \vdash V \ltdyn V' : A_1 + A_2\and
      \Gamma, x_1 : A_1 \vdash M_1 \ltdyn M_1' : \u B\and
      \Gamma, x_2 : A_2 \vdash M_2 \ltdyn M_2' : \u B
    }
    {\Gamma \vdash \caseofXthenYelseZ V {x_1. M_1}{x_2.M_2} \ltdyn \caseofXthenYelseZ {V'} {x_1. M_1'}{x_2.M_2'} : \u B}

    \inferrule
    {}
    {\Gamma \vdash () \ltdyn () : 1}

    \inferrule
    {\Gamma \vdash V_1 \ltdyn V_1' : A_1\and
      \Gamma\vdash V_2 \ltdyn V_2' : A_2}
    {\Gamma \vdash (V_1,V_2) \ltdyn (V_1',V_2') : A_1 \times A_2}

    \inferrule
    {\Gamma \vdash V \ltdyn V' : A_1 \times A_2\and
      \Gamma, x : A_1,y : A_2 \vdash M \ltdyn M' : \u B
    }
    {\Gamma \vdash \pmpairWtoXYinZ V x y M \ltdyn \pmpairWtoXYinZ {V'} {x} {y} {M'} : \u B}

    \inferrule
    {\Gamma \vdash V \ltdyn V' : A[\mu X.A/X]}
    {\Gamma \vdash \rollty{\mu X.A} V \ltdyn \rollty{\mu X.A} V' : \mu X.A }
    
    \inferrule
    {\Gamma \vdash V \ltdyn V' : \mu X. A\and
      \Gamma, x : A[\mu X. A/X] \vdash M \ltdyn M' : \u B}
    {\Gamma \vdash \pmmuXtoYinZ V x M \ltdyn \pmmuXtoYinZ {V'} {x} {M'} : \u B}

    \inferrule
    {\Gamma \vdash M \ltdyn M' : \u B}
    {\Gamma \vdash \thunk M \ltdyn \thunk M' : U \u B}

    \inferrule
    {\Gamma \vdash V \ltdyn V' : U \u B}
    {\Gamma \vdash \force V \ltdyn \force V' : \u B}

    \inferrule
    {\Gamma \vdash V \ltdyn V' : A}
    {\Gamma \vdash \ret V \ltdyn \ret V' : \u F A}

    \inferrule
    {\Gamma \vdash M \ltdyn M' : \u F A\and
      \Gamma, x: A \vdash N \ltdyn N' : \u B}
    {\Gamma \vdash \bindXtoYinZ M x N \ltdyn \bindXtoYinZ {M'} {x} {N'} : \u B}

    \inferrule
    {\Gamma, x: A \vdash M \ltdyn M' : \u B}
    {\Gamma \vdash \lambda x : A . M \ltdyn \lambda x:A. M' : A \to \u B}

    \inferrule
    {\Gamma \vdash M \ltdyn M' : A \to \u B\and
      \Gamma \vdash V \ltdyn V' : A}
    {\Gamma \vdash M\,V \ltdyn M'\,V' : \u B }

    \inferrule
    {\Gamma \vdash M_1 \ltdyn M_1' : \u B_1\and
      \Gamma \vdash M_2 \ltdyn M_2' : \u B_2}
    {\Gamma \vdash \pair {M_1} {M_2} \ltdyn \pair {M_1'} {M_2'} : \u B_1 \with \u B_2}

    \inferrule
    {\Gamma \vdash M \ltdyn M' : \u B_1 \with \u B_2}
    {\Gamma \vdash \pi M \ltdyn \pi M' : \u B_1}

    \inferrule
    {\Gamma \vdash M \ltdyn M' : \u B_1 \with \u B_2}
    {\Gamma \vdash \pi' M \ltdyn \pi' M' : \u B_2}

    \inferrule
    {\Gamma \vdash M \ltdyn M' : \u B[{\nu \u Y. \u B}/\u Y]}
    {\Gamma \vdash \rollty{\nu \u Y. \u B} M \ltdyn \rollty{\nu \u Y. \u B} M' : {\nu \u Y. \u B}}

    \inferrule
    {\Gamma \vdash M \ltdyn M' : {\nu \u Y. \u B}}
    {\Gamma \vdash \unroll M \ltdyn \unroll M' : \u B[{\nu \u Y. \u B}/\u Y]}
  \end{mathpar}
  \end{small}
  \caption{CBPV Inequational Theory (Congruence Rules)}
\end{figure}

\begin{figure}
\begin{small}
  \begin{mathpar}
    \inferrule
    {}
    {\caseofXthenYelseZ{\inl V}{x_1. M_1}{x_2. M_2} \equidyn M_1[V/x_1]}

    \inferrule
    {}
    {\caseofXthenYelseZ{\inr V}{x_1. M_1}{x_2. M_2} \equidyn M_2[V/x_2]}

    \inferrule
    {\Gamma, x : A_1 + A_2 \vdash M : \u B}
    {\Gamma, x : A_1 + A_2 \vdash M \equidyn \caseofXthenYelseZ x {x_1. M[\inl x_1/x]}{x_2. M[\inr x_2/x]} : \u B}

    \inferrule
    {}
    {\pmpairWtoXYinZ{(V_1,V_2)}{x_1}{x_2}{M} \equidyn M[V_1/x_1,V_2/x_2]}

    \inferrule
    {\Gamma, x : A_1 \times A_2 \vdash M : \u B}
    {\Gamma, x : A_1 \times A_2 \vdash M \equidyn \pmpairWtoXYinZ x {x_1}{x_2} M[(x_1,x_2)/x] : \u B}

    \inferrule
    {\Gamma, x : 1 \vdash M : \u B}
    {\Gamma, x : 1 \vdash M \equidyn M[()/x] : \u B}

    \inferrule
    {}
    {\pmmuXtoYinZ{\rollty A V}{x}{M} \equidyn M[V/x]}

    \inferrule
    {\Gamma, x : \mu X. A \vdash M :\u B}
    {\Gamma, x : \mu X. A \vdash M \equidyn \pmmuXtoYinZ{x}{y}{M[\rollty{\mu X.A} y/x]} : \u B}

    \inferrule
    {}
    {\force\thunk M \equidyn M}

    \inferrule
    {\Gamma \vdash V : U \u B}
    {\Gamma \vdash V \equidyn \thunk\force V : U \u B}

    \inferrule
    {}
    {\letXbeYinZ V x M \equidyn M[V/x]}

    \inferrule
    {}
    {\bindXtoYinZ {\ret V} x M \equidyn M[V/x]}

    \inferrule
    {}
    {\Gamma \pipe \bullet : \u F A \vdash \bullet \equidyn \bindXtoYinZ \bullet x \ret x : \u F A}

    \inferrule
    {}
    {(\lambda x:A. M)\,V \equidyn M[V/x]}

    \inferrule
    {\Gamma \vdash M : A \to \u B}
    {\Gamma \vdash M \equidyn \lambda x:A. M\,x : A \to \u B}

    \inferrule
    {}
    {\pi \pair{M}{M'} \equidyn M}

    \inferrule
    {}
    {\pi' \pair{M}{M'} \equidyn M'}

    \inferrule
    {\Gamma \vdash M : \u B_1 \with \u B_2}
    {\Gamma \vdash M \equidyn\pair{\pi M}{\pi' M} : \u B_1 \with \u B_2}

    \inferrule
    {\Gamma \vdash M : \top}
    {\Gamma \vdash M \equidyn \{\} : \top}

    \inferrule
    {}
    {\unroll \rollty{\u B} M \equidyn M}

    \inferrule
    {\Gamma \vdash M : \nu \u Y. \u B}
    {\Gamma \vdash M \equidyn \rollty{\nu \u Y.\u B}\unroll M : \nu \u Y. \u B}
  \end{mathpar}
  \end{small}
  \caption{CBPV $\beta, \eta$ rules}
\end{figure}

\begin{figure}
\begin{small}
  \begin{mathpar}
    \inferrule
    {}
    {\Gamma \vdash \err \ltdyn M : \u B}

    \inferrule
    {}
    {\Gamma \vdash S[\err] \equidyn \err : \u B}

    \inferrule
    {}
    {\Gamma \vdash M \ltdyn M : \u B}

    \inferrule
    {}
    {\Gamma \vdash V \ltdyn V : A}

    \inferrule
    {}
    {\Gamma \pipe \u B \vdash S \ltdyn S : \u B'}

    \inferrule
    {\Gamma \vdash M_1 \ltdyn M_2 : \u B \and \Gamma \vdash M_2 \ltdyn M_3 : \u B}
    {\Gamma \vdash M_1 \ltdyn M_3 : \u B}

    \inferrule
    {\Gamma \vdash V_1 \ltdyn V_2 : A \and \Gamma \vdash V_2 \ltdyn V_3 : A}
    {\Gamma \vdash V_1 \ltdyn V_3 : A}

    \inferrule
    {\Gamma \pipe \u B \vdash S_1 \ltdyn S_2 : \u B' \and \Gamma \pipe \u B \vdash S_2 \ltdyn S_3 : \u B'}
    {\Gamma \pipe \u B \vdash S_1 \ltdyn S_3 : \u B'}

    \inferrule
    {\Gamma, x : A \vdash M_1 \ltdyn M_2 : \u B \and
      \Gamma \vdash V_1 \ltdyn V_2 : A}
    {\Gamma \vdash M_1[V_1/x] \ltdyn M_2[V_2/x] : \u B}

    \inferrule
    {\Gamma, x : A \vdash V_1' \ltdyn V_2' : A' \and
      \Gamma \vdash V_1 \ltdyn V_2 : A}
    {\Gamma \vdash V_1'[V_1/x] \ltdyn V_2'[V_2/x] : A'}

    \inferrule
    {\Gamma, x : A \pipe \u B \vdash S_1 \ltdyn S_2 : \u B' \and
      \Gamma \vdash V_1 \ltdyn V_2 : A}
    {\Gamma \pipe \u B \vdash S_1[V_1/x] \ltdyn S_2[V_2/x] : \u B'}

    \inferrule
    {\Gamma \pipe \u B \vdash S_1 \ltdyn S_2 : \u B' \and
      \Gamma \vdash M_1 \ltdyn M_2 : \u B}
    {\Gamma \vdash S_1[M_1] \ltdyn S_2[M_2] : \u B'}

    \inferrule
    {\Gamma \pipe \u B' \vdash S_1' \ltdyn S_2' : \u B'' \and
      \Gamma \pipe \u B \vdash S_1 \ltdyn S_2 : \u B'}
    {\Gamma \pipe \u B \vdash S_1'[S_1] \ltdyn S_2'[S_2] : \u B''}
  \end{mathpar}
  \end{small}
  \caption{CBPV logical and error rules}
\end{figure}

\end{longonly}

\citet{levy03cbpvbook} translates \cbpvstar\/ to \cbpv, but not does not prove
the inequality preservation that we require here, so we give
an
alternative translation for which this property is easy to
verify \ifshort (see the extended version for full details)\fi.
We translate both complex values and complex
stacks to fully general computations, so that computation
pattern-matching can replace the pattern-matching in complex values/stacks.  
\begin{longonly}
For example, for a closed value, we could ``evaluate away''
the complexity and get a closed simple value (if we don't use $U$), but
for open terms, evaluation will get ``stuck'' if we pattern match on
a variable---so not every complex value can be translated to a value in
\cbpv.  
\end{longonly}
More formally, we translate a \cbpvstar\/ complex value $V : A$ to a
\cbpv\/ computation $\simp{V} : \u F A$ that in \cbpvstar\ is equivalent
to $\ret V$.
Similarly, we translate a \cbpvstar\/ complex stack $S$ with hole
$\bullet : \u B$ to a \cbpv\  computation $\simp{S}$ with a free
  variable $z : U \u B$ such that in \cbpvstar, $\simp S \equidyn
S[\force z]$.
Computations $M : \u B$ are translated to computations $\simp{M}$ with
the same type.

\begin{longonly}
The \emph{de-complexification} procedure is defined as follows.
We note that this translation is not the one presented in
\citet{levy03cbpvbook}, but rather a more inefficient version that, in CPS
terminology, introduces many administrative redices.
Since we are only proving results up to observational equivalence
anyway, the difference doesn't change any of our theorems, and makes
some of the proofs simpler.
\begin{definition}[De-complexification]
  We define 
  \begin{small}
  \begin{mathpar}
    \begin{array}{rcl}
      \simp \bullet &=& \force z\\
      \simp x &=& \ret x\\\\
      
      \simpp {\ret V} &= & \bindXtoYinZ {\simp V} x \ret x\\
      \simpp {M\, V}  &=& \bindXtoYinZ {\simp V} x \simp M\, x\\\\

      \simpp{\force V} &=& \bindXtoYinZ {\simp V} x \force x\\
      \simpp{\absurd V} &=& \bindXtoYinZ {\simp V} x \absurd x\\
      \simpp{\caseofXthenYelseZ V {x_1. E_1}{x_2. E_2}} &=&
      \bindXtoYinZ {\simp V} x \caseofXthenYelseZ x {x_1. \simp {E_1}}{x_2. \simp {E_2}}\\
      \simpp{\pmpairWtoinZ V {E}} &=&
      \bindXtoYinZ V w {\pmpairWtoinZ w \simp {E}}\\
      \simpp{\pmpairWtoXYinZ V x y {E}} &=&
      \bindXtoYinZ V w {\pmpairWtoXYinZ w x y \simp {E}}\\
      \simpp{\pmmuXtoYinZ V x E} &=& \bindXtoYinZ {\simp V} y \pmmuXtoYinZ y x \simp{E}\\\\

      \simpp{\inl V} &=& \bindXtoYinZ {\simp V} x \ret\inl x\\
      \simpp{\inr V} &=& \bindXtoYinZ {\simp V} x \ret\inr x\\
      \simp{()} &=& \ret ()\\
      \simp{(V_1,V_2)} &=& \bindXtoYinZ {\simp {V_1}}{x_1} \bindXtoYinZ {\simp {V_2}} {x_2} \ret (x_1,x_2)\\
      \simpp{\thunk M} &=& \ret \thunk \simp M\\
      \simpp{\roll V} &=& \bindXtoYinZ {\simp V} x \roll x\\
    \end{array}
  \end{mathpar}
  \end{small}
\end{definition}

The translation is type-preserving and the identity from \cbpvstar's point of view
\begin{lemma}[De-complexification De-complexifies]
  For any \cbpvstar\/ term $\Gamma \pipe \Delta \vdash E : T$, $\simp E$
  is a term of \cbpv\/ satisfying $\Gamma, \simp\Delta \vdash \simp E :
  \simp T$ where
  $\simp{\cdot} = \cdot$ $\simpp{\bullet:\u B} = z:U\u B$,
  $\simp{\u B} = \u B$, $\simp A = \u F A$.
\end{lemma}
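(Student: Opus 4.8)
The statement to prove is the final Lemma: \emph{De-complexification De-complexifies}, which says that for any \cbpvstar\ term $\Gamma \pipe \Delta \vdash E : T$, the translated term $\simp E$ satisfies $\Gamma, \simp\Delta \vdash \simp E : \simp T$, where $\simp{\cdot} = \cdot$, $\simpp{\bullet : \u B} = z : U \u B$, $\simp{\u B} = \u B$, and $\simp A = \u F A$.

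\medskip

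\textbf{Proof proposal.}
The plan is to proceed by induction on the typing derivation of $\Gamma \pipe \Delta \vdash E : T$ in \cbpvstar, following exactly the clauses of the definition of de-complexification. Since the translation is defined by recursion on the syntax of $E$, the case analysis on the last typing rule applied lines up one-to-one with the defining equations of $\simp{(-)}$. For each case I would unfold the definition of $\simp E$, apply the induction hypothesis to each immediate subterm, and then check that the \cbpv\ typing rules (including the rules for recursive types added in Section~\ref{sec:cbpvstar}, and the $\kw{bind}$/$\kw{to}$ rule for $\u F$) assemble these into a derivation of the claimed type. The key bookkeeping point is that a complex value $V : A$ becomes a computation $\simp V : \u F A$, a complex stack with hole $\bullet : \u B$ becomes a computation with a \emph{value} variable $z : U \u B$ free (so the stoup is emptied and replaced by an ordinary hypothesis), and a genuine computation $M : \u B$ stays at type $\u B$; the type translation $\simp{(-)}$ on contexts and judgement forms is set up precisely so these three conventions are internally consistent.

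\medskip

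Concretely, the interesting cases are those where the translation inserts a $\kw{bind}$ to sequence the evaluation of a translated subvalue: e.g.\ $\simpp{\ret V} = \bindXtoYinZ{\simp V}{x}{\ret x}$, where the IH gives $\Gamma \vdash \simp V : \u F A$ and the $\u F$E rule with $\Gamma, x : A \vdash \ret x : \u F A$ closes the case; and similarly for $\simpp{M\,V}$, $\simpp{\force V}$, $\simpp{\inl V}$, $\simpp{(V_1,V_2)}$, $\simpp{\roll V}$, and the pattern-matching forms $\simpp{\caseofXthenYelseZ{V}{x_1.E_1}{x_2.E_2}}$, $\simpp{\pmpairWtoXYinZ{V}{x}{y}{E}}$, $\simpp{\pmmuXtoYinZ{V}{x}{E}}$, where after binding $\simp V$ we are in a context with the appropriate value variable in scope and can apply the \cbpv\ elimination rule, using the IH on the branch(es). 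The base cases are immediate: $\simp x = \ret x$ typechecks at $\u F A$ from $\Gamma, x : A \vdash x : A$, $\simp\bullet = \force z$ typechecks at $\u B$ from $z : U \u B$, and $\err_{\u B}$ translates to itself. For $\simpp{\thunk M} = \ret\thunk{\simp M}$ we use the IH at $\simp{\u B} = \u B$ and then $U$I followed by $\u F$I. The $\abort$ case uses the IH to get $\simp V : \u F 0$, binds it, and applies $0$E.

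\medskip

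One subtlety I would be careful about is that de-complexification acts on \cbpvstar\ but its \emph{target} is operational \cbpv, whose grammar of values and stacks is restricted (Figure~\ref{fig:operation-cbpv-syntax}). So in each case I must confirm that the term produced actually lies in the \cbpv\ fragment: the $\kw{bind}$s we introduce are legal \cbpv\ computations, the residual values appearing (like $\inl x$, $(x_1,x_2)$, $\roll x$, $\thunk{\simp M}$) are \cbpv\ values because their immediate subterms are variables or thunks of \cbpv\ computations, and the stacks that appear ($\force z$, and the evaluation contexts built around it) are \cbpv\ stacks. The recursive-type clauses require noting that $\mu X.A$ unfolds to $A[\mu X.A/X]$ in $\mu$E and dually for $\nu$, which the \cbpv\ rules handle directly. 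I do not anticipate a genuine obstacle here — this lemma is essentially a routine type-preservation check — so the ``hard part'' is only the discipline of verifying the fragment constraint and the three-way type-translation convention uniformly across all the clauses; the real content (that $\simp E \equidyn \ret V$ resp.\ $S[\force z]$ in \cbpvstar, and that the inequational theory is preserved) comes in the subsequent lemmas, not this one.
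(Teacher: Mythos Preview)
Your proposal is correct and is exactly the expected routine verification; the paper in fact gives no proof for this lemma at all, treating it as immediate from the definition. Your induction on the typing derivation, with the bookkeeping about $\simp A = \u F A$, $\simp{\u B} = \u B$, and $\simpp{\bullet:\u B} = z:U\u B$, together with the check that each produced term lies in the restricted operational \cbpv\ fragment, is precisely the standard argument one would supply if asked to spell it out.
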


\begin{lemma}[De-complexification is Identity in \cbpvstar]
  Considering CBPV as a subset of \cbpvstar\, we have
  \begin{enumerate}
  \item If $\Gamma \pipe \cdot \vdash M : \u B$  then $M \equidyn \simp M$.
  \item If $\Gamma \pipe \Delta \vdash S : \u B$ then $S[\force z] \equidyn \simp S$.
  \item If $\Gamma \vdash V : A$ then $\ret V \equidyn \simp V$.
  \end{enumerate}
  Furthermore, if $M, V, S$ are in \cbpv, the proof holds in \cbpv.
\end{lemma}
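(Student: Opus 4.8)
The plan is to prove all three parts simultaneously by structural induction on the \cbpvstar\ expression, distinguishing three cases according to whether the expression is a complex value (goal $\ret V \equidyn \simp V$), a computation (goal $M \equidyn \simp M$), or a complex stack (goal $S[\force z] \equidyn \simp S$). In each case I would unfold the definition of the translation $\simp{\cdot}$ and rewrite using only the $\beta\eta$ axioms and congruence rules of \cbpvstar; the workhorse is the $\u F$ $\beta$-law $\bindXtoYinZ{\ret V}{x}{N} \equidyn N[V/x]$, which collapses the administrative binds that de-complexification inserts.

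First I would record a small preliminary fact: each value destructor commutes with $\ret{(-)}$ --- for example $\ret{(\pmpairWtoXYinZ V x y {V'})} \equidyn \pmpairWtoXYinZ V x y {\ret{V'}}$ and $\ret{(\caseofXthenYelseZ V {x_1.V_1}{x_2.V_2})} \equidyn \caseofXthenYelseZ V {x_1.\ret{V_1}}{x_2.\ret{V_2}}$, and likewise for the eliminators of $1$ and $\mu X.A$, whenever the bound variables do not occur in the surrounding context. Each of these follows by applying the corresponding $\eta$ rule to $V$ together with a $\beta$ step, so it is provable in \cbpvstar.

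The induction itself is then routine bookkeeping. The base cases $\simp x = \ret x$, $\simp{\bullet} = \force z$, and $\simp{\err} = \err$ are immediate. For the computation and stack forms that do not destructure a value (abstraction, the two projections, lazy pairing, sequencing, $\unroll$, and the introduction of a corecursive $\nu$-type) the translation is homomorphic, so the goal follows from the inductive hypotheses on the immediate subterms plus congruence --- for a stack, using that substituting $\force z$ for the hole distributes over these constructors. For the value introduction forms ($\inl$, $\inr$, eager pairing, $\thunk$, and the roll of a recursive $\mu$-type) and for $\ret V$ and $\force V$, the translation produces a term of the form $\bindXtoYinZ{\simp V}{x}{\dots}$ (possibly nested); applying the inductive hypothesis $\simp V \equidyn \ret V$ and then $\u F\beta$ reduces it to the expected term (e.g.\ $\simpp{\inl V} = \bindXtoYinZ{\simp V}{x}{\ret{(\inl x)}} \equidyn \bindXtoYinZ{\ret V}{x}{\ret{(\inl x)}} \equidyn \ret{(\inl V)}$). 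For the value destructors (case analysis, splitting a pair, the $1$- and $\mu$-eliminators, and application) the translation wraps the destructor in $\bindXtoYinZ{\simp V}{x}{\dots}$; here I would use $\simp V \equidyn \ret V$, $\u F\beta$, and the inductive hypotheses on the branches, and in the subcase where those branches are themselves complex values I would additionally invoke the commuting-conversion facts above to pull $\ret{(-)}$ back outside the destructor. The ``furthermore'' clause is then free: if the expression already lies in \cbpv\ then so does its translation, by inspection, and every equation used above ($\u F\beta$, the $\eta$ laws, their commuting consequences, and congruence) is part of the equational theory of \cbpv\ as well, so the derivation applies verbatim.

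I do not anticipate a real obstacle. The only step that requires any thought beyond unfolding definitions and applying $\u F\beta$ is the complex-value destructor cases, and there the needed commuting conversions are themselves immediate corollaries of the $\eta$ rules; everything else is mechanical.
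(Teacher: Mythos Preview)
Your proposal is correct. The paper does not give a proof of this lemma at all---it is stated without proof---so there is nothing to compare against; your mutual structural induction, collapsing the administrative $\bindXtoYinZ{\simp V}{x}{-}$ via $\u F\beta$ after the inductive hypothesis $\simp V \equidyn \ret V$, and invoking the $\eta$-derived commuting conversions only in the complex-value-branch cases, is exactly the intended routine argument. Your observation for the ``furthermore'' clause is also right: when the input is already a \cbpv\ term, the complex-value destructor cases never arise, so only $\u F\beta$ and congruence are used and the derivation lives in \cbpv.
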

\end{longonly}

Finally, we need to show that the translation preserves inequalities
($\simp{E} \ltdyn \simp{E'}$ if $E \ltdyn E'$), but because complex
values and stacks satisfy more equations than arbitrary computations in
the types of their translations do, we need to isolate the special
``purity'' property that their translations have.
We show that complex values are translated to computations that satisfy
\emph{thunkability}~\cite{munchmaccagnoni14nonassociative}, which
%
intuitively means $M$ should have no observable effects, and so
can be freely duplicated or discarded like a value.
In the inequational theory of \cbpv\/, this is defined by saying that
running $M$ to a value and then duplicating its value is the same as
running $M$ every time we need its value:
\iflong{
  \begin{definition}[Thunkable Computation]
    A computation $\Gamma \vdash M : \u FA$ is \emph{thunkable} if \\
\fi
  \[\Gamma \vdash \ret{( \thunk M)} \equidyn \bindXtoYinZ M x \ret{(\thunk (\ret x))} : \u FU\u F A\]
\iflong
  \end{definition}
\fi
Dually, we show that complex stacks are translated to computations that
satisfy (semantic) \emph{linearity}~\cite{munchmaccagnoni14nonassociative}, where intuitively a computation $M$
with a free variable $x : U \u B$ is linear in $x$ if $M$ behaves as if
when it is forced, the first thing it does is forces $x$, and that is the only time
it uses $x$.  This is described in the CBPV inequational theory as
follows:
\iflong
if we have a thunk $z : U\u F U \u B$, then either we can force
it now and pass the result to $M$ as $x$, or we can just run $M$ with a
thunk that will force $z$ each time $M$ is forced---but if $M$ forces
$x$ exactly once, first, these two are the same.
\begin{definition}[Linear Term]
  A term $\Gamma, x : U\u B \vdash M : \u C$ is \emph{linear in $x$}
  if\\
\fi
 \[ \Gamma, z : U\u FU\u B \vdash
  \bindXtoYinZ {\force z} x M
  \equidyn M[\thunk{(\bindXtoYinZ {(\force z)} x \force x)}]
  \]
\iflong
\end{definition}
\fi
\begin{longonly}
  Thunkability/linearity of the translations of complex values/stacks are
used to prove the preservation of the $\eta$ principles for positive
types and the strictness of complex stacks with respect to errors under
decomplexification.
\end{longonly}

\begin{shortonly}
\noindent Composing this with the translation from GTT to \cbpvstar\/
shows that \emph{GTT value upcasts are thunkable and computation
  downcasts are linear}, which justifies a number of program transformations.
\end{shortonly}

\begin{longonly}
We need a few lemmas about thunkables and linears to prove that complex
values become thunkable and complex stacks become linear.

First, the following lemma is useful for optimizing programs with
thunkable subterms.  Intuitively, since a thunkable has ``no effects''
it can be reordered past any other effectful binding.  Furhmann
\citep{fuhrmann1999direct} calls a morphism that has this property
\emph{central} (after the center of a group, which is those elements
that commute with every element of the whole group).
\begin{lemma}[Thunkable are Central]
  If $\Gamma \vdash M : \u F A$ is thunkable and $\Gamma \vdash N : \u
  F A'$ and $\Gamma , x:A, y:A' \vdash N' : \u B$, then
  \[
  \bindXtoYinZ M x \bindXtoYinZ N y N'
  \equidyn
 \bindXtoYinZ N y \bindXtoYinZ M x N'
  \]
\end{lemma}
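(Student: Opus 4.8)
The plan is to prove this by unfolding the definition of thunkability of $M$ and then massaging both sides using the $\beta\eta$ laws for $\u F$ and $U$ together with commuting conversions. First I would observe that the hypothesis ``$M$ is thunkable'' says precisely
\[
\ret{(\thunk M)} \equidyn \bindXtoYinZ M x \ret{(\thunk{(\ret x)})} : \u F U \u F A,
\]
so the idea is to bind this equidynamism in a context that reconstructs the two sides of the desired equation.

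Concretely, I would start from the right-hand side $\bindXtoYinZ N y {\bindXtoYinZ M x {N'}}$ and rewrite it. Since $M$ does not depend on $y$, I want to ``pull $M$ out'' past the bind on $N$. The trick is to route $M$ through a thunk: consider the computation
\[
\bindXtoYinZ {\ret{(\thunk M)}} w {\bindXtoYinZ N y {\bindXtoYinZ {\force w} x {N'}}},
\]
which by $\u F\beta$ (and $U\beta$, i.e.\ $\force\thunk M \equidyn M$) is equidynamic to the right-hand side $\bindXtoYinZ N y {\bindXtoYinZ M x {N'}}$. Now substitute the thunkability equation for $\ret{(\thunk M)}$ inside the first bind, obtaining
\[
\bindXtoYinZ {(\bindXtoYinZ M x {\ret{(\thunk{(\ret x)})}})} w {\bindXtoYinZ N y {\bindXtoYinZ {\force w} x {N'}}}.
\]
By a commuting conversion for $\u F$, this equals $\bindXtoYinZ M x {\bindXtoYinZ {\ret{(\thunk{(\ret x)})}} w {\bindXtoYinZ N y {\bindXtoYinZ {\force w} x {N'}}}}$, and then $\u F\beta$ followed by $U\beta$ collapses the inner $w$-bind, leaving $\bindXtoYinZ M x {\bindXtoYinZ N y {\bindXtoYinZ {\ret x} x {N'}}}$, which is $\bindXtoYinZ M x {\bindXtoYinZ N y {N'}}$ by $\u F\beta$ again --- the left-hand side. (Some care with variable renaming is needed so the inner $x$ does not capture; I would use a fresh name and then $\alpha$-convert.)

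The one subtlety I expect to be the main obstacle is the bookkeeping of \emph{which} bindings can legitimately commute past each other: the commuting conversion for $\u F$ is only a plain equidynamism when the term being hoisted does not mention the bound variable, so I need to be careful that at each step the subterm I move does not depend on the binder I move it across (in particular $N$ does not mention $x$, and $M$ does not mention $y$, by the well-scoping implicit in the statement). This is entirely routine given the $\beta\eta$ rules in Figure~\ref{fig:gtt-term-dyn-axioms}, but writing it so that every commuting conversion is manifestly applicable is the part that requires attention rather than insight. No properties of GTT beyond the CBPV $\beta\eta$ theory and the definition of thunkability are needed.
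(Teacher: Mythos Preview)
Your proposal is correct and follows essentially the same argument as the paper's proof: introduce a $\mathsf{thunk}/\mathsf{force}$ redex routed through a fresh bound variable $w$, apply the thunkability equation, use the $\u F$ commuting conversion ($\u F\eta$) to move the bind on $M$ across the bind on $N$, and then collapse with $\u F\beta$ and $U\beta$. The only cosmetic difference is direction---the paper calculates from the left-hand side to the right, while you start from the right---and your remark about variable freshness is exactly the bookkeeping the paper leaves implicit.
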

\begin{proof}
  \begin{align*}
    &\bindXtoYinZ M x \bindXtoYinZ N y N'\\
    &\equidyn
    \bindXtoYinZ M x \bindXtoYinZ N y \bindXtoYinZ {\force \thunk \ret x} x N' \tag{$U\beta,\u F\beta$}\\
    &\equidyn\bindXtoYinZ M x \bindXtoYinZ {\ret\thunk\ret x} w \bindXtoYinZ N y \bindXtoYinZ {\force w} x N' \tag{$\u F\beta$}\\
    &\equidyn\bindXtoYinZ {(\bindXtoYinZ M x {\ret\thunk\ret x})} w \bindXtoYinZ N y \bindXtoYinZ {\force w} x N' \tag{$\u F\eta$}\\
    &\equidyn\bindXtoYinZ {\ret \thunk M} w \bindXtoYinZ N y \bindXtoYinZ {\force w} x N' \tag{$M$ thunkable}\\    
    &\equidyn\bindXtoYinZ N y \bindXtoYinZ {\force \thunk M} x N' \tag{$\u F\beta$}\\
    &\equidyn\bindXtoYinZ N y \bindXtoYinZ M x N' \tag{$U\beta$}\\    
  \end{align*}
\end{proof}

Next, we show thunkables are closed under composition and that return
of a value is always thunkable.  This allows us to easily build up
bigger thunkables from smaller ones.
\begin{lemma}[Thunkables compose]
  If $\Gamma \vdash M : \u F A$ and $\Gamma, x : A \vdash N : \u F A'$
  are thunkable, then
  \[ \bindXtoYinZ M x N \]
  is thunkable.
\end{lemma}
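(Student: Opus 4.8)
The statement to prove is: if $\Gamma \vdash M : \u F A$ and $\Gamma, x : A \vdash N : \u F A'$ are both thunkable, then $\bindXtoYinZ M x N$ is thunkable.

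\textbf{Proof proposal.}
The plan is to verify the defining equation of thunkability directly, namely
\[
\Gamma \vdash \ret{(\thunk{(\bindXtoYinZ M x N)})} \equidyn \bindXtoYinZ {(\bindXtoYinZ M x N)} y \ret{(\thunk{(\ret y)})} : \u F U \u F A'.
\]
I would start from the right-hand side and push the outer $\mathsf{bind}$ inward using the commuting conversion for $\u F$ (associativity of $\mathsf{bind}$), rewriting it as $\bindXtoYinZ M x {\bindXtoYinZ N y \ret{(\thunk{(\ret y)})}}$. Now the inner computation $\bindXtoYinZ N y \ret{(\thunk{(\ret y)})}$ is, by thunkability of $N$, equidynamic with $\ret{(\thunk N)}$. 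So the right-hand side becomes $\bindXtoYinZ M x \ret{(\thunk N)}$.

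Next I would use thunkability of $M$. The cleanest route is via the ``thunkables are central'' lemma just proved above, or directly: by thunkability of $M$ we have $\ret{(\thunk M)} \equidyn \bindXtoYinZ M x \ret{(\thunk{(\ret x)})}$, and more generally, bindings of a thunkable commute with the surrounding context. Concretely, I would show $\bindXtoYinZ M x \ret{(\thunk N)} \equidyn \ret{(\thunk{(\bindXtoYinZ M x N)})}$ by the following chain: insert a trivial $U\beta$/$\u F\beta$ redex to expose $\thunk{(\ret x)}$ inside $N$, i.e. rewrite $N$ as $\bindXtoYinZ {\force{\thunk{(\ret x)}}}{x}{N}$; pull the $\thunk{(\ret x)}$ out as a $\mathsf{bind}$ of $\ret{(\thunk{(\ret x)})}$; then use the centrality/thunkability of $M$ to move the $\bindXtoYinZ M x {(-)}$ past the binding that produces $\thunk{(\ret x)}$, so that $M$'s value is packaged into the thunk; finally collapse the administrative redices with $\u F\beta$ and $U\beta$ to land on $\ret{(\thunk{(\bindXtoYinZ M x N)})}$.

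Putting these together gives the required equidynamism, hence $\bindXtoYinZ M x N$ is thunkable. The main obstacle is bookkeeping: the manipulation that moves the thunkable $M$'s binder past the point where the thunk $\thunk{(\ret x)}$ is formed has to be done carefully with the $\eta$ law for $\u F$ and the commuting conversions, exactly as in the proof of the ``thunkables are central'' lemma; once that lemma is in hand, this proof is essentially a three-line calculation (commuting conversion, thunkability of $N$, centrality/thunkability of $M$). No reasoning about the operational semantics or about casts is needed — everything follows from the CBPV $\beta\eta$ axioms and the definition of thunkability.
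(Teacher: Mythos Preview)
Your proposal is correct and follows essentially the same route as the paper's proof: re-associate the outer $\mathsf{bind}$ via $\u F\eta$, apply thunkability of $N$ to get $\bindXtoYinZ M x \ret(\thunk N)$, then expose $\thunk(\ret x)$ via $\u F\beta/U\beta$, re-associate again, apply thunkability of $M$, and collapse with $\u F\beta$ and $U\beta$. One small note: the ``thunkables are central'' lemma does not directly give the step from $\bindXtoYinZ M x \ret(\thunk N)$ to $\ret(\thunk(\bindXtoYinZ M x N))$, since there is no second $\mathsf{bind}$ to commute past; the direct argument you sketch (which is exactly the paper's calculation) is the right one.
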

\begin{proof}
  \begin{align*}
    &\bindXtoYinZ {(\bindXtoYinZ M x N)} y \ret\thunk\ret y\\
    &\equidyn \bindXtoYinZ M x \bindXtoYinZ N y \ret\thunk\ret y\tag{$\u F\eta$}\\
    &\equidyn \bindXtoYinZ M x \ret \thunk N \tag{$N$ thunkable}\\
    &\equidyn \bindXtoYinZ M x \ret \thunk (\bindXtoYinZ {\ret x} x N)\tag{$\u F\beta$}\\
    &\equidyn \bindXtoYinZ M x \bindXtoYinZ {\ret\thunk\ret x} w \ret \thunk (\bindXtoYinZ {\force w} x N)\tag{$\u F\beta,U\beta$}\\
    &\equidyn \bindXtoYinZ {(\bindXtoYinZ M x \ret\thunk\ret x)} w \ret \thunk (\bindXtoYinZ {\force w} x N)\tag{$\u F\eta$}\\
    &\equidyn \bindXtoYinZ {\ret\thunk M} w \ret \thunk (\bindXtoYinZ {\force w} x N)\tag{$M$ thunkable}\\
    &\equidyn  \ret \thunk (\bindXtoYinZ {\force \thunk M} x N)\tag{$\u F\beta$}\\
    &\equidyn  \ret \thunk (\bindXtoYinZ {M} x N)\tag{$U\beta$}\\
  \end{align*}
\end{proof}

\begin{lemma}[Return is Thunkable]
  If $\Gamma \vdash V : A$ then $\ret V$ is thunkable.
\end{lemma}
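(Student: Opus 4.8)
The plan is to unfold the definition of thunkability for the computation $M = \ret V$ and observe that it collapses immediately to a single application of a $\beta$-law. Concretely, thunkability of $\ret V : \u F A$ amounts to establishing the judgement
\[ \Gamma \vdash \ret(\thunk(\ret V)) \equidyn \bindXtoYinZ{\ret V}{x}\ret(\thunk(\ret x)) : \u F U \u F A. \]
The right-hand side is a $\kw{bind}$ whose scrutinee is $\ret V$, so the $\u F$ $\beta$-law $\bindXtoYinZ{\ret V}{x}{N} \equidyn N[V/x]$ (part of the CBPV equational theory, and hence available in \cbpvstar) rewrites it to $\ret(\thunk(\ret V))$, which is syntactically identical to the left-hand side. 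Thus the two sides are equidynamic, and $\ret V$ is thunkable. If $M,V$ happen to lie in \cbpv, the same $\beta$-rule is available there, so the argument carries over verbatim.

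There is essentially no obstacle here: the only ingredient is the $\beta$-rule for returner types, and neither $\eta$-reasoning nor induction on the structure of $V$ is needed, since $V$ appears only in the scrutinee position. (By contrast, the companion facts that thunkables compose and that thunkables are central require a more careful interleaving of $\u F$ and $U$ $\beta/\eta$ rules, but this base case is immediate.)
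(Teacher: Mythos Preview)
Your proof is correct and matches the paper's approach exactly: the paper also proves this in one line by applying the $\u F\beta$ rule to reduce $\bindXtoYinZ{\ret V}{x}\ret\thunk\ret x$ to $\ret\thunk\ret V$.
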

\begin{proof}
  By $\u F\beta$:
  \[ \bindXtoYinZ {\ret V} x \ret\thunk\ret x \equidyn \ret\thunk\ret V \]
\end{proof}

\begin{lemma}[Complex Values Simplify to Thunkable Terms]
  If $\Gamma \vdash V : A$ is a (possibly) complex value, then $\Gamma
  \vdash \simp V : \u F A$ is thunkable.
\end{lemma}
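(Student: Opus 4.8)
The plan is to prove the claim by induction on the structure of the complex value $\Gamma \vdash V : A$. For each syntactic form of $V$ the decomplexification clauses above tell us exactly what $\simp V$ is, and the argument in every case is assembled from the two preceding lemmas, \emph{Return is Thunkable} and \emph{Thunkables compose}, plus --- in the pattern-matching cases only --- the $\eta$ principle of the matched type.

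The straightforward cases come first. If $V$ is a variable $x$, then $\simp V = \ret x$; if $V$ is $()$, then $\simp V = \ret()$; and if $V = \thunk M$, then $\simp V = \ret\thunk\simp M$. In each of these $\simp V$ is $\ret$ of a simple value, so thunkability is immediate from \emph{Return is Thunkable}. If $V$ is $\inl V'$, $\inr V'$, or $\roll V'$, then $\simp V$ is $\bindXtoYinZ{\simp{V'}}{x}$ followed by $\ret$ of the corresponding constructor applied to $x$, and if $V$ is $(V_1,V_2)$ then $\simp V$ is a twofold nesting of this shape; by the induction hypothesis each $\simp{V_i}$ is thunkable, the inner $\ret(\cdots)$ is thunkable by \emph{Return is Thunkable}, and \emph{Thunkables compose} (applied once or twice) yields thunkability of $\simp V$.

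The remaining forms are the pattern-matching complex values: the $0$-elimination on $V'$, the case-analysis $\caseofXthenYelseZ{V'}{x_1.V_1}{x_2.V_2}$, the $1$- and $\times$-eliminations on $V'$ with continuation $V''$, and the $\mu$-elimination on $V'$ with continuation $V''$. In each, the relevant clause gives $\simp V = \bindXtoYinZ{\simp{V'}}{w}$ followed by the same pattern-match on $w$, whose branch(es) are the decomplexifications of the smaller complex value(s). By the induction hypothesis $\simp{V'}$ and each branch are thunkable, so by \emph{Thunkables compose} it suffices to show that the pattern-match on $w$ is itself a thunkable computation in $\Gamma$ extended by $w$ of the matched type. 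This is the one step that consumes extensionality: thunkability of a computation $M$ is the single equation $\ret{(\thunk M)} \equidyn \bindXtoYinZ{M}{y}\ret{(\thunk{(\ret y)})}$, and applying the $\eta$ rule for the matched connective (one of $0$, $+$, $1$, $\mu$) to the free variable $w$ moves the pattern-match to the head of both sides, reducing the equation to the requirement that it hold with each branch in place of $M$ --- that is, to thunkability of the branch(es), which holds by the induction hypothesis. (For the $0$-elimination this is vacuous, since any two computations with a free variable of type $0$ are already equal.) It is probably cleanest to factor this observation out as an auxiliary lemma stating that a pattern-match on a variable all of whose branches are thunkable is itself thunkable, proved uniformly from the $\eta$ rules.

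I expect this last step --- pattern-match with thunkable branches is thunkable --- to be the only non-mechanical part of the proof; the leaf and value-building cases are direct applications of \emph{Return is Thunkable} and \emph{Thunkables compose}, and the bookkeeping of which subterms are complex values (hence have thunkable translations by the induction hypothesis) is routine.
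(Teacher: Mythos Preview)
Your proposal is correct and follows essentially the same approach as the paper: induction on $V$, with introduction forms handled by \emph{Return is Thunkable} and \emph{Thunkables compose}, and elimination forms reduced (via \emph{Thunkables compose} on the outer bind) to the auxiliary fact that a pattern-match with thunkable branches is itself thunkable, proved from the $\eta$ law of the matched connective. The only minor imprecisions are that you omitted $\times$ from your list of $\eta$ rules, and that after the $\eta$-expansion a $\beta$-step is also needed to expose the branch; the paper's calculation uses $\eta$ on one side and a commuting conversion (itself an $\eta$-consequence) on the other, but this is the same argument.
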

\begin{longproof}
  Introduction forms follow from return is thunkable and thunkables
  compose. For elimination forms it is sufficient to show that when
  the branches of pattern matching are thunkable, the pattern match
  is thunkable.
  \begin{enumerate}
  \item $x$: We need to show $\simp x = \ret x$ is thunkable, which we
    proved as a lemma above.
  \item{} $0$ elim, we need to show 
    \[ \bindXtoYinZ {\absurd V} y \ret\thunk\ret y\equidyn \ret\thunk {\absurd V}\]
    but by $\eta0$ both sides are equivalent to $\absurd V$.
  \item{} $+$ elim, we need to show
    \[
    \ret\thunk (\caseofXthenYelseZ V {x_1. M_1} {x_2. M_2})
    \equidyn 
    \bindXtoYinZ {(\caseofXthenYelseZ V {x_1. M_1} {x_2. M_2})} y \ret\thunk \ret y
    \]
    \begin{align*}
      &\ret\thunk (\caseofXthenYelseZ V {x_1. M_1} {x_2. M_2})\\
      &\equidyn
      \caseofX V \tag{$+\eta$}\\
      &\qquad\thenY {x_1. \ret\thunk (\caseofXthenYelseZ {\inl x_1} {x_1. M_1} {x_2. M_2})}\\
      &\qquad\elseZ {x_2. \ret\thunk (\caseofXthenYelseZ {\inr x_2} {x_1. M_1} {x_2. M_2})}\\
      &\equidyn\caseofX V \tag{$+\beta$}\\
      &\qquad\thenY {x_1. \ret\thunk M_1}\\
      &\qquad\elseZ {x_2. \ret\thunk M_2}\\
      &\equidyn\caseofX V \tag{$M_1,M_2$ thunkable}\\
      &\qquad\thenY {x_1. \bindXtoYinZ {M_1} y \ret\thunk\ret y}\\
      &\qquad\elseZ {x_2. \bindXtoYinZ {M_2} y \ret\thunk\ret y}\\
      &\equidyn \bindXtoYinZ {(\caseofXthenYelseZ V {x_1. M_1}{x_2. M_2})} y \ret\thunk\ret y\tag{commuting conversion}\\
    \end{align*}
  \item{} $\times$ elim
    \begin{align*}
      &\ret\thunk (\pmpairWtoXYinZ V x y M)\\
      &\equidyn \pmpairWtoXYinZ V x y \ret\thunk \pmpairWtoXYinZ {(x,y)} x y M\tag{$\times\eta$}\\
      &\equidyn \pmpairWtoXYinZ V x y \ret\thunk M\tag{$\times\beta$}\\
      &\equidyn \pmpairWtoXYinZ V x y \bindXtoYinZ M z \ret\thunk\ret z\tag{$M$ thunkable}\\
      &\equidyn \bindXtoYinZ {(\pmpairWtoXYinZ V x y M)} z \ret\thunk\ret z\tag{commuting conversion}
    \end{align*}
  \item $1$ elim
    \begin{align*}
      &\ret\thunk (\pmpairWtoinZ V x y M)\\
      &\equidyn \pmpairWtoinZ V \ret\thunk \pmpairWtoinZ {()} M\tag{$1\eta$}\\
      &\equidyn \pmpairWtoinZ V \ret\thunk M\tag{$1\beta$}\\
      &\equidyn \pmpairWtoinZ V \bindXtoYinZ M z \ret\thunk\ret z\tag{$M$ thunkable}\\
      &\equidyn \bindXtoYinZ {(\pmpairWtoinZ V M)} z \ret\thunk\ret z\tag{commuting conversion}
    \end{align*}  \item $\mu$ elim
    \begin{align*}
      &\ret\thunk (\pmmuXtoYinZ V x M)\\
      &\equidyn \pmmuXtoYinZ V x \ret\thunk \pmmuXtoYinZ {\roll x} x M\tag{$\mu\eta$}\\
      &\equidyn \pmmuXtoYinZ V x \ret\thunk M\tag{$\mu\beta$}\\
      &\equidyn \pmmuXtoYinZ V x \bindXtoYinZ M y \ret\thunk\ret y\tag{$M$ thunkable}\\
      &\equidyn \bindXtoYinZ {(\pmmuXtoYinZ V x M)} y \ret\thunk\ret y\tag{commuting conversion}
    \end{align*}
  \end{enumerate}
\end{longproof}

Dually, we have that a stack out of a force is linear and that linears
are closed under composition, so we can easily build up bigger linear
morphisms from smaller ones.
\begin{lemma}[Force to a stack is Linear]
  If $\Gamma \pipe \bullet : \u B \vdash S : \u C$, then
  $\Gamma , x : U\u B\vdash S[\force x] : \u B$ is linear in $x$.
\end{lemma}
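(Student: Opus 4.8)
The plan is to unfold the definition of linearity and reduce it to the statement that a \kw{bind} can always be hoisted out of a stack, and then prove that statement by structural induction on $S$. Spelling out the definition, linearity of $S[\force x]$ in $x$ asks for
\[ \bindXtoYinZ{\force z}{x}{S[\force x]} \equidyn (S[\force x])[\thunk{(\bindXtoYinZ{(\force z)}{x}{\force x})}/x] . \]
First I would observe that $x$ does not occur in $S$, so the right-hand side is $S[\force{\thunk{(\bindXtoYinZ{(\force z)}{x}{\force x})}}]$, which by the $U\beta$ law equals $S[M_0]$ where $M_0 := \bindXtoYinZ{(\force z)}{x}{\force x}$ is a fixed computation of type $\u B$ in the context $\Gamma, z : U\u F U\u B$. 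It therefore suffices to establish $\bindXtoYinZ{\force z}{x}{S[\force x]} \equidyn S[M_0]$.

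Next I would prove this equation by induction on the stack $S$, whose shape is one of $\bullet$, $\bindXtoYinZ{S'}{y}{M}$, $S'\,V$, $\pi S'$, $\pi' S'$, or $\unroll S'$. The base case $S = \bullet$ is immediate, since both sides are $M_0$. In each inductive case the outer \kw{bind} on $\force z$ must be commuted past the outermost stack former before the inductive hypothesis applies: for instance, when $S = \bindXtoYinZ{S'}{y}{M}$ one rewrites $\bindXtoYinZ{\force z}{x}{(\bindXtoYinZ{S'[\force x]}{y}{M})}$ to $\bindXtoYinZ{(\bindXtoYinZ{\force z}{x}{S'[\force x]})}{y}{M}$ by the \kw{bind}/\kw{bind} commuting conversion (using that $x$ is not free in $M$ and $\alpha$-renaming $y$ if necessary), then applies the IH to $S'$ to obtain $\bindXtoYinZ{S'[M_0]}{y}{M} = S[M_0]$. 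The cases $S'\,V$, $\pi S'$, $\pi' S'$, and $\unroll S'$ are handled identically, using the corresponding commuting conversion for application, for the two projections, and for \kw{unroll}.

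The one genuinely load-bearing ingredient — and hence the main obstacle — is the family of commuting conversions that let a \kw{bind} pass through each stack former; equivalently, for any stack $\bullet : \u B \vdash S : \u C$, any $M : \u F A$, and any $N$ with $a : A \vdash N : \u B$ and $a$ not free in $S$, that $\bindXtoYinZ{M}{a}{S[N]} \equidyn S[\bindXtoYinZ{M}{a}{N}]$. These are the standard CBPV commuting conversions; they follow from the $\u F$ $\eta$ law (with the $U$ and $\nu$ $\eta$ laws accounting for \kw{force} and \kw{unroll}), and since they are used pervasively elsewhere in the development I would simply cite them. If a self-contained argument is preferred, each one-step case is obtained by substituting the relevant one-step stack context for $\bullet$ in the stack form of $\u F\eta$. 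Everything else is routine symbol-pushing, so no further difficulty is anticipated, and this \emph{Force to a stack is Linear} lemma is exactly dual to the already-established fact that $\ret V$ is thunkable.
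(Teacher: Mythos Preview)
Your proof is correct, but it takes a longer route than the paper. The paper's proof is two lines: after the $U\beta$ step (which you also do), it invokes the general commuting conversion $S[\bindXtoYinZ{M}{a}{N}] \equidyn \bindXtoYinZ{M}{a}{S[N]}$ once, for the whole stack $S$, citing $\u F\eta$ directly. You instead perform a structural induction on $S$, peeling off one stack frame at a time and applying a one-step commuting conversion at each level. This works, but it is exactly re-proving the general commuting conversion by hand. You even note in your final paragraph that the general form follows from $\u F\eta$ in one shot---that observation \emph{is} the paper's proof. So the induction on $S$ is unnecessary scaffolding: once you have $S[\bindXtoYinZ{\force z}{x}{\force x}]$, apply the commuting conversion once and you are done.
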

\begin{proof}
  \begin{align*}
    S[\force \thunk {(\bindXtoYinZ {\force z} x \force x)}]
    &\equidyn
    S[{(\bindXtoYinZ {\force z} x \force x)}]\tag{$U\beta$}\\
    &\equidyn \bindXtoYinZ {\force z} x S[\force x] \tag{$\u F\eta$}
  \end{align*}
\end{proof}

\begin{lemma}[Linear Terms Compose]
  If $\Gamma , x : U \u B \vdash M : \u B'$ is linear in $x$ and
  $\Gamma , y : \u B' \vdash N : \u B''$ is linear in $y$, then
  $\Gamma , x : U \u B \vdash N[\thunk M/y] : $
\end{lemma}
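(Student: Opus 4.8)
The plan is to prove directly that $N[\thunk M/y]$ satisfies the defining equation of linearity in $x$. (Note that for ``$N$ linear in $y$'' to make sense the hypothesis on $N$ must be read with $y : U \u B'$, so that $\thunk M : U \u B'$ is what gets substituted for $y$; the conclusion then typechecks as $\Gamma, x : U \u B \vdash N[\thunk M/y] : \u B''$.) Write $w := \thunk (\bindXtoYinZ {\force z} x {\force x})$ for the auxiliary thunk that appears on the right-hand side of the linearity equation. Then, in context $\Gamma, z : U \u F U \u B$, the goal is to show
$\bindXtoYinZ {\force z} x {N[\thunk M/y]} \equidyn (N[\thunk M/y])[w/x]$,
and by compositionality of substitution the right-hand side equals $N[\thunk (M[w/x])/y]$, since $x$ is not free in $N$.

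The key step is to instantiate the linearity hypothesis for $N$ at a carefully chosen thunk. I would take $z' := \thunk (\bindXtoYinZ {\force z} x {\ret (\thunk M)})$, which has type $U \u F U \u B'$, and apply linearity of $N$ in $y$ at $z'$, obtaining $\bindXtoYinZ {\force {z'}} y N \equidyn N[\thunk (\bindXtoYinZ {\force {z'}} y {\force y})/y]$. First I would simplify the left-hand side of this instance: by $U\beta$, $\force {z'}$ unfolds to $\bindXtoYinZ {\force z} x {\ret (\thunk M)}$; a commuting conversion (associativity of $\kw{bind}$, which is derivable from $\u F\eta$, using that $x$ is not free in $N$) reassociates the two binds; and $\u F\beta$ collapses the inner $\kw{bind} \leftarrow \ret(\thunk M)$, leaving exactly $\bindXtoYinZ {\force z} x {N[\thunk M/y]}$ --- the left-hand side of the equation we want.

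It then remains to identify the right-hand side $N[\thunk (\bindXtoYinZ {\force {z'}} y {\force y})/y]$ with $N[\thunk (M[w/x])/y]$, i.e.\ to show $\bindXtoYinZ {\force {z'}} y {\force y} \equidyn M[w/x]$. Unfolding $\force {z'}$ by $U\beta$, reassociating by commuting conversion, and reducing by $\u F\beta$ and $U\beta$ exactly as before rewrites this term to $\bindXtoYinZ {\force z} x M$; and $\bindXtoYinZ {\force z} x M \equidyn M[w/x]$ is precisely the linearity hypothesis for $M$ in $x$. Chaining the two computations yields $\bindXtoYinZ {\force z} x {N[\thunk M/y]} \equidyn N[\thunk (M[w/x])/y] = (N[\thunk M/y])[w/x]$, which is the linearity equation for $N[\thunk M/y]$. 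The whole argument uses only the $\beta\eta$ rules for $U$ and $\u F$ (and the derived commuting conversions) together with the two hypotheses; the single creative move is the choice of $z'$ as ``run $z$, then repackage its result through $\thunk M$,'' which is the exact dual, under the CBPV value/stack duality, of the corresponding step in the proof that thunkables compose. I do not expect a genuine obstacle here: there is no case analysis, and the only real risk is bookkeeping --- keeping the nested substitutions and the scopes of $x, y, z, z'$ straight when invoking the commuting conversions.
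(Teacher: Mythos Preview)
Your proposal is correct and is essentially the same argument as the paper's proof: both pivot on the auxiliary thunk $z' = \thunk(\bindXtoYinZ{\force z}{x}{\ret(\thunk M)})$ (the paper names this $w$ mid-proof), apply the linearity of $N$ at that instance, and use the linearity of $M$ together with $U\beta$, $\u F\beta$, and the $\u F\eta$-derived commuting conversion to close the gap. The only difference is presentational---you organize it as ``apply linearity of $N$ at $z'$, then simplify both sides,'' while the paper writes a single equational chain from $(N[\thunk M/y])[w/x]$ to $\bindXtoYinZ{\force z}{x}{N[\thunk M/y]}$; your reading of the hypothesis as $y : U\u B'$ is also the intended one.
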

\begin{proof}
  \begin{align*}
    &N[\thunk M/y][\thunk{(\bindXtoYinZ {\force z} x \force x)}/x]\\
    &= N[\thunk {(M[\thunk{(\bindXtoYinZ {\force z} x \force x)}])}/y]\\
    &\equidyn N[\thunk {(\bindXtoYinZ {\force z} x M)}/y]\tag{$M$ linear}\\
    &\equidyn N[\thunk{(\bindXtoYinZ {\force z} x \force \thunk M)}/y] \tag{$U\beta$}\\
    &\equidyn
    N[\thunk{(\bindXtoYinZ {\force z} x \bindXtoYinZ {\ret\thunk M} y \force y)}/y] \tag{$\u F\beta$}\\
    &\equidyn
    N[\thunk{(\bindXtoYinZ {(\bindXtoYinZ {\force z} x \ret\thunk M)} y \force y)}/y] \tag{$\u F\eta$}\\
    &\equidyn
    N[\thunk{(\bindXtoYinZ {\force w} y \force y)}/y][\thunk(\bindXtoYinZ {\force z} x \ret\thunk M)/w] \tag{$U\beta$}\\
    &\equidyn (\bindXtoYinZ {\force w} y N)[\thunk (\bindXtoYinZ {\force z} x \ret \thunk M)/w] \tag{$N$ linear}\\
    &\equidyn (\bindXtoYinZ {(\bindXtoYinZ {\force z} x \ret \thunk M)} y N) \tag{$U\beta$}\\
    &\equidyn (\bindXtoYinZ {\force z} x \bindXtoYinZ {\ret\thunk M} y N \tag{$\u F\eta$}\\
    &\equidyn \bindXtoYinZ {\force z} x N[\thunk M/y]
  \end{align*}
\end{proof}

\begin{lemma}[Complex Stacks Simplify to Linear Terms]
  If $\Gamma\pipe \bullet : \u B \vdash S : \u C$ is a (possibly)
  complex stack, then $\Gamma, z : U\u B \vdash \simpp{S} : \u C$ is linear in $z$.
\end{lemma}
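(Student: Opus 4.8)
The plan is to prove this by structural induction on the derivation of the complex stack $\Gamma \pipe \bullet : \u B \vdash S : \u C$, exactly mirroring the structure of the preceding lemma for complex values (``Complex Values Simplify to Thunkable Terms''). For each syntactic form of $S$, I would show that $\simpp{S}$ is linear in $z$, using the two structural lemmas I would prove just before: ``Force to a stack is Linear'' (the base case $S = \bullet$, where $\simpp{\bullet} = \force z$ and linearity is immediate from $U\beta$ and $\u F\eta$) and ``Linear Terms Compose'' (which handles the elimination-form cases where $S$ is built by plugging a smaller stack into a context). Dual to the value lemma, the introduction forms and the ``universal'' negative eliminators ($S\,V$, $\pi S$, $\pi' S$, $\unroll S$, $\bindXtoYinZ{S}{x}{M}$ where the recursion is in the hole-containing subterm) follow by composition of linears plus the fact that the surrounding context is itself linear in its distinguished hole variable.

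Concretely, after setting up the decomplexification translation and the fact that it is the identity in \cbpvstar, I would: first establish the central algebraic lemmas about linearity (force-to-a-stack is linear; linears compose; and, symmetric to ``Return is Thunkable,'' that simple eliminator frames are linear); then do the induction. The hole case uses the force lemma directly. For $\simpp{S\,V} = \bindXtoYinZ{\simp V}{x}{\simp S\,x}$, I would use that $\simp V$ is thunkable (from the value lemma) and that thunkables are central, so the binding of $\simp V$ commutes out and does not disturb linearity in $z$; the remaining $\simp S\, x$ is linear in $z$ by the IH, and application is a linear frame. The cases for $\pi S$, $\pi' S$, $\unroll S$, and $\bindXtoYinZ{S}{x}{M}$ are similar: the outermost construct is a linear frame applied to something linear in $z$ by IH, so linears compose. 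Whenever a value subterm appears (as in application, or in the $\bindXtoYinZ{\simp V}{x}{\cdots}$ prefixes introduced by decomplexification), I invoke thunkability/centrality to move it out of the way.

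The main obstacle I anticipate is the same one the value lemma faces in its positive-elimination cases, but here arising wherever decomplexification introduces a $\kw{bind}$ on a value subterm: one must show that $\bindXtoYinZ{\simp V}{x}{M}$ is linear in $z$ given that $\simp V$ is thunkable (hence central) and $M$ is linear in $z$. The delicate point is that linearity in $z$ is a statement about the order in which effects on $z$ occur relative to being forced, and one has to check that prepending a thunkable, $z$-independent computation genuinely preserves this — which is precisely what centrality buys us: the thunkable binding can be commuted inside or outside the critical $\bindXtoYinZ{\force z}{x}{(-)}$ without changing behavior. I expect the proof to be a straightforward but bookkeeping-heavy sequence of $\beta\eta$ rewrites plus appeals to centrality and the two composition lemmas, entirely parallel to (and dual to) the already-given value case, with no genuinely new conceptual content. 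The payoff, as the surrounding text notes, is that GTT value upcasts decomplexify to thunkable terms and computation downcasts to linear terms, so the ep-pair equations survive the passage to operational CBPV.
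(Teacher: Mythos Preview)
Your overall plan---structural induction on the complex stack, with the base case handled by ``Force to a stack is Linear'' and simple-stack eliminators handled via ``Linear Terms Compose''---is the same as the paper's. The application case $S\,V$ is also right: thunkability of $\simp V$ lets you commute its $\kw{bind}$ past the critical $\bindXtoYinZ{\force z}{x}{-}$, exactly as you say.

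There is a gap, though, in your treatment of two classes of cases. First, the \emph{negative introduction forms} ($\lambda x. S$, $\pair{S_1}{S_2}$, $\rollty{\nu \u Y.\u B} S$, $\{\}$) do \emph{not} follow by ``composition of linears plus the fact that the surrounding context is itself linear'': there is no linear frame here, because $\lambda$, pairing, and $\roll$ are not simple-stack constructors. Each of these cases requires showing that $\bindXtoYinZ{\force z}{x}{-}$ commutes past the introduction form, which is an appeal to the $\eta$ principle for the computation type in question (e.g.\ $\to\eta$ for $\lambda$, $\with\eta$ for pairing). This is precisely the dual of how the value lemma uses positive $\eta$ principles for $+,\times,\mu$ to show that pattern-matching with thunkable branches is thunkable---you allude to this with ``dual to the value lemma,'' but your text misidentifies the mechanism. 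Second, you do not mention the \emph{positive value eliminations with stack continuations} ($\caseofXthenYelseZ{V}{x_1.S_1}{x_2.S_2}$, $\pmpairWtoXYinZ{V}{x}{y}{S}$, $\pmmuXtoYinZ{V}{x}{S}$, $\absurd V$), which are also complex-stack forms; these need the commuting conversions derived from the positive $\eta$ rules. Both classes are handled explicitly in the paper's proof and are where the real $\beta\eta$ bookkeeping lives; the ``main obstacle'' you anticipate (thunkable prefixes from decomplexified values) is comparatively routine.
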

\begin{longproof}
  There are $4$ classes of rules for complex stacks: those that are
  rules for simple stacks ($\bullet$, computation type elimination
  forms), introduction rules for negative computation types where the
  subterms are complex stacks, elimination of positive value types
  where the continuations are complex stacks and finally application
  to a complex value.

  The rules for simple stacks are easy: they follow immediately from
  the fact that forcing to a stack is linear and that complex stacks
  compose.  For the negative introduction forms, we have to show that
  binding commutes with introduction forms. For pattern matching
  forms, we just need commuting conversions. For function application,
  we use the lemma that binding a thunkable in a linear term is
  linear.
  \begin{enumerate}
  \item $\bullet$: This is just saying that $\force z$ is linear,
    which we showed above.
  \item $\to$ elim We need to show, assuming that $\Gamma, x : \u B
    \vdash M : \u C$ is linear in $x$ and $\Gamma \vdash N : \u F A$
    is thunkable, that
    \[
    \bindXtoYinZ N y M\,y
    \]
    is linear in $x$.
    \begin{align*}
      &\bindXtoYinZ N y (M[\thunk{(\bindXtoYinZ {\force z} x \force x)}/x])\,y\\
      &\equidyn \bindXtoYinZ N y (\bindXtoYinZ {\force z} x M)\,y \tag{$M$ linear in $x$}\\
      &\equidyn \bindXtoYinZ N y \bindXtoYinZ {\force z} x M\,y \tag{$\u F\eta$}\\
      &\equidyn \bindXtoYinZ {\force z} x \bindXtoYinZ N y M\,y\tag{thunkables are central}
    \end{align*}
  \item $\to$ intro
    \begin{align*}
      & \lambda y:A. M[\thunk{(\bindXtoYinZ {\force z} x \force x)}/x]\\
      &\equidyn \lambda y:A. \bindXtoYinZ {\force z} x M \tag{$M$ is linear}\\
      &\equidyn \lambda y:A. \bindXtoYinZ {\force z} x (\lambda y:A. M)\, y \tag{$\to\beta$}\\
      &\equidyn \lambda y:A. (\bindXtoYinZ {\force z} x (\lambda y:A. M))\, y \tag{$\u F\eta$}\\
      &\equidyn \bindXtoYinZ {\force z} x (\lambda y:A. M) \tag{$\to\eta$}
    \end{align*}
  \item $\top$ intro
    We need to show
    \[ \bindXtoYinZ {\force z} w \{\} \equidyn \{\} \]
    Which is immediate by $\top\eta$
  \item $\with$ intro
    \begin{align*}
      &     \pairone{M[\thunk {(\bindXtoYinZ {\force z} x \force x)}]/x}\\
      &\pairtwo{N[\thunk {(\bindXtoYinZ {\force z} x \force x)}/x]}\\
      &\equidyn \pairone{\bindXtoYinZ {\force z} x M}\tag{$M, N$ linear}\\
      &\qquad    \pairtwo{\bindXtoYinZ {\force z} x N}\\
      &\equidyn \pairone{\bindXtoYinZ {\force z} x {\pi \pair M N}}\tag{$\with\beta$}\\
      &\qquad    \pairtwo{\bindXtoYinZ {\force z} x {\pi' \pair M N}}\\
      &\equidyn \pairone{\pi({\bindXtoYinZ {\force z} x \pair M N})}\tag{$\u F\eta$}\\
      &\qquad    \pairtwo{\pi'({\bindXtoYinZ {\force z} x \pair M N})}\\
      &\equidyn \bindXtoYinZ {\force z} x \pair M N\tag{$\with\eta$}
    \end{align*}
  \item $\nu$ intro
    \begin{align*}
      & \roll M[\thunk{(\bindXtoYinZ {\force z} x \force x)}/x]\\
      &\equidyn \roll (\bindXtoYinZ {\force z} x M) \tag{$M$ is linear} \\
      &\equidyn \roll (\bindXtoYinZ {\force z} x \unroll \roll M) \tag{$\nu\beta$}\\
      &\equidyn \roll \unroll (\bindXtoYinZ {\force z} x \roll M) \tag{$\u F\eta$}\\
      &\equidyn \bindXtoYinZ {\force z} x (\roll M) \tag{$\nu\eta$}
    \end{align*}
  \item $\u F$ elim: Assume $\Gamma, x : A \vdash M : \u F A'$ and
    $\Gamma, y : A' \vdash N : \u B$, then we need to show
    \[ \bindXtoYinZ M y N \]
    is linear in $M$.
    \begin{align*}
      & \bindXtoYinZ {M[\thunk{(\bindXtoYinZ {\force z} x \force x)}/x]} y N\\
      & \equidyn
      \bindXtoYinZ {(\bindXtoYinZ {\force z} x M)} y N\tag{$M$ is linear}\\
      &\equidyn
      \bindXtoYinZ {\force z} x \bindXtoYinZ M y N\tag{$\u F\eta$}
    \end{align*}
  \item $0$ elim: We want to show $\Gamma, x:U\u B \vdash \absurd V :
    \u C$ is linear in $x$, which means showing:
    \[ \absurd V \equidyn \bindXtoYinZ {\force z} x \absurd V
    \]
    which follows from $0\eta$
  \item $+$ elim: Assuming $\Gamma, x : U\u B, y_1 : A_1 \vdash M_1 :
    \u C$ and $\Gamma, x : U\u B, y_2: A_2\vdash M_2 : \u C$ are
    linear in $x$, and $\Gamma \vdash V : A_1 + A_2$, we need to show
    \[ \caseofXthenYelseZ V {y_1. M_1} {y_2. M_2} \]
    is linear in $x$.
    \begin{align*}
      & \caseofX V\\
      & \,\,\thenY {y_1. M_1[\thunk{(\bindXtoYinZ {\force z} x \force x)}/x]}\\
      & \,\,\elseZ {y_2. M_2[\thunk{(\bindXtoYinZ {\force z} x \force x)}/x]}\\
      &\equidyn \caseofXthenYelseZ V {y_1. \bindXtoYinZ {\force z} x M_1}{y_2. \bindXtoYinZ {\force z} x M_2}\tag{$M_1,M_2$ linear}\\
      &\equidyn
       \bindXtoYinZ {\force z} x \caseofXthenYelseZ V {y_1. M_1}{y_2. M_2}
    \end{align*}
  \item $\times$ elim: Assuming $\Gamma, x:U\u B, y_1 : A_1, y_2 : A_2
    \vdash M : \u B$ is linear in $x$ and $\Gamma \vdash V : A_1
    \times A_2$, we need to show
    \[ \pmpairWtoXYinZ V {y_1}{y_2} M \]
    is linear in $x$.
    \begin{align*}
      &\pmpairWtoXYinZ V {y_1}{y_2} M[[\thunk{(\bindXtoYinZ {\force z} x \force x)}/x]]\\
      &\equidyn \pmpairWtoXYinZ V {y_1}{y_2} \bindXtoYinZ {\force z} x M\tag{$M$ linear}\\
      &\equidyn  \bindXtoYinZ {\force z} x\pmpairWtoXYinZ V {y_1}{y_2} M\tag{comm. conv}\\
    \end{align*}
  \item $\mu$ elim: Assuming $\Gamma , x : U \u B, y : A[\mu X.A/X]
    \vdash M : \u C$ is linear in $x$ and $\Gamma \vdash V : \mu X.A$,
    we need to show
    \[ \pmmuXtoYinZ V y M \]
    is linear in $x$.
    \begin{align*}
      & \pmmuXtoYinZ V y M[\thunk{(\bindXtoYinZ {\force z} x \force x)}/x]\\
      & \equidyn \pmmuXtoYinZ V y \bindXtoYinZ {\force z} x M\tag{$M$ linear}\\
      &\equidyn  \bindXtoYinZ {\force z} x\pmmuXtoYinZ V y M \tag{commuting conversion}
    \end{align*}
  \end{enumerate}
\end{longproof}

Composing this with the previous translation from GTT to \cbpvstar\/
shows that \emph{GTT value type upcasts are thunkable and computation
  type downcasts are linear}.

Since the translation takes values and stacks to terms, it cannot
preserve substitution up to equality.
Rather, we get the following, weaker notion that says that the
translation of a syntactic substitution is equivalent to an effectful
composition.
\begin{lemma}[Compositionality of De-complexification]
  \begin{enumerate}
  \item If $\Gamma, x : A\pipe \Delta\vdash E : T$ and $\Gamma \vdash V : A$
    are complex terms, then
    \[ 
    \simpp{E[V/x]} \equidyn \bindXtoYinZ {\simp V} x {\simp E}
    \]
  \item If $\Gamma \pipe \bullet : \u B \vdash S : \u C$ and $\Gamma
    \pipe \Delta \vdash M : \u B$, then
    \[
    \simpp{S[M]} \equidyn \simp{S}[\thunk\simp{M}/z]
    \]
  \end{enumerate}
\end{lemma}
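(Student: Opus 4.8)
The plan is to prove both parts by structural induction — part (1) by induction on the complex term $E$, part (2) by induction on the complex stack $S$ — and, since neither induction ever appeals to the other, I would carry them out separately. Part (2) is essentially bookkeeping: on every former whose immediate subterms are again stacks or hole-free complex terms, $\simp{\cdot}$ acts by congruence, so the claim is immediate from the induction hypotheses once one notes that $z$ does not occur in $\simp{N}$ whenever $N$ does not mention the hole $\bullet$, hence $\simp{N}[\thunk{\simp{M}}/z] = \simp{N}$. The only interesting step is the base case $S = \bullet$, where $\simp{\bullet} = \force{z}$ and $\simp{S}[\thunk{\simp{M}}/z] = \force{\thunk{\simp{M}}} \equidyn \simp{M} = \simpp{S[M]}$ by $U\beta$. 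Clauses of $\simp{\cdot}$ that pre-bind a complex value in subject position (as in $S'\,V$) cause no trouble, since that value is hole-free, so it commutes with the substitution for $z$.

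Part (1) carries the real content. For $E = x$ the statement is $\u F\eta$; for $E = y \neq x$ and $E = \bullet$ it is discardability of $\simp{V}$; for every other former, $E[V/x]$ pushes the substitution into the subterms $E_i$, the induction hypotheses rewrite each $\simpp{E_i[V/x]}$ to $\bindXtoYinZ{\simp{V}}{x}{\simp{E_i}}$, and what remains is to migrate each of these binds of $\simp{V}$ outward past the binds and constructors that $\simp{\cdot}$ inserts for that former and — when $x$ occurs in several subterms, e.g. for $(V_1,V_2)$, $\kw{case}$, or $M\,V$ — to coalesce the resulting copies into the single leading bind on the right-hand side. Every such move is licensed by the fact that $\simp{V}$ is thunkable (the lemma that complex values de-complexify to thunkable terms): a bind of a thunkable is central (the ``thunkables are central'' lemma), and by short calculations in the same style it may also be discarded, $\bindXtoYinZ{\simp{V}}{x}{P} \equidyn P$ when $x \notin \mathrm{fv}(P)$, and duplicated, $\bindXtoYinZ{\simp{V}}{x_1}{\bindXtoYinZ{\simp{V}}{x_2}{\ret{(x_1,x_2)}}} \equidyn \bindXtoYinZ{\simp{V}}{x}{\ret{(x,x)}}$ (obtained by binding the thunkability equation into a suitable continuation); likewise a bind of a thunkable commutes with $\ret{}$ and $\thunk{}$, which is what the $\kw{ret}$ and $\kw{thunk}$ cases need. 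Given these facts, each inductive case is a routine rearrangement using $\u F\beta$, $\u F\eta$ (associativity of bind), $U\beta$, and the $\beta$-law of the former in question.

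The hard part will be exactly this coalescing step: a single occurrence of the substituted variable can, under de-complexification of a multi-ary former, turn into several threaded copies of $\bindXtoYinZ{\simp{V}}{x}{-}$, and collapsing them back to one requires $\simp{V}$ to behave like a value under bind — which is precisely why de-complexification is designed to send complex values to thunkable computations, and is the behavioural content this lemma ultimately records. The remaining care is with $\alpha$-renaming and the freshness side-conditions in the bind re-associations, which is what makes the commuting conversions applicable; beyond that the proof is mechanical.
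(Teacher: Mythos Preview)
Your approach is correct, but it differs from the paper's in an interesting way. The paper avoids structural induction entirely by exploiting a syntactic invariant of the translation: in $\simp{E}$, every occurrence of a free variable $x$ appears literally as $\ret x$, and in $\simp{S}$ every occurrence of $z$ appears as $\force z$. This lets one define a ``substitution of a term for $\ret x$'' and observe that $\simpp{E[V/x]} = \simp{E}[\simp{V}/\ret x]$ holds \emph{on the nose}, with no induction on $E$ at all. From there, a single global use of thunkability (together with $U\beta$ and $\u F\beta$) rewrites $\simp{E}[\simp{V}/\ret x]$ to $\bindXtoYinZ{\simp{V}}{x}{\simp{E}}$; part (2) is handled dually via $\force z$ and needs only $U\beta$. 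Your induction, by contrast, spreads the appeal to thunkability across every constructor and must repeatedly invoke the derived discardability/duplicability/centrality facts to coalesce the multiple copies of $\bindXtoYinZ{\simp V}{x}{-}$ that arise.

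What each buys: the paper's argument is shorter and makes the role of thunkability crisp---it is used exactly once, and the rest is syntactic bookkeeping---at the cost of introducing the slightly nonstandard ``substitute a term for $\ret x$'' operation. Your argument is more elementary and self-contained (no auxiliary substitution notion), and it makes explicit which $\beta\eta$ laws are doing the work case by case; but it is longer and the freshness/coalescing bookkeeping you flag is genuinely tedious, especially for formers like $\thunk$, pairs, and case. Both are valid; the paper's route is the slicker one here.
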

\begin{longproof}
  \begin{enumerate}
  \item   First, note that every occurrence of a variable in $\simp E$ is of
  the form $\ret x$ for some variable $x$. This means we can define
  substitution of a \emph{term} for a variable in a simplified term by
  defining $\simp{E}[N/\ret x]$ to replace every $\ret x : \u F A$
  with $N : \u F A$. Then it is an easy observation that
  simplification is compositional on the nose with respect to this
  notion of substitution:
  \[ \simpp{E[V/x]} = \simp{E}[\simp V / \ret x] \]

  Next by repeated invocation of $U\beta$,
  \[ \simp{E}[\simp V/\ret x] \equidyn \simp{E}[\force\thunk\simp V/\ret x] \]

  Then we can lift the definition of the thunk to the top-level by $\u F\beta$:
  \[ \simp{E}[\force\thunk\simp V/\ret x] \equidyn
  \bindXtoYinZ \ret\thunk \simp V w \simp{E}[\force w/\ret x]
  \]
  Then because $\simp V$ is thunkable, we can bind it at the top-level
  and reduce an administrative redex away to get our desired result:
  \begin{align*}
    &\bindXtoYinZ \ret\thunk \simp V w \simp{E}[\force w/\ret x]\\
    &\equidyn \bindXtoYinZ {\simp V} x \bindXtoYinZ {\ret\thunk\ret x} w \simp{E}[\force w/\ret x]\tag{$V$ thunkable}\\
    &\equidyn \bindXtoYinZ {\simp V} x \simp{E}[\force \thunk\ret x/\ret x]\tag{$\u F\beta$}\\
    &\equidyn \bindXtoYinZ {\simp V} x \simp{E}[\ret x/\ret x]\tag{$U\beta$}\\
    &\equidyn \bindXtoYinZ {\simp V} x \simp{E}\\
  \end{align*}
  \item Note that every occurrence of $z$ in $\simp S$ is of the form
    $\force z$. This means we can define substitution of a \emph{term}
    $M : \u B$ for $\force z$ in $\simp S$ by replacing $\force z$
    with $M$.  It is an easy observation that simplification is
    compositional on the nose with respect to this notion of
    substitution:
    \[ \simpp{S[M/\bullet]} = \simp S[\simp M/\force z] \]
    Then by repeated $U\beta$, we can replace $\simp M$ with a forced thunk:
    \[ \simp S[\simp M/\force z] \equidyn \simp S[\force\thunk \simp M/\force z] \]
    which since we are now substituting a force for a force is the
    same as substituting the thunk for the variable:
    \[ \simp S[\force\thunk \simp M/\force z]
    \equidyn
    \simp S[\thunk \simp M / z]
    \]
  \end{enumerate}
\end{longproof}

\begin{theorem}[De-complexification preserves Dynamism]
  If $\Gamma \pipe \Delta \vdash E \ltdyn E' : T$ then ${\Gamma, \simp
    \Delta \vdash \simp E \ltdyn \simp{E'} : \simp T}$
\end{theorem}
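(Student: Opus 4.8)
The plan is to proceed by induction on the derivation of $\Gamma \mid \Delta \vdash E \ltdyn E' : T$ in \cbpvstar, translating each rule of the \cbpvstar\ inequational theory into a derivation in \cbpv. The three shapes of the judgement (value, computation, stack dynamism) are handled uniformly, recalling that $\simp{(-)}$ sends a complex value $V:A$ to a computation $\simp V:\u F A$, a complex stack with hole $\bullet:\u B$ to a computation $\simp S$ with free variable $z:U\u B$, and a computation to a computation of the same type. The tools already available are: the compositionality lemma ($\simpp{E[V/x]}\equidyn\bindXtoYinZ{\simp V}{x}{\simp E}$ and $\simpp{S[M/\bullet]}\equidyn\simp S[\thunk\simp M/z]$), the facts that $\simp V$ is thunkable and $\simp S$ is linear in $z$, and the $\beta\eta$/error theory of \cbpv\ itself. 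Note we may not appeal to the ``identity in \cbpvstar'' lemma here, since $\simp E$ and $\simp{E'}$ are genuinely simpler terms and we must re-derive the dynamism inside \cbpv.

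The structural and congruence rules are routine. Reflexivity and transitivity go through directly; the variable, hole and error-reflexivity rules map to the corresponding \cbpv\ rules. For the substitution rules (value-in-value, value-in-term, term-in-stack) I rewrite both sides by the compositionality lemma and apply the \cbpv\ congruence rule for $\mathsf{bind}$ together with the inductive hypotheses; e.g.\ $E[V/x]\ltdyn E'[V'/x']$ becomes $\bindXtoYinZ{\simp V}{x}{\simp E}\ltdyn\bindXtoYinZ{\simp{V'}}{x}{\simp{E'}}$. For the congruence rule of each term constructor, de-complexification either wraps the translated subterms in $\mathsf{bind}$s (for value constructors) or threads them through unchanged (for computation constructors), so the case closes by the \cbpv\ congruence rules for $\mathsf{bind}$ and for the constructor, plus the inductive hypotheses. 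I expect no difficulty here beyond bookkeeping.

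The substance is in the axiom cases. For a $\beta$ rule I de-complexify both sides and normalise: the administrative redices that $\simp{(-)}$ introduces are removed using the $\u F$ and $U$ $\beta\eta$ laws of \cbpv\ and the compositionality lemma for the substitution on the right, after which the two sides coincide by the \cbpv\ $\beta$ rule for the relevant connective (and analogously for $\mu\beta,\nu\beta$). For an $\eta$ rule of a value/positive type ($+,\times,1,\mu,U$): the de-complexified $\eta$-expansion of $E$, after pushing the $\inl/\inr/\mathsf{roll}$ substitutions inward via compositionality and cancelling administrative redices, becomes a \cbpv\ pattern-match on the relevant variable whose branches are instances of $\simp E$; since $\simp E$ has computation type $\simp T$, the \cbpv\ $\eta$ rule for that type (for $U$, together with thunkability of $\simp V$) closes the gap. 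For an $\eta$ rule of a negative type ($\u F,\to,\with,\top,\nu$) the same normalisation yields the \cbpv\ $\eta$-expansion of the computation $\simp E$ (which, when $E$ is a complex stack, is built on $\force z$), finished by the \cbpv\ $\eta$ rule for that type. Error-is-bottom maps to the \cbpv\ one since $\simp\err=\err$. For error strictness $S[\err]\equidyn\err$ I rewrite $\simpp{S[\err]}\equidyn\simp S[\thunk\err/z]$ and invoke linearity of $\simp S$ in $z$ with the fresh variable instantiated at $\thunk\err$: the left side of the linearity equation reduces to $\err$ by \cbpv\ stack-strictness applied to $\bindXtoYinZ{\bullet}{z}{\simp S}$, and the right side reduces to $\simp S[\thunk\err/z]$, so the two are equal.

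The main obstacle is the interaction with complex stacks: because $\simp S$ is a \emph{linear term} rather than a syntactic stack, the simple stack reasoning of \cbpv\ does not apply directly, so the negative-$\eta$ case, the term-in-stack substitution case, and error strictness must be argued through the linearity equation (dually, the positive-$\eta$ and value-substitution cases go through thunkability, already packaged into the compositionality lemma). Since thunkability of de-complexified complex values and linearity of de-complexified complex stacks are proved separately, what remains in this theorem is the disciplined case analysis sketched above; it will be convenient to first isolate a small ``linear substitution'' lemma stating that plugging a computation into $\simp S$ behaves like stack composition, which streamlines several of these cases.
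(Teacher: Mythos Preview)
Your proposal is correct and follows essentially the same approach as the paper: induction on the \cbpvstar\ derivation, with the substitution rules handled by the compositionality lemma, the $\beta\eta$ axioms reduced to the corresponding \cbpv\ $\beta\eta$ rules after clearing administrative redices, and stack strictness handled via linearity of $\simp S$ instantiated so that $\bindXtoYinZ{\err}{z}{\simp S}$ appears on one side.

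One small organizational note: you slightly overstate where linearity and thunkability are needed directly. Because the $\eta$ axioms are stated with a bare variable or hole (e.g.\ $\bullet : A\to\u B\vdash\bullet\equidyn\lambda x.\bullet\,x$), their de-complexifications already have the simple form $\force z$ (resp.\ $\ret x$), so the negative-$\eta$ and $U\eta$ cases close by the \cbpv\ $\eta$ rule plus $\u F\beta$, without a separate appeal to linearity or thunkability; those properties are only consumed inside the compositionality lemma (and, as you note, explicitly in the strictness case). Likewise, the ``linear substitution lemma'' you propose isolating is exactly part~2 of the compositionality lemma you already list as available. None of this affects correctness.
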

\begin{longproof}
  \begin{enumerate}
  \item Reflexivity is translated to reflexivity.
  \item Transitivity is translated to transitivity.
  \item Compatibility rules are translated to compatibility rules.
  \item Substitution of a Value
    \[
    \inferrule
    {\Gamma, x : A, \simp\Delta \vdash \simp E \ltdyn \simp {E'} : \simp T \and \Gamma \vdash \simp V \ltdyn \simp {V'} : \u F A}
    {\Gamma, \simp\Delta \vdash \simp{E[V/x]} \ltdyn \simp{E'[V'/x]} : \simp T}
    \]
    By the compositionality lemma, it is sufficient to show:
    \[ \bindXtoYinZ {\simp V} x {\simp E} \ltdyn \bindXtoYinZ {\simp {V'}} {x} E' \]
    which follows by bind compatibility.
  \item Plugging a term into a hole:
    \[
    \inferrule
    {\Gamma, z : U{\u C} \vdash \simp {S} \ltdyn \simp{S'} : \u B\and
      \Gamma,\simp\Delta \vdash \simp{M} \ltdyn \simp{M'} : \u C}
    {\Gamma, \simp\Delta \vdash \simp{S[M]} \ltdyn \simp{S'[M']} : \u B}
    \]
    By compositionality, it is sufficient to show
    \[ \simp{S}[\thunk{\simp M}/z] \ltdyn \simp{S'}[\thunk{\simp{M'}}/z] \]
    which follows by thunk compatibility and the simple substitution rule.
  \item Stack strictness
    We need to show for $S$ a complex stack,
    that
    \[ \simpp{S[\err]} \equidyn \err \]
    By stack compositionality we know
    \[ \simpp{S[\err]} \equidyn \simp{S}[{\thunk \err/z}] \]
    \begin{align*}
      \sem{S}[{\thunk \err/z}]
      &\equidyn \simp{S}[\thunk {(\bindXtoYinZ \err y \err)}/z]\tag{Stacks preserve $\err$}\\
      &\equidyn
      \bindXtoYinZ \err y \simp{S}[{\thunk \err/z}] \tag{$\simp S$ is linear in $z$}\\
      &\equidyn \err \tag{Stacks preserve $\err$}
    \end{align*}
  \item $1\beta$ By compositionality it is sufficient to show
    \[\bindXtoYinZ {\ret ()} x \pmpairWtoinZ x {\simp E} \equidyn \bindXtoYinZ {\ret ()} x \simp E \]
    which follows by $\u F\beta, 1\beta$.
  \item $1\eta$ We need to show for $\Gamma, x : 1 \pipe \Delta \vdash E : T$
    \[ \simp{E} \equidyn \bindXtoYinZ {\ret x} x \pmpairWtoinZ x {\simpp{E[()/x]}}\]
    after a $\u F\beta$, it is sufficient using $1\eta$ to prove:
    \[ {\simpp{E[()/x]}} \equidyn \simp E[()/x] \]
    which follows by compositionality and $\u F\beta$:
    \[ {\simpp{E[()/x]}} \equidyn \bindXtoYinZ {\ret ()} x {\simp E} \equidyn \simp{E}[()/x] \]
  \item $\times\beta$ By compositionality it is sufficient to show
    \begin{align*}
    &\bindXtoYinZ {(\bindXtoYinZ {\simp{V_1}} {x_1} \bindXtoYinZ {\simp{V_2}} {x_2} {\ret (x_1,x_2)})} x \pmpairWtoXYinZ {x} {x_1}{x_2} {\simp E}\\
      &\equidyn \bindXtoYinZ {\simp{V_1}} {x_1} \bindXtoYinZ {\simp{V_2}} {x_2} \simp E
    \end{align*}
    which follows by $\u F\eta, \u F\beta, \times\beta$.
  \item $\times\eta$ We need to show for $\Gamma, x : A_1\times A_2
    \pipe\Delta \vdash E : T$ that
    \[ \simp{E} \equidyn \bindXtoYinZ {\ret x} x \pmpairWtoXYinZ x {x_1}{x_2} \simpp{E[(x_1,x_2)/x]} \]
    by $\u F\beta,\times\eta$ it is sufficient to show
    \[ \simp{E[(x_1,x_2)/x]} \equidyn \simp{E}[(x_1,x_2)/x] \]
    Which follows by compositionality:
    \begin{align*}
      &\simp{E[(x_1,x_2)/x]}\\
      &\equidyn \bindXtoYinZ {x_1}{x_1}\bindXtoYinZ {x_2}{x_2} \bindXtoYinZ {\ret (x_1,x_2)} x \simp E\tag{compositionality}\\
      &\equidyn \bindXtoYinZ {\ret (x_1,x_2)} x \simp E\tag{$\u F\beta$}\\
      &\equidyn \simp E[(x_1,x_2)/x]
    \end{align*}
  \item $0\eta$
    We need to show for any $\Gamma, x : 0 \pipe \Delta \vdash E : T$
    that
    \[ \simp E \equidyn \bindXtoYinZ {\ret x} x \absurd x \]
    which follows by $0\eta$
  \item $+\beta$ Without loss of generality, we do the $\inl$ case
    By compositionality it is sufficient to show
    \[
    \bindXtoYinZ {(\bindXtoYinZ {\simp V} x {\inl x})} x \caseofXthenYelseZ x {x_1. \simp E_1}{x_2. \simp E_2}
    \equidyn \simp{E_1[V/x_1]}    
    \]
    which holds by $\u F\eta,\u F\beta, +\beta$
  \item $+\eta$ We need to show for any $\Gamma, x:A_1+A_2
    \pipe\Delta\vdash E :T$ that
    \[ \simp E \equidyn
    \bindXtoYinZ {\ret x} x \caseofXthenYelseZ x {x_1. \simpp{E[\inl x_1/x]}}{x_2. \simpp{E[\inl x_2/x]}} \]
    \begin{align*}
      &\simp E\\
      &\equidyn \caseofXthenYelseZ x {x_1. \simp{E}[\inl x_1/x]}{x_2. \simp{E}[\inl x_2/x]} \tag{$+\eta$}\\
      &\equidyn
      \caseofXthenYelseZ x {x_1. \bindXtoYinZ {\ret \inl x_1} x \simp{E}}{x_2. \bindXtoYinZ {\ret \inl x_2} x \simp{E}}\tag{$\u F\beta$}\\
      &\equidyn
      \caseofXthenYelseZ x {x_1. \simp{E[\inl x_1]/x}}{x_2. \simp{E[\inl x_2]/x}}\tag{compositionality}\\
      &\equidyn 
      \bindXtoYinZ {\ret x} x \caseofXthenYelseZ x {x_1. \simp{E[\inl x_1]/x}}{x_2. \simp{E[\inl x_2]/x}}\tag{$\u F\beta$}
    \end{align*}
  \item $\mu\beta$ By compositionality it is sufficient to show
    \begin{align*}
    &\bindXtoYinZ {(\bindXtoYinZ {\simp V} y {\ret \roll y})} x \pmmuXtoYinZ {x} y E\\
      &\equidyn \bindXtoYinZ {\simp V} y \simp E
    \end{align*}
    which follows by $\u F\eta, \u F\beta, \mu\beta$.
  \item $\mu\eta$ We need to show for $\Gamma, x : \mu X. A \pipe\Delta \vdash E : T$ that
    \[ \simp{E} \equidyn \bindXtoYinZ {\ret x} x \pmmuXtoYinZ x y \simpp{E[\roll y/x]} \]
    by $\u F\beta,\times\eta$ it is sufficient to show
    \[ \simp{E[\roll y/x]} \equidyn \simp{E}[\roll y/x] \]
    Which follows by compositionality:
    \begin{align*}
      &\simp{E[\roll y/x]}\\
      &\equidyn \bindXtoYinZ {\ret y} y \bindXtoYinZ {\ret \roll y} x \simp E\tag{compositionality}\\
      &\equidyn \bindXtoYinZ {{\ret\roll y}} x \simp E\tag{$\u F\beta$}\\
      &\equidyn \simp E[\roll y/x] \tag{$\u F\beta$}
    \end{align*}
  \item $U\beta$ We need to show
    \[ \bindXtoYinZ {\ret \simp M} x {\force x} \equidyn \simp M \]
    which follows by $\u F\beta, U\beta$
  \item $U\eta$ We need to show for any $\Gamma \vdash V : U\u B$ that
    \[ \simp V \equidyn \ret \thunk{(\bindXtoYinZ {\simp V} x \force x)}\]
    By compositionality it is sufficient to show
    \[ \simp V \equidyn \bindXtoYinZ {\simp V} x \ret\thunk{(\bindXtoYinZ {\ret x} x \force x)}\]
    which follows by $U\eta$ and some simple reductions:
    \begin{align*}
      \bindXtoYinZ {\simp V} x \ret\thunk{(\bindXtoYinZ {\ret x} x \force x)}\\
      &\equidyn \bindXtoYinZ {\simp V} x\ret\thunk{\force x}\tag{$\u F\beta$}\\
      &\equidyn \bindXtoYinZ {\simp V} x\ret x\tag{$U\eta$}\\
      &\equidyn \simp V\tag{$\u F\eta$}
    \end{align*}
  \item $\to\beta$
    By compositionality it is sufficient to show
    \[ \bindXtoYinZ {\simp V} x (\lambda x:A. \simp M)\,x
    \equidyn \bindXtoYinZ {\simp V} x \simp M \]
    which follows by $\to\beta$
  \item $\to\eta$ We need to show
    \[ z : U(A \to \u B) \vdash
    \force z \equidyn
    \lambda x:A. \bindXtoYinZ {\ret x} x (\force z)\,x
    \]
    which follows by $\u F\beta, \to\eta$
  \item $\top\eta$ We need to show
    \[ z : U\top \vdash \force z \equidyn \{\} \]
    which is exactly $\top\eta$.
  \item $\with\beta$ Immediate by simple $\with\beta$.
  \item $\with\eta$ We need to show
    \[ z : U(\u B_1\with\u B_2) \vdash \force z \equidyn \pair{\pi\force z}{\pi'\force z}\]
    which is exactly $\with\eta$
  \item $\nu\beta$ Immediate by simple $\nu\beta$
  \item $\nu\eta$ We need to show
    \[ z : U(\nu \u Y. \u B) \vdash \force z \equidyn \roll\unroll z \]
    which is exactly $\nu\eta$
  \item $\u F\beta$
    We need to show
    \[ \bindXtoYinZ {\simp V} x \simp M \equidyn \simp{M[V/x]}\]
    which is exactly the compositionality lemma.
  \item $\u F\eta$ We need to show
    \[ z : U(\u F A)\force z \vdash \bindXtoYinZ {\force z} x \bindXtoYinZ {\ret x} x \ret x \]
    which follows by $\u F\beta,\u F\eta$
  \end{enumerate}
\end{longproof}

\begin{theorem}[Complex CBPV is Conservative over CBPV]
  If $M, M'$ are terms in CBPV and $M \ltdyn M'$ is provable in \cbpvstar\ 
  then $M \ltdyn M'$ is provable in CBPV.
\end{theorem}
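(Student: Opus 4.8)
The plan is to prove conservativity by a short round-trip through the de-complexification translation $\simp{-}$, combining the two properties of $\simp{-}$ established above: that it preserves term dynamism (mapping \cbpvstar\ dynamism derivations to \cbpv\ dynamism derivations), and that it acts as the identity up to $\equidyn$, with the latter provable already in \cbpv\ when the input is a \cbpv\ term.

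Concretely, suppose $M$ and $M'$ are \cbpv\ computations of type $\u B$ (the case of values is analogous, with $\u B$ replaced by $\u F A$ and using the $\simp A = \u F A$ clause), and suppose $\Gamma \pipe \cdot \vdash M \ltdyn M' : \u B$ is provable in \cbpvstar. First I would apply the theorem that de-complexification preserves dynamism to this derivation, obtaining that $\Gamma, \simp{\cdot} \vdash \simp M \ltdyn \simp{M'} : \simp{\u B}$ is provable in \cbpv. Since $\simp{\cdot} = \cdot$ and $\simp{\u B} = \u B$ on computation types, this is exactly a \cbpv\ dynamism judgement $\Gamma \vdash \simp M \ltdyn \simp{M'} : \u B$.

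Next, because $M$ and $M'$ are themselves \cbpv\ terms, I would invoke the ``de-complexification is identity'' lemma in its strengthened form: for $M$ in \cbpv, the equivalence $M \equidyn \simp M$ is provable \emph{in \cbpv} (its proof uses only the $\beta\eta$ laws and error axioms of \cbpv, never complex values or stacks). This yields $\Gamma \vdash M \equidyn \simp M : \u B$ and $\Gamma \vdash M' \equidyn \simp{M'} : \u B$ in \cbpv. Chaining these with transitivity of $\ltdyn$ gives $M \ltdyn \simp M \ltdyn \simp{M'} \ltdyn M'$ entirely within \cbpv, hence $\Gamma \vdash M \ltdyn M' : \u B$ is provable in \cbpv, as required.

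The argument itself is essentially a three-step composition, so there is no real obstacle in this theorem: all the genuine work has been front-loaded into establishing that $\simp{-}$ preserves dynamism (which in turn rests on complex values de-complexifying to thunkable computations and complex stacks to linear computations, and on those properties sufficing to transport the positive-type $\eta$ laws and error-strictness through the translation). The only points that require care here are that the $M \equidyn \simp M$ step is discharged within \cbpv\ rather than merely \cbpvstar\ — which is precisely the ``furthermore'' clause of the identity lemma — and that the types are matched, i.e.\ that $\simp{-}$ is the identity on computation types so the resulting judgement is about $\u B$ and not some shifted type.
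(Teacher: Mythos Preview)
Your proof is correct and follows exactly the same route as the paper: apply dynamism-preservation of $\simp{-}$, then sandwich with $M \equidyn \simp M$ and $\simp{M'} \equidyn M'$ (using the ``furthermore'' clause that these hold in \cbpv) and conclude by transitivity. The paper's proof is just the one-line chain $M \equidyn \simp M \ltdyn \simp{M'} \equidyn M'$, and you have unpacked it faithfully.
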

\begin{longproof}
  Because de-complexification preserves dynamism, $\simp M \ltdyn
  \simp{M'}$ in simple CBPV. Then it follows because
  de-complexification is equivalent to identity (in CBPV):
  \[ M \equidyn \simp M \ltdyn \simp {M'} \equidyn M' \]
\end{longproof}
\end{longonly}

\section{Operational Model of GTT}
\label{sec:operational}

In this section, we establish a model of our CBPV inequational theory
using a notion of observational approximation based on the CBPV
operational semantics.
By composition with the axiomatic graduality theorem, this establishes
the \emph{operational graduality} theorem, i.e., a theorem analogous
to the \emph{dynamic gradual guarantee}~\cite{refined}.  

\subsection{Call-by-push-value operational semantics}

We use a small-step operational semantics for CBPV
\ifshort
with the following rules (excerpt):

\fi
\iflong
in figure
\ref{fig:cbpv-operational-semantics}.
\begin{figure}
\fi
\begin{small}
\begin{minipage}[t]{0.65\textwidth}
\[
  \begin{array}{rcl}
    S[\err] &\stepzero& \err\\
\iflong
    S[\caseofXthenYelseZ{\inl V}{x_1. M_1}{x_2. M_2}] &\stepzero & S[M_1[V/x_1]]\\
    S[\caseofXthenYelseZ{\inr V}{x_1. M_1}{x_2. M_2}] &\stepzero & S[M_2[V/x_2]]\\
\fi
    S[\pmpairWtoXYinZ{(V_1,V_2)}{x_1}{x_2}{M}] &\stepzero & S[M[V_1/x_1,V_2/x_2]]\\
    S[\pmmuXtoYinZ{\rollty A V}{x}{M}] &\stepone & S[M[V/x]]\\
    S[\force\thunk M] &\stepzero & S[M]\\
\iflong
    S[\letXbeYinZ V x M] &\stepzero & S[M[V/x]]\\
\fi
    S[\bindXtoYinZ {\ret V} x M] &\stepzero & S[M[V/x]]\\
    S[(\lambda x:A. M)\,V] &\stepzero & S[M[V/x]]\\
\iflong
    S[\pi \pair{M}{M'}] &\stepzero & S[M]\\
    S[\pi' \pair{M}{M'}] &\stepzero & S[M']\\
\fi
    S[\unroll \rollty{\u B} M] &\stepone & S[M]\\
\end{array}
\]
\end{minipage}%
\begin{minipage}[t]{0.27\textwidth}
\begin{mathpar}
    \inferrule
    { }
    {M \bigstepsin 0 M}

\vspace{1.3em}

    \inferrule
    {M_1 \stepsin{i} M_2 \and M_2 \bigstepsin j M_3}
    {M_1 \bigstepsin {i+j} M_3}
  \end{mathpar}
\end{minipage}
\end{small}
\iflong
  \caption{CBPV Operational Semantics}
  \label{fig:cbpv-operational-semantics}
\end{figure}
\fi

This is morally the same as in \citet{levy03cbpvbook}, but we present
stacks in a manner similar to Hieb-Felleisen style evaluation
contexts\iflong(rather than as an explicit stack machine with stack frames)\fi.
We also make the step relation count unrollings of a recursive or
corecursive type, for the step-indexed logical relation later.
The operational semantics is only defined for terms of type
$\cdot \vdash M : \u F (1+1)$, which we take as the type of whole
programs.

\iflong
We can then observe the following standard operational properties. (We
write $M \step N$ with no index when the index is irrelevant.)
\begin{lemma}[Reduction is Deterministic]
  If $M \step M_1$ and $M \step M_2$, then $M_1 = M_2$.
\end{lemma}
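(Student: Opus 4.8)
The final statement to prove is the determinism of the CBPV reduction relation: if $M \step M_1$ and $M \step M_2$ then $M_1 = M_2$.

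\textbf{Overall approach.} The plan is to show that every well-typed whole program $\cdot \vdash M : \u F(1+1)$ either is a terminal form (a $\ret V$ or an error $\err$, or more precisely that no step applies) or decomposes \emph{uniquely} as $S[R]$ where $S$ is an evaluation context (stack) and $R$ is a redex --- one of the left-hand sides of the step rules with the ambient stack stripped off, i.e. one of $\err$, $\pmpairWtoXYinZ{(V_1,V_2)}{x_1}{x_2}{M}$, $\pmmuXtoYinZ{\rollty A V}{x}{M}$, $\force\thunk M$, $\bindXtoYinZ{\ret V}{x}{M}$, $(\lambda x{:}A.M)\,V$, $\unroll\rollty{\u B}M$, and (in the long version) the $\case$, $\letXbeYinZ{}{}{}$, and $\pi/\pi'$ redexes. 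Since each redex fires to a determined contractum and each rule's left-hand side has a distinct head constructor under the stack, uniqueness of the decomposition immediately yields that $M_1 = M_2$.

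\textbf{Key steps, in order.} First I would define the grammar of stacks $S$ for operational CBPV (as in Figure~\ref{fig:operation-cbpv-syntax}: $S ::= \bullet \mid \bindXtoYinZ{S}{x}{M} \mid S\,V \mid \pi S \mid \pi' S \mid \unrollty{\nu \u Y.\u B}{S}$) and observe that the typing judgement $\cdot \mid \bullet : \u B \vdash S : \u F(1+1)$ is deterministic in the sense that the computation type $\u B$ of the hole is uniquely recovered from $S$. Second, I would prove a \textbf{unique decomposition lemma}: for any closed well-typed computation $\cdot \vdash M : \u B$, exactly one of the following holds --- $M$ is a ``value-like'' introduction form appropriate to its type with no pending elimination (e.g. $\ret V$, $\lambda x.M'$, $\pair{M_1}{M_2}$, $\emptypair$, $\rollty{\nu}{M'}$, $\thunk$ under a suitable head position) so that no rule applies; or there is a \emph{unique} pair $(S, R)$ with $M = S[R]$ and $R$ a redex. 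This is proved by induction on the typing derivation of $M$, using canonical forms lemmas for closed values (a closed $V : U\u B$ is a $\thunk$, a closed $V : A_1 \times A_2$ is a pair, a closed $V : \mu X.A$ is a $\roll$, etc.) to guarantee that whenever the head of $M$ is an elimination form, its principal argument is either already a value of the right shape --- producing a redex at the head --- or is itself a non-value computation, in which case the induction hypothesis splits it and we prepend the appropriate stack frame. Third, with unique decomposition in hand, suppose $M \step M_1$ via rule $r_1$ witnessed by $M = S_1[R_1]$ and $M \step M_2$ via $r_2$ witnessed by $M = S_2[R_2]$; by uniqueness $S_1 = S_2$ and $R_1 = R_2$, hence $r_1 = r_2$ (the left-hand sides of distinct rules are syntactically distinct redices), so the contracta agree and $M_1 = M_2$.

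\textbf{Main obstacle.} The routine bookkeeping is light; the real content is the unique decomposition lemma, and specifically ensuring that the stack frame one peels off is forced --- i.e. that for an elimination form like $\bindXtoYinZ{N}{x}{M'}$ the \emph{only} place a step can occur is either at the head (if $N = \ret V$) or inside $N$, never inside $M'$, and similarly that $\pi$, $\pi'$, application, and $\unroll$ only ever reduce their principal subterm. This relies on the A-normal-form discipline of operational CBPV (noted in the excerpt: ``$\kw{case}$ and $\kw{split}$ are not evaluation contexts''), so I would be careful to state the stack grammar exactly as in Figure~\ref{fig:operation-cbpv-syntax} and check that $\case$/$\kw{split}$/$\kw{let}$ on a \emph{variable} or other stuck value cannot occur in a \emph{closed} term, which is where closedness and canonical forms do the work. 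Once that is pinned down, determinism follows formally with no further subtlety.
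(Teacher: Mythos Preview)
Your approach via unique decomposition is correct and is the standard technique. The paper, however, gives no proof at all: it simply lists determinism (together with subject reduction and progress) as a ``standard operational property'' and moves on. So there is nothing to compare against on the paper's side.

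One small remark on economy: determinism of this step relation is a purely \emph{syntactic} fact and does not require typing or canonical-forms lemmas. It suffices to show that any term admits at most one decomposition $M = S[R]$ with $R$ a redex, by induction on $M$ using only the stack grammar $S ::= \bullet \mid \bindXtoYinZ{S}{x}{N} \mid S\,V \mid \pi S \mid \pi' S \mid \unroll S$. The canonical-forms reasoning you invoke (closed $V : U\u B$ is a $\thunk$, etc.) is what you need for \emph{progress}, i.e.\ to show that a decomposition \emph{exists}; for uniqueness --- which is all determinism needs --- the argument is that each stack frame is determined by the outermost constructor of $M$, and redexes are syntactically disjoint. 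Your plan proves the stronger unique-decomposition-plus-progress statement in one go, which is fine, just slightly more than the lemma demands.
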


\begin{lemma}[Subject Reduction]
  If $\cdot \vdash M : \u F A$ and $M \step M'$ then
  $\cdot \vdash M' : \u F A$.
\end{lemma}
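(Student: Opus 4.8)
The plan is to run the standard type-preservation argument for call-by-push-value, which here reduces to a case analysis on the reduction rules once two routine structural lemmas about the operational typing judgement are in place.

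First I would establish a \emph{value substitution lemma}: if $\Gamma, x : A \mid \Delta \vdash M : \u B$ and $\Gamma \vdash V : A$ then $\Gamma \mid \Delta \vdash M[V/x] : \u B$ (together with the analogous statement for substituting a value into a value), proved by an easy induction on the typing derivation of $M$. Second I would establish a \emph{stack plugging lemma} stating that stack application behaves like hole substitution: if $\Gamma \mid \bullet : \u B \vdash S : \u B'$ and $\Gamma \mid \cdot \vdash N : \u B$ then $\Gamma \mid \cdot \vdash S[N] : \u B'$; and, conversely, if $\Gamma \mid \cdot \vdash S[N] : \u B'$ for a stack $S$ of the operational grammar, then there is a unique computation type $\u B$ with $\Gamma \mid \bullet : \u B \vdash S : \u B'$ and $\Gamma \mid \cdot \vdash N : \u B$. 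Both directions go by induction on the structure of $S$ ($\bullet$, $\bindXtoYinZ S x M$, $S\,V$, $\pi S$, $\pi' S$, $\unrollty{\nu\u Y.\u B}{S}$), using at each step the inversion principle for the typing rule of the corresponding computation-elimination frame. The decomposition (converse) direction is what lets us extract a typing of the head redex from a typing of the whole program.

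With these in hand, suppose $\cdot \vdash M : \u F A$ and $M \step M'$. The error rule $S[\err] \step \err$ is handled separately: its reduct is $\err_{\u F A}$, which has type $\u F A$ by \textsc{Err} directly (the rule discards the stack). Every other reduction rule has the form $S[R] \stepsin{i} S[R']$, so syntactically $M = S[R]$ and $M' = S[R']$ with $R$ a head redex, and by the decomposition lemma there is a $\u B$ with $\cdot \mid \bullet : \u B \vdash S : \u F A$ and $\cdot \mid \cdot \vdash R : \u B$; it then suffices to prove $\cdot \mid \cdot \vdash R' : \u B$, since the forward direction of the plugging lemma yields $\cdot \mid \cdot \vdash S[R'] : \u F A$. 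I would check $\cdot \vdash R' : \u B$ by cases on $R$. For the positive $\beta$-redexes ($\letXbeYinZ V x M_0$, $\mathsf{case}$ on $\inl V$ or $\inr V$, $\mathsf{split}$ on $(V_1,V_2)$, and $\mathsf{unroll}$-to-$\mathsf{roll}$ on a $\mu$-value), inversion of the elimination rule and of the matching introduction rule exhibits the $V_i$ at the right types, and the value substitution lemma gives the substituted body type $\u B$. For $\bindXtoYinZ{\ret V}{x}{M_0}$, inversion of $\mathsf{bind}$ and of $\mathsf{ret}$ plus value substitution gives $M_0[V/x] : \u B$. For $(\lambda x. M_0)\,V$, inversion of the $\to$ elimination and introduction rules plus value substitution. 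For $\force\thunk M_0$, inversion of the $U$ elimination and introduction rules gives $\cdot \mid \cdot \vdash M_0 : \u B$; likewise $\pi\pair{M_0}{M_1}$ and $\pi'\pair{M_0}{M_1}$ by inversion of $\with$, and $\unroll\rollty{\nu\u Y.\u B_0}{M_0}$ by inversion of the $\nu$ elimination and introduction rules, yielding $\cdot \mid \cdot \vdash M_0 : \u B_0[\nu\u Y.\u B_0/\u Y]$, which is exactly the type of the $\unroll$. (The rules are schematic in the return connective, so nothing depends on $A$ being $1+1$.)

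The only part that is not purely mechanical is the stack plugging lemma, and in particular the uniqueness-of-$\u B$ clause of its decomposition direction: one must observe that the operational stack grammar consists of exactly the computation-type elimination frames, each of which determines the type of its hole from the type of its output, so that $S$ has a well-defined hole type. Everything else is bookkeeping with inversion and value substitution, both entirely standard for CBPV.
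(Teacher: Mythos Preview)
Your proof is correct and is exactly the standard preservation argument one would write for CBPV; the paper itself does not spell out a proof of this lemma, merely stating it among the ``standard operational properties'' and relying on the reader to supply the routine argument you have given. Your substitution and stack-plugging/decomposition lemmas, together with the case analysis on redexes, are precisely what is needed, so there is nothing to compare.
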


\begin{lemma}[Progress]
  If $\cdot \vdash M : \u F A$ then one of the following holds:
  \begin{mathpar}
    M = \err \and M = \ret V \text{with} V:A \and \exists M'.~ M \step M'
  \end{mathpar}
\end{lemma}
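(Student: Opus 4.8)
The goal is the standard Progress lemma for the operational semantics of CBPV: if $\cdot \vdash M : \u F A$ then either $M = \err$, or $M = \ret V$ for some value $V : A$, or there exists $M'$ with $M \step M'$. I would prove this by structural induction on the typing derivation of $M$, but organized around the observation that any closed computation of type $\u F A$ is of the form $S[R]$ where $R$ is a \emph{redex} (or $\err$, or $\ret V$) and $S$ is an evaluation context; the real content is a ``unique decomposition'' lemma for CBPV computations, from which progress follows immediately since every redex has a reduction rule listed in the operational semantics figure and $\err$, $\ret V$ are the normal forms.

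\textbf{Key steps, in order.} First I would make precise the grammar of evaluation contexts $S$ (the stack grammar already given: $\bullet$, $\bindXtoYinZ S x M$, $S\,V$, $\pi S$, $\pi' S$, $\unroll S$) and state the decomposition lemma: every closed computation $\cdot \vdash M : \u F A$ is exactly one of (i) $\err$, (ii) $\ret V$ with $\cdot \vdash V : A$, or (iii) $S[R]$ where $R$ is one of the listed redices ($\err$ in a nonempty context counts here, or more cleanly I separate $S[\err]$ as its own case). Second, I would prove this decomposition by induction on the derivation of $\cdot \vdash M : \u F A$ (equivalently on the structure of $M$): the introduction form $\ret V$ is case (ii); for each elimination form ($\kw{bind}$, application, $\pi/\pi'$, $\unroll$) I recurse on the principal premise, which by induction is $\err$, a value-returner, a term of non-$\u F$ computation type that is itself $S[\ldots]$, etc., and in each subcase either the head is already a redex/error or I extend the context $S$; third, for the base cases of closed \emph{values} appearing in elimination positions ($\force V$, $\caseofXthenYelseZ{V}{\ldots}{\ldots}$, $\kw{split}$, $\pmmuXtoYinZ{V}{x}{M}$, $\absurd V$), I invoke canonical forms for closed values of each value type ($U\u B$, $+$, $\times$, $\mu X.A$, $0$) to conclude the head is a redex (or, for $0$, that no such closed $V$ exists). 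Finally, Progress follows: in case (i) $M = \err$; in case (ii) $M = \ret V$; in case (iii) the operational semantics provides $M = S[R] \step M'$.

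\textbf{Main obstacle.} The only real subtlety is the canonical-forms argument for closed values, and in particular handling the recursive value type $\mu X.A$ and the empty type $0$: I need that a closed $V : \mu X.A$ must be $\rollty{\mu X.A}{V'}$ (so $\pmmuXtoYinZ{V}{x}{M}$ is a redex) and that there is no closed $V : 0$ (so $\absurd V$ never arises in a closed term of type $\u F A$). These are themselves small inductions on value typing derivations. A secondary bookkeeping point is making sure the decomposition is genuinely \emph{unique} — this is needed for Determinism but for Progress alone only existence is required, so I would prove existence carefully and remark that uniqueness follows by the same case analysis. I expect essentially no difficulty beyond routine case-chasing, since CBPV is designed precisely so that evaluation order is syntactically determined; the decomposition lemma plus canonical forms is the whole proof, and everything else is reading off the reduction rules from the operational semantics figure.
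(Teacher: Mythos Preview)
The paper states Progress as a ``standard operational property'' and provides no proof; your plan is the standard one and is correct.

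One small point to tighten: as written, your induction hypothesis is only for closed terms of type $\u F A$, but the elimination forms $M\,V$, $\pi M$, $\pi' M$, and $\unroll M$ can all have type $\u F A$ while their principal premise has some other computation type ($A' \to \u F A$, $\u F A \with \u B$, etc.). You seem aware of this when you write ``a term of non-$\u F$ computation type that is itself $S[\ldots]$'', but the clean fix is to state and prove the decomposition lemma for closed computations of \emph{arbitrary} computation type $\u B$: every such $M$ is either a head-introduction form ($\ret V$, $\lambda x.N$, $\pair{M_1}{M_2}$, $\{\}$, $\rollty{\nu\u Y.\u B}N$), or $\err$, or $S[R]$ for a redex $R$ in a (possibly nonempty) stack $S$. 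Progress for $\u F A$ is then the specialization where the only head-introduction form is $\ret V$. With that generalization in place your case analysis and canonical-forms arguments go through exactly as you describe.
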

\fi

\begin{shortonly}
  It is easy to see that the operational semantics is deterministic
  and progress and type preservation theorems hold, which allows us to
  define the ``final result'' of a computation as follows:
\end{shortonly}
\begin{longonly}
The standard progress-and-preservation properties allow us to define
 the ``final result'' of a computation as follows:
\end{longonly}
\begin{corollary}[Possible Results of Computation]
  For any $\cdot \vdash M : \u F 2$,
  \begin{longonly}
    one of the following is true:
  \begin{mathpar}
    M \Uparrow \and M \Downarrow \err\and M \Downarrow \ret \tru \and
    M \Downarrow \ret \fls
  \end{mathpar}
  \end{longonly}
  \begin{shortonly}
    either $M \Uparrow$ or $M \Downarrow \err$ or $M \Downarrow \ret
    \tru$ or  $M \Downarrow \ret \fls$.
  \end{shortonly}
\end{corollary}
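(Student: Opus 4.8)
The plan is to derive this corollary directly from the three operational metatheorems stated just above—determinism of reduction, subject reduction, and progress—together with a canonical-forms observation at type $1+1$. First I would note that, by determinism, the reduction sequence issued from $M$ is uniquely determined: there is a unique maximal chain $M = M_0 \step M_1 \step M_2 \step \cdots$. Either this chain is infinite, in which case $M \Uparrow$ by definition, or it is finite and ends at some term $N$ that has no outgoing step. By iterating subject reduction along the chain we still have $\cdot \vdash N : \u F 2$.

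Next, for such a terminal $N$ I would invoke progress: since $\cdot \vdash N : \u F 2$ and $N$ does not reduce, the only possibilities left in the progress dichotomy are $N = \err$ and $N = \ret V$ with $\cdot \vdash V : 1+1$. The first case yields $M \Downarrow \err$. In the second case I need the fact that a closed value of type $1+1$ is either $\inl ()$ or $\inr ()$. In the operational syntax of CBPV used in this section the value grammar contains only introduction forms, so this is an immediate inspection: at type $A_1 + A_2$ a closed value must be $\inl V'$ or $\inr V'$ with $V'$ a closed value of $A_1$ resp.\ $A_2$, and at type $1$ the only closed value is $()$. (If one instead worked with complex values, this step would be the real obstacle, since one would then have to rule out pattern-matching eliminators occurring at the head of a closed value by a sub-induction using the fact that their scrutinee is itself a closed value of sum/unit type; the de-complexification of the previous section is precisely what lets us avoid this here.)

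Putting the pieces together, $V$ is $\inl () = \tru$ or $\inr () = \fls$, so the finite case gives either $M \Downarrow \ret \tru$ or $M \Downarrow \ret \fls$, and the four outcomes are mutually exclusive since reduction is deterministic and the three normal forms $\err$, $\ret \tru$, $\ret \fls$ are syntactically distinct. The only nonroutine ingredient is the canonical-forms step, which—as noted—is trivial in the operational fragment and therefore poses no real difficulty; everything else is bookkeeping on top of the quoted lemmas.
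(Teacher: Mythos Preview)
Your proposal is correct and follows essentially the same approach as the paper: both arguments case on whether the reduction sequence from $M$ is infinite (divergence) or terminates, and in the terminating case invoke progress together with a canonical-forms observation at $1+1$. The paper phrases this as ``coinduction on execution traces'' and leaves the canonical-forms step implicit, whereas you spell it out and additionally invoke determinism (which is not strictly needed for the disjunction but does no harm); otherwise the two arguments coincide.
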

\begin{longproof}
  We define $M \Uparrow$ to hold when if $M \bigstepsin{i} N$ then
  there exists $N'$ with $N \step N'$. For the terminating results, we
  define $M \Downarrow R$ to hold if there exists some $i$ with $M
  \bigstepsin{i} R$. Then we prove the result by coinduction on
  execution traces. If $M \in \{ \err, \ret\tru, \ret\fls \}$ then we
  are done, otherwise by progress, $M \step M'$, so we need only
  observe that each of the cases above is preserved by $\step$.
\end{longproof}
\begin{definition}[Results]
   The possible results of a computation are $ \diverge, \err,
   \ret \tru$ and $\ret \fls$. We denote a result by $R$, and define a
   function $\result$ which takes a program $\cdot \vdash M : \u F 2$,
   and returns its end-behavior, i.e., $\result(M)= \diverge$ if $M
   \Uparrow$ and otherwise $M \Downarrow \result(M)$.
\end{definition}

\subsection{Observational Equivalence and Approximation}
\label{sec:obs-equiv-approx}

Next, we define observational equivalence and approximation in CBPV.
\begin{longonly}
The (standard) definition of observational equivalence is that we
consider two terms (or values) to be equivalent when replacing one
with the other in any program text produces the same overall resulting
computation.
\end{longonly}
\ Define a context $C$ to be a term/value/stack with a single $[\cdot]$ as
some subterm/value/stack, and define a typing $C : (\Gamma \vdash \u B)
\Rightarrow (\Gamma' \vdash \u B')$ to hold when for any $\Gamma \vdash
M : \u B$, $\Gamma' \vdash C[M] : \u B'$ (and similarly for
values/stacks).  Using contexts, we can lift any relation on
\emph{results} to relations on open terms, values and stacks.
\begin{definition}[Contextual Lifting]
  Given any relation ${\sim} \subseteq \text{Result}^2$, we can define
  its \emph{observational lift} $\ctxize\sim$ to be the typed relation
  defined by
  \[ \Gamma \pipe \Delta \vDash E \ctxize\sim E' \in T = \forall C : (\Gamma\pipe\Delta \vdash T) \Rightarrow (\cdot \vdash \u F2).~ \result(C[E]) \sim \result(C[E'])\]
\end{definition}
\begin{longfigure}
\begin{small}
  \begin{mathpar}
    \begin{array}{rcl}
    C_V  & ::= [\cdot] & \rollty{\mu X.A}C_V \mid \inl{C_V} \mid \inr{C_V} \mid (C_V,V)\mid(V,C_V)\mid \thunk{C_M}\\
    \\
    C_M & ::= & [\cdot] \mid \letXbeYinZ {C_V} x M \mid \letXbeYinZ V x
    C_M \mid \pmmuXtoYinZ {C_V} x M \mid\pmmuXtoYinZ V x C_M \\
    & & \mid \rollty{\nu \u Y.\u B} C_M \mid \unroll C_M \mid \abort{C_V} \mid \caseofXthenYelseZ {C_V} {x_1. M_1}{x_2.M_2} \\
    & & 
    \mid\caseofXthenYelseZ V {x_1. C_M}{x_2.M_2} \mid\caseofXthenYelseZ
    V {x_1. M_1}{x_2.C_M} \mid \pmpairWtoinZ {C_V} M\\
    & & \mid \pmpairWtoinZ V C_M \mid \pmpairWtoXYinZ {C_V} x y M\mid \pmpairWtoXYinZ V x y C_M
    \mid \force{C_V} \\
    & & \mid \ret{C_V} \mid \bindXtoYinZ{C_M}{x}{N}
    \mid\bindXtoYinZ{M}{x}{C_M} \mid \lambda x:A.C_M \mid C_M\,V \mid M\,C_V \\
    & & \mid \pair{C_M}{M_2}\mid \pair{M_1}{C_M} \mid \pi C_M \mid \pi' C_M
    \\
    C_S &=& \pi C_S \mid \pi' C_S \mid S\,C_V\mid C_S\,V\mid \bindXtoYinZ {C_S} x M \mid \bindXtoYinZ S x C_M
    \end{array}
  \end{mathpar}
  \end{small}
  \caption{CBPV Contexts}
\end{longfigure}

\begin{shortonly}
The contextual lifting $\ctxize\sim$ is a preorder or equivalence
relation whenever the original relation $\sim$ is, and all $\sim$'s we
use will be at least preorders, so we write $\apreorder$ instead of
$\sim$ for a relation on results.
\end{shortonly}

\begin{longonly}
The contextual lifting $\ctxize\sim$ inherits much structure of the
original relation $\sim$ as the following lemma shows.
This justifies calling $\ctxize\sim$ a contextual preorder when $\sim$
is a preorder (reflexive and transitive) and similarly a contextual
equivalence when $\sim$ is an equivalence (preorder and symmetric).
\begin{definition}[Contextual Preorder, Equivalence]
  If $\sim$ is reflexive, symmetric or transitive, then for each
  typing, $\ctxize\sim$ is reflexive, symmetric or transitive as well,
  respectively.
\end{definition}
In the remainder of the paper we work only with relations that are at
least preorders so we write $\apreorder$ rather than $\sim$.
\end{longonly}

\begin{shortonly}
\noindent Three important relations arise as liftings: Equality of results lifts to observational equivalence
($\ctxize=$).  The preorder generated by $\err \ltdyn R$ (i.e. the other
three results are unrelated maximal elements) lifts to the notion of
\emph{error approximation} used in \citet{newahmed18} to prove the
graduality property ($\ctxize\ltdyn$).  The preorder generated by
$\diverge \preceq R$ lifts to the standard notion of \emph{divergence
  approximation} ($\ctxize\preceq$).
\end{shortonly}

\begin{longonly}
The most famous use of lifting is for observational equivalence,
which is the lifting of equality of results ($\ctxize=$), and we will show that
$\equidyn$ proofs in GTT imply observational equivalences.
However, as shown in \citet{newahmed18}, the graduality property is
defined in terms of an observational \emph{approximation} relation
$\ltdyn$ that places $\err$ as the least element, and every other
element as a maximal element.
Note that this is \emph{not} the standard notion of observational
approximation, which we write $\preceq$, which makes $\diverge$ a least
element and every other element a maximal element.
To distinguish these, we call $\ltdyn$ \emph{error} approximation and
$\preceq$ \emph{divergence} approximation.
We present these graphically (with two more) in Figure
\ref{fig:result-orders}.
\end{longonly}

\iflong
\begin{figure}
  \begin{small}
    \begin{minipage}{0.45\textwidth}
      \begin{center}
        \textbf{Diverge Approx. $\preceq$}\\
      \end{center}
      \begin{tikzcd}
        \ret\fls \arrow[rd, no head] & \ret \tru \arrow[d, no head] & \err \arrow[ld, no head] \\
        & \diverge & 
      \end{tikzcd}
    \end{minipage}
    \begin{minipage}{0.45\textwidth}
      \begin{center}
        \textbf{
          Error Approx. $\ltdyn$}
      \end{center}
      \begin{tikzcd}
        \ret\fls \arrow[rd, no head] & \ret \tru \arrow[d, no head] & \diverge \arrow[ld, no head] \\
        & \err & 
      \end{tikzcd}
    \end{minipage}
    \\\vspace{1em}
    \begin{minipage}{0.45\textwidth}
      \begin{center}
        \textbf{Error Approx. up to left-divergence
          $\errordivergeleft$}\\
      \end{center}
      \begin{tikzcd}
        \ret\fls \arrow[rd, no head] &  & \ret \tru \arrow[ld, no head] \\
        & \err , \diverge & 
      \end{tikzcd}
    \end{minipage}
    \begin{minipage}{0.45\textwidth}
      \vspace{1em}
      \begin{center}
        \textbf{Error Approx. up to right-divergence}
        $\errordivergeright$\\
      \end{center}
      \begin{tikzcd}
        & \diverge \arrow[ld, no head] \arrow[rd, no head] &  \\
        \ret\fls \arrow[rd, no head] &  & \ret \tru \arrow[ld, no head] \\
        & \err & 
      \end{tikzcd}
    \end{minipage}
    \\\vspace{1em}
    \begin{minipage}{0.45\textwidth}
      \vspace{1em}
      \begin{center}
        \textbf{Error Approx. up to right-divergence Op}
        $\errordivergerightop$\\
      \end{center}
      \begin{tikzcd}
        & \err \arrow[ld, no head] \arrow[rd, no head] &  \\
        \ret\fls \arrow[rd, no head] &  & \ret \tru \arrow[ld, no head] \\
        & \diverge & 
      \end{tikzcd}
    \end{minipage}
  \end{small}
  \caption{Result Orderings}
  \label{fig:result-orders}
\end{figure}
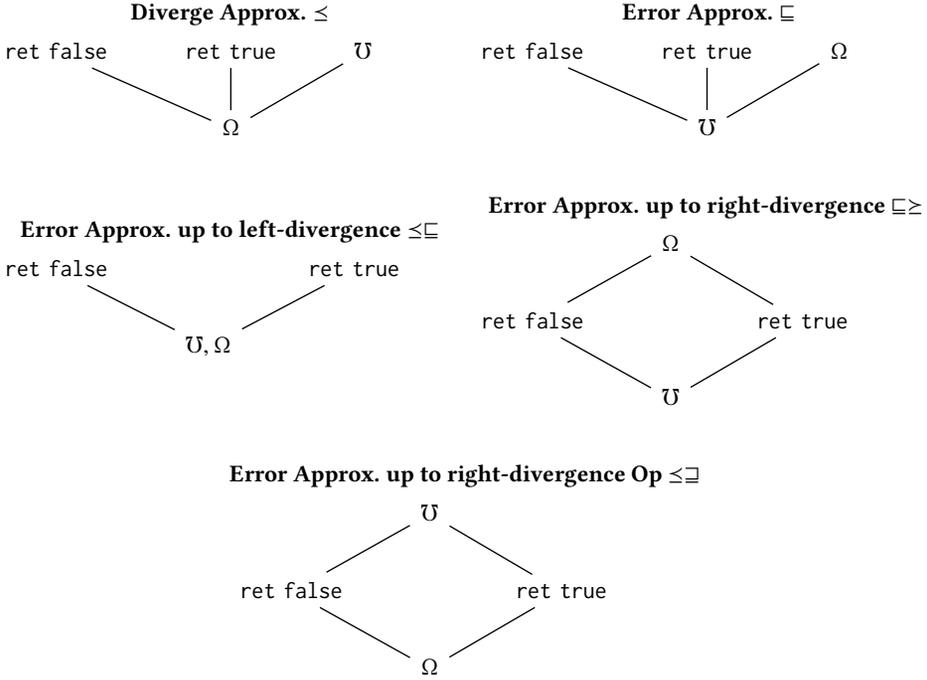
\fi

The goal of this section is to prove that a symmetric equality $E \equidyn
E'$ in CBPV (i.e. $E \ltdyn E'$ and $E' \ltdyn E$) implies contextual
equivalence $E \ctxize= E'$ and that inequality in CBPV $E \ltdyn E'$
implies error approximation $E \ctxize\ltdyn E'$, proving graduality of the operational model\ifshort .\else :\fi
\begin{longonly}
\begin{small}
\begin{mathpar}
   \inferrule{\Gamma \pipe \Delta \vdash E \equidyn E' : T}{\Gamma \pipe \Delta \vDash E \ctxize= E' \in T}\and
  \inferrule{\Gamma \pipe \Delta \vdash E \ltdyn E' : T}{\Gamma \pipe \Delta \vDash E \ctxize\ltdyn E' \in T}
\end{mathpar}  
\end{small}
\end{longonly}
Because we have non-well-founded $\mu/\nu$ types, we use a
\emph{step-indexed logical relation} to prove properties about the
contextual lifting of certain preorders $\apreorder$ on results.
In step-indexing, the \emph{infinitary} relation given by
$\ctxize\apreorder$ is related to the set of all of its \emph{finitary
  approximations} $\ix\apreorder i$, which ``time out'' after observing
$i$ steps of evaluation and declare that the
terms \emph{are} related.
\begin{shortonly}
  A preorder $\apreorder$ is only recoverable from its finite
  approximations if $\diverge$ is a \emph{least} element, $\diverge
  \apreorder R$, because a diverging term will cause a time out for
  any finite index. We call a preorder with $\diverge \apreorder R$ a
  \emph{divergence preorder}.~
\end{shortonly}
\begin{longonly}
This means that the original relation is only recoverable from the
finite approximations if $\diverge$ is always related to another
element: if the relation is a preorder, we require that $\diverge$ is
a \emph{least} element.

We call such a preorder a \emph{divergence preorder}.
\begin{definition}[Divergence Preorder]
  A preorder on results $\apreorder$ is a divergence preorder if
  $\diverge \apreorder R$ for all results $R$.
\end{definition}
\end{longonly}
But this presents a problem, because \emph{neither} of our intended
relations ($=$ and $\ltdyn$) is a divergence preorder; rather both have
$\diverge$ as a \emph{maximal} element.
\begin{shortonly}
  For observational equivalence, because contextual equivalence is
  symmetric divergence approximation ($M \ctxize= N$ iff $M
  \ctxize\preceq N$ and $N \ctxize\preceq M$), we can use a step-indexed
  logical relation to characterize $\preceq$, and then obtain results
  about observational equivalence from that~\cite{ahmed06:lr}.
  A similar move works for error
  approximation~\cite{newahmed18}, but since $R \ltdyn R'$ is \emph{not} symmetric, it is decomposed as the conjunction of two
  orderings: error approximation up to divergence on the left
  $\errordivergeleft$ (the preorder where $\err$ and $\diverge$ are both
  minimal: $\err \preceq\ltdyn R$ and $\diverge \preceq\ltdyn R$) and
  error approximation up to divergence on the right $\errordivergeright$
  (the diamond preorder where $\err$ is minimal and $\diverge$ is
  maximal, with $\tru$/$\fls$ in between).  Then $\preceq\ltdyn$ and the
  \emph{opposite} of $\ltdyn\succeq$ (written $\errordivergerightop$)
  are divergence preorders, so we can use a step-indexed logical
  relation to characterize them.  Overall, because $=$ is symmetric
  $\ltdyn$, and $\ltdyn$ is the conjunction of $\errordivergeleft$ and
  $\errordivergeright$, and contextual lifting commutes with conjunction
  and opposites, it will suffice to develop logical relations for
  divergence preorders.
\end{shortonly}

\begin{longonly}
However, there is a standard ``trick'' for subverting this obstacle in
the case of contextual equivalence~\cite{ahmed06:lr}: we notice
that we can define equivalence as the symmetrization of divergence
approximation, i.e., $M \ctxize= N$ if and only if $M \ctxize\preceq
N$ and $N \ctxize\preceq M$, and since $\preceq$ has $\diverge$ as
a least element, we can use a step-indexed relation to prove it.
As shown in \citet{newahmed18}, a similar trick works for error
approximation, but since $\ltdyn$ is \emph{not} an equivalence
relation, we decompose it rather into two \emph{different} orderings:
error approximation up to divergence on the left $\errordivergeleft$ and
error approximation up to divergence on the right $\errordivergeright$,
also shown in figure \ref{fig:result-orders}.
Note that $\errordivergeleft$ is a preorder, but not a poset because
$\err, \diverge$ are order-equivalent but not equal.
Then clearly $\errordivergeleft$ is a divergence preorder and the
\emph{opposite} of $\errordivergeright$, written $\errordivergerightop$
is a divergence preorder.

Then we can completely reduce the problem of proving $\ctxize=$ and
$\ctxize\ltdyn$ results to proving results about divergence preorders
by the following observations.
\newcommand{\ctxsimi}[1]{\mathrel{\sim_{#1}^{\text{ctx}}}}
\begin{lemma}[Decomposing Result Preorders] \label{lem:decomposing-result}
  Let $R, S$ be results.
  \begin{enumerate}
  \item $R = S$ if and only if $R \ltdyn S$ and $S \ltdyn R$.
  \item $R = S$ if and only if $R \preceq S$ and $S \preceq R$.
  \item $R \errordivergeleft S$ iff $R \ltdyn S$ or $R \preceq S$.
  \item $R \errordivergeright S$ iff $R \ltdyn S$ or $R \succeq S$.
  \end{enumerate}
\end{lemma}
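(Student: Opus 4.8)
The statement to prove is Lemma \ref{lem:decomposing-result}, which characterizes four result preorders in terms of $=$, $\ltdyn$ (error approximation) and $\preceq$ (divergence approximation) via conjunction or disjunction. Since there are only four results ($\diverge$, $\err$, $\ret\tru$, $\ret\fls$), this is entirely a finite case analysis on the partial orders drawn in Figure~\ref{fig:result-orders}, and the plan is simply to verify each biconditional by inspecting those Hasse diagrams.

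For part (1), the plan is to recall that $\ltdyn$ has $\err$ as the unique minimum and $\ret\tru$, $\ret\fls$, $\diverge$ as pairwise-incomparable maximal elements. So if $R \ltdyn S$ and $S \ltdyn R$, antisymmetry of the partial order forces $R = S$; conversely equality trivially gives approximation in both directions by reflexivity. Part (2) is identical in structure, using instead that $\preceq$ is a partial order (not merely a preorder) with $\diverge$ as the unique minimum and the other three results maximal and incomparable — again antisymmetry does the work.

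For parts (3) and (4), the plan is to check that the relation $\errordivergeleft$ is exactly the union of the relations $\ltdyn$ and $\preceq$ as sets of ordered pairs, and similarly $\errordivergeright$ is the union of $\ltdyn$ and $\succeq$ (the opposite of $\preceq$). Concretely: in $\errordivergeleft$, the below-set of $\ret\tru$ is $\{\err,\diverge,\ret\tru\}$ and likewise for $\ret\fls$, while $\err$ and $\diverge$ are mutually related; one checks pair by pair that $(R,S) \in {\errordivergeleft}$ iff $R \ltdyn S$ (covering the $\err$-is-bottom pairs and reflexivity) or $R \preceq S$ (covering the $\diverge$-is-bottom pairs and reflexivity), and that the only genuinely new pair is $(\err,\diverge)$, which comes from $\err \ltdyn \diverge$, and $(\diverge,\err)$, which comes from $\diverge \preceq \err$. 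Part (4) is the mirror image: $\errordivergeright$ is the diamond with $\err$ minimal, $\diverge$ maximal, and $\ret\tru$/$\ret\fls$ in between, so its pairs are exactly those of $\ltdyn$ (for the $\err$-minimal and in-between edges) together with those of $\succeq$ (for the $\diverge$-maximal edges).

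None of this is hard; the only thing to be careful about is not conflating $\ltdyn$ (the intended nonstandard ordering with $\err$ at the bottom) with the standard $\preceq$ (with $\diverge$ at the bottom), and keeping the direction of $\succeq$ straight in part (4). The main ``obstacle'' is really just bookkeeping: writing the case analysis compactly rather than enumerating all sixteen ordered pairs for each of the four claims. I would organize it as: fix the four diagrams, observe each is a finite poset/preorder, and for each claim argue the set-equality of relations by noting which covering edges of the target relation are contributed by which of the two source relations, then conclude.
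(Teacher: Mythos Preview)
Your proposal is correct. The paper does not give an explicit proof of this lemma at all; it is stated immediately after Figure~\ref{fig:result-orders} and treated as self-evident from inspecting the five Hasse diagrams drawn there, which is exactly the finite case analysis you describe.
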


In the following, we write $\sim^\circ$ for the opposite of a relation
($x \sim^\circ y$ iff $y \sim x$), $\Rightarrow$ for
containment/implication ($\sim \Rightarrow \sim'$ iff $x \sim y$ implies
$x \sim' y$), $\Leftrightarrow$ for bicontainment/equality, $\vee$ for
union ($x (\sim \vee \sim') y$ iff $x \sim y$ or $x \sim' y$), and
$\wedge$ for intersection ($x (\sim \wedge \sim') y$ iff $x \sim y$ and $x \sim' y$).

\begin{lemma}[Contextual Lift commutes with Conjunction] \label{lem:ctx-commutes-conjunction}
  \[
  \ctxize{(\simsub 1 \wedge \simsub 2)} \Leftrightarrow \ctxize{\simsub 1} \wedge \ctxize{\simsub 2}
  \]
\end{lemma}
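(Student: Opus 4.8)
\textbf{Proof plan for Lemma~\ref{lem:ctx-commutes-conjunction}.} The claim is a purely formal statement about the contextual-lift operation: for relations $\simsub 1,\simsub 2$ on results, $\ctxize{(\simsub 1\wedge\simsub 2)}$ and $\ctxize{\simsub 1}\wedge\ctxize{\simsub 2}$ coincide. The plan is to unfold the definition of $\ctxize{(-)}$ from Definition~\ref{lem:decomposing-result}'s neighborhood (the Contextual Lifting definition) on both sides and observe that the two sides reduce to propositions that are logically equivalent by elementary quantifier manipulation. Since the relations here are typed, I would fix a typing $\Gamma\pipe\Delta\vdash T$ and a pair of expressions $E,E'$ of that type throughout, so that the statement becomes an equivalence of two first-order formulas about a single universally quantified context $C$.

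\textbf{Key steps.} First I would spell out the right-hand side: $\Gamma\pipe\Delta\vDash E\ (\ctxize{\simsub 1}\wedge\ctxize{\simsub 2})\ E'\in T$ means, by definition of $\wedge$ on typed relations, that both $\Gamma\pipe\Delta\vDash E\ \ctxize{\simsub 1}\ E'\in T$ and $\Gamma\pipe\Delta\vDash E\ \ctxize{\simsub 2}\ E'\in T$ hold; unfolding each contextual lift, this is ``$\forall C.\ \result(C[E])\simsub 1\result(C[E'])$'' and ``$\forall C.\ \result(C[E])\simsub 2\result(C[E'])$'', where $C$ ranges over contexts $C:(\Gamma\pipe\Delta\vdash T)\Rightarrow(\cdot\vdash\u F 2)$. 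Second, I would spell out the left-hand side: $\Gamma\pipe\Delta\vDash E\ \ctxize{(\simsub 1\wedge\simsub 2)}\ E'\in T$ means $\forall C.\ \result(C[E])\ (\simsub 1\wedge\simsub 2)\ \result(C[E'])$, i.e. $\forall C.\ \big(\result(C[E])\simsub 1\result(C[E'])\big)\wedge\big(\result(C[E])\simsub 2\result(C[E'])\big)$. Third, the two are equivalent because $\forall$ distributes over $\wedge$: $\forall C.\,(P(C)\wedge Q(C))$ iff $(\forall C.\,P(C))\wedge(\forall C.\,Q(C))$. This is the only genuine inference and it is a logical tautology; the rest is definitional unfolding. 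I would note that the same context-set is quantified over on both sides, which is what makes the distribution legitimate --- there is no subtlety about the range of $C$ depending on which relation we are looking at.

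\textbf{Expected obstacle.} There is essentially no mathematical obstacle here; the ``hard part,'' such as it is, is purely bookkeeping: being careful that $\wedge$ on typed relations is defined pointwise (so that intersecting the lifts is the same as conjoining the two membership statements) and that the quantifier on $C$ is over exactly the same collection of well-typed contexts in all three expressions $\ctxize{(\simsub 1\wedge\simsub 2)}$, $\ctxize{\simsub 1}$, $\ctxize{\simsub 2}$. Once those conventions are pinned down, the proof is a one-line appeal to the distributivity of universal quantification over conjunction. I would present it as a short displayed chain of ``iff'' steps rather than a grind, since every step is either a definition-unfold or the distributivity tautology. (An analogous remark applies to the companion facts, e.g. that contextual lift commutes with taking opposites, $\ctxize{(\simsub{}^\circ)}\Leftrightarrow(\ctxize{\simsub{}})^\circ$, which follows the same pattern.)
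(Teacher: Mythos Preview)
Your proposal is correct. The paper states this lemma without proof, treating it as obvious; your unfolding of definitions followed by the distributivity of universal quantification over conjunction is exactly the intended (and only reasonable) argument.
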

 
\begin{lemma}[Contextual Lift commutes with Dualization] \label{lem:ctx-commutes-dual}
  \[
  \ctxize{\sim^\circ} \Leftrightarrow \ctxize{\sim}^\circ
  \]
\end{lemma}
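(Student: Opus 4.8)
The plan is to unfold the definition of the contextual lift on both sides of the claimed bicontainment and observe that the two resulting conditions are literally the same statement, read with the arguments of $\sim$ in opposite order. First I would fix an arbitrary typing $\Gamma \pipe \Delta \vdash T$ and two terms (resp.\ values, resp.\ stacks) $E, E'$ of that type, and recall that by the definition of contextual lifting, $\Gamma \pipe \Delta \vDash E \ctxize{(\sim^\circ)} E' \in T$ holds precisely when, for every closing context $C : (\Gamma \pipe \Delta \vdash T) \Rightarrow (\cdot \vdash \u F 2)$, we have $\result(C[E]) \mathrel{\sim^\circ} \result(C[E'])$. By the definition of the opposite relation, $\result(C[E]) \mathrel{\sim^\circ} \result(C[E'])$ is the same as $\result(C[E']) \sim \result(C[E])$.

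Next I would note that this last condition, quantified over all such $C$, is by definition exactly $\Gamma \pipe \Delta \vDash E' \ctxize{\sim} E \in T$: the set of contexts being quantified over is the same on both sides, since it depends only on the source typing $(\Gamma \pipe \Delta \vdash T)$ and the target typing $(\cdot \vdash \u F 2)$ and not on the relation $\sim$; and in both formulations the single context $C$ is plugged with $E$ and with $E'$, only the side on which each appears inside the result relation has swapped. Finally, $\Gamma \pipe \Delta \vDash E' \ctxize{\sim} E \in T$ is, by the definition of the opposite of a typed relation, the same as $\Gamma \pipe \Delta \vDash E \ctxize{\sim}^\circ E' \in T$. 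Chaining these equalities of conditions yields both containments $\ctxize{\sim^\circ} \Rightarrow \ctxize{\sim}^\circ$ and $\ctxize{\sim}^\circ \Rightarrow \ctxize{\sim^\circ}$, which is the desired $\ctxize{\sim^\circ} \Leftrightarrow \ctxize{\sim}^\circ$.

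There is essentially no obstacle: the statement is purely definitional, and the only point requiring a word of care is that the universal quantifier over contexts may be commuted freely past the swap of arguments of $\sim$, which is legitimate precisely because the collection of contexts does not depend on $\sim$. The same three-line argument applies uniformly to the value, computation, and stack forms of the judgement, since the contextual lift is defined in the same way for each.
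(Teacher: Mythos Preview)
Your proposal is correct; the argument is purely definitional, exactly as you describe. The paper itself states this lemma without proof, treating it as immediate from the definitions, so your unfolding is precisely the intended justification.
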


\begin{lemma}[Contextual Decomposition Lemma] \label{lem:contextual-decomposition}
Let $\sim$ be a reflexive relation $(= \Rightarrow \sim)$, and $\leqslant$
be a reflexive, antisymmetric relation (${=} \Rightarrow {\leqslant}$ and
$(\leqslant \wedge {\leqslant^\circ}) \Leftrightarrow {=}$).  Then
\[
\ctxize\sim \Leftrightarrow \ctxize{(\sim \vee \leqslant)} \wedge (\ctxize{(\sim^\circ \vee \leqslant)})^\circ
\]
\end{lemma}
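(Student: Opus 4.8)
The plan is to prove the bicontainment in both directions as a pointwise statement over all closing contexts $C : (\Gamma \pipe \Delta \vdash T) \Rightarrow (\cdot \vdash \u F 2)$, reducing everything to an elementary fact about \emph{results}: namely that for results $R,S$, we have $R \sim S$ if and only if $(R \mathrel{(\sim \vee \leqslant)} S)$ and $(S \mathrel{(\sim \vee \leqslant)} R)$. This result-level equivalence is the combinatorial heart of the argument; it follows from reflexivity of $\sim$, reflexivity and antisymmetry of $\leqslant$, and the hypothesis $(\leqslant \wedge \leqslant^\circ) \Leftrightarrow {=}$. Concretely, for the forward direction, $R \sim S$ gives $R \mathrel{(\sim \vee \leqslant)} S$ immediately, and $S \mathrel{(\sim \vee \leqslant)} R$ holds only if we additionally know $\sim$ is symmetric on the pairs where $\leqslant$ fails — but in fact the intended application has $\sim$ symmetric ($=$ or $\ltdyn$'s symmetrization), so I would either (a) add symmetry of $\sim$ as a hypothesis, or (b) observe that the lemma as literally stated requires that the pair $(R,S)$ be handled: since $\leqslant$ is used in both conjuncts with opposite orientation via the outer $(\cdot)^\circ$, and $R \sim S$ plus $S \mathrel\leqslant R$ or $R \mathrel\leqslant S$ covers the remaining cases by antisymmetry. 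I expect the cleanest route is to note that $\sim$ here is always at least a \emph{reflexive symmetric} relation in the intended instances and to use that.

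First I would unfold both sides. The left side $C[E] \mathrel{\ctxize\sim} C[E']$ means $\result(C[E]) \sim \result(C[E'])$ for all such $C$. The right side, using Lemma~\ref{lem:ctx-commutes-conjunction} (contextual lift commutes with conjunction) and Lemma~\ref{lem:ctx-commutes-dual} (contextual lift commutes with dualization), unfolds to: for all $C$, $\result(C[E]) \mathrel{(\sim \vee \leqslant)} \result(C[E'])$, \emph{and} for all $C$, $\result(C[E']) \mathrel{(\sim \vee \leqslant)} \result(C[E])$. So after applying those two commutation lemmas, the statement becomes purely about results: I must show that $R \sim S$ for all instantiations iff $R \mathrel{(\sim \vee \leqslant)} S$ and $S \mathrel{(\sim \vee \leqslant)} R$ for all instantiations. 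Since the quantifier over $C$ is the same on both sides, it suffices to prove the pointwise equivalence $R \sim S \iff (R \mathrel{(\sim\vee\leqslant)} S \wedge S \mathrel{(\sim\vee\leqslant)} R)$ for arbitrary results $R,S$.

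For that pointwise equivalence: the $(\Rightarrow)$ direction is trivial since $\sim \Rightarrow (\sim \vee \leqslant)$ gives the first conjunct, and — using that $\sim$ is symmetric in the intended instances — also the second. For $(\Leftarrow)$, suppose $R \mathrel{(\sim\vee\leqslant)} S$ and $S \mathrel{(\sim\vee\leqslant)} R$. If both disjuncts chosen are $\sim$, done. If $R \mathrel\leqslant S$ and $S \mathrel\leqslant R$, then by antisymmetry $R = S$, and reflexivity of $\sim$ gives $R \sim S$. If $R \sim S$ and $S \mathrel\leqslant R$ (or the mirror case), we already have $R \sim S$. So every case yields $R \sim S$. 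The main obstacle, as flagged above, is the precise bookkeeping of symmetry: the lemma as phrased is really being applied with $\sim$ either equality or the symmetrization $\ltdyn \wedge \ltdyn^\circ$ of error approximation (both symmetric), and with $\leqslant$ the divergence preorder $\preceq$ restricted so that $(\preceq \wedge \preceq^\circ) \Leftrightarrow {=}$ holds by Lemma~\ref{lem:decomposing-result}(2). I would state the symmetry hypothesis explicitly in the proof to make the forward direction go through cleanly, and then note that the two intended instantiations ($\ctxize=$ and $\ctxize\ltdyn$ via its two-sided decomposition) satisfy all hypotheses, so that combined with Lemmas~\ref{lem:ctx-commutes-conjunction} and~\ref{lem:ctx-commutes-dual} this reduces proving $\ctxize\sim$ to proving the two divergence-preorder logical relations developed next.
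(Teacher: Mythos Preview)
There is a genuine gap: you have misread the second conjunct of the lemma. The statement has $\sim^\circ$, not $\sim$, inside the lift; that is, the right-hand side is $\ctxize{(\sim \vee \leqslant)} \wedge (\ctxize{(\sim^\circ \vee \leqslant)})^\circ$. When you unfold this, the second conjunct becomes ``for all $C$, $\result(C[E']) \mathrel{(\sim^\circ \vee \leqslant)} \result(C[E])$'', which at the level of results $R = \result(C[E])$ and $S = \result(C[E'])$ says $R \sim S$ or $S \leqslant R$ (since $S \mathrel{\sim^\circ} R$ just means $R \sim S$). With this correct reading, the pointwise equivalence you need is $R \sim S \iff (R \sim S \vee R \leqslant S) \wedge (R \sim S \vee S \leqslant R)$, and both directions go through immediately with no symmetry assumption on $\sim$: the forward direction is trivial, and in the backward direction the only case not already giving $R \sim S$ is $R \leqslant S \wedge S \leqslant R$, whence antisymmetry and reflexivity finish it. This is exactly the paper's algebraic identity $\sim \Leftrightarrow (\sim \vee \leqslant) \wedge (\sim \vee \leqslant^\circ)$, which it proves by distributivity and then lifts via Lemmas~\ref{lem:ctx-commutes-conjunction} and~\ref{lem:ctx-commutes-dual}.

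Your proposed fix of adding symmetry of $\sim$ as a hypothesis would actually break the main application: the crucial instance in Corollary~\ref{cor:contextual-decomposition}(3) takes $\sim = {\ltdyn}$, which is \emph{not} symmetric, to decompose $\ctxize\ltdyn$ into $\ctxize{\errordivergeleft}$ and $(\ctxize{\errordivergerightop})^\circ$. So the lemma must be proved as stated, without symmetry, and the $\sim^\circ$ in the second conjunct is exactly what makes that possible.
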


\begin{proof}
Note that despite the notation, $\leqslant$ need not be assumed to be
transitive.  
Reflexive relations form a lattice with $\wedge$ and $\vee$ with $=$ as
$\bot$ and the total relation as $\top$ (e.g. $(= \vee \sim)
\Leftrightarrow \sim$ because $\sim$ is reflexive, and $(= \wedge \sim)
\Leftrightarrow =$).  So we have
\[
\sim \Leftrightarrow (\sim \vee \leqslant) \wedge (\sim \vee \leqslant^\circ)
\]
because FOILing the right-hand side gives
\[
(\sim \wedge \sim) \vee (\leqslant \wedge \sim) \vee (\sim \wedge \leqslant^\circ) \vee (\leqslant \wedge \leqslant^\circ)
\]
By antisymmetry, $(\leqslant \wedge \leqslant^\circ)$ is $=$, which is the
unit of $\vee$, so it cancels.  By idempotence, $(\sim \wedge \sim)$ is $\sim$.
Then by absorption, the whole thing is $\sim$.

Opposite is \emph{not} de Morgan: $(P \vee Q)^\circ = P^\circ \vee
Q^\circ$, and similarly for $\wedge$.  But it is involutive:
$(P^\circ)^\circ \Leftrightarrow P$.  

So using Lemmas~\ref{lem:ctx-commutes-conjunction}, \ref{lem:ctx-commutes-dual} we can calculate as follows:
\[
\begin{array}{rcl}
\ctxize\sim & \Leftrightarrow &\ctxize{((\sim \vee \leqslant) \wedge (\sim \vee \leqslant^\circ))} \\
            & \Leftrightarrow &\ctxize{(\sim \vee \leqslant)} \wedge \ctxize{(\sim \vee \leqslant^\circ)}\\
            & \Leftrightarrow &\ctxize{(\sim \vee \leqslant)} \wedge \ctxize{((\sim \vee \leqslant^\circ)^\circ)^\circ}\\
            & \Leftrightarrow &\ctxize{(\sim \vee \leqslant)} \wedge \ctxize{((\sim^\circ \vee (\leqslant^\circ)^\circ)^\circ)}\\
            & \Leftrightarrow &\ctxize{(\sim \vee \leqslant)} \wedge \ctxize{(\sim^\circ \vee \leqslant)^\circ}\\
            & \Leftrightarrow &\ctxize{(\sim \vee \leqslant)} \wedge \ctxize{(\sim^\circ \vee \leqslant)}^\circ
\end{array}
\]
\end{proof}

As a corollary, the decomposition of contextual equivalence into diverge
approximation in \citet{ahmed06:lr} and the decomposition of dynamism in
\citet{newahmed18} are really the same trick:
\begin{corollary}[Contextual Decomposition] ~~~ \label{cor:contextual-decomposition}
  \begin{enumerate}
  \item $\ctxize= \mathbin{\Leftrightarrow} \ctxize{\preceq} \wedge
    (\ctxize{(\preceq)})^\circ$
  \item $\ctxize= \mathbin{\Leftrightarrow} \ctxize{\ltdyn} \wedge (\ctxize{(\ltdyn)})^\circ$
  \item $\ctxize\ltdyn \mathbin{\Leftrightarrow} \ctxize{\errordivergeleft} \wedge (\ctxize{(\errordivergerightop)})^\circ$
  \end{enumerate}
\end{corollary}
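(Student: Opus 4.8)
All three equivalences are instances of the Contextual Decomposition Lemma (Lemma~\ref{lem:contextual-decomposition}), which states that for a reflexive relation $\sim$ on results and a reflexive, antisymmetric relation $\leqslant$ on results, $\ctxize\sim \Leftrightarrow \ctxize{(\sim \vee \leqslant)} \wedge (\ctxize{(\sim^\circ \vee \leqslant)})^\circ$. Since the contextual lift of a relation on results depends only on that relation, the plan for each part is simply to choose $\sim$ and $\leqslant$, verify the hypotheses using Lemma~\ref{lem:decomposing-result}, and compute the unions $\sim \vee \leqslant$ and $\sim^\circ \vee \leqslant$ as relations on the four possible results $\diverge, \err, \ret\tru, \ret\fls$.

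\textbf{Items (1) and (2).} First I would take $\sim$ to be equality and $\leqslant$ to be divergence approximation $\preceq$. Equality is reflexive; $\preceq$ is a preorder, hence reflexive, and is antisymmetric because $R \preceq S$ and $S \preceq R$ force $R = S$ by Lemma~\ref{lem:decomposing-result}(2). Reflexivity of $\preceq$ gives ${=} \vee {\preceq} = {\preceq}$, and $\sim^\circ = {=}$, so Lemma~\ref{lem:contextual-decomposition} yields $\ctxize{=} \Leftrightarrow \ctxize{\preceq} \wedge (\ctxize{\preceq})^\circ$. Item~(2) is the same argument with $\leqslant$ replaced by error approximation $\ltdyn$, which is again reflexive (a preorder) and antisymmetric by Lemma~\ref{lem:decomposing-result}(1); here ${=} \vee {\ltdyn} = {\ltdyn}$, giving $\ctxize{=} \Leftrightarrow \ctxize{\ltdyn} \wedge (\ctxize{\ltdyn})^\circ$.

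\textbf{Item (3) and the main subtlety.} For the last part I would take $\sim$ to be $\ltdyn$ and $\leqslant$ to be $\preceq$; $\ltdyn$ is reflexive and $\preceq$ is reflexive and antisymmetric as above. Lemma~\ref{lem:decomposing-result}(3) identifies ${\ltdyn} \vee {\preceq}$ with $\errordivergeleft$, and Lemma~\ref{lem:decomposing-result}(4) identifies ${\ltdyn} \vee {\succeq}$ with $\errordivergeright$. Taking opposites — which distributes over $\vee$ and sends $\succeq$ to $\preceq$ — we get ${\gtdyn} \vee {\preceq} = (\errordivergeright)^\circ = \errordivergerightop$, and since $\sim^\circ \vee \leqslant = {\gtdyn} \vee {\preceq}$, Lemma~\ref{lem:contextual-decomposition} gives $\ctxize{\ltdyn} \Leftrightarrow \ctxize{\errordivergeleft} \wedge (\ctxize{\errordivergerightop})^\circ$. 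The proof is essentially bookkeeping; the only place requiring care is here, where one must track the opposite carefully so that $\sim^\circ \vee \leqslant$ comes out as $\errordivergerightop$ rather than $\errordivergeright$, together with the (finite) check underlying Lemma~\ref{lem:decomposing-result} that the four relations on $\{\diverge, \err, \ret\tru, \ret\fls\}$ coincide as claimed.
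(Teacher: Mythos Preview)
Your proposal is correct and follows essentially the same approach as the paper: in each case you instantiate Lemma~\ref{lem:contextual-decomposition} with the same choices of $\sim$ and $\leqslant$ the paper uses, verify the hypotheses via Lemma~\ref{lem:decomposing-result}, and simplify the resulting unions. The only cosmetic difference is that for part~(3) you pass through $\errordivergeright$ and then take the opposite, whereas the paper identifies $\ltdyn^\circ \vee \preceq$ with $\errordivergerightop$ directly; these are the same computation.
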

\begin{proof}

For part 1 (though we will not use this below), applying
Lemma~\ref{lem:contextual-decomposition} with $\sim$ taken to be $=$
(which is reflexive) and $\leqslant$ taken to be $\preceq$ (which is
reflexive and antisymmetric) gives that contextual equivalence is
symmetric contextual divergence approximation:
\[
\ctxize= \Leftrightarrow \ctxize{(= \vee \preceq)} \wedge (\ctxize{(=^\circ \vee \preceq)})^\circ
         \Leftrightarrow \ctxize{\preceq} \wedge (\ctxize{(\preceq)})^\circ
\]

For part (2), the same argument with $\sim$ taken to be $=$ and
$\leqslant$ taken to be $\ltdyn$ (which is also antisymmetric) gives that
contextual equivalence is symmetric contextual dynamism:
\[
\ctxize= \Leftrightarrow \ctxize{\ltdyn} \wedge (\ctxize{(\ltdyn)})^\circ
\]

For part (3), applying Lemma~\ref{lem:contextual-decomposition} with $\sim$
taken to be $\ltdyn$ and $\leqslant$ taken to be $\preceq$ gives that
dynamism decomposes as
\[
\ctxize\ltdyn \Leftrightarrow \ctxize{(\ltdyn \vee \preceq)} \wedge (\ctxize{(\ltdyn^\circ \vee \preceq)})^\circ
              \Leftrightarrow \ctxize{\errordivergeleft} \wedge (\ctxize{(\errordivergerightop)})^\circ
\]
Since both ${\errordivergeleft}$ and $\errordivergerightop$ are of the
form $- \vee \preceq$, both are divergence preorders.  Thus, it suffices
to develop logical relations for divergence preorders below.
\end{proof}
\end{longonly}

\subsection{CBPV Step Indexed Logical Relation}
\label{sec:lr}

\begin{shortonly}
We use a logical relation to prove results about $E \ctxize\apreorder
E'$ where $\apreorder$ is a divergence preorder.  The
``finitization'' of a divergence preorder is a relation between
\emph{programs} and \emph{results}: a program approximates a result $R$
at index $i$ if it reduces to $R$ in $< i$ steps or it ``times out'' by reducing at least $i$ times.
\end{shortonly}

\begin{longonly}
Next, we turn to the problem of proving results about $E
\ctxize\apreorder E'$ where $\apreorder$ is a divergence preorder.
Dealing directly with a contextual preorder is practically impossible,
so instead we develop an alternative formulation as a logical relation
that is much easier to use.
Fortunately, we can apply standard logical relations techniques to
provide an alternate definition \emph{inductively} on types.
However, since we have non-well-founded type definitions using
$\mu$ and $\nu$, our logical relation will also be defined inductively on a
\emph{step index} that times out when we've exhausted our step budget.
To bridge the gap between the indexed logical relation and the
divergence preorder we care about, we define the ``finitization'' of a
divergence preorder to be a relation between \emph{programs} and
\emph{results}: the idea is that a program approximates a result $R$
at index $i$ if it reduces to $R$ in less than $i$ steps or it reduces
at least $i$ times.
\end{longonly}

\begin{definition}[Finitized Preorder]
  Given a divergence preorder $\apreorder$, we define the
  \emph{finitization} of $\apreorder$ to be, for each natural number
  $i$, a relation between programs and results
\iflong
  \[ {\ix\apreorder i} \subseteq \{ M \pipe \cdot\vdash M : \u F 2\} \times \text{Results} \]
\fi
  defined by
  \[
  M \ix \apreorder i R = (\exists M'.~ M \bigstepsin{i} M') \vee (\exists (j< i). \exists R_M.~ M \bigstepsin{j} R_M \wedge R_M \apreorder R)
  \]
\end{definition}

\begin{longonly}
Note that in this definition, unlike in the definition of divergence,
we only count non-well-founded steps.
This makes it slightly harder to establish the intended equivalence $M
\ix \apreorder \omega R$ if and only if $\result(M) \apreorder R$, but
makes the logical relation theorem stronger: it proves that diverging
terms must use recursive types of some sort and so any term that does
not use them terminates.
This issue would be alleviated if we had proved type safety by a
logical relation rather than by progress and preservation.

However, the following properties of the indexed relation can easily
be established.
First, a kind of ``transitivity'' of the indexed relation with respect
to the original preorder, which is key to proving transitivity of the
logical relation.
\begin{lemma}[Indexed Relation is a Module of the Preorder]
\label{lem:module}
  If $M \ix\apreorder i R$ and $R \apreorder R'$ then $M \ix\apreorder i R'$
\end{lemma}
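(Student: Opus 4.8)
The final statement to prove is Lemma~\ref{lem:module} (``Indexed Relation is a Module of the Preorder''): if $M \ix\apreorder i R$ and $R \apreorder R'$ then $M \ix\apreorder i R'$. This is a small, self-contained lemma about the finitization of a divergence preorder, and the proof should be a short case analysis following directly from the definition of $\ix\apreorder i$.

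The plan is to unfold the definition of $M \ix\apreorder i R$, which is a disjunction of two cases, and show in each case that $M \ix\apreorder i R'$ holds. First I would consider the case where $M \bigstepsin{i} M'$ for some $M'$ (the ``timeout'' case): here the witness does not mention $R$ at all, so exactly the same witness $M'$ shows $M \ix\apreorder i R'$ immediately, using the left disjunct of the definition. Second I would consider the case where there exists $j < i$ and $R_M$ with $M \bigstepsin{j} R_M$ and $R_M \apreorder R$: here I would use the hypothesis $R \apreorder R'$ together with transitivity of the preorder $\apreorder$ to conclude $R_M \apreorder R'$, and then the same $j$, $R_M$ witness the right disjunct of $M \ix\apreorder i R'$.

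There is essentially no obstacle here --- the only thing being used beyond pure definitional unfolding is transitivity of $\apreorder$, which holds because $\apreorder$ is by assumption a preorder (divergence preorder). The ``module'' terminology just signals that $\ix\apreorder i$ is acted on by the preorder $\apreorder$ on the right; formally it is the statement that the relation $\ix\apreorder i$ composed with $\apreorder$ is contained in $\ix\apreorder i$ again. If I wanted to be maximally careful I would note that determinism of reduction is not needed here, only the definition and transitivity, so the proof is uniform across all divergence preorders ($\preceq$, $\errordivergeleft$, $\errordivergerightop$) used later. I would write this up as a two-case proof, roughly:

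\begin{proof}
Unfolding the definition, $M \ix\apreorder i R$ means either (i) there exists $M'$ with $M \bigstepsin{i} M'$, or (ii) there exist $j < i$ and a result $R_M$ with $M \bigstepsin{j} R_M$ and $R_M \apreorder R$. In case (i), the witness $M'$ does not depend on $R$, so the same derivation $M \bigstepsin{i} M'$ witnesses the first disjunct of $M \ix\apreorder i R'$. In case (ii), since $\apreorder$ is transitive and $R \apreorder R'$ by hypothesis, we have $R_M \apreorder R'$; hence the same $j < i$ and $R_M$ witness the second disjunct of $M \ix\apreorder i R'$. In either case $M \ix\apreorder i R'$, as required.
\end{proof}
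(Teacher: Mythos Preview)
Your proof is correct and takes essentially the same approach as the paper: a two-case analysis on the disjunction defining $M \ix\apreorder i R$, with the first case trivial (the timeout witness does not mention $R$) and the second case handled by transitivity of $\apreorder$. The paper's proof is terser but identical in structure.
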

\begin{longproof}
  If $M \bigstepsin{i} M'$ then there's nothing to show, otherwise
  $M \bigstepsin{j< i} \result(M)$ so it follows by transitivity of the
  preorder: $\result(M) \apreorder R \apreorder R'$.
\end{longproof}

Then we establish a few basic properties of the finitized preorder.
\begin{lemma}[Downward Closure of Finitized Preorder]
  If $M \ix\apreorder i R$ and $j\leq i$ then $M \ix \apreorder j R$.
\end{lemma}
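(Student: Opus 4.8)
The plan is to unfold the definition of $M \ix\apreorder i R$ and case on which of the two disjuncts holds, using in both cases a single auxiliary observation about the cost-counting reduction relation: if $M \bigstepsin n N$ and $j \le n$, then there is some $M'$ with $M \bigstepsin j M'$. This holds because every individual step of the operational semantics has cost $0$ or $1$ (only the $\stepone$ reductions, which unroll a $\mu$- or $\nu$-type, contribute to the count), so along any finite reduction sequence $M = M_0 \step M_1 \step \cdots \step M_k = N$ of total cost $n$ the accumulated cost increases by at most $1$ at each step and hence attains every value in $\{0,1,\dots,n\}$; taking the prefix that ends the first time the accumulated cost reaches $j$ yields the required $M'$. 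I would establish this by a routine induction on the length of the reduction sequence witnessing $M \bigstepsin n N$, with the base case $j = 0$ handled by the empty reduction.

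Granting that observation, suppose $M \ix\apreorder i R$ and $j \le i$. If the first disjunct holds, i.e. $M \bigstepsin i M'$ for some $M'$, then applying the observation with $n = i$ gives $M \bigstepsin j M''$ for some $M''$, which is exactly the first disjunct of $M \ix\apreorder j R$. If instead the second disjunct holds, with witnesses $j' < i$ and $R_M$ such that $M \bigstepsin{j'} R_M$ and $R_M \apreorder R$, I would split on how $j'$ compares to $j$: when $j' < j$ the same witnesses $j'$ and $R_M$ directly witness the second disjunct of $M \ix\apreorder j R$; when $j \le j'$, applying the observation with $n = j'$ produces $M \bigstepsin j M''$, giving the first disjunct of $M \ix\apreorder j R$. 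This exhausts all cases, so $M \ix\apreorder j R$ holds.

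There is no real obstacle here: the only point requiring care is that $\bigstepsin{}$ counts \emph{only} the non-well-founded ($\stepone$) steps rather than all reduction steps, so the ``intermediate costs are all attained'' observation must be phrased in terms of the accumulated cost along the reduction sequence, not the number of steps taken. Once that auxiliary fact is in place the lemma follows by the two-way case split above, with no appeal to determinism of reduction or to any property of $\apreorder$ beyond its being a relation on results.
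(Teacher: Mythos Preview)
Your proposal is correct and follows essentially the same three-case structure as the paper's proof: split on the two disjuncts of $M \ix\apreorder i R$, and in the second disjunct further split on whether the termination cost is below $j$ or not. The only difference is cosmetic: you explicitly isolate and justify the auxiliary fact that intermediate costs are attained along any reduction sequence (since each step contributes cost $0$ or $1$), whereas the paper invokes this tacitly when it writes ``then $M \bigstepsin{j} M_j$.''
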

\begin{longproof} \hfill
  \begin{enumerate}
  \item If $M \bigstepsin{i} M_i$ then $M \bigstepsin{j} M_j$ and otherwise 
  \item If $M \bigstepsin{j \leq k i} \result(M)$ then $M \bigstepsin{j} M_j$
  \item if $M \bigstepsin{k < j \leq i} \result(M)$ then $\result(M) \apreorder R$.
  \end{enumerate}
\end{longproof}
\begin{lemma}[Triviality at $0$]
  For any $\cdot \vdash M : \u F 2$, $M \ix\apreorder 0 R$
\end{lemma}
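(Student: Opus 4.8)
The plan is essentially to unfold the definition of $M \ix\apreorder 0 R$ and dispatch it by the left disjunct. Recall that, by definition, $M \ix\apreorder 0 R$ holds iff either (i) there exists $M'$ with $M \bigstepsin 0 M'$, or (ii) there exist $j < 0$ and $R_M$ with $M \bigstepsin j R_M$ and $R_M \apreorder R$. First I would dismiss disjunct (ii): there is no natural number $j$ with $j < 0$, so this option is vacuously unavailable. It therefore suffices to establish (i).

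For (i), I would simply take $M' := M$ and invoke the zero-step reflexivity axiom of the multi-step relation, which gives $M \bigstepsin 0 M$ outright. Here it is worth noting that the step count tracks only unrollings of (co)recursive types, but this plays no role: the base case $M \bigstepsin 0 M$ of $\bigstepsin{}$ is always available, regardless of what $M$ is or how it reduces. This establishes (i), hence $M \ix\apreorder 0 R$, completing the argument.

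There is no genuine obstacle here. The statement is an immediate consequence of the definitions and, in particular, does not use the hypothesis that $\apreorder$ is a divergence preorder---it holds for the finitization of an arbitrary relation on results. The lemma is isolated only because it serves as the base case ($i = 0$) relied upon by the subsequent downward-closure lemma and by the step-indexed logical relation, where the index-$0$ case must hold trivially for the ``timeout after $i$ steps'' intuition to be coherent.
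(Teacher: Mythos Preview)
Your proposal is correct and takes essentially the same approach as the paper: the paper's proof is simply ``Because $M \bigstepsin{0} M$,'' which is exactly your choice of $M' := M$ to satisfy the first disjunct. Your version is just a more detailed unfolding of the same one-line observation.
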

\begin{longproof}
  Because $M \bigstepsin{0} M$
\end{longproof}
\begin{lemma}[Result (Anti-)reduction]
  If $M \bigstepsin{i} N$ then $\result(M) = \result(N)$.
\end{lemma}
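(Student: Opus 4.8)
The plan is to reduce the statement to the single-step case and then invoke determinism of the operational semantics. First I would observe that it suffices to prove the claim when $M \step N$ is a single reduction step (whether a $\stepzero$ or a $\stepone$ step): the general case follows by induction on the derivation of $M \bigstepsin{i} N$ from the two rules defining $\bigstepsin{}$. The base case $M \bigstepsin 0 M$ is trivial, and in the inductive case $M \stepsin{k} M' \bigstepsin{j} N$ with $i = k+j$, the one-step case gives $\result(M) = \result(M')$ and the induction hypothesis gives $\result(M') = \result(N)$.

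For the one-step case $M \step N$, I would case on the value of $\result(M)$, which is well-defined by the Possible Results of Computation corollary. If $\result(M) = \diverge$, i.e. $M \Uparrow$, then unfolding the definition of $\Uparrow$ (every term reachable from $M$ admits a further step), any reduction out of $N$ extends by prepending $M \step N$ to a reduction out of $M$, so every term reachable from $N$ can step as well; hence $N \Uparrow$ and $\result(N) = \diverge$. If instead $\result(M) = R$ for a terminating result $R \in \{\err, \ret\tru, \ret\fls\}$, then $M \Downarrow R$; since $R$ does not step but $M$ does, the reduction witnessing $M \Downarrow R$ is nonempty and factors as $M \step M' \bigstepsin{} R$, and by determinism of $\step$ we get $M' = N$, so $N \Downarrow R$ and $\result(N) = R = \result(M)$. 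In every case $\result(M) = \result(N)$.

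The only substantive ingredient beyond bookkeeping is determinism of the reduction relation, together with the progress and preservation facts that make $\result$ well-defined, all of which are available earlier. I therefore expect no genuine obstacle; the one point needing slight care is the divergence case, where one must argue from the definition ``every term reached from $M$ can step'' rather than about a single infinite trace, but this is immediate since reductions out of $N$ are suffixes of reductions out of $M$. Note that since the conclusion is an equality of results, this lemma records simultaneously that results are preserved by forward reduction and by anti-reduction — hence the name — and that the step indices play no role here beyond their later use in the step-indexed logical relation.
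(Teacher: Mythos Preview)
Your proof is correct. The paper states this lemma without proof, treating it as a routine consequence of determinism and the progress/preservation-based definition of $\result$; your argument via induction on the length of the reduction sequence, together with determinism for the single-step case, is the natural way to fill in the details and would be accepted as the intended justification.
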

\begin{lemma}[Anti-reduction]
  If $M \ix\apreorder i R$ and $N \bigstepsin{j} M$, then $N \ix\apreorder {{i+j}} R$
\end{lemma}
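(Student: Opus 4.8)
The plan is to unfold the definition of the finitized preorder $\ix\apreorder{}$ and proceed by case analysis on which of the two disjuncts of $M \ix\apreorder i R$ holds, transporting the witnessing reduction sequence of $M$ backwards along the hypothesised reduction $N \bigstepsin j M$. The only auxiliary fact needed is that the indexed multi-step relation composes: if $N \bigstepsin j M$ and $M \bigstepsin k P$ then $N \bigstepsin{j+k} P$. This follows by a routine induction on the derivation of $N \bigstepsin j M$ using the two rules defining $\bigstepsin{}$; the base case $N = M$, $j = 0$ is immediate, and the inductive step prepends a single $\stepsin{i}$ transition and uses associativity of addition on the indices.

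With composition in hand the argument is short. First suppose the left disjunct of $M \ix\apreorder i R$ holds, so there is an $M'$ with $M \bigstepsin i M'$. Composing $N \bigstepsin j M$ with $M \bigstepsin i M'$ yields $N \bigstepsin{i+j} M'$, which is exactly the left disjunct of $N \ix\apreorder{i+j} R$. Otherwise the right disjunct holds: there are some $k < i$ and a result $R_M$ with $M \bigstepsin k R_M$ and $R_M \apreorder R$. Then $N \bigstepsin{j+k} R_M$ by composition, and $k < i$ gives $j+k < i+j$; together with $R_M \apreorder R$ this witnesses the right disjunct of $N \ix\apreorder{i+j} R$, completing the case analysis.

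I do not anticipate a genuine obstacle: the lemma is purely a bookkeeping statement about the step budget, its entire content being that reductions of $M$ lift to reductions of $N$ with the index increased by precisely the number of steps $N$ takes to reach $M$. The only point requiring mild care is that $\bigstepsin{}$ counts only the non-well-founded ($\mu$/$\nu$ unrolling) steps, but since the hypotheses and the conclusion all use this same convention, the counting discrepancy never surfaces in the proof.
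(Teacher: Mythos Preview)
Your proposal is correct and follows essentially the same approach as the paper: case analysis on the two disjuncts of the finitized preorder, composing the reduction $N \bigstepsin{j} M$ with the witnessing reduction of $M$ in each case. The paper's proof is terser (it takes composition of $\bigstepsin{}$ for granted and additionally remarks that $\result(M) = \result(N)$, which is true but not needed), while yours spells out the composition argument and the index arithmetic more explicitly; both are the same proof.
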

\begin{longproof}
  \begin{enumerate}
  \item If $M \bigstepsin{i} M'$ then $N \bigstepsin{i+j} M'$
  \item If $M \bigstepsin{k < i} \result(M)$ then $N \bigstepsin{k+j}
    \result(M)$ and $\result(M) = \result(N)$ and $k+j < i+j$.
  \end{enumerate}
\end{longproof}
\end{longonly}

\begin{figure}
\begin{small}
  \begin{mathpar}
\iflong
    {\itylrof\apreorder{i}{A}} \subseteq \{ \cdot \vdash V : A \}^2
    \qquad\qquad\qquad{\itylrof\apreorder{i}{\u B}}\subseteq \{ \cdot \pipe \u B \vdash S
    : \u F (1 + 1) \}^2\\
\fi
    \begin{array}{rcl}
\iflong      
      \cdot \itylrof\apreorder i {\cdot} \cdot &=& \top\\
      \gamma_1,V_1/x \itylrof\apreorder i {\Gamma,x:A} \gamma_2,V_2/x &=& \gamma_1 \itylrof\apreorder i \Gamma \gamma_2 \wedge V_1 \itylrof\apreorder i A V_2\\
\fi
      V_1 \itylr i 0 V_2 &=& \bot\\
\iflong
      \inl V_1 \itylr i {A + A'} \inl V_2 &= & V_1 \itylr i A V_2\\
      \inr V_1 \itylr i {A + A'} \inr V_2 &= & V_1 \itylr i {A'} V_2 \\
      () \itylr i 1 () &=& \top\\
\fi
      (V_1,V_1') \itylr i {A \times A'} (V_2, V_2') &=& V_1 \itylr i A V_2 \wedge V_1' \itylr i {A'} V_2'\\
      \rollty {\mu X. A} V_1 \itylr i {\mu X. A} \rollty {\mu X. A} V_2 &=& i = 0 \vee V_1 \itylr {i-1} {A[\mu X.A/X]} V_2\\
      V_1 \itylr i {U \u B} V_2 &=& \forall j \leq i, S_1 \itylr j {\u B} S_2.~ S_1[\force V_1] \ix\apreorder j \result(S_2[\force V_2]) \\\\

      S_1[\bullet V_1] \itylr i {A \to \u B} S_1[\bullet V_2] & = & V_1 \itylr i A V_2 \wedge S_1 \itylr {i}{\u B} S_2\\
\iflong
      S_1[\pi_1 \bullet] \itylr i {\u B \with \u B'} S_2[\pi_1 \bullet] &=& S_1 \itylr i {\u B} S_2\\
      S_1[\pi_2 \bullet] \itylr i {\u B \with \u B'} S_2[\pi_2 \bullet] &=& S_1 \itylr i {\u B'} S_2\\
      S_1 \itylr i {\top} S_2 &=& \bot\\
\fi
      S_1[\unroll \bullet] \itylr i {\nu \u Y. \u B} S_2[\unroll \bullet] &=& i = 0 \vee S_1 \itylr {i-1} {\u B[\nu \u Y. \u B/\u Y]} S_2\\
      S_1 \itylr i {\u F A} S_2 & = & \forall j\leq i, V_1 \itylr j A V_2.~ S_1[\ret V_1] \ix\apreorder j \result(S_2[\ret V_2])
    \end{array}
  \end{mathpar}
  \end{small}
  \vspace{-0.1in}
  \caption{Logical Relation from a Preorder $\apreorder$ \ifshort (selected cases) \fi}
  \label{fig:lr}
\end{figure}

\begin{shortonly}
The (closed) \emph{logical} preorder (for closed values/stacks) is in Figure
\ref{fig:lr}.  For every $i$ and value type $A$, we define a relation
$\itylrof \apreorder i A$ between two closed values of type $A$, and for
every $i$ and $\u B$, we define a relation for two ``closed'' stacks $\u
B \vdash \u F 2$ outputting the observation type $\u F 2$---the
definition is by mutual lexicographic induction on $i$ and $A/\u B$.
Two values or stacks are related if they have the same structure, where
for $\mu,\nu$ we decrement $i$ and succeed if $i = 0$.  The shifts $\u
F/U$ take the \emph{orthogonal} of the relation: the set of all
stacks/values that when composed with those values/stacks are related by
$\apreorder^{j \le i}$; the quantifier over $j \leq i$ is needed to make the
relation downward closed.
\end{shortonly}

\begin{longonly}
Next, we define the (closed) \emph{logical} preorder (for closed values/stacks) by induction on types and
the index $i$ in figure \ref{fig:lr}.
Specifically, for every $i$ and value type $A$ we define a relation
$\itylrof \apreorder i A$ between closed values of type $A$ because
these are the only ones that will be pattern-matched against at
runtime.
The relation is defined in a type-directed fashion, the intuition being
that we relate two positive values when they are built up in the same
way: i.e., they have the same introduction form and their subterms are
related.
For $\mu$, this definition would not be well-founded, so we decrement
the step index, giving up and relating the terms if $i = 0$.
Finally $U$ is the only negative value type, and so it is treated
differently.
A thunk $V : U\u B$ cannot be inspected by pattern matching, rather
the only way to interact with it is to force its evaluation.
By the definition of the operational semantics, this only ever occurs
in the step $S[\force V]$, so (ignoring indices for a moment), we
should define $V_1 \apreorder V_2$ to hold in this case when, given
$S_1 \apreorder S_2$, the result of $S_2[\force V_2]$ is approximated
by $S_1[\force V_1]$.
To incorporate the indices, we have to quantify over $j \leq i$ in
this definition because we need to know that the values are related in
all futures, including ones where some other part of the term has been
reduced (consuming some steps).
Technically, this is crucial for making sure the relation is
downward-closed.
This is known as the \emph{orthogonal} of the relation, and one
advantage of the CBPV language is that it makes the use of
orthogonality \emph{explicit} in the type structure, analogous to the
benefits of using Nakano's \emph{later} modality \cite{nakano} for step indexing
(which we ironically do not do).

Next, we define when two \emph{stacks} are related.
First, we define the relation only for two ``closed'' stacks, which 
both have the same type of their hole $\u B$ and both have
``output'' the observation type $\u F 2$.
The reason is that in evaluating a program $M$, steps always occur as
$S[N] \bigstepany S[N']$ where $S$ is a stack of this form.
An intuition is that for negative types, two stacks are related when
they start with the same elimination form and the remainder of the
stacks are related.
For $\nu$, we handle the step indices in the same way as for $\mu$.
For $\u F A$, a stack $S[\bullet : \u F A]$ is strict in its input and
waits for its input to evaluate down to a value $\ret V$, so two
stacks with $\u F A$ holes are related when in any future world, they
produce related behavior when given related values.

We note that in the CBV restriction of CBPV, the function type is
given by $U(A \to \u F A')$ and the logical relation we have presented
reconstructs the usual definition that involves a double orthogonal.

Note that the definition is well-founded using the lexicographic
ordering on $(i, A)$ and $(i, \u B)$: either the type reduces and the
index stays the same or the index reduces.
We extend the definition to contexts to \emph{closing substitutions}
pointwise: two closing substitutions for $\Gamma$ are related at $i$
if they are related at $i$ for each $x:A \in \Gamma$.
\end{longonly}

The logical preorder for open terms is defined as usual by quantifying
over all related closing substitutions, but also over all stacks to the
observation type $\u F (1+1)$:
%
\begin{definition}[Logical Preorder]
  For a divergence preorder $\apreorder$, its step-indexed logical
  preorder is
  \begin{shortonly}
    for terms (open stack, value cases are defined in the extended version):
    $\Gamma \vDash M_1 \ilrof\apreorder{i} M_2 \in \u B$ iff for every $\gamma_1 \itylrof\apreorder i {\Gamma} \gamma_2$ and $S_1
    \itylrof\apreorder i {\u B} S_2$, $S_1[M_1[\gamma_1]] \ix\apreorder
    i \result(S_2[M_2[\gamma_2]])$.
  \end{shortonly}
  \begin{longonly}
  \begin{enumerate}
  \item $\Gamma \vDash M_1 \ilrof\apreorder{i} M_2 \in \u B$ iff for every $\gamma_1 \itylrof\apreorder i {\Gamma} \gamma_2$ and $S_1
    \itylrof\apreorder i {\u B} S_2$, $S_1[M_1[\gamma_1]] \ix\apreorder
    i \result(S_2[M_2[\gamma_2]])$.
  \item $\Gamma \vDash V_1 \ilrof\apreorder{i} V_2 \in A$ iff
    for every $\gamma_1 \itylrof\apreorder i {\Gamma} \gamma_2$, $V_1[\gamma_1] \itylrof\apreorder i A V_2[\gamma_2]$
  \item $\Gamma \pipe \u B \vDash S_1 \ilrof\apreorder{i} S_2 \in \u B'$ 
    iff for every $\gamma_1 \itylrof\apreorder i {\Gamma} \gamma_2$ and
    $S_1' \itylrof\apreorder i {\u B'} S_2'$, $S_1'[S_1[\gamma_1]] \itylrof \apreorder
    i {\u B} S_2'[S_2[\gamma_2]])$.
  \end{enumerate}    
  \end{longonly}
\end{definition}

\begin{longonly}
We next want to prove that the logical preorder is a congruence
relation, i.e., the fundamental lemma of the logical relation.
This requires the easy lemma, that the relation on closed terms and
stacks is downward closed.
\begin{lemma}[Logical Relation Downward Closure]
  For any type $T$, if $j \leq i$ then $\itylrof\apreorder i T
  \subseteq \itylrof\apreorder j T$
\end{lemma}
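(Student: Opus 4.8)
The plan is to prove the statement by induction on the lexicographic order on pairs $(i,T)$ that underlies the definition of the relation in Figure~\ref{fig:lr}, ordering types by the strict-subterm relation. Since the value relation $\itylrof\apreorder i A$ and the stack relation $\itylrof\apreorder i {\u B}$ are defined by mutual recursion, the induction simultaneously establishes $\itylrof\apreorder i A \subseteq \itylrof\apreorder j A$ and $\itylrof\apreorder i {\u B} \subseteq \itylrof\apreorder j {\u B}$ for all $j \le i$. I would then case on the outermost connective of $T$.

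For the shift types the claim is immediate and uses no induction hypothesis. If $V_1 \itylrof\apreorder i {U\u B} V_2$, then for every $k \le j$ and every $S_1 \itylrof\apreorder k {\u B} S_2$ we have $k \le j \le i$, so $S_1[\force V_1] \ix\apreorder k \result(S_2[\force V_2])$ holds by instantiating the hypothesis at this same $k$ and these same stacks; hence $V_1 \itylrof\apreorder j {U\u B} V_2$. The case of $\u F A$ is symmetric, using the clause for $\itylrof\apreorder i {\u F A}$. The empty relations at $0$ and $\top$ are vacuously downward closed.

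For the structural cases $A_1+A_2$, $1$, $A_1\times A_2$, $A\to\u B$, and $\u B_1\with\u B_2$, the relation at index $i$ is a conjunction of clauses each of the form ``$W_1$ is related to $W_2$ at index $i$ at a strict subterm of $T$'', so the induction hypothesis applies at the same index $i$ and the smaller type, yielding each clause at index $j$; reassembling the conjunction gives the result. For the recursive types $\mu X.A$ and $\nu\u Y.\u B$, the relation at index $i$ is the disjunction ``$i=0$, or the bodies are related at index $i-1$ at the one-step unfolding of the type''. Given $j\le i$: if $j=0$ the conclusion holds trivially; otherwise $i\ge 1$, so the hypothesis supplies relatedness of the bodies at index $i-1$, and since $j-1\le i-1<i$ the induction hypothesis at the strictly smaller index $i-1$ transports this to index $j-1$, giving relatedness at index $j$.

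I do not anticipate a real obstacle: the only points needing care are setting up the mutual lexicographic induction so that the reference in the $U\u B$ clause to the stack relation at indices $k\le i$ and type $\u B$ is justified by the order, and handling the base case $j=0$ in the $\mu/\nu$ clauses, which is the only place the index strictly decreases. Once this lemma is available, downward closure lifts pointwise to related closing substitutions $\gamma_1 \itylrof\apreorder i \Gamma \gamma_2$ and hence to the open logical preorder, which is what the fundamental lemma will use.
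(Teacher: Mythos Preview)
Your proposal is correct. The paper omits the proof entirely, merely calling it an ``easy lemma'' and noting that the relation is defined by the lexicographic order on $(i,T)$; your argument follows exactly that structure and correctly observes that the shift cases $U$ and $\u F$ need no inductive hypothesis because the downward quantification over indices is already built into their clauses.
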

\end{longonly}

Next, we show the fundamental theorem:
\begin{theorem}[Logical Preorder is a Congruence]
  For any divergence preorder, the logical preorder $E \ilrof\apreorder
  i E'$ is \iflong a congruence relation, i.e., it is \fi closed under
  applying any value/term/stack constructors to both sides.
\end{theorem}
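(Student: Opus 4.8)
The plan is to prove the fundamental theorem by induction on the derivation of the (open) logical preorder judgement, with one case for each term/value/stack constructor of operational CBPV (including the recursive-type constructors $\roll$/$\unroll$ and the shift constructors $\thunk$/$\force$, $\ret$/$\mathsf{bind}$, $\lambda$/application). In each case I assume the inductive hypotheses — that the immediate subterms are related by $\ilrof\apreorder i$ at the appropriate (possibly extended) contexts — and must show the compound term is related by $\ilrof\apreorder i$. Unfolding the definition of the open preorder, this means: fix an index $i$, related closing substitutions $\gamma_1 \itylrof\apreorder i \Gamma \gamma_2$, and a related pair of continuation stacks $S_1 \itylrof\apreorder i {\u B} S_2$, and show $S_1[C[\vec E_1][\gamma_1]] \ix\apreorder i \result(S_2[C[\vec E_2][\gamma_2]])$ for the constructor $C$ in question.

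First I would establish the supporting infrastructure that the case analysis leans on: downward closure of the closed relations $\itylrof\apreorder i T$ in $i$; the anti-reduction lemma (if $N \bigstepsin j M$ and $M \ix\apreorder i R$ then $N \ix\apreorder{i+j} R$); triviality at $0$; and the ``module'' lemma ($M\ix\apreorder i R$ and $R\apreorder R'$ imply $M\ix\apreorder i R'$). These are all stated or easily provable from the finitized-preorder definition, and they are exactly what lets me consume operational steps (decrementing the index by the number of $\mu/\nu$ unrollings) while staying inside the relation. For the positive value constructors ($\inl$, $\inr$, pairing, $()$, $\roll$ at $\mu$), the argument is structural: the closed relation at the corresponding type is defined by matching introduction forms, so relatedness of the pieces gives relatedness of the whole directly (with an index decrement, handled by downward closure, in the $\mu$ case). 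For the elimination forms that appear in stacks ($\mathsf{bind}$, application, $\pi/\pi'$, $\unroll$), the key move is to push the elimination form into the continuation: e.g. for $\mathsf{bind}$, $S_1[\bindXtoYinZ{M_1}{x}{N_1}] = S_1'[M_1]$ where $S_1' = S_1[\bindXtoYinZ{\bullet}{x}{N_1}]$, and I check that $S_1' \itylrof\apreorder i {\u F A} S_2'$ follows from the IH on $N$ plus the orthogonality definition of the $\u F$ relation, then apply the IH on $M$. The introduction forms for computation types ($\lambda$, pairing of computations, $\roll$ at $\nu$, $\ret$) are dual: I take an arbitrary related stack, observe it must begin with the matching destructor (by the definition of $\itylrof\apreorder i$ at $A\to\u B$, $\u B_1\with\u B_2$, $\nu$, or $\u F A$ respectively), $\beta$-reduce using anti-reduction to pay for the step, and appeal to the IH.

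The case I expect to be the main obstacle is $\mathsf{bind}$ together with the $U$/$\force$ interaction, because these are where orthogonality and step-indexing genuinely interlock: showing $S_1[\bindXtoYinZ{\bullet}{x}{N_1}] \itylrof\apreorder i{\u F A} S_2[\bindXtoYinZ{\bullet}{x}{N_2}]$ requires, for all $j\le i$ and all $V_1\itylrof\apreorder j A V_2$, that $S_1[\bindXtoYinZ{\ret V_1}{x}{N_1[\gamma_1]}] \ix\apreorder j \result(\ldots)$, which after a $\beta$-step (costing nothing, or an index decrement tracked carefully) reduces to the IH for $N$ applied at index $j$ with the extended substitution $\gamma_1, V_1/x$ — so I must have been careful to state the IH with the ``$\forall j \le i$'' quantifier baked in, i.e. to prove the statement ``for all $i$'' simultaneously rather than for a fixed $i$. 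The other delicate point is the $\force$ case, where I need $S_1[\force V_1]\ix\apreorder i \result(S_2[\force V_2])$ directly from $V_1\itylrof\apreorder i{U\u B}V_2$ and $S_1\itylrof\apreorder i{\u B}S_2$, which is essentially immediate by the definition of the $U$ relation but must be threaded through the open-term definition. Once all constructor cases are discharged, the fundamental theorem follows, and (as the subsequent corollaries in the paper will use) it yields that $\ilrof\apreorder i$ is reflexive on well-typed terms and hence, taking the limit over $i$ and combining the two divergence preorders $\errordivergeleft$ and $\errordivergerightop$ via the contextual decomposition lemma, that GTT dynamism implies contextual error approximation.
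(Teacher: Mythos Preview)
Your proposal is correct and follows essentially the same approach as the paper: a case analysis over each term/value/stack constructor, using downward closure and anti-reduction as the workhorse lemmas, with the shift cases ($\mathsf{bind}$, $\force$, $\ret$, $\thunk$) handled by unfolding the orthogonality definitions of the $\u F$ and $U$ relations exactly as you describe. Your identification of $\mathsf{bind}$ as the place where the $\forall j\le i$ quantifier matters, and the need to invoke downward closure on the substitution and stack to descend to index $j$, matches the paper's proof precisely; the only cosmetic difference is that the paper frames the argument as ``for each congruence rule, prove validity at every $i$'' rather than as an induction on a derivation, but the content is identical.
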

\begin{longproof}
  For each congruence rule
  \[
  \inferrule
  {\Gamma \pipe \Delta \vdash E_1 \ltdyn E_1' : T_1 \cdots}
  {\Gamma' \pipe \Delta' \vdash E_c \ltdyn E_c' : T_c}
  \]
  we prove for every $i \in \mathbb{N}$ the validity of the rule
  \[
  \inferrule
  {\Gamma \pipe \Delta \vDash E_1 \ilr i E_1' \in T_1\cdots }
  {\Gamma \pipe \Delta \vDash E_c \ilr i E_c' \in T_c}
  \]
  \begin{enumerate}
  \item $\inferrule {} {\Gamma,x : A,\Gamma' \vDash x \ilr i x \in
    A}$. Given $\gamma_1 \itylr i {\Gamma,x:A,\Gamma'} \gamma_2$,
    then by definition $\gamma_1(x) \itylr i A \gamma_2(x)$.

  \item $\inferrule{}{\Gamma \vDash \err \itylr \err \in \u B}$ We
    need to show $S_1[\err] \ix\apreorder i \result(S_2[\err])$. By
    anti-reduction and strictness of stacks, it is sufficient to show
    $\err \ilr i \err$. If $i = 0$ there is nothing to show,
    otherwise, it follows by reflexivity of $\apreorder$.

  \item $\inferrule
    {\Gamma \vDash V \ilr i V' \in A \and
      \Gamma, x : A \vDash M \ilr i M' \in \u B
    }
    {\Gamma \vDash \letXbeYinZ V x M \ilr i \letXbeYinZ {V'} {x} {M'} \in \u B}$
    
    Each side takes a $0$-cost step, so by anti-reduction, this reduces to
    \[ S_1[M[\gamma_1,V/x]] \ix\apreorder i \result(S_2[M'[\gamma_2,V'/x]]) \] which follows by the assumption $\Gamma, x : A \vDash M \ilr i M' \in \u B$

  \item $\inferrule
    {\Gamma \vDash V \ilr i V' \in 0}
    {\Gamma \vDash \abort V \ilr i \abort V' \in \u B}$.
    By assumption, we get $V[\gamma_1] \logty i {0} V'[\gamma_2]$, but this is a contradiction.

  \item $\inferrule
    {\Gamma \vDash V \ilr i V' \in A_1}
    {\Gamma \vDash \inl V \ilr i \inl V' \in A_1 + A_2}$.
    Direct from assumption, rule for sums.

  \item $\inferrule
    {\Gamma \vDash V \ilr i V' \in A_2}
    {\Gamma \vDash \inr V \ilr i \inr V' \in A_1 + A_2}$
    Direct from assumption, rule for sums.

  \item $\inferrule
    {\Gamma \vDash V \ilr i V' \in A_1 + A_2\and
      \Gamma, x_1 : A_1 \vDash M_1 \ilr i M_1' \in \u B\and
      \Gamma, x_2 : A_2 \vDash M_2 \ilr i M_2' \in \u B
    }
    {\Gamma \vDash \caseofXthenYelseZ V {x_1. M_1}{x_2.M_2} \ilr i \caseofXthenYelseZ {V'} {x_1. M_1'}{x_2.M_2'} \in \u B}$\\
    By case analysis of $V[\gamma_1] \ilr i V'[\gamma_2]$.
    \begin{enumerate}
    \item If $V[\gamma_1]=\inl V_1, V'[\gamma_2] = \inl V_1'$ with
      $V_1 \itylr i {A_1} V_1'$, then taking $0$ steps, by anti-reduction
      the problem reduces to
      \[ S_1[M_1[\gamma_1,V_1/x_1]] \ix\apreorder i \result(S_1[M_1[\gamma_1,V_1/x_1]]) \]
      which follows by assumption.
    \item For $\inr{}$, the same argument.
    \end{enumerate}

  \item $\inferrule
    {}
    {\Gamma \vDash () \ilr i () \in 1}$ Immediate by unit rule.

  \item $\inferrule
    {\Gamma \vDash V_1 \ilr i V_1' \in A_1\and
      \Gamma\vDash V_2 \ilr i V_2' \in A_2}
    {\Gamma \vDash (V_1,V_2) \ilr i (V_1',V_2') \in A_1 \times A_2}$
    Immediate by pair rule.

  \item $\inferrule
    {\Gamma \vDash V \ilr i V' \in A_1 \times A_2\and
      \Gamma, x : A_1,y : A_2 \vDash M \ilr i M' \in \u B
    }
    {\Gamma \vDash \pmpairWtoXYinZ V x y M \ilr i \pmpairWtoXYinZ {V'} {x} {y} {M'} \in \u B}$
    By $V \itylr i {A_1 \times A_2} V'$, we know $V[\gamma_1] =
    (V_1,V_2)$ and $V'[\gamma_2] = (V_1', V_2')$ with $V_1 \itylr i
    {A_1} V_1'$ and $V_2 \itylr i {A_2} V_2'$.
    Then by anti-reduction, the problem reduces to
    \[ S_1[M[\gamma_1,V_1/x,V_2/y]] \ix\apreorder i \result(S_1[M'[\gamma_1,V_1'/x,V_2'/y]]) \]
    which follows by assumption.

  \item $\inferrule
    {\Gamma \vDash V \ilr i V' \in A[\mu X.A/X]}
    {\Gamma \vDash \rollty{\mu X.A} V \ilr i \rollty{\mu X.A} V' \in \mu X.A }$
    If $i = 0$, we're done. Otherwise $i=j+1$, and our assumption is
    that $V[\gamma_1] \itylr {j+1} {A[\mu X.A/X]} V'[\gamma_2]$ and we need to show
    that $\roll V[\gamma_1] \itylr {j+1} {\mu X. A}\roll
    V'[\gamma_2]$. By definition, we need to show $V[\gamma_1] \itylr
    j {A[\mu X.A/X]} V'[\gamma_2]$, which follows by downward-closure.
    
  \item $\inferrule
    {\Gamma \vDash V \ilr i V' \in \mu X. A\and
      \Gamma, x : A[\mu X. A/X] \vDash M \ilr i M' \in \u B}
    {\Gamma \vDash \pmmuXtoYinZ V x M \ilr i \pmmuXtoYinZ {V'} {x} {M'} \in \u B}$
    If $i = 0$, then by triviality at $0$, we're done.
    Otherwise, $V[\gamma_1] \itylr {j+1} {\mu X. A} V'[\gamma_2]$ so
    $V[\gamma_1] = \roll V_\mu, V'[\gamma_2] = \roll V_\mu'$ with
    $V_\mu \itylr j {A[\mu X.A/X]} V_\mu'$. Then each side takes $1$ step, so by anti-reduction it is sufficient to show
    \[ S_1[M[\gamma_1,V_\mu/x]] \ix\apreorder j \result(S_2[M'[\gamma_2,V_\mu'/x]]) \] which follows by assumption and downward closure of the stack, value relations.

  \item $\inferrule {\Gamma \vDash M \ilr i M' \in \u B} {\Gamma
    \vDash \thunk M \ilr i \thunk M' \in U \u B}$.  We need to show
    $\thunk M[\gamma_1] \itylr i {U \u B} \thunk M'[\gamma_2]$, so let
    $S_1 \itylr j {\u B} S_2$ for some $j \leq i$, and we need to show
    \[ S_1[\force \thunk M_1[\gamma_1]] \ix\apreorder j \result(S_2[\force \thunk M_2[\gamma_2]]) \]
    Then each side reduces in a $0$-cost step and it is sufficient to show
    \[ S_1[M_1[\gamma_1]] \ix\apreorder j \result(S_2[M_2[\gamma_2]]) \]
    Which follows by downward-closure for terms and substitutions.

  \item $\inferrule {\Gamma \vDash V \ilr i V' \in U \u B} {\Gamma
    \vDash \force V \ilr i \force V' \in \u B}$. \\ We need to show
    $S_1[\force V[\gamma_1]] \ix\apreorder i \result(S_2[\force
    V'[\gamma_2]])$, which follows by the definition of $V[\gamma_1]
    \itylr i {U \u B} V'[\gamma_2]$.

  \item $\inferrule
    {\Gamma \vDash V \ilr i V' \in A}
    {\Gamma \vDash \ret V \ilr i \ret V' \in \u F A}$\\
    We need to show $S_1[\ret V[\gamma_1]] \ix\apreorder i \result(S_2[\ret
      V'[\gamma_2]])$, which follows by the orthogonality definition of
    $S_1 \itylr i {\u F A} S_2$.

  \item $\inferrule
    {\Gamma \vDash M \ilr i M' \in \u F A\and
      \Gamma, x: A \vDash N \ilr i N' \in \u B}
    {\Gamma \vDash \bindXtoYinZ M x N \ilr i \bindXtoYinZ {M'} {x} {N'} \in \u B}$.

    We need to show $\bindXtoYinZ {M[\gamma_1]} x {N[\gamma_2]} \ix\apreorder i \result(\bindXtoYinZ {M'[\gamma_2]} {x} {N'[\gamma_2]})$.
    By $M \ilr i M' \in \u F A$, it is sufficient to show that
    \[ \bindXtoYinZ \bullet x {N[\gamma_1]} \itylr i {\u F A} \bindXtoYinZ \bullet {x} {N'[\gamma_2]}\]
    So let $j \leq i$ and $V \itylr j A V'$, then we need to show
    \[ \bindXtoYinZ {\ret V} x {N[\gamma_1]} \itylr j {\u F A} \bindXtoYinZ {\ret V'} {x} {N'[\gamma_2]} \]
    By anti-reduction, it is sufficient to show
    \[ N[\gamma_1,V/x] \ix\apreorder j \result(N'[\gamma_2,V'/x]) \]
    which follows by anti-reduction for $\gamma_1 \itylr i {\Gamma} \gamma_2$ and $N \ilr i N'$.

  \item $\inferrule
    {\Gamma, x: A \vDash M \ilr i M' \in \u B}
    {\Gamma \vDash \lambda x : A . M \ilr i \lambda x:A. M' \in A \to \u B}$
    We need to show
    \[S_1[\lambda x:A. M[\gamma_1]] \ix\apreorder i \result(S_2[\lambda x:A.M'[\gamma_2]]).\]
    By $S_1 \itylr i {A \to \u B} S_2$, we know $S_1 = S_1'[\bullet V_1]$, $S_2 = S_2'[\bullet V_2]$ with $S_1' \itylr i {\u B} S_2'$ and $V_1 \itylr i {A} V_2$.
    Then by anti-reduction it is sufficient to show
    \[
    S_1'[M[\gamma_1,V_1/x]] \ix\apreorder i \result(S_2'[M'[\gamma_2,V_2/x]])
    \]
    which follows by $M \ilr i M'$.

  \item $\inferrule
    {\Gamma \vDash M \ilr i M' \in A \to \u B\and
      \Gamma \vDash V \ilr i V' \in A}
    {\Gamma \vDash M\,V \ilr i M'\,V' \in \u B }$
    We need to show
    \[S_1[M[\gamma_1]\,V[\gamma_1]] \ix\apreorder i \result(S_2[M'[\gamma_2]\,V'[\gamma_2]])\] so by $M \ilr i M'$ it is sufficient to show $S_1[\bullet V[\gamma_1]] \itylr i {A \to \u B} S_2[\bullet V'[\gamma_2]]$ which follows by definition and assumption that $V \ilr i V'$.

  \item $\inferrule{}{\Gamma \vdash \{\} : \top}$ We assume we are
    given $S_1 \itylr i {\top} S_2$, but this is a contradiction.
    
  \item $\inferrule
    {\Gamma \vDash M_1 \ilr i M_1' \in \u B_1\and
      \Gamma \vDash M_2 \ilr i M_2' \in \u B_2}
    {\Gamma \vDash \pair {M_1} {M_2} \ilr i \pair {M_1'} {M_2'} \in \u B_1 \with \u B_2}$
    We need to show
    \[S_1[\pair{M_1[\gamma_1]}{M_2[\gamma_1]}] \ix\apreorder i \result(S_2[\pair{M_1'[\gamma_1]}{M_2'[\gamma_2]}]).\]
    We proceed by case analysis of $S_1 \itylr i {\u B_1 \with \u B_2} S_2$
    \begin{enumerate}
    \item In the first possibility $S_1 = S_{1}'[\pi \bullet], S_2 =
      S_2'[\pi \bullet]$ and $S_1' \itylr i {\u B_1} S_2'$.
      Then by anti-reduction, it is sufficient to show
      \[ S_1'[M_1[\gamma_1]] \ix\apreorder i \result(S_2'[M_1'[\gamma_2]]) \]
      which follows by $M_1 \ilr i M_1'$.
    \item Same as previous case.
    \end{enumerate}

  \item $\inferrule
    {\Gamma \vDash M \ilr i M' \in \u B_1 \with \u B_2}
    {\Gamma \vDash \pi M \ilr i \pi M' \in \u B_1}$
    We need to show $S_1[\pi M[\gamma_1]] \ix\apreorder i \result(S_2[\pi
      M'[\gamma_2]])$, which follows by $S_1[\pi \bullet] \itylr i {\u
      B_1 \with \u B_2} S_2[\pi \bullet]$ and $M \ilr i M'$.

  \item $\inferrule {\Gamma \vDash M \ilr i M' \in \u B_1 \with \u
    B_2} {\Gamma \vDash \pi' M \ilr i \pi' M' \in \u B_2}$ Similar
    to previous case.

  \item $\inferrule
    {\Gamma \vDash M \ilr i M' \in \u B[{\nu \u Y. \u B}/\u Y]}
    {\Gamma \vDash \rollty{\nu \u Y. \u B} M \ilr i \rollty{\nu \u Y. \u B} M' \in {\nu \u Y. \u B}}$
    We need to show that
    \[ S_1[ \rollty{\nu \u Y. \u B} M[\gamma_1]]
    \ix\apreorder i \result(S_2[ \rollty{\nu \u Y. \u B} M'[\gamma_2]]) \]
    If $i = 0$, we invoke triviality at $0$.
    Otherwise, $i = j + 1$ and we know by $S_1 \itylr {j+1} {\nu \u Y. \u B} S_2$ that
    $S_1 = S_1'[\unroll \bullet]$ and $S_2 = S_2'[\unroll \bullet]$ with $S_1' \itylr j {\u B[{\nu \u Y. \u B}/\u Y]} S_2'$, so by anti-reduction it is sufficient to show
    \[ S_1'[ M[\gamma_1]] \ix\apreorder i \result(S_2'[ M'[\gamma_2]]) \]
    which follows by $M \ilr i M'$ and downward-closure.

  \item $\inferrule
    {\Gamma \vDash M \ilr i M' \in {\nu \u Y. \u B}}
    {\Gamma \vDash \unroll M \ilr i \unroll M' \in \u B[{\nu \u Y. \u B}/\u Y]}$
    We need to show
    \[S_1[\unroll M] \ix\apreorder i \result(S_2[\unroll M']),\] which
    follows because $S_1[\unroll \bullet] \itylr i {\nu \u Y. \u B}
    S_2[\unroll \bullet]$ and $M \ilr i M'$.
  \end{enumerate}
\end{longproof}

\begin{longonly}
As a direct consequence we get the reflexivity of the relation
\begin{corollary}[Reflexivity]
  For any $\Gamma \vdash M : \u B$, and $i \in \mathbb{N}$,
  \(\Gamma \vDash M \ilrof\apreorder i  M \in \u B.\)
\end{corollary}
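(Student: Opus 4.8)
The plan is to prove this by a simple induction on the typing derivation, carried out simultaneously over the three mutually-defined judgements for values, computations, and stacks. First I would observe that the last rule in the derivation of $\Gamma \vdash M : \u B$ is an introduction or elimination rule for some type former, i.e.\ the application of a value, computation, or stack constructor to strictly smaller subterms. Applying the inductive hypothesis to each immediate subterm gives that it is logically related to \emph{itself} at index $i$. Then I would appeal to the Fundamental Theorem (that the logical preorder is a congruence): each of its congruence rules says that applying a given constructor to logically related arguments produces logically related results, so instantiating that rule with the two terms taken to be equal --- making the premises $E_j \ilr i E_j$ exactly the inductive hypotheses --- yields $M \ilr i M$ directly.

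For the base cases I would use the nullary-constructor cases already contained in the proof of the Fundamental Theorem: the variable case ($\Gamma, x : A, \Gamma' \vDash x \ilr i x \in A$), which unfolds the definition of related closing substitutions; the hole $\bullet$; the error term $\err$, which goes through anti-reduction together with strictness of stacks and reflexivity of $\apreorder$; and the trivial value $()$ and computation $\{\}$, which fall out of the corresponding clauses of the logical relation. None of these requires anything beyond what the Fundamental Theorem already establishes.

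I do not expect a genuine obstacle here: all the difficulty has been absorbed into showing that the logical preorder is closed under every term former. The one point to take care over is bookkeeping --- the induction must range over values, computations, and stacks together, mirroring the mutual recursion in both the typing rules and the definition of the logical relation, and one should confirm that the Fundamental Theorem's case analysis is exhaustive over constructors (including the nullary ones). Granting that, the corollary is obtained by reading off, for each constructor, the diagonal instance of the corresponding congruence rule.
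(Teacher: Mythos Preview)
Your proposal is correct and matches the paper's approach: the paper states the corollary as ``a direct consequence'' of the congruence theorem, and what you have spelled out is exactly the standard way to read off reflexivity from a congruence result by induction on the typing derivation, taking the diagonal instance of each congruence rule.
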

\end{longonly}

\begin{shortonly}
  This in particular implies that the relation is reflexive ($\Gamma
  \vDash M \ilrof\apreorder i M \in \u B$ for all well-typed $M$),
\end{shortonly}
so we
have the following \emph{strengthening} of the progress-and-preservation
type soundness theorem: because $\ix\apreorder i$ only counts unrolling
steps, terms that never use $\mu$ or $\nu$ types (for example) are
guaranteed to terminate.
\begin{corollary}[Unary LR]
  For every program $\cdot \vdash M : \u F 2$ and $i \in \mathbb{N}$,
  $M \ix\apreorder i \result(M)$
\end{corollary}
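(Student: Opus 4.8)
The plan is to derive this immediately from the preceding Reflexivity corollary together with a one-line check that the empty stack $\bullet$ is self-related at type $\u F 2$. First I would apply Reflexivity to the closed program $\cdot \vdash M : \u F 2$, obtaining $\cdot \vDash M \ilrof\apreorder i M \in \u F 2$ for every $i \in \mathbb{N}$. Unfolding the definition of the logical preorder at the empty context, this says exactly that for every pair of stacks with $S_1 \itylrof\apreorder i {\u F 2} S_2$ we have $S_1[M] \ix\apreorder i \result(S_2[M])$; since $\bullet[M] = M$ and $\result(\bullet[M]) = \result(M)$, it suffices to exhibit one such pair, namely $S_1 = S_2 = \bullet$.

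The remaining step is to check $\bullet \itylrof\apreorder i {\u F 2} \bullet$. By the clause for $\u F$-types in Figure~\ref{fig:lr}, this requires that for all $j \le i$ and all $V_1 \itylrof\apreorder j {1+1} V_2$ we have $\ret V_1 \ix\apreorder j \result(\ret V_2)$. The only closed values of type $1+1$ are $\inl ()$ and $\inr ()$, and by the $+$ and $1$ clauses of the logical relation, $V_1 \itylrof\apreorder j {1+1} V_2$ forces $V_1$ and $V_2$ to be the same value; hence $\ret V_1 = \ret V_2$, which is itself a result, so $\result(\ret V_2) = \ret V_2 = \ret V_1$, and $\ret V_1 \ix\apreorder j \ret V_1$ holds directly from the definition of the finitized preorder (if $j = 0$ it ``times out'' at $0$ steps; if $j > 0$ then $\ret V_1 \bigstepsin{0} \ret V_1$ with $0 < j$ and $\ret V_1 \apreorder \ret V_1$ by reflexivity of $\apreorder$). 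Instantiating the displayed implication with $S_1 = S_2 = \bullet$ then yields $M \ix\apreorder i \result(M)$.

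There is no real obstacle: all the work has already been discharged in the Fundamental Theorem (``Logical Preorder is a Congruence''), whose type-directed, step-indexed induction is what powers Reflexivity. The only point requiring any care is the base case of the orthogonality unwinding---verifying that the observation stack $\bullet : \u F 2$ is self-related---and this is immediate precisely because the observation type $1+1$ has only ``diagonal'' related values, so nothing about $\apreorder$ beyond its reflexivity is needed.
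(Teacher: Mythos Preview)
Your proof is correct and follows essentially the same approach as the paper: apply Reflexivity to the closed program, note that the empty stack is self-related at $\u F 2$, and unfold definitions. The paper's proof simply asserts $\bullet \itylrof\apreorder i {\u F 2} \bullet$ ``by definition'' without unpacking it, whereas you have carefully verified this by analyzing the $\u F$ clause and the closed values of $1+1$; this extra detail is fine and fills in exactly what the paper elides.
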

\begin{longproof}
  By reflexivity, $\cdot \vDash M \ix\apreorder i M \in \u F 2$ and by
  definition $\bullet \itylrof\apreorder i {\u F 2} \bullet$, so
  unrolling definitions we get $M \ix\apreorder i \result(M)$.  
\end{longproof}

\noindent Using reflexivity, we prove that the indexed relation between terms and
results recovers the original preorder in the limit as $i \to \omega$.
We write $\ix\apreorder \omega$ to mean the relation holds for every
$i$, i.e., $\ix\apreorder\omega =
\bigcap_{i\in\mathbb{N}} \ix\apreorder i$.
\begin{corollary}[Limit Lemma]
  \label{lem:limit}
  For any divergence preorder $\apreorder$, \( \result(M) \apreorder
  R\) iff \( M \ix\apreorder \omega R \).
\end{corollary}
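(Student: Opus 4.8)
The easy direction is right-to-left: assume $M \ix\apreorder{\omega} R$, i.e.\ $M \ix\apreorder{i} R$ for all $i$. By the Possible Results of Computation corollary, $M$ either diverges or runs to one of $\err, \ret\tru, \ret\fls$. If $M \Uparrow$, then $\result(M) = \diverge$, and since $\apreorder$ is a divergence preorder we have $\diverge \apreorder R$ immediately. If instead $M \Downarrow \result(M)$ in exactly $k$ unrolling-steps, then pick $i = k+1$; by definition of $\ix\apreorder{i}$, the first disjunct ($\exists M'.\ M \bigstepsin{i} M'$) fails since $M$ has already terminated after $k < i$ steps, so the second disjunct must hold, giving $\result(M) \apreorder R$ (this uses the Result (Anti-)reduction lemma to identify $R_M$ with $\result(M)$).

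**The left-to-right direction is where the logical relation does its work.** Assume $\result(M) \apreorder R$; we must show $M \ix\apreorder{i} R$ for every $i$. I would invoke the Unary LR corollary, which is itself an immediate consequence of the Fundamental Theorem (Logical Preorder is a Congruence) instantiated at reflexivity together with the fact that $\bullet \itylrof{\apreorder}{i}{\u F 2} \bullet$: this gives $M \ix\apreorder{i} \result(M)$ for all $i$. Then I would apply the ``Indexed Relation is a Module of the Preorder'' lemma (Lemma~\ref{lem:module}), which says precisely that $M \ix\apreorder{i} \result(M)$ and $\result(M) \apreorder R$ together yield $M \ix\apreorder{i} R$. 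Since $i$ was arbitrary, $M \ix\apreorder{\omega} R$.

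**The main obstacle is entirely upstream: it is the Unary LR corollary, hence the Fundamental Theorem.** The Limit Lemma itself is a short bookkeeping argument once one has reflexivity of the step-indexed logical relation, the module lemma, and the characterization of possible results. The real content — that every well-typed program is related to its own result at every index, which is what makes the step-index ``only count unrollings'' rather than all steps — requires the full congruence proof of the logical relation, including the orthogonality reasoning at $U$ and $\u F$ types and the downward-closure lemma. But that theorem is established earlier in the development, so for the purposes of this proof I would simply cite it. One subtlety worth flagging: because $\ix\apreorder{i}$ counts only non-well-founded ($\mu/\nu$) steps and not arbitrary $\beta$-steps, the ``times out'' disjunct is never triggered by a genuinely terminating computation no matter how many ordinary reductions it performs, which is exactly why choosing $i$ just above the unrolling-count suffices in the right-to-left direction and why no separate fuel argument is needed.
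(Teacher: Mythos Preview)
Your proposal is correct and follows essentially the same approach as the paper: the left-to-right direction uses the Unary LR corollary together with the module lemma (Lemma~\ref{lem:module}), and the right-to-left direction splits on whether $M$ diverges (use that $\apreorder$ is a divergence preorder) or terminates (pick an index just above the termination count so the time-out disjunct fails). Your account of the terminating case is in fact more explicit than the paper's, which somewhat tersely cites the module lemma there; your unpacking via determinism and the Result (Anti-)reduction lemma is the intended content.
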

\begin{longproof}
  Two cases
  \begin{enumerate}
  \item If $\result(M) \apreorder R$ then we need to show for every $i
    \in \mathbb{N}$, $M \ix \apreorder i R$. By the unary model lemma,
    $M \ix\apreorder i \result(M)$, so the result follows by the
    module lemma \ref{lem:module}.
  \item If $M \ix\apreorder i R$ for every $i$, then there are two
    possibilities: $M$ is always related to $R$ because it takes $i$
    steps, or at some point $M$ terminates.
    \begin{enumerate}
    \item If $M \bigstepsin{i} M_i$ for every $i \in \mathbb{N}$, then
      $\result(M) = \diverge$, so $\result(M) \apreorder R$ because
      $\apreorder$ is a divergence preorder.
    \item Otherwise there exists some $i \in \mathbb{M}$ such that $M
      \bigstepsin{i} \result(M)$, so it follows by the module lemma
      \ref{lem:module}.
    \end{enumerate}
  \end{enumerate}
\end{longproof}

\begin{corollary}[Logical implies Contextual] \label{lem:logical-implies-contextual}
  If $\Gamma \vDash E \ilrof\apreorder \omega E' \in \u B$
  then
  $\Gamma \vDash E \ctxize\apreorder E' \in \u B$.
\end{corollary}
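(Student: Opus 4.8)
The plan is to derive contextual approximation from the congruence (``fundamental'') theorem for the logical preorder, together with reflexivity and the Limit Lemma (Corollary~\ref{lem:limit}). Unfolding the definition of $\ctxize\apreorder$, it suffices to show that for every closing context $C : (\Gamma \pipe \Delta \vdash \u B) \Rightarrow (\cdot \vdash \u F 2)$ we have $\result(C[E]) \apreorder \result(C[E'])$; the cases where the hole is filled by a value or by a stack are handled identically, so I describe only the computation case.

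First I would prove, by induction on the structure of the context $C$, that $C[E] \ilrof\apreorder i C[E']$ holds (in the appropriate ambient context and at the appropriate type) for every index $i$. The base case $C = [\cdot]$ is precisely the hypothesis $E \ilrof\apreorder i E'$, which we have for all $i$ since $E \ilrof\apreorder \omega E'$. For the inductive step, $C$ is obtained by applying a single term/value/stack constructor to a smaller context $C'$ together with some hole-free immediate subterms; the inductive hypothesis supplies $C'[E] \ilrof\apreorder i C'[E']$, each hole-free subterm is logically related to itself by reflexivity (a corollary of the congruence theorem), and the congruence theorem closes the logical preorder under that constructor, yielding $C[E] \ilrof\apreorder i C[E']$. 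The only bookkeeping is to track how the ambient $\Gamma/\Delta$ and the type change as one descends into $C$, so that the relevant congruence rule has exactly the premises the induction provides; this is routine since the congruence theorem is stated for arbitrary contexts.

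Next, since $C[E]$ and $C[E']$ are closed computations of type $\u F 2$, I would specialize the definition of $\ilrof\apreorder i$ at this type: the only closing substitution is the empty one, and $\bullet \itylrof\apreorder i {\u F 2} \bullet$ holds by the definition of the logical relation at $\u F A$. Hence $C[E] \ilrof\apreorder i C[E']$ unfolds to $C[E] \ix\apreorder i \result(C[E'])$. As this holds for every $i$, we obtain $C[E] \ix\apreorder \omega \result(C[E'])$, and the Limit Lemma converts this to $\result(C[E]) \apreorder \result(C[E'])$. Since $C$ was arbitrary, this establishes $\Gamma \pipe \Delta \vDash E \ctxize\apreorder E' \in \u B$.

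I do not expect a serious obstacle: all the real work is in the congruence theorem, which is already proved, and the remainder is assembling it with reflexivity and the Limit Lemma. The one place that needs a little care is the induction over arbitrary contexts --- making precise that every hole-free subterm at a context node is related to itself and that the congruence step is invoked at the correct judgement --- but this is a matter of careful statement rather than a genuine difficulty.
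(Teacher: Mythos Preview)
Your proposal is correct and matches the paper's approach: the paper's proof simply says ``by congruence, $C[M] \ilrof\apreorder \omega C[N]$'' and then instantiates with the empty substitution and the identity stack before invoking the limit lemma, which is exactly your argument with the induction on $C$ spelled out. The only difference is level of detail---your explicit induction on context structure (using reflexivity for the hole-free subterms) is what the paper's one-word appeal to ``congruence'' is tacitly doing.
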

\begin{proof}
  Let $C$ be a closing context. By congruence, $C[M] \ilrof\apreorder
  \omega C[N]$, so using empty environment and stack, $C[M]
  \ix\apreorder\omega \result(C[N])$ and by the limit lemma, we have
  $\result(C[M]) \apreorder \result(C[N])$.
\end{proof}

\begin{longonly}
In fact, we can prove the converse, that at least for the term case,
the logical preorder is \emph{complete} with respect to the contextual
preorder, though we don't use it.
\begin{lemma}[Contextual implies Logical]
  For any $\apreorder$, if $\Gamma \vDash M \ctxize \apreorder N \in
  \u B$, then $\Gamma \vDash M \ilrof\apreorder \omega N \in \u B$.
\end{lemma}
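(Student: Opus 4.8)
The plan is to obtain this completeness direction from the fundamental lemma (reflexivity of the logical preorder) together with the module property of the finitized preorder, Lemma~\ref{lem:module}, rather than trying to transport contextual approximation \emph{pointwise} onto the substitution/stack data --- the latter is unsound, since being related at a finite index in the closed logical relation is strictly weaker than contextual approximation, so that route cannot work directly.

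Assume $\Gamma \vDash M \ctxize\apreorder N \in \u B$. By the definition of $\ilrof\apreorder\omega$, it suffices to fix an arbitrary index $i$, related closing substitutions $\gamma_1 \itylrof\apreorder i \Gamma \gamma_2$, and related stacks $S_1 \itylrof\apreorder i {\u B} S_2$, and prove $S_1[M[\gamma_1]] \ix\apreorder i \result(S_2[N[\gamma_2]])$. First I would apply the fundamental lemma to $M$ itself: reflexivity gives $\Gamma \vDash M \ilrof\apreorder i M \in \u B$, and instantiating this at the chosen $\gamma_1 \itylrof\apreorder i \Gamma \gamma_2$ and $S_1 \itylrof\apreorder i {\u B} S_2$ yields $S_1[M[\gamma_1]] \ix\apreorder i \result(S_2[M[\gamma_2]])$. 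Then I would package the logical-relation data as a program context: membership in $\itylrof\apreorder i \Gamma$ forces $\gamma_2$ to substitute closed, well-typed values for the variables of $\Gamma$, and membership in $\itylrof\apreorder i {\u B}$ forces $S_2$ to be a closed stack with hole $\bullet : \u B$ and conclusion type $\u F (1+1)$; hence the single-hole context $C$ that substitutes $\gamma_2$ and plugs the result into $S_2$ has typing $C : (\Gamma \vdash \u B) \Rightarrow (\cdot \vdash \u F (1+1))$. Applying the hypothesis $M \ctxize\apreorder N$ to $C$ gives $\result(S_2[M[\gamma_2]]) \apreorder \result(S_2[N[\gamma_2]])$. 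Combining the two facts via Lemma~\ref{lem:module} --- with $R = \result(S_2[M[\gamma_2]])$ and $R' = \result(S_2[N[\gamma_2]])$ --- yields $S_1[M[\gamma_1]] \ix\apreorder i \result(S_2[N[\gamma_2]])$, exactly what was required; since $i$, the substitutions, and the stacks were arbitrary, this holds at $\omega$.

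The only point needing care --- and it is routine bookkeeping rather than a genuine obstacle --- is the well-formedness of the context $C$: it relies on the fact, implicit in the definitions in Figure~\ref{fig:lr}, that the closed logical relations only ever relate well-typed closed values and well-typed closed stacks of the stated types, so that substituting $\gamma_2$ and then plugging into $S_2$ produces a closed program of observation type $\u F (1+1)$ to which the definition of $\ctxize\apreorder$ literally applies (and one must note that plugging a term into a substitution and then into a stack is indeed a single-hole term context of the required shape). An alternative, essentially equivalent route would first establish that $\ctxize\apreorder$ is compatible and then observe that composing a logical approximation on the left with a contextual approximation on the right stays inside the finitized preorder; but the path via reflexivity plus Lemma~\ref{lem:module} is the most economical, and it also makes clear why only the term case is treated here --- the packaging of $\gamma_2$ and $S_2$ into a whole-program context of type $\u F (1+1)$ is exactly what makes the contextual hypothesis applicable.
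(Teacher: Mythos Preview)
Your proof is correct and follows essentially the same approach as the paper: reflexivity of the logical relation applied to $M$, then the contextual hypothesis to pass from $\result(S_2[M[\gamma_2]])$ to $\result(S_2[N[\gamma_2]])$, then the module lemma. The one place where the paper is more explicit is precisely the point you flagged as needing care: since capture-avoiding substitution is not literally a single-hole syntactic context, the paper encodes it by $\lambda$-abstracting over $\Gamma$ (setting $M' = \lambda x_1\ldots\lambda x_n.\,M$) and using the context $S_2[(\lambda x_1\ldots\lambda x_n.\,[\cdot])\,\gamma_2(x_1)\cdots\gamma_2(x_n)]$, which reduces in zero-cost steps to $S_2[M[\gamma_2]]$; a nested $\kw{let}$ encoding would work equally well.
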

\begin{longproof}
  Let $S_1 \itylr i {\u B} S_2$ and $\gamma_1 \itylr i \Gamma \gamma_2$. We need to show that
  \[
  S_1[M[\gamma_1]] \ix\apreorder i \result(S_2[N[\gamma_2]])
  \]

  So we need to construct a \emph{context} that when $M$ or $N$ is
  plugged into the hole will reduce to the above.

  To do this, first, we deconstruct the context
  $x_1:A_1,\ldots,x_n:A_n = \Gamma$.  Then we define $\cdot \vdash M'
  : A_1\to \cdots \to A_n \to \u B$ as
  \[ \lambda x_1:A_1.\ldots\lambda x_n:A_n. M \]
  And similarly define $N'$. Then clearly
  \[ S[M' \,V_1\, \cdots V_n] \bigstepsin{0} S[M[V_1/x_1,\ldots,V_n/x_n]] \]
  so in particular
  \[ S[M'\,\gamma(x_1)\cdots\gamma(x_n)] \bigstepsin{0} S[M[\gamma]]\]
  and similarly for $N'$ if $x_1,\ldots,x_n$ are all of the variables
  in $\gamma$.

  Then the proof proceeds by the following transitivity chain:
  \begin{align*}
    S_1[M[\gamma_1]] &\ix\apreorder i \result(S_2[M[\gamma_2]])\tag{$M \ilr i M$}\\
    &=\result(S_2[M'\,\gamma_2(x_1)\,\cdots\,\gamma_2(x_n)])\tag{reduction}\\
    &\apreorder \result(S_2[N'\,\gamma_2(x_1)\,\cdots\,\gamma_2(x_n)])\tag{$M \ctxize\apreorder N$}\\
    &= \result(S_2[N[\gamma_2]])\tag{reduction}
  \end{align*}

  So $S_1[M[\gamma_1]] \ix\apreorder i \result(S_2[N[\gamma_2]])$ by
  the module lemma \ref{lem:module}.
\end{longproof}
\end{longonly}

This establishes that our logical relation can prove graduality, so it
only remains to show that our \emph{inequational theory} implies our
logical relation.
Having already validated the congruence rules and reflexivity, we
validate the remaining rules of transitivity, error, substitution, and
$\beta\eta$ for each type constructor.
Other than the $\err \ltdyn M$ rule, all of these hold for any
divergence preorder.

For transitivity, with the unary model and limiting lemmas in hand, we
can prove that all of our logical relations (open and closed) are
transitive in the limit. To do this, we first prove the following kind
of ``quantitative'' transitivity lemma, and then transitivity in the
limit is a consequence.
\begin{lemma}[Logical Relation is Quantitatively Transitive] \hfill
  
  \iflong
  \begin{enumerate}
  \item
  \fi
    If $V_1 \itylr i A V_2$ and $V_2 \itylr
    \omega A V_3$, then $V_1 \itylr i A V_3$\ifshort, and analogously
    for stacks. \fi
  \iflong
  \item If $S_1 \itylr i {\u B} S_2$ and $S_2 \itylr
    \omega {\u B} S_3$, then $S_1 \itylr i {\u B} S_3$
  \end{enumerate}
    \fi
\end{lemma}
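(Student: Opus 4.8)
The plan is to prove both clauses — the one for closed values and the one for closed stacks — \emph{simultaneously}, by the same well-founded induction on $(i,T)$ (with $T$ a value or computation type) that underlies the definition of $\itylr i T$ in Figure~\ref{fig:lr}: in each case either the type strictly decreases with $i$ fixed, or $i$ decreases. Throughout we assume $\apreorder$ is a divergence preorder, which is what licenses the Limit Lemma and the ``indexed relation is a module of the preorder'' lemma. The cases split into \emph{structural} cases, handled by unfolding the definition and recursing, and two \emph{orthogonal} cases, $U\u B$ and $\u F A$, which are the real content.

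For the structural cases I would argue directly. At $0$ and $\top$ there is nothing to show, since no closed values (resp.\ stacks) are related. At $+$, $\times$, $A \to \u B$, and $\u B_1 \with \u B_2$, the two hypotheses force $V_1,V_2,V_3$ (resp.\ $S_1,S_2,S_3$) to share the same outermost shape and expose component relations: from the first hypothesis at index $i$, and from the second at index $\omega$, each at a strictly smaller type; applying the induction hypothesis componentwise and reassembling gives the conclusion at index $i$. At $\mu X.A$ and $\nu \u Y.\u B$: if $i = 0$ the conclusion is immediate from the definition; if $i = j+1$, the hypotheses strip one $\roll$/$\unroll$ and relate the contents at the unfolded type at indices $j$ and $\omega$ respectively, so the induction hypothesis at the strictly smaller index $j$ applies.

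The hard part will be the orthogonal cases, where the relation quantifies over all related stacks (for $U\u B$) or values (for $\u F A$), so there is no subterm to recurse on; the trick is to interpolate through a \emph{reflexive} point and invoke the Limit and Module lemmas. Concretely, to get $V_1 \itylr i {U\u B} V_3$ from $V_1 \itylr i {U\u B} V_2$ and $V_2 \itylr \omega {U\u B} V_3$, fix $j \le i$ and $S_1 \itylr j {\u B} S_3$. By reflexivity of the logical relation on closed stacks (a consequence of the congruence theorem applied to stack constructors, exactly as for terms), $S_3 \itylr \omega {\u B} S_3$, hence $S_3 \itylr j {\u B} S_3$. Instantiating the first hypothesis at the related pair $(S_1,S_3)$ at index $j$ gives $S_1[\force V_1] \ix\apreorder j \result(S_3[\force V_2])$; instantiating the second at the reflexive pair $(S_3,S_3)$ at index $\omega$ gives $S_3[\force V_2] \ix\apreorder \omega \result(S_3[\force V_3])$, so by the Limit Lemma $\result(S_3[\force V_2]) \apreorder \result(S_3[\force V_3])$. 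The Module Lemma then combines these into $S_1[\force V_1] \ix\apreorder j \result(S_3[\force V_3])$, which is exactly what the definition of $V_1 \itylr i {U\u B} V_3$ demands. The $\u F A$ stack case is dual, interpolating through the reflexive value $V_3$ on the right. Since these two cases use reflexivity and the Limit/Module lemmas rather than the induction hypothesis, well-foundedness is not in jeopardy.

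Finally, letting $i \to \omega$ in the lemma yields plain transitivity of the closed logical relations, from which transitivity of the open logical preorder — and hence validity of \textsc{TmDynTrans} in the limit — follows by unwinding the definition over closing substitutions and stacks.
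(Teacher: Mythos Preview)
Your proposal is correct and follows essentially the same approach as the paper: mutual lexicographic induction on $(i,T)$, with the structural cases dispatched by the inductive hypothesis, and the two shift cases $U\u B$ and $\u F A$ handled by interpolating through the reflexive right-hand stack (resp.\ value), then combining the Limit Lemma and the Module Lemma. The paper's write-up is terser (it names the right stack $S_2$ rather than $S_3$ and skips the aside about $S_3 \itylr j {\u B} S_3$, which you do not actually use), but the argument is the same.
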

\begin{longproof}
  Proof is by mutual lexicographic induction on the pair $(i, A)$ or
  $(i, \u B)$. All cases are straightforward uses of the inductive
  hypotheses except the shifts $U, \u F$.
  \begin{enumerate}
  \item If $V_1 \itylr i {U \u B} V_2$ and $V_2
    \itylr \omega {U \u B} V_3$, then we need to show that
    for any $S_1 \itylr j {\u B} S_2$ with $j \leq i$,
    \[ S_1[\force V_1] \ix\apreorder j \result(S_2[\force V_3]) \]
    By reflexivity, we know $S_2 \itylr \omega {\u B} S_2$, so by assumption
    \[ S_2[\force V_2] \ix\apreorder \omega \result(S_2[\force V_3])\]
    which by the limiting lemma \ref{lem:limit} is equivalent to
    \[ \result(S_2[\force V_2]) \apreorder \result(S_2[\force V_3]) \]
    so then by the module lemma \ref{lem:module}, it is sufficient to show
    \[ S_1[\force V_1] \ix\apreorder j \result(S_2[\force V_2]) \]
    which holds by assumption.
  \item If $S_1 \itylr i {\u F A} S_2$ and $S_2 \itylr \omega {\u F A}
    S_3$, then we need to show that for any $V_1 \itylr A j V_2$ with $j \leq i$ that
    \[ S_1[\ret V_1] \ix\apreorder j \result(S_3[\ret V_2])\]
    First by reflexivity, we know $V_2 \itylr \omega A V_2$, so by assumption,
    \[ S_2[\ret V_2] \ix\apreorder \omega \result(S_3[\ret V_2]) \]
    Which by the limit lemma \ref{lem:limit} is equivalent to
    \[ \result(S_2[\ret V_2]) \ix\apreorder \omega \result(S_3[\ret V_2]) \]
    So by the module lemma \ref{lem:module}, it is sufficient to show
    \[ S_1[\ret V_1] \ix\apreorder j \result(S_2[\ret V_2]) \]
    which holds by assumption.
  \end{enumerate}
\end{longproof}
\iflong
\begin{lemma}[Logical Relation is Quantitatively Transitive (Open Terms)]\hfill
  \begin{enumerate}
  \item If $\gamma_1 \itylr i \Gamma \gamma_2$ and $\gamma_2 \itylr
    \omega \Gamma \gamma_3$, then $\gamma_1 \itylr i \Gamma \gamma_3$
  \item If $\Gamma \vDash M_1 \ilr i M_2 \in \u B$ and
    $\Gamma \vDash M_2 \ilr \omega M_3 \in \u B$, then
    $\Gamma \vDash M_1 \ilr i M_3 \in \u B$.
  \item If $\Gamma \vDash V_1 \ilr i V_2 \in A$ and
    $\Gamma \vDash V_2 \ilr \omega V_3 \in A$, then
    $\Gamma \vDash V_1 \ilr i V_3 \in A$.
  \item If $\Gamma \pipe \bullet : \u B \vDash S_1 \ilr i S_2 \in \u B'$ and
    $\Gamma\pipe \bullet : \u B \vDash S_2 \ilr \omega S_3 \in \u B'$, then
    $\Gamma\pipe \bullet : \u B \vDash S_1 \ilr i S_3 \in \u B'$.
  \end{enumerate}
\end{lemma}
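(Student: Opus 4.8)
The plan is to reduce all four parts to the already-established closed-term statement (the preceding ``Logical Relation is Quantitatively Transitive'' lemma for closed values and stacks), together with reflexivity of the logical relation, the Limit Lemma (\ref{lem:limit}), and the Module Lemma (\ref{lem:module}). Part (1) is immediate: $\gamma_1 \itylr i \Gamma \gamma_2$ and $\gamma_2 \itylr \omega \Gamma \gamma_3$ unfold to the pointwise facts $\gamma_1(x) \itylr i A \gamma_2(x)$ and $\gamma_2(x) \itylr \omega A \gamma_3(x)$ for each $x : A$ in $\Gamma$, so the closed value quantitative transitivity lemma gives $\gamma_1(x) \itylr i A \gamma_3(x)$ for every $x$, i.e.\ $\gamma_1 \itylr i \Gamma \gamma_3$.

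Part (3) is the template for parts (3) and (4). Given $\gamma_1 \itylr i \Gamma \gamma_3$, I first invoke reflexivity to obtain $\gamma_3 \itylr \omega \Gamma \gamma_3$. Applying the hypothesis $\Gamma \vDash V_1 \ilr i V_2 \in A$ to the pair $\gamma_1 \itylr i \Gamma \gamma_3$ yields $V_1[\gamma_1] \itylr i A V_2[\gamma_3]$, and applying $\Gamma \vDash V_2 \ilr \omega V_3 \in A$ to $\gamma_3 \itylr \omega \Gamma \gamma_3$ yields $V_2[\gamma_3] \itylr \omega A V_3[\gamma_3]$; the closed value quantitative transitivity lemma then gives $V_1[\gamma_1] \itylr i A V_3[\gamma_3]$, which is exactly what is required. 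Part (4) has the same shape: given $\gamma_1 \itylr i \Gamma \gamma_3$ and a continuation stack $S_1' \itylr i {\u B'} S_3'$, reflexivity supplies $\gamma_3 \itylr \omega \Gamma \gamma_3$ and $S_3' \itylr \omega {\u B'} S_3'$; the two hypotheses give $S_1'[S_1[\gamma_1]] \itylr i {\u B} S_3'[S_2[\gamma_3]]$ and $S_3'[S_2[\gamma_3]] \itylr \omega {\u B} S_3'[S_3[\gamma_3]]$, and the closed stack quantitative transitivity lemma composes these into $S_1'[S_1[\gamma_1]] \itylr i {\u B} S_3'[S_3[\gamma_3]]$.

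Part (2) is the only case whose conclusion lives in the term/result relation $\ix\apreorder{}$ rather than in $\itylr{}{}$, so there I finish with the Limit and Module Lemmas instead of closed quantitative transitivity. Given $\gamma_1 \itylr i \Gamma \gamma_3$ and $S_1 \itylr i {\u B} S_3$, reflexivity gives $\gamma_3 \itylr \omega \Gamma \gamma_3$ and $S_3 \itylr \omega {\u B} S_3$. The first hypothesis yields $S_1[M_1[\gamma_1]] \ix\apreorder i \result(S_3[M_2[\gamma_3]])$, and the second yields $S_3[M_2[\gamma_3]] \ix\apreorder \omega \result(S_3[M_3[\gamma_3]])$; by the Limit Lemma (\ref{lem:limit}) the latter is $\result(S_3[M_2[\gamma_3]]) \apreorder \result(S_3[M_3[\gamma_3]])$, and then the Module Lemma (\ref{lem:module}) upgrades $S_1[M_1[\gamma_1]] \ix\apreorder i \result(S_3[M_2[\gamma_3]])$ to $S_1[M_1[\gamma_1]] \ix\apreorder i \result(S_3[M_3[\gamma_3]])$, as needed.

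The argument is almost entirely bookkeeping; the one point demanding care — and essentially the only way it could go wrong — is correctly threading the interpolant through the \emph{heterogeneous} index structure of the open relations: the first hypothesis must be applied to a pair related at index $i$ whose right component is $\gamma_3$ (resp.\ $S_3$, $S_3'$), while the second is applied to a pair related at index $\omega$, which is exactly what reflexivity at arbitrary index provides. It is the asymmetric $i$-versus-$\omega$ form of the closed lemma, plus reflexivity holding at every index, that lets these two invocations meet at $(\gamma_1,\gamma_3)$ and compose.
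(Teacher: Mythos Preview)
Your proof is correct and follows essentially the same approach as the paper: each part reduces to the closed quantitative transitivity lemma (or, for terms, to the Limit and Module Lemmas) by instantiating the second hypothesis at the reflexive pair $(\gamma,\gamma)$ (and $(S,S)$) obtained from reflexivity at $\omega$. The only cosmetic difference is that you name the right-hand substitution $\gamma_3$ where the paper names it $\gamma_2$, and you spell out Part~(4) explicitly where the paper just says ``essentially the same as the value case.''
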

\begin{longproof}
  \begin{enumerate}
  \item By induction on the length of the context, follows from closed value case.
  \item Assume $\gamma_1 \itylr i \Gamma \gamma_2$ and $S_1 \itylr i {\u B} S_2$.
    We need to show
    \[ S_1[M_1[\gamma_1]] \ix\apreorder{i} \result(S_2[M_3[\gamma_2]]) \]
    by reflexivity and assumption, we know
    \[ S_2[M_2[\gamma_2]] \ix\apreorder \omega \result(S_2[M_3[\gamma_2]])\]
    and by limit lemma \ref{lem:limit}, this is equivalent to
    \[ \result(S_2[M_2[\gamma_2]]) \apreorder \result(S_2[M_3[\gamma_2]])\]
    so by the module lemma \ref{lem:module} it is sufficient to show
    \[ S_1[M_1[\gamma_1]] \ix\apreorder{i} \result(S_2[M_2[\gamma_2]]) \]
    which follows by assumption.
  \item Assume $\gamma_1 \itylr i \Gamma \gamma_2$.  Then
    $V_1[\gamma_1] \itylr i A V_2[\gamma_2]$ and by reflexivity
    $\gamma_2 \itylr \omega \Gamma \gamma_2$ so $V_2[\gamma_2] \itylr
    \omega A V_3[\gamma_2]$ so the result holds by the closed case.
  \item Stack case is essentially the same as the value case.
  \end{enumerate}
\end{longproof}
\fi
\begin{corollary}[Logical Relation is Transitive in the Limit]
  \begin{shortonly}
    $\ilrof\apreorder \omega$ is transitive.
  \end{shortonly}
  \begin{longonly}
    \hfill
  \begin{enumerate}
  \item If $\Gamma \vDash M_1 \ilrof\apreorder \omega M_2 \in \u B$ and
    $\Gamma \vDash M_2 \ilrof\apreorder \omega M_3 \in \u B$, then
    $\Gamma \vDash M_1 \ilrof\apreorder \omega M_3 \in \u B$.
  \item If $\Gamma \vDash V_1 \ilrof\apreorder \omega V_2 \in A$ and
    $\Gamma \vDash V_2 \ilrof\apreorder \omega V_3 \in A$, then
    $\Gamma \vDash V_1 \ilrof\apreorder \omega V_3 \in A$.
  \item If $\Gamma \pipe \bullet : \u B \vDash S_1 \ilrof\apreorder \omega S_2 \in \u B'$ and
    $\Gamma\pipe \bullet : \u B \vDash S_2 \ilrof\apreorder \omega S_3 \in \u B'$, then
    $\Gamma\pipe \bullet : \u B \vDash S_1 \ilrof\apreorder \omega S_3 \in \u B'$.
  \end{enumerate}
  \end{longonly}
\end{corollary}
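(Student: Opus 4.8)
The plan is to derive this corollary directly from the open-terms version of the ``Logical Relation is Quantitatively Transitive'' lemma, which is the only substantive ingredient. The structural fact we exploit is the deliberate \emph{asymmetry} of that lemma: it combines a relation at a finite index $i$ on the left with a relation at the limit index $\omega$ on the right, and concludes a relation at index $i$. That asymmetry is precisely what lets us bootstrap a genuinely limiting (intersection-over-all-$i$) transitivity statement, rather than having to prove each finite-index $\ilrof\apreorder i$ transitive on the nose (which would fail).

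First I would unfold $\ilrof\apreorder\omega = \bigcap_{i \in \mathbb{N}} \ilrof\apreorder i$, so that proving $\Gamma \vDash M_1 \ilrof\apreorder\omega M_3 \in \u B$ reduces to proving $\Gamma \vDash M_1 \ilrof\apreorder i M_3 \in \u B$ for an arbitrary but fixed $i$. From the first hypothesis $\Gamma \vDash M_1 \ilrof\apreorder\omega M_2 \in \u B$ I specialize the intersection to this $i$ to obtain the finite-index fact $\Gamma \vDash M_1 \ilrof\apreorder i M_2 \in \u B$. The second hypothesis $\Gamma \vDash M_2 \ilrof\apreorder\omega M_3 \in \u B$ already carries the limit index needed on the right. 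Part (2) of the open-terms quantitative transitivity lemma then applies verbatim and yields $\Gamma \vDash M_1 \ilrof\apreorder i M_3 \in \u B$; since $i$ was arbitrary, we conclude the term case.

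The value case (part 2) and stack case (part 3) go through by the identical pattern, invoking parts (3) and (4) of the open-terms quantitative transitivity lemma respectively. I do not expect any real obstacle: all the genuine work --- the lexicographic induction on $(i,A)$/$(i,\u B)$ and the handling of the shift types $U$ and $\u F$ via reflexivity together with the module lemma (\ref{lem:module}) and the limit lemma (\ref{lem:limit}) --- has already been carried out in establishing quantitative transitivity. The one point to check carefully is that the hypotheses are lined up with the correct orientation (finite index on the left, $\omega$ on the right) before each appeal to the quantitative lemma; getting that orientation right is exactly what makes the detour through $\omega$ on one side essential.
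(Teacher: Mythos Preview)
Your proposal is correct and matches the paper's intended derivation: the paper states this as an immediate corollary of the open-terms quantitative transitivity lemma without giving an explicit proof, and your argument---specializing the first hypothesis to an arbitrary $i$ while keeping the second at $\omega$, then applying the quantitative lemma---is exactly the natural reading.
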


\iflong
Next, we verify the $\beta, \eta$ equivalences hold as orderings each
way.
\begin{lemma}[$\beta, \eta$]
  For any divergence preorder, the $\beta, \eta$
  laws are valid for $\ilrof\apreorder \omega$
\end{lemma}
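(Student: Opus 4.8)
I will prove the lemma by going through each $\beta$ and each $\eta$ axiom of Figure~\ref{fig:gtt-term-dyn-axioms} (together with the $\mu,\nu$ rules of \cbpvstar) and establishing $\ilrof\apreorder\omega$ in \emph{both} directions, since the preceding paragraph asks for the equivalences "as orderings each way." The two workhorses are the fundamental lemma (reflexivity of $\ilrof\apreorder i$, already proved) and anti-reduction; the limit lemma and the module lemma let me move between the step-indexed relation $\ix\apreorder i$ on programs-and-results and the preorder $\apreorder$ itself. Throughout, "valid for $\ilrof\apreorder\omega$" unfolds to: for every closing substitution pair $\gamma_1 \itylrof\apreorder i \Gamma \gamma_2$ and every related pair of stacks of the appropriate type, plugging in the two sides yields related entries of $\ix\apreorder i$, for all $i$.

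\textbf{The $\beta$ rules.} Each $\beta$ axiom is an operational reduction: under any closing substitution and inside any stack, the left side steps to the right side in zero steps ($+,\times,1,U,F,\to,\with$) or one step ($\mu,\nu$). Since $\result(\cdot)$ is invariant under reduction, and $\ix\apreorder i$ is closed under anti-reduction while the $\omega$ index absorbs the finite cost of the $\mu,\nu$ steps, both directions of $\ilrof\apreorder\omega$ follow from reflexivity of the common reduct. (One mild point to check is that the CBPV reductions are stable under substituting related values for variables and under composition with an outer stack, so that the step really happens after $\gamma_i$ and $S_i$ are applied; this is immediate from the form of the step relation.)

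\textbf{The $\eta$ rules.} I split by polarity. For a \emph{positive} type $\eta$ law $E \equidyn \caseofXthenYelseZ x {x_1. E[\inl x_1/x]}{x_2.E[\inr x_2/x]}$ (and the analogous $0,1,\times,\mu$ laws), the definition of the value relation $\itylrof\apreorder i A$ forces $\gamma_1(x)$ and $\gamma_2(x)$ to be built with the same head constructor; at type $0$ the relation is empty, so the statement is vacuous, and at $1$ both are $()$. Hence on the case-analysis side the scrutinee after substitution already exposes its constructor, so that side $\beta$-reduces (inside the ambient stack) to exactly $E[\gamma_j]$, and both orderings reduce to reflexivity of $E$ plus anti-reduction, just as in the $\beta$ cases. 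Dually, for a \emph{negative} type $\eta$ law ($\top,\with,\to,F,\nu$), the definition of the stack relation forces a related pair of stacks at the relevant computation type to be of the form $S'[\pi\bullet]$/$S'[\pi'\bullet]$, $S'[\bullet\,V]$, $S'[\bindXtoYinZ\bullet x N]$, or $S'[\unroll\bullet]$, with the components related (and at $\top$ there are no related stacks, so the statement is vacuous). Plugging either side of the $\eta$ law into such a frame reduces (in zero steps, or one for $\nu$) to the other plugged in, so again reflexivity and anti-reduction finish both directions. The $U$-$\eta$ law $x \equidyn \thunk\force x$ is the easiest: inside any forcing context $S[\force(-)]$ the left side takes one $\beta$-step to $S[\force V]$, and we invoke the hypothesis $V_1 \itylrof\apreorder j {U\u B} V_2$; the $F$-$\eta$ law is the stack-level version of this, using that a related pair of stacks at $\u F A$ need only be tested on $\ret$ of related values.

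\textbf{Main obstacle.} There is no deep difficulty, but the care needed is concentrated in the $\eta$ cases, where one must (i) appeal to the exact shape of the value/stack relations from the definition of the logical relation to decompose the related substitution value (for positive $\eta$) or the related outer stack (for negative $\eta$), and (ii) recognize that because we want \emph{both} orderings and operational reduction preserves $\result$, anti-reduction alone suffices and no separate "forward reduction" lemma is required. The recursive-type $\eta$s ($\mu,\nu$) spend one unrolling step, which is harmless at index $\omega$; tracking these indices, and checking that reduction commutes with value-substitution and stack-plugging, is the only bookkeeping. Everything here is uniform in the divergence preorder $\apreorder$, so — unlike the separate $\err\ltdyn M$ rule — the lemma holds for every divergence preorder.
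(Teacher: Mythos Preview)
Your proposal is correct and follows essentially the same approach as the paper: both arguments reduce every $\beta$ case to anti-reduction plus reflexivity, and every $\eta$ case to first using the shape of the closed value relation (positive types) or closed stack relation (negative types) to expose a redex, then again applying anti-reduction and reflexivity, with the $\mu/\nu$ cases spending one indexed step absorbed by downward closure. The only minor slip is listing $\u F$ among the connectives whose stack relation ``forces a specific form $S'[\bindXtoYinZ\bullet x N]$''---the $\u F A$ clause is defined by orthogonality, not structurally---but your subsequent sentence about $F\eta$ (``tested on $\ret$ of related values'') uses the correct definition, so the argument goes through.
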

\begin{longproof}
  The $\beta$ rules for all cases except recursive types are direct
  from anti-reduction.
  \begin{enumerate}
  \item $\mu X.A-\beta$:
    \begin{enumerate}
    \item We need to show
      \[ S_1[\pmmuXtoYinZ {\rollty{\mu X.A} V[\gamma_1]} x M[\gamma_1]] \ilr i \result(S_2[M[\gamma_2,V[\gamma_2]/x]]) \]
      The left side takes $1$ step to $S_1[M[\gamma_1,V[\gamma_1]/x]]$ and we know
      \[ S_1[M[\gamma_1,V[\gamma_1]/x]] \ilr i \result (S_2[M[\gamma_2,V[\gamma_2]/x]])  \]
      by assumption and reflexivity, so by anti-reduction we have
      \[ S_1[\pmmuXtoYinZ {\rollty{\mu X.A} V[\gamma_1]} x M[\gamma_1]] \ilr {i+1} \result(S_2[M[\gamma_2,V[\gamma_2]/x]]) \]
      so the result follows by downward-closure.
      
    \item For the other direction we need to show
      \[ S_1[M[\gamma_1,V[\gamma_1]/x]] \ilr i \result(S_2[\pmmuXtoYinZ {\rollty{\mu X.A} V[\gamma_2]} x M[\gamma_2]]) \]
      Since results are invariant under steps, this is the same as
      \[ S_1[M[\gamma_1,V[\gamma_1]/x]] \ilr i \result(S_2[M[\gamma_2,V[\gamma_2/x]]]) \]
      which follows by reflexivity and assumptions about the stacks
      and substitutions.
    \end{enumerate}
  \item $\mu X.A-\eta$:
    \begin{enumerate}
    \item We need to show for any $\Gamma, x : \mu X. A \vdash M : \u B$,
      and appropriate substitutions and stacks,
      \[ S_1[\pmmuXtoYinZ {\rollty{\mu X.A} {\gamma_1(x)}} {y}  M[\rollty{\mu X.A}y/x][\gamma_1]] \ilr i \result(S_2[M[\gamma_2]]) \]
      By assumption, $\gamma_1(x) \itylr i {\mu X.A} \gamma_2(x)$, so we know
      \[ \gamma_1(x) = \rollty{\mu X.A} V_1 \]
      and
      \[ \gamma_2(x) = \rollty{\mu X.A} V_2 \]
      so the left side takes a step:
      \begin{align*}
        S_1[\pmmuXtoYinZ {\roll {\gamma_1(x)}} {y}  M[\roll y/x][\gamma_1]]
        &\bigstepsin{1} S_1[M[\roll y/x][\gamma_1][V_1/y]]\\
        &= S_1[M[\roll V_1/x][\gamma_1]]\\
        & = S_1[M[\gamma_1]]
      \end{align*}
      and by reflexivity and assumptions we know
      \[ S_1[M[\gamma_1]] \ilr {i} \result(S_2[M[\gamma_2]]) \]
      so by anti-reduction we know 
      \[ S_1[\pmmuXtoYinZ {\rollty{\mu X.A} {\gamma_1(x)}} {y}  M[\rollty{\mu X.A}y/x][\gamma_1]] \ilr {i+1} \result(S_2[M[\gamma_2]]) \]
      so the result follows by downward closure.
    \item Similarly, to show
      \[ S_1[M[\gamma_1]] \ilr i \result(S_2[\pmmuXtoYinZ {\rollty{\mu X.A} {\gamma_2(x)}} {y}  M[\rollty{\mu X.A}y/x][\gamma_2]]) \]
      by the same reasoning as above, $\gamma_2(x) = \rollty{\mu X.A}V_2$, so because result is invariant under reduction we need to show
      \[ S_1[M[\gamma_1]] \ilr i \result(S_2[M[\gamma_2]]) \]
      which follows by assumption and reflexivity.
    \end{enumerate}
  \item $\nu \u Y. \u B-\beta$
    \begin{enumerate}
    \item We need to show
      \[ S_1[\unroll \rollty{\nu \u Y. \u B} M[\gamma_1]] \ix\apreorder i
      \result(S_2[M[\gamma_2]]) \]
      By the operational semantics,
      \[ S_1[\unroll \rollty{\nu \u Y. \u B} M[\gamma_1]] \bigstepsin{1} S_1[M[\gamma_1]] \]
      and by reflexivity and assumptions
      \[ S_1[M[\gamma_1]] \ix\apreorder {i} S_2[M[\gamma_2]] \]
      so the result follows by anti-reduction and downward closure.
    \item We need to show
      \[ S_1[M[\gamma_1]] \ix\apreorder i \result(S_2[\unroll \rollty{\nu \u Y. \u B} M[\gamma_2]]) \]
      By the operational semantics and invariance of result under reduction this is equivalent to
      \[ S_1[M[\gamma_1]] \ix\apreorder i \result(S_2[M[\gamma_2]]) \]
      which follows by assumption.
    \end{enumerate}
  \item $\nu \u Y. \u B-\eta$
    \begin{enumerate}
    \item We need to show
      \[ S_1[\roll \unroll M[\gamma_1]] \ix\apreorder i \result(S_2[M[\gamma_2]]) \]
      by assumption, $S_1 \itylr i {\nu \u Y.\u B} S_2$, so
      \[ S_1 = S_1'[\unroll \bullet] \]
      and therefore the left side reduces:
      \begin{align*}
         S_1[\roll \unroll M[\gamma_1]]
         &= S_1'[\unroll\roll\unroll M[\gamma_1]]\\
         &\bigstepsin{1} S_1'[\unroll M[\gamma_1]]\\
         &= S_1[M[\gamma_1]]
      \end{align*}
      and by assumption and reflexivity,
      \[ S_1[M[\gamma_1]] \ix\apreorder i \result(S_2[M[\gamma_2]]) \]
      so the result holds by anti-reduction and downward-closure.
    \item Similarly, we need to show
      \[ S_1[M[\gamma_1]] \ix\apreorder i \result(S_2[\roll\unroll M[\gamma_2]])\]
      as above, $S_1 \itylr i {\nu \u Y.\u B} S_2$, so we know
      \[ S_2 = S_2'[\unroll\bullet] \]
      so
      \[ \result(S_2[\roll\unroll M[\gamma_2]]) = \result(S_2[M[\gamma_2]])\]
      and the result follows by reflexivity, anti-reduction and downward closure.
    \end{enumerate}
  \item $0\eta$ Let $\Gamma, x : 0 \vdash M : \u B$.
    \begin{enumerate}
    \item We need to show
      \[ S_1[\absurd \gamma_1(x)] \ix\apreorder i \result(S_2[M[\gamma_2]])\]
      By assumption $\gamma_1(x) \itylr i 0 \gamma_2(x)$ but this is a contradiction
    \item Other direction is the same contradiction.
    \end{enumerate}
  \item $+\eta$. Let $\Gamma , x:A_1 + A_2 \vdash M : \u B$
    \begin{enumerate}
    \item We need to show
      \[ S_1[\caseofXthenYelseZ {\gamma_1(x)} {x_1. M[\inl x_1/x][\gamma_1]}{x_2. M[\inr x_2/x][\gamma_1]}]
      \ix\apreorder i \result(S_2[M[\gamma_2]]) \] by assumption
      $\gamma_1(x) \itylr i {A_1 + A_2} \gamma_2(x)$, so either it's
      an $\inl$ or $inr$. The cases are symmetric so assume
      $\gamma_1(x) = \inl V_1$.
      Then
      \begin{align*}
         S_1[\caseofXthenYelseZ {\gamma_1(x)} {x_1. M[\inl x_1/x][\gamma_1]}{x_2. M[\inr x_2/x][\gamma_1]}]\\
         =S_1[\caseofXthenYelseZ {(\inl V_1)} {x_1. M[\inl x_1/x][\gamma_1]}{x_2. M[\inr x_2/x][\gamma_1]}]\\
         \bigstepsin{0} S_1[M[\inl V_1/x][\gamma_1]]\\
         = S_1[M[\gamma_1]]
      \end{align*}
      and so by anti-reduction it is sufficient to show
      \[ S_1[M[\gamma_1]] \ix\apreorder i S_2[M[\gamma_2]]\]
      which follows by reflexivity and assumptions.
    \item Similarly, We need to show
      \[
      \result(S_1[M[\gamma_1]])
      \ix\apreorder i
      \result(S_2[\caseofXthenYelseZ {\gamma_2(x)} {x_1. M[\inl x_1/x][\gamma_2]}{x_2. M[\inr x_2/x][\gamma_2]}])
      \]
      and by assumption $\gamma_1(x) \itylr i {A_1 + A_2}
      \gamma_2(x)$, so either it's an $\inl$ or $inr$. The cases are
      symmetric so assume $\gamma_2(x) = \inl V_2$.
      Then
      \[ S_2[\caseofXthenYelseZ {\gamma_2(x)} {x_1. M[\inl x_1/x][\gamma_2]}{x_2. M[\inr x_2/x][\gamma_2]}] \bigstepsin{0}
      S_2[M[\gamma_2]]
      \]
      So the result holds by invariance of result under reduction,
      reflexivity and assumptions.
    \end{enumerate}
  \item $1\eta$ Let $\Gamma, x : 1 \vdash M : \u B$
    \begin{enumerate}
    \item We need to show
      \[ S_1[M[()/x][\gamma_1]] \ix\apreorder i \result(S_2[M[\gamma_2]])\]
      By assumption $\gamma_1(x) \itylr i 1 \gamma_2(x)$ so $\gamma_1(x) = ()$, so this is equivalent to 
      \[ S_1[M[\gamma_1]] \ix\apreorder i \result(S_2[M[\gamma_2]])\]
      which follows by reflexivity, assumption.
    \item Opposite case is similar.
    \end{enumerate}
  \item $\times\eta$ Let $\Gamma, x : A_1\times A_2 \vdash M : \u B$
    \begin{enumerate}
    \item We need to show
      \[ S_1[\pmpairWtoXYinZ x {x_1}{y_1} M[(x_1,y_1)/x][\gamma_1]] \ix\apreorder i \result(S_2[M[\gamma_2]]) \]
      By assumption $\gamma_1(x) \itylr i {A_1\times A_2} \gamma_2(x)$, so $\gamma_1(x) = (V_1,V_2)$, so
      \begin{align*}
        S_1[\pmpairWtoXYinZ x {x_1}{y_1} M[(x_1,y_1)/x][\gamma_1]]
        &= S_1[\pmpairWtoXYinZ {(V_1,V_2)} {x_1}{y_1} M[(x_1,y_1)/x][\gamma_1]]\\
        &\bigstepsin{0} S_1[M[(V_1,V_2)/x][\gamma_1]]\\
        &= S_1[M[\gamma_1]]
      \end{align*}
      So by anti-reduction it is sufficient to show
      \[ S_1[M[\gamma_1]] \ix\apreorder i \result(S_2[M[\gamma_2]]) \]
      which follows by reflexivity, assumption.
    \item Opposite case is similar.
    \end{enumerate}
  \item $U\eta$ Let $\Gamma \vdash V : U \u B$
    \begin{enumerate}
    \item We need to show that
      \[ \thunk\force V[\gamma_1] \itylr i {U \u B} V[\gamma_2] \]
      So assume $S_1 \itylr j {\u B} S_2$ for some $j\leq i$, then we need to show
      \[ S_1[\force \thunk\force V[\gamma_1]] \ix\apreorder j \result(S_2[\force V[\gamma_2]])\]
      The left side takes a step:
      \[ S_1[\force \thunk\force V[\gamma_1]] \bigstepsin{0} S_1[\force V[\gamma_1]] \]
      so by anti-reduction it is sufficient to show
      \[ S_1[\force V[\gamma_1]] \ix\apreorder j \result(S_2[\force V[\gamma_2]]) \]
      which follows by assumption.
    \item Opposite case is similar.
    \end{enumerate}
  \item $F\eta$
    \begin{enumerate}
    \item We need to show that given $S_1 \itylr i {\u F A} S_2$,
      \[ S_1[\bindXtoYinZ \bullet x \ret x] \itylr i {\u F A} S_2 \]
      So assume $V_1 \itylr j A V_2$ for some $j\leq i$, then we need to show
      \[ S_1[\bindXtoYinZ \bullet {\ret V_1} \ret x] \ix\apreorder j \result(S_2[\ret V_2])
      \]
      The left side takes a step:
      \[ S_1[\bindXtoYinZ \bullet {\ret V_1} \ret x] \bigstepsin{0} S_1[\ret V_1]\]
      so by anti-reduction it is sufficient to show
      \[ S_1[\ret V_1] \ix\apreorder j \result(S_2[\ret V_2])\]
      which follows by assumption
    \item Opposite case is similar.
    \end{enumerate}
  \item $\to\eta$ Let $\Gamma \vdash M : A \to \u B$
    \begin{enumerate}
    \item We need to show
      \[ S_1[(\lambda x:A. M[\gamma_1]\, x)] \ix\apreorder i \result(S_2[M[\gamma_2]])
      \]
      by assumption that $S_1 \itylr i {A \to \u B} S_2$, we know
      \[ S_1 = S_1'[\bullet\, V_1]\]
      so the left side takes a step:
      \begin{align*}
         S_1[(\lambda x:A. M[\gamma_1]\, x)]
         &= S_1'[(\lambda x:A. M[\gamma_1]\, x)\, V_1]\\
         &\bigstepsin{0} S_1'[M[\gamma_1]\, V_1]\\
         &= S_1[M[\gamma_1]]
      \end{align*}
      So by anti-reduction it is sufficient to show
      \[ S_1[M[\gamma_1]] \ix\apreorder i \result(S_2[M[\gamma_2]])\]
      which follows by reflexivity, assumption.
    \item Opposite case is similar.
    \end{enumerate}
  \item $\with\eta$ Let $\Gamma \vdash M : \u B_1 \with \u B_2$
    \begin{enumerate}
    \item We need to show
      \[ S_1[\pair{\pi M[\gamma_1]}{\pi' M[\gamma_1]}] \ix\apreorder i \result(S_1[M[\gamma_2]]) \]
      by assumption, $S_1 \itylr i {\u B_1 \with \u B_2} S_2$ so
      either it starts with a $\pi$ or $\pi'$ so assume that $S_1 =
      S_1'[\pi \bullet]$ ($\pi'$ case is similar).
      Then the left side reduces
      \begin{align*}
        S_1[\pair{\pi M[\gamma_1]}{\pi' M[\gamma_1]}]
        &= S_1'[\pi\pair{\pi M[\gamma_1]}{\pi' M[\gamma_1]}]\\
        &\bigstepsin{0} S_1'[\pi M[\gamma_1]]\\
        &= S_1[M[\gamma_1]]
      \end{align*}
      So by anti-reduction it is sufficient to show
      \[ S_1[M[\gamma_1]] \ix\apreorder i \result(S_2[M[\gamma_2]]) \]
      which follows by reflexivity, assumption.
    \item Opposite case is similar.
    \end{enumerate}
  \item $\top\eta$ Let $\Gamma \vdash M : \top$
    \begin{enumerate}
    \item In either case, we assume we are given $S_1 \itylr i \top
      S_2$, but this is a contradiction.
    \end{enumerate}
  \end{enumerate}
\end{longproof}

\begin{lemma}[Substitution Principles]
  For any diverge-bottom preorder $\apreorder$, the following are
  valid
  \begin{enumerate}
  \item $\inferrule{\Gamma \vDash V_1 \ilr i V_2 \in A
    \and \Gamma, x : A \vDash V_1' \ilr V_2' \in A'}{\Gamma \vDash V_1'[V_1/x] \ilr V_2'[V_2/x] \in A'}$
  \item $\inferrule{\Gamma \vDash V_1 \ilr i V_2 \in A
    \and \Gamma, x : A \vDash M_1 \ilr M_2 \in \u B}{\Gamma \vDash M_1[V_1/x] \ilr M_2[V_2/x] \in \u B}$
  \end{enumerate}
\end{lemma}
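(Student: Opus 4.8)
The plan is to reduce the statement to a mechanical unfolding of the definition of the open logical preorder, using only the syntactic substitution lemma and the fact that relatedness of closing substitutions is defined pointwise. Since $\ilrof\apreorder\omega = \bigcap_{i\in\nats}\ilrof\apreorder i$ and nothing in the argument will require moving the index, I would fix an arbitrary $i\in\nats$ and prove each principle with all three judgements read at that same $i$; the $\omega$-version (the form actually needed to validate \textsc{TmDynValSubst}) then follows by intersecting over $i$.

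For part (1), I would start from an arbitrary pair of related closing substitutions $\gamma_1 \itylr i {\Gamma} \gamma_2$, and by unfolding $\Gamma \vDash V_1'[V_1/x] \ilr i V_2'[V_2/x] \in A'$ reduce the goal to $(V_1'[V_1/x])[\gamma_1] \itylr i {A'} (V_2'[V_2/x])[\gamma_2]$. The first ingredient is the standard syntactic substitution lemma: since $x$ is chosen fresh for $\gamma_1$ and $\gamma_1$ is closing, $(V_1'[V_1/x])[\gamma_1] = V_1'[\gamma_1, V_1[\gamma_1]/x]$, and dually on the right. The second ingredient is that the relation on closing substitutions is pointwise, so instantiating the hypothesis $\Gamma \vDash V_1 \ilr i V_2 \in A$ at $\gamma_1 \itylr i {\Gamma} \gamma_2$ yields $V_1[\gamma_1] \itylr i A V_2[\gamma_2]$, hence $(\gamma_1, V_1[\gamma_1]/x) \itylr i {\Gamma,x:A} (\gamma_2, V_2[\gamma_2]/x)$. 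Feeding this extended pair into the hypothesis $\Gamma, x:A \vDash V_1' \ilr i V_2' \in A'$ gives $V_1'[\gamma_1, V_1[\gamma_1]/x] \itylr i {A'} V_2'[\gamma_2, V_2[\gamma_2]/x]$, which by the substitution lemma is exactly the desired conclusion.

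Part (2) is the identical argument carried through one more layer of the definition: after fixing $\gamma_1 \itylr i {\Gamma} \gamma_2$ and an arbitrary stack $S_1 \itylr i {\u B} S_2$, the goal becomes $S_1[(M_1[V_1/x])[\gamma_1]] \ix\apreorder i \result(S_2[(M_2[V_2/x])[\gamma_2]])$. Rewriting $(M_1[V_1/x])[\gamma_1] = M_1[\gamma_1, V_1[\gamma_1]/x]$ (and dually) by the substitution lemma, this is precisely an instance of the hypothesis $\Gamma, x:A \vDash M_1 \ilr i M_2 \in \u B$ applied to the same extended substitution pair $(\gamma_1, V_1[\gamma_1]/x) \itylr i {\Gamma,x:A} (\gamma_2, V_2[\gamma_2]/x)$ constructed in part (1), together with $S_1 \itylr i {\u B} S_2$. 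No anti-reduction, no step-index decrement, and no orthogonality reasoning is needed.

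There is essentially no hard part in terms of logical-relations machinery; the only things to be careful about are bookkeeping. First, one must check that the index genuinely need not move, so that the two hypotheses and the conclusion can all be taken at the common $i$ and the $\omega$-statement then drops out for free. Second, the commuting-substitutions identity $(E[V/x])[\gamma] = E[\gamma, V[\gamma]/x]$ relies on the usual freshness convention for $x$ relative to $\gamma$ and is the one genuinely syntactic input; everything else is just unfolding the pointwise definition of relatedness of closing substitutions.
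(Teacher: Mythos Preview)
Your proposal is correct and takes essentially the same approach as the paper: unfold the open logical preorder, use the first hypothesis to extend the related closing substitutions pointwise, rewrite via associativity of substitution, and instantiate the second hypothesis. The paper presents only the term case and notes the value case is similar, whereas you spell out both, but the argument is identical.
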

\begin{longproof}
  We do the term case, the value case is similar.  Given $\gamma_1
  \itylr i \Gamma \gamma_2$, we have $V_1[\gamma_1] \itylr i A
  V_2[\gamma_2]$ so
  \[ \gamma_1,V_1[\gamma_1]/x \itylr i {\Gamma, x : A} \gamma_2, V_2[\gamma_2]/x \]
  and by associativity of substitution
  \[ M_1[V_1/x][\gamma_1] = M_1[\gamma_1,V_1[\gamma_1]/x] \]
  and similarly for $M_2$, so if $S_1 \itylr i {\u B} S_2$ then
  \[ S_1[M_1[\gamma_1,V_1[\gamma_1]/x]] \ix\apreorder i \result(S_2[M_2[\gamma_2,V_2[\gamma_2]/x]])\]
\end{longproof}
\fi
For errors, the strictness axioms hold for any $\apreorder$, but the axiom that
$\err$ is a least element is specific to the definitions of
$\precltdyn, \ltdyn\succeq$
\begin{lemma}[Error Rules]
  For any divergence preorder $\apreorder$ and appropriately
  typed $S, M$,
  \begin{small}
  \begin{mathpar}
    S[\err] \ilr \omega \err \and
    \err \ilr \omega S[\err] \and
    \err \ilrof\precltdyn \omega M \and
    M \ilrof{\errordivergerightop} \omega \err
  \end{mathpar}    
  \end{small}
\end{lemma}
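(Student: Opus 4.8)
The plan is to reduce all four statements to one operational fact — strictness of stacks in errors, $S[\err]\stepzero\err$ — together with reflexivity of $\apreorder$ and the already-established basic properties of the finitized preorder (triviality at index $0$, the module property, and reflexivity of the logical relation, i.e.\ the unary LR corollary). Each claim is an $\omega$-indexed judgement, so I fix an arbitrary index $i$; when $i=0$ all four hold by triviality at $0$, so I may assume $i\geq 1$ throughout.

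\emph{The two strictness cases} hold uniformly for any divergence preorder $\apreorder$. For $S[\err]\ilr\omega\err$, fix related closing substitutions $\gamma_1\itylrof\apreorder i \Gamma \gamma_2$ and related observation stacks $S_1\itylrof\apreorder i {\u B'} S_2$. Since $\err_{\u B}$ has no free term variables, $(S[\err])[\gamma_1]=(S[\gamma_1])[\err]$, and $S_1[(S[\gamma_1])[\err]]$ is syntactically a stack applied to $\err$; hence it takes a single zero-cost step to $\err$, so $S_1[(S[\err])[\gamma_1]]\bigstepsin 0 \err$. By the same reasoning $S_2[\err[\gamma_2]]\bigstepsin 0 \err$, so $\result(S_2[\err[\gamma_2]])=\err$. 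Instantiating the second disjunct of the definition of $\ix\apreorder i$ with $j=0<i$, $R_M=\err$, and $\err\apreorder\err$ (reflexivity) gives $S_1[(S[\err])[\gamma_1]]\ix\apreorder i \err$, as needed. The reverse direction $\err\ilr\omega S[\err]$ is the mirror image: $S_1[\err]\bigstepsin 0 \err$, $\result(S_2[(S[\err])[\gamma_2]])=\err$ by the same computation, and $\err\apreorder\err$.

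\emph{The two minimality cases} additionally invoke that $\err$ is the least element of the relevant result preorders. For $\err\ilrof\precltdyn\omega M$: with $\gamma_1,\gamma_2,S_1,S_2$ as above, $S_1[\err]\bigstepsin 0 \err$, and since $\err\precltdyn R$ for every result $R$ we have $\err\precltdyn\result(S_2[M[\gamma_2]])$, so again the $j=0$ disjunct gives $S_1[\err]\ix\apreorder i \result(S_2[M[\gamma_2]])$ (taking $\apreorder=\precltdyn$). For $M\ilrof{\errordivergerightop}\omega\err$ the error is on the right, so $\result(S_2[\err[\gamma_2]])=\err$ and I must show $S_1[M[\gamma_1]]\ix\apreorder i \err$ with $\apreorder=\errordivergerightop$. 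Here $S_1[M[\gamma_1]]$ is a closed program of type $\u F(1+1)$, and since $\errordivergerightop$ is a divergence preorder, reflexivity of its logical relation (the unary LR) gives $S_1[M[\gamma_1]]\ix\apreorder i \result(S_1[M[\gamma_1]])$; moreover $\result(S_1[M[\gamma_1]])\errordivergerightop\err$ because $\err$ is minimal in $\errordivergeright$, so the module property of $\ix\apreorder i$ over $\apreorder$ yields $S_1[M[\gamma_1]]\ix\apreorder i \err$.

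There is no substantive obstacle here: the mathematical content is just the operational strictness rule plus the bottom-ness of $\err$, and the paper's own prose already notes that ``the strictness axioms hold for any $\apreorder$.'' The only points needing a little care are the index bookkeeping (dispatching $i=0$ by triviality, and observing that $S[\err]\stepzero\err$ is a zero-cost step so it lands in the $j<i$ disjunct for every $i\geq 1$) and correctly matching each claim to the fact it uses: the first two with reflexivity of $\apreorder$, the third with $\err$ least in $\precltdyn$, and the fourth with $\err$ least in $\errordivergeright$ plus $\errordivergerightop$ being a divergence preorder (so the unary LR applies). I would organize the write-up as the two strictness cases first, then the two minimality cases.
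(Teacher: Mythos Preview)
Your proof is correct and rests on the same facts as the paper's: operational strictness $S[\err]\stepzero\err$, reflexivity of $\apreorder$, and minimality of $\err$ in $\precltdyn$ and $\errordivergerightop$. The only difference is packaging: the paper invokes the Limit Lemma (Corollary~\ref{lem:limit}) to reduce each case directly to a statement about results (e.g.\ $\result(S[\err])\apreorder\err$), whereas you work index-by-index and, in the fourth case, inline the Limit Lemma's proof by hand (unary LR followed by the module property). Both are fine; the Limit Lemma just makes the write-up shorter.
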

\begin{longproof}
  \begin{enumerate}
  \item It is sufficient by the limit lemma to show $\result(S[\err])
    \apreorder \err$ which holds by reflexivity because $S[\err]
    \bigstepzero \err$.
  \item We need to show $S[\err] \ix\precltdyn i R$ for arbitrary $R$,
    so by the limit lemma it is sufficient to show $\err \precltdyn
    R$, which is true by definition.
  \item By the limit lemma it is sufficient to show $R
    \mathrel{\errordivergerightop} \err$ which is true by definition.
  \end{enumerate}
\end{longproof}

The lemmas we have proved cover all of the inequality rules of CBPV, so
applying them with $\apreorder$ chosen to be $\errordivergeleft$ and
$\errordivergerightop$ gives
\begin{lemma}[$\precltdyn$ and $\ltdynsucc$ are Models of CBPV] \label{lem:errordivergeleftrightopmodels}
  If $\Gamma \pipe \Delta \vdash E \ltdyn E' : \u B$ then 
  $\Gamma \pipe \Delta \vDash E \ix\precltdyn \omega E' \in \u B$ and
  $\Gamma \pipe \Delta \vDash E' \ix{\mathrel{\preceq\gtdyn}} \omega E \in \u B$.
\end{lemma}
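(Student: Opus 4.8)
The plan is to proceed by induction on the derivation of $\Gamma \pipe \Delta \vdash E \ltdyn E' : \u B$ in the CBPV inequational theory, reading this judgement as the two goals $\Gamma \pipe \Delta \vDash E \ilrof{\precltdyn}\omega E' \in \u B$ and $\Gamma \pipe \Delta \vDash E' \ilrof{\errordivergerightop}\omega E \in \u B$ --- that is, instantiating the parametrized logical-relation machinery of this subsection at the two concrete divergence preorders $\precltdyn = \errordivergeleft$ and $\errordivergerightop$, interpreting $\ltdyn$ as $\ilrof{\precltdyn}\omega$ in the first case and as the \emph{opposite} of $\ilrof{\errordivergerightop}\omega$ in the second. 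Both of these orderings on results are of the form ${-}\vee{\preceq}$ and hence are divergence preorders, so every lemma proved above about $\ilrof\apreorder\omega$ applies verbatim with $\apreorder$ taken to be either of them. The body of the proof is then a case analysis over the rules of the inequational theory, each case discharged by quoting the appropriate already-proved fact.

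First I would dispatch the structural rules: reflexivity is the reflexivity corollary; transitivity follows because $\ilrof\apreorder\omega$ is transitive in the limit, applied to the two inductive hypotheses (which for the second goal are already in dualized form); and the congruence rules follow from the theorem that $\ilrof\apreorder i$ is a congruence together with the inductive hypotheses --- and since that theorem's conclusion treats the two sides symmetrically, feeding it the dualized hypotheses yields the dualized conclusion. Next, the value-in-value, value-in-term, and term-in-stack substitution rules are exactly the substitution-principles lemma (the stack case using that $\ilrof\apreorder\omega$ is closed under the relevant composition, which is again an instance of congruence/substitution). The $\beta$ and $\eta$ axioms for all connectives ($0,1,+,\times,U,\u F,\to,\top,\with,\mu,\nu$) are supplied by the $\beta,\eta$ lemma, which establishes each equivalence as $\ilrof\apreorder\omega$ in both directions and hence also for its opposite.

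Finally, the error axioms are handled by the error-rules lemma: strictness $S[\err]\equidyn\err$ is the pair $S[\err]\ilrof\apreorder\omega\err$ and $\err\ilrof\apreorder\omega S[\err]$, valid for any divergence preorder, while the rule $\err \ltdyn M$ requires $\err\ilrof{\precltdyn}\omega M$ for the first goal (since $\err$ is least for $\precltdyn$) and $M\ilrof{\errordivergerightop}\omega\err$ for the second (since $\err$ is greatest for $\errordivergerightop$) --- precisely the last two clauses of that lemma. This last rule is also the one place where the choice of these two \emph{specific} asymmetric preorders, rather than an arbitrary divergence preorder, is essential, so I expect the only non-mechanical aspect of the argument to be orienting the error axioms against $\precltdyn$ and $\errordivergerightop$ correctly and keeping the dualization bookkeeping consistent throughout; everything else is a direct appeal to the lemmas already in hand.
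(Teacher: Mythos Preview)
Your proposal is correct and matches the paper's approach exactly: the paper simply notes that the preceding lemmas (congruence, reflexivity, transitivity in the limit, substitution, $\beta\eta$, and the error rules) cover every rule of the CBPV inequational theory, and instantiates them at the two divergence preorders $\errordivergeleft$ and $\errordivergerightop$. Your case analysis spells this out in more detail than the paper bothers to, including the observation that $\err \ltdyn M$ is the one axiom requiring these specific preorders, but the underlying argument is identical.
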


Because logical implies contextual equivalence, we can
conclude with the main theorem:
\begin{theorem}[Contextual Approximation/Equivalence Model CBPV] ~~\\
  If $\Gamma \pipe \Delta \vdash E \ltdyn E' : T$ then
  $\Gamma \pipe \Delta \vDash E \ctxize\ltdyn E' \in T$;
  if
  ${\Gamma \pipe \Delta \vdash E \equidyn E' : T}$ then
  ${\Gamma \pipe \Delta \vDash E \ctxize= E' \in T}$.
\end{theorem}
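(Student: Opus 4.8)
The plan is to derive this theorem entirely by assembling results already in hand, since all of the genuine content lives in the step-indexed logical relation. The organizing idea is Corollary~\ref{cor:contextual-decomposition}: contextual error approximation decomposes as $\ctxize\ltdyn \Leftrightarrow \ctxize{\errordivergeleft} \wedge (\ctxize{\errordivergerightop})^\circ$, and contextual equivalence as $\ctxize= \Leftrightarrow \ctxize\ltdyn \wedge (\ctxize\ltdyn)^\circ$, where $\errordivergeleft$ and $\errordivergerightop$ are both \emph{divergence} preorders on results. Hence it suffices to feed each of those two divergence preorders through the logical-relations pipeline once.

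First I would treat the approximation statement. Given $\Gamma \pipe \Delta \vdash E \ltdyn E' : T$ in CBPV, Lemma~\ref{lem:errordivergeleftrightopmodels} (which aggregates the fundamental lemma, the limit lemma, and the soundness of the $\beta\eta$, substitution, and error rules for an arbitrary divergence preorder) supplies the two limit-indexed logical relations relating $E$ to $E'$ via $\errordivergeleft$ and $E'$ to $E$ via $\errordivergerightop$. Applying Corollary~\ref{lem:logical-implies-contextual} to each of these preorders converts them to the contextual relations $\Gamma \pipe \Delta \vDash E \ctxize{\errordivergeleft} E' \in T$ and $\Gamma \pipe \Delta \vDash E' \ctxize{\errordivergerightop} E \in T$; the second of these is exactly $(\ctxize{\errordivergerightop})^\circ$ between $E$ and $E'$, so the decomposition of $\ctxize\ltdyn$ yields $\Gamma \pipe \Delta \vDash E \ctxize\ltdyn E' \in T$.

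Then, for the equivalence statement, I would unfold $E \equidyn E'$ into $E \ltdyn E'$ and $E' \ltdyn E$, apply the approximation case just established to each to get $\Gamma \pipe \Delta \vDash E \ctxize\ltdyn E' \in T$ and $\Gamma \pipe \Delta \vDash E' \ctxize\ltdyn E \in T$, and close off with the decomposition $\ctxize= \Leftrightarrow \ctxize\ltdyn \wedge (\ctxize\ltdyn)^\circ$ from Corollary~\ref{cor:contextual-decomposition}.

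I do not expect any real obstacle here: the hard work---congruence (the fundamental lemma) of the logical preorder, transitivity in the limit, and the verification of every $\beta\eta$/substitution/error rule for a generic divergence preorder---has already been done, and what remains is bookkeeping. The one point demanding care is keeping track of which divergence preorder witnesses which direction, in particular that $\ltdyn$ on results is recovered from $\errordivergeleft$ paired with the \emph{opposite} of $\errordivergerightop$ (and not of $\errordivergeright$), together with the facts that contextual lifting commutes with intersection and with opposite (Lemmas~\ref{lem:ctx-commutes-conjunction} and~\ref{lem:ctx-commutes-dual}).
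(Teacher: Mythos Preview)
Your proposal is correct and follows essentially the same route as the paper: invoke Lemma~\ref{lem:errordivergeleftrightopmodels} to obtain the two limit logical preorders, pass through Corollary~\ref{lem:logical-implies-contextual} to get the contextual versions, and reassemble via Corollary~\ref{cor:contextual-decomposition}; the equivalence case then follows by applying the approximation case in both directions. The paper's proof is exactly this, with no additional ingredients.
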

\begin{longproof}

  
For the first part, from Lemma~\ref{lem:errordivergeleftrightopmodels},
we have $E \ix\precltdyn \omega E'$ and $E' \ix{\mathrel{\preceq\gtdyn}}
\omega E$.  By Lemma~\ref{lem:logical-implies-contextual}, we then have
$E \ctxize{\errordivergeleft} E'$ and $E' \ctxize{\errordivergerightop}
E$.  Finally, by Corollary~\ref{cor:contextual-decomposition}, $E
\ctxize\ltdyn E' \text{ iff } E \ctxize{\errordivergeleft} E' \text{and
} E (\ctxize{(\errordivergerightop)})^\circ E'$, so we have the result.

For the second part, applying the first part twice gives $E
\ctxize\ltdyn E'$ and $E' \ctxize\ltdyn E$, and we concluded in
Corollary~\ref{cor:contextual-decomposition} that this coincides with
contextual equivalence.
\end{longproof}

\section{Discussion and Related Work}
\label{sec:related}

In this paper, we have given a logic for reasoning about gradual
programs in a mixed call-by-value/call-by-name language, shown that
the axioms uniquely determine almost all of the contract translation
implementing runtime casts, and shown that the axiomatics is sound for
contextual equivalence/approximation in an operational model.
\iflong

\fi
In immediate future work, we believe it is straightforward to add
inductive/coinductive types and obtain similar unique cast
implementation theorems
(e.g. $\upcast{\mathtt{list}(A)}{\mathtt{list}(A')} \equidyn
\mathtt{map}\upcast{A}{A'}$).  Additionally, since more efficient cast
implementations such as optimized cast calculi (the lazy variant in
\citet{herman2010spaceefficient}) and threesome
casts~\cite{siekwadler10zigzag}, are equivalent to the lazy contract
semantics, they should also be models of GTT, and if so we could use GTT
to reason about program transformations and optimizations in them.

\iflong
\paragraph{Applicability of Cast Uniqueness Principles}
\fi

The cast uniqueness principles given in theorem
\ref{thm:functorial-casts} are theorems in the formal logic of Gradual
Type Theory, and so there is a question of to what languages the
theorem applies.
The theorem applies to any \emph{model} of gradual type theory, such
as the models we have constructed using call-by-push-value given in
Sections \ref{sec:contract}, \ref{sec:complex}, \ref{sec:operational}.
We conjecture that simple call-by-value and call-by-name gradual
languages are also models of GTT, by extending the translation of
call-by-push-value into call-by-value and call-by-name in the appendix
of Levy's monograph \cite{levy03cbpvbook}.
In order for the theorem to apply, the language must validate an
appropriate version of the $\eta$ principles for the types.
So for example, a call-by-value language that has reference equality
of functions does \emph{not} validate even the value-restricted $\eta$
law for functions, and so the case for functions does not apply.
It is a well-known issue that in the presence of pointer equality of
functions, the lazy semantics of function casts is not compatible with
the graduality property, and our uniqueness theorem provides a
different perspective on this phenomenon
\cite{findlerflattfelleisen04,chaperonesimpersonators, refined}.
However, we note that the cases of the uniqueness theorem for each
type connective are completely \emph{modular}: they rely only on the
specification of casts and the $\beta,\eta$ principles for the
particular connective, and not on the presence of any other types,
even the dynamic types.
So even if a call-by-value language may have reference equality
functions, if it has the $\eta$ principle for strict pairs, then the
pair cast must be that of Theorem \ref{thm:functorial-casts}.

Next, we consider the applicability to non-eager languages.
Analogous to call-by-value, our uniqueness principle should apply to
simple \emph{call-by-name} gradual languages, where full $\eta$
equality for functions is satisfied, but $\eta$ equality for booleans
and strict pairs requires a ``stack restriction'' dual to the value
restriction for call-by-value function $\eta$.
We are not aware of any call-by-name gradual languages, but there is
considerable work on \emph{contracts} for non-eager languages,
especially Haskell \cite{hinzeJeuringLoh06,XuPJC09}.
However, we note that Haskell is \emph{not} a call-by-name language in
our sense for two reasons.
First, Haskell uses call-by-need evaluation where results of
computations are memoized. However, when only considering Haskell's
effects (error and divergence), this difference is not observable so
this is not the main obstacle.
The bigger difference between Haskell and call-by-name is that Haskell
supports a \texttt{seq} operation that enables the programmer to force
evaluation of a term to a value.
This means Haskell violates the function $\eta$ principle because
$\Omega$ will cause divergence under $\seq$, whereas $\lambda
x. \Omega$ will not.
This is a crucial feature of Haskell and is a major source of
differences between implementations of lazy contracts, as noted in
\citet{Degen2012TheIO}.
We can understand this difference by using a different translation
into call-by-push-value: what Levy calls the ``lazy paradigm'', as
opposed to call-by-name \cite{levy03cbpvbook}.
Simply put, connectives are interpreted as in call-by-value, but with
the addition of extra thunks $UF$, so for instance the lazy function
type $A \to B$ is interpreted as $UFU(UFA \to FB)$ and the extra $UFU$
here is what causes the failure of the call-by-name $\eta$ principle.
With this embedding and the uniqueness theorem, GTT produces a
definition for lazy casts, and the definition matches the work of
\citet{XuPJC09} when restricting to non-dependent contracts.

\iflong\paragraph{Comparing Soundness Principles for Cast Semantics}\fi
\citet{greenmanfelleisen:2018} gives a spectrum of
differing syntactic type soundness theorems for different semantics of
gradual typing.
Our work here is complementary, showing that certain program
equivalences can only be achieved by certain cast semantics.

\citet{Degen2012TheIO} give an analysis of different cast semantics
for contracts in lazy languages, specifically based on Haskell, i.e., 
call-by-need with \seq.
They propose two properties ``meaning preservation'' and
``completeness'' that they show are incompatible and identify which
contract semantics for a lazy language satisfy which of the
properties.
The meaning preservation property is closely related to graduality: it
says that evaluating a term with a contract either produces blame or
has the same observable effect as running the term without the
contract.
Meaning preservation rules out overly strict contract systems that
force (possibly diverging) thunks that wouldn't be forced in a
non-contracted term.
Completeness, on the other hand, requires that when a contract is
attached to a value that it is \emph{deeply} checked.
The two properties are incompatible because, for instance, a pair of a
diverging term and a value can't be deeply checked without causing the
entire program to diverge.
Using Levy's embedding of the lazy paradigm into call-by-push-value
their incompatibility theorem should be a consequence of our main
theorem in the following sense.
We showed that any contract semantics departing from the
implementation in Theorem \ref{thm:functorial-casts} must violate
$\eta$ or graduality.
Their completeness property is inherently eager, and so must be
different from the semantics GTT would provide, so either the
restricted $\eta$ or graduality fails.
However, since they are defining contracts within the language, they
satisfy the restricted $\eta$ principle provided by the language, and
so it must be graduality, and therefore meaning preservation that
fails.

\iflong\paragraph{Axiomatic Casts}\fi

Henglein's work on dynamic typing also uses an axiomatic semantics of
casts, but axiomatizes behavior of casts at each type directly whereas
we give a uniform definition of all casts and derive implementations
for each type \cite{henglein94:dynamic-typing}.
Because of this, the theorems proven in that paper are more closely
related to our model construction in Section
\ref{sec:contract}.
More specifically, many of the properties of casts needed to prove
Theorem \ref{thm:axiomatic-graduality} have direct analogues
in Henglein's work, such as the coherence theorems.
We have not included these lemmas in the paper because they are quite
similar to lemmas proven in \citet{newahmed18}; see there for a more
detailed comparison, and the extended version of this paper for full proof details \citep{newlicataahmed19:extended}.
Finally, we note that our assumption of compositionality, i.e., that
all casts can be decomposed into an upcast followed by a downcast, is
based on Henglein's analysis, where it was proven to hold in his
coercion calculus.

\iflong
\paragraph{Gradual Typing Frameworks}
\fi

In this work we have applied a method of ``gradualizing'' axiomatic
type theories by adding in dynamism orderings and adding dynamic
types, casts and errors by axioms related to the dynamism orderings.
This is similar in spirit to two recent frameworks for designing
gradual languages: Abstracting Gradual Typing (AGT) \citep{AGT} and the
Gradualizer \citep{gradualizer16,gradualizer17}.
All of these approaches start with a typed language and construct a
related gradual language.
A major difference between our approach and those is that our work is
based on axiomatic semantics and so we take into account the equality
principles of the typed language, whereas Gradualizer is based on the
typing and operational semantics and AGT is based on the type safety
proof of the typed language.
Furthermore, our approach produces not just a single language, but
also an axiomatization of the structure of gradual typing and so we
can prove results about many languages by proving theorems in GTT.
The downside to this is that our approach doesn't directly provide an
operational semantics for the gradual language, whereas for AGT this
is a semi-mechanical process and for Gradualizer, completely
automated.
Finally, we note that AGT produces the ``eager'' semantics for
function types, and it is not clear how to modify the AGT methodology
to reproduce the lazy semantics that GTT provides.
More generally, both AGT and the Gradualizer are known to produce
violations of parametricity when applied to polymorphic languages,
with the explanation being that the parametricity property is in no
way encoded in the input to the systems: the operational semantics and
the type safety proof.
In future work, we plan to apply our axiomatic approach to gradualizing
polymorphism and state by starting with the rich \emph{relational logics
  and models} of program equivalence for these
features~\cite{plotkinabadi93, dunphyphd, ahmed08:paramseal, neis09,
  ahmed09:sdri}, which may lend insight into existing
proposals~\cite{siek15:mono,ahmed17,igarashipoly17,siek-taha06}--- for
example, whether the ``monotonic'' \citep{siek15:mono} and ``proxied''
\citep{siek-taha06} semantics of references support relational reasoning
principles of local state.

\iflong \paragraph{Blame}
\fi
We do not give a treatment of runtime blame reporting, but we argue that
the observation that upcasts are thunkable and downcasts are linear is
directly related to blame soundness~\cite{tobin-hochstadt06,wadler-findler09} in that if
an upcast were \emph{not} thunkable, it should raise positive blame and
if a downcast were \emph{not} linear, it should raise negative blame.
First, consider a potentially effectful stack upcast of the form
$\upcast{\u F A}{\u F A'}$. If it is not thunkable, then in our logical
relation this would mean there is a value $V : A$ such that $\upcast{\u
  F A}{\u F A'}(\ret V)$ performs some effect.
Since the only observable effects for casts are dynamic type errors, 
$\upcast{\u F A}{\u F A'}(\ret V) \mapsto \err$, and we must decide
whether the positive party or negative party is at fault.
However, since this is call-by-value evaluation, this error happens
unconditionally on the continuation, so the continuation never had a
chance to behave in such a way as to prevent blame, and so we must blame the
positive party.
Dually, consider a value downcast of the form $\dncast{U \u B}{U \u B'}$.
If it is not linear, that would mean it forces its $U \u B'$
input either never or more than once.
Since downcasts should refine their inputs, it is not possible for
the downcast to use the argument twice, since e.g. printing twice does not
refine printing once.
So if the cast is not linear, that means it fails without ever forcing
its input, in which case it knows nothing about the positive party and
so must blame the negative party.
In future work, we plan to investigate extensions of GTT with more than
one $\err$ with different blame labels, and an axiomatic account of
a blame-aware observational equivalence.




\begin{longonly}
\paragraph{Denotational and Category-theoretic Models}

We have presented certain concrete models of GTT using ordered CBPV
with errors, in order to efficiently arrive at a concrete operational
interpretation.
It may be of interest to develop a more general notion of model of GTT
for which we can prove soundness and completeness theorems, as in
\citet{newlicata2018-fscd}.
A model would be a strong adjunction between double categories where
one of the double categories has all ``companions'' and the other has
all ``conjoints'', corresponding to our upcasts and downcasts.
Then the contract translation should be a construction that takes a
strong adjunction between 2-categories and makes a strong adjunction
between double categories where the ep pairs are ``Kleisli'' ep pairs:
the upcast is has a right adjoint, but only in the Kleisli category and
vice-versa the downcast has a left adjoint in the co-Kleisli category.

Furthermore, the ordered CBPV with errors should also have a sound and
complete notion of model, and so our contract translation should have
a semantic analogue as well.
\end{longonly}

\begin{longonly}
  \paragraph{Gradual Session Types} ~
\end{longonly}
Gradual session types~\cite{igarashi+17gradualsession} share some
similarities to GTT, in that there are two sorts of types (values and
sessions) with a dynamic value type and a dynamic session type.
However, their language is not \emph{polarized} in the same way as CBPV,
so there is not likely an analogue between our upcasts always being
between value types and downcasts always being between computation
types.  Instead, we might reconstruct this in a polarized session
type language~\cite{pfenninggriffith15session}.
\begin{longonly}
The two dynamic types would then be the ``universal sender'' and
``universal receiver'' session types.
\end{longonly}


\begin{longonly}
\paragraph{Dynamically Typed Call-by-push-value}

Our interpretation of the dynamic types in CBPV suggests a design for
a Scheme-like language with a value and computation distinction.
This may be of interest for designing an extension of Typed Racket that
efficiently supports CBN or a Scheme-like language with codata types.
While the definition of the dynamic computation type by a lazy product
may look strange, we argue that it is no stranger than the use of its
dual, the sum type, in the definition of the dynamic value type.
That is, in a truly dynamically typed language, we would not think of
the dynamic type as being built out of some sum type construction, but
rather that it is the \emph{union} of all of the ground value types, and the
union happens to be a \emph{disjoint} union and so we can 
model it as a sum type.
In the dual, we don't think of the computation dynamic type as a
\emph{product}, but instead as the \emph{intersection} of the ground
computation types.
Thinking of the type as unfolding:
\[ \dync = \u F\dync \wedge (\dynv \to \u F \dynv) \wedge (\dynv \to \dynv \to \u F \dynv) \wedge \cdots \]
This says that a dynamically typed computation is one that can be
invoked with any finite number of arguments on the stack, a fairly
accurate model of implementations of Scheme that pass multiple
arguments on the stack.
\end{longonly}

\iflong
\paragraph{Dependent Contract Checking}
\fi

We also plan to explore using GTT's specification of casts in a
dependently typed setting, building on work using
Galois connections for casts between dependent
types~\cite{dagand+18interoperability}, and work on effectful dependent
types based a CBPV-like judgement
structure~\cite{ahman+16fiberedeffects}.




\paragraph{Acknowledgments}
We thank Ron Garcia, Kenji Maillard and Gabriel Scherer for helpful
discussions about this work. We thank the anonymous reviewers for
helpful feedback on this article. This material is based on research
sponsored by the National Science Foundation under grant CCF-1453796
and the United States Air Force Research Laboratory under agreement
number FA9550-15-1-0053 and FA9550-16-1-0292.
The views and conclusions contained herein are those of the authors and
should not be interpreted as necessarily representing the official
policies or endorsements, either expressed or implied, of the United
States Air Force Research Laboratory, the U.S. Government, or Carnegie
Mellon University.

\bibliography{max}

\end{document}